\documentclass[]{usiinfthesis}
\pdfoutput=1
\usepackage[draft]{fixme}
\usepackage{tikz}

\usepackage[utf8]{inputenc}

\usepackage{amsmath}
\usepackage{amsthm}
\usepackage{amsfonts}

\usepackage{tablefootnote}

\usepackage{enumerate}

\usepackage{tikz,pgfplots}
\usetikzlibrary{patterns}

\usepackage{graphicx}
\usepackage{caption}
\usepackage{subcaption}

\usepackage{xspace}

\usepackage{fetamont}

\paragraphfont{\textbf}

\newtheorem{theorem}{Theorem}
\newtheorem{lemma}[theorem]{Lemma}
\newtheorem{definition}{Definition}
\newtheorem{corollary}[theorem]{Corollary}

\newtheorem*{remark}{Remark}
\newtheorem{proposition}[theorem]{Proposition}

\newcommand{\eps}{\varepsilon}
\newcommand{\R}{\mathcal{R}}
\newcommand{\polylog}{\operatorname{poly\,log}}

\newcommand{\NP}{\textbf{NP}}

\DeclareMathOperator*{\E}{\mathbb{E}}

\newcommand{\cT}{{\cal T}}
\newcommand{\cR}{{\cal R}}

\newcommand{\tdk}{2DGK\xspace}
\newcommand{\tdkr}{2DGKR\xspace}
\newcommand{\epst}{\varepsilon_{box}}
\newcommand{\epss}{\varepsilon_{small}}
\newcommand{\epsl}{\varepsilon_{large}}

\newcommand{\epsb}{\varepsilon_{box}}
\newcommand{\epsr}{\varepsilon_{ring}}

\newcommand{\epsau}{\varepsilon_{ra}}

\newcommand{\apx}{APX}

\newcommand{\opt}{OPT}
\newcommand{\optco}{OPT_{corr}}
\newcommand{\optla}{OPT_{large}}
\newcommand{\optsm}{OPT_{small}}
\newcommand{\optsk}{OPT_{skew}}
\newcommand{\optho}{OPT_{hor}}
\newcommand{\optve}{OPT_{ver}}

\newcommand{\optfa}{OPT_{fat}}
\newcommand{\optth}{OPT_{thin}}

\newcommand{\optki}{OPT_{kill}}
\newcommand{\optin}{OPT_{int}}
\newcommand{\optbo}{OPT_{box}}
\newcommand{\optrc}{OPT_{\fontL\&C}}
\newcommand{\Rsm}{R_{small}}
\newcommand{\Rla}{R_{large}}
\newcommand{\Rho}{R_{hor}}
\newcommand{\Rve}{R_{ver}}
\newcommand{\Rsk}{R_{skew}}
\newcommand{\Rin}{R_{int}}

\renewcommand{\L}{\mathcal{L}}

\newcommand{\fontL}{L}
\newcommand{\T}{OPT_T}
\newcommand{\il}{I_{\mathrm{long}}}
\newcommand{\ilopt}{OPT_{\mathrm{long}}}
\newcommand{\is}{I_{\mathrm{short}}}
\newcommand{\ilong}{\il}
\newcommand{\ishort}{\is}
\newcommand{\isopt}{OPT_{\mathrm{short}}}
\newcommand{\efl}{\varepsilon_{\fontL}}
\newcommand{\ilthin}{\ilopt \cap \T}
\newcommand{\ilfat}{\ilopt \setminus \T}
\newcommand{\isfat}{\isopt \setminus \T}
\newcommand{\isthin}{\isopt \cap \T}
\newcommand{\isfhor}{(\isopt \setminus \T)_{hor}}
\newcommand{\isfver}{(\isopt \setminus \T)_{ver}}

\newcommand{\F}{OPT_F}
\newcommand{\OK}{OPT_K}
\newcommand{\LF}{OPT_{LF}}
\newcommand{\ST}{OPT_{ST}}
\newcommand{\LT}{OPT_{LT}}
\newcommand{\SF}{OPT_{SF}}


\global\long\def\cell{C}
\global\long\def\C{\mathcal{C}}
\global\long\def\K{\mathcal{K}}

\newcommand{\bottomc}{bottom}
\newcommand{\topc}{top}

\newcommand{\height}{h}
\newcommand{\width}{w}
\newcommand{\profit}{p}
\newcommand{\shift}{shift}
\newcommand{\base}{base}



\ifdefined\DEBUG

\newcommand{\sal}[1]{\textcolor{green}{#1}}

\def\rem#1{{\marginpar{\raggedright\scriptsize #1}}}
\newcommand{\fabr}[1]{\rem{\textcolor{red}{$\bullet$ #1}}}
\newcommand{\salr}[1]{\rem{\textcolor{green}{$\bullet$ #1}}}
\newcommand{\arir}[1]{\rem{\textcolor{blue}{$\bullet$ #1}}}
\newcommand{\walr}[1]{\rem{\textcolor{magenta}{$\bullet$ #1}}}
\newcommand{\andyr}[1]{\rem{\textcolor{cyan}{$\bullet$ #1}}}

\else

\newcommand{\sal}[1]{#1}

\newcommand{\fabr}[1]{}
\newcommand{\salr}[1]{}
\newcommand{\arir}[1]{}
\newcommand{\walr}[1]{}
\newcommand{\andyr}[1]{}
\fi

\captionsetup{labelfont={sl,sf}}

\title{Approximation Algorithms\\for Rectangle Packing Problems} 
\author{Salvatore Ingala} 
\advisor{Luca Maria Gambardella} 
\coadvisor{Fabrizio Grandoni} 
\Day{27} 
\Month{October} 
\Year{2017} 
\place{Lugano} 
\programDirector{Walter Binder} 

\committee{%
  \committeeMember{Antonio Carzaniga}{Università della Svizzera Italiana, Lugano, Switzerland} 
  \committeeMember{Evanthia Papadopoulou}{Università della Svizzera Italiana, Lugano, Switzerland}  \committeeMember{Nikhil Bansal}{Eindhoven University of Technology, Nederlands}  \committeeMember{Roberto Grossi}{University of Pisa, Italy}  

} 

\begin{document}

\maketitle 

\frontmatter 

\begin{abstract}
\vspace{-10pt}
\small
In rectangle packing problems we are given the task of positioning some axis-aligned rectangles inside a given plane region, so that they do not overlap with each other. In the Maximum Weight Independent Set of Rectangles (MWISR) problem, their position is already fixed and we can only select which rectangles to choose, while trying to maximize their total weight. In the Strip Packing problem, we have to pack \emph{all} the given rectangles in a rectangular region of fixed width, while minimizing its height. In the $2$-Dimensional Geometric Knapsack (\tdk) problem, the target region is a square of a given size, and our goal is to select and pack a subset of the given rectangles of maximum weight.

All of the above problems are $\NP$-hard, and a lot of research has approached them by trying to find efficient \emph{approximation algorithms}. Besides their intrinsic interest as natural mathematical problems, geometric packing has numerous applications in settings like map labeling, resource allocation, data mining, cutting stock, VLSI design, advertisement placement, and so on.

We study a generalization of MWISR and use it to obtain improved approximation algorithms for a resource allocation problem called bagUFP.

We revisit some classical results on Strip Packing and \tdk, by proposing a framework based on smaller \emph{containers} that are packed with simpler rules; while variations of this scheme are indeed a standard technique in this area, we abstract away some of the problem-specific differences, obtaining simpler and cleaner algorithms that work unchanged for different problems. In this framework, we obtain improved approximation algorithms for a variant of Strip Packing where one is allowed pseudo-polynomial time, and for a variant of $\tdk$ where one is allowed to rotate the given rectangles by $90^\circ$ (thereby swapping width and height). For the latter, we propose the first algorithms with approximation factor better than $2$.

For the main variant of \tdk (without $90^\circ$ rotations), a container-based approach seems to face a natural barrier of $2$ in the approximation factor. Thus, we consider a generalized kind of packing that combines container packings with another packing problem that we call $\fontL$-packing problem, where we have to pack rectangles in an $\fontL$-shaped region of the plane. By finding a $(1+\eps)$-approximation for this problem and exploiting the combinatorial structure of the \tdk problem, we obtain the first algorithms that break the barrier of $2$ for the approximation factor of this problem.
\end{abstract}

\begin{acknowledgements}
	I want to express my gratitude to Fabrizio Grandoni for being a constant source of support and advise throughout my research, always bringing encouragement and many good ideas. It is difficult to imagine a better mentorship.
	
	Thank you to other my IDSIA coauthors Sumedha, Arindam and Waldo for sharing part of this journey: you have been fundamental to help me unravel the (rec)tangles of research, day after day. I wish you all success and always better approximation ratios.
	
	A deserved special thank to Roberto Grossi, Luigi Laura, and all the other tutors from the Italian Olympiads in Informatics, as they created the fertile ground for my interest in algorithms and data structures, prior to my PhD. Keep up the good work!
	
	Goodbye to all the friends and colleagues at IDSIA with whom I shared many desks, lunches and coffee talks. I am hopeful and confident that there will be many other occasions to share time (possibly with better coffee).
	
	To all the friend in Lugano with whom I shared a beer, a hike, a dive into the lake, a laughter: thanks for making my life better. I thank in particular Arne and all the rest of the rock climbers, as I could not have finished my PhD, had they let go of the other end of the rope!
	
	I am grateful to my family for always being close and supportive, despite the physical distance. Thank you, mom and dad! Thank you, Angelo and Concetta!
	
	Finally, to the one that was always sitting next to me, even when I failed to make enough room for her in the knapsack of my life: thank you, Joice. You always make me better.
\end{acknowledgements}

\makeatletter
\renewcommand\tableofcontents{%
	\ifthenelse{\boolean{@hypermode}}{\phantomsection}{}
	\chapter*{\contentsname
		\@mkboth{\contentsname}{\contentsname}}%
	\@starttoc{toc}%
	\cleardoublepage
}
\makeatletter

\tableofcontents

\mainmatter

\chapter{Introduction}\label{chap:introduction}

The development of the theory of $\mathbf{NP}$-hardness led for the first time to the understanding that many natural discrete optimization problems are \emph{really} difficult to solve. In particular, assuming that $\mathbf{P} \neq \mathbf{NP}$, it is impossible to provide a correct algorithm that is (1) efficient and (2) optimal (3) in the worst case.

Yet, this knowledge is not fully satisfactory: industry still needs solutions, and theorists want to know what can be achieved, if an efficient exact solution is impossible. Thus, a lot of research has been devoted to techniques to solve $\mathbf{NP}$-hard problem while relaxing at least one of the above conditions.

One approach is to relax constraint (1), and design algorithms that require super-polynomial time, but are \emph{as fast as possible}. For example, while a naive solution for the Traveling Salesman Problem (TSP) requires time at least $\Omega(n!) = \sal{2^{\Omega({n \log n})}}$ in order to try all the possible permutations, the celebrated Bellman-Held-Karp algorithm uses Dynamic Programming to reduce the running time to $O(n^2 2^n)$. See \cite{fk10} for a compendium of this kind of algorithms.

A prolific line of research, first studied systematically in \cite{df99}, looks for algorithms whose running time is exponential in the size of the input, but it is polynomial if some \emph{parameter} of the instance is fixed. For example, $k$-Vertex Cover can be solved in time $O(2^k n)$, which is exponential if $k$ is unbounded (since $k$ could be as big as $\Omega(n)$), but is polynomial for a fixed $k$. A modern and comprehensive textbook on this field is \cite{cfklmpps15}.

For the practical purpose of solving real-world instances, heuristic techniques (e.g.: simulated annealing, tabu search, ant colony optimization, genetic algorithms; see for example \cite{t09}), or search algorithms like branch and bound (\cite{h03}) often provide satisfactory results. While these techniques inherently fail to yield provable worst-case guarantees (either in the quality of the solution, or in the running time, or both), the fact that real-world instances are not ``the hardest possible'', combined with the improvements on both general and problem-specific heuristics and the increase in computational power of modern CPUs, allows one to handle input sizes that were not approachable just a few years back.

Researchers also consider special cases of the problem at hand: sometimes, in fact, the problem might be hard, but there could be interesting classes of instances that are easier to solve; for example, many hard problems on graphs are significantly easier if the given instances are guaranteed to be planar graphs. Moreover, real-world instances might not present some pathological features that make the problem hard to solve in the worst case, allowing for algorithms that behave reasonably well in practice.

The field of approximation algorithms (see \cite{h96,v01,ws11}) is obtained by relaxing requirement (2): we seek for algorithms that are correct and run in polynomial time on all instances; moreover, while these algorithms might not find the optimal solution, they are guaranteed to return solutions that are \emph{not too bad}, meaning that they are \emph{proven} to be not far from the optimal solution (a formal definition is provided below).

There are several reasons to study approximation algorithms, which include the following:
\begin{itemize}
	\item Showing worst-case guarantees (by means of formally proving theorems) gives a strong theoretical justification to an heuristic; in fact, heuristics could behave badly on some types of instances, and it is hard to be convinced that an algorithm will \emph{never} have worse results than what experiments have exhibited, unless a formal proof is provided.
	\item To better understand \emph{the problem}: even when an approximation algorithm is not practical, the ideas and the techniques employed to analyze it require a better understanding of the problem and its properties, that could eventually lead to better practical algorithms.
	\item To better understand \emph{the algorithms}: proving worst-case guarantees often leads to identifying what kind of instances are \emph{the hardest} for a given algorithm, allowing to channel further efforts on solving the hard core of the problem.
	\item To assess \emph{how hard} a problem is. There are $\mathbf{NP}$-hard optimization problems for which it is provably hard to obtain \emph{any} approximation algorithm, and, vice versa, other ones that admit arbitrarily good approximations.
\end{itemize}
The study of approximation algorithms has led to a rich and complex taxonomy of $\mathbf{NP}$-hard problems, and stimulated the development of new powerful algorithmic techniques and analytical tools in the attempt to \emph{close the gap} between the best negative (i. e. hardness of approximation) and positive (approximation algorithms) results.

While the above discussion identifies some directions that are established and well developed in the research community, no hard boundaries exist between the above mentioned fields. In fact, especially in recent times, many interesting results have been published that combine several approaches, for example: parameterized approximation algorithms, approximation algorithms in super-polynomial time, fast approximation algorithm for polynomial time problems, and so on.

\section{Approximation algorithms}\label{sec:approximation_algorithms}

In this section, we introduce some fundamental definitions and concepts in the field of approximation algorithms.

Informally, an $\NP$-Optimization problem $\Pi$ is either a minimization or a maximization problem. $\Pi$ defines a set of valid instances, and for each instance $I$, a non-empty set of \emph{feasible} solutions. Moreover an \emph{objective function value} is defined that, for each feasible solution $S$, returns a non-negative real number, which is generally intended as a measure of the quality of the solution. The goal is either to maximize or minimize the objective function value, and we call $\Pi$ a \emph{maximization problem} or a \emph{minimization problem}, accordingly. A solution $OPT$ that maximizes (respectively minimizes) the objective function value is called \emph{optimal solution}; for the problems that we are interested in, finding an optimal solution is $\NP$-hard. For simplicity, from now on we simply talk about \emph{optimization problems}. We refer the reader, for example, to \cite{v01} for a formal definition of $\NP$-optimization problems.

\begin{definition}\label{def:apx}
	A polynomial-time algorithm $\mathcal{A}$ for an optimization problem $\Pi$ is an \emph{$\alpha$-approximation algorithm} if it returns a feasible solution whose value is at most a factor $\alpha$ away from the value of the optimal solution, for any input instance.
\end{definition}

In this work, we follow the convention that $\alpha \geq 1$, both for minimization and maximization problems.

Thus, with our convention, if $opt$ is the objective function value of the optimal solution, an algorithm is an $\alpha$-approximation if it \emph{always} returns a solution whose value is at most $\alpha\cdot opt$ for a minimization problem, or at least $opt/\alpha$ for a maximization problem.\footnote{For the case of maximization problems, another convention which is common in literature is to enforce $\alpha \leq 1$, and say that an algorithm is an $\alpha$-approximation if the returned solution has value at least $\alpha \cdot opt$.}

Note that $\alpha$ could, in general, be a growing function of the input size, and not necessarily a constant number.

For some problems, it is possible to find approximate solutions that are arbitrarily close to the optimal solution. More formally:

\begin{definition}\label{def:ptas}
	We say that an algorithm $\mathcal{A}$ is a \emph{polynomial time approximation scheme} (PTAS) for an optimization problem $\Pi$ if for every fixed $\eps > 0$ and for any instance $I$, the running time of $\mathcal{A}(\eps, I)$ is polynomial in the input size $n$, and it returns a  solution whose value is at most a $1 + \eps$ factor away from the value of the optimal solution.
\end{definition}

Definitions \ref{def:apx} and \ref{def:ptas} can be generalized in the obvious way to allow for non-deterministic algorithms. In particular, we call an algorithm an \emph{expected} $\alpha$-approximation if the expected value of the output solution satisfies the above constraints.

Observe that the running time is polynomial for a \emph{fixed} $\eps$, but the dependency on $\eps$ can be arbitrarily large; in fact, running times of the form $f(1/\eps)n^{O(g(1/\eps))}$ for some functions $f$ and $g$ that are \sal{super-polynomial} with respect to $1/\eps$ are indeed very common. If the function $g$ is a constant not depending on $\eps$, the algorithm is called \emph{Efficient}~PTAS (EPTAS); if, moreover, $f$ is polynomial in $1/\eps$, then it is called a \emph{Fully Polynomial Time Approximation Scheme (FPTAS)}. In some sense, $\mathbf{NP}$-hard problems admitting an FPTAS can be thought as \emph{the easiest} hard problems; one such example is the Knapsack Problem.

It is also interesting to consider a relaxation of Definition~\ref{def:ptas} that allows for a slightly larger running time: a Quasi-Polynomial Time Approximation Scheme (QPTAS) is define exactly as above, except that $\mathcal{A}$ is allowed quasi-polynomial time $O_\eps(n^{\polylog n})$\footnote{The notation $O_\eps(f(n))$ means that the implicit constant hidden by the big O notation can depend on $\eps$.}

The class of problems that admit a constant factor approximation is called \textbf{APX}. Clearly, all problems that admit a PTAS are in \textbf{APX}, but the converse is not true if $\textbf{P} \neq \NP$.

For some problems, better approximation ratios can be found if one assumes that the solution is large enough. Formally, for a minimization problem $\Pi$, the asymptotic approximation ratio $\rho_{\mathcal{A}}^{\infty}$ of an algorithm $\mathcal{A}$ is defined as:

\[
	\rho_{\mathcal{A}}^{\infty} = \lim_{n \to \infty} \sup_I \left\{\frac{apx(I)}{opt(I)} : opt(I) \geq n \right\}
\]
where $apx(I)$ and $opt(I)$ are, respectively, the objective function value of the solution returned by $\mathcal{A}$ on the instance $I$, and that of an optimal solution to $I$.

Similarly to the definition of PTAS, we say that an algorithm $\mathcal{A}$ is an \emph{Asymptotic PTAS} or \emph{APTAS} for problem $\Pi$ if $\mathcal{A}(\eps, \cdot)$ is an asymptotic $(1 + \eps)$-approximation for any fixed $\eps > 0$.

\section{Lower bounds}\label{sec:lower_bounds}

An approximation algorithm, by definition, proves that a certain approximation factor is possible for a problem; thus, it provides an \emph{upper bound} on the best possible approximation factor.

Starting with the seminal work of \cite{sg76}, a significant amount of research has been devoted to a dual kind of result, that is, proofs that certain approximation ratios are not possible (under the assumption that $\textbf{P} \neq \textbf{NP}$ \sal{or analogous assumptions}). These results provide a \emph{lower bound} on the approximation factor.

Some problems are known to be \emph{APX-hard}. If a problem is APX-hard, then the existence of a PTAS for it would imply the existence of a PTAS for \emph{every problem in \textbf{APX}}. Thus, being APX-hard is considered a strong evidence that the problem does \emph{not} admit a PTAS.

For many problems, lower bounds are known that rely only on the assumption that $\textbf{P} \neq \NP$. For example, the Bin~Packing problem
can easily be showed to be impossible to approximate to a factor $\frac32 - \eps$ for any \sal{constant} $\eps > 0$, which follows from the $\NP$-hardness of the Partition problem.

Starting from the late 1990s, many other inapproximability results were proven as consequences of the celebrated PCP theorem and its more powerful versions; for example, \cite{ds05} proved that it is not possible to obtain a $1.3606$-approximation for Vertex Cover, unless $\textbf{P} = \NP$.

Stronger results can often be provided by making stronger assumptions; for example, many hardness results have been proven under the assumption that $\NP \not\subseteq \textbf{DTIME}\left(n^{O(\log \log n)}\right)$, or the stronger $\NP \not\subseteq \textbf{DTIME}\left(n^{\polylog n}\right)$ (that is, $\NP$-hard problems cannot be solved in quasi-polynomial time); more recently, hardness results have been proved under the \emph{Exponential Time Hypothesis} (ETH), that is, the assumption that SAT cannot be solved in sub-exponential time. Another strong hardness assumption is the \emph{Unique Games Conjecture} (UGC), proposed by~\cite{k02}. Clearly, stronger assumptions might lead to stronger results, but they are also considered less likely to be true; hence, there is a strong push for results that are as strong as possible, while relying on the weakest possible assumption (ideally, $\textbf{P} \neq \NP$).

Table~\ref{tab:upper_and_lower_bounds} presents some classical problems together with the best known upper and lower bounds.

\begin{table}[]
	\centering
	\caption{Examples of known upper and lower bounds. The $\tilde{O}$ notation hides poly-logarithmic factors.}
	\label{tab:upper_and_lower_bounds}
	\begin{tabular}{lll}
		Problem         & Upper bound    & Lower bound\\
		\hline
		Knapsack        & FPTAS          & Weakly $\NP$-hard\\
		Vertex Cover    & $2$ & $1.3606$ if $\textbf{P} \neq \NP$\\
		~               & ~   & $2$ under UGC\\
		Set Cover       & $(1 - o(1))\ln n$ & $(1 - \eps)\ln n$ if $\NP \not\subseteq \textbf{DTIME}\left(n^{O(\log \log n)}\right)$\\
		Independent Set & $\tilde{O}(n)$ & $O(n^{1 - \eps})$ if $\textbf{ZPP} \neq \NP$\tablefootnote{$\textbf{ZPP}$ is the class of problems admitting randomized algorithms in expected polynomial time.}
	\end{tabular}
\end{table}

The existence of a QPTAS for a problem is sometimes seen as a hint that a PTAS might exist: in fact, it implies that the problem is not APX-hard unless $\mathbf{NP} \subseteq \textbf{DTIME}(n^{\polylog n})$.

\section{Pseudo-polynomial time}

One of the ways to cope with $\NP$-hard problems is to allow running times that are not strictly polynomial. For problems that have numeric values in the input (that we consider to be integers for the sake of simplicity), one can consider \emph{pseudo-polynomial time} algorithms, whose running time is polynomial in the \emph{values} of the input instance, instead of their size. More formally:
\begin{definition}\label{def:PPT}
	An algorithm is said to run in \emph{pseudo-polynomial time} (PPT) if its running time is bounded by $O\left((nW)^{O(1)}\right)$, where $W$ is the maximum absolute value of the integers in the instance, and $n$ is the size of the instance.
\end{definition}
Equivalently, an algorithm runs in PPT if it runs in polynomial time in the size of the input when all the number in the input instance are represented in unary notation.

Clearly, this is a relaxation of the polynomial time requirement; thus, a problem might become significantly easier if PPT is allowed. For example, the knapsack problem is $\NP$-hard, but a classical Dynamic Programming approach can solve it exactly in PPT.

A problem that is $\NP$-hard but can be solved in PPT is called \emph{weakly $\NP$-hard}. A problem that does not admit an exact PPT algorithm unless $\textbf{P} = \NP$ is called \emph{strongly $\NP$-hard}.

Studying PPT algorithms is interesting also in the context of approximation algorithms, and it has recently been a more frequent trend in the research community. Note that the standard hardness result do not always apply. Some problem, nonetheless, are hard to solve even in this relaxed model; one such problem is the Strip Packing problem, which we introduce in the next section and will be discussed in detail in Chapter~\ref{chap:StripPacking}.

\section{Rectangle packing problems}\label{sec:rectangle_packing_problems}

In this section we introduce several geometric packing problems involving rectangles. In all of them, we are given as input a set of rectangles, and we are assigned the task of placing all or a subset of them into one or more target regions, while making sure that they are completely contained in the assigned region and they do not overlap with each other.

Given a set $\R$ of rectangles and a rectangular box of size $w\times h$, we call a \emph{packing} of $\R$ a pair $(x_i,y_i)\in \mathbb{N}\times \mathbb{N}$ for each $R_i \in \R$, with $0\leq x_i\leq w-w_i$ and $0\leq x_i\leq h-h_i$, meaning that the left-bottom corner of $R_i$ is placed in position $(x_i,y_i)$ and its right-top corner in position $(x_i+w_i,y_i+h_i)$. This packing is feasible if the interior of the rectangles is disjoint in this embedding (or equivalently rectangles are allowed to overlap on their boundary only). More formally, the packing is feasible if for every two distinct rectangles $R_i, R_j \in \R$, we have that $(x_i, x_i + w_i) \cap (x_j, x_j + w_j) = \emptyset$ or $(y_i, y_i + h_i) \cap (y_j, y_j + h_j) = \emptyset$.

Note that the above definition only admits \emph{orthogonal} packings, that is, every rectangle is always axis-parallel in the packing, as in Figure~\ref{fig:orthogonal-packing}. It has been long known that orthogonal packings are not necessarily optimal, even in very restricted cases like in packing equal squares into bigger squares; see Figure~\ref{fig:non-orthogonal-packing} for an example. Nevertheless, orthogonal packings are much simpler to handle mathematically, and the additional constraint is meaningful in many possible applications of the corresponding packing problems, and it is the focus of most of the literature on such optimization problems.

There is only one case of rectangle rotations that is compatible with orthogonal packings: $90^\circ$ rotations. In fact, for the packing problems that we consider, there are two variations: one where rotations are not allowed, as in the definition of \emph{packing} given above; and one where $90^\circ$ rotations are allowed, that is, one is allowed to swap width and height of a rectangle in the packing.

\begin{figure}
	\captionsetup[subfigure]{justification=centering}
	\begin{subfigure}[b]{.5\textwidth}
		\centering
		\includegraphics[width=5cm]{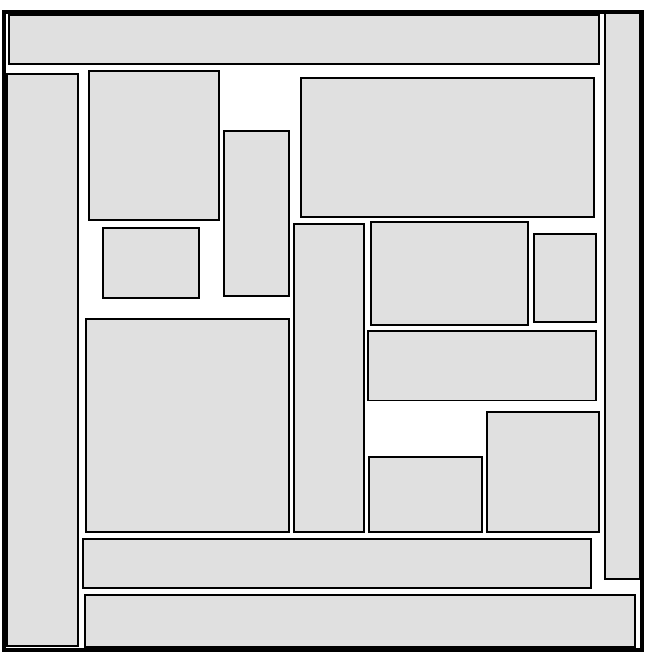}
		\caption{Example of an orthogonal packing. All the packed rectangles are axis-parallel and non-overlapping.}
		\label{fig:orthogonal-packing}
	\end{subfigure}%
	\begin{subfigure}[b]{.5\textwidth}
		\centering
		\includegraphics[width=5cm]{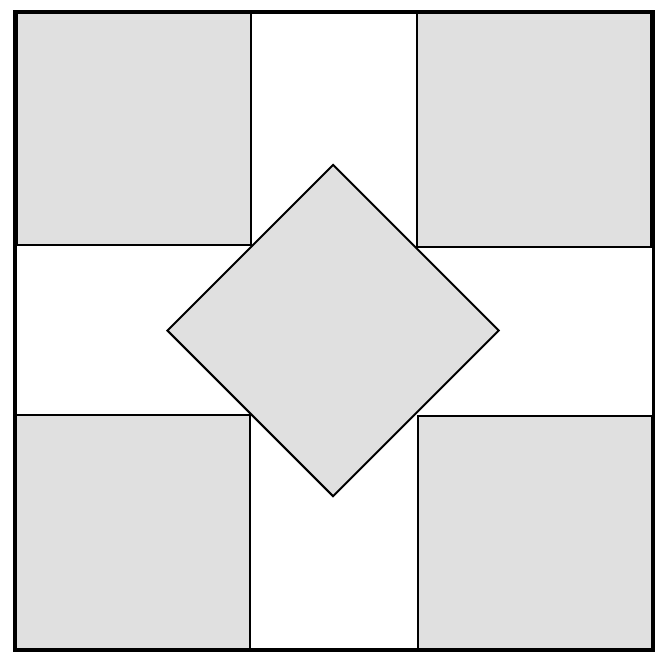}
		\caption{A non-orthogonal packing. With an orthogonal packing, only $4$ of the $5$ rectangles can be placed instead.}
		\label{fig:non-orthogonal-packing}
	\end{subfigure}
	\caption{Orthogonal and non-orthogonal packings.}
\end{figure}

There is another interesting variant of these packing problem, where we are interested in packings that can be separated via the so called \emph{guillotine cuts}, that is, edge-to-edge cuts parallel to an edge of the box. See Figure~\ref{fig:guillotinve-vs-normal} for a comparison between a packing with guillotine cuts and a generic packing. Such constraints are common in scenarios where the packed rectangles are patches of a material that must be cut, where the availability of a guillotine cutting sequence simplify the cutting procedure, thereby reducing costs; see for example \cite{prk04} and \cite{s88}.

\begin{figure}
	\centering
	\includegraphics[width=10.7cm]{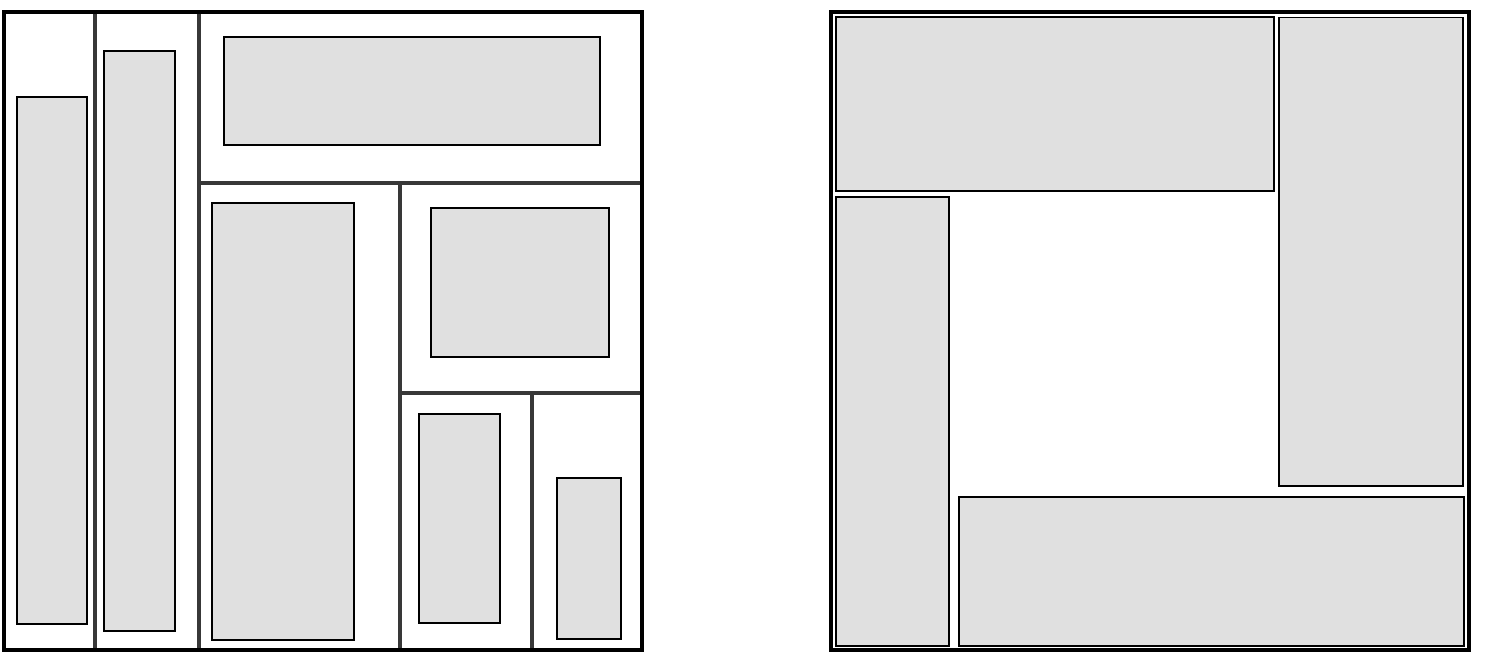}
	\caption{A guillotine packing on the left, and packing that cannot be a guillotine packing on the right.}
	\label{fig:guillotinve-vs-normal}
\end{figure}

\subsection{Strip Packing}

In the \emph{Strip Packing} problem, we are given a parameter $W\in \mathbb{N}$ and a set $\cR=\{R_1,\ldots,R_n\}$ of $n$ rectangles, each one characterized by a positive integer width $w_i\in \mathbb{N}$, $w_i\leq W$, and a positive integer height $h_i\in \mathbb{N}$. Our goal is to find a value $h$ and a feasible packing of all the rectangles in $\R$ in a rectangle of size $W \times h$, while minimizing $h$.

The version with $90^\circ$ rotations is also considered.

\begin{figure}
	\centering
	\includegraphics[width=10.7cm]{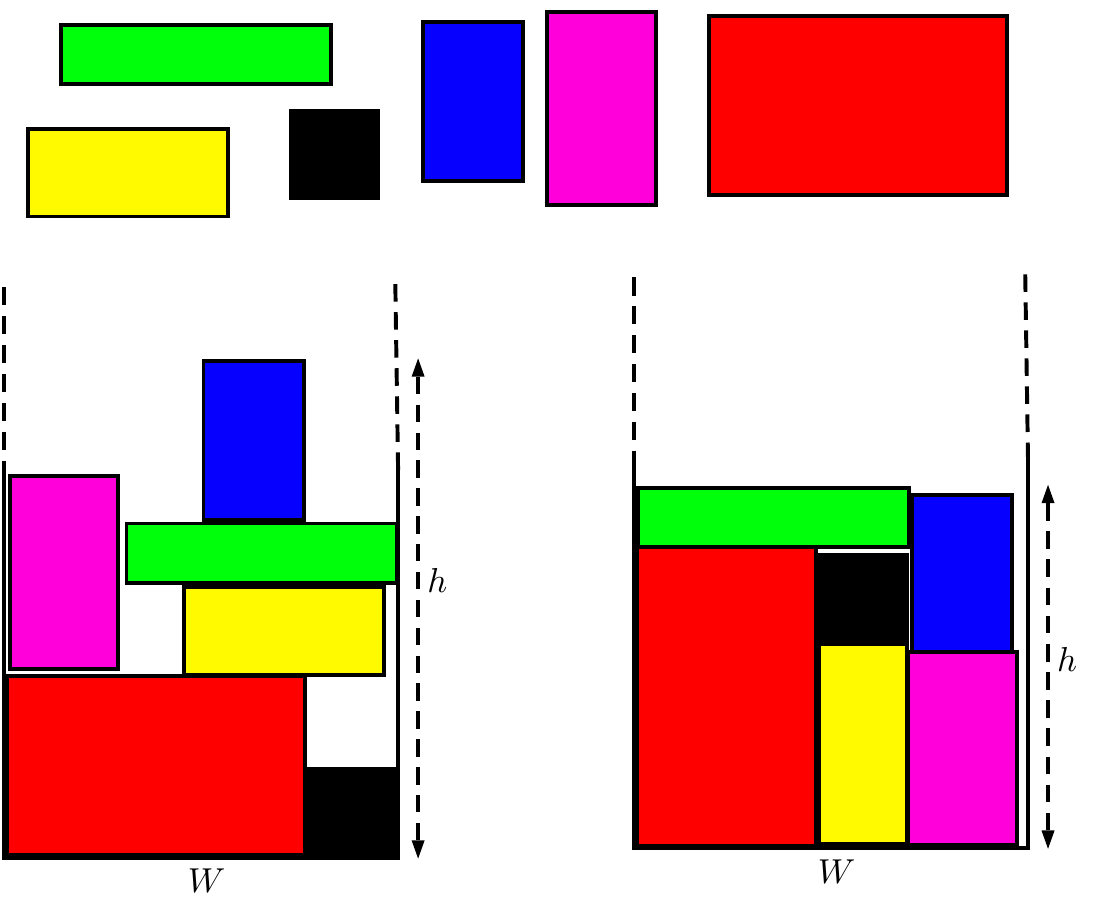}
	\caption{An instance of Strip Packing and corresponding feasible packings for the cases without (left) and with rotations (right).}
	\label{fig:StripPacking}
\end{figure}

Strip packing is a natural generalization of \emph{one-dimensional bin packing} (obtained when all the rectangles have the same height; see \cite{cecgmv13}) and \emph{makespan minimization} (obtained when all the rectangles have the same width; see \cite{cb76}).
The problem has lots of applications in industrial engineering and computer science, specially in cutting stock, logistics and scheduling (\cite{kr00,hjpv14}).
Recently, there have been several applications of strip packing in electricity allocation and peak demand reductions in smart-grids; see for example \cite{thlw13}, \cite{kskl13} and \cite{rks15}.

A simple reduction from the \sal{Partition} problem shows that the problem cannot be approximated within a factor $\frac{3}{2}-\eps$ for any $\eps>0$ in polynomial-time unless $\mathbf{P}=\mathbf{NP}$.
This reduction relies on exponentially large (in $n$) rectangle widths, and it also applies for the case with rotations.


Let $OPT=OPT(\cR)$ denote the optimal height for the considered strip packing instance $(\cR, W)$, and $h_{\max}=h_{\max}(\cR)$ (respectively, $w_{\max}=w_{\max}(\cR)$) be the largest height (respectively, width) of any rectangle in $\cR$. Most of the literature on this problem is devoted to the case without rotations. Observe that, for this case, $OPT\geq h_{\max}$, and we can assume that $W \leq n w_{\max}$ without loss of generality. The first non-trivial approximation algorithm for strip packing, with approximation ratio 3, was given by \cite{bcr80}. 
The First-Fit-Decreasing-Height algorithm (FFDH) by \cite{cgjt80} gives a 2.7-approximation. \cite{s80} gave an algorithm that generates a packing of height $2OPT+\frac{h_{max}}{2}$, hence achieving a 2.5-approximation. 
Afterwards, \cite{s97} and \cite{s94} independently improved the approximation ratio to 2. \cite{hv09} first broke the barrier of 2 with their 1.9396 approximation.
The present best $(\frac53+\eps)$-approximation is due to \cite{hjpv14}.

\cite{nw16} overcame the $\frac{3}{2}$-inapproximability barrier by presenting a $(\frac{7}{5}+\epsilon)$-approximation algorithm running in pseudo-polynomial-time (PPT). More specifically, they provided an algorithm with running time $O((Nn)^{O(1)})$, where $N=\max\{w_{max},h_{max}\}$\footnote{For the case without rotations, the polynomial dependence on $h_{max}$ can indeed be removed with standard techniques.}. In \citet*{ggik16}, we improved the approximation factor to $4/3 + \eps$, also generalizing it to the case with $90^\circ$ rotations; this result is described in Chapter~\ref{chap:StripPacking}. For the case without rotations, an analogous result was independently obtained by \cite{jr17}.

As strip packing is strongly NP-hard (see \cite{gj78}), it does not admit an exact pseudo-polynomial-time algorithm. Moreover, very recently \cite{akpp17} proved that it is $\mathbf{NP}$-hard to approximate Strip Packing within a factor $12/11 - \eps$ for any constant $\eps > 0$ in PPT, and this lower bound was further improved to $5/4 - \eps$ in \cite{hjrs17}. This rules out the possibility of a PPT approximation scheme.

\medskip

The problem has been also studied in terms of \emph{asymptotic} approximation, where the approximability is much better understood. For the case without rotations, \cite{cgjt80} analyzed two algorithms called Next Fit Decreasing Height (NFDH) and First Fit Decreasing Height (FFDH), and proved that their asymptotic approximation ratios are 2 and 1.7, respectively. \cite{bcr80} described another heuristic called Bottom Leftmost Decreasing Width (BLWD), and proved that it is an asymptotic 2-approximation. Then \cite{kr00} provided an AFPTAS, that is, an asymptotic $(1+\eps)$-approximation for any $\eps > 0$. The additive constant of $O\left(\frac{h_{max}}{\eps^2}\right)$ was improved by \cite{js09}, who achieved an AFPTAS with additive constant only $h_{max}$. \cite{s12} obtained a polynomial time algorithm that returns a solution with height at most $OPT + O(\sqrt{OPT \log OPT})$.

The analysis for the asymptotic 2-approximations (for example by NFDH) also apply to the case with $90^\circ$ rotations. \cite{mw04} improved this to a 1.613-approximation, and then \cite{ev04a} proved a 1.5-approximation. Finally, \cite{jv05} obtained an AFPTAS also for this variation of the problem.

The natural generalization of Strip Packing in 3 dimensions has been considered for the first time by \cite{lc90}, who obtained an asymptotic $3.25$-approximation. In \cite{lc92}, they further improved it to an asymptotic ratio $T_\infty^2 \approx 2.89$, where $T_\infty \approx 1.69$ is the so called \emph{harmonic constant} in the context of bin packing. \cite{bhisz07} provided an asymptotic $T_\infty$-approximation. More recently, \cite{jp14} obtained the currently best known asymptotic $1.5$-approximation. In terms of absolute approximation, the above mentioned result of \cite{lc90} also implies a $45/4$-approximation, which was improved to $25/4$ in \cite{dhjtt08}.

\subsection{2-Dimensional Geometric Knapsack}

2-Dimensional Geometric Knapsack (\tdk) is a geometric generalization of the well studied
one-dimensional knapsack problem. We are given a set of rectangles $\R=\{R_1,\ldots,R_n\}$, where
each $R_i$ is an axis-parallel rectangle with an integer
width $w_{i}>0$, height $h_{i}>0$ and profit $p_{i}>0$, and a \emph{knapsack} that we assume
to be a square of size $N\times N$ for some integer $N>0$.

A feasible solution is any axis-aligned packing of a subset
$\R'\subseteq \R$ into the knapsack, and the goal is to maximize the profit $p(\R')=\sum_{R_i\in \R'}p_i$.

Like for Strip Packing, the variant \emph{with rotations} of the problem is also considered, where one is allowed to rotate the rectangles by $90^\circ$.

\begin{figure}
	\centering
	\includegraphics[width=10.7cm]{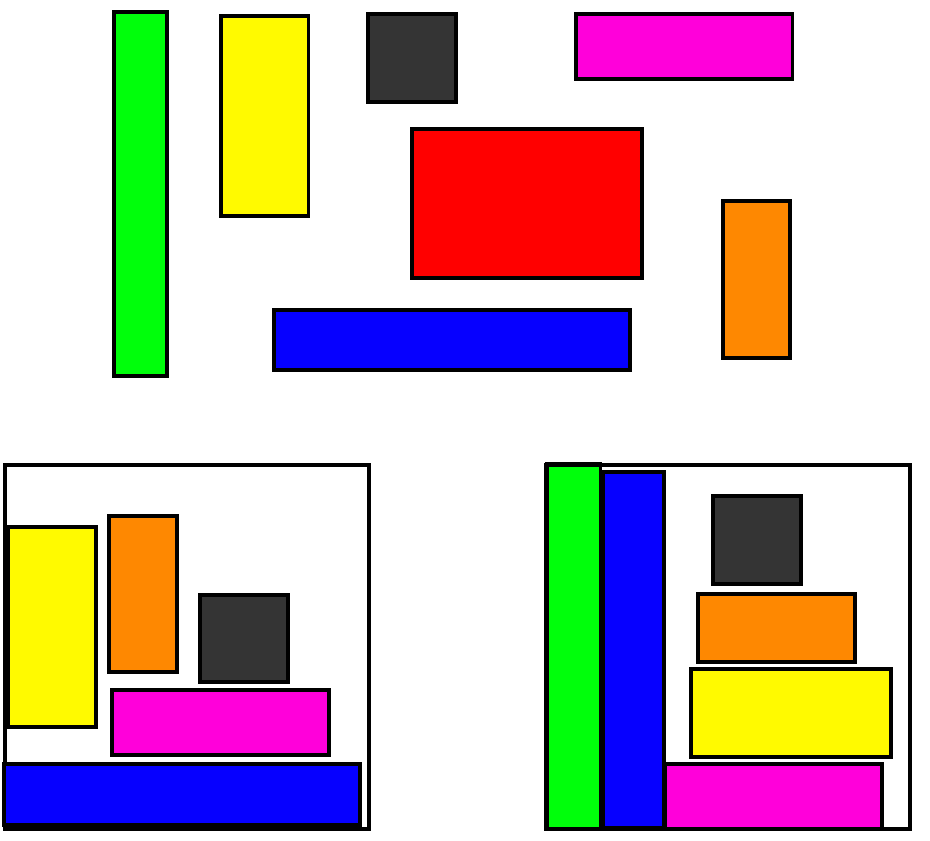}
	\caption{An instance of 2-Dimensional Geometric Knapsack, and corresponding feasible packing without (left) and with rotations (right); for simplicity, rectangle profits are not represented. Note that, unlike Strip Packing, it is not required that all the rectangles are packed.}
	\label{fig:2DGK}
\end{figure}

The problem is motivated by several practical applications. For instance, one might want to place advertisements on a board or a website, or cut rectangular pieces from a sheet of some material (in this context, the variant with rotations is useful when the material does not have a texture, or the texture itself is rotation invariant). It can also model a scheduling setting where each rectangle corresponds to a job that needs some ``contiguous amount'' of a given resource (memory storage, frequencies, etc.). In all these cases, dealing only with rectangular shapes is a reasonable simplification.

\cite{cm04} gave the first non-trivial approximation for the problem, obtaining ratio $3 + \eps$; this result builds on the $2$-approximation for Strip Packing in \cite{s97}. \cite{jz07} obtained an algorithm with approximation factor $2+\eps$, which is currently the best known in polynomial time; in \cite{jz04} they also gave a simpler and faster $(2+\eps)$-approximation for the special case in which all the rectangles have profit $1$ (cardinality case).

On the other hand, the only known hardness was given in \cite{ltwyc90}, who proved that an FPTAS is impossible even in the special case of packing squares into a square. Note that this does not rule out a PTAS.

Better results are known for many special cases. If all the rectangles are squares, \cite{h06} obtained a $(5/4 + \eps)$-approximation, and \cite{js08} showed that a PTAS is possible; very recently, \cite{hw17} obtained an EPTAS. \cite{fgjs08} gave a PTAS for rectangles in the relaxed model with \emph{resource augmentation}, that is, where one is allowed to enlarge the \sal{width and the height of the knapsack} by a $1 + \eps$ factor. \cite{js09} showed that the same can be achieved even if only the width (or only the height) is augmented; we prove a slightly modified version of this result in Lemma~\ref{lem:structural_lemma_augm} in Section~\ref{sec:2dgk-resource-augmentation}. \cite{bcjps09} proved that a PTAS is possible for the special case in which the profit of each rectangle equals its area, that is, $p_i := w_i h_i$, for both the cases with or without rotations. \cite{fgj05} presented a PTAS for the special case where the height of all the rectangles is much smaller than the height of the knapsack.

By using a structural result for independent set of rectangles, \cite{aw15} gave a QPTAS for the case without rotations, with the assumption that widths and heights are quasi-polynomially bounded. \cite{acckpsw15} extended the technique to obtain a similar QPTAS for the version with guillotine cuts (with the same assumption on rectangle sizes).

In Chapter~\ref{chap:2dgk-norot} we show how to obtain a polynomial-time algorithm with approximation factor $17/9 + \eps$ for the case without rotations, which we further improve to $\frac{558}{325} + \eps < 1.72$ for the cardinality case. This is the first polynomial time algorithm that breaks the barrier of $2$ for this problem.

In Chapter~\ref{chap:2dgk-rot} we consider the case with rotations, and obtain a $(3/2 + \eps)$-approximation for the general case, and a $(4/3 + \eps)$-approximation for the cardinality case.

The results in Chapters~\ref{chap:2dgk-norot} and~\ref{chap:2dgk-rot} are published in \citet*{gghikw17}.

\subsection{Related packing problems}

\paragraph{Maximum Weight Independent Set of Rectangles (MWISR)}\hspace{-10pt} is the restriction of the well known Independent Set problem to the intersection graphs of 2D rectangles. We are given a set $\R$ of axis-aligned rectangles as above, but this time the coordinates $x_i, y_i$ of the bottom-left corner of each rectangle $R_i$ are given as part of the input. The goal is to select a subset of rectangles $\R' \subseteq \R$ of pairwise non-overlapping rectangles, while maximizing the total profit $p(\R') = \sum_{R_i \in \R} p_j$ of the selected rectangles.

\begin{figure}
	\centering
	\includegraphics[width=14cm]{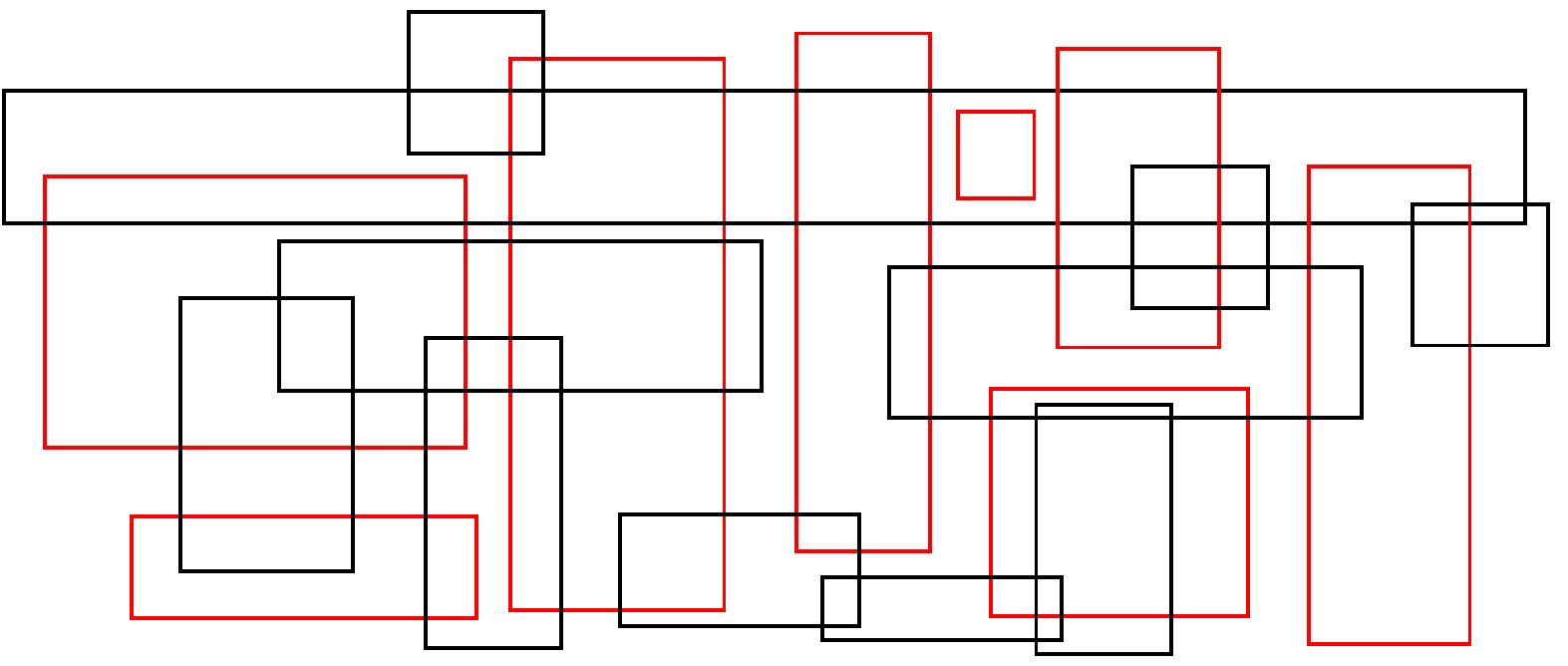}
	\caption{An instance of the Independent Set of Rectangles problem. Since the red rectangles are pairwise non-overlapping, they are a feasible solution to the problem.}
	\label{fig:independent-set-of-rectangles}
\end{figure}

While the Independent Set problem on general graphs is hard to approximate to $n^{1-\eps}$ for any fixed $\eps>0$, much better results are known for MWISR.

The best known approximation for general rectangles was given by \cite{ch12} and has ratio $O\left(\frac{\log n}{\log \log n}\right)$, slightly improving several previous $O(\log n)$-approximations (\cite{avs98,bdmr01,c04}). More recently, a breakthrough by \cite{aw13} showed that a QPTAS is possible. Since only $\NP$-hardness is known as a lower bound, this could suggest that a PTAS is possible for this problem, despite even a constant factor approximation is still not known.

\begin{sloppypar}
	Better results are known for many special cases of the problem. For the unweighted case, \cite{cc09} obtained a $(\log \log n)$-approximation. A PTAS is known for the special case when all the rectangles are squares (\cite{ejs01}). \cite{acw15} gave a PTAS for the relaxation of the problem when rectangles are allowed to be slightly shrunk (more precisely, each rectangle is rescaled by a factor $1-\delta$ for an arbitrarily small $\delta>0$).
\end{sloppypar}

In Chapter~\ref{chap:bagUFP}, we consider a generalization of MWISR where the input rectangles are partitioned into disjoint classes (\emph{bags}), and only one rectangle per class is allowed in a feasible solution.

\paragraph{Storage Allocation Problem (SAP)}\hspace{-10pt} is a problem that arises in the context of resource allocation, but it still has a geometric nature. This problem is similar to \tdk, with two important modifications:
\begin{itemize}
	\item the horizontal position of each rectangle is fixed and given as part of the input (that is, the rectangle is only allowed to be moved vertically);
	\item the target region where each selected rectangle must be packed is the set of points in the positive quadrant that are below a given positive \emph{capacity} curve $c(\cdot)$.
\end{itemize}
See Figure~\ref{fig:SAP} for an example. This problems models the allocation of a resource that has a contiguous range and that varies over time, like bandwidth or disk allocation. Each rectangle represents a request to use the resource, starting at some specified time (the starting $x$-coordinate of the rectangle) and for a duration equal to the width of the rectangle. The height of the rectangle represents the amount of the resource that is requested.
\begin{figure}
	\centering
	\includegraphics[width=14cm]{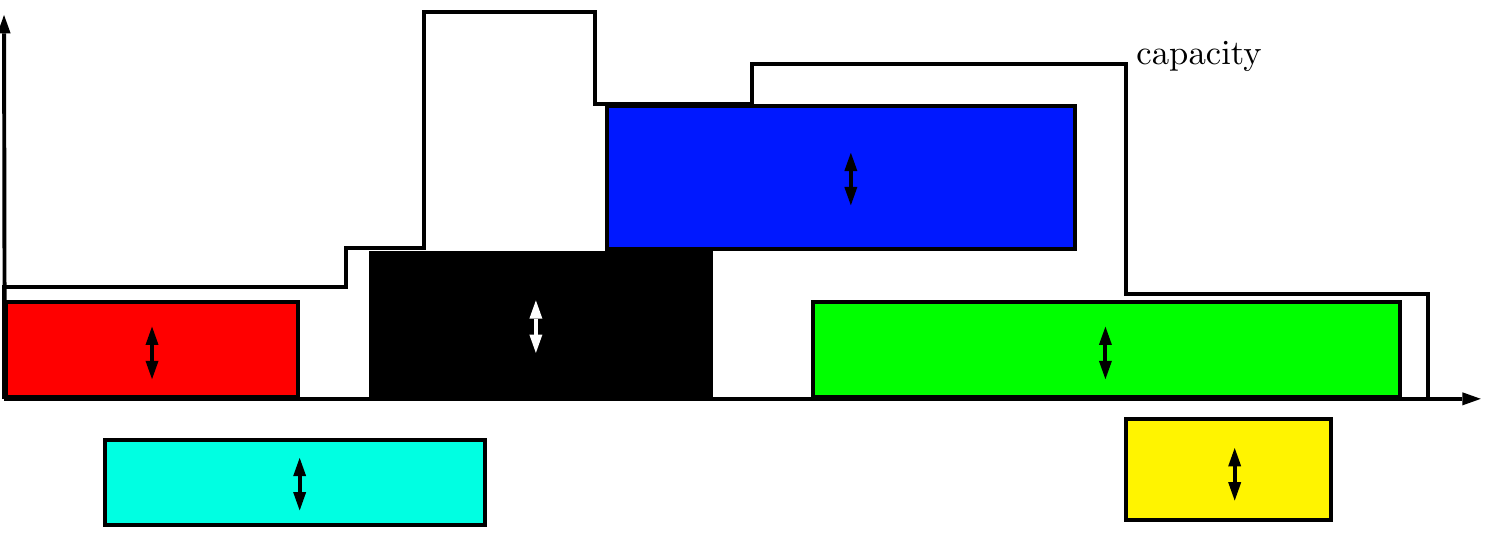}
	\caption{An instance and a feasible solution of the Storage Allocation Problem; two rectangles are not packed. Selected rectangles can only be moved vertically.}
	\label{fig:SAP}
\end{figure}
\cite{bbr13} gave the first constant factor approximation for this problem, with ratio $9 + \eps$; the best known result, with approximation factor $2+\eps$, is due to \cite{mw15}.

See Chapter~\ref{chap:bagUFP} for more related scheduling problems.

\paragraph{$2$-Dimensional Geometric Bin Packing}\hspace{-10pt} is another important geometric packing problem, generalizing the well studied bin packing problem, which is its $1$-dimensional counterpart. Here, we are given an instance as in \tdk, but we want to pack \emph{all} the given rectangles in square knapsacks of size $N \times N$; our goal is to minimize the number of knapsacks used. As for the other problems, both the variants with or without $90^\circ$ rotations are considered.

There is an extensive literature regarding \emph{asymptotic} approximation algorithms. The first results were obtained by \cite{cgj82}, who provided a $2.125$-approximation. The approximation ratio was improved to $(2 + \eps)$ in \cite{kr00}, and then further reduced to $T_\infty + \eps$ by \cite{c02}, where $T_\infty \approx 1.691...$ is the so-called \emph{harmonic constant}. \cite{bcs09} improved the approximation ratio to $1 + \ln T_\infty \approx 1.52$, using a general framework known as Round-and-Approx. Then, \cite{jp13} obtained a $1.5$-approximation, which was improved to $1 + \ln 1.5 \approx 1.405$ by \cite{bk14} with the Round-and-Approx framework. \cite{bcks06} proved that an asymptotic PTAS is impossible, unless $\mathbf{P} = \mathbf{NP}$; \cite{cc06} extended this result to the case with rotations, and also proved explicit lower bounds, showing that it is impossible to approximate $2$-Dimensional Geometric Bin Packing with an asymptotic ratio smaller than $1 + 1/3792$ for the version without rotations, and smaller than $1 + 1/2196$, for the versions with rotations.

For the special case of squares, \cite{fmw98} obtained a $1.998$-approximation, which was improved to a $(14/9 + \eps)$-approximation by \cite{kmrw04} and \cite{sv02}. An algorithm analyzed by \cite{c02} is shown to have an approximation ratio between $1.490$ and $1.507$, although the proof conditionally depends on a certain conjecture. \cite{ev04b} obtained a $(16/11 + \eps)$-approximation. \cite{bcks06} proved that an APTAS is possible for squares (and, more generally, for the $d$-dimensional generalization of the problem). They gave an \emph{exact} algorithm for the relaxation of the problem on rectangles with resource augmentation, that is, where the bins are augmented by a factor $1 + \eps$. Moreover, they provided a PTAS on the related problem of placing a set of rectangles in a minimum-area enclosing rectangle.

Fewer results ar known regarding absolute approximations. As shown by \cite{ltwyc90}, even if the problem is restricted to squares, it is $\mathbf{NP}$-hard to distinguish if $1$ or $2$ bins are needed. \cite{z05} obtained a $3$-approximation for the case without rotations, and \cite{v04} showed that a $2$-approximation is possible for the case of squares. Then, \cite{hv12} showed that a $2$-approximation is possible for general rectangles if rotations are allowed. Note that, due to the $\mathbf{NP}$-hardness mentioned above, an absolute ratio of $2$ is the best possible for any of these problems.

\cite{bls05} proved the surprising result that an asymptotic PTAS is possible for the version of the problem with guillotine cuts.

Many results are also known in the online version of the above problems; we omit them here. A recent review of these and other related problems can be found in \cite{ckpt17}.

\section{Summary of our results and outline of this thesis}

In Chapter~\ref{chap:preliminaries}, we review some preliminary results and develop some useful tools for rectangle packing problems.

Many results in this area of research are obtained by showing that a profitable solution exists that presents a special, simplified structure: namely, the target area where the rectangles are packed is partitioned into a constant number of rectangular regions, whose sizes are chosen from a polynomial set of possibilities. Thus, such simplified structures can be guessed\footnote{As it is common in the literature, by \emph{guessing} we mean trying out all possibilities. In order to lighten the notation, it is often easier to think that we can guess a quantity and thus assume that it is known to the algorithm; it is straightforward to remove all such guessing steps and replace them with an enumeration over all possible choices.} efficiently, and then used to guide the subsequent rectangle selection and packing procedure.

Following the same general approach, we will define one such special structure that we call \emph{container packings}, which we define in Section~\ref{sec:container_packings}. Then, in Sections~\ref{sec:GAP} and \ref{sec:rounding_containers} we show that there is a relatively simple PTAS for container packings based on Dynamic Programming. Notably, this PTAS does not require the solution of any linear program and is purely combinatorial, and it is straightforward to adapt it to the case with $90^\circ$ rotations.

Container packings are a valuable black box tool for rectangle packing problems, generalizing several other such simplified structures used in literature. In Section \ref{sec:2dgk-resource-augmentation} we reprove a known result on \tdk with Resource Augmentation in terms of container packings. We use this modified version in several of our results.

In Chapter~\ref{chap:StripPacking}, we present our new $(4/3 + \eps)$-approximation for Strip Packing in pseudo-polynomial time. The results of this chapter are based on \textbf{\cite*{ggik16}}, published at the \emph{36th IARCS Annual Conference on Foundations of Software Technology and Theoretical Computer Science, December 13--15 2016, Chennai, India}. This result improves on the previously best known $(1.4 + \eps)$-approximation by \cite{nw16} with a novel \emph{repacking lemma} which is described in Section~\ref{sec:repack}. Moreover by using container packings, we obtain a purely combinatorial algorithm (that is, no linear program needs to be solved), and we can easily adapt the algorithm to the case with $90^\circ$ rotations.

Chapters~\ref{chap:2dgk-norot} and~\ref{chap:2dgk-rot} present our results on \tdk and \tdkr from \textbf{\cite*{gghikw17}}, published at the \emph{58th Annual IEEE Symposium on Foundations of Computer Science, October 15--17 2017, Berkeley, California}.

In Chapter~\ref{chap:2dgk-norot}, we present the first polynomial time approximation algorithm that breaks the barrier of $2$ for the \tdk problem (without rotations). Although we still use container packings, a key factor in this result is a PTAS for another special packing problem that we call \fontL\emph{-packing}, which does not seem to be possible to solve with a purely container-based approach. Combining this PTAS with container packing will yield our main result, which is an algorithm with approximation ratio $\frac{17}{9} + \eps < 1.89$. For the unweighted case, we can further refine the approximation ratio to $\frac{558}{325}+\eps < 1.717$ by means of a quite involved case analysis.

In Chapter~\ref{chap:2dgk-rot}, we focus on \tdkr, and we show a relatively simple polynomial time $(3/2 + \eps)$-approximation for the general case; for the unweighted case, we improve it to $4/3 + \eps$ in Section~\ref{sec:cardRot}. In this case, we do not need \fontL-packings, and we again use a purely container-based approach.

In Chapter~\ref{chap:bagUFP}, we study a generalization of the MWISR problem that we call bagMWISR, and we use it to obtain improved approximation for a scheduling problem called bagUFP. A preliminary version of these results was contained in \textbf{\cite*{giu15}}, published in the \emph{13th Workshop on Approximation and Online Algorithms, September 14--18 2015, Patras, Greece}. In Section~\ref{sec:weightedBagUFP} we show that we can generalize the $\left(\frac{\log \log n }{\log n}\right)$-approximation for MWISR by \cite{ch12}, which is based on the \emph{randomized rounding with alterations} framework, to the more general bagMWISR problem, obtaining in turn an improved approximation ratio for bagUFP. Then, in Section~\ref{sec:unweightedBagUFP}, we turn our attention to the unweighted case of bagUFP, and we show that the special instances of bagMWISR that are generated by the reduction from bagUFP have a special structure that can be used to obtain a $O(1)$-approximation. This result builds on the work of \cite{aglw14}.

\chapter{Preliminaries}\label{chap:preliminaries}
In this chapter, we introduce some important tools and preliminary results that we will use extensively in the following chapters. Some results are already mentioned in literature, and we briefly review them; some others, while they are not published in literature to the best of our knowledge, are based on standard techniques; since they are crucial to our approaches, we provide full proofs of them.

\section{Next Fit Decreasing Height}\label{sec:nfdh}

One of the most recurring tools used as a subroutine in countless results on geometric packing problems is the Next Fit Decreasing Height (NFDH) algorithm, which was originally analyzed in \cite{cgjt80} in the context of Strip Packing. We will use a variant of this algorithm to pack rectangles inside a box.

Suppose you are given a box $C$ of size $w\times h$, and a set of rectangles $I'$ each one fitting in the box (without rotations). NFDH computes in polynomial time a packing (without rotations) of $I''\subseteq I'$ as follows. It sorts the rectangles $i\in I'$ in non-increasing order of height $h_i$, and considers rectangles in that order $i_1,\ldots,i_n$. Then the algorithm works in rounds $j\geq 1$. At the beginning of round $j$ it is given an index $n(j)$ and a horizontal segment $L(j)$ going from the left to the right side of $C$. Initially $n(1)=1$ and $L(1)$ is the bottom side of $C$. In round $j$ the algorithm packs a maximal set of rectangles $i_{n(j)},\ldots,i_{n(j+1)-1}$, with bottom side touching $L(j)$ one next to the other from left to right (a \emph{shelf}). The segment $L(j+1)$ is the horizontal segment containing the top side of $i_{n(j)}$ and ranging from the left to the right side of $C$. The process halts at round $r$ when either all the rectangles have being packed or $i_{n(r+1)}$ does not fit above $i_{n(r)}$. See Figure~\ref{fig:nfdh} for an example of packing produced by this algorithm.

\begin{figure}
	\centering
	\includegraphics[width=6cm]{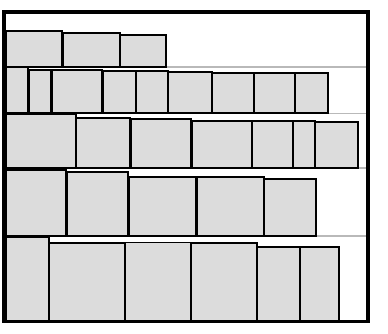}
	\caption{A packing obtained using the classical Next Fit Decreasing Height algorithm.}
	\label{fig:nfdh}
\end{figure}

We prove the following known result:
\begin{lemma}\label{lem:nfdhPack}
	Let $C$ be a given box of size $w \times h$, and let $I'$ be a set of rectangles. Assume that, for some given parameter $\eps\in (0,1)$, for each $i\in I'$ one has $w_i\leq \eps w$ and $h_i\leq \eps h$. Then NFDH is able to pack in $C$ a subset $I''\subseteq I'$ of total area at least $a(I'')\geq \min\{a(I'),(1-2\eps)w\cdot h\}$. In particular, if $a(I')\leq (1-2\eps)w\cdot h$, all the rectangles in $I'$ are packed.
\end{lemma}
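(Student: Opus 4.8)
The plan is to analyze the NFDH packing shelf by shelf and bound the wasted area. Recall that the algorithm sorts the rectangles of $I'$ by non-increasing height and packs them into horizontal shelves; say the process uses shelves $S_1, \dots, S_r$ (from bottom to top), where shelf $S_j$ starts at segment $L(j)$ and has height $H_j$ equal to the height of the first (hence tallest) rectangle placed in it. Since heights are non-increasing, the rectangles that fit into shelf $S_{j+1}$ all have height at most $H_{j+1} \le$ (height of the last rectangle of $S_j$) $\le H_j$. The key geometric observation is: within a single shelf $S_j$, the rectangles are placed left-to-right until the next one does not fit horizontally; hence the total width of rectangles in $S_j$ exceeds $w - \eps w$ (the unused horizontal strip has width less than $w_i \le \eps w$ for the first rectangle that did not fit), \emph{unless} $S_j$ is the last shelf or $S_j$ contains the last rectangle overall. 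Consequently the area of rectangles packed in $S_j$ is at least $(1-\eps)w \cdot H_{j+1}$, because each rectangle in $S_j$ has height at least $H_{j+1}$ (again by the sorting) and their widths sum to more than $(1-\eps)w$.

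Next I would sum this bound over all shelves. Let $h' = H_1 + H_2 + \dots + H_r$ be the total height consumed by the shelves; since the last shelf $S_r$ fits inside $C$ we have $h' \le h$ — more precisely $H_1 + \dots + H_{r-1} \le h$ once we account that $S_r$ could not be followed by another shelf, but in any case $H_1 + \dots + H_{r-1} < h$ and $H_r \le \eps h$. From the per-shelf bound, the total area packed satisfies
\[
a(I'') \;\ge\; \sum_{j=1}^{r-1} (1-\eps)\, w\, H_{j+1} \;=\; (1-\eps)\, w \, (H_2 + \dots + H_r).
\]
Now either all rectangles of $I'$ were packed, in which case $a(I'') = a(I') \ge \min\{a(I'), (1-2\eps)wh\}$ and we are done; or the algorithm halted because some rectangle did not fit above the last shelf, which means $H_1 + \dots + H_r + h_{\mathrm{next}} > h$ for the next (unpacked) rectangle, so $H_1 + \dots + H_r > h - \eps h = (1-\eps)h$, whence $H_2 + \dots + H_r > (1-\eps)h - H_1 \ge (1-\eps)h - \eps h = (1-2\eps)h$. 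Plugging in gives $a(I'') \ge (1-\eps) w \cdot (1-2\eps) h \ge (1-2\eps) w h$ (since $(1-\eps) \ge 1$ is false — careful here, $(1-\eps)(1-2\eps) \ge 1 - 2\eps$ does hold because $(1-\eps) \le 1$ would make it smaller, so I must instead keep the product $(1-\eps)(1-2\eps)$ and note it is $\ge 1-3\eps$; alternatively tighten the shelf bound).

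Let me flag this as the main obstacle: getting the constant to be exactly $1-2\eps$ rather than something like $1-3\eps$ requires being slightly more careful about where the $\eps$ losses occur. The cleaner route is to charge the horizontal waste and the vertical waste against \emph{different} parts of the budget: the first rectangle of each shelf has width $\le \eps w$, but one can instead observe that the strip of height $H_1$ at the very bottom (the first shelf) and the unfilled vertical remainder at the top together have height at most $H_1 + h_{\mathrm{next}} \le \eps h + \eps h$... this still gives $2\eps$ only in the vertical direction, and separately $\eps$ horizontally. So the honest statement one proves this way is closer to $(1-\eps)(1-2\eps) wh$; to recover exactly $(1-2\eps)wh$ as stated, I would use the slightly stronger shelf-width bound that the \emph{total} width of packed rectangles in a non-final shelf is at least $w - \max_i w_i \ge w - \eps w$, combined with summing heights $H_2 + \dots + H_r$ over the shelves \emph{after discarding only the top shelf's height}, and then argue that the discarded bottom+top contribute at most $2\eps h$ to height while absorbing the horizontal $\eps$ into the fact that we only need to match $(1-2\eps)wh$, not $(1-\eps)^2 wh$. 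In any case, once all the rectangles fit, the ``in particular'' clause is immediate. I expect the write-up to hinge entirely on setting up this shelf-area inequality correctly and on the case distinction between ``everything packed'' and ``halted early''.
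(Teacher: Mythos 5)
Your shelf-by-shelf setup is the right one and matches the paper's, but the proposal as written does not prove the stated bound, and you correctly sensed this: summing only over shelves $1,\dots,r-1$ and then discarding $H_1$ from the height sum loses two factors of $\eps$ vertically plus one horizontally, yielding only $(1-\eps)(1-2\eps)\,wh$, which is strictly smaller than $(1-2\eps)\,wh$ for $\eps<1/2$. The repair you sketch ("absorb the horizontal $\eps$ since we only need $(1-2\eps)wh$") is not a valid argument — there is nothing left to absorb once the bound has fallen below the target — so the proof has a genuine gap at exactly the step you flagged.

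The missing observation is that in the halting case the width bound holds for the \emph{last} shelf as well: shelf $r$ is closed because the next rectangle $i_{n(r+1)}$ does not fit to its right, so the rectangles in shelf $r$ have total width exceeding $(1-\eps)w$, and each of them has height at least $h_{i_{n(r+1)-1}} \geq h_{i_{n(r+1)}}$. Including this shelf, the area of shelf $j$ is at least $(1-\eps)w\cdot h_{i_{n(j+1)-1}} \geq (1-\eps)w\cdot h_{i_{n(j+1)}}$ for every $j=1,\dots,r$, so the height sum you collect is $h_{i_{n(2)}}+\dots+h_{i_{n(r+1)}}$, i.e.\ it is charged against the height of the first \emph{unpacked} rectangle rather than stopping at the top shelf. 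Combined with the halting condition $\sum_{j=1}^{r+1} h_{i_{n(j)}} > h$, this sum exceeds $h - h_{i_{n(1)}} \geq (1-\eps)h$, and the total packed area is at least $(1-\eps)^2 wh = (1-2\eps+\eps^2)wh \geq (1-2\eps)wh$. This is exactly how the paper concludes; with that single change (keep shelf $r$, drop only $H_1$) your argument goes through, and the "in particular" clause is immediate as you say.
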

\begin{proof}
	The claim trivially holds if all rectangles are packed.  Thus suppose that this is not the case.
	Observe that $\sum_{j=1}^{r+1}h_{i_{n(j)}} > h$, otherwise the rectangle $i_{n(r+1)}$ would fit in the next shelf above $i_{n(r)}$; hence $\sum_{i=2}^{r+1}h_{i_{n(j)}} > h - h_{i_{n(1)}} \geq (1 - \eps)h$. Observe also that the total width of the rectangles packed in each round $j$ is at least $w-\eps w = (1 - \eps)w$, since $i_{n(j+1)}$, of width at most $\eps w$, does not fit to the right of $i_{n(j+1)-1}$. It follows that the total area of the rectangles packed in round $j$ is at least $(w-\eps w)h_{n(j+1)-1}$, and thus:
	\[
		a(I'')\geq \sum_{j=1}^{r}(1-\eps)w \cdot h_{n(j+1)-1}\geq (1-\eps )w\sum_{j=2}^{r+1}h_{n(j)}\geq(1-\eps)^2w\cdot h\geq (1 - 2\eps)w\cdot h.
	\]
\end{proof}

We can use NFDH as an algorithm for Strip Packing, where we work on a box of fixed with $W$ and unbounded height, and we pack all elements. In this context, the following result holds:

\begin{lemma}[\cite{cgjt80}]\label{lem:nfdhStripPacking}
	Given a strip packing instance $(\R,W)$, the NFDH algorithm gives a packing of height at most $h_{\max}(\R) + \frac{2a(\R)} {W}$.
\end{lemma}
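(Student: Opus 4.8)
The plan is to run NFDH on the strip of width $W$ (with unbounded height, so every rectangle eventually gets packed), track the shelves it creates, and bound the total height by comparing the wasted space in each shelf against the area it actually contains. Let the shelves produced be $S_1, \dots, S_r$ from bottom to top, where shelf $S_j$ has height $h(S_j)$ equal to the height of its first (tallest) rectangle $i_{n(j)}$. Then the total packing height is exactly $\sum_{j=1}^r h(S_j)$, so it suffices to show $\sum_{j=1}^r h(S_j) \leq h_{\max}(\R) + \frac{2a(\R)}{W}$.

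First I would isolate the first shelf: $h(S_1) = h_{i_{n(1)}} \leq h_{\max}(\R)$, since $i_{n(1)}$ is the globally tallest rectangle. For the remaining shelves, the key observation is that because rectangles are processed in non-increasing order of height, every rectangle placed in shelf $S_j$ has height at least $h_{i_{n(j+1)}} = h(S_{j+1})$, the height of the first rectangle of the next shelf. Combined with the fact that the rectangles in shelf $S_j$ have total width exceeding $W - w_{\max}$ — actually, more simply, I will use that the next rectangle $i_{n(j+1)}$ did not fit to the right, which is not quite what I want; instead the cleaner route is: the rectangles in shelf $S_j$ all have height $\geq h(S_{j+1})$, so the area they occupy is at least $(\text{total width in } S_j)\cdot h(S_{j+1})$. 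To turn this into a clean bound I pair shelf $S_j$ with the height $h(S_{j+1})$ and sum.

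The main step is the area accounting. For $j \geq 1$, let $W_j$ denote the total width of rectangles in shelf $S_j$; then $a(\text{shelf } S_j) \geq W_j \cdot h(S_{j+1})$ for $j = 1, \dots, r-1$ (for $j=r$ there is no next shelf). Now I claim $W_j > W/2$ for every shelf except possibly the last: indeed, the rectangle $i_{n(j+1)}$ that starts the next shelf has width $\leq W$ and did not fit next to the rectangles already in $S_j$, so $W_j + w_{i_{n(j+1)}} > W$; this alone gives only $W_j > W - w_{i_{n(j+1)}}$, which is not $>W/2$ in general. So the honest classical argument does not split by $W/2$ per shelf but instead telescopes: summing $a(\text{shelf } S_j) \geq W_j h(S_{j+1})$ over $j=1,\dots,r-1$ and using $W_j \geq$ (width of the single first rectangle of $S_j$)$\ =\ $ nothing useful either. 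The correct and standard finish is: $\sum_{j=2}^{r} h(S_j) \leq \sum_{j=1}^{r-1} h(S_{j+1})$, and since every rectangle in shelf $S_j$ has height $\geq h(S_{j+1})$ while shelf $j$ together with the first rectangle of shelf $j{+}1$ spans total width $> W$, the area of shelf $S_j$ plus the area-contribution bound gives $\bigl(a(\text{shelf }S_j) + a(\text{first rect of }S_{j+1})\bigr) \geq W \cdot h(S_{j+1})$, because the rectangles of $S_j$ plus $i_{n(j+1)}$ have combined width $> W$ and each has height $\geq h(S_{j+1})$. Summing over $j=1,\dots,r-1$, the areas telescope into at most $2a(\R)$ (each rectangle counted at most twice — once in its own shelf, once possibly as the "first rectangle of the next shelf"), yielding $W \sum_{j=2}^{r} h(S_j) \leq 2a(\R)$, hence $\sum_{j=2}^r h(S_j) \leq 2a(\R)/W$. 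Adding $h(S_1) \leq h_{\max}(\R)$ completes the proof.

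The part I expect to require the most care is exactly this telescoping over-counting argument: making precise that $\sum_{j=1}^{r-1}\bigl(a(\text{shelf }S_j) + h_{i_{n(j+1)}} w_{i_{n(j+1)}}\bigr) \leq 2a(\R)$, i.e. that no rectangle is charged more than twice, and handling the boundary shelves ($S_1$ contributes its own height separately, $S_r$ has no successor) cleanly. Everything else — the monotone-height ordering guaranteeing heights in $S_j$ dominate $h(S_{j+1})$, and the width-overflow inequality $W_j + w_{i_{n(j+1)}} > W$ — is immediate from the definition of the algorithm.
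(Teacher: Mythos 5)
Your proof is correct. Note that the paper does not actually prove this lemma: it is stated as a cited result from \cite{cgjt80}, so there is no in-paper proof to compare against; the closest analogue is the proof of Lemma~\ref{lem:nfdhPack}, where the assumption that every rectangle has width at most $\eps w$ lets one bound the width packed in \emph{each} shelf by $(1-\eps)w$ and avoid any double counting. In the general strip-packing setting that shortcut is unavailable (as you correctly observed when abandoning the $W_j > W/2$ attempt), and your fix is the standard one: pair shelf $S_j$ with the first rectangle of $S_{j+1}$, use that their combined width exceeds $W$ while all these rectangles have height at least $h(S_{j+1})$, and charge each rectangle at most twice (once as a member of its own shelf, once as the opener of the next), giving $W\sum_{j\geq 2} h(S_j) \leq 2a(\R)$ and hence the claimed bound after adding $h(S_1)\leq h_{\max}(\R)$. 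Two small presentational points: what you call ``telescoping'' is really just this bounded double counting, and the final write-up should excise the exploratory detours (the aborted width-$W/2$ claim and the ``nothing useful either'' passage) so that only the correct accounting remains; the boundary handling ($S_1$ charged to $h_{\max}$, $S_r$ having no successor) is fine as you state it.
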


\section{Steinberg's algorithm}\label{sec:Steinberg}

The following theorem gives sufficient conditions to pack a given set of rectangles in a rectangular box. We denote $x_+:=\max(x,0)$.

\begin{theorem}[\cite{s97}] \label{thm:steinberg}
	Suppose that we are given a set of rectangles $I'$ and a box $Q$ of size $w \times h$.
	Let $w_{max}\leq w$ and $h_{max}\leq h$ be the maximum width and maximum height among the rectangles in $I'$ respectively.
	If 
	$$
	2a(I') \le wh-(2w_{max}-w)_+(2h_{max}-h)_+
	$$
	then all the rectangles in $I'$ can be packed into $Q$ in polynomial time.
\end{theorem}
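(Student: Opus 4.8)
The plan is to prove Steinberg's theorem by a recursive splitting argument: we repeatedly cut the box $Q$ into two sub-boxes with a single horizontal or vertical line, partition $I'$ accordingly, and recurse, maintaining as an invariant the same ``Steinberg inequality'' $2a(I') \le wh - (2w_{\max}-w)_+(2h_{\max}-h)_+$ in each sub-box. The base case is when $|I'|\le 1$, which is trivial, or when the elements are so small relative to the box that a greedy shelf-packing (essentially NFDH, or a direct one-shelf argument) fits everything; here the correction term $(2w_{\max}-w)_+(2h_{\max}-h)_+$ vanishes because $w_{\max}\le w/2$ or $h_{\max}\le h/2$, and $2a(I')\le wh$ alone must be enough to pack via a clean area argument.

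First I would set up the reduction to two structural cases. Sort the rectangles and consider the widest one; by symmetry (swapping the roles of width/height) assume $w_{\max} \ge h_{\max}$ after possibly transposing. The key case split is on whether some rectangle is ``large'' in one dimension, say $w_{\max} > w/2$, versus all rectangles being ``thin'' ($w_{\max}\le w/2$ and $h_{\max}\le h/2$), in which case the correction term is zero and we just need $2a(I')\le wh$. In the thin case I would invoke a Next-Fit-Decreasing-Height style packing: Lemma~\ref{lem:nfdhPack} is tailored for exactly this, giving that if $a(I') \le (1-2\eps)wh$ with all dimensions at most $\eps$ times the box, everything packs — so one needs a slightly sharper version (the classical statement packs everything when $2a(I')\le wh$ and $w_{\max}\le w/2, h_{\max}\le h/2$, provable by a careful shelf count). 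In the large case, pull out all rectangles with $w_i > w/2$ (there can be several, but stacking them vertically uses height $\sum h_i$), place them stacked at the left, check the remaining strip still satisfies Steinberg's inequality with the leftover area, and recurse.

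The recursion itself I would organize as a careful case analysis on the relative sizes, following Steinberg's original four-case structure: (i) split off one rectangle if it is wide enough that the rest fits beside it; (ii) if the total area is small enough relative to one half of the box, cut the box in half along the longer side and show the inequality is inherited by at least one half after a suitable rebalancing of the partition of $I'$ (a continuity/pigeonhole argument on where to place the cut); (iii) and (iv) symmetric variants. The crucial verification at each recursive step is that the chosen cut position can always be found so that both halves inherit the invariant — this is where one balances areas against the $(2w_{\max}-w)_+$ correction terms, which only shrink or vanish as the box shrinks in the cut direction.

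The hard part will be case (ii)/(iii): showing that a balanced cut always exists that preserves the inequality in both pieces, because the correction term $(2w_{\max}-w)_+(2h_{\max}-h)_+$ behaves non-linearly — it can jump when a previously ``small'' rectangle becomes ``large'' relative to a shrunken sub-box. The trick is to always cut perpendicular to the dimension in which no rectangle is more than half the box (so that the correction term is zero on that side of the recursion), and to use a sliding/intermediate-value argument to position the cut so the area on each side is proportioned correctly; one must check that $h_{\max}\le h$ and $w_{\max}\le w$ remain valid in each sub-box, which forces the cut to avoid being too small. I would also need to confirm the whole recursion has polynomial depth — each cut either removes at least one rectangle or halves a dimension in a way that reduces $|I'|$ geometrically on average, giving an $O(n\log n)$ or $O(n^2)$ bound; making this precise (rather than just ``it terminates'') is the bookkeeping cost of getting the ``polynomial time'' conclusion rather than mere existence.
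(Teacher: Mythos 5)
The paper does not prove this theorem at all: it is imported as a black box from Steinberg \cite{s97}, so there is no in-paper proof to compare against. What you are sketching is a re-derivation of Steinberg's original argument, which is indeed a recursive case analysis that cuts the box and preserves the inequality in the sub-problems. Your outline points in the right general direction, but as written it has two genuine gaps that sit exactly where the real work is.

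First, the base case. You claim that when $w_{\max}\le w/2$ and $h_{\max}\le h/2$ the hypothesis $2a(I')\le wh$ already suffices for a shelf/NFDH packing ``by a careful shelf count''. No shelf argument gives this, and it is not what the paper's NFDH lemmas give: Lemma~\ref{lem:nfdhPack} requires items that are $\eps$-small in both dimensions and only guarantees a $(1-2\eps)$ area utilization, which is vacuous at $\eps=1/2$, while Lemma~\ref{lem:nfdhStripPacking} certifies a packing only when $h_{\max}+2a(I')/w\le h$, i.e.\ $2a(I')\le w(h-h_{\max})$, which at $h_{\max}=h/2$ is half of Steinberg's bound (area about $wh/4$ instead of $wh/2$). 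The half-area statement for items with both sides at most half the box is essentially the full strength of Steinberg's theorem in that regime, so assuming it as a base case is circular; it has to come out of the recursive procedure itself.

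Second, invariant preservation under cuts. You assert that the correction term ``only shrinks or vanishes as the box shrinks in the cut direction'', but $(2w_{\max}-w)_+$ \emph{grows} as $w$ shrinks; you then acknowledge the jump, but the proposed fix --- cut perpendicular to a dimension in which no rectangle exceeds half the box --- does not make the correction vanish in the sub-boxes, since an item of width at most $w/2$ can exceed half of a sub-box of width $x<w$. Similarly, in your ``large'' case, stacking every item with $w_i>w/2$ on the left requires $\sum_i h_i\le h$ and that the residual region inherits the inequality, neither of which you verify. Finding the cut position (or the rectangle to split off) so that \emph{both} resulting sub-problems satisfy the hypothesis is precisely the content of Steinberg's case analysis (his proof runs through several distinct procedures, each with its own balancing condition), so the proposal as it stands defers the proof rather than supplying it; the polynomial-time claim is likewise only asserted via a heuristic termination argument.
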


In particular, we will use the following simpler corollary:
\begin{corollary}Suppose that we are given a set of rectangles $I'$ and a box $Q$ of size $w \times h$. Moreover, suppose that each rectangle in $I'$ has width at most $w/2$ (resp. each rectangle in $I'$ has height at most $h/2$), and $a(I') \leq wh/2$. Then all the rectangles in $I'$ can be packed into $Q$ in polynomial time.
\end{corollary}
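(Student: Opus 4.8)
The plan is to obtain the corollary as an immediate specialization of Steinberg's theorem (Theorem~\ref{thm:steinberg}), by observing that in each of the two stated cases the subtracted correction term $(2w_{max}-w)_+(2h_{max}-h)_+$ vanishes, so that the hypothesis of Theorem~\ref{thm:steinberg} collapses to exactly the area bound we are given.

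First I would treat the case in which every rectangle of $I'$ has width at most $w/2$. Then $w_{max}\le w/2$, hence $2w_{max}-w\le 0$ and so $(2w_{max}-w)_+=0$; consequently $(2w_{max}-w)_+(2h_{max}-h)_+=0$ regardless of the value of $h_{max}$, and the right-hand side of the inequality required by Theorem~\ref{thm:steinberg} is simply $wh$. As usual we assume each rectangle of $I'$ fits in $Q$, so $w_{max}\le w$ and $h_{max}\le h$ as required by that theorem. It then remains only to check $2a(I')\le wh$, which is precisely the hypothesis $a(I')\le wh/2$. Applying Theorem~\ref{thm:steinberg} gives a polynomial-time packing of all of $I'$ into $Q$.

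The case in which every rectangle of $I'$ has height at most $h/2$ is perfectly symmetric: now $h_{max}\le h/2$ forces $(2h_{max}-h)_+=0$, the subtracted product is again zero, the condition of Theorem~\ref{thm:steinberg} reduces once more to $2a(I')\le wh$, and the same conclusion follows.

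There is essentially no obstacle here: the only point worth stating explicitly is that the case hypothesis ($w_{max}\le w/2$, respectively $h_{max}\le h/2$) is exactly what is needed to annihilate one factor of the correction term in Steinberg's bound, after which the area condition $a(I')\le wh/2$ matches the reduced inequality exactly. The polynomial running time is inherited verbatim from Theorem~\ref{thm:steinberg}.
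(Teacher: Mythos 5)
Your proof is correct and matches the paper's intent: the corollary is stated there as an immediate consequence of Theorem~\ref{thm:steinberg}, and your observation that the case hypothesis kills the correction term $(2w_{max}-w)_+(2h_{max}-h)_+$, reducing the condition to $2a(I')\le wh$, is exactly the intended derivation. The only (harmless) implicit step is assuming each rectangle fits in $Q$ at all, i.e.\ $h_{max}\le h$ (resp.\ $w_{max}\le w$), which you state explicitly and which is required anyway for any packing to exist.
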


\section{The Maximum Generalized Assignment Problem} \label{sec:GAP}
In this section we show that there is a PTAS for the Maximum Generalized Assignment Problem (GAP) if the number of bins is constant.

GAP is a powerful generalization of the Knapsack problem. We are given a set of $k$ bins, where bin $j$ has capacity $c_j$, and a set of $n$ items. Let us assume that if item $i$ is packed in bin $j$, then it requires size  $s_{ij} \in \mathbb{Z}$ and has profit $p_{ij} \in \mathbb{Z}$. Our goal is to select a maximum profit subset of items, while respecting the capacity constraints on each bin.

GAP is known to be APX-hard and the best known polynomial time approximation algorithm has ratio $(1-1/e+\eps)$ (\cite{fgms11, fv06}). 
In fact, for an arbitrarily small constant $\delta > 0$ (which can even be a function of $n$) GAP remains APX-hard even on the following very restricted instances: bin capacities are identical, and for each item $i$ and bin $j$, $p_{ij} = 1$, and $s_{ij} = 1$ or $s_{ij}=1+ \delta$ (\cite{ck05}). The complementary case, where item sizes do not vary across bins but profits do, is also APX-hard.
However, when all profits and sizes are the same across all the bins (that is, $p_{ij}=p_{ik}$ and $s_{ij}=s_{ik}$ for all bins $j,k$), the problem is known as multiple knapsack problem (MKP) and it admits a PTAS.

Let $p(OPT)$ be the cost of the optimal assignment. First, we show that we can solve GAP \emph{exactly} in pseudopolynomial time.

\begin{lemma}
	\label{lem:GAP-PPT}
	There is an algorithm running in time $O\left(n D^k\right)$ that finds an optimal solution for the maximum generalized assignment problem, where there are $k$ bins and each of them has capacity at most $D$.
\end{lemma}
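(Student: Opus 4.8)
The plan is to give a straightforward dynamic program over the bins. I would process the $k$ bins one at a time, maintaining a table that records, for each prefix of the bins and each vector of ``used capacities'', the maximum profit obtainable by assigning a subset of the already-considered items to those bins. More precisely, define $T[i][d_1,\dots,d_k]$ to be the maximum profit of a feasible partial assignment of items from $\{1,\dots,i\}$ in which bin $j$ uses exactly (or at most) $d_j$ of its capacity, where $0 \le d_j \le c_j \le D$. The recurrence considers item $i$: either it is discarded, giving $T[i-1][d_1,\dots,d_k]$, or it is placed in some bin $j$ for which $d_j \ge s_{ij}$, giving $p_{ij} + T[i-1][d_1,\dots,d_j - s_{ij},\dots,d_k]$; we take the maximum over these $k+1$ options. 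The base case is $T[0][\vec 0] = 0$ and $T[0][\vec d] = 0$ (or $-\infty$ if we insist on exact usage, but ``at most'' is cleaner). The answer is $\max_{\vec d} T[n][\vec d]$, and an optimal assignment itself can be recovered by standard backtracking through the table.

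For the running time, the table has $n+1$ layers in the item index and at most $(D+1)^k$ entries per layer for the capacity vector, since each coordinate ranges over $\{0,1,\dots,D\}$ (using that every $c_j \le D$). Computing each entry takes $O(k)$ time because we minimize over discarding the item or inserting it into one of the $k$ bins, and with $k$ treated as part of the input this is $O(k)$ per cell; one can also simply fill the table by ``pushing'' each item forward from layer $i-1$ to layer $i$, touching each of the $(D+1)^k$ states once and doing $O(k)$ work. Either way the total is $O\bigl(n k (D+1)^k\bigr) = O(nD^k)$ in the stated form where the $k$ factor (and the additive constants inside $D$) are absorbed, since for fixed $k$ this is $O(nD^k)$; if one wants the bound exactly as written one observes $k(D+1)^k = O(D^k)$ up to constants depending on $k$, or one restricts attention to the regime where these lower-order factors are harmless. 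I would state the bound as $O(nD^k)$ and remark that the hidden constant may depend on $k$, which is consistent with how the lemma is phrased.

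Correctness follows by a routine induction on $i$: any feasible assignment of a subset of $\{1,\dots,i\}$ either omits item $i$, in which case it is a feasible assignment of a subset of $\{1,\dots,i-1\}$ with the same capacity profile, or assigns $i$ to some bin $j$, in which case removing $i$ yields a feasible assignment of a subset of $\{1,\dots,i-1\}$ whose usage of bin $j$ is exactly $s_{ij}$ less; conversely every configuration arising on the right-hand side of the recurrence extends to a feasible partial assignment of $\{1,\dots,i\}$. Hence $T[i][\vec d]$ equals the claimed optimum, and taking $i=n$ and maximizing over $\vec d$ gives $p(\opt)$. I do not anticipate a genuine obstacle here; the only mild subtlety is the bookkeeping choice between ``exact capacity usage'' and ``capacity usage at most $d_j$'' in the state — the latter makes the final $\max$ over $\vec d$ unnecessary (the answer is just $T[n][c_1,\dots,c_k]$) and avoids an $-\infty$ sentinel, so that is the version I would write up.
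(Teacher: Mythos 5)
Your proposal is correct and is essentially the same argument as the paper's: the paper uses exactly this dynamic program over items with a $k$-dimensional residual-capacity state (profit $P[i,d_1,\dots,d_k]$, recurrence "skip item $i$ or place it in some bin $j$"), and bounds the time by $O\bigl(n\prod_j c_j\bigr)=O(nD^k)$, likewise absorbing the per-entry $O(k)$ factor. Your remarks on the "at most" versus "exact" capacity bookkeeping and on constants depending on $k$ are fine refinements but do not change the approach.
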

\begin{proof}
	For each $i \in [n]$ and $d_j \in [c_j]$ for $j \in [k]$, let $S_{i, d_1,d_2, \dots, d_k}$ denote a subset of the set of items $\{1, 2, \dots, i\}$ packed into the bins such that the profit is maximized and the capacity of bin $j$ is at most $d_j$. 
	Let $P[i, d_1, d_2, \dots, d_k]$ denote the profit of $S_{i, d_1, d_2, \dots, d_k}$.
	Clearly $P[1, d_1, d_2, \dots, d_k]$ is known for all $d_j \in [c_j]$ for $j \in [k]$. Moreover, for convenience we define $P[i, d_1, d_2, \dots, d_k] = 0$ if $d_j < 0$ for any $j \in [k]$.
	We can compute the value of $P[i, d_1, d_2, \dots, d_k]$ by using a dynamic program (DP), that exploits the following recurrence:
	\begin{align*}
	P[i, d_1, d_2, \dots, d_k] = \max\{&P[i-1, d_1, d_2, \dots, d_k],\\
	& \max_j \{p_{ij}+ P[i-1, d_1, \dots, d_j -s_{ij},  \dots, d_k]\}\}
	\end{align*}
	With a similar recurrence, we can easily compute a corresponding set $S_{i, d_1,d_2, \dots, d_k}$.\\
	Clearly, this dynamic program can be \sal{executed} in time $O\Big(n \prod\limits_{j=1}^k c_j\Big) = O\left(n D^k\right)$.
\end{proof}

The following lemma shows that we can also solve GAP optimally even in polynomial time, if we are allowed a slight violation of the capacity constraints (that is, in the resource augmentation model).
\begin{lemma}
	\label{lem:GAPresaug}
	There is a $O\left({\left(\frac{1 + \eps}{\eps}\right)}^k n^{k+1}\right)$ time algorithm for the maximum generalized assignment problem with $k$ bins, which returns a solution with profit at least $p(OPT)$ if we are allowed to augment the bin capacities by a $(1+\eps)$-factor for any fixed $\eps>0$.
\end{lemma}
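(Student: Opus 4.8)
The plan is to combine a guessing step for the ``large'' items in each bin with the pseudo-polynomial dynamic program of Lemma~\ref{lem:GAP-PPT} applied to a suitably rescaled instance, exactly in the spirit of the classical resource-augmentation argument for Knapsack/MKP. Fix an optimal solution $OPT$. For each bin $j$, call an item \emph{big} (for bin $j$) if its size $s_{ij}$ exceeds $\eps c_j$, and \emph{small} otherwise. Since the items of $OPT$ assigned to bin $j$ have total size at most $c_j$, at most $1/\eps$ of them are big for that bin; hence across all $k$ bins at most $k/\eps$ item-bin big assignments occur in $OPT$. The first step is therefore to \emph{guess} this part of the optimal solution: enumerate all ways of choosing, for each bin $j$, a set of at most $\lfloor 1/\eps\rfloor$ items to be placed in bin $j$ as big items. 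There are at most $\left(n+1\right)^{1/\eps}$ choices per bin, hence $O\!\left(n^{k/\eps}\right)$ overall — but we want the cleaner bound stated, so instead we guess, for each bin, the \emph{number} $t_j \le 1/\eps$ of big items and then let the DP below select which ones; this keeps the enumeration at $O\!\left(\left(\tfrac{1+\eps}{\eps}\right)^k\right)$ for the threshold guesses and folds the item selection into the polynomial-size table. (I will phrase it whichever way makes the stated running time come out; the exponent $n^{k+1}$ suggests one DP over $n$ items with a $k$-dimensional capacity table of size $O(n)$ per coordinate, so rescaling is the right mechanism.)

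The second step is \emph{rescaling} so that the DP table has polynomial size. After deciding which items are ``small for bin $j$'', round each small size $s_{ij}$ up to the nearest multiple of $\eps c_j / n$; there are at most $n/\eps + O(1)$ distinct rounded capacity levels per bin, so the reachable capacity vectors form a grid of size $O\!\left((n/\eps)^k\right)$ — too big by an $\eps$ factor, so instead round to multiples of $c_j/n$ and note each bin can hold at most $n$ small items, giving at most $n$ relevant levels per bin and a table of size $O(n^k)$. Run the DP of Lemma~\ref{lem:GAP-PPT} on this rounded instance to compute, for every capacity vector $(d_1,\dots,d_k)$ with $d_j \in \{0, c_j/n, 2c_j/n, \dots, c_j\}$, the max-profit packing of items $\{1,\dots,i\}$ respecting rounded sizes; this costs $O(n \cdot n^k) = O(n^{k+1})$ time, and the $\left(\tfrac{1+\eps}{\eps}\right)^k$ factor comes from the big-item threshold enumeration. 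The key accounting: each small item placed in bin $j$ loses at most $c_j/n$ to rounding, and bin $j$ receives at most $n$ small items, so the total size used in bin $j$ exceeds the true capacity by at most $c_j$... which is too weak — so instead round down the \emph{capacities} reasoning: we allow augmentation, so run the DP with augmented capacity $(1+\eps)c_j$, rounded sizes underestimate true sizes never, and the slack $\eps c_j$ absorbs the at-most-$n$ roundings of size $\le \eps c_j/n$ each. With this bookkeeping the rounded-optimal packing has profit $\ge p(OPT)$ and fits in the $(1+\eps)$-augmented bins.

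The third step is to verify feasibility and optimality of what the DP returns. Feasibility: in each bin the big items are honest (their sizes are not rounded, or equivalently their count is bounded so their total rounding slack is $\le \eps c_j$), and the small items incur total rounding error $\le n \cdot (\eps c_j/n) = \eps c_j$, so everything fits within $(1+\eps)c_j$. Optimality: the restriction of $OPT$ to bin $j$, when its small sizes are replaced by the rounded (larger) values, still fits within $(1+\eps)c_j$ by the same slack computation, so $OPT$ itself is a feasible solution of the rounded, augmented instance; hence the DP, which is exact on that instance by Lemma~\ref{lem:GAP-PPT}, returns profit $\ge p(OPT)$.

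The main obstacle is getting the rounding granularity to simultaneously (i) make the capacity grid have only $O(n)$ levels per bin so the running time is $n^{k+1}$ and not $n^{k+1}/\eps^k$ or worse, and (ii) keep the per-bin total rounding error below the $\eps c_j$ slack the augmentation provides. The clean resolution is to round each small item's size up to the nearest multiple of $c_j/n$ (this is the natural granularity since a bin holds $\le n$ small items), which bounds both the grid size by $n+1$ and the accumulated error by $c_j$ — but since the small items themselves have size $\le \eps c_j$, one can afford to round to multiples of $\eps c_j / n$ instead, restoring the $\eps$ slack while the grid still has only $O(n/\eps)$ levels; absorbing the $1/\eps$ per coordinate into the constant gives the $\left(\tfrac{1+\eps}{\eps}\right)^k$ prefactor and leaves $n^{k+1}$ for the DP. I expect this to be routine once the parameters are pinned down, following \cite{js09}.
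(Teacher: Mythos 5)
Your final argument is essentially the paper's proof: round every size $s_{ij}$ up to a multiple of $\mu_j=\eps c_j/n$, run the exact pseudo-polynomial DP of Lemma~\ref{lem:GAP-PPT} on the rescaled instance, and observe that since a bin receives at most $n$ items the total rounding error per bin is at most $n\cdot \eps c_j/n=\eps c_j$, which the $(1+\eps)$-augmentation absorbs, so $OPT$ stays feasible and the DP returns profit at least $p(OPT)$. The big/small distinction and the ``guess the big items (or their number) per bin'' step are unnecessary and are where your write-up wobbles: guessing the actual big-item sets costs $n^{\Theta(k/\eps)}$ and breaks the stated running time, while guessing only their number buys nothing; the paper needs no such case split because the rounding error of \emph{every} item, big or small, is at most $\eps c_j/n$. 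Relatedly, the factor $\left(\frac{1+\eps}{\eps}\right)^k$ is not an enumeration count but simply the rescaled capacity range per bin, $c'_j=\lfloor(1+\eps)c_j/\mu_j\rfloor\le (1+\eps)n/\eps$, which multiplied over the $k$ bins and by the $n$ items gives the DP table size $O\left(n\left(\frac{(1+\eps)n}{\eps}\right)^k\right)$ — your last paragraph lands on this, and with the detour deleted the proof is correct.
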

\begin{proof}
	In order to obtain a polynomial time algorithm from Lemma~\ref{lem:GAP-PPT}, we want to construct a modified instance where each capacity $c_j$ is polynomially bounded.
	
	For each bin $j$, let $\mu_j=\frac{\eps c_j}{n}$.
	For item $i$ and bin $j$, define the modified size  $s'_{ij}= \left\lceil \frac{s_{ij}}{\mu_j} \right\rceil = \left\lceil \frac{n  s_{ij}}{\eps c_j} \right\rceil$
	and $c'_{j}= \left\lfloor \frac{(1+\eps)c_j}{\mu_j} \right\rfloor$.
	Note that $c'_j = \left\lfloor\frac{(1+\eps)n}{\eps}\right\rfloor \leq \frac{(1 + \eps)n}{\eps}$, so the algorithm from Lemma~\ref{lem:GAP-PPT} requires time at most $O\left(n \cdot \left(\frac{(1 + \eps)n}{\eps}\right)^k\right)$.
	
	Let $OPT_{modified}$ be the solution found for the modified instance.
	Now consider the optimal solution for the original instance (that is, with the original item and bin sizes) $OPT$. 
	If we show that the same assignment of items to the bins is a feasible solution  (with modified bin sizes and item sizes) for the modified instance, we obtain that $p(OPT_{modified}) \ge p(OPT)$ and that will conclude the proof.
	
	Let $S_j$ be the set of items packed in bin $j$ in $OPT$.
	Since it is feasible, we have that $\sum_{i \in S_j} s_{ij} \le c_j$. Hence,
	\begin{align*}
		\sum_{i \in S_j} s'_{ij} 
		& \le  \left\lfloor \sum_{i \in S_j} \left(\frac{s_{ij}}{\mu_j}+1\right) \right\rfloor  \le  \left\lfloor \frac{1}{\mu_j}\left(\sum_{i \in S_j} s_{ij} +|S_j| \mu_j \right)\right\rfloor\\
		&\le  \left\lfloor \frac{1}{\mu_j}(c_j + n \mu_j) \right\rfloor = \left\lfloor \frac{(1+\eps)c_j}{\mu_j} \right\rfloor=c'_j
	\end{align*}
	Thus $OPT$ is a feasible solution for the modified instance and the above algorithm will return a packing with profit at least $p(OPT)$ under $\eps$-resource augmentation.
\end{proof}

Now we can show how to employ this result to obtain a PTAS for GAP without violating the bin capacities. We first prove the following technical lemma.

\begin{lemma}\label{lem:GAP-bin-shifting-argument}
	If a set of items $R_j$ is packed in a bin $j$ with capacity $c_j$, then there exists a set of  at most $1/\eps^2$ items $X_j$, and a set of items $Y_j$ with $p(Y_j) \le \eps p(R_j)$ such  that all items in $R_j\setminus (X_j \cup Y_j)$ have size at most $\eps (c_j - \sum_{i \in X_j} s_{ij})$. 
\end{lemma}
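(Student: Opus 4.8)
The plan is to use a standard shifting argument over ``size windows''. First I would dispose of a degenerate case: assuming without loss of generality that $1/\eps$ is a positive integer, if $|R_j|\le 1/\eps^2$ I would simply set $X_j:=R_j$ and $Y_j:=\emptyset$, so that the first two requirements hold trivially and the size requirement is vacuous because $R_j\setminus(X_j\cup Y_j)=\emptyset$. So from now on assume $|R_j|>1/\eps^2$.

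Next I would sort the items of $R_j$ in non-increasing order of size $s_{ij}$ and partition the $1/\eps^2$ largest of them into $1/\eps$ consecutive blocks $W_1,\dots,W_{1/\eps}$, each containing exactly $1/\eps$ items, with $W_1$ holding the largest ones. Since the $W_\ell$ are disjoint subsets of $R_j$, we have $\sum_{\ell=1}^{1/\eps}p(W_\ell)\le p(R_j)$, so by averaging there is an index $\ell^*\le 1/\eps$ with $p(W_{\ell^*})\le \eps\,p(R_j)$. I would then set $Y_j:=W_{\ell^*}$ and $X_j:=W_1\cup\cdots\cup W_{\ell^*-1}$, so that $X_j\cup Y_j$ is precisely the set of the $\ell^*/\eps$ largest items; note $|X_j|=(\ell^*-1)/\eps\le 1/\eps^2$ and $p(Y_j)\le\eps\,p(R_j)$, as required.

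It then remains to verify the size condition. By construction every item in $R_j\setminus(X_j\cup Y_j)$ is no larger than any item of $W_{\ell^*}$; since $|W_{\ell^*}|=1/\eps$, the smallest item of $W_{\ell^*}$ has size at most $\eps\sum_{i\in W_{\ell^*}}s_{ij}=\eps\sum_{i\in Y_j}s_{ij}$. Finally, because $R_j$ is packed into a bin of capacity $c_j$ we have $\sum_{i\in X_j}s_{ij}+\sum_{i\in Y_j}s_{ij}\le\sum_{i\in R_j}s_{ij}\le c_j$, hence $\sum_{i\in Y_j}s_{ij}\le c_j-\sum_{i\in X_j}s_{ij}$. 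Combining the two bounds, every item of $R_j\setminus(X_j\cup Y_j)$ has size at most $\eps\bigl(c_j-\sum_{i\in X_j}s_{ij}\bigr)$, which completes the proof.

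This argument is essentially routine; the only subtlety worth flagging is the circular dependency between the choice of $X_j$ and the residual capacity $c_j-\sum_{i\in X_j}s_{ij}$. The window trick resolves it precisely because the items placed in $Y_j$ serve a double duty: their average size witnesses that everything below them is ``small'', while the fact that they fit inside the bin together with $X_j$ bounds their total size by the residual capacity left after removing $X_j$. The remaining points needing explicit care are merely the pigeonhole step guaranteeing $\ell^*\le 1/\eps$ (so $|X_j|\le 1/\eps^2$) and the degenerate case $|R_j|\le 1/\eps^2$.
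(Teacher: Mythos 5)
Your proof is correct, and it takes a genuinely different route from the paper's. The paper argues iteratively with an adaptive threshold: it repeatedly peels off the set $Q_l$ of items whose size exceeds $\eps$ times the \emph{current} residual capacity $c_j-\sum_{i\in X_j}s_{ij}$, stops as soon as a peeled set is cheap (profit at most $\eps p(R_j)$), and uses disjointness of the $Q_l$'s to bound the number of iterations by $1/\eps$ and hence $|X_j|\le 1/\eps^2$. You instead give a one-shot, rank-based argument: sort by size, split the $1/\eps^2$ largest items into $1/\eps$ windows of $1/\eps$ items, pick the cheapest window as $Y_j$ by averaging over profit, and let $X_j$ be everything above it; the circularity between $X_j$ and the residual capacity is broken because $Y_j$ plays a double role — its average size dominates every item below it, while $X_j\cup Y_j\subseteq R_j$ fitting into the bin bounds $\sum_{i\in Y_j}s_{ij}$ by $c_j-\sum_{i\in X_j}s_{ij}$. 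Your version avoids the iteration and the termination-by-contradiction step, at the cost of fixing the candidate set to the top $1/\eps^2$ items rather than letting the thresholds adapt; both yield exactly the guarantees $|X_j|\le 1/\eps^2$, $p(Y_j)\le\eps p(R_j)$, and the required size bound, and both are equally usable in the enumeration step of the GAP PTAS, where only the existence of such $X_j, Y_j$ matters.
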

\begin{proof}
	Let $Q_1$ be the set of items $i$ with $s_{ij} > \eps c_j $. If $p(Q_1) \le \eps p(R_j)$, we are done by taking $Y_j=Q_1$ and $X_j=\phi$.
	Otherwise, define $X_j:=Q_1$ and we continue the next iteration with the remaining items.
	Let $Q_2$ be the items with size greater than $\eps (c_j - \sum_{i \in X_j} s_{ij})$ in $R_j \setminus X_j$. If $p(Q_2) \le \eps p(R_j)$, we are done by taking $Y_j=Q_2$.
	Otherwise define $X_j:= Q_1 \cup Q_2$ and we continue with further iterations till  
	we get a set $Q_t$ with $p(Q_t) \le \eps p(R_j)$. Note that we need at most $\frac{1}{\eps}$ iterations, since the sets $Q_i$ are disjoint.
	Otherwise:
	\[
	p(R_j) \ge \sum\limits_{i=1}^{1/\eps} p(Q_i) > \sum\limits_{i=1}^{1/\eps} \eps p(R_j) \ge p(R_j)
	\]
	which is a contradiction.
	Thus, consider $Y_j=Q_t$ and $X_j=\bigcup_{l=1}^{t-1} Q_l$. One has $|X_j| \le 1/\eps^2$ and $p(Y_j)\le \eps p(R_j)$.  On the other hand, after removing $Q_t$, the remaining  items have size smaller than $\eps (c_j - \sum_{i \in X_j} s_{ij})$. 
\end{proof}

\begin{lemma}\label{lem:GAP}
	There is an algorithm for the maximum generalized assignment problem with $k$ bins that runs in time $O\left(\left(\frac{1 + \eps}{\eps}\right)^k n^{k/\eps^2 + k + 1}\right)$ and returns a solution that has profit at least $(1 - 3\eps)p(OPT)$, for any fixed $\eps > 0$.
\end{lemma}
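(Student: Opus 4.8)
The plan is to stack the three preceding lemmas: guess the ``large'' items that $OPT$ uses in each bin via the shifting argument of Lemma~\ref{lem:GAP-bin-shifting-argument}, fill the residual capacity using the resource-augmentation solver of Lemma~\ref{lem:GAPresaug}, and then repair the small capacity overflow by throwing away a cheap set of items.

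\textbf{Guessing the large items.} Apply Lemma~\ref{lem:GAP-bin-shifting-argument} to the set $R_j$ of items that $OPT$ places in bin $j$. This produces pairwise disjoint sets $X_j,Y_j\subseteq R_j$ with $|X_j|\le 1/\eps^2$ and $p(Y_j)\le\eps\,p(R_j)$ such that every item of $R_j\setminus(X_j\cup Y_j)$ has size at most $\eps c_j'$, where $c_j':=c_j-\sum_{i\in X_j}s_{ij}$. First I would enumerate all $k$-tuples $(X_1,\dots,X_k)$ of item subsets each of size at most $1/\eps^2$; there are $n^{O(k/\eps^2)}$ of them and one equals the tuple above. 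Fixing that correct guess, I place each $X_j$ in bin $j$ (feasible, since $X_j\subseteq R_j$) and build a residual GAP instance on the items $\R\setminus\bigcup_j X_j$ with capacities $c_j'$, declaring item $i$ admissible for bin $j$ only when $s_{ij}\le\eps c_j'$.

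\textbf{Filling and repairing.} Since the $R_j$ are disjoint and every item of $R_j\setminus(X_j\cup Y_j)$ is admissible for bin $j$, the part of $OPT$ outside $\bigcup_j(X_j\cup Y_j)$ is feasible for the residual instance, so its optimum has profit at least $\sum_j\big(p(R_j)-p(X_j)-\eps\,p(R_j)\big)=(1-\eps)\,p(OPT)-p(X)$ with $X:=\bigcup_jX_j$. Running Lemma~\ref{lem:GAPresaug} on the residual instance then yields, in time $O\big((\tfrac{1+\eps}{\eps})^k n^{k+1}\big)$, an assignment $A$ of profit $p(A)\ge(1-\eps)\,p(OPT)-p(X)$ that respects the augmented capacities $(1+\eps)c_j'$. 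To get back to capacity $c_j'$, in each bin $j$ I would delete items in nondecreasing order of density $p_{ij}/s_{ij}$ until the occupied size is at most $c_j'$; since each admissible item has size at most $\eps c_j'$, the deleted items have total size below $2\eps c_j'$ while the survivors keep total size above $(1-\eps)c_j'$, and comparing the two profit sums through the common density threshold gives deleted profit below $\tfrac{2\eps}{1-\eps}$ times the surviving profit. Summing over bins, the repaired assignment $A'$ obeys all original capacities and $p(A')\ge(1-2\eps)\,p(A)$, so the feasible solution $X\cup A'$ has
\[ p(X)+p(A') \ge p(X)+(1-2\eps)\big((1-\eps)\,p(OPT)-p(X)\big) = 2\eps\,p(X)+(1-3\eps+2\eps^2)\,p(OPT) \ge (1-3\eps)\,p(OPT). \]
All enumerated guesses produce feasible outputs, so returning the best one is legitimate; the running time is the number of guesses times the cost of Lemma~\ref{lem:GAPresaug}, which gives the stated bound.

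The step I expect to be the most delicate is the overflow repair: it is essential to order the removed items by \emph{density} rather than by profit or by size, and one must check carefully that the surviving size stays above $(1-\eps)c_j'$ so that the density comparison yields the $O(\eps)$ relative loss; the rest is bookkeeping on top of the three lemmas, plus the (routine) observation that guessing a wrong large-item tuple can only yield a feasible solution of no greater profit.
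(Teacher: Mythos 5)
Your proof is correct and follows the same skeleton as the paper's (guess the sets $X_j$ from Lemma~\ref{lem:GAP-bin-shifting-argument}, then invoke Lemma~\ref{lem:GAPresaug} on the residual instance), but it handles the $\eps$-slack at a genuinely different point. The paper never repairs the solver's output: it calls Lemma~\ref{lem:GAPresaug} with the capacities already shrunk to $(1-\eps)c'_j$, so that the $(1+\eps)$-augmented output is automatically feasible for $c'_j$ since $(1-\eps)(1+\eps)\le 1$; the price is an extra existential argument (partition the residual capacity into $1/\eps$ intervals of length $\eps c'_j$, note each small item of $OPT$ meets at most two of them, and drop the cheapest interval, losing only $2\eps\,p(R'_j)$) showing that a $(1-2\eps)$-profit portion of $OPT$ fits in the shrunk bins. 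You instead run the solver at the true residual capacities and then prune the overflow a posteriori by deleting lowest-density items; your bookkeeping (deleted size $<2\eps c'_j$, surviving size $>(1-\eps)c'_j$ when any deletion occurs, hence deleted profit at most $\tfrac{2\eps}{1-\eps}$ of surviving profit) is sound, and it does require exactly the admissibility restriction $s_{ij}\le\eps c'_j$ that you impose, which the paper does not need. Both routes give $(1-3\eps)p(OPT)$ and the same running time; yours trades the paper's shifting argument on $OPT$ for an algorithmic repair of the computed solution. The only cosmetic point is that guessed tuples $(X_1,\dots,X_k)$ that already violate some capacity should simply be discarded rather than asserted feasible, which does not affect the argument since the correct guess is feasible.
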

\begin{proof}
	Consider a bin $j$ that contains the set $R_j$ of items in the optimal solution OPT, and let $X_j$ and $Y_j$ the sets given by Lemma~\ref{lem:GAP-bin-shifting-argument}. Let $c'_j = c_j - \sum_{i \in X_j} s_{ij}$ be the residual capacity, so that each element in the set $R'_j := R_j \setminus (X_j \cup Y_j)$ has size in $j$ smaller than $\eps c'_j$. We divide the residual space into $1/\eps$ equally sized intervals $S_{j,1}, S_{j,2}, \dots, S_{j,{1/\eps}}$ of lengths $\eps c'_j$. Let $R'_{j,l}$ be the set of items intersecting the interval $S_{j,l}$.
	As each packed item can belong to at most two such intervals, the cheapest  set $R''$ among $\{ R'_{j,1 }, \dots, R'_{j,1/\eps} \}$ has profit at most $2 \eps p(R'_j)$. Thus we can remove this set $R''$ and reduce the bin size by a factor of $(1-\eps)$.
	
	Now consider the packing of the $k$ bins in the optimal packing $OPT$. Let $R_j$ be the set of items packed in bin $j$. 

	The algorithm first guesses all $X_j$'s, a constant number of items, in all $k$ bins. We assign them to corresponding bins, implying a $O(n^{k/\eps^2})$ factor in the running time. 
	Then for bin $j$ we are left with capacity $c'_j$.
	From the previous discussion, we know that there is packing of $R''_j \subseteq R_j \setminus X_j$ of profit $(1-2\eps)p(R_j \setminus X_j)$ in a bin with capacity $(1-\eps)c'_j$.
	Thus  we can use the algorithm for GAP with resource augmentation provided by Lemma~\ref{lem:GAPresaug} to pack the remaining items in $k$ bins where for bin $j$ we use the original capacity to be $(1-\eps)c'_j$ for $j \in [k]$; note that $(1 - \eps)(1 + \eps) \leq 1$, so the solution is feasible with the capacities $c'_j$.
	As Lemma~\ref{lem:GAPresaug} returns the optimal packing on this modified bin sizes, we obtain a total profit of at least $(1-2\eps)(1-\eps)p(OPT) \geq (1 - 3\eps)p(OPT)$. \sal{The running time is the same as in Lemma~\ref{lem:GAPresaug} multiplied by the $O(n^{k/\eps^2})$ factor for the initial item guessing and assignment.}
\end{proof}

\section{Container packings}\label{sec:container_packings}

In this section we define the main concept of our framework: a \emph{container}.

Many of the literature results on geometric packing problems follow (implicitly or explicitly) the following approach: since the number of possible packings is too big to be enumerated, a search algorithm is performed only on a restricted family of packings that have a special structure. Thus, the theoretical analysis aims to prove that there exists such a restricted packing that has a high profit.

We follow the same general framework, by defining what we call a \emph{container packing}.


By \emph{container} we mean a special kind of box to which we assign a set of rectangles that satisfy some constraints, as follows (see Figure~\ref{fig:boxescontainers}):

\begin{itemize}
	\item A \emph{horizontal container} is a box such that any horizontal line overlaps at most one rectangle packed in it.
	\item A \emph{vertical container} is a box such that any vertical line overlaps at most one rectangle packed in it.
	\item An $\eps$-granular area container is a box, say of size $w \times h$, such that all the rectangles that are packed inside have width at most $\eps w$ and height at most $\eps h$. We will simply talk about an \emph{area container} when the value of $\eps$ is clear from the context. 
\end{itemize}

\begin{figure}
	\resizebox{!}{5cm}{     
		\includegraphics{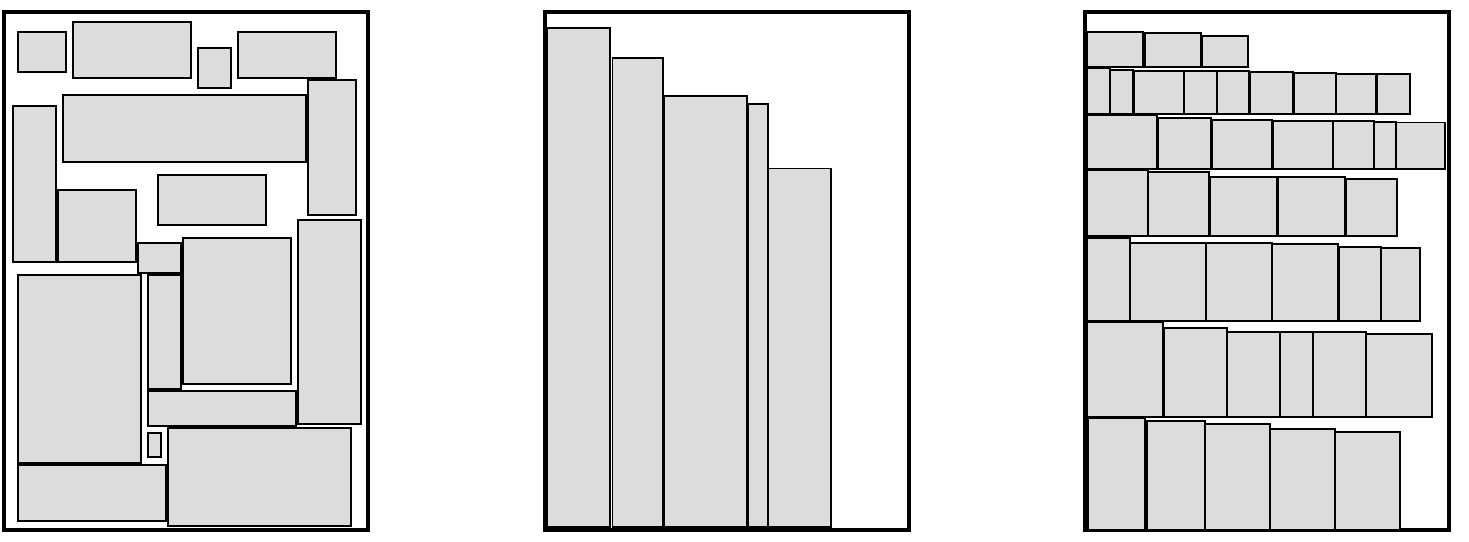}
	}
	\caption{From left to right: a \emph{box}, which is a rectangular region with arbitrary rectangles packed inside; a \emph{vertical container}, which is a box where rectangles are piled from left to right; an \emph{area container}, where width and height of the rectangles inside is much smaller than the width and height of the container, making it easy to pack them by NFDH.}
	\label{fig:boxescontainers}
\end{figure}

A packing such that \emph{all} the packed rectangles are contained in a container and all the area containers are $\eps$-granular is called and \emph{$\eps$-granular container packing}; again, we will simply call it a \emph{container packing} when the choice of $\eps$ is clear from the context.

Observe that for a horizontal or a vertical container, once a set of rectangles that can feasibly be packed is assigned, constructing a packing is trivial. Moreover, the next lemma shows that it is easy to pack almost all the rectangles assigned to an area container:

\begin{lemma}\label{lem:packAreaContainer}
	Suppose that a set $\mathcal{R}$ of rectangles is assigned to an $\eps$-granular area container $C$, and $a(\mathcal{R}) \leq a(C)$. Then it is possible to pack in $C$ a subset of $\mathcal{R}$ of profit at least $(1 - 3\eps)p(\mathcal{R})$.
\end{lemma}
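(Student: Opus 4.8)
The plan is to reduce the problem of packing inside an $\eps$-granular area container to a GAP instance (with a single bin) and apply the PTAS for GAP from Lemma~\ref{lem:GAP}, after discarding rectangles that are ``too big'' to fit the hypotheses of the NFDH lemma (Lemma~\ref{lem:nfdhPack}). The subtlety is that Lemma~\ref{lem:nfdhPack} requires each rectangle to have width at most $\eps w$ and height at most $\eps h$, which is exactly the $\eps$-granularity assumption, so no rectangle needs to be discarded on that account; what we do need is a way to select a high-profit \emph{sub}set whose total area is at most $(1-2\eps)a(C)$ (rather than merely $a(\mathcal R)\le a(C)$), so that NFDH packs all of the selected rectangles.

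First I would set up the single-bin GAP instance: the bin has capacity $c = a(C) = w\cdot h$, and each rectangle $R_i\in\mathcal R$ has size $s_i = w_i h_i = a(R_i)$ and profit $p_i$. Since $a(\mathcal R)\le a(C)$, the whole of $\mathcal R$ is a feasible solution to this GAP instance, so $p(OPT_{GAP}) = p(\mathcal R)$. Running the GAP PTAS of Lemma~\ref{lem:GAP} with parameter $\eps$ (one bin, $k=1$) yields in polynomial time a subset $\mathcal R'\subseteq\mathcal R$ with $\sum_{R_i\in\mathcal R'} a(R_i) \le a(C)$ and $p(\mathcal R') \ge (1-3\eps)p(\mathcal R)$. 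This is not quite enough, because NFDH needs the area bound $(1-2\eps)a(C)$, not $a(C)$. To fix this I would instead run GAP on a bin of capacity $(1-2\eps)a(C)$: I need to argue that a subset of $\mathcal R$ of total area at most $(1-2\eps)a(C)$ still retains a $(1-O(\eps))$ fraction of the profit. That follows by a standard ``cheapest slab'' / bin-shifting argument already encapsulated in Lemma~\ref{lem:GAP-bin-shifting-argument} and its use inside Lemma~\ref{lem:GAP}: since every rectangle has area at most $\eps^2 a(C)$ (being $\eps$-granular, $a(R_i)\le \eps w\cdot\eps h = \eps^2 a(C)$), partitioning the capacity into $1/\eps$ equal area-slabs and discarding the cheapest one removes at most $2\eps\, p(\mathcal R)$ of profit while freeing up an $\eps$-fraction of the area; so applying the GAP PTAS with capacity $(1-2\eps)a(C)$ still gives profit at least $(1-O(\eps))p(\mathcal R)$. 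Adjusting the constant (replacing $\eps$ by $\eps/c$ for a suitable constant $c$) brings this down to $(1-3\eps)p(\mathcal R)$ exactly.

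Once such a subset $\mathcal R'$ with $a(\mathcal R')\le (1-2\eps)a(C)$ and $p(\mathcal R')\ge(1-3\eps)p(\mathcal R)$ is in hand, I would invoke Lemma~\ref{lem:nfdhPack} with the container $C$ as the box: the $\eps$-granularity hypothesis gives $w_i\le\eps w$ and $h_i\le\eps h$ for every $R_i\in\mathcal R'$, and since $a(\mathcal R')\le(1-2\eps)w\cdot h$, Lemma~\ref{lem:nfdhPack} packs \emph{all} of $\mathcal R'$ into $C$ in polynomial time. This gives a feasible packing inside $C$ of profit $p(\mathcal R')\ge(1-3\eps)p(\mathcal R)$, as claimed.

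The main obstacle — really the only non-bookkeeping point — is the gap between the area bound $a(\mathcal R)\le a(C)$ in the hypothesis and the stronger bound $(1-2\eps)a(C)$ that NFDH demands: one must be careful that trimming the total area down to $(1-2\eps)a(C)$ does not cost more than an $O(\eps)$ fraction of the profit, and this is exactly where $\eps$-granularity (each rectangle has area at most $\eps^2 a(C)$) is used, together with the cheapest-slab argument. Everything else is a direct composition of Lemma~\ref{lem:GAP} (to select the profitable light subset) with Lemma~\ref{lem:nfdhPack} (to pack it), and the constants can be massaged so the final loss is the stated $(1-3\eps)$ factor.
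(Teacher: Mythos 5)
Your overall architecture --- trim $\mathcal{R}$ down to a subset of total area at most $(1-2\eps)a(C)$, using the fact that $\eps$-granularity forces $a(R_i)\le \eps^2 a(C)$ for every rectangle, and then pack everything that survives with NFDH via Lemma~\ref{lem:nfdhPack} --- is the right one, and the NFDH half matches the paper. The gap is in the trimming step and in the constant it can deliver. As written, your cheapest-slab argument frees only an $\eps$-fraction of the capacity, so it only certifies a near-optimal solution of area at most $(1-\eps)a(C)$, which does not justify running the GAP PTAS with capacity $(1-2\eps)a(C)$. If you repair this by removing a $2\eps$-fraction of the capacity (one fat slab or several thin ones), the standard accounting --- each rectangle meets at most two slabs, so the profit charged to the removed slabs is at most twice the removed capacity fraction times $p(\mathcal{R})$ --- costs about $4\eps\,p(\mathcal{R})$, and the PTAS of Lemma~\ref{lem:GAP} adds its own $3\eps'$ loss on top. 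The ``massage the constants by replacing $\eps$ with $\eps/c$'' escape hatch is not available here: $\eps$ is the granularity parameter of the container, fixed by the statement, and this same $\eps$ pins down the NFDH threshold $(1-2\eps)a(C)$ and appears in the target bound $(1-3\eps)p(\mathcal{R})$. So your route establishes $(1-O(\eps))p(\mathcal{R})$, which would suffice for the paper's downstream uses, but it does not prove the lemma with the stated constant.

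The paper's proof closes exactly this gap with a two-line selection instead of slabs: sort $\mathcal{R}$ by non-increasing profit/area ratio and let $\mathcal{R}'$ be the maximal prefix of area at most $(1-2\eps)a(C)$. Since every rectangle has area at most $\eps^2 a(C)$, either all of $\mathcal{R}$ is taken or $a(\mathcal{R}')\ge(1-2\eps-\eps^2)a(C)\ge(1-3\eps)a(\mathcal{R})$, and the density ordering converts this area fraction directly into $p(\mathcal{R}')\ge(1-3\eps)p(\mathcal{R})$; Lemma~\ref{lem:nfdhPack} then packs all of $\mathcal{R}'$. This also shows the GAP machinery is unnecessary overhead: with a single bin and sizes equal to areas the selection problem is plain knapsack, and the greedy prefix already does the job without incurring any extra PTAS loss.
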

\begin{proof}
	If $a(\mathcal{R}) \leq (1 - 2\eps)a(C)$, then NFDH can pack all the rectangles by Lemma~\ref{lem:nfdhPack}. Suppose that $a(\mathcal{R}) > (1 - 2\eps)a(C)$. Consider the elements of $\mathcal{R}$ by non-increasing order of profit over area ratio. Let $\mathcal{R'}$ be the maximal subset of rectangles of $\mathcal{R}$ in the specified order such that $a(\mathcal{R}') \leq (1 - 2\eps)a(C)$. Since $a(R) \leq \eps^2 a(C)$ for each $R \in \mathcal{R}$, then $a(R')\geq (1 - 2\eps - \eps^2)a(C) \geq (1 - 3\eps)a(C)$, and then $a(R') \geq (1 - 3\eps)a(R)$, which implies $p(\mathcal{R}') \geq (1 - 3\eps)p(\mathcal{R})$ by the choice of $R'$. Since $a(\mathcal{R}') \leq (1 - 2\eps)a(C)$, then NFDH can pack all of $\mathcal{R}'$ inside $C$ by Lemma~\ref{lem:nfdhPack}.
\end{proof}

In the remaining part of this section, we show that container packings are easy to approximate if the number of containers is bounded by some fixed constant.

\subsection{Rounding containers}\label{sec:rounding_containers}
In this subsection we show that it is possible to round down the size of a horizontal, vertical or area container so that the resulting sizes can be chosen from a polynomially sized set, while incurring in a \sal{negligible} loss of profit.

We say that a container $C'$ is smaller than a container $C$ if $w(C') \leq w(C)$ and $h(C') \leq h(C)$. Given a container $C$ and a positive $\eps < 1$, we say that a rectangle $R_j$ is $\eps$-small for $C$ if $w_j \leq \eps w(C)$ and $h_j \leq \eps h(C)$.

\begin{sloppypar}
For a set $\R$ of rectangles, we define $WIDTHS(\R) = \{w_j \, | \, R_j \in \R\} $ and $HEIGHTS(R) = \{h_j \, | \, R_j \in \R\}$.
\end{sloppypar}

Given a finite set $P$ of real numbers and a fixed natural number $k$, we define the set $P^{(k)} = \{(p_1 + p_2 + \dots + p_l) + i p_{l+1} \, | \, p_j \in P \text{ } \forall \, j, l\le k, 0 \leq i \leq n, i\mbox{ integer}\}$; note that if $|P| = O(n)$, then $|P^{(k)}| = O(n^{k+2})$. Moreover,  if $P \subseteq Q$, then obviously $P^{(k)} \subseteq Q^{(k)}$, and if $k' \leq k''$, then $P^{(k')} \subseteq P^{(k'')}$.

\begin{lemma}\label{lem:round_knapsack_container}
	Let $\eps > 0$, and let $\R$ be a set of rectangles packed in a horizontal or vertical container $C$. Then, for any $k \geq 1/\eps$, there is a set $\R' \subseteq \R$ with profit $p(\R') \geq (1 - \eps)p(\R)$ that can be packed in a container $C'$ smaller than $C$ such that $w(C') \in WIDTHS(\R)^{(k)}$ and $h(C') \in HEIGHTS(R)^{(k)}$.
\end{lemma}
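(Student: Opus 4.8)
The plan is to reduce everything to the case where $C$ is a \emph{horizontal} container, the vertical case being obtained verbatim by exchanging the roles of width and height. For a horizontal container the $y$-projections of the packed rectangles are pairwise disjoint, so $\sum_{R_j\in\R}h_j\le h(C)$; conversely, any subset $\R'\subseteq\R$ with $\sum_{R_j\in\R'}h_j\le h'$ and $\max_{R_j\in\R'}w_j\le w'$ can be packed by stacking into a horizontal container of size $w'\times h'$. Hence, once $\R'$ is fixed, I would simply set $w(C'):=\max_{R_j\in\R'}w_j$: this is the width of some rectangle of $\R$, so it lies in $WIDTHS(\R)\subseteq WIDTHS(\R)^{(k)}$, and it is $\le w(C)$ since every rectangle packed in $C$ has width $\le w(C)$. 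The whole problem then reduces to choosing $\R'$ with $p(\R')\ge(1-\eps)p(\R)$ together with a value $h(C')\in HEIGHTS(\R)^{(k)}$ satisfying $\sum_{R_j\in\R'}h_j\le h(C')\le h(C)$.

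For that I would first sort $\R$ by non-increasing height (assume $\R\neq\emptyset$, the empty case being vacuous). If $|\R|\le k$, take $\R'=\R$ and $h(C'):=\sum_{R_j\in\R}h_j$, which is a sum of at most $k$ heights of $\R$ (multiplier $i=0$), hence in $HEIGHTS(\R)^{(k)}$, and is $\le h(C)$; nothing is discarded. Otherwise let $T$ be the set of the $k$ tallest rectangles, let $\bar h$ be the smallest height occurring in $T$ (so every rectangle of $\R\setminus T$ has height $\le\bar h$), and put $\sigma:=\sum_{R_j\in\R\setminus T}h_j$. Since $|\R\setminus T|\le n$ we have $\sigma\le n\bar h$, so $i^{*}:=\lceil\sigma/\bar h\rceil$ is an integer with $0\le i^{*}\le n$ and $\sigma\le i^{*}\bar h\le\sigma+\bar h$. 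Finally, let $R^{*}$ be a rectangle of minimum profit in $T$, set $\R':=\R\setminus\{R^{*}\}$, and define
\[
    h(C') \;:=\; \sum_{R_j\in T\setminus\{R^{*}\}} h_j \;+\; i^{*}\,\bar h .
\]

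The verification then breaks into four routine checks. \emph{Profit:} $p(R^{*})\le p(T)/k\le\eps\,p(T)\le\eps\,p(\R)$ since $|T|=k\ge 1/\eps$, so $p(\R')\ge(1-\eps)p(\R)$. \emph{Membership:} $h(C')$ is a sum of $|T\setminus\{R^{*}\}|=k-1\le k$ heights of $\R$ plus $i^{*}\le n$ copies of $\bar h\in HEIGHTS(\R)$, hence $h(C')\in HEIGHTS(\R)^{(k)}$. \emph{Lower bound:} $i^{*}\bar h\ge\sigma=\sum_{R_j\in\R'\setminus(T\setminus\{R^{*}\})}h_j$, so $h(C')\ge\sum_{R_j\in\R'}h_j$. \emph{Upper bound:} using $i^{*}\bar h\le\sigma+\bar h\le\sigma+h(R^{*})$,
\[
    h(C') \;\le\; \sum_{R_j\in T\setminus\{R^{*}\}} h_j + h(R^{*}) + \sigma \;=\; \sum_{R_j\in T} h_j + \sigma \;=\; \sum_{R_j\in\R} h_j \;\le\; h(C).
\]
Together with the first paragraph this shows $C'$ is a horizontal container smaller than $C$ with both sizes in the required sets, and $\R'$ packs into it by stacking.

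I do not expect any genuine obstacle beyond this bookkeeping; the one delicate point — and the only place the hypothesis $k\ge1/\eps$ is used — is that rounding the combined height of the ``short'' rectangles up to a multiple of $\bar h$ overshoots $\sum_{R_j\in\R}h_j$ by as much as $\bar h$, which on its own could make $C'$ taller than $C$. Discarding a single rectangle from the $k$ tallest simultaneously frees at least $\bar h$ of vertical room (restoring $h(C')\le h(C)$) and costs at most an $\eps$-fraction of the profit, which is exactly the trade-off that makes the argument go through.
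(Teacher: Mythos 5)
Your proof is correct and follows essentially the same route as the paper's: reduce the width to the maximum width of the packed rectangles, and for the height discard the least profitable of the tallest rectangles and round the total height of the remaining short ones up to a multiple of a tall rectangle's height, using the freed space to absorb the rounding overshoot. The only cosmetic differences are that you take the $k$ tallest (the paper takes the $1/\eps$ tallest) and round to multiples of the minimum height in that set rather than of the removed rectangle's height, and your write-up is in fact slightly more careful on the final upper-bound bookkeeping than the paper's.
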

\begin{proof}
	Without loss of generality, we prove the thesis for an horizontal container $C$; the proof for vertical containers is symmetric. Clearly, the width of $C$ can be reduced to $w_{max}(\R)$, and $w_{max}(\R) \in WIDTHS(\R) \subseteq {WIDTHS(\R)}^{(k)}$.
	
	If $|\R| \leq 1/\eps$, then $\sum_{R_i \in \R} h_i \in HEIGHTS(\R)^{(k)}$ and there is no need to round the height of $C$ down. Otherwise, let $\R_{TALL}$ be the set of the $1/\eps$ rectangles in $\R$ with largest height (breaking ties arbitrarily), let $R_j$ be the least profitable of them, and let $\R' = \R \setminus \{R_j\}$. Clearly, $p(\R') \geq (1 - \eps)p(\R)$.
	Since each element of $\R' \setminus \R_{TALL}$ has height at most $h_j$, it follows that $h(\R \setminus \R_{TALL}) \leq (n - 1/\eps) h_j$. Thus, letting $i = \left\lceil h(\R' \setminus \R_{TALL}) / h_j \right\rceil \leq n$, all the rectangles in $\R'$ fit in a container $C'$ of width $w_{max}(\R)$ and height $h(C') := h(\R_{TALL}) + i h_j \in {HEIGHTS(R)}^{(k)}$. Since $h(\R_{TALL}) + i h_j \leq h(\R_{TALL}) + h(\R' \setminus \R_{TALL}) + h_j = h(\R) \leq h(C)$, this proves the result.
\end{proof}

\begin{lemma}\label{lem:round_area_container}
	Let $\eps > 0$, and let $\R$ be a set of rectangles that are assigned to an area container $C$. Then there exists a subset $\R' \subseteq \R$ with profit $p(\R') \geq (1 - 3\eps)p(\R)$ and a container $C'$ smaller than $C$ such that: $a(\R') \leq a(C)$, $w(C') \in {WIDTHS(\R)}^{(0)}$, $h(C') \in {HEIGHTS(\R)}^{(0)}$, and each $R_j \in \R'$ is $\dfrac{\eps}{1-\eps}$-small for $C'$. 
\end{lemma}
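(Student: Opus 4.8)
The plan is to mimic the structure of the proof of Lemma~\ref{lem:round_knapsack_container}, but now exploit the area-based nature of an area container rather than the one-dimensional stacking structure. First I would observe that in an $\eps$-granular area container $C$ of size $w(C)\times h(C)$, the width of $C$ can be reduced to $w_{max}(\R)$ and the height to $h_{max}(\R)$, both of which lie trivially in ${WIDTHS(\R)}^{(0)}$ and ${HEIGHTS(\R)}^{(0)}$ respectively (since the $(0)$-operation just produces the set $\{i\cdot p : p\in P,\ 0\le i\le n\}$, which contains $P$ itself via $i=1$). Call this reduced container $C''$, so $w(C'')=w_{max}(\R)$ and $h(C'')=h_{max}(\R)$. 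Every $R_j\in\R$ automatically satisfies $w_j\le w(C'')$ and $h_j\le h(C'')$, but the granularity might be violated: $R_j$ is only guaranteed $\eps$-small for $C$, not for $C''$.

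The main obstacle is therefore \emph{restoring the granularity condition after shrinking}, while controlling both the area used and the profit lost. The key point is that shrinking cannot hurt granularity for the dimension we did not touch beyond a controlled factor: if $w_j\le \eps w(C)$ then certainly $w_j\le \eps w(C) $, but we want $w_j\le \frac{\eps}{1-\eps} w(C'')$. This does \emph{not} follow in general, so I would not shrink all the way to $w_{max}$ and $h_{max}$ naively. Instead, the natural move is a two-step argument: (1) discard the few ``large'' rectangles that would obstruct granularity, and (2) shrink only to the extent that the surviving rectangles remain $\frac{\eps}{1-\eps}$-small. Concretely: among $\R$, at most $1/\eps$ rectangles have width exceeding $\eps\, w_{max}(\R)$ — no wait, that bound is on the number with width exceeding $(1-\eps)w_{max}$; more carefully, I would sort by width and by height and remove a bounded-profit set. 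A cleaner route: let $w^\star$ be the width of $C$ rounded down suitably. Since each $R_j$ has $w_j\le\eps w(C)$, if we shrink the container width to any $w'\ge (1-\eps)w(C)$ then $w_j\le \eps w(C)\le \frac{\eps}{1-\eps}w'$, so granularity in the width direction is preserved as long as we shrink by a factor no smaller than $1-\eps$; symmetrically for height. So it suffices to find $w'\in{WIDTHS(\R)}^{(0)}$ and $h'\in{HEIGHTS(\R)}^{(0)}$ with $(1-\eps)w(C)\le w'\le w(C)$ and $(1-\eps)h(C)\le h'\le h(C)$ — but such values need not exist in those sets in general, which is exactly where the profit sacrifice and the set $\R'$ come in.

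To resolve this, I would combine the shrink with the area-selection idea already used in Lemma~\ref{lem:packAreaContainer}: reduce $w(C)$ to $w_{max}(\R)$ and $h(C)$ to $h_{max}(\R)$ to get $C'$ with the required membership in the $(0)$-sets; this may make $a(C')<a(\R)$, so order $\R$ by non-increasing profit-over-area and take the maximal prefix $\R'$ with $a(\R')\le a(C')$. Because each rectangle has area at most $\eps^2 a(C)$ and hence (using $w(C')\ge w_{max}\ge w_j$ etc., and that the total area of $\R$ packed in $C$ is at most $a(C)$) area at most $\eps^2 a(C')$ relative to the new container — here I need the bound $a(\R)\le a(C)$ which is part of the hypothesis of ``assigned to an area container'' together with feasibility — the standard prefix argument gives $a(\R')\ge(1-\eps^2)a(C')$ when the prefix is strict, hence $p(\R')\ge(1-\eps^2)p(\R)\ge(1-3\eps)p(\R)$; if the prefix is everything, then $a(\R')=a(\R)\le a(C')$ and no profit is lost. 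For the granularity of $\R'$ with respect to $C'$: every $R_j\in\R'$ has $w_j\le\eps w(C)$ and $w(C')=w_{max}(\R)\le w(C)$, but we need $w_j\le\frac{\eps}{1-\eps}w(C')$. This fails only if $w_{max}(\R)<(1-\eps)w(C)$, i.e. if the container was much wider than its widest rectangle — but in that case we could have picked $w(C')$ a bit larger; the honest fix is to first, before anything else, replace $w(C)$ by $\max\{w_{max}(\R),\, \text{something in }{WIDTHS(\R)}^{(0)}\cap[(1-\eps)w(C),w(C)]\}$, and note $w_{max}(\R)$ itself already satisfies $w_{max}(\R)\ge \eps^{-1}w_j\cdot\eps = $ — since $w_j\le\eps w(C)$ for \emph{all} $j$ including the widest, we get $w_{max}(\R)\le\eps w(C)$, so actually $w(C')=w_{max}(\R)\le\eps w(C)$ and then any $w_j\le w_{max}(\R)=w(C')\le\frac{\eps}{1-\eps}w(C')$ trivially — wait, we need $w_j\le\frac{\eps}{1-\eps}w(C')$ and we have $w_j\le w(C')$, which suffices iff $\frac{\eps}{1-\eps}\ge 1$, i.e. $\eps\ge 1/2$, not true. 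Hmm — so the correct granularity claim must instead come from: for $R_j\in\R'\subseteq\R$, $w_j\le\eps\,w(C)$ and we choose $C'$ with $w(C')=w_{max}(\R)$; since $w_{max}(\R)\le\eps\,w(C)$ as noted, we in fact have $w(C)\ge w_{max}(\R)/\eps$, hence $w_j\le\eps\,w(C)$ gives no direct help. The resolution is that granularity of $R_j$ for $C'$ should be proven as $w_j\le w_{max}(\R)=w(C')$ — but the lemma asks for $\frac{\eps}{1-\eps}$-small, which is \emph{weaker} than $1$-small only if $\frac{\eps}{1-\eps}<1$. So in fact being $1$-small (i.e. just fitting) is \emph{stronger}, and the lemma's conclusion "$\frac{\eps}{1-\eps}$-small" is the thing we must actually ensure, meaning each $R_j$ must have width at most $\frac{\eps}{1-\eps}w(C')$, a genuine smallness constraint. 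I expect the author handles this by shrinking $C$ only to $\frac{w(C')}{} $ where $w(C')$ is chosen as small as possible \emph{subject to} still fitting all of $\R'$ and to membership in the $(0)$-set — effectively $w(C') = w_{max}(\R)$ after having discarded from $\R$ (into a profit-$\le 3\eps p(\R)$ loss set, absorbed into the area-prefix argument) any rectangle whose width exceeds $\frac{\eps}{1-\eps}w_{max}(\R\setminus\cdots)$; since each such removed rectangle has area $>$ a fixed fraction, only finitely many can be removed before the remainder's width-max drops enough, exactly as in the iterative peeling of Lemma~\ref{lem:GAP-bin-shifting-argument}. I would carry out this peeling once for widths and once for heights, then apply the profit-over-area prefix selection to fix up the area budget, and finally read off that all three membership conditions and the $\frac{\eps}{1-\eps}$-smallness hold, with total profit loss at most $3\eps p(\R)$.
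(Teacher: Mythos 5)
Your proposal never arrives at the idea the paper's proof actually rests on, and the fallback you sketch in its place does not work. The paper defines $C'$ by rounding each side of $C$ \emph{down to an integer multiple} of $w_{max}(\R)$ resp.\ $h_{max}(\R)$: after the harmless normalization $w(C)\le n\,w_{max}(\R)$ and $h(C)\le n\,h_{max}(\R)$, the values $w(C')=w_{max}(\R)\left\lfloor w(C)/w_{max}(\R)\right\rfloor$ and $h(C')=h_{max}(\R)\left\lfloor h(C)/h_{max}(\R)\right\rfloor$ lie in ${WIDTHS(\R)}^{(0)}$ and ${HEIGHTS(\R)}^{(0)}$ (this is exactly why $P^{(k)}$ is defined to contain multiples $i\,p$ with $i\le n$, not just $P$ itself), and since every rectangle is $\eps$-small for $C$ we have $w_{max}(\R)\le\eps\,w(C)$, hence $w(C')\ge(1-\eps)w(C)$ and likewise for the height. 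This simultaneously gives $a(C')\ge(1-2\eps)a(C)$, so the greedy profit-over-area selection up to area $(1-2\eps)a(C)$ loses at most a $3\eps$ fraction of the profit, and it gives the granularity for free: $w_j\le\eps\,w(C)\le\frac{\eps}{1-\eps}w(C')$. You do state the right target --- a value of ${WIDTHS(\R)}^{(0)}$ in the interval $[(1-\eps)w(C),\,w(C)]$ --- but then assert that ``such values need not exist in those sets in general,'' which is exactly backwards; the floor-multiple is always such a value.

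Because you reject that route, you fall back on shrinking to $w_{max}(\R)\times h_{max}(\R)$ and ``peeling'' away rectangles whose width exceeds $\frac{\eps}{1-\eps}$ times the remainder's maximum width. This cannot work: if all rectangles of $\R$ have the same width $w\le\eps\,w(C)$ (a perfectly legal assignment, e.g.\ many flat rectangles), then every rectangle has width equal to the maximum width of any nonempty remainder, so no nonempty subset is ever $\frac{\eps}{1-\eps}$-small for a container whose width is taken from $WIDTHS(\R)$ itself; your peeling would have to discard everything, while the lemma demands profit $(1-3\eps)p(\R)$. Moreover, your bound on the number of peeling rounds (``each removed rectangle has area greater than a fixed fraction'') is unsupported --- a wide rectangle can have arbitrarily small height --- and the profit-based iteration bound of Lemma~\ref{lem:GAP-bin-shifting-argument} does not transfer as stated. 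The remaining ingredients you describe (density-ordered prefix to meet the area budget, losing $O(\eps)$ profit) do match the paper, but without the multiple-of-$w_{max}$ rounding the proof does not go through.
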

\begin{proof}
	Without loss of generality, we can assume that $w(C) \leq n w_{max}(\R)$ and $h(C) \leq n h_{max}(\R)$: if not, we can first shrink $C$ so that these conditions are satisfied, and all the rectangles still fit in $C$.
	
	Define a container $C'$ that has width $w(C') = w_{max}(\R) \left\lfloor w(C)/w_{max}(\R) \right\rfloor$ and height $h(C') = h_{max}(\R) \left\lfloor h(C)/h_{max}(\R) \right\rfloor$, that is, $C'$ is obtained by shrinking $C$ to the closest integer multiples of $w_{max}(\R)$ and $h_{max}(\R)$. Observe that $w(C') \in {WIDTHS(\R)}^{(0)}$ and $h(C') \in {HEIGHTS(\R)}^{(0)}$. Clearly, $w(C') \geq w(C) - w_{max}(\R) \geq w(C) - \eps w(C) = (1 - \eps) w(C)$, and similarly $h(C') \geq (1 - \eps) h(C')$. Hence $a(C') \geq (1 - \eps)^2 a(C) \geq (1 - 2\eps) a(C)$.
	
	We now select a set $\R' \subseteq \R$ by greedily choosing elements from $\R$ in non-increasing order of profit/area ratio, adding as many elements as possible without exceeding a total area of $(1 - 2\eps) a(C)$. Since each element of $\R$ has area at most $\eps^2 a(C)$, then either all elements are selected (and then $p(\R') = p(\R)$), or the total area of the selected elements is at least $(1 - 2\eps - \eps^2)a(C) \geq (1 - 3\eps)a(C)$. By the greedy choice, we have that $p(\R') \geq (1 - 3\eps) p(\R)$.
	
	Since each rectangle in $\R$ is $\frac{\eps}{1 - \eps}$-small for $C'$, this proves the thesis.
\end{proof}


\begin{remark}
	Note that in the above, the size of the container is rounded to a family of sizes that depends on the rectangles inside; of course, they are not known in advance in an algorithm that enumerates over all the container packings. On the other hand, if the instance is a set $\mathcal{\mathcal{I}}$ of $n$ rectangles, then for any \sal{fixed} natural number $k$ we have that $WIDTHS(\R)^{(k)} \subseteq WIDTHS(\mathcal{I})^{(k)}$ and $HEIGHTS(\R)^{(k)} \subseteq WIDTHS(\mathcal{\mathcal{I}})^{(k)}$ for any $\R \subseteq \mathcal{I}$; clearly, the resulting set of possible widths and heights has a polynomial size and can be computed from the input.
	
	Similarly, when finding container packings for the case with rotations, one can compute the set $SIZES(\mathcal{I}) := WIDTHS(\mathcal{I}) \cup HEIGHTS(\mathcal{I})$, and consider containers of width and height in $SIZES(\mathcal{I})^{(k)}$ for a sufficiently high constant $k$. 
\end{remark}

\subsection{Packing rectangles in containers}

In this section we prove the main result of this chapter: namely, that there is a PTAS for \tdk for packings into a constant number of containers.

\begin{theorem}\label{thm:container_packing_ptas}
	Let $\eps > 0$, and let $OPT_c$ be the optimal $\eps$-granular container packing for a \tdk instance into some fixed number $K = O_\eps(1)$ of containers. Then there exists a polynomial time algorithm that outputs a packing $APX_c$ such that $p(APX_c) \geq (1 - O(\eps)) p(OPT_c)$. The algorithm works in both the cases with or without $90^\circ$ rotations.
\end{theorem}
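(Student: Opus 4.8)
The strategy is to reduce the problem of packing rectangles into a constant number of containers to the Maximum Generalized Assignment Problem (GAP) with a constant number of bins, for which Lemma~\ref{lem:GAP} already gives a $(1-3\eps)$-approximation in polynomial time. The first step is to enumerate the \emph{combinatorial structure} of $OPT_c$: since $K = O_\eps(1)$, we guess the sizes of all $K$ containers. By Lemmas~\ref{lem:round_knapsack_container} and~\ref{lem:round_area_container}, we may assume the width and height of each container lie in the polynomial-size sets $WIDTHS(\mathcal{I})^{(k)}$ and $HEIGHTS(\mathcal{I})^{(k)}$ (or $SIZES(\mathcal{I})^{(k)}$ in the case with rotations) for a suitable constant $k = \lceil 1/\eps \rceil$, at the cost of a $(1-O(\eps))$ factor. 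We also guess the arrangement of the $K$ containers inside the $N \times N$ knapsack, i.e.\ which rectangular region of the knapsack each container occupies; since the container corners can be taken from a polynomial grid of coordinates generated by prefix sums of the guessed sizes, there are only polynomially many such arrangements, and we simply discard any guess whose containers overlap or leave the knapsack. After this enumeration we have fixed, for each of a polynomial number of ``templates'', $K$ disjoint containers of known type (horizontal, vertical, or $\eps$-granular area) and known size.

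The second step, for a fixed template, is to decide which rectangle goes into which container. For a horizontal container $C$ of size $w\times h$, a set $\R$ of rectangles fits iff $w_j \le w$ for all $R_j\in\R$ and $\sum_{R_j\in\R} h_j \le h$; for a vertical container the roles of width and height are swapped. For an $\eps$-granular area container $C$, by Lemma~\ref{lem:packAreaContainer} it suffices that every assigned rectangle be $\eps$-small for $C$ and that the total area be at most $a(C)$, and then we recover a $(1-3\eps)$ fraction of the assigned profit when actually packing. So I set up a GAP instance: the items are the $n$ rectangles, the bins are the $K$ containers, and for rectangle $R_i$ and container $C_j$ I define $p_{ij} := p_i$ and a size $s_{ij}$ together with a capacity $c_j$ chosen so that feasibility of the GAP bin constraint coincides with feasibility of the container. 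Concretely: if $C_j$ is horizontal, $c_j := h(C_j)$ and $s_{ij} := h_i$ if $w_i \le w(C_j)$, else $s_{ij} := \infty$ (equivalently, forbid the assignment); symmetrically for vertical; if $C_j$ is an $\eps$-granular area container, $c_j := a(C_j)$ and $s_{ij} := w_i h_i$ if $R_i$ is $\eps$-small for $C_j$, else $s_{ij} := \infty$. Running the algorithm of Lemma~\ref{lem:GAP} on this instance yields an assignment of profit at least $(1-3\eps)$ times the optimal GAP assignment, which in particular is at least $(1-3\eps)\,p(OPT_c)$ once we restrict to the correct template, since the container assignment underlying $OPT_c$ is a feasible GAP solution of profit $p(OPT_c)$.

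The third step is to turn the returned GAP assignment into an actual packing. For horizontal and vertical containers this is immediate: stack the assigned rectangles. For area containers we invoke Lemma~\ref{lem:packAreaContainer} to pack a subset of profit at least $(1-3\eps)$ times the assigned profit; here there is a minor subtlety — Lemma~\ref{lem:round_area_container} produced containers in which rectangles are $\tfrac{\eps}{1-\eps}$-small rather than $\eps$-small, so I should either run the earlier lemmas with a slightly smaller parameter $\eps' = \Theta(\eps)$ so that the final granularity is $\le \eps$, or simply track this constant-factor change of parameter throughout; either way it only affects the hidden constant in the $O(\eps)$ loss. Combining the three steps, the output packing $APX_c$ satisfies $p(APX_c) \ge (1-3\eps)(1-3\eps)(1-O(\eps))\,p(OPT_c) = (1-O(\eps))\,p(OPT_c)$, and the total running time is the polynomial number of templates times the running time of Lemma~\ref{lem:GAP} with $k=K=O_\eps(1)$ bins, which is polynomial in $n$. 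Nothing in the argument treats widths and heights asymmetrically or uses that rectangles cannot be rotated, so in the case with rotations one additionally guesses, for each rectangle assigned to each container, whether it is rotated — but it is cleaner to fold this into the GAP instance by allowing two ``copies'' of each rectangle (original and rotated) with an extra constraint that at most one copy is chosen, which the DP of Lemma~\ref{lem:GAP-PPT} handles with no asymptotic overhead; alternatively one simply works with $SIZES(\mathcal{I})^{(k)}$ as noted in the remark.

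The main obstacle I expect is not conceptual but bookkeeping: verifying that the enumeration of container \emph{arrangements} (not just sizes) is genuinely polynomial — one must argue that it suffices to consider container positions whose coordinates come from a polynomially-bounded set, e.g.\ by a left-justification / bottom-justification argument showing $OPT_c$ can be assumed to have each container pushed against already-placed containers or the knapsack boundary — and that the assorted $(1-O(\eps))$ losses from rounding, from GAP, and from area-container packing compose correctly without the granularity parameter degrading across the steps.
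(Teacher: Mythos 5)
Your proposal is correct and follows essentially the same route as the paper: guess the rounded container sizes and their arrangement (Lemmas~\ref{lem:round_knapsack_container} and \ref{lem:round_area_container}), reduce the assignment of rectangles to containers to a GAP instance with exactly the capacities and sizes the paper uses, solve it via Lemma~\ref{lem:GAP}, and recover the packing in area containers via Lemma~\ref{lem:packAreaContainer}, composing the $(1-O(\eps))$ losses. The only cosmetic difference is your handling of rotations (two copies per item, or the $SIZES(\mathcal{I})^{(k)}$ remark) versus the paper's equivalent trick of updating $s_{ij}$ to the best feasible orientation per container; your noted concerns about enumerating arrangements and the $\frac{\eps}{1-\eps}$-granularity are minor bookkeeping points that do not affect correctness.
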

\begin{proof}
	Let $OPT'_c$ be the rounded container packing obtained from $OPT_c$ after rounding each container as explained in Lemmas~\ref{lem:round_knapsack_container} and \ref{lem:round_area_container}; clearly, $p(OPT'_c) \geq (1 - O(\eps))p(OPT_c)$. Moreover, the sizes of all the containers in $OPT'_c$ and a feasible packing for them can be guessed in polynomial time.

	Consider first the case without rotations. We construct the following instance of GAP (see Section \ref{sec:GAP} for the notation), where we define a bin for each container of $OPT'_c$.

	For each horizontal (resp., vertical) container $C_j$ of size $a\times b$, we define one bin $j$ with capacity $c_j$ equal to $b$ (resp., $a$). For each area container $C_j$ of size $a\times b$, we define one bin $j$ with capacity $c_j$ equal to $a\cdot b$. For each rectangle $R_i$ we define one element $i$, with profit $p_i$. We next describe a size $s_{i,j}$ for every element-bin pair $(i,j)$. If bin $j$ corresponds to a horizontal (resp., vertical) container $C_j$ of capacity $c_j$, then $s_{i,j}=h_i$ (resp., $s_{i,j}=w_i$) if $w_i\leq a$ (resp., $h_i\leq b$) and $s_{i,j}=+\infty$ otherwise. Instead, if $j$ corresponds to an area container of size $a\times b$, then we set $s_{i,j}=h_i\cdot w_i$ if $w_i\leq \eps a$ and $h_i\leq \eps b$, and $s_{i,j}=+\infty$ otherwise. 
	
	By using the algorithm of Lemma~\ref{lem:GAP}, we can compute a ($1-3\eps$)-approximate solution for the GAP instance. This immediately induces a feasible packing for horizontal and vertical containers. For each area container, we pack the rectangles by using Lemma~\ref{lem:packAreaContainer}, where we lose another $3\eps$-fraction. Overall, we obtain a solution with profit at least $(1 - 3\eps)^2 p(OPT'_c) \geq (1 - O(\eps)) p(OPT_c)$.
	
	In the case with rotations we use the same approach, but defining the GAP instance in a slightly different way. For a horizontal containers $C_j$ of size $a\times b$, we consider the same $s_{i,j}$ as before and update it to $w_i$ if $h_i\leq a$, and $w_i<s_{i,j}$. In the latter case, if element $i$ is packed into bin $j$, then the rectangle $R_i$ is packed rotated inside the container $C_j$. For vertical containers we perform a symmetric assignment.
	
	For an area container $C_j$ of size $a\times b$, if according to the above assignment one has $s_{i,j}=+\infty$, then we update $s_{i,j}$ to $w_i \cdot h_i$ if $w_i\leq \eps b$ and $h_i\leq \eps a$. In the latter case, if element $i$ is packed into bin $j$, then item $i$ is packed rotated inside $C_j$. 
\end{proof} 

If we are allowed pseudo-polynomial time (or if all widths and heights are polynomially bounded), we can obtain the following result, that is useful when we want to pack \emph{all} the given rectangles, but we are allowed to use a slightly larger target region.

\begin{theorem}\label{thm:container_packing_ptas_ppt}
	Let $\eps > 0$, and let $I$ be a set of rectangles that admits a $\eps$-granular container packing with $K = O_\eps(1)$ containers into a $a \times b$ knapsack. Then there exists a PPT algorithm that packs all the rectangles in $I$ into a $(1 + 2\eps)a \times b$ (resp. $a \times (1 + 2\eps)b$) knapsack. The algorithm works in both the cases with or without $90^\circ$ rotations.
\end{theorem}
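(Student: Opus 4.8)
The plan is to (i) guess the containers of the promised packing, (ii) assign \emph{all} the rectangles of $I$ to them by an exact pseudo-polynomial dynamic program, and (iii) realize that assignment inside the enlarged knapsack, spending the extra $2\eps a$ of width only to absorb the imperfection of NFDH inside the area containers. Since we are in pseudo-polynomial time (equivalently, $a$, $b$ and all widths and heights are polynomially bounded), I would first observe that the $K$ containers may be assumed to have integer sizes and integer positions inside $[0,a]\times[0,b]$, so there are only $(ab)^{O(K)}=(ab)^{O_\eps(1)}$ candidate configurations; the algorithm enumerates all of them, discards the invalid ones (overlapping, or not inside the knapsack), runs the procedure below on each surviving configuration, and outputs the first packing produced. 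Since the promised configuration survives, it then suffices to make that procedure succeed whenever the rectangles of $I$ \emph{can} be feasibly assigned to the containers of a fixed configuration.

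For a fixed configuration I would build an instance of GAP with one bin per container: a horizontal (resp.\ vertical) container of size $w\times h$ becomes a bin of capacity $h$ (resp.\ $w$), an area container of size $w\times h$ a bin of capacity $wh$; each $R_i$ becomes an item of profit $1$ whose size in bin $j$ is $h_i$, $w_i$, or $w_ih_i$ according to the container type, provided $R_i$ fits there (and, for area bins, is $\eps$-small for it), and $+\infty$ otherwise. By the very definition of the three container types, the promised packing induces, for the correct configuration, a feasible GAP solution of profit $n$: in a horizontal container the $y$-projections of the packed rectangles are pairwise disjoint, so their heights sum to at most $h$ (symmetrically for vertical containers), and an area container carries rectangles of total area at most its own. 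Hence the exact DP of Lemma~\ref{lem:GAP-PPT}, which runs in time $O(nD^K)$ with $D\le ab$, recovers an assignment of all of $I$ to the containers whenever one exists; if its optimum is $<n$ the configuration is discarded.

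Given such an assignment, I would scale the whole configuration by the factor $1+2\eps$ in the horizontal direction (and symmetrically in the vertical direction for the ``$a\times(1+2\eps)b$'' variant). Horizontal scaling is affine, so the scaled containers stay pairwise non-overlapping and now lie inside a $(1+2\eps)a\times b$ knapsack, and each container only got wider. Rectangles assigned to a horizontal container are then stacked one above another — their heights still sum to at most its (unchanged) height, and each is no wider than its (now larger) width — and symmetrically for vertical containers. The rectangles assigned to an area container of new size $(1+2\eps)w\times h$ have total area at most $wh$ and are still $\eps$-small for it (width $\le\eps w\le\frac{\eps}{1+2\eps}(1+2\eps)w$, height $\le\eps h$), so by Lemma~\ref{lem:nfdhPack} NFDH packs \emph{all} of them: the added width compensates for the width NFDH wastes per shelf. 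For the variant with $90^\circ$ rotations I would allow each item two orientations per bin in the GAP instance (keeping the feasible one of smaller size) and observe that the packing step is oblivious to whether a rectangle has been rotated; nothing else changes.

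Steps (i)--(ii) are essentially bookkeeping on top of Lemma~\ref{lem:GAP-PPT}, and I do not expect difficulty there. The hard part will be step (iii): one must verify that, after widening an area container, the area it received — which can be as large as its original area — still lies below the fraction of the enlarged area that NFDH is guaranteed to fill, i.e.\ that the $2\eps$ gain in one coordinate strictly dominates the $O(\eps)$-fraction of area that NFDH may leave empty (a short computation, with the lower-order $O(\eps^2)$ terms absorbed by replacing $\eps$ with a suitable constant fraction of itself). This is precisely why resource augmentation is needed and why it can be confined to a single coordinate.
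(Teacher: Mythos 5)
Your proposal matches the paper's proof essentially step for step: guess the containers and their placement exactly (affordable in pseudo-polynomial time), solve the induced GAP instance \emph{exactly} via the DP of Lemma~\ref{lem:GAP-PPT} so that every rectangle gets assigned, and then spend the augmented dimension only on the area containers so that NFDH (Lemma~\ref{lem:nfdhPack}) can pack everything assigned to them — the paper enlarges the area containers' heights by the factor $1+2\eps$, which is the same idea as your global one-directional scaling, and handles rotations through the same GAP construction as Theorem~\ref{thm:container_packing_ptas}. The NFDH slack estimate you single out as the delicate step is precisely the point the paper disposes of by a direct appeal to Lemma~\ref{lem:nfdhPack}, so your argument introduces no genuinely different ingredient.
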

\begin{proof}
	We prove the result only for the $a \times (1 + 2\eps)b$ target knapsack, the other case being symmetric.
	
	Since we are allowed pseudo-polynomial time, we can guess exactly all the containers and their packing in the $a \times b$ knapsack. Then, we build the GAP instance as in the proof of Theorem~\ref{thm:container_packing_ptas}. This time, in pseudo-polynomial time, we can solve it \emph{exactly} by Lemma~\ref{lem:GAP-PPT}, that is, we can find an assignment of all the rectangles into the containers.
	
	We now enlarge the height of each area container by a factor $1 + 2\eps$, that is, if the container $\mathcal{C}$ has size $w(C) \times h(C)$, we enlarge its size to $w(C) \times (1 + 2\eps)h(C)$; clearly, it is still possible to pack all the containers in the enlarged $a \times (1 + 2\eps)b$ knapsack. The proof is concluded by observing that all the rectangles that are assigned to $\mathcal{C}$ can be packed in the enlarged container by Lemma~\ref{lem:nfdhPack}.
\end{proof}

\section{Packing Rectangles with Resource Augmentation}\label{sec:2dgk-resource-augmentation}

In this chapter we prove that it is possible to pack a high profit subset of rectangles into boxes, if we are allowed to augment one side of a knapsack by a small fraction.

Note that we compare the solution provided by our algorithm with the optimal solution \emph{without} augmentation; hence, the solution that we obtain is not feasible. Still, the violation can be made arbitrarily small, and this result will be a valuable tool for other packing algorithms, and we employ it extensively in the next chapters.

The result that we describe is essentially proved in \cite{js09}, although we introduce some modifications and extensions to obtain the additional properties relative to packing into containers and a guarantee on the area of the rectangles in the obtained packing;  moreover, by using our framework of packing in containers, we obtain a substantially simpler algorithm. For the sake of completeness, we provide a full proof, which follows in spirit the proof of the original result. We will prove the following:

\begin{lemma}[Resource Augmentation Packing Lemma]\label{lem:structural_lemma_augm}
	Let $I'$ be a collection of rectangles that can be packed into a box of size $a\times b$, and $\eps_{ra}>0$ be a given constant. Then there exists an $\eps_{ra}$-granular container packing of $I''\subseteq I'$ inside a box of size $a\times (1+\eps_{ra})b$ (resp., $(1+\eps_{ra})a\times b$) such that:
	\begin{enumerate}\itemsep0pt
		\item $p(I'')\geq (1-O(\eps_{ra}))p(I')$;
		\item the number of containers is $O_{\eps_{ra}}(1)$ and their sizes belong to a set of cardinality $n^{O_{\eps_{ra}}(1)}$ that can be computed in polynomial time;
		\item the total area of the containers is at most $a(I') + \eps_{ra} ab$.
	\end{enumerate}
\end{lemma}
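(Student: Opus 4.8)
The plan is to follow the classical resource-augmentation argument of Jansen--Solis-Oba, but rephrased throughout in the language of containers developed in this chapter. Fix the box $B$ of size $a\times b$ and an optimal packing of $I'$ inside it. First I would classify the rectangles of $I'$ by size relative to $B$: call $R_i$ \emph{wide} if $w_i > \delta a$, \emph{tall} if $h_i > \delta b$, \emph{big} if it is both, and \emph{small} if it is neither, where $\delta$ is a small constant obtained from $\eps_{ra}$ by a standard shifting/averaging argument over a geometric sequence $\eps_{ra} \ge \delta_1 \gg \delta_2 \gg \cdots$ so that the rectangles whose width lies in the ``gap'' interval $(\delta_{t+1}, \delta_t]$ carry at most an $\eps_{ra}$ fraction of $p(I')$ and can be discarded (and symmetrically for heights). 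This is the usual price we pay in item (1). After discarding, every surviving rectangle is either big, or wide with small height, or tall with small width, or small in both dimensions; and there are only $O_{\eps_{ra}}(1)$ big rectangles since each occupies area $> \delta^2 ab$.

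Next I would build the container structure. Guess the positions of the $O_{\eps_{ra}}(1)$ big rectangles; each big rectangle is itself a (vertical, say) container. The big rectangles together with vertical lines through their left and right edges cut $B$ into $O_{\eps_{ra}}(1)$ vertical strips; inside each strip the wide rectangles stack in horizontal shelves, and I would round the number and heights of these shelves using the rounding of Lemma~\ref{lem:round_knapsack_container} so that each shelf becomes a horizontal container whose dimensions live in the polynomial set $WIDTHS(I')^{(k)}\times HEIGHTS(I')^{(k)}$ with $k=\Theta(1/\eps_{ra})$. The key point where augmentation enters: when we reorganize the wide rectangles of a strip into a bounded number of rounded horizontal shelves, the total height can grow, but only by an additive $O(\eps_{ra} b)$ per strip, which is absorbed by enlarging $b$ to $(1+\eps_{ra})b$ (this is exactly the resource augmentation; for the other variant one augments $a$ instead and swaps the roles of wide and tall). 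Symmetrically, tall-and-thin rectangles are handled by vertical containers. Finally, the region not covered by big rectangles or by wide/tall shelves is carved into $O_{\eps_{ra}}(1)$ rectangular free regions, which we designate as $\eps_{ra}$-granular area containers; the small rectangles that were sitting there in the optimal packing have total area at most the area of those regions, so by Lemma~\ref{lem:packAreaContainer} (or directly Lemma~\ref{lem:nfdhPack}) a $(1-3\eps_{ra})$-profit fraction of them repacks inside.

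To get item (3), the area bound, I would observe that every horizontal or vertical container is charged only to the rectangles packed inside it in the original optimal packing, so the container's area exceeds the area of its contents by at most the ``wasted'' strip of width $< \delta a$ (or height $< \delta b$) along one side; summing over the $O_{\eps_{ra}}(1)$ containers gives total container area at most $a(I') + O(\delta)\cdot ab \le a(I') + \eps_{ra} ab$ after choosing $\delta$ small enough. The area containers are simply the leftover regions, whose total area is trivially at most $ab$ minus the area of the other containers; one has to be slightly careful to show they do not double-count, but since the big rectangles, the shelf-containers, and the free regions partition $B$ (up to the augmented sliver), the accounting closes. Item (2) is immediate from the construction: a constant number of strips, each contributing a constant number of shelves plus a constant number of big rectangles and free regions, and all sizes drawn from $WIDTHS(I')^{(k)}$, $HEIGHTS(I')^{(k)}$, $SIZES(I')^{(k)}$ or products thereof.

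The main obstacle I expect is the bookkeeping in the reorganization of the wide (resp.\ tall) rectangles into a \emph{constant} number of rounded horizontal (resp.\ vertical) containers while simultaneously (a) keeping the total height blow-up within $O(\eps_{ra} b)$, (b) keeping the container count $O_{\eps_{ra}}(1)$, and (c) maintaining the area bound of item (3). Concretely, within a strip one must argue that the wide rectangles can be grouped into $O_{\eps_{ra}}(1)$ height classes (by rounding heights to powers of $1+\eps_{ra}$, of which only $O(\log_{1+\eps_{ra}}(1/\delta)) = O_{\eps_{ra}}(1)$ matter above the small threshold, plus a linear-measure argument for the rest) and that fractional stacking followed by a Steinberg-type or greedy repacking (Theorem~\ref{thm:steinberg}) turns each class into one horizontal container at the cost of only an additive $\eps_{ra} b$ of extra height; interleaving this with the vertical containers for tall items so that the two families do not conflict is the delicate part, and it is precisely here that the one-sided augmentation is essential.
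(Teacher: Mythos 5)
Your high-level skeleton (discard ``medium'' items by an averaging argument as in Lemma~\ref{lem:ra_intermediate}, then produce horizontal, vertical and area containers with sizes in a polynomial set) matches the paper, but the step that actually carries the proof is missing and the substitute you offer does not work. The crux is turning the arbitrary arrangement of the wide rectangles into $O_{\eps_{ra}}(1)$ horizontal containers while the tall items stay (essentially) where they are. Your plan -- round heights into $O_{\eps_{ra}}(1)$ classes and turn ``each class into one horizontal container'' via a Steinberg-type or greedy repacking -- fails already quantitatively: a horizontal container must have height equal to the \emph{sum} of the heights of its contents, and the wide rectangles of a single height class can have total height close to $b/\delta$ (up to $1/\delta$ of them sit side by side at each level), so one container per class cannot fit in the (even augmented) box; moreover Theorem~\ref{thm:steinberg} produces an unstructured packing, not a container packing, and gives no control on where the repacked items land relative to the tall items, so feasibility against the vertical structure is lost. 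The paper's actual mechanism is quite different: round the widths \emph{and} the $x$-coordinates of wide items to multiples of $\delta^2$ (absorbing $2\delta$ of the augmentation), pack the short-high items inside the resulting width-$\delta^2$ boxes into $O_{\eps_{ra}}(1)$ vertical containers via the Kenyon--R\'emila lemma (Lemma~\ref{lem:Kenyon-Remila}, which is where the constraint $\mu\ll\delta^{10}$ comes from), then \emph{slice} the wide items horizontally, cut the box into horizontal stripes through the edges of a guessed set $\mathcal{K}$ and of the boxes, group regions with equal configuration (at most $2^{M-1}$ configurations over $M=(1+2\delta)/\delta^2$ strips) into at most $M2^{M}$ horizontal containers per stripe, and finally run the iterative shifting argument over sets $\mathcal{K}_1,\mathcal{K}_2,\dots$ (at most $1/\eps_{ra}$ rounds) to convert this fractional packing into an integral one losing only an $\eps_{ra}$-fraction of profit. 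Nothing in your proposal replaces this slicing/configuration step or the fractional-to-integral conversion, and without it the container count and the profit bound are unsubstantiated.

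The area bound of item (3) is also argued incorrectly. Since your containers are only height-homogeneous, an item inside a horizontal container can have width anywhere in $(\delta a, a]$, so the unused area of a container is not a ``wasted strip of width $<\delta a$ along one side''; it can be a constant fraction of the container. The paper needs a separate argument (Lemma~\ref{lem:shrink_knapsack_container}): regroup the contents of each horizontal (resp.\ vertical) container into $O(\log_{1+\eps_{ra}}(1/\delta))$ width classes within a factor $1+\eps_{ra}$, and iteratively either shrink the container or split off the few very tall items, sacrificing an $O(\eps_{ra})$ profit fraction, until the total container area is at most the area of its contents; the area containers are shrunk analogously, and the leftover small items are salvaged by an extra area container $C_S$ that consumes part of the augmented side. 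Your small-item step also implicitly assumes the free regions after reorganization still contain the small items' area, which is exactly what this container-shrinking guarantees and is not automatic.
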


Note that in this result we do not allow rotations, that is, rectangles are packed with the same orientation as in the original packing. However, as an existential result we can apply it also to the case with rotations. Moreover, since Theorem~\ref{thm:container_packing_ptas} gives a PTAS for approximating container packings, this implies a simple algorithm that does not need to solve any LP to find the solution, in both the cases with and without rotations.

For simplicity, in this section we assume that widths and heights are positive real numbers in $(0, 1]$, and $a = b = 1$: in fact, all elements, container and boxes can be rescaled without affecting the property of a packing of being a \emph{container packing} with the above conditions. Thus, without loss of generality, we prove the statement for the augmented $1 \times (1 + \eps_{ra})$ box.

We will use the following Lemma, that follows from the analysis in \cite{kr00}:

\begin{lemma}[\cite{kr00}]\label{lem:Kenyon-Remila}
	Let $\overline{\eps} > 0$, and let $\mathcal{Q}$ be a set of rectangles, each of height and width at most $1$. Let $\mathcal{L} \subseteq \mathcal{Q}$ be the set of rectangles of width at least $\overline{\eps}$, and let $OPT_{SP}(\mathcal{L})$ be the minimum width such that the rectangles in $\mathcal{L}$ can be packed in a box of size $OPT_{SP}(\mathcal{L})\times 1$.
	
	Then $\mathcal{Q}$ can be packed in polynomial time into a box of height $1$ and width $\tilde{w} \le \max\{OPT_{SP}(\mathcal{L}) + \frac{18}{\overline{\eps}^2} w_{\max}, a(\mathcal{Q})(1+\overline{\eps}) + \frac{19}{\overline{\eps}^2}w_{\max}\}$, where $w_{\max}$ is the maximum width of rectangles in $\mathcal{Q}$. Furthermore, all the rectangles with both width and height less than $\overline{\eps}$ are packed into at most $\frac{9}{\overline{\eps}^2}$ boxes, and all the remaining rectangles into at most $\frac{27}{\overline{\eps}^3}$ vertical containers.
\end{lemma}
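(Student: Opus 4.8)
\textbf{Proof proposal for the Resource Augmentation Packing Lemma (Lemma~\ref{lem:structural_lemma_augm}).}

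The plan is to follow the classical strategy of Jansen--Solis-Oba for PTAS-style packing with resource augmentation, but recast everything in the language of containers so that Lemma~\ref{lem:Kenyon-Remila} can be invoked almost verbatim. After rescaling to $a=b=1$, first I would apply a standard shifting argument over width classes: partition the rectangles according to their width into groups $G_k$ consisting of rectangles with width in $(\eps_{ra}^{k+1}, \eps_{ra}^{k}]$ (for $k = 0, 1, \dots$), and use an averaging argument to find an index $k^\star \le 1/\eps_{ra}$ such that the rectangles whose width lies in the ``boundary band'' $(\eps_{ra}^{k^\star+1}, \eps_{ra}^{k^\star}]$ carry profit at most $\eps_{ra}\, p(I')$. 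Discarding this band at a cost of an $O(\eps_{ra})$ fraction of the profit, I set $\overline{\eps} := \eps_{ra}^{k^\star}$ and split the surviving rectangles into the \emph{wide} ones $\mathcal{L}$ (width $> \overline{\eps}$, equivalently width $\ge \overline{\eps}/\eps_{ra}$ after the band removal) and the \emph{narrow} ones (width $\le \overline{\eps}\,\eps_{ra} < \overline{\eps}$); note $\overline{\eps} \ge \eps_{ra}^{1/\eps_{ra}}$ is a constant depending only on $\eps_{ra}$, which is exactly what is needed so that the $O(1/\overline{\eps}^c)$ terms in Lemma~\ref{lem:Kenyon-Remila} stay $O_{\eps_{ra}}(1)$.

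Next, for the wide rectangles I would invoke a linear-grouping / rounding step (again at $O(\eps_{ra})$ profit loss) so that $\mathcal{L}$ has only $O_{\eps_{ra}}(1)$ distinct widths, and observe that since $I'$ (hence its wide part) packs into the unit box, $OPT_{SP}(\mathcal{L}) \le 1$. Now apply Lemma~\ref{lem:Kenyon-Remila} with parameter $\overline{\eps}$ to the whole set $\mathcal{Q}$ of surviving rectangles: it produces, in polynomial time, a packing into a box of width $\tilde w \le \max\{1 + \frac{18}{\overline{\eps}^2} w_{\max},\ a(\mathcal{Q})(1+\overline{\eps}) + \frac{19}{\overline{\eps}^2} w_{\max}\}$ and height $1$. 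Here $w_{\max} \le 1$ in general, which is not good enough; the fix is to first separate out the ``tall-and-wide'' rectangles — those with width $> \overline{\eps}$ \emph{and} height $> \eps_{ra}$ — of which there are at most $O(1/(\overline{\eps}\,\eps_{ra}))$ in any packing into the unit box, place each of them in its own trivial (horizontal or vertical) container, and remove them from $\mathcal{Q}$; what remains has $w_{\max} \le \overline{\eps}$... wait, that is false for wide rectangles, so instead I would handle the additive $\frac{c}{\overline{\eps}^2} w_{\max}$ slack by charging it to the vertical direction: rotate the picture, i.e. apply Lemma~\ref{lem:Kenyon-Remila} so that its unbounded dimension is the one we are allowed to augment. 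Concretely, since all rectangles fit in a $1\times 1$ box we have both $a(\mathcal{Q}) \le 1$ and $OPT_{SP}(\mathcal{L}) \le 1$; scaling heights down by a factor $(1+\eps_{ra})$ is not available, so instead I would run Kenyon--Rémila on the sub-instance after it has been split by a horizontal shifting argument into $O(1/\eps_{ra})$ horizontal strips each of height $\eps_{ra}$, apply the lemma inside each strip (where now $w_{\max}$ is genuinely small relative to the strip only in the narrow sub-case), and stack the strips — the additive $\frac{c}{\overline{\eps}^2}w_{\max} \le \frac{c}{\overline{\eps}^2}$ total extra width across all strips is a constant, which after re-normalizing the box width back to $1$ translates into a height blow-up of at most $\eps_{ra}$, giving the $1\times(1+\eps_{ra})$ box. (The symmetric statement for the augmented width is obtained by swapping the roles of the two axes throughout.)

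Finally I would assemble the container structure and verify the three bullet points. The containers are: the $O_{\eps_{ra}}(1)$ singleton boxes holding the tall-and-wide rectangles; the $\le \frac{27}{\overline{\eps}^3}$ vertical containers and $\le \frac{9}{\overline{\eps}^2}$ area (``box'') containers produced by Lemma~\ref{lem:Kenyon-Remila} per strip, times $O(1/\eps_{ra})$ strips — still $O_{\eps_{ra}}(1)$ in total; the narrow-and-short rectangles land in the area containers, which are $\overline{\eps}$-granular hence $\eps_{ra}$-granular after adjusting the constant. For the size set in bullet~2: the singleton containers have a width/height among $WIDTHS(I') \cup HEIGHTS(I')$; the vertical and area containers have, by Lemma~\ref{lem:round_area_container} and Lemma~\ref{lem:round_knapsack_container} applied as a post-processing step (at a further $O(\eps_{ra})$ loss), dimensions in $WIDTHS(I')^{(k)} \cup HEIGHTS(I')^{(k)}$ for $k = O_{\eps_{ra}}(1)$, a set of size $n^{O_{\eps_{ra}}(1)}$. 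For bullet~3, the total container area: each wide/tall rectangle's singleton container has area equal to the rectangle's area; the vertical containers produced by Kenyon--Rémila have total area at most their width $\tilde w \le 1 + O_{\eps_{ra}}(1)\cdot$(small)$\le a(\mathcal{Q}) + O(\eps_{ra})$ times height, which sums to $a(I') + \eps_{ra}$ after accounting for the at most $\eps_{ra}$ extra area introduced per strip by the rounding, and similarly the area containers have total area at most $a(\mathcal{Q})(1+\overline{\eps}) + O(\eps_{ra}) \le a(I') + \eps_{ra}$; combining, the total is at most $a(I') + \eps_{ra}\cdot 1 = a(I') + \eps_{ra}ab$ after un-rescaling. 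The main obstacle, and the step where the original Jansen--Solis-Oba argument is genuinely delicate, is controlling the additive $\frac{c}{\overline{\eps}^2}w_{\max}$ and $\frac{c}{\overline{\eps}^3}$ terms from Lemma~\ref{lem:Kenyon-Remila} so that they become a \emph{multiplicative} $(1+\eps_{ra})$ augmentation of one side rather than an additive constant blow-up — this is exactly what the strip-decomposition / re-normalization trick is for, and getting the bookkeeping of areas and container counts to close consistently across the $O(1/\eps_{ra})$ strips is the technical heart of the proof.
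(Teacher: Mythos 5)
Your proposal does not prove the statement it was supposed to prove. The target statement is Lemma~\ref{lem:Kenyon-Remila} itself — the Kenyon--R\'emila-type guarantee that $\mathcal{Q}$ can be packed into a box of height $1$ and width $\tilde{w} \le \max\{OPT_{SP}(\mathcal{L}) + \frac{18}{\overline{\eps}^2} w_{\max},\, a(\mathcal{Q})(1+\overline{\eps}) + \frac{19}{\overline{\eps}^2}w_{\max}\}$, with the small rectangles in at most $\frac{9}{\overline{\eps}^2}$ boxes and everything else in at most $\frac{27}{\overline{\eps}^3}$ vertical containers. What you wrote is instead a sketch of the Resource Augmentation Packing Lemma (Lemma~\ref{lem:structural_lemma_augm}), and at the crucial point you simply ``apply Lemma~\ref{lem:Kenyon-Remila} with parameter $\overline{\eps}$'' as a black box. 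Using the statement as an assumed tool is circular with respect to the task: nothing in your text establishes the width bound, the $\frac{9}{\overline{\eps}^2}$ box count, or the $\frac{27}{\overline{\eps}^3}$ container count. (For reference, the paper does not reprove this lemma either; it imports it from the analysis of the Kenyon--R\'emila AFPTAS, whereas it \emph{does} give its own multi-page proof of Lemma~\ref{lem:structural_lemma_augm}, which is the argument your sketch loosely parallels.)

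A genuine proof of Lemma~\ref{lem:Kenyon-Remila} would have to reconstruct the Kenyon--R\'emila machinery: linear grouping of the wide rectangles $\mathcal{L}$ into $O(1/\overline{\eps}^2)$ width classes and rounding widths up within each class (this is where an additive $O(w_{\max}/\overline{\eps}^2)$ term enters); formulating the fractional strip packing as a configuration LP and taking a basic solution with at most as many nonzero configurations as width classes; converting that fractional solution into vertical stacks of wide rectangles of equal rounded width, which is what yields the $\frac{27}{\overline{\eps}^3}$ vertical containers; and finally inserting the narrow rectangles by NFDH into the residual horizontal space, with the narrow-and-short ones confined to at most $\frac{9}{\overline{\eps}^2}$ rectangular boxes, while tracking how the two alternative width bounds ($OPT_{SP}(\mathcal{L})$-based versus area-based) and the constants $18$ and $19$ arise. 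None of these steps appear in your proposal, so as a proof of the stated lemma it has no content; and even read as a proof of Lemma~\ref{lem:structural_lemma_augm}, the strip-decomposition/renormalization device you introduce to absorb the additive $\frac{c}{\overline{\eps}^2}w_{\max}$ terms is left unverified (your own ``wait, that is false'' aside signals the unresolved case of wide rectangles, whose $w_{\max}$ is not small relative to a strip), whereas the paper handles this with rounding wide widths to multiples of $\delta^2$, a configuration-by-stripe fractional repacking, and an iterative shifting argument to bound the discarded items.
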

Note that the boxes containing the rectangles that are smaller than $\overline{\eps}$ are not necessarily packed as containers.

We need the following technical lemma:
\begin{lemma}\label{lem:ra_intermediate}
	Let $\eps>0$ and let $f(\cdot)$ be any positive increasing function such that $f(x)<x$ for all $x$. Then, there exist positive constant values $\delta,\mu \in \Omega_\eps(1)$, with \sal{$\delta \leq f(\eps)$} and \sal{$\mu \leq f(\delta)$} such that the total profit of all the rectangles whose width or height lies in $(\mu, \delta]$ is at most $\eps\cdot p(I')$.
\end{lemma}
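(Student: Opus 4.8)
The plan is to use a standard hierarchical (geometric-sequence) argument on a carefully chosen sequence of scales, together with a pigeonhole over the scales. First I would fix a constant $N = N(\eps) \in \mathbb{N}$ to be determined (it will be something like $\lceil 2/\eps \rceil$), and define a decreasing sequence of scales $\delta_0 > \delta_1 > \dots > \delta_N$ as follows: set $\delta_0 := f(\eps)$, and then recursively $\delta_{i+1} := f(\delta_i)$. Since $f$ is positive, increasing, and satisfies $f(x) < x$, each $\delta_i$ is a well-defined positive constant depending only on $\eps$ (and on $f$), the sequence is strictly decreasing, and we have $\delta_{i+1} = f(\delta_i) \le f(\delta_{i-1}) = \delta_i$ in the sense needed (more precisely $\delta_{i+1} \le f(\delta_i)$ with equality, and $\delta_i \le f(\eps)$ for $i \ge 1$, $\delta_0 \le f(\eps)$ trivially). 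In particular any consecutive pair $(\mu,\delta) = (\delta_{i+1},\delta_i)$ will satisfy the required constraints $\delta \le f(\eps)$ and $\mu \le f(\delta)$.

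Next I would define, for each $i \in \{0,1,\dots,N-1\}$, the ``band'' $B_i$ of rectangles $R_j \in I'$ such that $w_j \in (\delta_{i+1}, \delta_i]$ \emph{or} $h_j \in (\delta_{i+1}, \delta_i]$. The key observation is that each rectangle $R_j$ can belong to at most two of these bands: its width $w_j$ falls into at most one interval $(\delta_{i+1},\delta_i]$ (the intervals are disjoint), and likewise its height falls into at most one such interval, so $R_j$ contributes to at most two bands. Hence $\sum_{i=0}^{N-1} p(B_i) \le 2\,p(I')$. By averaging (pigeonhole), there exists an index $i^\star \in \{0,\dots,N-1\}$ with $p(B_{i^\star}) \le \frac{2}{N} p(I')$. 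Choosing $N := \lceil 2/\eps \rceil$ gives $p(B_{i^\star}) \le \eps\, p(I')$. Finally set $\delta := \delta_{i^\star}$ and $\mu := \delta_{i^\star+1}$; then $\delta = \delta_{i^\star} \le f(\eps)$ (for $i^\star \ge 1$ this is $\delta_{i^\star} = f(\delta_{i^\star-1}) \le f(\eps)$ since $\delta_{i^\star-1} \le \eps$, wait — I need $\delta_{i^\star - 1} < \eps$; indeed $\delta_0 = f(\eps) < \eps$ and the sequence decreases, so all $\delta_i < \eps$, hence $\delta_{i^\star} = f(\delta_{i^\star-1}) < f(\eps) \le f(\eps)$; for $i^\star=0$ directly $\delta_0 = f(\eps)$), and $\mu = \delta_{i^\star+1} = f(\delta_{i^\star}) = f(\delta) \le f(\delta)$. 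Both $\delta$ and $\mu$ are among the finitely many constants $\delta_0,\dots,\delta_N$, each of which is $\Omega_\eps(1)$, so $\delta,\mu \in \Omega_\eps(1)$. The set of rectangles whose width \emph{or} height lies in $(\mu,\delta]$ is exactly $B_{i^\star}$, which has profit at most $\eps\,p(I')$, as required.

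The only mild subtlety — and the step I would be most careful about — is the bookkeeping on the constraints ``$\delta \le f(\eps)$'' and ``$\mu \le f(\delta)$'': one must make sure that the chosen index $i^\star$ always yields a $\delta$ that is at most $f(\eps)$, not merely at most $\eps$. This is why the sequence must start at $\delta_0 = f(\eps)$ rather than at $\eps$, so that \emph{every} candidate scale $\delta_i$ with $i \le N-1$ is already $\le f(\eps)$, and $\mu = f(\delta)$ exactly (or one can take $\mu$ slightly smaller if a strict inequality is wanted, which is harmless). Everything else is routine: disjointness of the intervals gives the ``at most two bands'' property, and pigeonhole over $N = \Theta(1/\eps)$ bands gives the profit bound. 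No packing argument is needed here — this is a purely combinatorial weight-shifting statement, and it will later be combined with the Resource Augmentation machinery to discard a negligible-profit ``medium'' class of rectangles.
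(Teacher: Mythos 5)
Your proposal is correct and is essentially the same argument as the paper's: the paper also defines the recursive sequence of $\Theta(1/\eps)$ scales starting at $f(\eps)$ with each term obtained by applying $f$, notes that each rectangle has its width and height in at most one interval each (hence lies in at most two bands), and applies the same averaging argument to pick the cheap band, setting $\delta$ and $\mu$ to its endpoints. Your extra bookkeeping confirming $\delta \le f(\eps)$ and $\mu = f(\delta)$ just makes explicit what the paper leaves implicit by starting the sequence at $f(\eps)$.
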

\begin{proof}
	Define $k+1=2/\eps+1$ constants $\eps_1,\ldots,\eps_{k+1}$, with 
	\sal{$\eps_1=f(\eps)$}
	and $\eps_{i}=f(\eps_{i+1})$ for each~$i$. Consider the $k$ ranges of widths and heights of type $(\eps_{i+1},\eps_{i}]$. By an averaging argument there exists one index $j$ such that the total profit of the rectangles in $I'$ with at least one side length in the range $(\eps_{j+1}N,\eps_{j}N]$ is at most $2\frac{\eps}{2}p(I')$. It is then sufficient to set $\delta=\eps_j$ and $\mu=\eps_{j+1}$.    
\end{proof}
We use this lemma with $\eps = \eps_{ra}$, and we will specify the function $f$ later. By properly choosing the function $f$, in fact, we can enforce constraints on the value of $\mu$ with respect to $\delta$, which will be useful several times; the exact constraints will be clear from the analysis. Thus, we remove from $I'$ the rectangles that have at least one side length in $(\mu, \delta]$.

We call a rectangle $R_i$ \emph{wide} if $w_i > \delta$, \emph{high} if $h_i > \delta$, \emph{short} if $w_i \le \mu$ and \emph{narrow} if $h_i \le \mu$.\footnote{Note that the classification of the rectangles in this section is different from the ones used in the main results of this thesis, although similar in spirit.}
From now on, we will assume that we start with the optimal packing of the rectangles in $R'$, and we will modify it until we obtain a packing with the desired properties.
We remove from $R'$ all the short-narrow rectangles, initially obtaining a packing. We will show in section \ref{sec:shortnarrow} how to use the residual space to pack them, with a negligible loss of profit.

As a first step, we round up the widths of all the \emph{wide} rectangles in $R'$ to the nearest multiple of $\delta^2$; moreover, we shift them horizontally so that their starting coordinate is an integer multiple of $\delta^2$ (note that, in this process, we might have to shift also the other rectangles in order to make space). Since the width of each wide rectangle is at least $\delta$ and $\frac{1}{\delta}\cdot 2\delta^2 = 2\delta$, it is easy to see that it is sufficient to increase the width of the box to $1 + 2\delta$ to perform such a rounding.

\subsection{Containers for short-high rectangles}\label{sec:containers-short-rectangles}

\begin{figure}
	\centering
	\includegraphics[width=12cm]{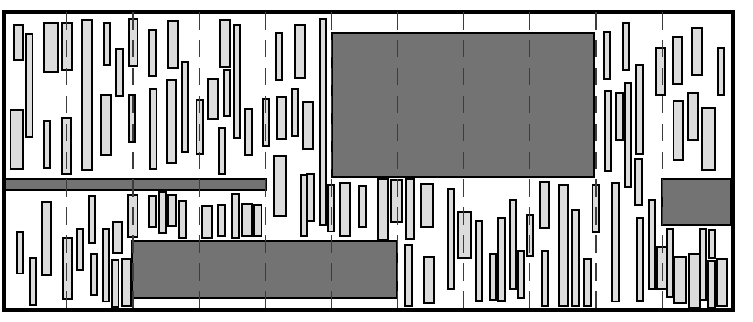}
	\caption{An example of a packing after the short-narrow rectangles have been removed, and the wide rectangles (in dark grey) have been aligned to the $M$ vertical strips. Note that the short-high rectangles (in light gray) are much smaller than the vertical strips.}
	\label{fig:augm-round-x}
\end{figure}

We draw vertical lines across the $1\times (1+2\delta)$ region spaced by $\delta^2$, splitting it into $M := \frac{1+2\delta}{\delta^2}$ vertical strips (see Figure~\ref{fig:augm-round-x}). Consider each maximal rectangular region which is contained in one such strip and does not overlap any wide rectangle; we define a box for each such region that contains at least one short-high rectangle, and we denote the set of such boxes by $\mathcal{B}$.

Note that some short rectangles might intersect the vertical edges of the boxes, but in this case they overlap with exactly two boxes. Using a standard shifting technique, we can assume that no rectangle is cut by the boxes by losing profit at most $\eps_{ra} OPT$: first, we assume that the rectangles intersecting two boxes belong to the leftmost of those boxes. For each box $B \in \mathcal{B}$ (which has width $\delta^2$ by definition), we divide it into vertical strips of width $\mu$. Since there are $\frac{\delta^2}{\mu} > 2/\eps_{ra}$ strips and each rectangle overlaps with at most $2$ such strips, there must exist one of them such that the profit of the rectangles intersecting it is at most $2\mu p(B) \leq \eps_{ra} p(B)$, where $p(B)$ is the profit of all the rectangles that are contained in or belong to $B$. We can remove all the rectangles overlapping such strip, creating in $B$ an empty vertical gap of width $\mu$, and then we can move all the rectangles intersecting the right boundary of $B$ to the empty space, as depicted in Figure~\ref{fig:augm-fix-vertical-box}.
\begin{proposition}
	The number of boxes in $\mathcal{B}$ is at most $\frac{1+2\delta}{\delta^2} \cdot \frac{1}{\delta} \le \frac{2}{\delta^3}$.
\end{proposition}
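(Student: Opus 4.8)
The plan is to bound the number of boxes in $\mathcal{B}$ by counting, for each of the $M$ vertical strips, how many maximal empty rectangular regions that strip can contribute. First I would recall that we have drawn $M = \frac{1+2\delta}{\delta^2}$ vertical strips across the $1 \times (1+2\delta)$ region, each of width exactly $\delta^2$. Fix one such strip $S$. The wide rectangles that overlap $S$ have all been aligned to the grid, so each of them spans $S$ fully in the horizontal direction; hence, inside $S$, the wide rectangles behave like a collection of horizontal slabs stacked (with gaps) along the vertical axis. The maximal rectangular sub-regions of $S$ avoiding all wide rectangles are therefore exactly the vertical gaps between consecutive wide slabs (plus possibly one at the bottom and one at the top). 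So the number of boxes of $\mathcal{B}$ contained in $S$ is at most one more than the number of distinct vertical coordinates at which a wide rectangle starts or ends within $S$.

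The key step is then to argue that the number of such wide slabs inside a single strip is at most $1/\delta$. This follows because every wide rectangle has height at least... wait — no. Actually the controlling quantity is not the height of wide rectangles but their number stacked in a column: since each wide rectangle has width $> \delta$, inside any vertical line only... I should instead observe directly: a wide rectangle has height $h_i$ which could be small, so many wide rectangles could be stacked in a column. Hence the clean bound comes from a different count. The statement to prove is $|\mathcal{B}| \le \frac{1+2\delta}{\delta^2}\cdot\frac{1}{\delta}$, i.e. at most $1/\delta$ boxes per strip. I would get this from the fact that the wide rectangles crossing a fixed strip, being horizontal slabs of the full strip width, are pairwise non-overlapping and each occupies a positive height; between them and the strip boundaries there are at most (number of slabs) $+1$ empty gaps, and only those gaps containing a short-high rectangle become boxes. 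A short-high rectangle has height $> \delta$, so each empty gap that becomes a box has height $> \delta$; since the total height of the region is $1 + 2\delta$ and the gaps that are boxes are disjoint vertical intervals each of height exceeding $\delta$, there can be at most $\frac{1+2\delta}{\delta} \le \frac{3}{\delta}$ of them — but we want $\frac1\delta$, so I would instead use that these box-gaps together with the wide slabs partition (a subset of) the height, and that the boxes alone have total height at most $1$ (they lie within the original unit-height region, the extra $2\delta$ being only horizontal slack), giving at most $1/\delta$ boxes of height $> \delta$ per strip.

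Multiplying the per-strip bound $1/\delta$ by the number of strips $\frac{1+2\delta}{\delta^2}$ gives $|\mathcal{B}| \le \frac{1+2\delta}{\delta^2}\cdot\frac{1}{\delta} = \frac{1+2\delta}{\delta^3}$, and since $\delta < 1$ we have $1 + 2\delta \le 2$, hence $|\mathcal{B}| \le \frac{2}{\delta^3}$, as claimed.

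The main obstacle I anticipate is pinning down the correct per-strip bound rigorously: one must be careful about whether the relevant quantity is the height of the short-high rectangles (which gives $>\delta$ per box and bounds the count by roughly $1/\delta$ within a unit-height window), and about the fact that the box heights, although living in the region of height $1+2\delta$, actually fit within the original height $1$ because the width — not the height — was the dimension enlarged by the grid-rounding step. Making precise the claim "the boxes are disjoint vertical intervals of total height $\le 1$" is the crux; everything else is bookkeeping with the strip count $M$.
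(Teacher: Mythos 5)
Your final argument is correct and is exactly the intended (unstated) proof: after alignment, the wide rectangles cross each $\delta^2$-wide strip as full-width horizontal slabs, so the candidate regions in a strip are vertically disjoint gaps, each gap that becomes a box must have height $>\delta$ because it contains a short-high rectangle of height $>\delta$, and since the strip's vertical extent is the original unit height (the $2\delta$ enlargement being purely horizontal) there are at most $1/\delta$ boxes per strip, times $M=\frac{1+2\delta}{\delta^2}$ strips. The only slip is at the very end: $\delta<1$ does not give $1+2\delta\le 2$; you need $\delta\le 1/2$, which does hold here since $\delta\le f(\eps_{ra})<\eps_{ra}$ (indeed the paper later assumes $\delta\le 1/216$).
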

First, by a shifting argument similar to above, we can reduce the width of each box to $\delta^2 - \delta^4$ while losing only an $\eps_{ra}$ fraction of the profit of the rectangles in $B$. Then, for each $B\in \mathcal{B}$, since the maximum width of the rectangles in $B$ is at most $\mu$, by applying Lemma~\ref{lem:Kenyon-Remila} with $\overline{\eps} = \delta^2/2$ we obtain that the rectangles packed inside $B$ can be repacked into a box $B'$ of height $h(B)$ and width at most $w'(B) \le \max\{\delta^2 - \delta^4 + \frac{72}{\delta^4} \mu, (\delta^2 - \delta^4)(1+\frac{\delta^2}{2}) + \frac{76}{\delta^4}\mu\}\le \delta^2$, which is true if we make sure that $\mu \leq \delta^{10} / 76$. Furthermore, the short-high rectangles in $B$ are packed into at most $\dfrac{216}{\delta^6} \leq \dfrac{1}{\delta^7}$ vertical containers, assuming without loss of generality that $\delta \leq 1/216$. Note that all the rectangles are packed into vertical containers, because rectangles that have both width and height smaller than $\overline{\eps}$ are short-narrow and we removed them. Summarizing:

\begin{figure}
	\centering
	\includegraphics[width=9cm]{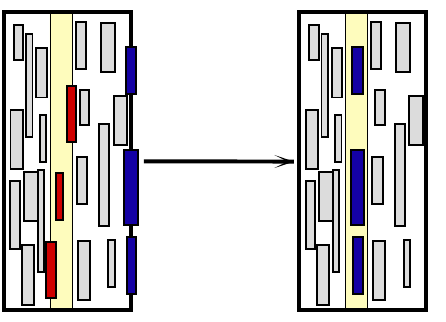}
	\caption{For each vertical box, we can remove a low profit subset of rectangles (red in the picture), to make space for short-high rectangles that cross the right edge of the box (blue).}
	\label{fig:augm-fix-vertical-box}
\end{figure}

\begin{proposition}\label{prop:basic_structure} There is a set $I^+ \subseteq I'$ of rectangles with total profit at least $(1-O(\eps_{ra}))\cdot p(I')$ and a corresponding packing for them in a $1 \times (1+2\delta)$ region such that:
	\begin{itemize}
		\item every wide rectangle in $I^+$ has its length rounded up to the nearest multiple of $\delta^2$ and it is positioned so that its left side is at a position $x$ which is a multiple of $\delta^2$, and
		\item each box $B \in \mathcal{B}$ storing at least one short-high rectangle has width $\delta^2$, and the rectangles inside are packed into at most $1/\delta^7$ vertical containers.
	\end{itemize}
\end{proposition}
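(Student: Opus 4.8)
The plan is to assemble the proposition directly from the chain of transformations carried out above, recording the profit lost at each step and checking that the two structural bullets hold at the end. First I would fix the parameters: apply Lemma~\ref{lem:ra_intermediate} with $\eps=\eps_{ra}$, choosing the function $f$ small enough (a sufficiently high fixed power of its argument) that the resulting $\delta,\mu\in\Omega_{\eps_{ra}}(1)$ satisfy every inequality the later steps need, in particular $\delta\le 1/216$, $\delta^2/\mu>2/\eps_{ra}$, and $\mu\le\delta^{10}/76$. Deleting every rectangle with a side in $(\mu,\delta]$ costs $\eps_{ra}\,p(I')$; removing the short-narrow rectangles then leaves a packing of the wide and short-high rectangles, and since those short-narrow rectangles are reinserted at negligible cost in Section~\ref{sec:shortnarrow} I would keep them in $I^+$ and defer their placement there. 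Rounding the widths of the wide rectangles up to multiples of $\delta^2$ and left-aligning them to the $\delta^2$-grid widens the region to $1\times(1+2\delta)$ and drops nothing, giving the first bullet.

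Next I would sum up the two ``box-cleaning'' losses. Reassigning straddling short rectangles to the left box and then, inside each box $B$ of width $\delta^2$, deleting the cheapest of its $\delta^2/\mu>2/\eps_{ra}$ vertical $\mu$-strips costs at most an $\eps_{ra}$-fraction of $p(B)$ and opens a width-$\mu$ gap for the rectangles crossing the right edge of $B$; shrinking each box further to width $\delta^2-\delta^4$ by the same kind of argument costs another $\eps_{ra}$-fraction of $p(B)$. Because the boxes of $\mathcal{B}$ carry (almost all of) the short-high rectangles and are disjoint, these per-box losses add up to $O(\eps_{ra})\,p(I')$, and together with the $(\mu,\delta]$-band deletion this yields $p(I^+)\ge(1-O(\eps_{ra}))\,p(I')$.

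Then I would establish the count and the vertical-container structure. There are $M=(1+2\delta)/\delta^2$ strips of width $\delta^2$; inside one strip the boxes of $\mathcal{B}$ are pairwise disjoint and each houses a short-high rectangle, i.e.\ one of height more than $\delta$, so at most $1/\delta$ of them fit, giving $|\mathcal{B}|\le\frac{1+2\delta}{\delta^2}\cdot\frac1\delta\le\frac2{\delta^3}$. For a fixed $B\in\mathcal{B}$, all rectangles inside have width at most $\mu<\delta^2/2$, so Lemma~\ref{lem:Kenyon-Remila} applied with $\overline{\eps}=\delta^2/2$ (note $OPT_{SP}=0$, as no rectangle reaches width $\overline{\eps}$, and the rescaled area is at most $\delta^2-\delta^4$) repacks $B$ into a box of the same height whose width is bounded by $\max\{\delta^2-\delta^4+\tfrac{72}{\delta^4}\mu,\ (\delta^2-\delta^4)(1+\tfrac{\delta^2}{2})+\tfrac{76}{\delta^4}\mu\}\le\delta^2$, using $\mu\le\delta^{10}/76$; and since no rectangle inside is short-narrow, the same lemma places all of them into at most $\frac{27}{(\delta^2/2)^3}=\frac{216}{\delta^6}\le\frac1{\delta^7}$ vertical containers, using $\delta\le1/216$. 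This is the second bullet.

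The part needing care rather than ingenuity is precisely this bookkeeping: confirming that the successive ``lose an $\eps_{ra}$-fraction'' steps compose into a single $(1-O(\eps_{ra}))$ factor, and that the nested constraints forced on $\mu$ and $\delta$ are simultaneously realizable. The latter does not follow automatically from Lemma~\ref{lem:ra_intermediate}, which returns only $\delta,\mu\in\Omega_{\eps_{ra}}(1)$, but it holds once $f$ is taken with enough slack; I would pin down that choice of $f$ explicitly so that the claimed inequalities ($\mu\le\delta^{10}/76$, $\delta\le1/216$, and $\delta^2/\mu$ large compared to $1/\eps_{ra}$) all hold.
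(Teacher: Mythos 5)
Your proposal is correct and follows the paper's own construction essentially step for step: the same parameter choice via Lemma~\ref{lem:ra_intermediate} (with $f$ an explicit power so that $\delta\le 1/216$, $\delta^2/\mu>2/\eps_{ra}$, $\mu\le\delta^{10}/76$ all hold), the same rounding and alignment of wide rectangles to the $\delta^2$-grid inside a $1\times(1+2\delta)$ region, the same two per-box shifting arguments, and the same application of Lemma~\ref{lem:Kenyon-Remila} with $\overline{\eps}=\delta^2/2$ yielding width at most $\delta^2$ and at most $216/\delta^6\le 1/\delta^7$ vertical containers. The only deviation is bookkeeping of the short-narrow rectangles (you keep them in $I^+$ and defer their placement, while the paper removes them and reinserts them in Section~\ref{sec:shortnarrow}), which mirrors the paper's own treatment and does not affect the argument.
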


\subsection{Fractional packing with \texorpdfstring{$O(1)$}{O(1)} containers}

Let us consider now the set of rectangles $I^+$ and an almost optimal packing $S^+$ for them according to Proposition~\ref{prop:basic_structure}. We remove the rectangles assigned to boxes in $\mathcal{B}$ and consider each box $B \in \mathcal{B}$ as a single pseudoitem. Thus, in the new almost optimal solution there are just pseudoitems from $\mathcal{B}$ and wide rectangles with right and left coordinates that are multiples of $\delta^2$. We will now show that we can derive a fractional packing with the same profit, and such that the rectangles and pseudoitems can be (fractionally) assigned to a constant number of containers. By \emph{fractional packing} we mean a packing where horizontal rectangles are allowed to be sliced horizontally (but not vertically); we can think of the profit as being split proportionally to the heights of the slices.

Let $\mathcal{K}$ be a subset of the horizontal rectangles of size $K$ that will be specified later. By extending horizontally the top and bottom edges of the rectangles in $\mathcal{K}$ and the pseudoitems in $\mathcal{B}$, we partition the knapsack into at most $2(|K| + |\mathcal{B}|) + 1 \leq 2(K + \frac{2}{\delta^3}) + 1 \leq 2(K + \frac{3}{\delta^3})$ horizontal stripes.

Let us focus on the (possibly sliced) rectangles contained in one such stripe of height $h$. For any vertical coordinate $y \in [0,h]$ we can define the \emph{configuration} at coordinate $y$ as the set of positions where the horizontal line at distance $y$ from the bottom cuts a vertical edge of a horizontal rectangle which is not in $\mathcal{K}$. There are at most $2^{M-1}$ possible configurations in a stripe.

\begin{figure}
	\centering
	\includegraphics[width=14.1cm]{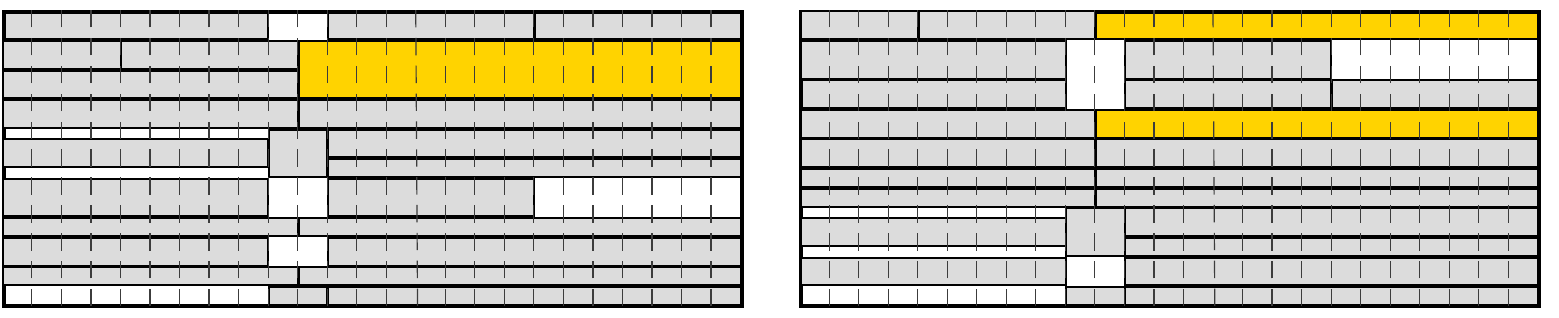}
	\caption{
		Rearranging the rectangles in a horizontal stripe. On the right, rectangles are repacked so that regions with the same configuration appear next to each other. Note that the yellow rectangle has been sliced, since it partakes in two regions with different configurations.
	}
	\label{fig:augm-fractional-repacking}
\end{figure}

We can further partition the stripe in maximal contiguous regions with the same configuration. Note that the number of such regions is not bounded, since configurations can be repeated. But since the rectangles are allowed to be sliced, we can rearrange the regions so that all the ones with the same configuration appear next to each other; see Figure~\ref{fig:augm-fractional-repacking} for an example. After this step is completed, we define up to $M$ horizontal containers per each configuration, where we repack the sliced horizontal rectangles. Clearly, all sliced rectangles are repacked.

Thus, the number of horizontal containers that we defined per each stripe is bounded by $M2^{M-1}$, and the total number overall is at most
\[
  2\left(K + \frac{3}{\delta^3}\right) M 2^{M-1} = \left(K + \frac{3}{\delta^3}\right) M 2^{M}.
\]
\subsection{Existence of an integral packing}\label{sec:integral_packing}

We will now show the existence of an integral packing, at a small loss of profit.

Consider a fractional packing in $N$ containers. Since each rectangle slice is packed in a container of exactly the same width, it is possible to pack all but at most $N$ rectangles integrally by a simple greedy algorithm: choose a container, and greedily pack in it rectangles of the same width, until either there are no rectangles left for that width, or the next rectangle does not fit in the current container. In this case, we discard this rectangle and close the container, meaning that we do not use it further. Clearly, only one rectangle per container is discarded, and no rectangle is left unpacked.

The only problem is that the total profit of the discarded rectangles can be large. To solve this problem, we use the following shifting argument. Let $\mathcal{K}_0 = \emptyset$ and $K_0 = 0$. For convenience, let us define $f(K) = \left(K + \frac{3}{\delta^3}\right) M 2^{M}$.

First, consider the fractional packing obtained by choosing $\mathcal{K} = \mathcal{K}_0$, so that $K = K_0 = 0$. Let $\mathcal{K}_1$ be the set of discarded rectangles obtained by the greedy algorithm, and let $K_1 = |\mathcal{K}_1|$. Clearly, by the above reasoning, the number of discarded rectangles is bounded by $f(K_0)$. If the profit $p(\mathcal{K}_1)$ of the discarded rectangles is at most $\eps_{ra} p(OPT)$, then we remove them and there is nothing else to prove. Otherwise, consider the fractional packing obtained by fixing $\mathcal{K} = \mathcal{K}_0 \cup \mathcal{K}_1$. Again, we will obtain a set $\mathcal{K}_2$ of discarded rectangles such that $K_2 := |\mathcal{K}_2| \leq f(K_0 + K_1)$. Since the sets $\mathcal{K}_1, \mathcal{K}_2, \dots$ that we obtain are all disjoint, the process must stop after at most $1/\eps_{ra}$ iterations. Setting $p := M2^{M}$ and $q := \frac{3}{\delta^3} M2^{M}$, we have that $K_{i+1} \leq p(K_0 + K_1 + \dots K_i) + q$ for each $i \geq 0$. Crudely bounding it as $K_{i+1} \leq i \cdot pq \cdot K_i$, we immediately obtain that $K_i \leq (pq)^i$. Thus, in the successful iteration, the size of $\mathcal{K}$ is at most $K_{1/\eps_{ra}-1}$ and the number of containers is at most $K_{1/\eps_{ra}} \leq (pq)^{1/\eps_{ra}} = (\frac{3}{\delta^2}M^2 2^{2M})^{1/\eps_{ra}} = O_{\eps_{ra}, \delta}(1)$.

\subsection{Rounding down horizontal and vertical containers}\label{sec:shrinking_horiz_vert}

As per the above analysis, the total number of horizontal containers is at most  $(\frac{3}{\delta^2}M^2 2^{2M})^{\eps_{ra}}$ and the total number of vertical containers is at most $\frac{2}{\delta^3} \cdot \frac{1}{\delta^7} = \frac{2}{\delta^{10}}$.

We will now show that, at a small loss of profit, it is possible to replace each horizontal and each vertical container defined so far with a constant number of smaller containers, so that the total area of the new containers is at most as big as the total area of the rectangles originally packed in the container. Note that in each container we consider the rectangles with the original widths (not rounded up). We use the following lemma:

\begin{lemma}\label{lem:shrink_knapsack_container} Let $C$ be a horizontal (resp. vertical) container defined above, and let $\R_C$ be the set of rectangles packed in $C$. Then, it is possible to pack a set $\R'_C \subseteq \R_C$ of profit at least $(1 - 3\eps_{ra})p(\R_C)$ in a set of at most $\left\lceil\log_{1 + \eps_{ra}} (\frac{1}{\delta})\right\rceil / \eps_{ra}^2$ horizontal (resp. vertical) containers that can be packed inside $C$ and such that their total area is at most $a(\R_C)$. 
\end{lemma}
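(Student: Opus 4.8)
The plan is to establish the statement for a horizontal container $C$ of size $w(C)\times h(C)$; the vertical case is symmetric. Recall that the rectangles of $\R_C$ are stacked one above another inside $C$ (each horizontal line meets at most one of them), and that $C$ was built out of \emph{wide} rectangles, so every $R_i\in\R_C$ has $w_i>\delta$ while $w_i\le w(C)\le 1$; hence the largest and smallest widths occurring in $\R_C$ differ by a factor at most $1/\delta$.

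\emph{Step 1 (width classes).} I would first split $\R_C$ into $m\le\lceil\log_{1+\eps_{ra}}(1/\delta)\rceil$ classes $G_1,\dots,G_m$, where $G_t$ collects the rectangles whose width lies in $\bigl(w(C)(1+\eps_{ra})^{-t},\,w(C)(1+\eps_{ra})^{-(t-1)}\bigr]$. Since the classes partition $\R_C$, one has $\sum_t a(G_t)=a(\R_C)$ and $\sum_t p(G_t)=p(\R_C)$, so it suffices to treat each $G_t$ separately and produce for it at most $1/\eps_{ra}^2$ new horizontal containers whose total area is at most $a(G_t)$ and which together hold a subset of $G_t$ of profit at least $(1-3\eps_{ra})p(G_t)$; summing over $t$ then yields the lemma.

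\emph{Step 2 (one class).} Fix $G:=G_t$, let $w^+$ be the largest width appearing in $G$ (so every width of $G$ lies in $(w^+/(1+\eps_{ra}),w^+]$) and put $H:=h(G)$. A single horizontal container of width $w^+$ and height $H$ holds all of $G$, but its area $w^+H$ can exceed $a(G)=\sum_{R_i\in G}w_ih_i$ by up to a factor $1+\eps_{ra}$, the ``waste'' being $\sum_{R_i\in G}(w^+-w_i)h_i<\eps_{ra}\,w^+H$. The point is to absorb this waste by deleting a cheap, height-heavy slice of the stack. Concretely: (i) remove the rectangles of $G$ with $h_i>\eps_{ra}^2 H$ and give each its own container — there are fewer than $1/\eps_{ra}^2$ of them and they create no waste; call the rest $G'$; (ii) sort $G'$ by non-increasing width, cut this list into consecutive ``slabs'' each of total height in $[\eps_{ra}^2 H,2\eps_{ra}^2 H)$ (each slab contains at least one rectangle, and there are $O(1/\eps_{ra}^2)$ slabs), and delete the $\Theta(1/\eps_{ra})$ slabs of smallest total profit whose heights add up to at least $\eps_{ra}\,h(G')$ — this deletes profit at most $O(\eps_{ra})p(G')\le 3\eps_{ra}p(G)$; (iii) the deleted slabs break the sorted list into $O(1/\eps_{ra})$ consecutive blocks, and I put each block into one horizontal container of width equal to the largest width inside it (so at most $w^+$) and height equal to the block's total height.

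\emph{Step 3 (area and counting).} The blocks' containers together occupy area at most $w^+\bigl(h(G')-h(\text{deleted})\bigr)=a(G')+\bigl(\text{waste of }G'\bigr)-w^+h(\text{deleted})$, and since $\text{waste of }G'<\eps_{ra}w^+h(G')\le w^+h(\text{deleted})$ this is at most $a(G')$; adding the area $a(G\setminus G')$ of the tall rectangles' own containers gives total area at most $a(G)$. The number of containers for $G$ is $O(1/\eps_{ra}^2)+O(1/\eps_{ra})\le 1/\eps_{ra}^2$ for $\eps_{ra}$ small enough, and all these containers are sub-boxes of the stack and hence fit inside $C$. Summing over the $m\le\lceil\log_{1+\eps_{ra}}(1/\delta)\rceil$ classes gives the claimed container count, total area at most $a(\R_C)$, and retained profit at least $(1-3\eps_{ra})p(\R_C)$.

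The delicate point is Step~2(ii)--(iii): the slab height, the number of deleted slabs, and the ``tall'' threshold must be tuned so that the deleted slabs simultaneously (a) have enough total height to cover the width-waste, (b) carry at most a $3\eps_{ra}$ fraction of the class profit, and (c) leave only $O(1/\eps_{ra})$ blocks; these pull against one another, and the degenerate situation in which almost all of $H$ sits in a handful of rectangles (so that $G'$ is too short to furnish cheap slabs) must be handled by simply placing those few rectangles in their own containers — which is exactly why the container bound carries a $1/\eps_{ra}^2$ rather than a $1/\eps_{ra}$ factor. Once this per-class bookkeeping is carried out carefully, the rest is routine.
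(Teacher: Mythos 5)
Your Step~1 (width classes within a factor $1+\eps_{ra}$) and the area bookkeeping coincide with the paper's proof, but Step~2(ii) has a genuine gap. The profit bound for the deleted slabs is an averaging argument: the cheapest $k$ slabs among $K$ cost at most $(k/K)p(G')$, and since each slab has height $\Theta(\eps_{ra}^2H)$ while you must delete height $\eps_{ra}h(G')$, getting $k/K=O(\eps_{ra})$ implicitly requires $h(G')=\Omega(H)$. If the tall items of the class carry almost all of its height, say $h(G')=\Theta(\eps_{ra}^2H)$ while $p(G')$ is a constant fraction of $p(G)$, there are only $O(1)$ slabs; you are forced to delete at least one whole slab (its height quantum $\eps_{ra}^2H$ far exceeds the needed $\eps_{ra}h(G')$), and that slab can carry a constant fraction of $p(G)$ --- far more than $3\eps_{ra}p(G)$. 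Deleting nothing is not an option either: if the widths inside $G'$ vary, the block containers have strictly positive waste and the total area exceeds $a(G)$, because the exact singleton containers of the tall items give no slack. Your fallback ("place those few rectangles in their own containers") does not rescue this case: $G'$ is small in \emph{total height}, but it can consist of arbitrarily many rectangles of tiny individual height, so singletons blow the container budget. (A lesser issue: up to $1/\eps_{ra}^2-1$ tall singletons plus $\Theta(1/\eps_{ra})$ blocks already exceeds the claimed $1/\eps_{ra}^2$ containers per class; that is fixable by adjusting the threshold, unlike the profit issue.)

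The missing idea --- and the way the paper's proof handles exactly this regime --- is to make the "tall" threshold relative to the \emph{current residual} stack height and to iterate. In each round one looks at the items of height larger than $\eps_{ra}$ times the current container height: if their total profit is at most $\eps_{ra}p(\R_j)$ they are deleted, and since at least one of them is that tall the container's height (hence area) shrinks by a factor $1-\eps_{ra}$, which absorbs the $(1+\eps_{ra})$ width rounding and terminates the process; otherwise each of these at most $1/\eps_{ra}$ items gets an exact, wasteless container of its own and one recurses on the strictly shorter residual stack; and if at some round all items are short relative to the current height, one removes a cheapest horizontal strip (an $\eps_{ra}$-fraction of the height) at cost at most $2\eps_{ra}p(\R_j)$. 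Since every non-terminating round extracts profit more than $\eps_{ra}p(\R_j)$ into singletons, there are at most $1/\eps_{ra}$ rounds, which yields both the $1/\eps_{ra}^2$ container count and the $(1-3\eps_{ra})$ profit guarantee. Your one-shot threshold $\eps_{ra}^2H$ fixed against the original class height has no such mechanism, so as written the construction does not prove the lemma.
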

\begin{proof}
	Without loss of generality, we prove the result only for the case of a horizontal container.
	
	Since $w_i \geq \delta$ for each rectangle $R_i \in \R_C$, we can partition the rectangles in $\R_C$ into at most $\left\lceil\log_{1 + \eps_{ra}} (\frac{1}{\delta})\right\rceil$ groups $\R_1, \R_2, \dots$, so that in each $\R_j$ the widest rectangle has width bigger than the smallest by a factor at most $1+\eps_{ra}$; we can then define a container $C_j$ for each group $\R_j$ that has the width of the widest rectangle it contains and height equal to the sum of the heights of the contained rectangles.
	
	Consider now one such $C_j$ and the set of rectangles $\R_j$ that it contains, and let $P := p(\R_j)$. Clearly, $w(C_j) \leq (1+\eps_{ra}) w_i$ for each $R_i \in \R_j$, and so $a(C_j) \leq (1+\eps_{ra})a(\R_j)$. If all the rectangles in $\R_j$ have height at most $\eps_{ra} h(C_j)$, then we can remove a set of rectangles with total height at least $\eps_{ra} h(C)$ and profit at most $2\eps_{ra} p(\R_j)$. Otherwise, let $\mathcal{Q}$ be the set of rectangles of height larger than $\eps_{ra} h(C_j)$, and note that $a(Q) \geq \eps_{ra} h(C_j)w(C_j)/(1 + \eps_{ra})$. If $p(\mathcal{Q}) \leq \eps_{ra} P$, then we remove the rectangles in $\mathcal{Q}$ from the container $C_j$ and reduce its height as much as possible, obtaining a smaller container $C'_j$; since $a(C'_j) \leq a(C_j) - \eps_{ra} a(C_j) = (1 - \eps_{ra})a(C_j) \leq (1 - \eps_{ra})(1 + \eps_{ra})a(\R_j) < a(\R_j)$, then the proof is finished. Otherwise, we define one container for each of the rectangles in $\mathcal{Q}$ (which are at most $1/\eps_{ra}$) of exactly the same size, and we still shrink the container with the remaining rectangles as before; note that there is no lost area for each of the newly defined container. Since at every non-terminating iteration a set of rectangles with profit larger than $\eps_{ra} P$ is removed, the process must end within $1/\eps_{ra}$ iterations.
	
	Note that the total number of containers that we produce for each initial container $C_j$ is at most $1/\eps_{ra}^2$, and this concludes the proof.
\end{proof}

Thus, by applying the above lemma to each horizontal and each vertical container, we obtain a modified packing where the total area of the horizontal and vertical containers is at most the area of the rectangles of $\R'$ (without the short-narrow rectangles, which we will take into account in the next section), while the number of containers increases at most by a factor $\left\lceil\log_{1 + \eps_{ra}} (\frac{1}{\delta})\right\rceil / \eps_{ra}^2$.

\subsection{Packing short-narrow rectangles}\label{sec:shortnarrow}

Consider the integral packing obtained from the previous section, which has at most $K' := \left(\frac{2}{\delta^{10}} + (\frac{3}{\delta^2}M^2 2^{2M})^{\eps_{ra}}\right)\left\lceil\log_{1 + \eps_{ra}} (\frac{1}{\delta})\right\rceil / \eps_{ra}^2$ containers. We can create a non-uniform grid extending each side of the containers until they hit another container or the boundary of the knapsack. Moreover, we also add horizontal and vertical lines spaced at distance $\eps_{ra}$. We call \emph{free cell} each face defined by the above lines that does not overlap a container of the packing; by construction, no free cell has a side bigger than $\eps_{ra}$. The number of free cells in this grid plus the existing containers is bounded by $K_{TOTAL} = {(2K' + 1/\eps_{ra})}^2 = O_{\eps_{ra}, \delta}(1)$. We crucially use the fact that this number does not depend on the value of $\mu$.

Note that the total area of the free cells is no less than the total area of the short-narrow rectangles, as a consequence of the guarantees on the area of the containers introduced so far. We will pack the short-narrow rectangles into the free cells of this grid using NFDH, but we only use cells that have width and height at least $\frac{8\mu}{\eps_{ra}}$; thus, each short-narrow rectangle will be assigned to a cell whose width (resp. height) is larger by at least a factor $8/\eps_{ra}$ than the width (resp. height) of the rectangle. Each discarded cell has area at most $\frac{8\mu}{\eps_{ra}}$, which implies that the total area of discarded cells is at most $\frac{8 \mu K_{TOTAL}}{\eps_{ra}}$. Now we consider the selected cells in an arbitrary order and pack short narrow rectangles into them using NFDH, defining a new area container for each cell that is used. Thanks to Lemma~\ref{lem:nfdhPack}, we know that each new container $C$ (except maybe the last one) that is used by NFDH contains rectangles for a total area of at least $(1 - \eps_{ra}/4)a(C)$. Thus, if all rectangles are packed, we remove the last container opened by NFDH, and we call $S$ the set of rectangles inside, that we will repack elsewhere; note that $a(S) \leq \eps_{ra}^2 \leq \eps_{ra}/3$, since all the rectangles in $S$ were packed in a free cell. Instead, if not all rectangles are packed by NFDH, let $S$ be the residual rectangles. In this case, the area of the unpacked rectangles is $a(S) \leq \frac{8 \mu K_{TOTAL}}{\eps_{ra}} + \eps_{ra}/4 \leq \eps_{ra}/3$, assuming that $\mu \leq \frac{\eps_{ra}^2}{96 K_{TOTAL}}$.

In order to repack the rectangles of $S$, we define a new area container $C_S$ of height $1$ and width $\eps_{ra}/2$. Since $a(C_S) = \eps_{ra}/2 \geq (\eps_{ra}/3) / (1 - 2\eps_{ra})$, all elements from $S$ are packed in $C_S$ by NFDH, and the container can be added to the knapsack by further enlarging its width from $1 + 2\delta$ to $1 + 2\delta + \eps_{ra}/2 < 1 + \eps_{ra}$. 

The last required step is to guarantee the necessary constraint on the total area of the area containers, similarly to what was done in Section~\ref{sec:shrinking_horiz_vert} for the horizontal and vertical containers.

Let $D$ be any full area container (that is, any area container except for $C_S$). We know that the area of the rectangles $R_D$ in $D$ is $a(R_D) \geq (1 - \eps_{ra})a(D)$, since each rectangle $R_i$ inside $D$ has width less than $\eps_{ra} w(D)/2$ and height less than $\eps_{ra} h(D)/2$, by construction. We remove rectangles from $R_D$ in non-decreasing order of profit/area ratio, until the total area of the residual rectangles is between $(1 - 4\eps_{ra})a(D)$ and $(1 - 3\eps_{ra})a(D)$ (this is possible, since each element has area at most $\eps_{ra}^2 a(D)$); let $R'_D$ be the resulting set. We have that $p(R'_D) \geq (1 - 4\eps_{ra}) p(R_D)$, due to the greedy choice. Let us define a container $D'$ of width $w(D)$ and height $(1 - \eps_{ra})h(D)$. It is easy to verify that each rectangle in $R_D$ has width (resp. height) at most $\eps_{ra} w(D')$ (resp. $\eps_{ra} h(D')$). Moreover, since $a(R'_D) \leq (1 - 3\eps_{ra})a(D) \leq (1 - 2\eps_{ra})(1 - \eps_{ra})a(C) \leq (1 - 2\eps_{ra})a(C')$, then all elements in $R'_D$ are packed in $D'$. By applying this reasoning to each area container (except $C_S$), we obtain property (3) of Lemma~\ref{lem:structural_lemma_augm}.

Note that the constraints on $\mu$ and $\delta$ that we imposed are $\mu \leq \frac{\delta^{10}}{76}$ (from Section~\ref{sec:containers-short-rectangles}), and $\mu \leq \frac{\eps_{ra}^2}{96 K_{TOTAL}}$. It is easy to check that both of them are satisfied if we choose $\sal{f(x) = (\eps_{ra}x)^C}$ for a big enough constant $C$ that depends only on $\eps_{ra}$. This concludes the proof.

\chapter{A PPT \texorpdfstring{$(4/3 + \eps)$}{(4/3 + eps)}-approximation for~Strip Packing}\label{chap:StripPacking}


In this chapter, we present a $(\frac43+\eps)$-approximation for Strip Packing running in pseudo-polynomial time. These results are published in \cite*{ggik16}.

Our approach refines the technique of \cite{nw16}. Like many results in the field, it is based on finding some particular type of packings that have a special structure, and that can consequently be approximated more efficiently.

Let $\alpha\in [1/3,1/2)$ be a proper constant parameter, and define a rectangle $R_i$ to be \emph{tall} if $h_i> \alpha\cdot OPT$, where $OPT$ is the height of an optimal packing. They prove that the optimal packing can be structured into a constant number of axis-aligned rectangular regions (\emph{boxes}), that occupy a total height of $OPT' \leq (1+\eps) OPT$ inside the vertical strip. Some rectangles are not fully contained into one box (they are \emph{cut} by some box). Among them, tall rectangles remain in their original position. All the other cut rectangles are repacked on top of the boxes: part of them in a horizontal box of size $W\times O(\eps)OPT$, and the remaining ones in a vertical box of size $O(\eps W)\times \alpha\,OPT$ (that we next imagine as placed on the top-left of the packing under construction).

Some of these boxes contain only relatively high rectangles (including tall ones) of relatively small width.
The next step is a rearrangement of the rectangles inside one such \emph{vertical} box $\overline{B}$ (see Figure \ref{fig_pseudo-rectangles1}), say of size $\overline{w} \times \overline{h}$: they first slice non-tall rectangles into unit width rectangles
(this slicing can be finally avoided with standard techniques). Then they shift tall rectangles to the top/bottom of $\overline{B}$, shifting sliced rectangles consequently (see Figure \ref{fig_pseudo-rectangles2}). Now they discard all the (sliced) rectangles completely contained in a central horizontal region of size $\overline{w}\times (1+\eps-2\alpha)\overline{h}$, and they \emph{nicely rearrange} the remaining rectangles into a constant number of \emph{sub-boxes} (excluding possibly a few more non-tall rectangles, that can be placed in the additional vertical box).

These discarded rectangles can be packed into $2$ extra boxes of size $\frac{\overline{w}}{2}\times(1+\eps-2\alpha)\overline{h}$ (see Figure 
\ref{fig_pseudo-rectangles4}).
In turn, the latter boxes can be packed into two \emph{discarded} boxes of size $\frac{W}{2}\times (1+\eps-2\alpha)OPT'$, that we can imagine as placed, one on top of the other, on the top-right of the packing. See Figure \ref{fig_packing_NW} for an illustration of the final packing. This leads to a total height of $(1+\max\{\alpha,2(1-2\alpha)\}+O(\eps))\cdot OPT$, which is minimized by choosing $\alpha=\frac{2}{5}$. 

\begin{figure}
	\captionsetup[subfigure]{justification=centering}
	\hspace{-20pt}
	\begin{subfigure}[b]{.53\textwidth}
		\resizebox{6cm}{!}{
			\begin{tikzpicture}
			
			
			\draw[fill = gray] (0,0) rectangle (5.3,0.4);
			\draw[fill = gray] (0,0.4) rectangle (5.2,0.7);
			\draw[fill = gray] (0,0.7) rectangle (4.7,1);
			
			\draw[fill = gray] (0,1) rectangle (4.2,1.8);
			\draw[fill = gray] (0,1.8) rectangle (3.8,2.2);
			\draw[fill = gray] (0,2.2) rectangle (3.7,2.5);
			
			\draw[fill = gray] (0,2.5) rectangle (1,7);
			\draw[fill = gray] (1,2.5) rectangle (1.8,7);
			\draw[fill = gray] (1.8,2.5) rectangle (2.6,7);
			
			\fill[fill = gray] (2.6,2.5) rectangle (5.1,5.5);
			
			\draw[fill=gray] (5.5,0) rectangle (6.3,6);
			\draw[fill=gray] (6.3,0) rectangle (7,6);
			
			\fill[fill=gray] (3,6) rectangle (7,7);
			
			\draw[fill=gray] (4,9.25) rectangle (4.5,10.25);
			\draw[fill=gray] (4.5,9.25) rectangle (5,10.25);
			
			\draw[fill=gray] (6,9) rectangle (6.3,11);
			\draw[fill=gray] (6.3,9) rectangle (6.5,11);
			
			\draw[fill=gray] (5.5,12) rectangle (5.9,13.5);
			\draw[fill=gray] (5.9,12) rectangle (6.2,13.5);
			\draw[fill=gray] (6.2,12) rectangle (6.5,13.5);
			
			
			\draw[ultra thick] (0,0) rectangle (7,7);
			
			\draw[ultra thick] (0,0) rectangle (5.5,1);
			\draw[ultra thick] (0,1) rectangle (4.5,2.5);
			\draw[ultra thick] (0,2.5) rectangle (2.6,7);
			\draw[ultra thick] (2.6,2.5) rectangle (5.1,5.5);
			\draw[ultra thick] (3,6) rectangle (7,7);
			\draw[ultra thick] (5.5,0) rectangle (7,6);
			\draw[ultra thick] (3,6) rectangle (7,7);
			
			\draw[ultra thick] (0,7) rectangle (7,7.5);
			\draw[ultra thick] (0,7.5) rectangle (7,8);
			\draw[ultra thick] (0,8) rectangle (7,8.5);
			\draw[ultra thick] (0,8.5) rectangle (7,9);
			
			\draw[ultra thick] (0,9) rectangle (0.5,14);
			\draw[ultra thick] (0.5,9) rectangle (1,14);
			\draw[ultra thick] (1,9) rectangle (1.5,14);
			
			\draw[ultra thick] (3.5,9) rectangle (7,14);
			
			\draw[ultra thick] (4,9.25) rectangle (5,10.25);
			\draw[ultra thick] (6,9) rectangle (6.5,11);
			
			\draw[ultra thick] (5.5,12) rectangle (6.5,13.5);
			
			
			\draw (3.6,13.35) node[anchor=west] {\large \textbf{$B_{disc, top}$}};
			\draw (3.6,10.85) node[anchor=west] {\large \textbf{$B_{disc, bot}$}};
			\draw (0.25,11.5) node [rotate=90] {\large \textbf{$B_{M, ver}$}};
			\draw (0.75,11.5) node [rotate=90] {\large \textbf{$B_{V, cut}$}};
			\draw (1.25,11.5) node [rotate=90] {\large \textbf{$B_{V, round}$}};
			\draw (3.5,8.75) node {\large \textbf{$B_S$}};
			\draw (3.5,8.25) node {\large \textbf{$B_{H, cut}$}};
			\draw (3.5,7.75) node {\large \textbf{$B_{H, round}$}};
			\draw (3.5,7.25) node {\large \textbf{$B_{M, hor}$}};
			
			\draw[dashed] (3.5,11.5) -- (7,11.5);
			
			\draw[thick] (0,0) -- (-0.5,0);
			\draw (-0.6,0) node[anchor = east] {\large $0$}; 
			
			\draw[thick] (0,0) -- (0,-0.5);
			\draw (0,-1) node {\large $0$}; 
			
			\draw[thick] (7,0) -- (7,-0.5);
			\draw (7,-1) node {\large $W$}; 
			
			\draw[thick] (0,6.6) -- (-0.5,6.6);
			\draw (-0.6,6.6) node[anchor = east] {\large $OPT$}; 
			
			\draw[thick] (0,7) -- (-0.5,7);
			\draw (-0.5,7) node[anchor = east] {\large $OPT'$}; 
			
			
			\draw [thick,decorate,decoration={brace,amplitude=8pt}] 
			(0,7) -- (0,9); 
			\draw (-0.4,8) node [anchor = east] {\large $O(\eps) \cdot OPT'$}; 
			
			\draw [thick,decorate,decoration={brace,amplitude=8pt}] 
			(3.5,9) -- (3.5,14); 
			\draw (3.1,11.5) node [anchor = south, rotate=90] {\large $2(1+\eps-2\alpha) \cdot OPT'$}; 
			
			\draw [thick,decorate,decoration={brace,amplitude=8pt}] 
			(0,14) -- (1.5,14); 
			\draw (0.75,14.3) node [anchor = south] {\large $O(\eps) \cdot W$}; 
			
			\draw [thick,decorate,decoration={brace,amplitude=8pt}] 
			(3.5,14) -- (7,14); 
			\draw (5.25,14.3) node [anchor = south] {\large $0.5 \cdot W$};
			
			\draw [thick,decorate,decoration={brace,amplitude=8pt}] 
			(0,9) -- (0,14); 
			\draw (-0.5,11.5) node [anchor = east] {\large $\alpha \cdot OPT'$}; 
			
			\end{tikzpicture}}
		\caption{Final packing obtained by \\ \cite{nw16}.}
		\label{fig_packing_NW}
	\end{subfigure}%
	\hspace{-18pt}
	\begin{subfigure}[b]{.6\textwidth}
		\resizebox{8.3cm}{!}{
			\begin{tikzpicture}
			
			
			\draw[fill = gray] (0,0) rectangle (5.3,0.4);
			\draw[fill = gray] (0,0.4) rectangle (5.2,0.7);
			\draw[fill = gray] (0,0.7) rectangle (4.7,1);
			
			\draw[fill = gray] (0,1) rectangle (4.2,1.8);
			\draw[fill = gray] (0,1.8) rectangle (3.8,2.2);
			\draw[fill = gray] (0,2.2) rectangle (3.7,2.5);
			
			\draw[fill = gray] (0,2.5) rectangle (1,7);
			\draw[fill = gray] (1,2.5) rectangle (1.8,7);
			\draw[fill = gray] (1.8,2.5) rectangle (2.6,7);
			
			\fill[fill = gray] (2.6,2.5) rectangle (5.1,5.5);
			
			\draw[fill=gray] (5.5,0) rectangle (6.3,6);
			\draw[fill=gray] (6.3,0) rectangle (7,6);
			
			\fill[fill=gray] (3,6) rectangle (7,7);
			
			\draw[fill=gray] (2,9.25) rectangle (2.5,10.25);
			\draw[fill=gray] (2.5,9.25) rectangle (3,10.25);
			
			\draw[fill=gray] (6,9) rectangle (6.3,11);
			\draw[fill=gray] (6.3,9) rectangle (6.5,11);
			
			\draw[fill=gray] (4,9) rectangle (4.4,10.5);
			\draw[fill=gray] (4.4,9) rectangle (4.7,10.5);
			\draw[fill=gray] (4.7,9) rectangle (5,10.5);
			
			
			\draw[ultra thick] (0,0) rectangle (7,7);
			
			\draw[ultra thick] (0,0) rectangle (5.5,1);
			\draw[ultra thick] (0,1) rectangle (4.5,2.5);
			\draw[ultra thick] (0,2.5) rectangle (2.6,7);
			\draw[ultra thick] (2.6,2.5) rectangle (5.1,5.5);
			\draw[ultra thick] (3,6) rectangle (7,7);
			\draw[ultra thick] (5.5,0) rectangle (7,6);
			\draw[ultra thick] (3,6) rectangle (7,7);
			
			\draw[ultra thick] (0,7) rectangle (7,7.5);
			\draw[ultra thick] (0,7.5) rectangle (7,8);
			\draw[ultra thick] (0,8) rectangle (7,8.5);
			\draw[ultra thick] (0,8.5) rectangle (7,9);
			
			\draw[ultra thick] (0,9) rectangle (0.5,12);
			\draw[ultra thick] (0.5,9) rectangle (1,12);
			\draw[ultra thick] (1,9) rectangle (1.5,12);
			
			\draw[ultra thick] (1.5,9) rectangle (7,12);
			
			\draw[ultra thick] (2,9.25) rectangle (3,10.25);
			\draw[ultra thick] (4,9) rectangle (5,10.5);
			\draw[ultra thick] (6,9) rectangle (6.5,11);
			
			\draw (1.8,11.35) node[anchor=west] {\large \textbf{$B_{disc}$}};
			\draw (0.25,10.5) node [rotate=90] {\large \textbf{$B_{M, ver}$}};
			\draw (0.75,10.5) node [rotate=90] {\large \textbf{$B_{V, cut}$}};
			\draw (1.25,10.5) node [rotate=90] {\large \textbf{$B_{V, round}$}};
			\draw (3.5,8.75) node {\large \textbf{$B_S$}};
			\draw (3.5,8.25) node {\large \textbf{$B_{H, cut}$}};
			\draw (3.5,7.75) node {\large \textbf{$B_{H, round}$}};
			\draw (3.5,7.25) node {\large \textbf{$B_{M, hor}$}};
			
			\draw[thick] (0,0) -- (-0.5,0);
			\draw (-0.6,0) node[anchor = east] {\large $0$}; 
			
			\draw[thick] (0,0) -- (0,-0.5);
			\draw (0,-1) node {\large $0$}; 
			
			\draw[thick] (7,0) -- (7,-0.5);
			\draw (7,-1) node {\large $W$}; 
			
			\draw[thick] (0,6.6) -- (-0.5,6.6);
			\draw (-0.6,6.6) node[anchor = east] {\large $OPT$}; 
			
			\draw[thick] (0,7) -- (-0.5,7);
			\draw (-0.5,7) node[anchor = east] {\large $OPT'$}; 
			
			
			\draw [thick,decorate,decoration={brace,amplitude=8pt}] 
			(0,7) -- (0,9); 
			\draw (-0.4,8) node [anchor = east] {\large $O(\eps) \cdot OPT'$}; 
			
			\draw [thick,decorate,decoration={brace,amplitude=8pt}] 
			(0,9) -- (0,12); 
			\draw (-0.5,10.5) node [anchor = east] {\large $\alpha \cdot OPT'$}; 
			
			\draw [thick,decorate,decoration={brace,amplitude=8pt}] 
			(0,12) -- (1.5,12); 
			\draw (0.75,12.3) node [anchor = south] {\large $\gamma \cdot W$}; 
			
			\draw [thick,decorate,decoration={brace,amplitude=8pt}] 
			(1.5,12) -- (7,12); 
			\draw (4.75,12.3) node [anchor = south] {\large $(1-\gamma) \cdot W$};
			
			\draw [thick,decorate,decoration={brace,amplitude=8pt}] 
			(7,12) -- (7,9); 
			\draw (7.5,10.5) node [anchor = west] {\large $(1+\eps - 2\alpha) \cdot OPT'$}; 
			
			\end{tikzpicture}}
		\caption{Final packing obtained in this work. \\Here $\gamma$ is a small constant depending on $\eps$.}
		\label{fig_our_packing}
	\end{subfigure}
	\caption{Comparison of final solutions.}
	\label{fig_structure}
\end{figure}
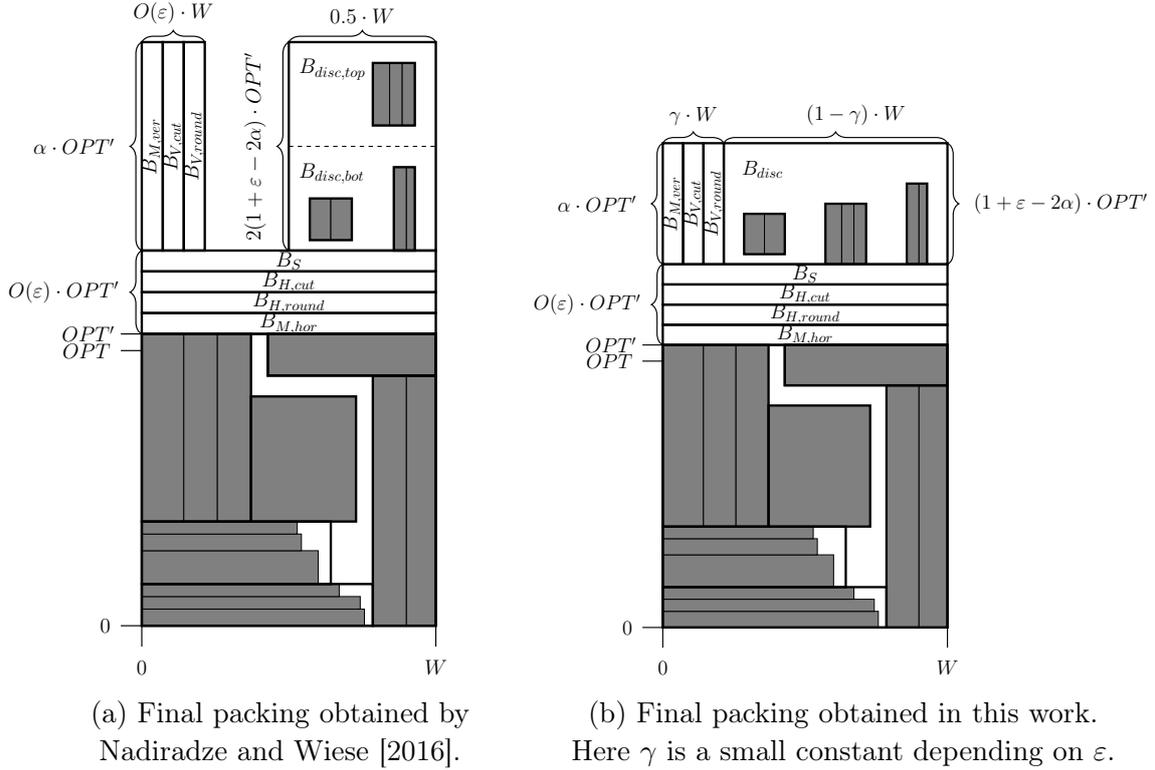

Our main technical contribution is a repacking lemma that allows one to repack a small fraction of the discarded rectangles of a given box inside the free space left by the corresponding sub-boxes (while still having $O_{\eps}(1)$ many sub-boxes in total). This is illustrated in Figure \ref{fig_pseudo-rectangles5}. This way we can pack all the discarded rectangles into a \emph{single} discarded box of size $(1-\gamma)W\times (1+\eps-2\alpha)OPT'$, where $\gamma$ is a small constant depending on $\eps$, that we can place on the top-right of the packing. The vertical box where the remaining rectangles are packed still fits to the top-left of the packing, next to the discarded box. See Figure \ref{fig_our_packing} for an illustration. Choosing $\alpha=1/3$ gives the claimed approximation factor. 

We remark that the basic approach by Nadiradze and Wiese strictly requires that at most $2$ tall rectangles can be packed one on top of the other in the optimal packing, hence imposing $\alpha\geq 1/3$. Thus in some sense we pushed their approach to its limit.

The algorithm by \cite{nw16} is not directly applicable to the case when $90^\circ$ rotations are allowed. 
In particular, they use a linear program to pack some rectangles. When rotations are allowed, it is unclear how to decide which rectangles are packed by the linear program.
We use a combinatorial \emph{container}-based approach to circumvent this limitation, which allows us to pack all the rectangles using dynamic programming. In this way we obtain a PPT $(4/3+\eps)$-approximation for strip packing with rotations, breaking the polynomial-time approximation barrier of 3/2 for that variant as well.

\section{Preliminaries and notations}\label{sec:prelim}

We will follow the notation from \cite{nw16}, which will be explained as it is needed.

Recall that $OPT\in \mathbb{N}$ denotes the height of the optimal packing for instance $\R$. By trying all the pseudo-polynomially many possibilities, we can assume that $OPT$ is known to the algorithm. Given a set $\mathcal{M}\subseteq\R$ of rectangles, $a(\mathcal{M})$ will denote the total area of rectangles in $\mathcal{M}$, i.e., $a(\mathcal{M}) = \sum_{R_i\in \mathcal{M}}{h_i \cdot w_i}$, and $h_{\max}(\mathcal{M})$ (resp. $w_{\max}(\mathcal{M})$) denotes the maximum height (resp. width) of rectangles in $\mathcal{M}$. Throughout this work, a \emph{box} of size $a\times b$ means an axis-aligned rectangular region of width $a$ and height $b$. 



\subsection{Classification of rectangles}

Let $0 < \eps < \alpha$, and assume for simplicity that $\frac{1}{\eps} \in \mathbb{N}$. We first classify the input rectangles into six  groups according to parameters $\delta_h, \delta_w, \mu_h, \mu_w$ satisfying $\eps\geq \delta_h > \mu_h > 0$ and $\eps\geq \delta_w > \mu_w > 0$, whose values will be chosen later (see also Figure~\ref{fig:SP_rectangles_classification}). A rectangle $R_i$ is:

\begin{figure}
	\centering
	\resizebox{!}{8cm}{\begin{tikzpicture}
		
		\fill[pattern color = lightgray, pattern = north east lines] (0,5.25) rectangle (3,8);
		\fill[pattern color = lightgray, pattern = north west lines] (3,3.5) rectangle (6,8);
		\fill[pattern color = lightgray, pattern = north west lines] (0,3.5) rectangle (1.5,5.25);
		\fill[color = lightgray] (1.5,0) rectangle (3,1.75);
		\fill[color = lightgray] (0,1.75) rectangle (6,3.5);
		\fill[color = lightgray] (1.5,3.5) rectangle (3,5.25);
		\fill[pattern color = lightgray, pattern = north east lines] (0,0) rectangle (1.5,1.75);
		\fill[pattern color = lightgray, pattern = north east lines] (3,0) rectangle (6,1.75);
		
		
		\draw[dashed] (1.5,0) -- (1.5,8.5);
		\draw[dashed] (3,0) -- (3,8.5);
		\draw[dashed] (6,0) -- (6,8.5);
		
		\draw[dashed] (0,1.75) -- (6.5,1.75);
		\draw[dashed] (0,3.5) -- (6.5,3.5);
		\draw[dashed] (0,5.25) -- (6.5,5.25);
		\draw[dashed] (0,8) -- (6.5,8);
		
		
		\draw[->] (0,0) -- (6.5,0);
		\draw[->] (0,0) -- (0,8.5);
		
		
		\draw[ultra thick] (0,0) rectangle (1.5,1.75);
		\draw[ultra thick] (1.5,0) -- (1.5,1.75) -- (0,1.75) -- (0,3.5) -- (1.5,3.5) -- (1.5,5.25) -- (3,5.25) -- (3,3.5) -- (6,3.5) -- (6,1.75) -- (3,1.75) -- (3,0) -- (1.5,0);
		\draw[ultra thick] (3,0) rectangle (6,1.75);
		\draw[ultra thick] (0,3.5) rectangle (1.5,5.25);
		\draw[ultra thick] (0,5.25) rectangle (3,8);
		\draw[ultra thick] (3,3.5) rectangle (6,8);
		
		
		\draw (0,0) -- (0,-0.25);
		\draw (0,-0.25) node[anchor=north] {$0$};
		\draw (1.5,0) -- (1.5,-0.25);
		\draw (1.5,-0.25) node[anchor=north] {$\mu_w W$};
		\draw (3,0) -- (3,-0.25);
		\draw (3,-0.25) node[anchor=north] {$\delta_w W$};
		\draw (6,0) -- (6,-0.25);
		\draw (6,-0.25) node[anchor=north] {$W$};
		
		\draw (0,0) -- (-0.25,0);
		\draw (-0.25,0) node[anchor=east] {$0$};
		\draw (0,1.75) -- (-0.25,1.75);
		\draw (-0.25,1.75) node[anchor=east] {$\mu_h OPT$};
		\draw (0,3.5) -- (-0.25,3.5);
		\draw (-0.25,3.5) node[anchor=east] {$\delta_h OPT$};
		\draw (0,5.25) -- (-0.25,5.25);
		\draw (-0.25,5.25) node[anchor=east] {$\alpha OPT$};
		\draw (0,8) -- (-0.25,8);
		\draw (-0.25,8) node[anchor=east] {$OPT$};
		
		
		\draw (0.75,0.875) node {\textbf{\scriptsize Small}};
		\draw (0.75,2.625) node {\textbf{\scriptsize Medium}};
		\draw (0.75,4.375) node {\textbf{\scriptsize Vertical}};
		\draw (0.75,6.625) node {\textbf{\scriptsize Tall}};
		
		\draw (2.25,0.875) node {\textbf{\scriptsize Medium}};
		\draw (2.25,2.625) node {\textbf{\scriptsize Medium}};
		\draw (2.25,4.375) node {\textbf{\scriptsize Medium}};
		\draw (2.25,6.625) node {\textbf{\scriptsize Tall}};
		
		\draw (4.5,0.875) node {\textbf{\scriptsize Horizontal}};
		\draw (4.5,2.625) node {\textbf{\scriptsize Medium}};
		\draw (4.5,4.375) node {\textbf{\scriptsize Large}};
		\draw (4.5,6.625) node {\textbf{\scriptsize Large}};
		
		\end{tikzpicture}}
	\caption{Classification of the rectangles. Each rectangle is represented as a point whose $x$ (resp., $y$) coordinate indicates its width (resp., height).}
	\label{fig:SP_rectangles_classification}
\end{figure}
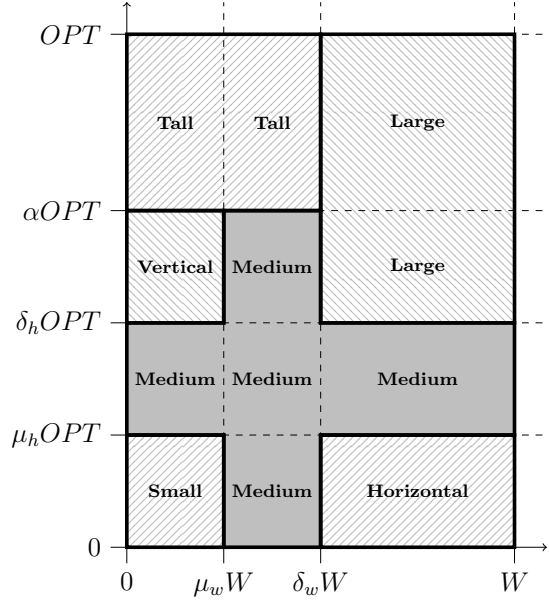

\begin{itemize}
	\item \emph{Large} if $h_i \ge \delta_h OPT$ and $w_i \ge \delta_w W$.
	\item \emph{Tall} if $h_i > \alpha OPT$ and $w_i < \delta_w W$.
	\item \emph{Vertical} if $h_i \in [\delta_h OPT, \alpha OPT]$ and $w_i \le \mu_w W$,
	\item \emph{Horizontal} if $h_i \le \mu_h OPT$ and $w_i \ge \delta_w W$,
	\item \emph{Small} if $h_i \le \mu_h OPT$ and $w_i \le \mu_w W$;
	\item \emph{Medium} in all the remaining cases, i.e., if $h_i \in (\mu_h OPT, \delta_h OPT)$, or $w_i \in (\mu_w W, \delta_w W)$ and $h_i \le \alpha OPT$.
\end{itemize}
We use $L$, $T$, $V$, $H$, $S$, and $M$ to denote the sets of large, tall, vertical, horizontal, small, and medium rectangles, respectively.
We remark that, differently from \cite{nw16}, we need to allow $\delta_h\neq \delta_w$ and $\mu_h\neq\mu_w$ due to some additional constraints in our construction.

Notice that according to this classification, every vertical line across the optimal packing intersects at most two tall rectangles. The following lemma allows us to choose $\delta_h, \delta_w, \mu_h$ and $\mu_w$ in such a way that $\delta_h$ and $\mu_h$ ($\delta_w$ and $\mu_w$, respectively) differ by a large factor, and medium rectangles have small total area.

\begin{lemma}\label{lem:mediumrectanglesarea}
	Given a polynomial-time computable function $f : (0, 1) \rightarrow (0, 1)$, with $f(x) < x$, any constant $\eps\in (0,1)$, and any positive integer $k$, we can compute in polynomial time a set $\Delta$ of $T=2(\frac{1}{\eps})^k$ many positive real numbers upper bounded by $\eps$, such that there is at least one number $\delta_h \in \Delta$ so that 
	$a(M)\leq \eps^k \cdot OPT \cdot W$ by choosing
	$\mu_h = f(\delta_h)$, $\mu_w=\frac{\eps \mu_h}{12}$, and $\delta_w=\frac{\eps \delta_h}{12}$.
	
\end{lemma}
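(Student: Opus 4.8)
The plan is to prove this by an area-based averaging argument, entirely analogous to Lemma~\ref{lem:ra_intermediate}: I will produce $T$ candidate values for $\delta_h$ whose associated ``medium bands'' are pairwise disjoint, and then argue that one of the candidates must capture little area.

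First I would build a strictly decreasing chain of scales by iterating $f$. Put $v_1 := \eps$ and $v_{i+1} := f(v_i)$ for $i = 1, \dots, T$, where $T = 2(1/\eps)^k$, and set $\Delta := \{v_1, \dots, v_T\}$. Since $f$ maps $(0,1)$ into $(0,1)$ with $f(x) < x$, all $v_i$ lie in $(0,\eps]$, they are pairwise distinct, and $\Delta$ (together with the auxiliary value $v_{T+1}$) is computable in polynomial time because $f$ is. For the $j$-th candidate I take $\delta_h := v_j$; the prescription of the lemma then forces $\mu_h := f(v_j) = v_{j+1}$, $\delta_w := \eps v_j/12$ and $\mu_w := \eps v_{j+1}/12$. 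With this choice the ``height band'' of candidate $j$ is the open interval $(v_{j+1}\, OPT,\, v_j\, OPT)$ and its ``width band'' is $(\tfrac{\eps}{12} v_{j+1} W,\, \tfrac{\eps}{12} v_j W)$. Because the $v_i$ strictly decrease, the height bands of distinct candidates are pairwise disjoint, and likewise the width bands.

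Next comes the counting. For a fixed candidate $j$ the medium set $M_j$ consists of the rectangles whose height lies in the height band, together with those whose width lies in the width band and whose height is at most $\alpha\, OPT$. A rectangle $R_i$ has a single height, so $h_i/OPT$ belongs to at most one height band, and a single width, so $w_i/W$ belongs to at most one width band; hence $R_i$ is medium for at most two of the $T$ candidates. Summing areas over candidates, $\sum_{j=1}^{T} a(M_j) \le 2\, a(\R)$. Since all input rectangles pack into a $W \times OPT$ box, $a(\R) \le W\cdot OPT$, so $\sum_{j} a(M_j) \le 2\, W\cdot OPT$, and by averaging some index $j$ satisfies $a(M_j) \le \tfrac{2\, W\cdot OPT}{T} = \eps^k\, W\cdot OPT$. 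Taking $\delta_h = v_j$ finishes the argument.

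I do not expect a genuine obstacle. The two points needing care are (i) checking that coupling $\mu_h$ to $\delta_h$ through $f$ still leaves the bands of different candidates disjoint --- which is precisely why the chain is built by iterating $f$, so that $\mu_h$ of one candidate coincides with $\delta_h$ of the next --- and (ii) keeping the factor $2$ from the two ``directions'' (height and width) in the bookkeeping, which is why the required number of candidates is $2(1/\eps)^k$ rather than $(1/\eps)^k$.
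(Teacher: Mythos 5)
Your proposal is correct and follows essentially the same argument as the paper: iterate $f$ starting from $\eps$ to build the chain of $T=2(1/\eps)^k$ candidate scales, observe that the resulting height bands (and width bands) are pairwise disjoint so each rectangle is medium for at most two candidates, and conclude by averaging that some candidate's medium set has area at most $2\,OPT\cdot W/T=\eps^k\,OPT\cdot W$. The only cosmetic difference is that the paper phrases the counting via the disjoint families $H_j,W_j$ rather than "each rectangle counted at most twice", which is the same bookkeeping.
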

\begin{proof} Let $T = 2(\frac{1}{\eps})^k$. Let $y_1 = \eps$, and, for each $j \in [T]$, define $y_{j+1} = f(y_j)$. Let $x_j = \frac{\eps y_j}{12}$. For each $j \leq T$, let $W_j = \{R_i \in \R : w_i \in [x_{i+1}, x_i)\}$ and similarly $H_j = \{R_i \in \R : h_i \in [y_{i+1}, y_i)\}$. Observe that $W_{j'}$ is disjoint from $W_{j''}$ (resp. $H_{j'}$ is disjoint from $H_{j''}$) for every $j' \neq j''$, and the total area of rectangles in $\bigcup W_i$ ($\bigcup H_i$ respectively) is at most $W \cdot OPT$. Thus, there exists a value $\overline{j}$ such that the total area of the elements in $W_{\overline{j}} \cup H_{\overline{j}}$ is at most $\dfrac{2 OPT\cdot W}{T} = \eps^k \cdot OPT\cdot W$. Choosing $\delta_h = y_{\overline{j}}$, $\mu_h = y_{\overline{j}+1}$, $\delta_w = x_{\overline{j}}$, $\mu_w = x_{\overline{j}+1}$ verifies all the conditions of the lemma.
\end{proof}

The function $f$ and the constant $k$ will be chosen later. From now on, assume that $\delta_h, \delta_w, \mu_h$ and $\mu_w$ are chosen according to Lemma~\ref{lem:mediumrectanglesarea}.  

\subsection{Overview of the construction}

We next overview some of the basic results in \cite{nw16} that are needed in our result. We define the constant $\gamma := \frac{\eps \delta_h}{2}$, and we assume without loss of generality that $\gamma\cdot OPT \in \mathbb{N}$.

Let us forget for a moment about the small rectangles in $S$. We will pack all the remaining rectangles $L\cup H\cup T\cup V\cup M$ into a sufficiently small number of boxes embedded into the strip. By standard techniques, as in \cite{nw16}, it is then possible to pack $S$ (essentially using NFDH in a proper grid defined by the above boxes) while increasing the total height at most by $O(\eps)OPT$. See Section~\ref{sec:smallpack} for more details on packing of small rectangles.

The following lemma from \cite{nw16} allows one to round the heights and positions of rectangles of large enough height, without increasing much the height of the packing. 
\begin{lemma}[\cite{nw16}]\label{lem:verticalrounding}
	There exists a feasible packing of height $OPT'\leq (1+\eps)OPT$ where: (1) the height of each rectangles in $L\cup T\cup V$ is rounded up to the closest integer multiple of $\gamma \cdot OPT$ and (2) their $x$-coordinates are as in the optimal solution and their $y$-coordinates are integer multiples of $\gamma \cdot OPT$.
\end{lemma}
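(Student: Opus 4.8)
The plan is to start from an optimal packing of height $OPT$ and transform it by a "slice-and-shift" argument applied to the horizontal slabs of the strip. First I would partition the strip $[0,W]\times[0,OPT]$ into $1/\gamma$ horizontal slabs of height $\gamma\cdot OPT$ each (recall $\gamma = \eps\delta_h/2$ and we assumed $\gamma\cdot OPT\in\mathbb N$, so this is an integer number of slabs). The key observation is that a rectangle in $L\cup T\cup V$ has height at least $\delta_h\cdot OPT = (2/\eps)\gamma\cdot OPT \gg \gamma\cdot OPT$, so it spans many slabs; conversely, a rectangle not in $L\cup T\cup V$ has height at most $\delta_h OPT$ (actually the ones we care about are short), but we are only rounding the tall-ish ones.

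The main step is the following: I would "stretch" the packing vertically so that every rectangle in $L\cup T\cup V$ gets its bottom edge pushed down to the nearest multiple of $\gamma\cdot OPT$ below it, its top edge pushed up to the nearest multiple of $\gamma\cdot OPT$ above it, and its $x$-coordinate kept fixed. Concretely, process the slabs from bottom to top; whenever the boundary between two consecutive slabs cuts through some rectangle of $L\cup T\cup V$ (or, more carefully, whenever we need room), insert an empty horizontal strip of height $\gamma\cdot OPT$ and slide everything above it upward by $\gamma\cdot OPT$. Equivalently — and this is the cleanest way to phrase it — define a piecewise-constant vertical expansion map: for each of the $1/\gamma$ original slab-boundaries $y = j\gamma\cdot OPT$, if that horizontal line properly intersects the interior of at least one rectangle of $L\cup T\cup V$, we duplicate that line, creating a gap. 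Since there are at most $1/\gamma$ boundaries, the total height added is at most $(1/\gamma)\cdot\gamma\cdot OPT \cdot \gamma \le \gamma\cdot OPT \cdot (1/\gamma)\cdot\gamma$... let me be careful: actually the standard argument gives an additive increase of at most $O(\eps)\cdot OPT$. The precise bookkeeping: a cleaner route is to observe that each rectangle in $L\cup T\cup V$, once its top and bottom are rounded to multiples of $\gamma\cdot OPT$, has its height increased by at most $2\gamma\cdot OPT$, and since these rectangles are "few" in the sense that any vertical line meets $O(1/\gamma)$ of them... no — better to bound by the slab argument: the only new empty space created lies in at most $1/\gamma$ inserted strips of height $\gamma\cdot OPT$, giving total added height $\le OPT\cdot\gamma\cdot(1/\gamma) = $ hmm. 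I would instead use the argument of Nadiradze–Wiese directly: by a shifting/averaging argument over $1/\eps$ choices of which slabs to "sacrifice", one shows that the rounding can be done with height increase $\eps\cdot OPT$, because $\gamma$ was chosen small enough relative to $\eps$ (namely $\gamma = \eps\delta_h/2$, so $1/\gamma$ rounding operations each of relative size $\gamma$ contribute $O(\gamma/\delta_h)\cdot OPT = O(\eps)\cdot OPT$ when amortized against the $\delta_h OPT$ minimum heights).

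After the vertical stretching, every rectangle of $L\cup T\cup V$ occupies a contiguous block of whole slabs, so both its $y$-coordinate and its rounded-up height are integer multiples of $\gamma\cdot OPT$, which is exactly conclusions (1) and (2); the $x$-coordinates were never touched. The short rectangles (horizontal, small, medium of small height) simply ride along with the stretching — they may end up at non-multiple-of-$\gamma$ heights, which is fine since the lemma only constrains $L\cup T\cup V$ — and feasibility is preserved throughout because we only ever translated rectangles vertically by common amounts within each region and inserted empty space, never introducing overlaps. The main obstacle, and the only place requiring real care, is the height accounting: naively rounding each of the possibly many rectangles in $L\cup T\cup V$ independently could blow up the height, so one must argue that the roundings "align" — that inserting at most $1/\gamma$ global empty slabs suffices to accommodate all of them simultaneously — and then check that $(1/\gamma)$ extra slabs of height $\gamma OPT$ is not $OPT$ but rather, via the averaging/shifting trick that discards the emptiest $\eps$-fraction of slab positions, only $\eps\cdot OPT$. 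Since this is precisely Lemma~\ref{lem:verticalrounding} of \cite{nw16}, I would cite that computation rather than redo it in full.
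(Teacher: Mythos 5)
The paper does not actually prove this lemma at all: it is imported verbatim from \cite{nw16}, so there is no in-paper argument for your sketch to match, and ending with ``I would cite that computation'' is essentially what the thesis itself does. Judged as a proof sketch, however, your slab-insertion route has a genuine gap, and you noticed it yourself mid-argument: if you insert an empty strip of height $\gamma\cdot OPT$ at each of the (up to) $1/\gamma$ slab boundaries that cut a rectangle of $L\cup T\cup V$, the added height is $(1/\gamma)\cdot\gamma\cdot OPT = OPT$, not $O(\eps)\cdot OPT$. The appeal to an unspecified ``averaging/shifting trick over slab positions'' does not rescue this: every rectangle of $L\cup T\cup V$ needs \emph{both} of its horizontal edges aligned to the grid, and its two edges sit at essentially arbitrary offsets modulo $\gamma\cdot OPT$, so there is no single choice of ``sacrificed'' slab positions that simultaneously aligns all of them; discarding rectangles is also not an option here, since the lemma must keep the whole packing.

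The argument that actually works (and is the one in \cite{nw16}) gives each tall rectangle \emph{private} slack proportional to its own height rather than global slack proportional to $OPT$: stretch the packing vertically by the map $y\mapsto(1+\eps)y$. A rectangle of $L\cup T\cup V$ has $h_i\geq \delta_h\cdot OPT$, so its image window has height $(1+\eps)h_i\geq h_i+\eps\delta_h\cdot OPT = h_i+2\gamma\cdot OPT$. Inside this window you can round $h_i$ up to a multiple of $\gamma\cdot OPT$ (cost $<\gamma\cdot OPT$) and push its bottom edge up to the next multiple of $\gamma\cdot OPT$ (cost $<\gamma\cdot OPT$), keeping the $x$-coordinate fixed; since the stretch maps the disjoint vertical extents along any vertical line to disjoint windows, no overlaps are created, and the total height is $(1+\eps)OPT$. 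This is where the choice $\gamma=\eps\delta_h/2$ is used, and it is the step your sketch is missing.
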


We next focus on rounded rectangle heights (that is, we implicitly replace $L\cup T\cup V$ with their rounded version) and on this slightly suboptimal solution of height $OPT'$.

The following lemma helps us to pack rectangles in $M$.
\begin{lemma}\label{lem:mediumrectanglesrepacking}
	If $k$ in Lemma~\ref{lem:mediumrectanglesarea} is chosen sufficiently large, all the rectangles in $M$ can be packed in polynomial time into a box $B_{M,hor}$ of size $W \times O(\eps)OPT$ and a box $B_{M,ver}$ of size $(\frac{\gamma}{3} W)  \times (\alpha OPT)$. Furthermore, there is one such packing using $\frac{3\eps}{\mu_h}$ vertical containers in $B_{M,hor}$ and $\frac{\gamma}{3\mu_w}$ horizontal containers in $B_{M,ver}$.
\end{lemma}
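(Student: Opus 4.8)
\medskip

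The plan is to split $M$ into two parts according to which dimension of a medium rectangle is ``pinched'' between the two relevant thresholds, to pack each part with a (possibly rotated) Next‑Fit‑Decreasing‑Height sweep whose shelves are exactly the containers required, and to charge the shelf counts and the two box dimensions to the area bound $a(M)\le \eps^{k}\cdot OPT\cdot W$ from Lemma~\ref{lem:mediumrectanglesarea}. Concretely, set
\[
 M_{\mathrm{mw}}:=\{R_i\in M : \mu_w W<w_i<\delta_w W\},\qquad M_{\mathrm{oth}}:=M\setminus M_{\mathrm{mw}} .
\]
By the classification every $R_i\in M_{\mathrm{mw}}$ has $h_i\le \alpha OPT$ and $w_i>\mu_w W$, while every $R_i\in M_{\mathrm{oth}}$ has $h_i\in(\mu_h OPT,\delta_h OPT)$ --- the only way a rectangle with $w_i\notin(\mu_w W,\delta_w W)$ can fail to be large, tall, vertical, horizontal or small is to have its \emph{height} in the gap --- hence $h_i>\mu_h OPT$. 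These one–sided lower bounds together with $a(M)\le \eps^{k}OPT\,W$ give $\sum_{R_i\in M_{\mathrm{oth}}}w_i\le \eps^{k}W/\mu_h$ and $\sum_{R_i\in M_{\mathrm{mw}}}h_i\le \eps^{k}OPT/\mu_w$, which is what drives everything.

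For $M_{\mathrm{oth}}$ I would run NFDH into a strip of width $W$, processing rectangles by non–increasing height: each NFDH shelf is a vertical container of width $W$, and by Lemma~\ref{lem:nfdhStripPacking} the total height is at most $h_{\max}(M_{\mathrm{oth}})+2a(M_{\mathrm{oth}})/W\le \delta_h OPT+2\eps^{k}OPT=O(\eps)OPT$, so the packing fits inside $B_{M,hor}$. After giving the at most $2\eps^{k}/\mu_h$ rectangles with $w_i>W/2$ their own containers, every remaining shelf but the last is filled to width $>W/2$, so the number of shelves is $O(\eps^{k}/\mu_h)\le \tfrac{3\eps}{\mu_h}$ once $k$ is large enough. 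For $M_{\mathrm{mw}}$ I would run the mirror–image sweep, rotated NFDH into a strip of \emph{height} $\alpha OPT$ processing by non–increasing width, so that each ``column'' is a horizontal container of height at most $\alpha OPT$ (legitimate since $h_{\max}(M_{\mathrm{mw}})\le\alpha OPT$). The total width used is at most $w_{\max}(M_{\mathrm{mw}})+O(a(M_{\mathrm{mw}})/(\alpha OPT))\le \delta_w W+O(\eps^{k}/\alpha)W$, which is $\le \tfrac{\gamma}{3}W=2\delta_w W$ as soon as $\eps^{k}$ is small enough relative to $\alpha\delta_w$; and the analogous fill argument --- giving the at most $O(\eps^{k}/(\alpha\mu_w))$ rectangles with $h_i>\alpha OPT/2$ their own columns --- bounds the number of columns by $O(\eps^{k}/(\alpha\mu_w))\le \tfrac{\gamma}{3\mu_w}$, recalling that $\mu_w=\tfrac{\eps\mu_h}{12}$, $\delta_w=\tfrac{\eps\delta_h}{12}$ and hence $\tfrac{\gamma}{3}=2\delta_w$ and $\tfrac{\gamma}{3\mu_w}=\tfrac{2\delta_h}{\mu_h}$.

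The one delicate point is arranging for all the displayed inequalities to hold simultaneously: each of them asks that $\eps^{k}$ be negligible against a fixed positive power of $\delta_h,\delta_w,\mu_h,\mu_w$, while the bad case is that the value $\delta_h$ returned by Lemma~\ref{lem:mediumrectanglesarea} is the \emph{smallest} element of $\Delta$, namely $f^{(T-1)}(\eps)$ with $T=2(1/\eps)^{k}$ --- a priori astronomically below $\eps^{k}$. I would fix this by choosing $f$ to decay very slowly, e.g.\ $f(x)=x\bigl(1-k\eps^{k}\bigr)$: then even $f^{(T-1)}(\eps)\ge \eps\,e^{-O(k)}\ge \eps^{k}$ for $\eps$ small enough, while $\delta_h/\mu_h=1/(1-k\eps^{k})$ is still a genuine (if tiny) multiplicative gap, so this choice is compatible with every other use of $f$ in the chapter and with choosing $k$ afterwards large enough that $\eps^{k}$ is below all the relevant constants. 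I expect checking this joint consistency to be the main obstacle; the remainder is routine NFDH bookkeeping. (The small rectangles in $S$ are not packed here; they are handled separately, see Section~\ref{sec:smallpack}.)
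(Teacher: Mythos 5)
Your packing argument is essentially the paper's: the same split of $M$ according to which dimension lies in the medium gap, NFDH into a $W \times O(\eps)OPT$ box whose shelves are the vertical containers, rotated NFDH into the $(\frac{\gamma}{3}W) \times \alpha OPT$ box whose shelves are the horizontal containers, with the box dimensions and shelf counts charged to $a(M)\le \eps^k\, OPT\cdot W$. Two remarks on your $B_{M,hor}$ analysis: the paper counts shelves directly (every rectangle there has height $>\mu_h OPT$ and the total packing height is at most $3\eps\, OPT$, hence at most $\frac{3\eps}{\mu_h}$ shelves), which makes your wide/narrow detour unnecessary; and if you do give the rectangles of width $>W/2$ their own containers, you must also account for the extra height they add to the box --- this is fine, but only because each such rectangle has width $>W/2$, so their total height is at most $2a(M)/W\le 2\eps^k OPT$, a point your write-up omits.

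The genuine problem is your resolution of the ``delicate point.'' Choosing $f(x)=x(1-k\eps^k)$ is not admissible: the entire purpose of $f$ in this chapter is to create a huge gap between $\mu_h=f(\delta_h)$ and $\delta_h$, because later lemmas impose $\mu_h\le \eps\delta_w/K_B$ with $K_B=(\frac{1}{\eps})(\frac{1}{\delta_w})^{O(1)}$ (Lemma~\ref{lem:boxpartition}), $\mu_h\le \eps/K_F$ and $\mu_w\le \gamma/(3K_F)$ with $K_F\le (\frac{1}{\eps\delta_w})^{O(1/(\delta_w\eps))}$ (Lemma~\ref{lem:structural_containers}), and $\mu_w\mu_h\le \frac{(1-2\eps)\eps^2}{10K_F^2}$ (Lemma~\ref{lem:smallpacking}). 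With your $f$ one has $\mu_h\approx\delta_h$, and all of these fail badly, so the claim that your choice ``is compatible with every other use of $f$ in the chapter'' is false. The paper instead keeps a fast-decaying $f$ (ultimately $f(x)=(\eps x)^{C/(\eps x)}$) and reads ``$k$ sufficiently large'' exactly as the inequality $6\eps^k\le \gamma/6$, i.e.\ $k\ge \log_{1/\eps}(36/\gamma)$, fixing $k=\lceil\log_\eps(\gamma/36)\rceil$ at the end; within the present lemma that is all that is required, since the needed largeness of $k$ is part of the hypothesis. So the correct move is to state the threshold on $k$ in terms of $\gamma$ as the paper does, not to flatten $f$: your proposed $f$ would prove this lemma while breaking the lemmas it feeds into.
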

\begin{proof}
	We first pack rectangles in $\mathcal{A} := \{R_i \in M : h_i \in (\mu_h OPT, \delta_h OPT )\}$ using NFDH. From Lemma~\ref{lem:nfdhStripPacking} we know that the height of the packing is at most $h_{\max}(\mathcal{A}) +  \frac{2\cdot a(\mathcal{A})}{W}$. Since $h_{max}(\mathcal{A}) \leq \delta_h OPT < \eps OPT$ and $a(\mathcal{A}) \le a(M) \le \eps^k \cdot OPT \cdot W \le \eps \cdot OPT \cdot W$, because of Lemma~\ref{lem:mediumrectanglesarea}, the resulting packing fits into a box $B_{M,hor}$ of size $W \times (3\eps \cdot OPT)$. As $h_i \ge \mu_h OPT$, the number of shelves used by NFDH is at most $\frac{3\eps}{\mu_h}$: this also bounds the number of vertical containers needed.

	We next pack $\mathcal{A}' := M \setminus \mathcal{A}$ into a box $B_{M,ver}$ of size $(\frac{\gamma}{3} W)  \times (\alpha OPT)$. Recall that $\gamma := \frac{\eps \delta_h}{2}$. Note that, for each $R_i \in \mathcal{A}'$, we have $w_i \in (\mu_w W, \delta_w W)$ and $h_i \leq \alpha OPT$.  By ideally rotating the box and the rectangles by $90^\circ$, we can apply the NFDH algorithm. Lemma~\ref{lem:nfdhStripPacking} implies that we can pack all the rectangles if the width of the box is at least $w_{\max}(\mathcal{A}') + \frac{2a(\mathcal{A}')}{\alpha OPT}$. Now observe that
	$$
	w_{\max}(\mathcal{A}') \leq \delta_w W = \frac{\eps \delta_h}{12} W = \frac{\gamma}{6}W
	$$
	and also, since $\alpha \geq 1/3$:
	$$
	\frac{2a(\mathcal{A}')}{\alpha OPT} \leq \frac{6a(\mathcal{A}')}{OPT} \leq 6 \eps^k W \le \frac{\gamma}{6}W,
	$$
	where the last inequality is true for any $k \geq \log_{1/\eps}{(36 / \gamma)}$. Similarly to the previous case, the number of shelves is at most $ \frac{\gamma}{3 \mu_w}$. Thus all the rectangles can be packed into at most $\frac{\gamma}{3 \mu_w}$  horizontal containers.
\end{proof}

We say that a rectangle $R_i$ is \emph{cut} by a box $B$ if both $R_i\setminus B$ and $B\setminus R_i$ are non-empty (considering both $R_i$ and $B$ as open regions with an implicit embedding on the plane). We say that a rectangle $R_i \in H$ (resp. $R_i \in T \cup V$) is \emph{nicely cut} by a box $B$ if $R_i$ is cut by $B$ and their intersection is a rectangular region of width $w_i$ (resp. height $h_i$). Intuitively, this means that an edge of $B$ cuts $R_i$ along its longest side (see Figure~\ref{fig:cut_rectangles}).

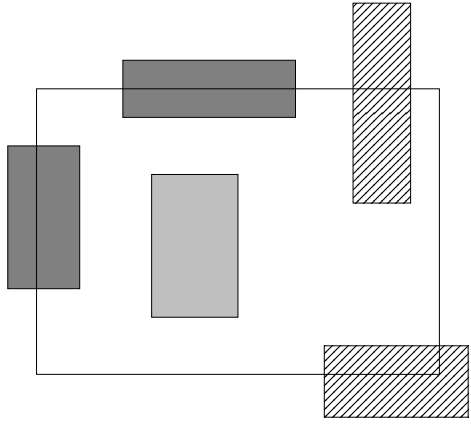
\begin{figure}
	\centering
	\resizebox{!}{5.5cm}{
		\begin{tikzpicture}
		
		
		\draw[solid, fill=gray] (-0.5,1.5) rectangle (0.75,4);
		\draw[solid, fill=gray] (1.5,4.5) rectangle (4.5,5.5);
		
		
		\draw[pattern=north east lines, pattern color=black] (5.5,3) rectangle (6.5,6.5);
		\draw[pattern=north east lines, pattern color=black] (5,-0.75) rectangle (7.5,0.5);
		
		
		\draw[solid, fill=lightgray] (2,1) rectangle (3.5,3.5);
		
		
		\draw[solid] (0,0) rectangle (7,5);
		
		\end{tikzpicture}}
	\caption{A box and some overlapping rectangles. Dark gray rectangles are \emph{nicely cut} by the box; hatched rectangles are \emph{cut} but \emph{not nicely cut} by the box, and light gray rectangle is not cut by the box.}
	\label{fig:cut_rectangles}
\end{figure}

Now it remains to pack $L \cup H\cup T \cup V$: The following lemma, taken from \cite{nw16} modulo minor technical adaptations, describes an almost optimal packing of those rectangles. 

\begin{lemma}\label{lem:boxpartition}
	There is an integer $K_B=(\frac{1}{\eps})(\frac{1}{\delta_w})^{O(1)}$ such that, assuming $\mu_h \le \frac{\eps \delta_w }{K_B}$, there is a partition of the region $B_{OPT'}:=[0,W]\times [0,OPT']$ into a set $\mathcal{B}$ of at most $K_B$ boxes and a packing of the rectangles in $L\cup T\cup V\cup H$ such that:
	\begin{itemize}
		\item each box has size equal to the size of some $R_i\in L$ (\emph{large box}), or has height at most $\delta_h OPT'$ (\emph{horizontal box}), or has width at most $\delta_w W$ (\emph{vertical box});
		\item each $R_i \in L$ is contained into a large box of the same size;
		\item each $R_i\in H$ is contained into a horizontal box or is cut by some box. Furthermore, the total area of horizontal cut rectangles is at most $W\cdot O(\eps)OPT'$;
		\item each $R_i\in T\cup V$ is contained into a vertical box or is nicely cut by some vertical box. 
	\end{itemize}
\end{lemma}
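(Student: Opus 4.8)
The plan is to start from an optimal packing of height $OPT$, apply Lemma~\ref{lem:verticalrounding} to obtain a slightly suboptimal packing of height $OPT' \le (1+\eps)OPT$ in which every rectangle in $L \cup T \cup V$ has height an integer multiple of $\gamma \cdot OPT$ and bottom $y$-coordinate an integer multiple of $\gamma \cdot OPT$, and then carve the region $[0,W]\times[0,OPT']$ into $O_\eps(1)$ boxes by a constant-depth recursive partitioning procedure. First I would record the position of every large rectangle in $L$: since each has width $\ge \delta_w W$ and height $\ge \delta_h OPT$, there are at most $\frac{1}{\delta_w \delta_h}$ of them, and I declare a large box coinciding with each. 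Removing these boxes from $B_{OPT'}$ leaves a polygonal region which, by extending the horizontal and vertical edges of the large boxes across the strip, decomposes into $O(1/(\delta_w\delta_h))^2 = (1/\delta_w)^{O(1)}$ rectangular cells; this is the skeleton of the partition.

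Next I would refine each such cell. Inside a cell I look at the rectangles of $T \cup V$ (rounded) that intersect it: because every vertical line meets at most two tall rectangles and tall/vertical rectangles have width $< \delta_w W$, a standard argument (the one from \cite{nw16}) shows that the ``fat'' vertical rectangles partition the cell into at most a constant number of vertical sub-boxes of width $\le \delta_w W$ — using the rounded $y$-coordinates, there are only $O(1/\gamma)$ distinct horizontal grid lines, so only $O(1/\gamma)$ candidate vertical cuts per cell. Analogously the horizontal rectangles of $H$, having height $\le \mu_h OPT$ and width $\ge \delta_w W$, when swept give horizontal sub-boxes of height $\le \delta_h OPT'$. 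The rectangles that end up straddling a boundary line of one of these sub-boxes are exactly the ones that get \emph{cut}: for a horizontal rectangle this cut is along its long (horizontal) side since the box it sticks out of has small height, so it is cut (and one shows, as in \cite{nw16}, that the total area of such horizontally cut rectangles over all boxes is bounded by $W \cdot O(\eps)OPT'$ by a shifting/charging argument against the $O(1/\gamma)$ grid lines and the hypothesis $\mu_h \le \eps\delta_w/K_B$); for a tall or vertical rectangle the straddling is along its long (vertical) side, i.e.\ it is \emph{nicely cut} by a vertical box. The condition $\mu_h \le \frac{\eps\delta_w}{K_B}$ is exactly what is needed so that the horizontal cut loss, accumulated over the $K_B$ boxes, stays within $O(\eps)OPT'$.

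Finally I would verify the bookkeeping: the number of boxes produced is the product of the number of skeleton cells, $(1/\delta_w)^{O(1)}$, and the constant number $(1/\eps)(1/\delta_w)^{O(1)}$ of sub-boxes per cell, which gives the claimed $K_B = (1/\eps)(1/\delta_w)^{O(1)}$; every box is by construction a large box (size of some $R_i \in L$), a horizontal box (height $\le \delta_h OPT'$), or a vertical box (width $\le \delta_w W$); each large rectangle sits in its own large box; each horizontal rectangle is either inside a horizontal box or cut with the stated total-area bound; and each tall or vertical rectangle is either inside a vertical box or nicely cut by a vertical box. The main obstacle I expect is the charging argument bounding the area of the horizontally cut rectangles: one must carefully set up the shift so that each cut horizontal rectangle is charged to a distinct slab of width $\ge \delta_w W$ and height $O(\mu_h OPT) \cdot (\text{number of cut rectangles crossing a line})$, and then use $\mu_h \le \eps\delta_w/K_B$ together with the bound $K_B$ on the number of boxes to conclude the total is $O(\eps)\cdot W\cdot OPT'$ — this is precisely the technical heart borrowed from \cite{nw16} and the place where the parameter hierarchy $\mu_h \ll \delta_w$ is used.
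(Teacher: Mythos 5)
Your proposal diverges from the paper in a way that leaves a real gap: the paper does not reconstruct the box partition at all, it obtains it by a black-box application of Lemma~6 of \cite{nw16} with parameter $\delta=\delta_w$ (after observing that, since $\mu_h<\delta_w$ and items with height in $[\delta_w,\delta_h)$ are medium, the items that are ``large with respect to $\delta_w$'' are exactly $L$), and then only adds a short accounting step: nicely cut horizontal rectangles number at most $2/\delta_w$ per box, hence at most $2K_B/\delta_w$ in total, each of area at most $\mu_h OPT\cdot W$, which is $O(\eps)OPT\cdot W$ precisely when $\mu_h\le \eps\delta_w/K_B$; the horizontal rectangles that are cut but \emph{not} nicely cut are charged to the $\eps OPT'\cdot W$ area guarantee that is already part of the cited lemma. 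Your sketch instead tries to rebuild the partition, and the central step is not sound as stated. First, extending the horizontal edges of the large boxes across the whole strip can cut a tall or vertical rectangle along its \emph{short} side, which violates the requirement that items of $T\cup V$ be contained in or only \emph{nicely} cut by vertical boxes; the construction in \cite{nw16} is specifically designed to avoid such cuts, and nothing in your skeleton recovers that. Second, the argument ``rounded $y$-coordinates give only $O(1/\gamma)$ candidate vertical cuts per cell'' does not hold: Lemma~\ref{lem:verticalrounding} rounds only heights and $y$-coordinates, so the number of $x$-positions at which the profile of tall/vertical rectangles changes can be as large as $n$, and grouping maximal runs into sub-boxes neither respects the width bound $\delta_w W$ nor obviously yields $O_{\eps,\delta_w}(1)$ boxes. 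That constant-size partition with the ``contained or nicely cut'' property is exactly the nontrivial content of the cited lemma (the rearrangement into few sub-boxes happens only later, in Lemma~\ref{lem_rearrangeTUP}, under different hypotheses), so deferring to ``the standard argument'' here is deferring the entire proof.

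A secondary issue is the area bound for cut horizontal rectangles: you treat all cut horizontals with one shifting/charging argument against the grid, but the two kinds of cuts need different bounds. Only the nicely cut ones are controlled by the hypothesis $\mu_h\le\eps\delta_w/K_B$ (via the count $2/\delta_w$ per box over $K_B$ boxes); the ones cut by a vertical box edge are not few and are instead bounded in area by the $\eps OPT'\cdot W$ guarantee internal to Lemma~6 of \cite{nw16}. As written, your charging argument is not worked out and would not by itself yield the stated $W\cdot O(\eps)OPT'$ bound.
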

\begin{proof}
	We apply Lemma~6 in~\cite{nw16}, where we set the parameter $\delta$ to $\delta_w$. Recall that $\delta_w < \delta_h$; by requiring that $\mu_h < \delta_w$, and since rectangles with height in $[\delta_w, \delta_h)$ are in $M$, we have that $\{R_i \in \mathcal{R} \setminus M \,:\, w_i \geq \delta_h W \mbox{ and } h_i \geq \delta_h OPT \} = \{R_i \in \mathcal{R} \setminus M \,:\, w_i \geq \delta_w W \mbox{ and } h_i \geq \delta_h OPT \}$.
	
	Let $H_{cut}\subseteq H$ be the set of horizontal rectangles that are nicely cut by a box. Since rectangles in $H_{cut}$ satisfy $w_i \ge \delta_w W$, at most $\frac{2}{\delta_w}$ of them are nicely cut by a box, and there are at most $K_B$ boxes. Hence, their total area is at most $\frac{\mu_h OPT \cdot W \cdot 2K_B}{\delta_w}$, which is at most $2\eps \cdot OPT\cdot W$, provided that $\mu_h \le \eps\cdot \frac{\delta_w}{K_B}$. Since Lemma~6 in~\cite{nw16} implies that the area of the cut horizontal rectangles that are not nicely cut is at most $\eps OPT' \cdot W$, the total area of horizontal cut rectangles is at most $3 \eps OPT' \cdot W$.
\end{proof}

We denote the sets of vertical, horizontal, and large boxes by  $\mathcal{B}_V, \mathcal{B}_H$ and $\mathcal{B}_L$, respectively. Observe that $\mathcal{B}$ can be guessed in PPT. We next use $T_{cut}\subseteq T$ and $V_{cut}\subseteq V$ to denote tall and vertical cut rectangles in the above lemma, respectively. Let us also define $T_{box}=T\setminus T_{cut}$ and $V_{box}=V\setminus V_{cut}$.

Using standard techniques (see e.g. \cite{nw16}), we can pack all the rectangles excluding the ones contained in vertical boxes in a convenient manner. This is summarized in the following lemma.
\begin{lemma}\label{lem:structural_boxes}
	Given $\mathcal{B}$ as in Lemma \ref{lem:boxpartition} and assuming $\mu_w \le \frac{\gamma \delta_h}{6K_B(1+\eps)}$, there exists a packing of $L\cup H\cup T \cup V$ such that: 
	\begin{enumerate}
		\item all the rectangles in $L$ are packed in $\mathcal{B}_L$;
		\item all the rectangles in $H$ are packed in $\mathcal{B}_H$ plus an additional box $B_{H,cut}$ of size $W\times O(\eps)OPT$;
		\item all the rectangles in $T_{cut}\cup T_{box}\cup V_{box}$ are packed as in Lemma \ref{lem:boxpartition};
		\item all the rectangles in $V_{cut}$ are packed in an additional vertical box $B_{V,cut}$ of size $(\frac{\gamma}{3} W)  \times (\alpha OPT)$.
	\end{enumerate}
\end{lemma}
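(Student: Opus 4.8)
The plan is to start from the packing of $L\cup T\cup V\cup H$ guaranteed by Lemma~\ref{lem:boxpartition} and to move \emph{only} the rectangles that are cut by some box, leaving everything else exactly where it is. Large rectangles are already contained in large boxes of $\mathcal{B}_L$, and the rectangles of $T_{box}\cup V_{box}$ together with the nicely-cut tall rectangles $T_{cut}$ are already packed as prescribed by Lemma~\ref{lem:boxpartition}; removing other rectangles from a feasible packing keeps it feasible, so properties (1) and (3) will be inherited verbatim. The whole work is therefore to re-place the cut horizontal rectangles into $B_{H,cut}$ (property (2)) and the cut vertical rectangles into $B_{V,cut}$ (property (4)).

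For property (2): the horizontal rectangles that are not cut by any box lie inside the horizontal boxes $\mathcal{B}_H$ and stay there. By Lemma~\ref{lem:boxpartition}, the cut horizontal rectangles have total area at most $O(\eps)OPT'\cdot W=O(\eps)OPT\cdot W$, and each of them is horizontal, so its height is at most $\mu_h OPT\le \eps OPT$. I would remove all of them from their (cut) positions and repack them with NFDH into a fresh box of width $W$; by Lemma~\ref{lem:nfdhStripPacking}, the resulting height is at most $h_{\max}+\frac{2a}{W}\le \eps OPT+O(\eps)OPT=O(\eps)OPT$, so a box $B_{H,cut}$ of size $W\times O(\eps)OPT$ suffices (and the NFDH shelves can be read as vertical containers, if the container structure is needed downstream).

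For property (4): here the key quantitative step is to bound $|V_{cut}|$. Each $R_i\in V_{cut}$ is nicely cut by some vertical box, so $R_i$ crosses a vertical edge of that box; there are at most $2K_B$ such edges, each a vertical segment of length at most $OPT'$. The rectangles of $V_{cut}$ crossing a fixed such edge occupy pairwise disjoint vertical intervals along it, and each has height at least $\delta_h OPT$ (since $R_i\in V$), so at most $OPT'/(\delta_h OPT)\le (1+\eps)/\delta_h$ of them cross any one edge; hence $|V_{cut}|\le 2K_B(1+\eps)/\delta_h$. Since each $R_i\in V_{cut}$ satisfies $w_i\le \mu_w W$, their total width is at most $\frac{2K_B(1+\eps)}{\delta_h}\mu_w W\le \frac{\gamma}{3}W$, where the last inequality is exactly the hypothesis $\mu_w\le \frac{\gamma\delta_h}{6K_B(1+\eps)}$; and each has $h_i\le \alpha OPT$. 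Placing these rectangles side by side (which is precisely a vertical container) thus fits all of $V_{cut}$ into a box $B_{V,cut}$ of size $(\frac{\gamma}{3}W)\times(\alpha OPT)$, proving property (4). All four properties then hold in the single modified packing.

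The construction is essentially bookkeeping once Lemmas~\ref{lem:boxpartition} and~\ref{lem:nfdhStripPacking} are available; the only genuinely delicate point is the counting bound on $|V_{cut}|$ and verifying that it matches the stated constraint on $\mu_w$ — this is exactly where the assumption $\mu_w\le \frac{\gamma\delta_h}{6K_B(1+\eps)}$ is consumed, and it is tight up to the constant in the denominator.
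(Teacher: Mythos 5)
Your proposal is correct and follows essentially the same route as the paper: points (1) and (3) are inherited by leaving those rectangles in place, the cut horizontal rectangles are repacked into $B_{H,cut}$ via NFDH using the $O(\eps)OPT'\cdot W$ area bound from Lemma~\ref{lem:boxpartition}, and $V_{cut}$ is handled by the same counting bound of at most $\frac{2K_B(1+\eps)}{\delta_h}$ nicely cut vertical rectangles, whose total width $\le \frac{\gamma}{3}W$ exactly under the stated hypothesis on $\mu_w$. The only cosmetic difference is that you count per vertical edge while the paper counts per box, which yields the identical bound.
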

\begin{proof}
	Note that there are at most $1/(\delta_w \delta_h)$ rectangles in $L$ and at most $4K_B$ rectangles in $T_{cut}$, since the left (resp. right) side of each box can nicely cut at most $2$ tall rectangles; this is enough to prove points~(1) and~(3).
	
	Thanks to Lemma \ref{lem:boxpartition}, the total area of horizontal cut rectangles is at most $O(\eps OPT' \cdot W)$. By Lemma~\ref{lem:nfdhPack}, we can remove them from the packing and pack them in the additional box $B_{H, cut}$ using NFDH algorithm, proving point~(2).
	
	Each box of the partition can nicely cut at most $\frac{2(1+\eps)}{\delta_h}$ rectangles in $V$; thus, the cut vertical rectangles are at most $\frac{2K_B(1+\eps)}{\delta_h}$. Since the width of each vertical rectangle is at most $\mu_w W$, they can be removed from the packing and placed in $B_{V, cut}$, piled side by side, as long as $\frac{2K_B(1+\eps)}{\delta_h} \cdot \mu_w W\leq \frac{\gamma}{3} W$, which is equivalent to $\mu_w \le \frac{\gamma \delta_h}{6K_B(1+\eps)}$. This proves point~(4).
\end{proof}

We will pack all the rectangles (essentially) as in \cite{nw16}, with the exception of $T_{box}\cup V_{box}$ where we exploit a refined approach. This is the technical heart of our result, and it is discussed in the next section. 

\section{A repacking lemma}
\label{sec:repack}
We next describe how to pack rectangles in $T_{box}\cup V_{box}$. In order to highlight our contribution, we first describe how the approach by \cite{nw16} works.

It is convenient to assume that all the rectangles in $V_{box}$ are sliced vertically into sub-rectangles of width $1$ each\footnote{For technical reasons, slices have width $1/2$ in~\cite{nw16}. For our algorithm, slices of width $1$ suffice.}. Let $V_{sliced}$ be such \emph{sliced} rectangles. We will show how to pack all the rectangles in $T_{box}\cup V_{sliced}$ into a constant number of sub-boxes. Using standard techniques it is then possible to pack $V_{box}$ into the space occupied by $V_{sliced}$ plus an additional box $B_{V,round}$ of size $(\frac{\gamma}{3}W)\times \alpha OPT$.
See Lemma~\ref{lem:fractional_to_integral} for more details.

We next focus on a specific vertical box $\overline{B}$, say of size $\overline{w}\times \overline{h}$ (see Figure \ref{fig_pseudo-rectangles1}). Let $\overline{T}_{cut}$ be the tall rectangles cut by $\overline{B}$. Observe that there are at most $4$ such rectangles ($2$ on the left/right side of $\overline{B}$).
The rectangles in $\overline{T}_{cut}$ are packed as in Lemma \ref{lem:structural_boxes}.
Let also $\overline{T}$ and $\overline{V}$ be the tall rectangles and sliced vertical rectangles, respectively, originally packed completely inside  $\overline{B}$.

They show that it is possible to pack $\overline{T}\cup\overline{V}$ into a constant size set $\overline{\mathcal{S}}$ of sub-boxes contained inside $\overline{B}-\overline{T}_{cut}$, plus an additional box $\overline{D}$ of size $\overline{w}\times (1+\eps-2\alpha)\overline{h}$. Here $\overline{B}-\overline{T}_{cut}$ denotes the region inside $\overline{B}$ not contained in $\overline{T}_{cut}$.
In more detail, they start by considering each rectangle $R_i\in \overline{T}$. Since $\alpha\geq\frac{1}{3}$ by assumption, one of the regions above or below $R_i$ cannot contain another tall rectangle in $\overline{T}$, say the first case applies (the other one being symmetric). Then we move $R_i$ up so that its top side overlaps with the top side of $\overline{B}$. The sliced rectangles in $\overline{V}$ that are covered this way are shifted right below $R$ (note that there is enough free space by construction). At the end of the process all the rectangles in $\overline{T}$ touch at least one of the top and bottom side of $\overline{B}$ (see Figure~\ref{fig_pseudo-rectangles2}). Note that no rectangle is discarded up to this point.

Next, we partition the space inside $\overline{B}-(\overline{T}\cup \overline{T}_{cut})$ into maximal height unit-width vertical stripes. We call each such stripe a \emph{free rectangle} if both its top and bottom side overlap with the top or bottom side of some rectangle in $\overline{T}\cup \overline{T}_{cut}$, and otherwise a \emph{pseudo rectangle} (see Figure \ref{fig_pseudo-rectangles3}). We define the $i$-th free rectangle to be the free rectangle contained in stripe $[i-1,i]\times [0,\overline{h}]$.

Note that all the free rectangles are contained in a horizontal region of width $\overline{w}$ and height at most $\overline{h}-2\alpha OPT \le \overline{h}-2\alpha \frac{OPT'}{1+\eps}
\le \overline{h}(1-\frac{2\alpha}{1+\eps}) \le \overline{h}(1+\eps-2\alpha)$ contained in the central part of $\overline{B}$.
Let $\overline{V}_{disc}$ be the set of  (sliced vertical) rectangles contained in the free rectangles.
Rectangles in $\overline{V}_{disc}$ can be obviously packed inside $\overline{D}$. For each corner $Q$ of the box $\overline{B}$, we consider the maximal rectangular region that has $Q$ as a corner and only contains pseudo rectangles whose top/bottom side overlaps with the bottom/top side of a rectangle in $\overline{T}_{cut}$; there are at most $4$ such non-empty regions, and for each of them we define a \emph{corner sub-box}, and we call the set of such sub-boxes $\overline{B}_{corn}$ (see Figure \ref{fig_pseudo-rectangles3}). The final step of the algorithm is to rearrange horizontally the pseudo/tall rectangles so that pseudo/tall rectangles of the same height are grouped together \emph{as much as possible} (modulo some technical details). The rectangles in $\overline{B}_{corn}$ are not moved.
The \emph{sub-boxes} are induced by maximal consecutive subsets of pseudo/tall rectangles of the same height touching the top (resp., bottom) side of $\overline{B}$
(see Figure \ref{fig_pseudo-rectangles4}).
We crucially remark that, by construction, the height of each sub-box (and of $\overline{B}$) is a multiple of $\gamma OPT$.

By splitting each discarded box $\overline{D}$ into two halves $\overline{B}_{disc,top}$ and $\overline{B}_{disc,bot}$, and replicating the packing of boxes inside $B_{OPT'}$, it is possible to pack all the discarded boxes into two boxes $B_{disc,top}$ and $B_{disc,bot}$, both of size $\frac{W}{2}\times (1+\eps-2\alpha)OPT'$. 

A feasible packing of boxes (and hence of the associated rectangles) of height $(1+\max\{\alpha,2(1-2\alpha)\}+O(\eps))OPT$ is then obtained as follows. We first pack $B_{OPT'}$ at the base of the strip, and then on top of it we pack $B_{M,hor}$, two additional boxes $B_{H, round}$ and $B_{H, cut}$ (which will be used to repack the horizontal items), and a box $B_S$ (which will be used to pack some of the small items). The latter $4$ boxes all have width $W$ and height $O(\eps OPT')$. On the top right of this packing we place $B_{disc,top}$ and $B_{disc,bot}$, one on top of the other. Finally, we pack $B_{M,ver}$, $B_{V,cut}$ and $B_{V,round}$ on the top left, one next to the other. See Figure \ref{fig_packing_NW} for an illustration.
The height is minimized for $\alpha=\frac{2}{5}$, leading to a $7/5+O(\eps)$ approximation.

The main technical contribution of our result is to show how it is possible to repack a subset of $\overline{V}_{disc}$ into the \emph{free} space inside $\overline{B}_{cut}:=\overline{B}-\overline{T}_{cut}$ not occupied by sub-boxes, so that the residual sliced rectangles can be packed into a single discarded box $\overline{B}_{disc}$ of size $(1-\gamma)\overline{w}\times (1+\eps-2\alpha)\overline{h}$ (\emph{repacking lemma}). See Figure 
\ref{fig_pseudo-rectangles5}. This apparently minor saving is indeed crucial: with the same approach as above all the discarded sub-boxes $\overline{B}_{disc}$ can be packed into a single \emph{discarded box} $B_{disc}$ of size $(1-\gamma)W\times (1+\eps-2\alpha)OPT'$. Therefore, we can pack all the previous boxes as before, and $B_{disc}$ on the top right. Indeed, the total width of $B_{M,ver}$, $B_{V,cut}$ and $B_{V,round}$ is at most $\gamma W$ for a proper choice of the parameters. See Figure \ref{fig_our_packing} for an illustration. Altogether the resulting packing has height $(1+\max\{\alpha,1-2\alpha\}+O(\eps))OPT$. This is minimized for $\alpha=\frac{1}{3}$, leading to the claimed $4/3+O(\eps)$ approximation. 

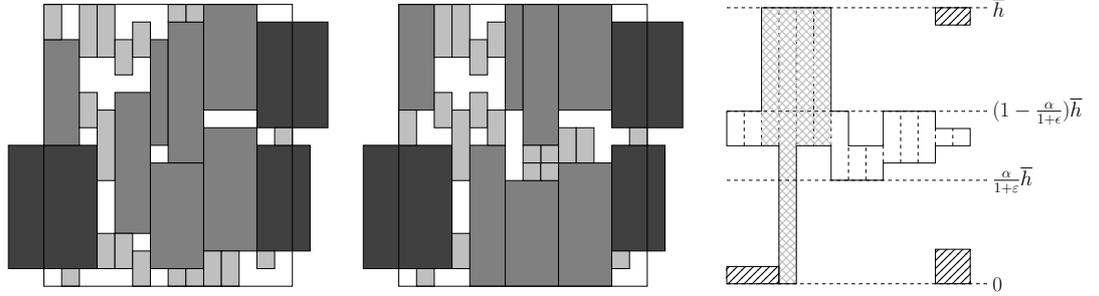
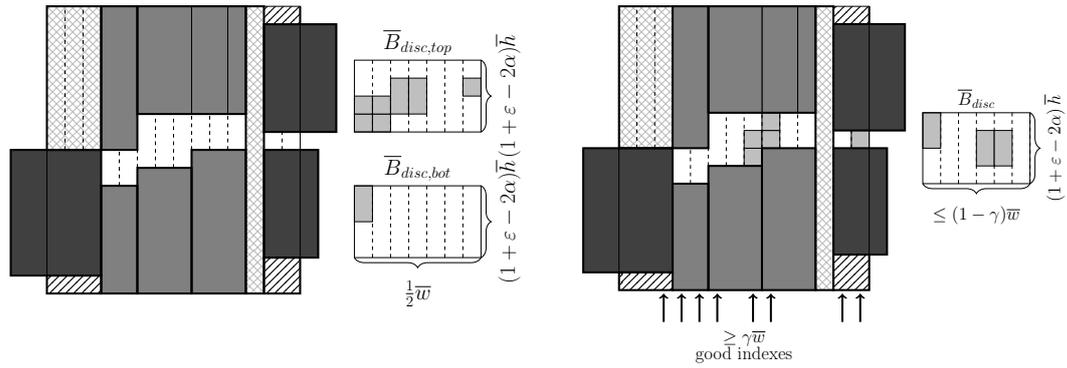
\begin{figure}
	\captionsetup[subfigure]{justification=centering}
	\hspace{-15pt}
	\begin{subfigure}[b]{.32\textwidth}
		\centering
		\resizebox{!}{4cm}{
			\begin{tikzpicture}
			
			\draw[fill=darkgray] (-1,0.5) rectangle (1.5,4);
			\draw[fill=darkgray] (6,1) rectangle (7.5,4);
			\draw[fill=darkgray] (6,4.5) rectangle (8,7.5);
			
			
			\draw[fill=gray] (2,1.5) rectangle (3,5.5);
			\draw[fill=gray] (3,0.5) rectangle (4.5,3.5);
			\draw[fill=gray] (4.5,1) rectangle (6,4.5);
			\draw[fill=gray] (0,4) rectangle (1,7);
			\draw[fill=gray] (3,4) rectangle (3.5,7);
			\draw[fill=gray] (3.5,3.5) rectangle (4.5,7.5);
			\draw[fill=gray] (4.5,5) rectangle (6,8);
			
			
			\draw[fill=lightgray] (0.5,0) rectangle (1,0.5);
			\draw[fill=lightgray] (1.5,0.5) rectangle (2,1.5);
			\draw[fill=lightgray] (2,0.5) rectangle (2.5,1.5);
			\draw[fill=lightgray] (2.5,0) rectangle (3,1);
			\draw[fill=lightgray] (3.5,0) rectangle (4,0.5);
			\draw[fill=lightgray] (4.5,0) rectangle (5,1);
			\draw[fill=lightgray] (5,0) rectangle (5.5,1);
			\draw[fill=lightgray] (6,0.5) rectangle (6.5,1);
			\draw[fill=lightgray] (4,0) rectangle (4.5,0.5);
			\draw[fill=lightgray] (4,7.5) rectangle (4.5,8);
			\draw[fill=lightgray] (6.5,4) rectangle (7,4.5);
			
			
			\draw[fill=lightgray] (0,7) rectangle (0.5,8);
			\draw[fill=lightgray] (1,6.5) rectangle (1.5,8);
			\draw[fill=lightgray] (1,4.5) rectangle (1.5,5.5);
			\draw[fill=lightgray] (1.5,6.5) rectangle (2,8);
			\draw[fill=lightgray] (1.5,3) rectangle (2,5);
			\draw[fill=lightgray] (2,6) rectangle (2.5,7);
			\draw[fill=lightgray] (2.5,6.5) rectangle (3,7.5);
			\draw[fill=lightgray] (3.5,7.5) rectangle (4,8);
			
			
			\draw[thick] (0,0) rectangle (7,8);
			\draw[color=white] (0,-0.25) rectangle (0,-0.1);
			\draw[color=white] (0,8.3) rectangle (0,8.1);
			
			\end{tikzpicture}}
		\caption{Original packing in a vertical box $\overline{B}$ after removing $V_{cut}$. Gray rectangles correspond to $\overline{T}$, dark gray rectangles to $\overline{T}_{cut}$ and light gray rectangles to $\overline{V}$.}
		\label{fig_pseudo-rectangles1}
	\end{subfigure}%
	\hspace{2pt}
	\begin{subfigure}[b]{.32\textwidth}
		\centering
		\resizebox{!}{4cm}{
			\begin{tikzpicture}
			
			\draw[fill=darkgray] (-1,0.5) rectangle (1.5,4);
			\draw[fill=darkgray] (6,1) rectangle (7.5,4);
			\draw[fill=darkgray] (6,4.5) rectangle (8,7.5);
			
			
			\draw[fill=gray] (2,0) rectangle (3,4);
			\draw[fill=gray] (3,0) rectangle (4.5,3);
			\draw[fill=gray] (4.5,0) rectangle (6,3.5);
			\draw[fill=gray] (0,5) rectangle (1,8);
			\draw[fill=gray] (3,5) rectangle (3.5,8);
			\draw[fill=gray] (3.5,4) rectangle (4.5,8);
			\draw[fill=gray] (4.5,5) rectangle (6,8);
			
			
			\draw[fill=lightgray] (0.5,0) rectangle (1,0.5);
			\draw[fill=lightgray] (1.5,0.5) rectangle (2,1.5);
			\draw[fill=lightgray] (2,4.5) rectangle (2.5,5.5);
			\draw[fill=lightgray] (2.5,4) rectangle (3,5);
			\draw[fill=lightgray] (3.5,3) rectangle (4,3.5);
			\draw[fill=lightgray] (4.5,3.5) rectangle (5,4.5);
			\draw[fill=lightgray] (5,3.5) rectangle (5.5,4.5);
			\draw[fill=lightgray] (4,3) rectangle (4.5,3.5);
			\draw[fill=lightgray] (4,3.5) rectangle (4.5,4);
			\draw[fill=lightgray] (6.5,4) rectangle (7,4.5);
			\draw[fill=lightgray] (6,0.5) rectangle (6.5,1);
			
			
			\draw[fill=lightgray] (0,4) rectangle (0.5,5);
			\draw[fill=lightgray] (1,6.5) rectangle (1.5,8);
			\draw[fill=lightgray] (1,4.5) rectangle (1.5,5.5);
			\draw[fill=lightgray] (1.5,6.5) rectangle (2,8);
			\draw[fill=lightgray] (1.5,3) rectangle (2,5);
			\draw[fill=lightgray] (2,6) rectangle (2.5,7);
			\draw[fill=lightgray] (2.5,6.5) rectangle (3,7.5);
			\draw[fill=lightgray] (3.5,3.5) rectangle (4,4);
			
			
			\draw[thick] (0,0) rectangle (7,8);
			\draw[color=white] (0,-0.25) rectangle (0,-0.1);
			\draw[color=white] (0,8.3) rectangle (0,8.1);
			
			\end{tikzpicture}}
		\caption{Rectangles in $\overline{T}$ are shifted vertically so that \\they touch either the top \\or the bottom of box $\overline{B}$, shifting also slices in $\overline{V}$ accordingly.\\~}
		\label{fig_pseudo-rectangles2}
	\end{subfigure}
	\hspace{2pt}
	\begin{subfigure}[b]{.36\textwidth}
		\centering
		\resizebox{!}{4cm}{
			\begin{tikzpicture}
			
			
			\draw[thick, dashed] (0.5,4) -- (0.5,5);
			\draw[thick, dashed] (1.5,4) -- (1.5,8);
			\draw[thick, dashed] (2.5,4) -- (2.5,8);
			\draw[thick, dashed] (3.5,3) -- (3.5,4);
			\draw[thick, dashed] (4.5,3.5) -- (4.5,4);
			\draw[thick, dashed] (5.5,3.5) -- (5.5,5);
			\draw[thick, dashed] (6.5,4) -- (6.5,4.5);
			\draw[thick, dashed] (1,4) -- (1,5);
			\draw[thick, dashed] (2,4) -- (2,8);
			\draw[thick, dashed] (3,4) -- (3,5);
			\draw[thick, dashed] (4,3) -- (4,4);
			\draw[thick, dashed] (5,3.5) -- (5,5);
			\draw[thick, dashed] (6,4) -- (6,4.5);
			
			
			\fill [pattern color=lightgray, pattern = crosshatch] (1.5,0) rectangle (2,8);
			\fill [pattern color=lightgray, pattern = crosshatch] (2,4) rectangle (2.5,8);
			\fill [pattern color=lightgray, pattern = crosshatch] (2.5,4) rectangle (3,8);
			
			\fill [pattern color=lightgray, pattern = crosshatch] (1,4) rectangle (1.5,8);
			
			
			\draw (0,0) rectangle (1.5,0.5);
			\fill[thick, pattern = north east lines] (0,0) rectangle (1.5,0.5);
			\draw (6,0) rectangle (7,1);
			\fill[thick, pattern = north east lines] (6,0) rectangle (7,1);
			\draw (6,7.5) rectangle (7,8);
			\fill[thick, pattern = north east lines] (6,7.5) rectangle (7,8);
			
			
			\draw[thick] (1.5,0) -- (1.5,4) -- (0,4) -- (0,5) -- (1,5) -- (1,8) -- (3,8) -- (3,5) -- (3.5,5) -- (3.5,4) -- (4.5,4) -- (4.5,5) -- (6,5) -- (6,4.5) -- (7,4.5) -- (7,4) -- (6,4) -- (6,3.5) -- (4.5,3.5) -- (4.5,3) -- (3,3) -- (3,4) -- (2,4) -- (2,0) -- (1.5,0);
			
			
			\draw[dashed] (0,0) --(7.5,0);
			\draw[dashed] (0,3) --(7.5,3);
			\draw[dashed] (0,5) --(7.5,5);
			\draw[dashed] (0,8) --(7.5,8);
			
			\draw (7.5,0) node[anchor=west] {\Large $0$};
			\draw (7.5,3) node[anchor=west] {\Large $\frac{\alpha}{1+\eps} \overline{h}$};
			\draw (7.5,5) node[anchor=west] {\Large $(1-\frac{\alpha}{1+\epsilon}) \overline{h}$};
			\draw (7.5,8) node[anchor=west] {\Large $\overline{h}$};
			
			\end{tikzpicture}}
		\caption{We define pseudo rectangles and free space in $\overline{B} - (\overline{T} \cup \overline{T}_{cut})$. Crosshatched stripes correspond to pseudo rectangles, empty stripes to free rectangles, and dashed regions to corner sub-boxes.}
		\label{fig_pseudo-rectangles3}
	\end{subfigure}
	
	\vspace{10pt}
	\hspace{-15pt}
	\begin{subfigure}[b]{.5\textwidth}
		\centering
		\resizebox{!}{4.8cm}{
			\begin{tikzpicture}
			
			\draw[fill=darkgray] (-1,0.5) rectangle (1.5,4);
			\draw[fill=darkgray] (6,1) rectangle (7.5,4);
			\draw[fill=darkgray] (6,4.5) rectangle (8,7.5);
			
			
			\draw[fill=gray] (4,0) rectangle (5.5,4);
			\draw[fill=gray] (1.5,0) rectangle (2.5,3);
			\draw[fill=gray] (2.5,0) rectangle (4,3.5);
			\draw[fill=gray] (5,5) rectangle (5.5,8);
			\draw[fill=gray] (4,5) rectangle (5,8);
			\draw[fill=gray] (1.5,4) rectangle (2.5,8);
			\draw[fill=gray] (2.5,5) rectangle (4,8);
			
			
			\fill [pattern color=lightgray, pattern = crosshatch] (5.5,0) rectangle (6,8);
			\fill [pattern color=lightgray, pattern = crosshatch] (1,4) rectangle (1.5,8);
			\fill [pattern color=lightgray, pattern = crosshatch] (0.5,4) rectangle (1,8);
			\fill [pattern color=lightgray, pattern = crosshatch] (0,4) rectangle (0.5,8);
			
			
			\fill[pattern = north east lines] (0,0) rectangle (1.5,0.5);
			\fill[pattern = north east lines] (6,0) rectangle (7,1);
			\fill[pattern = north east lines] (6,7.5) rectangle (7,8);
			
			
			\draw[thick, dashed] (0.5,4) -- (0.5,8);
			\draw[thick, dashed] (1,4) -- (1,8);
			\draw[thick, dashed] (2,3) -- (2,4);
			\draw[thick, dashed] (2.5,3.5) -- (2.5,4);
			\draw[thick, dashed] (3,3.5) -- (3,5);
			\draw[thick, dashed] (3.5,3.5) -- (3.5,5);
			\draw[thick, dashed] (4,4) -- (4,5);
			\draw[thick, dashed] (4.5,4) -- (4.5,5);
			\draw[thick, dashed] (5,4) -- (5,5);
			\draw[thick, dashed] (5.5,4) -- (5.5,5);
			\draw[thick, dashed] (6,4) -- (6,4.5);
			\draw[thick, dashed] (6.5,4) -- (6.5,4.5);
			
			
			\draw[thick] (0,0) rectangle (7,8);
			
			
			\draw[ultra thick] (0,0) rectangle (1.5,0.5);
			\draw[ultra thick] (6,0) rectangle (7,1);
			\draw[ultra thick] (7,7.5) rectangle (7,8);
			
			\draw[ultra thick] (-1,0.5) rectangle (1.5,4);
			\draw[ultra thick] (6,1) rectangle (7.5,4);
			\draw[ultra thick] (6,4.5) rectangle (8,7.5);
			
			\draw[ultra thick] (0,4) rectangle (1.5,8);
			\draw[ultra thick] (5.5,0) rectangle (6,8);
			
			\draw[ultra thick] (1.5,0) rectangle (2.5,3);
			\draw[ultra thick] (2.5,0) rectangle (4,3.5);
			\draw[ultra thick] (4,0) rectangle (5.5,4);
			\draw[ultra thick] (1.5,4) rectangle (2.5,8);
			\draw[ultra thick] (2.5,5) rectangle (5.5,8);
			
			
			\draw[fill=lightgray] (8.5,2) rectangle (9,3);
			
			
			\draw[fill=lightgray] (8.5,4.5) rectangle (9,5);
			\draw[fill=lightgray] (8.5,5) rectangle (9,5.5);
			\draw[fill=lightgray] (9,4.5) rectangle (9.5,5);
			\draw[fill=lightgray] (9,5) rectangle (9.5,5.5);
			\draw[fill=lightgray] (9.5,5) rectangle (10,6);
			\draw[fill=lightgray] (10,5) rectangle (10.5,6);
			\draw[fill=lightgray] (11.5,5.5) rectangle (12,6);
			
			
			\draw[dashed] (9,1) -- (9,3);
			\draw[dashed] (9.5,1) -- (9.5,3);
			\draw[dashed] (10,1) -- (10,3);
			\draw[dashed] (10.5,1) -- (10.5,3);
			\draw[dashed] (11,1) -- (11,3);
			\draw[dashed] (11.5,1) -- (11.5,3);
			
			\draw[dashed] (9,4.5) -- (9,6.5);
			\draw[dashed] (9.5,4.5) -- (9.5,6.5);
			\draw[dashed] (10,4.5) -- (10,6.5);
			\draw[dashed] (10.5,4.5) -- (10.5,6.5);
			\draw[dashed] (11,4.5) -- (11,6.5);
			\draw[dashed] (11.5,4.5) -- (11.5,6.5);
			
			
			\draw (8.5,1) rectangle (12,3);
			\draw (8.5,4.5) rectangle (12,6.5);
			
			
			\draw [thick,decorate,decoration={mirror, brace,amplitude=8pt}] 
			(8.5,1) -- (12,1); 
			\draw (10.25,0.5) node [anchor = north] {\Large $\frac{1}{2}\overline{w}$};
			
			\draw [thick,decorate,decoration={mirror, brace,amplitude=8pt}] 
			(12,1) -- (12,3); 
			\draw (12.25,2) node [rotate=90, anchor = north] {\Large $(1+\eps-2\alpha)\overline{h}$};
			
			\draw [thick,decorate,decoration={mirror, brace,amplitude=8pt}] 
			(12,4.5) -- (12,6.5); 
			\draw (12.25,5.5) node [rotate=90, anchor = north] {\Large $(1+\eps-2\alpha)\overline{h}$};
			
			\draw (10.25,3) node [anchor=south] {\Large $\overline{B}_{disc, bot}$};
			
			\draw (10.25,6.5) node [anchor=south] {\Large $\overline{B}_{disc, top}$};
			
			\draw[color=white] (0,-2.05) rectangle (1,-1);
			
			\end{tikzpicture}}
		\caption{Rearrangement of pseudo and tall rectangles to get $O_\eps(1)$ sub-boxes, and additional packing of $\overline{V}_{disc}$ as in \cite{nw16}.}
		\label{fig_pseudo-rectangles4}
	\end{subfigure}%
	\hspace{5pt}
	\begin{subfigure}[b]{.5\textwidth}
		\centering
		\resizebox{!}{4.8cm}{
			\begin{tikzpicture}
			
			\draw[fill=darkgray] (-1,0.5) rectangle (1.5,4);
			\draw[fill=darkgray] (6,1) rectangle (7.5,4);
			\draw[fill=darkgray] (6,4.5) rectangle (8,7.5);
			
			
			\draw[fill=gray] (4,0) rectangle (5.5,4);
			\draw[fill=gray] (1.5,0) rectangle (2.5,3);
			\draw[fill=gray] (2.5,0) rectangle (4,3.5);
			\draw[fill=gray] (5,5) rectangle (5.5,8);
			\draw[fill=gray] (4,5) rectangle (5,8);
			\draw[fill=gray] (1.5,4) rectangle (2.5,8);
			\draw[fill=gray] (2.5,5) rectangle (4,8);
			
			
			\fill [pattern color=lightgray, pattern = crosshatch] (5.5,0) rectangle (6,8);
			\fill [pattern color=lightgray, pattern = crosshatch] (1,4) rectangle (1.5,8);
			\fill [pattern color=lightgray, pattern = crosshatch] (0.5,4) rectangle (1,8);
			\fill [pattern color=lightgray, pattern = crosshatch] (0,4) rectangle (0.5,8);
			
			
			\draw[fill=lightgray] (3.5,3.5) rectangle (4,4);
			\draw[fill=lightgray] (3.5,4) rectangle (4,4.5);
			\draw[fill=lightgray] (4,4) rectangle (4.5,4.5);
			\draw[fill=lightgray] (4,4.5) rectangle (4.5,5);
			\draw[fill=lightgray] (6.5,4) rectangle (7,4.5);
			
			
			\fill[pattern = north east lines] (0,0) rectangle (1.5,0.5);
			\fill[pattern = north east lines] (6,0) rectangle (7,1);
			\fill[pattern = north east lines] (6,7.5) rectangle (7,8);
			
			
			\draw[thick, dashed] (0.5,4) -- (0.5,8);
			\draw[thick, dashed] (1,4) -- (1,8);
			\draw[thick, dashed] (2,3) -- (2,4);
			\draw[thick, dashed] (2.5,3.5) -- (2.5,4);
			\draw[thick, dashed] (3,3.5) -- (3,5);
			\draw[thick, dashed] (3.5,3.5) -- (3.5,5);
			\draw[thick, dashed] (4,4) -- (4,5);
			\draw[thick, dashed] (4.5,4) -- (4.5,5);
			\draw[thick, dashed] (5,4) -- (5,5);
			\draw[thick, dashed] (5.5,4) -- (5.5,5);
			\draw[thick, dashed] (6,4) -- (6,4.5);
			\draw[thick, dashed] (6.5,4) -- (6.5,4.5);
			
			
			\draw[thick] (0,0) rectangle (7,8);
			
			
			\draw[ultra thick] (0,0) rectangle (1.5,0.5);
			\draw[ultra thick] (6,0) rectangle (7,1);
			\draw[ultra thick] (7,7.5) rectangle (7,8);
			
			\draw[ultra thick] (-1,0.5) rectangle (1.5,4);
			\draw[ultra thick] (6,1) rectangle (7.5,4);
			\draw[ultra thick] (6,4.5) rectangle (8,7.5);
			
			\draw[ultra thick] (0,4) rectangle (1.5,8);
			\draw[ultra thick] (5.5,0) rectangle (6,8);
			
			\draw[ultra thick] (1.5,0) rectangle (2.5,3);
			\draw[ultra thick] (2.5,0) rectangle (4,3.5);
			\draw[ultra thick] (4,0) rectangle (5.5,4);
			\draw[ultra thick] (1.5,4) rectangle (2.5,8);
			\draw[ultra thick] (2.5,5) rectangle (5.5,8);
			
			
			\draw[dashed] (9,3) -- (9,5);
			\draw[dashed] (9.5,3) -- (9.5,5);
			\draw[dashed] (10,3) -- (10,5);
			\draw[dashed] (10.5,3) -- (10.5,5);
			\draw[dashed] (11,3) -- (11,5);
			\draw[dashed] (11.5,3) -- (11.5,5);
			
			
			\draw[fill=lightgray] (8.5,4) rectangle (9,5);
			\draw[fill=lightgray] (10,3.5) rectangle (10.5,4.5);
			\draw[fill=lightgray] (10.5,3.5) rectangle (11,4.5);
			
			
			\draw (8.5,3) rectangle (11.5,5);
			
			
			\draw[ultra thick, ->] (1.25,-0.9) -- (1.25,-0.1);
			\draw[ultra thick, ->] (1.75,-0.9) -- (1.75,-0.1);
			\draw[ultra thick, ->] (2.25,-0.9) -- (2.25,-0.1);
			\draw[ultra thick, ->] (2.75,-0.9) -- (2.75,-0.1);
			\draw[ultra thick, ->] (3.75,-0.9) -- (3.75,-0.1);
			\draw[ultra thick, ->] (4.25,-0.9) -- (4.25,-0.1);
			\draw[ultra thick, ->] (6.25,-0.9) -- (6.25,-0.1);
			\draw[ultra thick, ->] (6.75,-0.9) -- (6.75,-0.1);
			
			\draw (3.5,-1) node[anchor=north] {\large $\ge \gamma \overline{w}$};
			\draw (3.5,-1.45) node[anchor=north] {\large good indexes};
			
			\draw [thick,decorate,decoration={mirror, brace,amplitude=8pt}] 
			(8.5,3) -- (11.5,3); 
			\draw (10,2.5) node [anchor = north] {\large $\le (1-\gamma)\overline{w}$};
			
			\draw [thick,decorate,decoration={mirror, brace,amplitude=8pt}] 
			(11.5,3) -- (11.5,5); 
			\draw (11.75,4) node [rotate=90, anchor = north] {\large $\left(1+\eps-2\alpha\right)\overline{h}$};
			
			\draw (10,5) node[anchor = south] {\large $\overline{B}_{disc}$};
			
			\end{tikzpicture}}
		\caption{Our refined repacking of $\overline{V}_{disc}$ according to Lemma \ref{lem:repack}: some vertical slices are repacked in the free space.\\~}
		\label{fig_pseudo-rectangles5}
	\end{subfigure}	
	\caption{Creation of pseudo rectangles, how to get constant number of sub-boxes and repacking of vertical slices in a vertical box $\overline{B}$.}
	\label{fig_pseudo-rectangles}
\end{figure}

It remains to prove our repacking lemma.
\begin{lemma}[Repacking Lemma]\label{lem:repack}
	Consider a partition of $\overline{D}$ into $\overline{w}$ unit-width vertical stripes. There is a subset of at least $\gamma \overline{w}$ such stripes so that the corresponding sliced vertical rectangles $\overline{V}_{repack}$ can be repacked inside $\overline{B}_{cut}=\overline{B}-\overline{T}_{cut}$ in the space not occupied by sub-boxes.
\end{lemma}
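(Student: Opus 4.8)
The plan is to first reduce the statement to a purely one-dimensional ``column matching'' problem, and then resolve that problem by an averaging/pigeonhole argument that exploits the quantization of all relevant heights to multiples of $\gamma\cdot OPT$ and the hypothesis $\alpha\ge 1/3$.

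\emph{Reduction to one dimension.} I would first make explicit the structure of the rearranged packing of $\overline{T}\cup\overline{V}$: after shifting the rectangles of $\overline{T}$ to the top and bottom edges of $\overline{B}$ and performing the horizontal rearrangement, $\overline{B}_{cut}$ is covered by the $O_\eps(1)$ top, bottom and corner sub-boxes together with a residual \emph{free region} $\mathcal{F}$, and in every unit-width column of $\overline{B}_{cut}$ the trace of $\mathcal{F}$ is a single vertical interval (a former free rectangle) whose length I call the \emph{free height} of that column; all sub-box heights, $\overline{h}$, and hence all free heights are integer multiples of $\gamma\,OPT$. Dually, $\overline{D}$ is split into $\overline{w}$ unit-width columns; column $c$ of $\overline{D}$ carries a stack of slices of total height $s(c)$, and these slices originate from one free rectangle sandwiched between two tall rectangles of heights $h',h''>\alpha\,OPT$, so $s(c)\le \overline{h}-h'-h''\le (1+\eps-2\alpha)\overline{h}$. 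Since every sliced vertical rectangle has width exactly $1$, it suffices to choose $\gamma\overline{w}$ columns of $\overline{D}$ and injectively assign each to a column of $\overline{B}_{cut}$ whose free height is at least $s(c)$; one then simply piles the slices of $c$ inside the assigned free interval, so the two-dimensional packing disappears.

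\emph{Counting the good columns.} I would classify the columns of $\overline{B}_{cut}$ according to the pair (height of the top sub-box above the column, height of the bottom sub-box below it). Since each coordinate ranges over the $O(1/\gamma)=O_\eps(1)$ multiples of $\gamma\,OPT$ that are at most $\overline{h}$, there are only $O_\eps(1)$ classes, and all columns of one class share the same free height. The core of the argument is to show that the horizontal rearrangement, which only regroups the pseudo/tall rectangles (each column meeting, by $\alpha\ge 1/3$, at most one top and at most one bottom tall rectangle), can be performed so that the columns of $\overline{D}$ lying opposite a given class $C$ can be taken to have slice height at most the common free height of $C$, except for $O_\eps(1)$ columns ruined at sub-box interfaces and by the $\le 4$ cut rectangles of $\overline{T}_{cut}$. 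Using the area identity $a(\mathcal{F})\ge a(\overline{V}_{disc})$ to exclude degenerate height profiles, a pigeonhole over the $O_\eps(1)$ classes then yields a single class with at least $(\overline{w}-O_\eps(1))/O_\eps(1)\ge\gamma\overline{w}$ usable columns, once the hidden constants (fixed through the choice of $f$ and $k$ in Lemma~\ref{lem:mediumrectanglesarea}) are taken small enough relative to $\gamma=\eps\delta_h/2$; these columns together with their matched free intervals define $\overline{V}_{repack}$.

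\emph{Main obstacle.} The delicate point is the middle step: tracking precisely how the horizontal rearrangement into $O_\eps(1)$ sub-boxes permutes the free-space heights column by column, and proving that on a $\gamma$-fraction of the columns the resulting free interval is at least as tall as the slice stack the corresponding column of $\overline{D}$ must receive. This is exactly where $\alpha\ge 1/3$ (so that ``the'' top and ``the'' bottom tall rectangle of a column are well defined and can be sorted into top/bottom sub-boxes essentially independently), the quantization of heights to multiples of $\gamma\,OPT$, and the $O_\eps(1)$ bounds on the number of sub-boxes and of cut rectangles all have to be combined; the one-dimensional reduction and the final stacking of unit-width slices are routine by comparison.
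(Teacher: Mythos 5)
There is a genuine gap at precisely the point you flag as the ``main obstacle,'' and the surrounding counting does not rescue it. First, the pigeonhole over classes cannot deliver the bound: the classes are indexed by pairs of top/bottom sub-box heights, each a multiple of $\gamma\,OPT$ in a range of order $\overline{h}$, so there are $\Theta(1/\gamma)$ values per coordinate and $\Theta(1/\gamma^2)$ classes. This count is dictated by $\gamma=\eps\delta_h/2$ itself and cannot be made ``small relative to $\gamma$'' by tuning $f$ and $k$ from Lemma~\ref{lem:mediumrectanglesarea}; dividing $\overline{w}$ by the number of classes yields only $O(\gamma^2\overline{w})$ columns, not the $\gamma\overline{w}$ the lemma requires (and the additive $O_\eps(1)$ losses are also not harmless when $\overline{w}$ is small). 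Second, and more fundamentally, the claim that within a class all but $O_\eps(1)$ columns opposite it ``can be taken'' to carry slice stacks no taller than the class's free height is exactly the content of the lemma, and you leave it unproven; moreover the horizontal rearrangement into sub-boxes is fixed before the repacking step, so the lemma must be proved for that given rearrangement, not for one co-designed with the repacking.

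The paper closes this gap with a per-column conservation-plus-quantization argument that needs no injective matching and no class decomposition. Let $f(i)$ be the free height of column $i$ before the rearrangement (the $i$-th free rectangle, $0$ if none) and $g(i)$ the free height of column $i$ afterwards (the space not covered by sub-boxes). Since tall and pseudo rectangles are only shifted horizontally, $\sum_i f(i)=\sum_i g(i)$. On a bad column ($g(i)<f(i)$) the deficit is at least $\gamma\,OPT\geq\gamma\overline{h}/(1+\eps)$, because both $f(i)$ and $g(i)$ equal $\overline{h}$ minus the heights of one or two tall/pseudo rectangles and are therefore multiples of $\gamma\,OPT$; on a good column the surplus satisfies $g(i)-f(i)\leq g(i)\leq(1+\eps-2\alpha)\overline{h}$. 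Balancing deficits against surpluses gives $|G|\geq\frac{\gamma}{1+\eps-2\alpha+\gamma}\overline{w}\geq\gamma\overline{w}$ using $\alpha>\eps\geq\gamma$, and on each good index the $i$-th free rectangle is repacked into the $i$-th column's new free space, i.e., the matching is simply the identity. Your one-dimensional reduction and your instinct to use the $\gamma\,OPT$ quantization are sound, but without this global balance the argument does not go through.
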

\begin{proof}
	Let $f(i)$ denote the height of the $i$-th free rectangle, where for notational convenience we introduce a degenerate free rectangle of height $f(i)=0$ whenever the stripe $[i-1,i]\times [0,\overline{h}]$ inside $\overline{B}$ does not contain any free  rectangle. This way we have precisely $\overline{w}$ free rectangles. We remark that free rectangles are defined before the horizontal rearrangement of tall/pseudo rectangles, and the consequent definition of sub-boxes. 
	
	Recall that sub-boxes contain tall and pseudo rectangles.
	Now consider the area in $\overline{B}_{cut}$ not occupied by sub-boxes. Note that this area is  contained in the central region of height $\overline{h}(1 - \frac{2\alpha}{1 + \eps})$. Partition this area into maximal-height unit-width vertical stripes as before (\emph{newly free rectangles}). Let $g(i)$ be the height of the $i$-th newly free rectangle, where again we let $g(i)=0$ if the stripe $[i-1,i]\times [0,\overline{h}]$ does not contain any (positive area) free region. Note that, since  tall and pseudo rectangles are only shifted horizontally in the rearrangement, it must be the case that:
	$$
	\sum_{i=1}^{\overline{w}}f(i)=\sum_{i=1}^{\overline{w}}g(i).
	$$
	Let $G$ be the (\emph{good}) indexes where $g(i)\geq f(i)$, and $\overline{G}=\{1,\ldots,\overline{w}\}-G$ 
	be the \emph{bad} indexes with $g(i)<f(i)$. Observe that for each $i\in G$, it is possible to pack the $i$-th free rectangle inside the $i$-th newly free rectangle, therefore freeing a unit-width vertical strip inside $\overline{D}$. Thus it is sufficient to show that $|G|\geq \gamma \overline{w}$.
	
	Observe that, for $i\in \overline{G}$, $f(i)-g(i)\geq \gamma OPT\geq \gamma \frac{\overline{h}}{1 + \eps}$: indeed, both $f(i)$ and $g(i)$ must be multiples of $\gamma OPT$ since they correspond to the height of $\overline{B}$ minus the height of one or two tall/pseudo rectangles. On the other hand, for any index $i$, $g(i)-f(i)\leq g(i)\leq (1-\frac{2\alpha}{1+\eps})\overline{h}$, by the definition of $g$.
	Altogether
	\begin{align*}
		\left(1-\frac{2\alpha}{1 + \eps}\right)\overline{h} \cdot |G| &\geq \sum_{i\in G}(g(i)-f(i)) = \sum_{i\in \overline{G}}(f(i)-g(i))\\
		&\geq  \frac{\gamma \overline{h}}{1 + \eps} \cdot |\overline{G}| = \frac{\gamma \overline{h}}{1 + \eps} \cdot (\overline{w}-|G|) 
	\end{align*}
	We conclude that $|G|\geq \frac{\gamma}{1+\eps-2\alpha+\gamma}\overline{w}$. The claim follows since by assumption $\alpha>\eps\geq \gamma$. 
\end{proof}

\section{Proof of Lemma \ref{lem:structural_containers}}

The original algorithm in \cite{nw16} uses standard LP-based techniques, as in \cite{kr00}, to pack the horizontal rectangles. We can avoid that via a refined structural lemma: here boxes and sub-boxes are further partitioned into vertical (resp., horizontal) containers. Rectangles are then packed into such containers as mentioned earlier: one next to the other from left to right (resp., bottom to top). Containers define a multiple knapsack instance, that can be solved optimally in PPT via dynamic programming. This approach has two main advantages:
\begin{itemize}\itemsep0pt
	\item It leads to a simpler algorithm.
	\item It can be easily adapted to the case with rotations.
\end{itemize}

In this section we prove the following structural result.

\begin{lemma}\label{lem:structural_containers}
	By choosing $\alpha = 1/3$, there is an integer $K_F \leq \left(\frac{1}{\eps \delta_w}\right)^{O(1/(\delta_w \eps))}$ such that, assuming $\mu_h \leq \frac{\eps}{K_F}$ and $\mu_w \leq \frac{\gamma}{3K_F}$, there is a packing of $\mathcal{R} \setminus S$ in the region $[0, W] \times [0, (4/3 + O(\eps))OPT']$ with the following properties:
	\begin{itemize}
		\item All the rectangles in $\mathcal{R} \setminus S$ are contained in $K_{TOTAL} = O_\eps(1)$ horizontal or vertical containers, such that each of these containers is either contained in or disjoint from $\mathcal{B}_{OPT'}$;
		\item At most $K_F$ containers are contained in $\mathcal{B}_{OPT'}$, and their total area is at most $a(\mathcal{R} \setminus S)$.
	\end{itemize}
\end{lemma}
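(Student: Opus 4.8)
The proof assembles the pieces already developed in Sections~\ref{sec:prelim}--\ref{sec:repack} and converts every box appearing in them into a constant number of horizontal or vertical containers. I would proceed region by region following the structural decomposition summarized in Lemma~\ref{lem:structural_boxes} together with the repacking lemma (Lemma~\ref{lem:repack}), then bound the number and total area of the containers that end up inside $\mathcal{B}_{OPT'}$.

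First I would fix $\alpha = 1/3$, so that all the bounds from Lemma~\ref{lem:boxpartition}, Lemma~\ref{lem:structural_boxes} and Lemma~\ref{lem:repack} apply, and recall that with this choice the discarded material of every vertical box fits into a single box $\overline{B}_{disc}$ of size $(1-\gamma)\overline{w}\times(1+\eps-2\alpha)\overline{h}$, hence globally into a single discarded box $B_{disc}$ of size $(1-\gamma)W\times(1+\eps-2\alpha)OPT'$ placed on the top right; the total width of $B_{M,ver}$, $B_{V,cut}$, $B_{V,round}$ is at most $\gamma W$ by the parameter choices of Lemma~\ref{lem:mediumrectanglesarea} (which forces $\mu_w,\delta_w$ to be tiny multiples of $\delta_h$), so $B_{disc}$ fits on top of $\mathcal{B}_{OPT'}$ alongside them within total height $(4/3+O(\eps))OPT'$. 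Then I would turn each type of box into containers: (a) each large box is already a (degenerate) horizontal \emph{and} vertical container holding one rectangle; (b) each horizontal box from $\mathcal{B}_H$, plus $B_{H,cut}$, are turned into vertical containers by applying Lemma~\ref{lem:shrink_knapsack_container} (here I must note that the rectangles inside a horizontal box have width $\ge \delta_w W$, so the logarithmic partition in that lemma is applicable) — this step simultaneously gives the area guarantee; (c) for $B_{M,hor}$ and $B_{M,ver}$ the required vertical/horizontal container decomposition is exactly what Lemma~\ref{lem:mediumrectanglesrepacking} produces; (d) each vertical box is replaced, via the rearrangement of Section~\ref{sec:repack}, by $O_\eps(1)$ sub-boxes (corner sub-boxes plus the grouped-by-height sub-boxes of tall/pseudo rectangles) together with the residual $\overline{B}_{disc}$; since each sub-box has its height a multiple of $\gamma OPT$ and is a vertical container for tall rectangles or a vertical container of unit-width slices, and each slice re-assembles into $V_{box}$ rectangles using Lemma~\ref{lem:fractional_to_integral} at the cost of the extra vertical box $B_{V,round}$, these are all vertical containers; the cut tall rectangles $T_{cut}$ (at most $4K_B$ of them) each form their own degenerate container, and the cut vertical rectangles go into $B_{V,cut}$, which is a horizontal-container box by Lemma~\ref{lem:structural_boxes}(4) and Lemma~\ref{lem:mediumrectanglesrepacking}-style packing.

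Next I would account for the two quantitative claims. For the count: the number of boxes in $\mathcal{B}$ is at most $K_B = (1/\eps)(1/\delta_w)^{O(1)}$; each horizontal box spawns at most $\lceil\log_{1+\eps}(1/\delta_w)\rceil/\eps^2$ vertical containers; each vertical box spawns $O_\eps(1)$ sub-boxes, where the governing bound is the number of distinct sub-box heights (multiples of $\gamma OPT$ up to $OPT'$, i.e. $O(1/(\gamma\eps))$ possibilities times a constant for top/bottom and corners) plus the at most $4$ corner sub-boxes; multiplying through, together with the $O(1)$ extra containers coming from $B_{M,hor}, B_{M,ver}, B_{H,cut}$ and the $\le 4K_B$ degenerate $T_{cut}$-containers, gives a bound of the stated shape $K_F \le (1/(\eps\delta_w))^{O(1/(\delta_w\eps))}$; the hypotheses $\mu_h\le \eps/K_F$ and $\mu_w\le \gamma/(3K_F)$ are precisely what is needed to push the earlier smallness conditions (from Lemmas~\ref{lem:boxpartition} and~\ref{lem:structural_boxes}) through for this value of $K_F$. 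Everything in $B_{disc}$ lives outside $\mathcal{B}_{OPT'}$, so only the containers listed above that sit inside $B_{OPT'}$ count toward $K_F$; the remaining $B_{M,hor}, B_{M,ver}, B_{H,cut}, B_{V,cut}, B_{V,round}, B_S, B_{disc}$ and their internal containers are disjoint from $\mathcal{B}_{OPT'}$, so $K_{TOTAL}=K_F+O_\eps(1)=O_\eps(1)$ and every container is either inside or disjoint from $\mathcal{B}_{OPT'}$ as required. For the area: Lemma~\ref{lem:shrink_knapsack_container} guarantees that the vertical containers replacing a horizontal box have total area at most the area of the rectangles originally in it, and symmetrically large boxes and the degenerate $T_{cut}$-containers have area equal to the enclosed rectangles; the sub-boxes of a vertical box contain tall rectangles and slices whose total area is at most that of the original $\overline{T}\cup\overline{V}$ (rearrangement only moves rectangles), and after applying Lemma~\ref{lem:shrink_knapsack_container} once more to any sub-box that is not already area-tight we get the inequality; summing over all boxes inside $\mathcal{B}_{OPT'}$ yields total container area at most $a(\mathcal{R}\setminus S)$.

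**Main obstacle.** The delicate point is not any single estimate but the bookkeeping that shows the \emph{discarded} material of \emph{all} vertical boxes simultaneously fits into the single width-$(1-\gamma)W$ box $B_{disc}$ while everything else stays inside height $(4/3+O(\eps))OPT'$: this is exactly where the repacking lemma is used, and one must be careful that (i) repacking $\gamma\overline{w}$ slices per vertical box into the free space of that box is \emph{local}, so the $\gamma$-fraction saving is preserved when the $\overline{B}_{disc}$'s are themselves packed side by side à la $B_{OPT'}$, and (ii) the sub-boxes produced still have heights that are multiples of $\gamma OPT$, which is what made the integer-counting step of Lemma~\ref{lem:repack} valid and what keeps the number of sub-boxes (hence containers) bounded. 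A secondary subtlety is verifying that the hypotheses $\mu_h\le\eps/K_F$ and $\mu_w\le\gamma/(3K_F)$ are strong enough to simultaneously satisfy the pre-existing smallness requirements $\mu_h\le\eps\delta_w/K_B$, $\mu_w\le\gamma\delta_h/(6K_B(1+\eps))$ and the container-count conditions — since $K_F$ dominates $K_B$ and $\delta_w,\delta_h$ are themselves $\Omega_\eps(1)$, this is routine, but it must be stated. Once these points are in place the lemma follows by collecting the pieces.
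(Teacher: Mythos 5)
There is a genuine gap, and it sits exactly at the step the paper considers the heart of this lemma: the treatment of the horizontal rectangles. You propose to convert each horizontal box of $\mathcal{B}_H$ (and $B_{H,cut}$) into containers ``by applying Lemma~\ref{lem:shrink_knapsack_container}''. That lemma does not do what you need, for two reasons. First, its hypothesis is that $C$ is already a horizontal or vertical \emph{container} (all rectangles stacked, every width at least $\delta$), whereas a horizontal box produced by Lemma~\ref{lem:boxpartition} is just a box of height at most $\delta_h OPT'$ with an arbitrary packing of horizontal rectangles inside, possibly many side by side at the same height; the logarithmic grouping in Lemma~\ref{lem:shrink_knapsack_container} has no way to partition such a packing. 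Second, and fatally for Strip Packing, Lemma~\ref{lem:shrink_knapsack_container} only retains a subset of profit $(1-3\eps_{ra})p(\R_C)$, i.e.\ it \emph{discards} rectangles, while the statement you are proving requires \emph{all} of $\mathcal{R}\setminus S$ to be packed. The paper's actual argument (Lemma~\ref{lem:structural_containers_horizontal}) is different: linear grouping of the horizontal rectangles by width, slicing into unit-height pieces, repacking each group into the space of the previous one, invoking Lemma~\ref{lem:rearrange_sliced} inside each box to get $O_\eps(1)$ horizontal containers whose total area equals that of the slices (this is where the area bound $\le a(H)$ comes from), and then Lemma~\ref{lem:fractional_to_integral} to go from the sliced packing back to an integral one at the cost of at most one rectangle per container; those at most $K_H$ leftover rectangles, together with $H_{long}$, are absorbed into the extra box $B_{H,round}$ of height $O(\eps)OPT'$, which is precisely why the hypothesis $\mu_h\le\eps/K_F$ is needed. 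Your proposal contains none of this machinery and no substitute for it (you never mention $B_{H,round}$ or how discarded items are re-inserted), so the claim ``this step simultaneously gives the area guarantee'' is unsupported, and the whole point of the section — replacing the LP of \cite{nw16} by a combinatorial container argument — is skipped rather than proved. (As a minor slip, horizontal rectangles end up in \emph{horizontal} containers, not vertical ones.)

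Your treatment of the vertical boxes is closer to the paper's route (rearrangement into sub-boxes via Lemma~\ref{lem_rearrangeTUP}, repacking Lemma~\ref{lem:repack}, $T_{cut}$ as degenerate containers, $B_{V,cut}$, and Lemma~\ref{lem:fractional_to_integral} with $B_{V,round}$), but it is also incomplete at the same kind of step: a sub-box filled with pseudo rectangles is \emph{not} yet a vertical container for the actual vertical rectangles, since each pseudo rectangle is a stack of slices of different heights; the paper again needs Lemma~\ref{lem:rearrange_sliced} per sub-box (giving $\frac{1}{\delta_h}(1+1/\gamma)^{1/\delta_h}$ containers each, with area preserved) before Lemma~\ref{lem:fractional_to_integral} can be applied. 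So while your global assembly (single discarded box $B_{disc}$ of width $(1-\gamma)W$, final height $(4/3+O(\eps))OPT'$, counting $K_F=K_L+K_H+K_V$ and the area sum $a(H)+a(T\cup V)+a(L)\le a(\mathcal{R}\setminus S)$) matches the paper, the two container-conversion steps that actually carry the proof are either wrong (horizontal) or glossed over (vertical slices), and the proposal as written does not establish the lemma.
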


Given a set $\mathcal{M} \subseteq \mathcal{R}$ of rectangles, we define $h(\mathcal{M}) := \sum_{R_i \in \mathcal{M}} h_i$ and $w(\mathcal{M}) := \sum_{R_i \in \mathcal{M}} w_i$. We start with two preliminary lemmas.

\begin{lemma}\label{lem:rearrange_sliced} Let $q$ and $d$ be two positive integers. If a box $B$ of height $h$ contains only vertical rectangles of width $1$ that have height at least $h/d$, at most $q$ different heights, then there is a packing of all the rectangles in at most $d{(q + 1)}^d$ vertical containers packed in $B$, and the total area of the containers equals the total area of the vertical rectangles in $B$; a symmetrical statement holds for boxes containing only horizontal rectangles of height $1$.
\end{lemma}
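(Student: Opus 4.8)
The plan is to partition $B$ into its $w:=w(B)$ unit-width columns and to classify each column by the (multi)set of rectangle heights it contains. We may assume that each width-$1$ rectangle occupies one such column $[i-1,i]\times[0,h]$ with $i\in\{1,\dots,w\}$ (the width-$1$ slices arising in our applications sit at integer horizontal coordinates; alternatively one first cuts all rectangles along the vertical lines through rectangle edges and argues with the resulting pieces, which changes no container area). Since every rectangle has height at least $h/d$ and the rectangles in one column are pairwise disjoint, each column holds at most $d$ rectangles. After reordering the rectangles inside each column to be non-increasing in height from bottom to top, call the resulting sorted tuple of heights the \emph{type} of the column. A type is a multiset of at most $d$ elements drawn from the at most $q$ distinct heights, so the number of non-empty types is at most $\binom{q+d}{d}\le (q+1)^d$.

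Next I would regroup the columns by type. For a type $t$ let $W_t$ be the number of columns of type $t$, so $\sum_t W_t\le w$. Move the $W_t$ columns of type $t$ next to one another into a sub-region $R_t\subseteq B$ of width $W_t$; the regions $R_t$ fit side by side inside $B$ because their widths sum to at most $w$ and each has height at most $h$ (the at most $d$ heights of a type sum to at most $h$, as the column fits in $B$). Inside $R_t$, all rectangles at the $\ell$-th layer from the bottom share a common height $\tilde h_{t,\ell}$ and, since the type fixes the sorted height sequence, they occupy the same horizontal slab $[0,W_t]\times[\sum_{\ell'<\ell}\tilde h_{t,\ell'},\,\sum_{\ell'\le\ell}\tilde h_{t,\ell'}]$ in every column of type $t$; this slab is a vertical container of width $W_t$ and height $\tilde h_{t,\ell}$, exactly tiled by $W_t$ width-$1$ rectangles of height $\tilde h_{t,\ell}$. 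Hence each type $t$ yields at most $d$ vertical containers.

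Summing over the at most $(q+1)^d$ non-empty types gives at most $d(q+1)^d$ vertical containers, all packed inside $B$; every rectangle lies in exactly one container, and since each container is exactly tiled the total area of the containers equals the total area of the rectangles, as claimed. The symmetric statement for horizontal rectangles of height $1$ in a box of width $w$ follows by rotating $B$ and all rectangles by $90^\circ$. The only point requiring a little care is the reduction to integer-aligned columns; the rest is bookkeeping, the mildest part being $\binom{q+d}{d}=\prod_{i=1}^{d}\tfrac{q+i}{i}\le(q+1)^d$. I expect no genuine obstacle here — the one idea that matters is grouping \emph{columns} (not merely rectangles) by type, which is exactly what forces the factor $d(q+1)^d$ rather than the naive $q$ and what makes the repacking fit into the given box width.
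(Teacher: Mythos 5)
Your proof is correct and follows essentially the same route as the paper: sort the rectangles within each unit-width slice, classify slices by the (multi)set of heights they contain (at most $(q+1)^d$ types), group slices of the same type, and carve each group into at most $d$ layer containers, which are exactly tiled and hence preserve total area. Your explicit treatment of the multiset count $\binom{q+d}{d}\le(q+1)^d$ and of the column alignment only makes precise what the paper states more tersely.
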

\begin{proof}
	Without loss of generality, we only prove the lemma for the case of vertical rectangles. Consider each slice of width $1$ of $B$. In each slice, sort the rectangles by decreasing height, and move them down so that the bottom of each rectangles is touching either the bottom of the box, or the top of another rectangle. Call the \emph{type} of a slice as the set of different heights of rectangles that it contains. It is easy to see that there are at most ${(q+1)}^d$ different types of slices. Sort the slices so that all the slices of the same type appear next to each other. It easy to see that all the rectangles in the slices of a fixed type can be packed in at most $d$ vertical containers, where each container has the same height as the contained rectangles. By repeating this process for all the slices, we obtain a repacking of all the rectangles in at most $d{(q + 1)}^d$ containers.
\end{proof}

\begin{lemma}\label{lem:fractional_to_integral}
	Given a set $\{R_1, R_2, \dots, R_m\}$ of horizontal (resp. vertical) rectangles and a set $\{C_1, C_2, \dots, C_{t}\}$ of horizontal (resp. vertical) containers such that the rectangles can be packed into the containers allowing horizontal (resp. vertical) slicing of height (resp. width) one, then there is a feasible packing of all but at most $t$ rectangles into the same containers.
\end{lemma}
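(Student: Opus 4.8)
The plan is to reduce the claim to a standard fact about acyclic supports of bipartite assignment problems. I will argue the case of horizontal rectangles and horizontal containers; the vertical case follows verbatim after exchanging the roles of width and height. A \emph{fractional packing} in the sense of the lemma is exactly a choice of numbers $x_{ij}\in[0,h_i]$, defined only when $w_i\le w(C_j)$ (so that $R_i$ may legally go into $C_j$), such that $\sum_{j} x_{ij}=h_i$ for every rectangle $R_i$ (the whole rectangle, sliced into unit-height horizontal pieces, is placed) and $\sum_{i} x_{ij}\le h(C_j)$ for every container $C_j$ (no container is overfilled, using that in a horizontal container slices simply stack vertically). By hypothesis the set $P$ of such vectors $x$ is nonempty.

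First I would pick $x\in P$ whose \emph{support graph} $G$ — the bipartite graph on the $m$ rectangle-vertices and the $t$ container-vertices, with an edge $ij$ precisely when $x_{ij}>0$ — has the fewest possible edges; such a minimizer exists since $P\neq\emptyset$ and the number of edges is a nonnegative integer. I claim $G$ is a forest. If not, $G$ contains a cycle, which is even and alternates between rectangle- and container-vertices; orient it and shift the coordinates $x_{ij}$ along the cycle edges by $+\eps$ and $-\eps$ alternately. Each rectangle-vertex on the cycle has exactly two incident cycle edges carrying cancelling perturbations, so $\sum_j x_{ij}$ is preserved; likewise each container-vertex on the cycle sees $\sum_i x_{ij}$ unchanged, so all capacity inequalities remain valid. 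Letting $\eps$ grow until the first cycle coordinate hits $0$ produces another element of $P$ with strictly smaller support, a contradiction. Hence $G$ is acyclic.

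Next comes the counting and the rounding. A forest on $m+t$ vertices has at most $m+t-1$ edges, and every rectangle-vertex has degree at least $1$ because $\sum_j x_{ij}=h_i>0$. If $k$ rectangles have degree $\ge 2$, then $|E(G)|=\sum_i\deg(R_i)\ge (m-k)+2k=m+k$, so $k\le t-1<t$. I now discard these at most $t-1$ high-degree rectangles. Every surviving rectangle $R_i$ has a unique container $C_j$ with $x_{ij}>0$, whence $x_{ij}=h_i$ and $w_i\le w(C_j)$; so each survivor is fully assigned to a single legal container. For a fixed container $C_j$, the survivors assigned to it have total height $\sum x_{ij}\le h(C_j)$ and each has width $\le w(C_j)$, so they can be stacked integrally inside the horizontal container $C_j$. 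This yields a feasible integral packing of all but at most $t-1<t$ rectangles, as required.

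The only genuinely delicate point is the acyclicity step: one must check that the cycle perturbation does not violate the one-sided capacity constraints $\sum_i x_{ij}\le h(C_j)$, and this is exactly where bipartiteness is used — each container-vertex on a cycle has two incident cycle edges carrying perturbations of opposite sign, so its load is invariant. Everything else (the tree edge count, the degree bound, and the final stacking) is routine bookkeeping, and the symmetric vertical-container statement needs no new idea.
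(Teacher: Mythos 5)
Your proof is correct, and it takes a genuinely different route from the paper's. The paper argues greedily: it sorts the rectangles and the containers by non-increasing width, fills the containers one after the other with rectangles in this order, discards the (at most one) rectangle that overflows each container, and then proves by a contradiction/counting argument (comparing total heights of prefixes) that every surviving rectangle assigned to a container is narrow enough to fit in it. You instead cast the sliced packing as a feasible point of a transportation-type polytope, take a support-minimal solution, show its bipartite support graph is a forest via the standard cycle-cancellation perturbation (where bipartiteness is indeed what keeps both the equality constraints at rectangle vertices and the one-sided capacity constraints at container vertices intact), and then bound by degree counting the number of rectangles split across two or more containers by $t-1$; discarding exactly these gives an integral packing, and the width feasibility is built into the edge set of the graph. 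Your argument buys a marginally stronger bound ($t-1$ rather than $t$ discarded rectangles) and is the generic "forest support of a basic solution" technique, hence reusable for other fractional-to-integral rounding steps; the paper's greedy proof is more elementary and directly algorithmic, at the cost of the slightly delicate exchange argument needed to certify that widths are respected, which in your formulation is trivial because only legal rectangle--container pairs ever carry positive flow. Both establish the lemma as stated.
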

\begin{proof}
	Let us prove the lemma only for the horizontal case (the vertical case is analogous). Without loss of generality, assume that $w(R_1) \ge w(R_2) \ge \dots \ge w(R_m)$ and also $w(C_1) \ge \dots \ge w(C_t)$.
	
	Start assigning the rectangles iteratively to the first container and stop as soon as the total height of assigned rectangles  becomes strictly larger than $h(C_1)$. By discarding the last assigned rectangle, this gives a feasible packing (without slicing) of all the other assigned rectangles in the first container. Then we proceed similarly with the remaining rectangles and following containers.
	
	Now we show that the above procedure outputs a feasible packing of all but at most $t$ rectangles (the discarded ones) into the containers. Due to feasibility of the packing of the sliced rectangles into the containers, we already have, 
	$\sum_{i=1}^m{h(R_i)} \le \sum_{i=1}^{t}{h(C_i)}
	$.
	Note that the non-empty containers (except possibly the last one) are overfilled if we include the discarded rectangles.
	Thus, the above process assigns all the rectangles. 
	
	To finish the proof, we need to show that if $R_j$ is assigned to container $C_k$ by above procedure, then $w(R_j) \le w(C_k)$.
	Now as containers $C_1, \dots, C_{k-1}$ are overfilled including the so far discarded rectangles, we have that 
	\begin{equation}
	\label{eq:horsliced}
	\sum_{i=1}^j{h(R_i)} > \sum_{i=1}^{k-1}{h(C_i)}
	\end{equation}
	
	Now for the sake of contradiction, let us assume $w(R_j) > w(C_k)$.
	Then $w(R_p) > w(C_q)$ for all $p\le j$ and $q \ge k$.
	Thus in every feasible packing, even allowing slicing, rectangles $R_1,\dots, R_j$ must be assigned to containers $C_1, \dots, C_{k-1}$. 
	This contradicts \eqref{eq:horsliced}.
\end{proof}

Consider the packing obtained by applying Lemma~\ref{lem:structural_boxes}. We will refine this packing to obtain the additional structure claimed in Lemma~\ref{lem:structural_containers}.

\subsection{Horizontal rectangles}
In this section, we prove the following lemma:
\begin{lemma}\label{lem:structural_containers_horizontal}
	There is a constant $K_H \leq \left(\frac{1}{\eps \delta_w}\right)^{O(1/(\delta_w \eps))}$ such that, assuming $\mu_h \leq \frac{\eps}{K_H}$, it is possible to pack all the rectangles in $H$ in $K_H \leq \left(\frac{1}{\eps \delta_w}\right)^{O(1/(\delta_w \eps))}$ horizontal containers, so that each container is packed in a box $B \in \mathcal{B}_H \cup \{B_{H, cut}\}$, plus an additional container $B_{H,round}$ of size $W\times O(\eps)OPT'$, and the total area of the containers packed in a box of $\mathcal{B}_H$ is at most $a(H)$.
\end{lemma}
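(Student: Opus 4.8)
The plan is to restructure the packing of Lemma~\ref{lem:structural_boxes} one box at a time, converting the arbitrary arrangement of horizontal rectangles inside each box into a bounded number of horizontal containers and shedding a small‑area leftover into $B_{H,round}$. Write $H_B$ for the horizontal rectangles lying inside $B\in\mathcal{B}_H$ in that packing, and $H_{cut}$ for those inside $B_{H,cut}$; by Lemma~\ref{lem:structural_boxes} (via Lemma~\ref{lem:boxpartition}) we have $a(H_{cut})\le O(\eps)\cdot W\,OPT'$, and if $H_B\neq\emptyset$ then $w(B)\ge\delta_w W$, since every horizontal rectangle has width at least $\delta_w W$ and is fully contained in $B$. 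It suffices to show: for each $B\in\mathcal{B}_H\cup\{B_{H,cut}\}$ one can pack a subset $H'_B\subseteq H_B$ into $O_\eps(1)$ horizontal containers that fit inside $B$ and have total area at most $a(H_B)$, leaving out rectangles of total area $O(\eps\,a(H_B))+\delta_w^{-O(1)}\mu_h OPT\cdot W$; summing the leftovers over the at most $K_B+1$ boxes and packing them into $B_{H,round}$ finishes the proof.

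Fix such a box $B$. First remove greedily a subset $\mathcal{O}_B^{(1)}\subseteq H_B$ with $a(\mathcal{O}_B^{(1)})\le \delta_w\, a(H_B)+\delta_w^{-O(1)}\mu_h OPT\cdot w(B)$, chosen so that $(1+\delta_w)\,a(H_B\setminus\mathcal{O}_B^{(1)})/w(B)\le h(B)$; this neutralizes the \emph{multiplicative} slack of Kenyon--Rémila. Now rescale $B$ to unit height and apply Lemma~\ref{lem:Kenyon-Remila} to (the $90^\circ$ rotation of) $H_B\setminus\mathcal{O}_B^{(1)}$ with $\overline{\eps}=\delta_w$: since every horizontal rectangle has width at least $\delta_w W\ge\delta_w\,w(B)$, none of them is ``small'' in the sense of that lemma, so all of $H_B\setminus\mathcal{O}_B^{(1)}$ gets packed into at most $27/\delta_w^{3}$ horizontal containers; and because $H_B$ already fits in $B$ we have $OPT_{SP}\le h(B)/w(B)$ and $a(H_B\setminus\mathcal{O}_B^{(1)})/w(B)\le h(B)$, so, undoing the rescaling, these containers occupy a region of width $w(B)$ and height $\tilde h\le h(B)+\delta_w^{-O(1)}\mu_h OPT$. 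Truncate this region at height $h(B)$: keep, in each container, the stacked rectangles whose top is at most $h(B)$ (a valid sub‑container), and move the rest to the leftover set. Since $\tilde h-h(B)\le\delta_w^{-O(1)}\mu_h OPT$ and each horizontal rectangle has height at most $\mu_h OPT$, the rectangles discarded from each of the $\le 27/\delta_w^{3}$ containers have total area at most $\delta_w^{-O(1)}\mu_h OPT\cdot w(B)$, and what remains is at most $27/\delta_w^{3}$ horizontal containers that genuinely fit inside $B$.

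Finally, to make the container areas sum to at most $a(H_B)$, apply Lemma~\ref{lem:shrink_knapsack_container} in its area form (the only rectangles it drops form a band of height at most $\eps\,h(C)$, hence of area at most $\eps\,a(C)+\mu_h OPT\cdot w(C)$) to each of these containers, replacing it by at most $1/\eps^{2}$ smaller horizontal containers whose total area is at most the area of the rectangles originally assigned to it; it also kills the \emph{additive} slack in width by its internal width‑grouping. Across $\mathcal{B}_H$ the containers then have total area at most $\sum_{B\in\mathcal{B}_H}a(H_B)\le a(H)$; their number per box is at most $27/(\delta_w^{3}\eps^{2})$, so over all of $\mathcal{B}_H\cup\{B_{H,cut}\}$ it is at most $K_H:=O\!\bigl(K_B/(\delta_w^{3}\eps^{2})\bigr)\le\bigl(1/(\eps\delta_w)\bigr)^{O(1/(\delta_w\eps))}$, where $K_B=(1/\eps)(1/\delta_w)^{O(1)}$ as in Lemma~\ref{lem:boxpartition}; for $B_{H,cut}$, whose height budget $O(\eps)OPT$ is flexible, the identical construction fits inside it. The rectangles dropped along the way ($\mathcal{O}_B^{(1)}$, the truncation leftovers, the band leftovers of Lemma~\ref{lem:shrink_knapsack_container}) have total area at most $O(\eps)\sum_B a(H_B)+\delta_w^{-O(1)}K_B\,\mu_h OPT\cdot W\le O(\eps)\cdot W\,OPT'$, using $\delta_w\le\eps$, $a(H)\le W\,OPT'$, $OPT\le OPT'$, and the hypothesis $\mu_h\le\eps/K_H$ with $K_H$ taken at least $\delta_w^{-O(1)}K_B$. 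We pack all of them into $B_{H,round}$ by NFDH: by Lemma~\ref{lem:nfdhStripPacking} the height used is at most $h_{\max}+2\cdot O(\eps)W\,OPT'/W=\mu_h OPT+O(\eps)OPT'=O(\eps)OPT'$, as required. The main obstacle is exactly this area bookkeeping of the leftover: Kenyon--Rémila costs both a multiplicative $(1+\delta_w)$ and an additive $\delta_w^{-O(1)}\mu_h OPT$ in height, and one must kill the former by the pre‑removal $\mathcal{O}_B^{(1)}$ (whose area is controlled because $\delta_w\le\eps$ and $\sum_B a(H_B)\le a(H)$) and the latter by taking $\mu_h$ tiny relative to $1/K_H$ — precisely the smallness assumption in the statement.
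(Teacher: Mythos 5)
Your route (per-box Kenyon--R\'emila on the rotated horizontal items, truncation at height $h(B)$, then an ``area form'' of Lemma~\ref{lem:shrink_knapsack_container}) is genuinely different from the paper's proof, which slices the horizontal rectangles to unit height, applies linear grouping globally, rearranges the slices inside each box via Lemma~\ref{lem:rearrange_sliced} (so the containers are tiled exactly and waste no area), and then converts back to an integral packing via Lemma~\ref{lem:fractional_to_integral}, losing at most one rectangle per container. That difference is exactly where your argument has a gap. Everything you discard along the way --- the pre-removed set $\mathcal{O}_B^{(1)}$, the truncation overflow, and the band-drops of your area version of Lemma~\ref{lem:shrink_knapsack_container} --- is controlled only in total \emph{area} ($O(\eps)W\,OPT'$), and you finish by packing it into $B_{H,round}$ with NFDH. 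But $B_{H,round}$ must itself be a horizontal container (and, more generally, Lemma~\ref{lem:structural_containers} and the PPT algorithm of Theorem~\ref{thm:container_packing_ptas_ppt} need \emph{every} rectangle of $\mathcal{R}\setminus S$ to lie in one of $O_\eps(1)$ horizontal or vertical containers). An NFDH packing of your leftover is not such a packing: the number of shelves is not bounded by any constant (leftovers of width, say, $0.6W$ and tiny height each occupy their own shelf), and the leftover cannot be stacked vertically either, since when the discarded rectangles have width close to $\delta_w W$ their total height can be as large as $\Theta\bigl(a(H)/(\delta_w W)\bigr)=\Theta(OPT'/\delta_h)$, far exceeding the $O(\eps)OPT'$ height budget of $B_{H,round}$.

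This is precisely the difficulty the paper's construction is designed to avoid: because the sliced rearrangement wastes no area, no shrinking-with-drops is needed to get the area bound, and by Lemma~\ref{lem:fractional_to_integral} the only integral leftover is at most one rectangle per container --- a constant number $K_H$ of items, each of height at most $\mu_h OPT\le \eps OPT/K_H$ --- which, together with $H_{long}$ (whose total height is at most $3\eps OPT$ by the choice of the linear-grouping threshold), can literally be piled on top of each other inside the single container $B_{H,round}$. To repair your proof you would need an extra argument repacking an area-bounded (but cardinality- and height-unbounded) set of horizontal rectangles into $O_\eps(1)$ containers inside a $W\times O(\eps)OPT'$ region, which is essentially the original problem again. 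Two smaller issues, both fixable unlike the one above: Lemma~\ref{lem:Kenyon-Remila} requires both dimensions of every rectangle to be at most $1$ after rescaling, which your single rescaling ``to unit height'' does not guarantee when $\mu_h OPT>w(B)$ (you need independent scaling of the two axes); and Lemma~\ref{lem:shrink_knapsack_container} is stated and proved as a \emph{profit} guarantee, so the area-controlled variant you invoke would have to be re-proved, not cited.
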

\begin{proof}
	
	Observe that $OPT\cdot W \geq h(H) \delta_w W$. Thus, if $OPT \leq \frac{1}{\eps}$, the statement is immediately proved by defining a container for each rectangle in $H$ (and leaving $B_{H, round}$ empty); since $|H| \leq h(H)$ (being the heights positive integers), this introduces at most $\frac{1}{\eps \delta_w}$ containers. Thus, without loss of generality, we can assume that $OPT > \frac{1}{\eps}$. If $h(H) \leq \left\lceil\eps OPT \right\rceil$, then we can pile all the rectangles in $H$ in $B_{H, round}$, whose height will clearly be at most $h(H) = O(\eps)OPT$. Assume now $h(H) > \left\lceil\eps OPT \right\rceil$.
	
	We use the standard technique of \emph{linear grouping} (\cite{kr00}). Let $j$ be the smallest positive integer such that the set $H_{long}$ of the $j$ horizontal rectangles of maximum width (breaking ties arbitrarily) has height $h(H_{long}) \geq \left\lceil\eps OPT \right\rceil$. Clearly, $h(H_{long}) \leq \left\lceil\eps OPT \right\rceil + \mu_h OPT \leq 3\eps OPT$. We remove the rectangles in $H_{long}$ from the packing. Suppose now that the remaining rectangles are sorted in order of non-increasing width, and that they are sliced in rectangles of unit height. We can form groups $H_1, H_2, \dots, H_t$ of total height exactly $\left\lceil\eps OPT \right\rceil$ (possibly except for the last group, that can have smaller total height). Since $h(H) \leq OPT/\delta_w$, it follows that $t \leq \frac{1}{\eps\delta_w}$. With the convention that $H_0 := H_{long}$, then for each positive integer $i \leq t$ we have that the width of any (possibly sliced) rectangle in $H_t$ is smaller than the width of any rectangle in $H_{t-1}$; round up the widths of each rectangle in $H_i$ to $w_{max}(H_i)$, and let $\overline{H}_i$ be the obtained set of rectangles; let $\overline{H} = \bigcup_{i = 1}^t \overline{H}_i$. By the above observation, for each $i > 0$ it is possible to pack all the rectangles in $\overline{H}_i$ in the space that was occupied in the original packing by the rectangles in $H_{i - 1}$; moreover, $a(\overline{H}) \leq a(H)$.
	
	Consider each box $B \in \mathcal{B}_H \cup \{B_{H,cut}\}$ and the packing of the elements of $\overline{H}$ obtained by the above process. By applying Lemma~\ref{lem:rearrange_sliced} on each box, there is a packing of all the rectangles of $\overline{H}$ in at most $(1/\delta_w){\left(1 + \frac{1}{\eps \delta_w}\right)}^{1/\delta_w} \leq {\left(\frac{1}{\eps\delta_w}\right)}^{3/\delta_w}$ horizontal containers for each box, such that the total area of these containers is at most $a(\overline{H}) \le a(H)$.
	
	By putting back the slices of the original width, we obtain a packing of all the slices of the rectangles in $H_1, H_2, \dots, H_t$. By Lemma~\ref{lem:fractional_to_integral}, there exists a packing of all the rectangles in $H$, except for a set of at most $K_H := K_B {\left(\frac{1}{\eps\delta_w}\right)}^{3/\delta_w}$ horizontal containers. Provided that $\mu_h \leq \frac{\eps}{K_H}$, those remaining rectangles can be piled in $B_{H,round}$, together with rectangles in $H_{long}$, by defining its height as $4\eps OPT$.
	
\end{proof}

\subsection{Vertical and tall rectangles}
In this section we prove the following lemma:
\begin{lemma}\label{lem:structural_containers_vertical}
	There is a constant $K_V \leq \left(\frac{1}{\eps \delta_w}\right)^{O(1/(\delta_w \eps))}$ such that, assuming $\mu_w \leq \frac{\gamma}{3K_V}$, it is possible to pack all the rectangles in $T \cup V$ in at most $K_V$ vertical containers, so that each container is packed completely either:
	\begin{itemize}
		\item in one of the boxes in $\mathcal{B}_V$; 
		\item in the original position of a nicely cut rectangle from Lemma \ref{lem:boxpartition} and containing only the corresponding nicely cut rectangle;
		\item in a box $B_{disc}$ of size $(1 - \gamma)W \times (1+\eps-2\alpha)OPT'$;
		\item in one of two boxes $B_{V, cut}$ and $B_{V, round}$, each of size $\frac{\gamma}{3} W \times \alpha OPT$, which are in fact containers.
	\end{itemize}
	Moreover, the area of the vertical containers packed in $B_{OPT'}$ is at most $a(T \cup V)$.
\end{lemma}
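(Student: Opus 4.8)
The plan is to start from the packing of $T \cup V$ guaranteed by Lemma~\ref{lem:structural_boxes} and refine it box-by-box. First I would dispose of the ``easy'' pieces: the cut tall rectangles $T_{cut}$ are at most $4K_B$ in number and each sits in the original position of a nicely cut rectangle, so each one is declared to be its own vertical container (the second bullet); the cut vertical rectangles $V_{cut}$ are already packed side-by-side in $B_{V,cut}$ by Lemma~\ref{lem:structural_boxes}, and since $B_{V,cut}$ is a box filled by unit-width slices it \emph{is} itself a vertical container (the fourth bullet, first half). It then remains to handle $T_{box} \cup V_{box}$, i.e.\ the tall and vertical rectangles packed entirely inside the vertical boxes $\mathcal{B}_V$; as in Section~\ref{sec:repack}, I slice every rectangle of $V_{box}$ into unit-width slices $V_{sliced}$, and by Lemma~\ref{lem:fractional_to_integral} it suffices to produce a packing of $T_{box} \cup V_{sliced}$ into vertical containers, losing at most one slice per container, which fits into $B_{V,round}$ exactly as in Section~\ref{sec:repack} (the fourth bullet, second half) provided $\mu_w \le \frac{\gamma}{3K_V}$.

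The core step is to apply, inside each vertical box $\overline{B} \in \mathcal{B}_V$, the full machinery of Section~\ref{sec:repack}: shift the tall rectangles in $\overline{T}$ to the top/bottom of $\overline{B}$ (possible since $\alpha = 1/3$ forces at most two stacked tall rectangles), define free and pseudo rectangles, build the $O_\eps(1)$ sub-boxes by grouping pseudo/tall rectangles of equal height, and then invoke the Repacking Lemma~\ref{lem:repack} to repack $\ge \gamma\overline{w}$ of the discarded slices into the free space not occupied by sub-boxes. This leaves a discarded box $\overline{B}_{disc}$ of size $(1-\gamma)\overline{w} \times (1+\eps-2\alpha)\overline{h}$ per vertical box; replicating the layout of $\mathcal{B}$ inside $B_{OPT'}$, all the $\overline{B}_{disc}$'s pack into a single box $B_{disc}$ of size $(1-\gamma)W \times (1+\eps-2\alpha)OPT'$ (the third bullet). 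The remaining sub-boxes live inside the $\overline{B}$'s, hence inside $B_{OPT'}$; each sub-box contains only unit-width slices of a bounded number of distinct heights (heights are multiples of $\gamma OPT$ and bounded by $OPT'$, so at most $O(1/(\gamma\eps)) = O_\eps(1)$ distinct values occur, after also accounting for the $O(1/\delta_w)$ possible tall-rectangle heights), so Lemma~\ref{lem:rearrange_sliced} converts each sub-box into $O_\eps(1)$ vertical containers whose total area equals the area of the slices inside. Summing over the $O_\eps(1)$ vertical boxes and the $O_\eps(1)$ sub-boxes per box gives a total of $K_V = \left(\frac{1}{\eps\delta_w}\right)^{O(1/(\delta_w\eps))}$ vertical containers.

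For the area bound: the vertical containers placed inside $B_{OPT'}$ are exactly the ones produced from the sub-boxes (plus the trivial containers that coincide with nicely cut rectangles, whose area is exactly that of the rectangle), and Lemma~\ref{lem:rearrange_sliced} guarantees each such container's area equals the area of the slices it holds; since slicing preserves area and every slice packed in a sub-box came from some rectangle of $T \cup V$, the total area of vertical containers inside $B_{OPT'}$ is at most $a(T \cup V)$. Finally, when re-integralizing the slices via Lemma~\ref{lem:fractional_to_integral}, at most one rectangle per container is evicted; these $O_\eps(1)$ evicted rectangles, each of width $\le \mu_w W$, are piled side-by-side into $B_{V,round}$, which has width $\frac{\gamma}{3}W \ge K_V \mu_w W$ by the hypothesis $\mu_w \le \frac{\gamma}{3K_V}$, and height $\alpha OPT$ which accommodates any vertical rectangle. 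The main obstacle I expect is bookkeeping: verifying that the number of distinct slice-heights inside each sub-box is genuinely $O_\eps(1)$ (so that Lemma~\ref{lem:rearrange_sliced} applies with constant parameters) and checking that the replication argument placing all $\overline{B}_{disc}$'s into a single $B_{disc}$ respects the $(1-\gamma)W$ width budget — this is where the choice $\alpha = 1/3$ and the precise form of the Repacking Lemma are used, and getting the constants to line up is the delicate part rather than any conceptually new idea.
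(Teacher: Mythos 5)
Your proposal follows essentially the same route as the paper's proof: handle $T_{cut}$ with one container per nicely cut rectangle and $V_{cut}$ via $B_{V,cut}$, slice $V_{box}$, apply the Section~\ref{sec:repack} machinery (shift tall rectangles, form sub-boxes of equal-height tall/pseudo rectangles, invoke the Repacking Lemma~\ref{lem:repack} to shrink the discarded box to width $(1-\gamma)\overline{w}$ and collect all discarded boxes into $B_{disc}$), convert each pseudo-rectangle sub-box into $O_\eps(1)$ area-preserving vertical containers via Lemma~\ref{lem:rearrange_sliced}, and finally re-integralize with Lemma~\ref{lem:fractional_to_integral}, placing the at most $K_V$ leftover vertical rectangles into $B_{V,round}$ using $\mu_w \le \frac{\gamma}{3K_V}$. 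This matches the paper's argument (which formalizes the sub-box formation via the rearrangement lemma quoted from Nadiradze–Wiese), with only minor differences in the bookkeeping of constants.
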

Consider a specific vertical box $\overline{B}$ of size $\overline{w}\times \overline{h}$; as described in Section~\ref{sec:repack}, up to four sub-boxes $\overline{B}_{corn}$ are defined, each one of them only containing slices of rectangles in $V$. Then, the rectangles are repacked so that each rectangle in $\overline{T}$ touches either the top or the bottom edge of $\overline{B}$, and the set $\overline{P}$ of pseudo rectangles is defined. Let $\overline{B}_{rem}:=\overline{B} - \overline{T}_{cut}$.
We now get a rearrangement of this packing applying the following lemma from \cite{nw16}:

\begin{lemma}[follows from the proof of Lemma~10 in \cite{nw16}] \label{lem_rearrangeTUP} There is packing of $\overline{T} \cup \overline{P} \cup \overline{B}_{corn}$ into at most $K_R := 2\frac{1+\eps}{\gamma} \cdot 6^{(1+\eps)/\gamma}$ sub-boxes inside  $\overline{B}_{rem}$, such that:
	\begin{itemize}
		\item each sub-box contains only tall rectangles or only pseudo rectangles, that are all of the same height as the sub-box
		\item each of the sub-boxes is completely occupied by the contained tall rectangles or pseudo rectangles, and the tall or pseudo rectangles that it contains have the same $y$-coordinate as before the rearrangement.
		\item the sub-boxes in $\overline{B}_{corn}$ and the rectangle slices inside them are packed in the same position as before.
	\end{itemize}
\end{lemma}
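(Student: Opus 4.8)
The plan is to establish the lemma by re-running the rearrangement behind Lemma~10 of \cite{nw16}, adapted to our parameters. I would first collect the structural facts that hold at this point of the construction. After the vertical shift, every rectangle of $\overline{T}$ and every pseudo rectangle of $\overline{P}$ touches the top or the bottom edge of $\overline{B}$: for the tall rectangles this is exactly the outcome of the shifting step, and for a pseudo rectangle it holds because, after shifting, its column either contains no tall rectangle (so the pseudo rectangle spans the full height $\overline{h}$) or a single tall rectangle flush against one edge (so the pseudo rectangle runs from that tall rectangle to the opposite edge). Moreover all heights involved are integer multiples of $\gamma OPT$: tall heights by Lemma~\ref{lem:verticalrounding}, $\overline{h}$ because $\overline{B}$ is one of the boxes of Lemma~\ref{lem:boxpartition}, and each pseudo height because it is $\overline{h}$ or $\overline{h}$ minus a tall height. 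Since $\overline{h}\le OPT'\le(1+\eps)OPT$, there are at most $q:=\lceil(1+\eps)/\gamma\rceil$ distinct such heights.

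Next I would carry out the horizontal rearrangement. The tall rectangles of $\overline{T}$ fall into at most $2q$ classes according to which edge (top/bottom) they touch and their height; within each class, slide the (whole, un-sliced) tall rectangles so that the ones of a common class become horizontally contiguous, pushing the width-one pseudo rectangles of $\overline{P}$ (which are precisely the containers holding the shifted $\overline{V}$-slices) and the free regions aside to make room. Since everything only moves horizontally, $y$-coordinates are preserved. The pseudo rectangles likewise have at most $q+1$ distinct heights and two anchoring edges, so they too form a constant number of classes, and being of unit width they can simply be permuted so that a class becomes contiguous; this produces the remaining sub-boxes. Throughout, the corner sub-boxes $\overline{B}_{corn}$ are left exactly in place: each of them is flush against a corner of $\overline{B}$ and against a cut tall rectangle of $\overline{T}_{cut}$, so it occupies a contiguous block of extreme columns over a contiguous height interval, and all the rearranging happens in the complementary region. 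Counting the resulting sub-boxes exactly as in the proof of Lemma~10 of \cite{nw16} yields the bound $K_R=2\frac{1+\eps}{\gamma}\cdot 6^{(1+\eps)/\gamma}$ (a self-contained but cruder count, grouping columns by the pair of heights they touch, already gives $O(q^2)$ sub-boxes, which is below $K_R$).

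I expect the main obstacle to be checking the two feasibility points underpinning this outline. The first is that the preceding vertical-shift step genuinely produces, in every column, at most one tall rectangle at the bottom flush with the base, at most one at the top flush with the ceiling, and free or pseudo material in between --- this is where $\alpha\ge 1/3$ is essential, since it guarantees that at least one of the regions directly above and below a given tall rectangle is free of other tall rectangles, so that rectangle can be pushed to that side with the $\overline{V}$-slices it covers sliding with it. The second is that the pinned corner sub-boxes do not obstruct the column rearrangement, for which the flush-against-a-corner description above is exactly what is needed. Once these are in hand, the rest is the routine accounting of constants already performed in \cite{nw16}.
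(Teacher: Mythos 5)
Your structural observations are right (after the shift every tall and pseudo rectangle is anchored to the top or bottom edge of $\overline{B}$, all relevant heights are multiples of $\gamma OPT$, hence at most $(1+\eps)/\gamma$ height classes, and the corner sub-boxes sit against pinned rectangles of $\overline{T}_{cut}$), and deferring the final count to \cite{nw16} is consistent with how the paper itself treats this statement — the lemma is imported from the proof of Lemma~10 of \cite{nw16} and the paper gives no independent proof. But your sketch has a genuine gap exactly where that imported proof does its work: the feasibility of the horizontal rearrangement. Saying that within each (edge, height) class one can ``slide the whole tall rectangles so that they become contiguous, pushing the pseudo rectangles and free regions aside to make room'' does not establish that no overlaps are created. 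A top-anchored rectangle of height $h_1$ cannot be moved into a column whose bottom-anchored occupant has height exceeding $\overline{h}-h_1$, so the classes cannot be made contiguous independently of one another; one needs a global ordering (e.g.\ top-anchored items sorted by non-increasing height, bottom-anchored by non-decreasing height) together with a width-counting/exchange argument showing that, column by column, the top and bottom occupants still fit — and this argument must additionally respect the pinned rectangles of $\overline{T}_{cut}$ and the corner sub-boxes at the extreme columns, which is precisely why the construction in \cite{nw16} is more delicate and why $K_R$ carries the factor $6^{(1+\eps)/\gamma}$ rather than an $O(q^2)$ bound.

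Your fallback argument makes this gap concrete: ``grouping columns by the pair of heights they touch'' and permuting the columns would slice the tall rectangles, since a tall rectangle spans many unit columns whose opposite-edge occupants (hence whose types) differ, and the lemma — and its later use, where tall rectangles must be packed integrally while only vertical rectangles are sliced — forbids cutting them. So the cruder $O(q^2)$ count is not a valid packing argument, and the main route is missing the non-overlap proof. To repair the write-up you should either reproduce the sorting-plus-counting argument of Lemma~10 of \cite{nw16} (adapted to the region $\overline{B}_{rem}$ with the corner sub-boxes frozen), or state explicitly that the entire rearrangement, not just the counting, is taken from that proof.
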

Consider the packing obtained by the above lemma; partition all the free space in $\overline{B}_{rem}$ which is not occupied by the above defined boxes into at most $2K_R + 1$ empty sub-boxes by considering the maximal rectangular regions that are not intersected by the vertical lines passing through the edges of the sub-boxes. By Lemma~\ref{lem:repack}, the fraction of the rectangles contained in slices of $\overline{D}$ of total width at least $\gamma \overline{w}$ can be repacked inside the empty sub-boxes.

Among the at most $3K_R + 1$ sub-boxes that we defined, some only contain tall rectangles, while the others contain pseudo rectangles. The ones that only contain tall rectangles are in fact already containers. Note also that the box $B_{V,cut}$ defined in the proof of Lemma~\ref{lem:structural_boxes} is indeed already a vertical container. For each sub-box $B'$ that contains pseudo rectangles, we now consider the sliced vertical rectangles that are packed in it. By Lemma~\ref{lem:rearrange_sliced}, there is a packing of all the (sliced) rectangles in $B'$ into at most $\frac{1}{\delta_h}{(1 + 1/\gamma)}^{1/\delta_h}$ containers, and their total area is equal to the total area of the slices of the rectangles they contain. 
There are also at most $4K_B$ containers to pack the tall rectangles that are nicely cut; each of them is packed in his original position in a vertical container of exactly the same size. In total we defined at most $\kappa := (3 K_R + 1) \frac{1}{\delta_h}{(1 + 1/\gamma)}^{1/\delta_h}+4K_B + 1$ containers (where the additional term $1$ is added to take $B_{V, cut}$ into account). We remark that all the tall rectangles are integrally packed, while vertical rectangles are sliced and packed into a container with only slices of vertical rectangles. 
The total area of all the vertical containers packed in $B_{OPT'}$ is at most the sum of the total area of tall items and the total area of the sliced vertical rectangles, i.e., at most $a(T \cup V)$. Finally, by Lemma~\ref{lem:fractional_to_integral}, all but $\kappa$ vertical rectangles can be packed in the containers. With the condition that $\mu_w \leq \frac{\gamma}{3 \kappa}$, these remaining vertical rectangles can be packed in a vertical container $B_{V, round}$ of size $\frac{\gamma W}{3} \times \alpha OPT$. This concludes the proof of Lemma~\ref{lem:structural_containers_vertical} with $K_V := \kappa + 1$.

\subsection{Concluding the proof}

There are at most $K_L := \frac{1}{\delta_h \delta_w}$ many large rectangles. Each such large rectangle is assigned to one container of the same size.

Rectangles in $M$ are packed as described in the proof of  Lemma~\ref{lem:mediumrectanglesrepacking}, using at most $K_M:=\frac{\gamma}{3 \mu_w}+\frac{3 \eps}{\mu_h}$ containers, which are placed in the boxes $B_{M,hor}$ and $B_{M,ver}$.

Horizontal and vertical rectangles are packed as explained in Lemma~\ref{lem:structural_containers_horizontal} and Lemma~\ref{lem:structural_containers_vertical}, respectively. The total number of containers $K_{TOTAL}=K_L+K_M+K_H+K_V$, is clearly $O_\eps(1)$, and each of these containers is either contained or disjoint from $\mathcal{B}_{OPT'}$. Among them, at most $K_F := K_L + K_H + K_V$ are contained in $\mathcal{B}_{OPT'}$. The total area of these $K_F$ containers is at most $a(H)+a(T \cup V)+a(L) \le a(\mathcal{R} \setminus S)$.

By packing the boxes and containers we defined as in Figure~\ref{fig_our_packing}, we obtained a packing in a strip of width $W$ and height $OPT' \cdot (\max\{1 + \alpha, 1 + (1 - 2\alpha)\} + O(\eps))$, which is at most $(4/3 + O(\eps))OPT'$ for $\alpha = 1/3$. This concludes the proof of Lemma~\ref{lem:structural_containers}.

\section{Packing the small rectangles}
\label{sec:smallpack}
It remains to pack the small rectangles $S$. We will pack them in the free space left by containers inside $[0,W] \times [0,OPT']$ plus an additional container $B_S$ of small height as the following lemma states. By placing box $B_S$ on top of the remaining packed rectangles, the final height of the solution increases only by $\eps \cdot OPT'$.
\begin{lemma}\label{lem:smallpacking} Assuming \sal{$\mu_w\mu_h \leq \frac{(1 - 2\eps)\eps^2}{10 K_F^2}$}, it is possible to pack in polynomial time all the rectangles in $S$ into the area $[0,W] \times [0,OPT']$ not occupied by containers plus an additional box $B_S$ of size $W \times \eps OPT'$. Moreover, all the small rectangles are packed into $\eps$-granular area containers.
\end{lemma}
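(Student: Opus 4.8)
The plan is to place $S$ in the space of $\mathcal{B}_{OPT'}=[0,W]\times[0,OPT']$ that the containers of Lemma~\ref{lem:structural_containers} leave empty, following the same grid-plus-NFDH recipe used for the short-narrow rectangles in Section~\ref{sec:shortnarrow}. First, there is enough room: by Lemma~\ref{lem:structural_containers} the containers contained in $\mathcal{B}_{OPT'}$ have total area at most $a(\mathcal{R}\setminus S)$, so the uncovered part of $\mathcal{B}_{OPT'}$ has area at least $W\cdot OPT'-a(\mathcal{R}\setminus S)\ge W\cdot OPT-a(\mathcal{R}\setminus S)=a(S)$, using $a(\mathcal{R})\le W\cdot OPT$. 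Next I would build a non-uniform grid: extend the four sides of each container lying inside $\mathcal{B}_{OPT'}$ until they hit another container or the boundary, and also draw horizontal and vertical lines spaced $\eps\,OPT'$ and $\eps W$ apart. A \emph{free cell} is a face of this grid disjoint from all containers; since there are at most $K_F$ containers inside $\mathcal{B}_{OPT'}$, the number of free cells is at most $(2K_F+1/\eps+1)^2=O_\eps(1)$, and — this is the decisive point, exactly as in Section~\ref{sec:shortnarrow} — this bound does \emph{not} depend on $\mu_w$ or $\mu_h$. Because of the spaced lines, every free cell has width at most $\eps W$ and height at most $\eps\,OPT'$.

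Now fix a constant $\gamma_0=\Theta(\eps)$ that is a sufficiently small fraction of $\eps$, keep only the free cells $C$ with $w(C)\ge\mu_w W/\gamma_0$ and $h(C)\ge\mu_h OPT/\gamma_0$, and treat each kept cell as an area container. Every rectangle of $S$ has width $\le\mu_w W$ and height $\le\mu_h OPT$, hence is $\gamma_0$-small — in particular $\eps$-small — for each kept cell, so these are legitimate $\eps$-granular area containers. Pack $S$ greedily with NFDH: process the kept cells one by one, feeding each cell all rectangles of $S$ not yet packed and invoking Lemma~\ref{lem:nfdhPack}. If every rectangle gets packed we are already done; otherwise each kept cell ends up filled to at least a $(1-2\gamma_0)$ fraction of its area, so the total area $a(S')$ of the not-yet-packed set $S'\subseteq S$ is at most $2\gamma_0\cdot W\,OPT'$ plus the total area of the \emph{discarded} cells.

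Each discarded cell has width below $\mu_w W/\gamma_0$ or height below $\mu_h OPT/\gamma_0$, so, bounding its other side by $\eps W$ (resp.\ $\eps\,OPT'$), its area is $O(\mu_w+\mu_h)\cdot W\,OPT'$; as there are only $O_\eps(K_F^2)$ free cells, their total area is $O_\eps(K_F^2)\cdot(\mu_w+\mu_h)\cdot W\,OPT'$. This is exactly where the quantitative hypothesis is used: the bound $\mu_w\mu_h\le\frac{(1-2\eps)\eps^2}{10K_F^2}$, together with $\mu_w=\tfrac{\eps\mu_h}{12}$, forces $\mu_w$ and $\mu_h$ to be small enough relative to $K_F$ and $\eps$ that the discarded area plus $2\gamma_0 W\,OPT'$ sums to at most $(1-2\eps)\cdot W\,\eps OPT'$. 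Finally $B_S$ is a box of size $W\times\eps OPT'$, and since $\mu_w\le\eps$ and $\mu_h\le\eps^2$ (by the choice of $\mu_w,\mu_h$ in Lemma~\ref{lem:mediumrectanglesarea}) every rectangle of $S'$ is $\eps$-small for $B_S$; as $a(S')\le(1-2\eps)\,a(B_S)$, Lemma~\ref{lem:nfdhPack} packs all of $S'$ into $B_S$, which is therefore also an $\eps$-granular area container. Hence all of $S$ is packed in polynomial time into $\eps$-granular area containers, either kept free cells of $\mathcal{B}_{OPT'}$ or $B_S$.

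I expect the only delicate part to be this area bookkeeping. One must ensure the number of free cells is a genuine constant independent of $\mu_w,\mu_h$ — which is why the grid is built from the at most $K_F$ containers of Lemma~\ref{lem:structural_containers} rather than from all $K_{TOTAL}$ of them, and why the $\eps$-spaced lines are added — and then verify that the area lost to discarded cells together with the NFDH slack fits inside $B_S$; the stated bound on $\mu_w\mu_h$ is precisely what makes this estimate go through, after which no further tuning of the parameters is required. (Alternatively, one could invoke Lemma~\ref{lem:packAreaContainer} in place of the explicit NFDH slack computation, at the cost of a slightly different constant.)
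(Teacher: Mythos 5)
Your overall plan is the paper's: extend the sides of the at most $K_F$ containers inside $[0,W]\times[0,OPT']$ to get $O(K_F^2)$ free rectangular regions of total area at least $a(S)$, keep only the regions that are large in both dimensions, treat them as $\eps$-granular area containers, fill them greedily with NFDH, and dump the leftover into $B_S$. The step you yourself flag as "the only delicate part" is, however, exactly where the argument breaks. You bound each \emph{discarded} cell by (small dimension) $\times$ ($\eps W$ or $\eps\,OPT'$), i.e.\ linearly in $\max\{\mu_w,\mu_h\}$, and sum over $\Theta_\eps(K_F^2)$ cells; closing the proof this way requires something like $K_F^2\max\{\mu_w,\mu_h\}=O(\eps)$. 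The stated hypothesis does not give this: $\mu_w\mu_h\le\frac{(1-2\eps)\eps^2}{10K_F^2}$ together with $\mu_w=\frac{\eps\mu_h}{12}$ only yields $\mu_h=O\bigl(\sqrt{\eps}/K_F\bigr)$, so $K_F^2(\mu_w+\mu_h)$ can be as large as $\Omega(K_F\sqrt{\eps})$, which is enormous compared to $\eps$ (recall $K_F\gg 1/\eps$). Even invoking the other standing constraints of the chapter ($\mu_h\le\eps/K_F$, $\mu_w\le\gamma/(3K_F)$) one only gets $K_F^2\mu_h\le\eps K_F$, still far too big. So the sentence ``the bound on $\mu_w\mu_h$ is precisely what makes this estimate go through'' is not correct for your accounting, and the final inequality (discarded area $+$ NFDH slack $\le(1-2\eps)\eps W\,OPT'$) does not follow.

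The paper's proof avoids this by charging each of the at most $5K_F^2$ non-kept regions an area of at most $\frac{\mu_w W}{\eps}\cdot\frac{\mu_h\,OPT}{\eps}$ (both thresholds at once), so the total leftover is bounded by $10K_F^2\mu_w\mu_h\,W\,OPT'/\eps$, and the hypothesis $\mu_w\mu_h\le\frac{(1-2\eps)\eps^2}{10K_F^2}$ is reverse-engineered to make exactly this quantity fit into $B_S$; in other words, the assumption is calibrated to a \emph{product}-form bound on the lost area, not to the \emph{sum}-form bound your per-cell estimate produces. To repair your write-up you would either have to justify a product-type charge for the discarded cells (or otherwise exploit the grid structure, e.g.\ a per-thin-column/per-thin-row accounting with a correspondingly stronger assumption on $\mu_w,\mu_h$), or strengthen the hypothesis to something of the form $K_F^2(\mu_w+\mu_h)=O(\eps^2)$ — which is not what the lemma assumes. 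The auxiliary points (free-cell count independent of $\mu_w,\mu_h$, the $2\gamma_0\,W\,OPT'$ NFDH slack with $\gamma_0$ a small fraction of $\eps$, $\eps$-smallness of $S$ for $B_S$) are fine, but the quantitative core as written does not close.
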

\begin{proof}
We first extend the sides of the containers inside $[0,W] \times [0,OPT']$ in order to define a grid. This procedure partitions the free space in $[0,W] \times [0,OPT']$ into a constant number of rectangular regions (at most ${(2K_F+1)}^2 \leq 5K_F^2$ many) whose total area is at least $a(S)$ thanks to Lemma~\ref{lem:structural_containers}. Let $\mathcal{B}_{small}$ be the set of such rectangular regions with width at least $\sal{\eps} \mu_w W / \sal{\eps}$ and height at least $\mu_h OPT / \sal{\eps}$ (notice that the total area of rectangular regions not in $\mathcal{B}_{small}$ is at most $5K_F^2\mu_w \mu_h \cdot W \cdot OPT / \sal{\eps^2}$). We treat each such region as an $\eps$-granular area container, and we use NFDH to pack a subset of $S$ into the regions in $\mathcal{B}_{small}$. By standard properties of NFDH (see also Lemma~\ref{lem:nfdhPack}), since each region in $\mathcal{B}_{small}$ has size at most $W \times OPT'$ and each item in $S$ has width at most $\mu_w W$ and height at most $\mu_h OPT$, the total area of the unpacked rectangles from $S$ can be bounded above by \sal{$2\eps \cdot 5K_F^2\mu_w \mu_h \cdot W \cdot OPT' / \eps^2 = 10 K_F^2\mu_w \mu_h \cdot W \cdot OPT' / \eps$}.
Therefore, thanks to Lemma~\ref{lem:nfdhPack}, we can pack the latter small rectangles with NFDH in an additional area container $B_S$ of width $W$ and height \sal{$\eps OPT'$}, provided that \sal{$(1 - 2\eps) \eps W \cdot OPT' \geq \frac{10K_F^2 \mu_w \mu_h \cdot W \cdot OPT'}{\eps}$}, that is, \sal{$\mu_w\mu_h \leq \frac{(1 - 2\eps)\eps^2}{10 K_F^2}$}.
\end{proof}

The choice of $f$ and $k$ from Lemma~\ref{lem:mediumrectanglesarea} (and consequently the actual values of $\mu_h$, $\delta_w$, and $\mu_w$) needs to satisfy all the constraints that arise from the analysis, that is:
\begin{table}[ht]
	\captionsetup{labelformat=empty}
	\centering
	{\normalsize
		\begin{tabular}{l l}
			$\bullet$ $\mu_w = \frac{\eps \mu_h}{12}$ and $\delta_w = \frac{\eps \delta_h}{12}$ (Lemma~\ref{lem:mediumrectanglesarea}), \qquad\qquad &
			$\bullet$ $\mu_w \leq \gamma \frac{\delta_h}{6K_B(1+\eps)}$ (Lemma~\ref{lem:structural_boxes}),\\
			
			$\bullet$ $\gamma = \frac{\eps \delta_h}{2}$ (Lemma~\ref{lem:verticalrounding}), &
			$\bullet$ $\mu_w \leq \frac{\gamma}{3 K_F}$ (Lemma~\ref{lem:structural_containers}),\\
			
			$\bullet$ $6 \eps^k \le \frac{\gamma}{6}$ (Lemma \ref{lem:mediumrectanglesrepacking}), &
			$\bullet$ $\mu_h \leq \frac{\eps}{K_F}$ (Lemma \ref{lem:structural_containers}),\\
			
			$\bullet$ $\mu_h \leq \frac{\eps \delta_w}{K_B}$ (Lemma~\ref{lem:boxpartition}), &
			$\bullet$ \sal{$\mu_w\mu_h \leq \frac{(1 - 2\eps)\eps^2}{10 K_F^2}$} (Lemma~\ref{lem:smallpacking}) 
		\end{tabular}
	}
	\label{table:conditions}
	\caption{}
\end{table}
\vspace{-25pt}

It is not difficult to see that all the constraints are satisfied by choosing $f(x) = (\eps x)^{C/(\eps x)}$ for a large enough constant $C$ and $k = \left\lceil\log_{\eps}\left(\frac{\gamma}{36}\right)\right\rceil$.

We proved that there exists a container packing with $O_\eps(1)$ containers with height at most $(\frac{4}{3}+O(\eps))OPT$. Finally, by using the algorithm provided by Lemma~\ref{thm:container_packing_ptas_ppt}, we obtain the main result of this chapter:

\begin{theorem}
	There is a PPT $(\frac{4}{3}+O(\eps))$-approximation algorithm for strip packing, with or without $90^\circ$ rotations.
\end{theorem}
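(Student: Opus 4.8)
The plan is simply to assemble the structural result and the algorithmic container-packing machinery established earlier in the chapter. First I would dispose of the assumption that $OPT$ is known: since $OPT\in\mathbb{N}$ and $OPT\le \sum_i h_i\le n\,h_{\max}$, there are only pseudo-polynomially many candidate values, so the algorithm tries each guess $v$, runs the procedure below as if $OPT=v$, and among the guesses for which the procedure succeeds returns the packing of smallest height. Since the procedure succeeds for $v=OPT$ and produces height $(4/3+O(\eps))v$, correctness of the overall output follows.

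For a fixed correct guess, I would first fix $\eps>0$ and choose $\alpha=1/3$ together with $\delta_h,\mu_h,\delta_w,\mu_w$ and $\gamma=\eps\delta_h/2$ via Lemma~\ref{lem:mediumrectanglesarea} so that all the inequalities collected in the constraint table hold (taking $f(x)=(\eps x)^{C/(\eps x)}$ for a large enough constant $C$ and $k=\lceil\log_\eps(\gamma/36)\rceil$). Then Lemma~\ref{lem:structural_containers} gives a packing of $\mathcal{R}\setminus S$ into $O_\eps(1)$ horizontal and vertical containers inside $[0,W]\times[0,(4/3+O(\eps))OPT']$, the $K_F$ containers meeting $\mathcal{B}_{OPT'}$ having total area at most $a(\mathcal{R}\setminus S)$; and Lemma~\ref{lem:smallpacking} packs $S$ into $\eps$-granular area containers filling the free space inside $[0,W]\times[0,OPT']$ plus one extra box $B_S$ of height $\eps OPT'$. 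Placing $B_S$ on top, this is an $\eps$-granular container packing of the whole instance $\mathcal{R}$ using $O_\eps(1)$ containers and height at most $(4/3+O(\eps))OPT'\le(4/3+O(\eps))OPT$.

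Finally I would invoke Theorem~\ref{thm:container_packing_ptas_ppt} with knapsack dimensions $W\times H$, $H:=(4/3+O(\eps))OPT$: since $\mathcal{R}$ admits an $\eps$-granular container packing with $O_\eps(1)$ containers into $W\times H$, the theorem yields a PPT algorithm that actually packs all of $\mathcal{R}$ into $W\times(1+2\eps)H=W\times(4/3+O(\eps))OPT$, the $(1+2\eps)$ factor being absorbed into the $O(\eps)$ slack; rescaling $\eps$ by a constant gives the $(4/3+\eps)$ bound. For the variant with $90^\circ$ rotations, the point is that Lemmas~\ref{lem:verticalrounding}--\ref{lem:structural_containers} are purely existential transformations applied to an optimal packing, so once each rectangle is frozen in its optimal orientation they go through unchanged, while Theorem~\ref{thm:container_packing_ptas_ppt} already produces the container packing algorithmically in both settings. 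The genuinely hard work --- the repacking Lemma~\ref{lem:repack} that fits all discarded vertical slices into a single discarded box of width $(1-\gamma)W$ and hence lets the final layout of Figure~\ref{fig_our_packing} reach height $(1+\max\{\alpha,1-2\alpha\}+O(\eps))OPT$ --- is already behind us; what remains here is only the bookkeeping of the $O(\eps)$ terms and of the parameter constraints.
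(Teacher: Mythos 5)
Your proposal is correct and follows essentially the same route as the paper: establish the existence of an $\eps$-granular container packing of height $(4/3+O(\eps))OPT$ via Lemma~\ref{lem:structural_containers} and Lemma~\ref{lem:smallpacking} under the parameter choices of the constraint table, then invoke the PPT container-packing algorithm of Theorem~\ref{thm:container_packing_ptas_ppt} (which handles both the rotated and non-rotated settings), absorbing the $(1+2\eps)$ augmentation into the $O(\eps)$ term. The only additions — explicitly enumerating the pseudo-polynomially many guesses for $OPT$ and noting that the structural lemmas apply to the optimal packing with each rectangle frozen in its chosen orientation — are exactly the standard steps the paper leaves implicit.
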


\section{Conclusions}
\label{sec:SP-conclusions}

By refining the techniques of \cite{nw16}, we obtained a Pseudo-Polynomial Time approximation algorithm for Strip Packing that has an approximation ratio of $\frac{4}{3} + O(\eps)$; moreover, we argued that this is the best possible result in this framework, and further improvements would require a substantially different approach.

It is still a very interesting open problem to close the gap between the lower bound of $5/4$ provided by \cite{hjrs17} and the current upper bound of $\frac{4}{3} + O(\eps)$ for the Pseudo-Polynomial Time approximability.

Similarly, there is still a gap for the polynomial time approximability of the problem, with the recent upper bound of $5/3 + \eps$ from \cite{hjpv14}, and the classical lower bound of $3/2$. Closing this gap is a challenging open problem whose solution is required to complete our understanding of the Strip Packing problem.

\chapter[Approximations for 2DGK without Rotations]{Approximations for 2DGK\\without Rotations}
\label{chap:2dgk-norot}

In this chapter, we will discuss our improved approximation algorithms for the \tdk problem, in the setting when the rectangles \emph{cannot} be rotated. We describe the first algorithm that breaks the $2$-approximation barrier for \tdk.

All prior polynomial-time approximation algorithms for \tdk~implicitly or explicitly exploit an approach similar to our container packings. The idea is to partition the knapsack into a constant number of axis-aligned rectangular regions (\emph{containers}). The sizes (and therefore positions) of these containers can be \emph{guessed} in polynomial time. Then items are packed inside the containers in a simple way: either one next to the other from left to right or from bottom to top (similarly to the one-dimensional case), or by means of the simple greedy Next-Fit-Decreasing-Height algorithm.
Indeed, also the QPTAS in~\cite{aw15} can be cast in this framework, with the relevant difference that the number of containers in this case is poly-logarithmic (leading to a quasi-polynomial running time).

One of the major bottlenecks to achieve approximation factors better than $2$ is that items that are high and narrow (\emph{vertical} items) and items that are wide and thin (\emph{horizontal}
items) can interact in a very complicated way. Indeed, consider the following seemingly simple \fontL\emph{-packing} problem: we are given a set of items $i$ with either $\width(i)>N/2$ (horizontal items) or $\height(i)>N/2$ (vertical items). 
Our goal is to pack a maximum profit subset of them inside an $\fontL$-shaped region $\fontL=([0,N]\times [0,\height_\fontL]) \cup ([0,\width_\fontL]\times [0,N])$, so that horizontal (resp. vertical) items are packed in the bottom-right (resp. top-left) of $\fontL$. 

To the best of our knowledge, the best-known approximation ratio for L-packing is $2+\eps$, obtained by the following simple algorithm: remove either all vertical or all horizontal items, and then pack the remaining items by a simple reduction to one-dimensional knapsack (for which an FPTAS is known).
It is unclear whether a container-based packing can achieve a better approximation factor, and we conjecture that this is not the case.

In order to obtain our results, we substantially deviate for the first time from
\emph{pure} container-based packings. Instead, we consider \emph{L\&C-packings} that combine $O_\eps(1)$ containers \emph{plus} one L-packing of the above type (see Figure~\ref{fig:packing+ring}.(a)), and show that one such packing has large enough profit.

While it is easy to pack almost optimally items into containers, the mentioned $2+\eps$ approximation for L-packings is not sufficient to achieve altogether a better than $2$ approximation factor: indeed, the items of the L-packing might carry all the profit! Our main algorithmic contribution is a PTAS for the L-packing problem. 
It is easy to solve this problem optimally in pseudo-polynomial time $(Nn)^{O(1)}$ by means of dynamic programming. We show that a $1+\eps$ approximation can be obtained by restricting the top (resp. right) coordinates of horizontal (resp. vertical) items to a proper set that can be
computed in polynomial time $n^{O_\eps(1)}$. Given that, one can adapt the above dynamic program to run in polynomial time.

\begin{theorem}\label{thm:main:Lpacking}
	There is a PTAS for the L-packing problem. 
\end{theorem}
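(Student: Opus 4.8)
The plan is to establish Theorem~\ref{thm:main:Lpacking} by first recalling the exact pseudo-polynomial dynamic program for the L-packing problem, and then showing that we can restrict the set of relevant coordinates to a polynomial-size set while losing only a $(1+\eps)$ factor in the profit. Recall the setup: we have horizontal items (with $\width(i) > N/2$) that must be packed in the bottom-right part of the L, and vertical items (with $\height(i) > N/2$) that must be packed in the top-left part. The key combinatorial observation is that, because every horizontal item is wider than $N/2$ and every vertical item is taller than $N/2$, no horizontal item can lie side-by-side with another horizontal item at the same height, and symmetrically for vertical items: in any feasible packing the horizontal items are \emph{totally ordered} by their vertical position (stacked bottom to top, each pushed as far right as the vertical items below-left allow), and the vertical items are totally ordered by their horizontal position. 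Thus a feasible L-packing is essentially described by (i) which horizontal items we take and in which bottom-to-top order, (ii) which vertical items we take and in which left-to-right order, and (iii) the ``staircase'' interface between the two chains. This yields the $(Nn)^{O(1)}$ DP: process the staircase from the corner outward, with a state recording the current horizontal offset $x$ and vertical offset $y$ of the interface, and at each step decide whether to add the next horizontal or next vertical item.

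The core of the proof is the \emph{coordinate discretization lemma}. First I would fix an optimal solution $\opt_L$ and sort its horizontal items $h_1, h_2, \dots$ from bottom to top and its vertical items $v_1, v_2, \dots$ from left to right. I want to argue that the top $y$-coordinates of the horizontal items and the right $x$-coordinates of the vertical items can be assumed to belong to a set of size $n^{O_\eps(1)}$ computable in advance. The natural candidate set, following the spirit of the $P^{(k)}$ construction in Section~\ref{sec:rounding_containers}, is: for the $y$-coordinates, all values of the form $(h_{i_1} + h_{i_2} + \dots + h_{i_\ell}) $ for $\ell \le 1/\eps$ (these handle the ``tall'' horizontal items near an interface), plus values of the form (a bounded sum) $+\, j\cdot h_{i_{\ell+1}}$ for integer $j$ (handling a long run of items of the same height after rounding). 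The argument is a \emph{shifting/grouping} argument: we discard an $\eps$-fraction of the profit so that, in each ``block'' of the staircase between two consecutive expensive (tall) horizontal items, the remaining horizontal items have small height relative to the free vertical slack there, and can then be re-placed so their cumulative top coordinate snaps to one of the allowed values — exactly the technique used in Lemma~\ref{lem:round_knapsack_container}. A symmetric statement holds for the vertical items' $x$-coordinates.

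Once the coordinates are restricted to a polynomial set $\mathcal{Y}$ (for the horizontal chain) and $\mathcal{X}$ (for the vertical chain), I would re-run the dynamic program over states $(x, y, i_h, i_v)$ where $x \in \mathcal{X}$, $y \in \mathcal{Y}$, and $i_h, i_v$ index how far we have progressed through the (sorted) lists of candidate horizontal and vertical items. Care is needed because the sorting order of the items in the optimal solution is not known a priori; but, as in standard knapsack-type DPs, one can first guess/enumerate a sensible global order (e.g. horizontal items sorted by height after rounding, breaking ties by profit, and analogously for vertical), and argue that $\opt_L$ respects such an order up to a negligible loss — this is where I would invoke a small exchange argument showing that two adjacent horizontal items of nearly equal height can be swapped without affecting feasibility. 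The DP has $|\mathcal{X}|\cdot|\mathcal{Y}|\cdot n^2 = n^{O_\eps(1)}$ states and constant transition cost, giving the required polynomial running time, and by the discretization lemma it recovers a solution of profit $\ge (1-O(\eps))\,p(\opt_L)$.

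The main obstacle I anticipate is the discretization lemma itself — specifically, making precise how the ``free vertical slack'' available to a block of short horizontal items is bounded below, so that the snapping step does not push items out of the L-region. Unlike the clean container setting of Lemma~\ref{lem:round_knapsack_container}, here the usable height for a horizontal item at a given stage depends on the staircase profile created by the vertical items on its left, so the rounding of the horizontal chain and the rounding of the vertical chain are coupled. I expect the resolution is to round the two chains \emph{simultaneously and symmetrically}: partition the staircase into $O(1/\eps^2)$ maximal ``corner regions'' delimited by the $O(1/\eps)$ tallest horizontal and $O(1/\eps)$ widest vertical items (which carry the bulk of any large profit), discard a cheap subset to create $\eps$-width/height buffer strips à la Lemma~\ref{lem:GAP-bin-shifting-argument}, and then show that within each corner region both the horizontal top-coordinates and vertical right-coordinates can be independently snapped to multiples of the local ``small'' dimension. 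Verifying that the resulting set of snapped coordinates is indeed of size $n^{O_\eps(1)}$ and computable without knowing $\opt_L$ is then a routine (if notation-heavy) check analogous to the remark following Lemma~\ref{lem:round_area_container}.
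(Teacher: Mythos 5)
Your overall route---restrict the top coordinates of horizontal items and the right coordinates of vertical items to a polynomial-size candidate set, then run a dynamic program over the restricted coordinates---is the same as the paper's (Lemmas~\ref{lem:DPrestricted} and~\ref{lem:Lpacking:structural}). The gap is in the discretization lemma, which is the heart of the proof. You propose a single linear-grouping/shifting step ``exactly as in Lemma~\ref{lem:round_knapsack_container}'': keep $O(1/\eps)$ expensive tall horizontal items as delimiters, argue the items in each block between them are small relative to the local slack, and snap cumulative top coordinates to sums of at most $1/\eps$ item heights plus an integer multiple of one further height. This does not go through, because the bottom-to-top order of the horizontal items is forced to be the non-increasing-width order, and along that order the heights are \emph{not} monotone: a ``block'' between two consecutive tall items may itself contain items whose heights are comparable to everything packed below them, and the tall delimiters may carry almost all of the profit, so you can neither discard them nor treat the remaining items as uniformly small with respect to a fixed reference height. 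This unsorted-heights issue is exactly what the paper flags as the source of difficulty, and it is why the proof uses the recursive {\tt delete\&shift} procedure built on growing subsequences (Lemma~\ref{lem:propertiesG}), iterated for up to $1/\eps$ rounds until some level of growing subsequences is cheap, yielding the recursively defined candidate sets $\cT^r$ of size $n^{O(1/\eps^{1/\eps})}$ (Lemmas~\ref{lem:sizeTr} and~\ref{lem:Lpacking:possibleHeights}); a one-level sum set of the kind you describe is not shown (and not known) to suffice.

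Conversely, the obstacle you single out as the main one---the coupling between the horizontal and vertical chains through the staircase interface---is a non-issue and needs no ``simultaneous corner-region rounding'': the rounding only ever shifts horizontal items \emph{down} and vertical items \emph{left} (Lemma~\ref{lem:Lpacking:shiftDown}), and starting from a packing in which horizontal items are pushed to the bottom-right and vertical items to the top-left, such monotone shifts cannot create any new overlap between the two classes, so the two chains are rounded completely independently. Likewise, your DP does not need to guess a global order or invoke an exchange argument on rounded heights: the canonical order is non-increasing width for horizontal items and non-increasing height for vertical items (WLOG by a simple swap argument), and it is this order that makes the guillotine-cut transitions of Lemma~\ref{lem:DPrestricted} valid.
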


In order to illustrate the power of our approach, we next sketch a simple $\frac{16}{9}+O(\eps)$ approximation for the cardinality case of \tdk (details in Section~\ref{sec:tdk_car:simple}). By standard arguments\footnote{There can be at most $O_\eps(1)$ such items in any feasible solution, and if the optimum solution contains only $O_\eps(1)$ items we can solve the problem optimally by brute force.} it is possible to discard \emph{large} items with both sides longer than $\eps\cdot N$. The remaining items have height or width smaller than $\eps\cdot N$ (\emph{horizontal} and \emph{vertical} items, resp.). Let us delete all items intersecting a random vertical or horizontal strip of width $\eps\cdot N$ inside the knapsack. We can pack the remaining items into $O_\eps(1)$ containers by Lemma~\ref{lem:structural_lemma_augm} and Theorem~\ref{thm:container_packing_ptas}. A vertical strip deletes vertical items with $O(\eps)$ probability, and horizontal ones with probability roughly proportional to their width, and symmetrically for a horizontal strip. In particular, let us call \emph{long} the items with longer side larger than $N/2$, and \emph{short} the remaining items. Then the above argument gives in expectation roughly one half of the profit $opt_{long}$ of long items, and three quarters of the profit $opt_{short}$ of short ones. This is already good enough unless $opt_{long}$ is large compared to $opt_{short}$.

At this point L-packings and our PTAS come into play. We shift long items such that they form $4$ stacks
at the sides of the knapsack in a \emph{ring-shaped} region, see Figure~\ref{fig:packing+ring}.(b)--(c): this is possible since any vertical long item cannot have
a horizontal long item \emph{both} at its left and at its right, and vice
versa. Next we delete the least profitable of these stacks and rearrange
the remaining long items into an L-packing, see Figure~\ref{fig:packing+ring}.(d). Thus using our PTAS for L-packings, we can compute a solution of profit roughly three quarters of $opt_{long}$. The reader might check that the combination of these two algorithms gives the claimed approximation factor.

Above we used either $O_\eps(1)$ containers or one L-packing: by combining the two approaches together and with a more sophisticated case analysis we achieve the following result (see Section \ref{sec:tdk_car:refined}).

\begin{theorem}\label{trm:tdk_car:refined} There is a polynomial-time
	$\frac{558}{325}+\eps<1.72$ approximation algorithm for cardinality \tdk.
\end{theorem}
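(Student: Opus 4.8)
The plan is to exhibit a constant-size collection of polynomial-time algorithms and argue that, on every instance, at least one of them produces a solution of profit at least $\tfrac{325}{558}OPT$; the constant $\tfrac{558}{325}$ itself will emerge as the optimum of a small linear program that encodes the profit guarantees of these algorithms against all possible distributions of $OPT$'s profit among a constant number of item classes. First I would do the preprocessing of the $\tfrac{16}{9}$ sketch: if the optimum uses only $O_\eps(1)$ items solve by brute force; otherwise discard the $O_\eps(1)$ \emph{large} items (both sides larger than $\eps N$) at a $1+\eps$ loss, so every surviving item is \emph{horizontal} ($h_i\le\eps N$) or \emph{vertical} ($w_i\le\eps N$). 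I would then partition $OPT$ into classes according to the length of the longer side and to which coordinate is ``medium'': \emph{long-horizontal} ($w_i>N/2$), \emph{long-vertical} ($h_i>N/2$), \emph{short with a medium side} (one side in $(\eps N,N/2]$, split by orientation), and \emph{small} (both sides $\le\eps N$). Write $opt_X$ for the profit of $OPT$ restricted to class $X$, normalised so that $\sum_X opt_X = OPT$.

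Next I would assemble the candidate packings and bound below, as a linear function of the $opt_X$, the profit each extracts. (i) The \emph{random-strip} algorithm: delete all items meeting a vertical or a horizontal strip of width $\eps N$ placed at a uniformly random position (each orientation with probability $1/2$), then repack the survivors into $O_\eps(1)$ containers using Lemma~\ref{lem:structural_lemma_augm} and Theorem~\ref{thm:container_packing_ptas}. Conditioning on the orientation, a horizontal item of width $w_i$ is kept with probability $\approx 1-\tfrac{w_i}{2N}$, a vertical item symmetrically, and a small item with probability $1-O(\eps)$; this recovers $\ge(\tfrac12-O(\eps))$ of long profit, $\ge(\tfrac34-O(\eps))$ of medium-sided short profit, and $\ge(1-O(\eps))$ of small profit, but I would keep the finer per-item bound since it is strictly better than $\tfrac12$ for long items whose medium side is bounded away from $N$. (ii) The \emph{L-packing} algorithm for long items: shift the long items into four corner stacks inside a ring-shaped region (possible because no long-vertical item can have long-horizontal items both to its left and to its right, and vice versa), discard the least profitable stack, rearrange the rest into an L-shaped region, and invoke the PTAS of Theorem~\ref{thm:main:Lpacking}; this yields $\ge(\tfrac34-O(\eps))(opt_{LH}+opt_{LV})$, improvable when $opt_{LH}$ and $opt_{LV}$ are unbalanced by choosing which stack to drop. (iii) \emph{L\&C-packings}: place the long items in the L-region as in (ii) and simultaneously fill the complementary rectangular region with $O_\eps(1)$ containers carrying short items, where the amount of co-packable short profit is estimated by running the random-strip argument inside the complement. (iv) Two \emph{one-sided} algorithms that keep only horizontal (resp. only vertical) items and pack them essentially optimally via one-dimensional knapsack / containers, which dominate when the profit is lopsided across orientations.

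Finally I would set up the LP: the variables are the normalised class profits together with a few auxiliary parameters (how long profit splits by orientation, how close medium sides are to $N/2$), the constraints express the feasibility of these parameters inside a packing of $OPT$, and for each candidate algorithm a linear expression gives the fraction of $OPT$ it is guaranteed to pack. The objective is $\min$ over inputs of $\max$ over algorithms of this fraction; solving it, or equivalently exhibiting a matching dual certificate, gives value $\tfrac{325}{558}$, hence a $\big(\tfrac{558}{325}+\eps\big)$-approximation after accounting for the $1+O(\eps)$ losses throughout.

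The main obstacle is calibrating the case analysis: one has to choose the granularity of the item classes and the family of candidate algorithms so that the LP value is exactly $\tfrac{325}{558}$ — too coarse and one only recovers $\tfrac{16}{9}$, too fine and the structural lemmas become unwieldy. Concretely, the technical heart is (a) proving that the four-stack/ring rearrangement of the long items is always feasible and that deleting one stack leaves a valid L-packing input for Theorem~\ref{thm:main:Lpacking}, and (b) quantifying precisely how much short-item profit can be packed in the complement of the L-region simultaneously with the long items, since this term is what lets the bound exceed $\tfrac{9}{16}$. By comparison the probabilistic strip computation and the final LP are routine once the classes, the candidate algorithms, and these two structural statements are in place.
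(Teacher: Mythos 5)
Your high-level architecture (a constant family of candidate L\&C packings, take the best, balance the worst case) matches the paper's, and you correctly isolate the two bottlenecks, but the step you delegate to ``running the random-strip argument inside the complement'' is exactly where the paper needs a different and stronger tool, and your version would fail in the binding case. The paper packs short items into the space not used by the long items via Steinberg's theorem (Theorem~\ref{thm:steinberg}, Lemma~\ref{lem:smallStein}): any subset of $\isfat$ of area at most roughly half the free rectangle fits, so the guarantee is an \emph{area ratio} (Lemmas~\ref{lem4cardgen} and~\ref{lem4cardgen2}). A random strip deleted inside the complement only keeps an item with probability depending on its side length relative to the complement's dimensions; in the tight regime of the analysis the vertical part of the boundary L has width $w_{\fontL}>N/2$, the complement has width below $N/2$, and short horizontal items of width close to $N/2$ would be deleted with probability close to $1$, so no constant fraction of $\isfat$ is retained. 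Worse, in the decisive subcase (2A(iii)b) the paper must pack horizontal and vertical short items into \emph{two separate} regions, carving horizontal boxes along a monotone staircase that follows the profile of the long vertical items and balancing their area against a top-right vertical box via a parameter $\psi$; none of this is produced by a strip-deletion estimate, and it is precisely this construction that yields the constant $\frac{325}{558}$.

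The second missing ingredient is the thin/fat decomposition inherited from the corridor machinery of the weighted case. The refinement does not run on your orientation-based classes but on inequalities such as $|\optrc|\ge(1-O(\efl))\bigl(|\ilfat|+|\isfat|\bigr)$ and $|\optrc|\ge(1-O(\efl))\bigl(|\ilfat|+\tfrac12(|\isfat|+|\ilthin|)\bigr)$, which come from the different subcorridor processing orders, together with the fact (Lemma~\ref{lem:boxProperties}) that thin items have total width/height $O(\epsr N)$ and hence fit in a boundary L of width $\epsr N$ plus two thin containers, leaving an almost full square for the Steinberg packing (Lemma~\ref{lem4cardgen}); when instead all of $\ilopt$ is kept, Lemma~\ref{lem4cardgen2} trades the thickness of the ring against the area available for $\isopt$. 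Your ``long'' class has no smallness property, so your packing (iii) has no control over how much room the L leaves, and your candidate (i)/(iv) bounds cannot substitute for the corridor-based ones. Finally, asserting that an LP over your candidate set has value exactly $\frac{325}{558}$ presupposes the conclusion: that constant is the output of the explicit case analysis on $w_{\fontL}$, $h_{\fontL}$, $a(\isfat)$ and the ring geometry (with intermediate bounds $\frac{127}{216}$, $\frac{17}{28}$, $\frac{215}{369}$, $\frac{24}{41}$), and with your weaker co-packing guarantee the optimum of your LP would be strictly smaller.
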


\begin{figure}
	\begin{centering}
		\includegraphics[height=3.5cm]{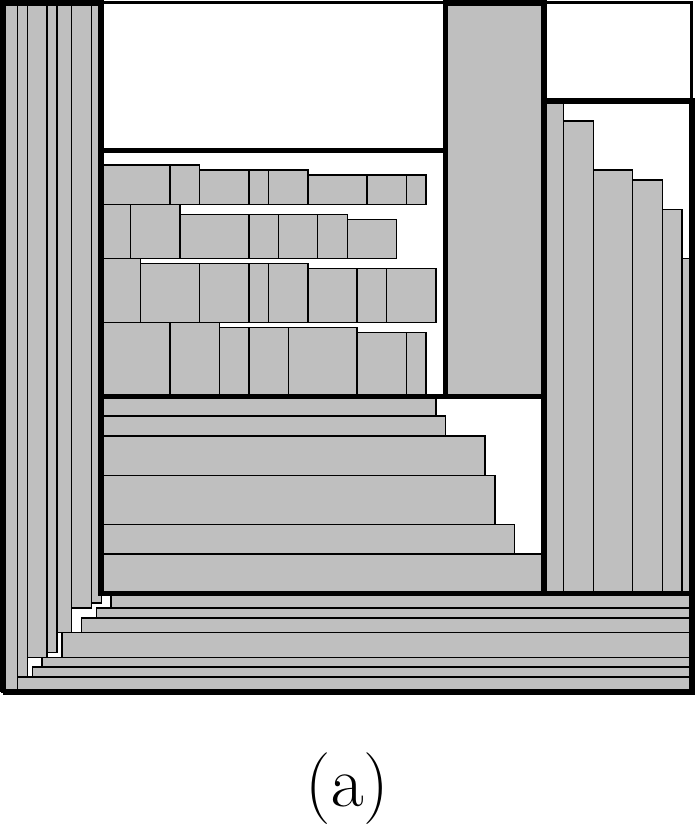}~~~~~~\includegraphics[height=3.5cm]{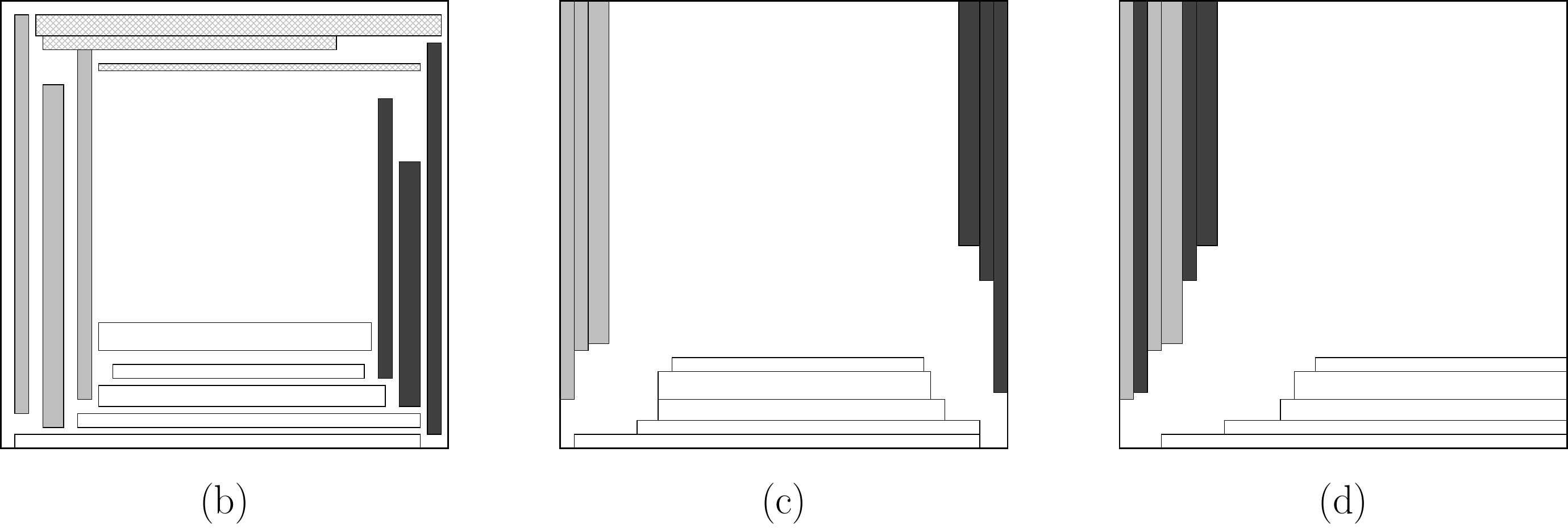}
		\par\end{centering}
	\caption{(a) An L\&C-packing with $4$ containers, where the top-left container is packed by means of Next-Fit-Decreasing-Height. (b) A subset of long items. (c) Such items are shifted into $4$ stacks at the sides of the knapsack, and the top stack is deleted. (d) The final packing into an L-shaped region.}
	\label{fig:packing+ring} 
\end{figure}

For weighted \tdk we face severe technical complications for proving
that there is a profitable L\&C-packing. One key reason is that in the weighted case we cannot discard large items since even one such item might contribute a large fraction to the optimal profit.

In order to circumvent these difficulties, we
exploit the \emph{corridor-partition} at the heart of the QPTAS for
\tdk in \cite{aw15} (in turn inspired by prior
work in \cite{aw13}). Roughly speaking, there exists a partition
of the knapsack into $O_{\eps}(1)$ \emph{corridors}, consisting of the \emph{concatenation} of $O_{\eps}(1)$ (partially overlapping) rectangular regions (\emph{subcorridors}).

In \cite{aw15} the authors partition
the corridors into a \emph{poly-logarithmic} number of containers. Their main algorithm then guesses these containers
in time $n^{\polylog n}$. However, we can only handle a \emph{constant} number of containers in polynomial time.
Therefore, we present a different way to partition the corridors into
containers: here we lose the profit of a set of \emph{thin} items,
which in some sense play the role of long items in the previous discussion.
These thin items fit in a \emph{very narrow} ring at the boundary of the
knapsack and we map them to an \fontL-packing in the same way as
in the cardinality case above. Some of the remaining non-thin items
are then packed into $O_{\eps}(1)$ containers that are placed in
the (large) part of the knapsack not occupied by the \fontL-packing.
Our partition of the corridors is based on a somewhat intricate case
analysis that exploits the fact that \emph{long} consecutive subcorridors are arranged in the shape of \emph{rings} or \emph{spirals}: this is used to show the existence of a profitable L\&C-packing.

\begin{theorem}\label{trm:tdk_weight} There is a polynomial-time
	$\frac{17}{9}+\eps<1.89$ approximation algorithm for (weighted) \tdk.
\end{theorem}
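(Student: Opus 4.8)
The target is to establish the existence of a profitable \emph{L\&C-packing} — a packing that uses $O_\eps(1)$ containers \emph{together with} a single $\fontL$-packing of the kind in Theorem~\ref{thm:main:Lpacking} — and then to compute a near-optimal one in polynomial time. The second step is essentially free given what is already available: there are only polynomially many relevant choices for the $O_\eps(1)$ container sizes and positions and for the two parameters $(\width_\fontL,\height_\fontL)$ describing the $\fontL$-region, so we guess them, fill the containers with the container PTAS of Theorem~\ref{thm:container_packing_ptas}, and fill the $\fontL$-region with the PTAS of Theorem~\ref{thm:main:Lpacking}; the two regions being disjoint, the profits add up and we lose only a $(1+O(\eps))$ factor. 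Hence all the work is in the structural claim: there is an L\&C-packing of profit at least $\left(\tfrac{9}{17}-O(\eps)\right)\opt$.

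First I would invoke the corridor decomposition underlying the QPTAS of \cite{aw15}: discarding an $O(\eps)$-fraction of the profit, the knapsack partitions into $O_\eps(1)$ \emph{corridors}, each a concatenation of $O_\eps(1)$ rectangular \emph{subcorridors} that overlap only near their shared ends, together with a residual set of items that already admits a packing into $O_\eps(1)$ containers. One may further assume that the sequence of long subcorridors of a corridor is arranged either as an open \emph{spiral} or as a closed \emph{ring}. I would then call an item assigned to a subcorridor \emph{thin} if its short side is negligible compared with the width of that subcorridor, and \emph{non-thin} otherwise; the thin items here play the role that ``long'' items played in the cardinality sketch.

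The non-thin items are the easy half: the region they occupy can be treated by the Resource Augmentation Packing Lemma~\ref{lem:structural_lemma_augm}, and — after a shifting step that frees a narrow boundary strip of the knapsack so that the $(1+\eps)$-augmented region still fits inside it — they can be repacked, with $O(\eps)$ loss, into $O_\eps(1)$ containers lying in the central part of the knapsack. For the thin items I would exploit the ring/spiral geometry: a thin horizontal (resp.\ vertical) item of such an arrangement cannot have thin vertical (resp.\ horizontal) items on \emph{both} of its sides, which lets one push all thin items into a \emph{very narrow ring} along the boundary of the knapsack, organised into at most four boundary stacks; deleting the least profitable stack and ``unfolding'' the remaining three — exactly as in the cardinality sketch — rearranges the survivors into a single $\fontL$-packing occupying that ring, at the cost of a constant factor. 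Placing the container packing of the non-thin items and of the residual items of the decomposition in the large remaining region, disjoint from this ring, yields an L\&C-packing; a case analysis over where the bulk of $\opt$'s profit sits (thin versus non-thin, and among thin items the analogue of the long/short split) then balances the two estimates and produces the constant $\tfrac{9}{17}$. Running, in parallel, the pure-container algorithm of Theorem~\ref{thm:container_packing_ptas} applied after deleting a random boundary strip — which alone already recovers a constant fraction, good when little of $\opt$ is carried by long/thin items — and returning the better of the two solutions completes the argument.

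\textbf{Main obstacle.} The crux is the weighted structural claim. Unlike the cardinality case, the $O_\eps(1)$ \emph{large} items cannot be discarded — each may carry an $\Omega(1)$-fraction of $\opt$ — so they must be tracked through the corridor decomposition; and guaranteeing that the thin items of an \emph{arbitrary} corridor decomposition can all be charged to one $\fontL$-packing while the non-thin items simultaneously fit into $O_\eps(1)$ containers disjoint from it forces a rather intricate case analysis over the possible shapes, lengths and nestings of the subcorridor arrangements (rings and spirals). Turning the bounds that emerge from this analysis into the precise constant $\tfrac{17}{9}$ is then a matter of optimising the parameters of the case distinction.
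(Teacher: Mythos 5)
Your overall route is the same as the paper's: the Adamaszek--Wiese corridor decomposition, a thin/fat split in which thin items are pushed into a narrow boundary region and mapped to an \fontL-packing at a factor of roughly $\tfrac34$, fat items repacked into $O_\eps(1)$ containers via the Resource Augmentation Lemma, and an algorithm that guesses the boundary $\fontL$ and the containers and runs the two PTASes. But as written there are two genuine gaps, and they are exactly where the paper does its real work. First, the constant $\tfrac{9}{17}$ does not fall out of a generic ``thin versus non-thin'' balancing. The paper must construct \emph{several} candidate packings obtained from different deletion/processing orders of the short and long subcorridors (seven of them: delete all vertical-short or all horizontal-short subcorridors with two processing orders each, delete odd or even short subcorridors, and a fat-only order), classify items into long/short $\times$ thin/fat (the sets $\LF,\LT,\SF,\ST$), and prove four separate lower bounds on the best L\&C packing, namely $p(\LF)+p(\SF)$, \ $p(\LF)+p(\LT)/2+p(\SF)/2$, \ $p(\LF)+p(\SF)/2+p(\ST)/2$, and $\tfrac34 p(\LT)+p(\ST)+\tfrac{1-O(\eps)}{2}p(\SF)$ for the non-degenerate $\fontL$; only their combination, with worst case $p(\LT)=p(\SF)=p(\ST)$ and $p(\LF)=\tfrac54 p(\LT)$, yields $\tfrac{9}{17}$. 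Your sketch asserts the constant but supplies none of these packings or bounds. A related inaccuracy: the ``cannot have such items on both sides'' push-to-boundary argument applies to items with a side longer than $N/2$ (Lemma~\ref{lem:LoftheRing}), not to thin items in general; thin-but-short items are instead packed integrally into two extra narrow containers using the fact that their total height/width is at most $\epsr N$ (Lemma~\ref{lem:boxProperties}).

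Second, you correctly identify that in the weighted case the $O_\eps(1)$ large items, killed items and corridor-crossing items cannot be discarded, but you leave this unresolved, whereas it is an essential part of the proof. The paper handles it with an iterative shifting argument: the problematic items collected so far define a non-uniform grid, the whole corridor/box/container construction is redone relative to that grid, the new problematic items form the next set $K(t+1)$, and since the sets $K(0),K(1),\dots$ are disjoint some $K(k)$ with $k\le 1/\eps$ has profit at most $\eps\cdot p(\opt)$ and can finally be dropped; small items also have to be reinserted cell by cell with an area argument. Without this step your structural claim is only proved modulo discarding $O_\eps(1)$ items that may carry a constant fraction of the profit, so the $\tfrac{17}{9}+\eps$ bound does not yet follow.
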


\section{A PTAS for \fontL-packings}
\label{sec:ptasL}

In this section we present a PTAS for the problem of finding an optimal \fontL-packing. In this problem we are given a set of \emph{horizontal} items $I_{hor}$ with width larger than $N/2$, and a set of \emph{vertical} items $I_{ver}$ with height larger than $N/2$. Furthermore, we are given an \fontL-shaped region $\fontL=([0,N]\times [0,\height_\fontL]) \cup ([0,\width_\fontL]\times [0,N])$. Our goal is to pack a  subset $OPT\subseteq I:=I_{hor}\cup I_{ver}$ of maximum total profit $opt=\profit(OPT):=\sum_{i\in OPT}\profit(i)$, such that $OPT_{hor}:=OPT\cap I_{hor}$ is packed inside the \emph{horizontal box} $[0,N]\times [0,\height_\fontL]$ and $OPT_{ver}:=OPT\cap I_{ver}$ is packed inside the \emph{vertical box} $[0,\width_\fontL]\times [0,N]$. We remark that packing horizontal and vertical items independently is not possible due to the possible overlaps in the intersection of the two boxes: this is what makes this problem non-trivial, in particular harder than standard (one-dimensional) knapsack.

Observe that in an optimal packing we can assume without loss of generality that items in $OPT_{hor}$ are pushed as far to the right/bottom as possible: indeed pushing one such item to the right/bottom keeps the packing feasible. Furthermore, the items in $OPT_{hor}$ are packed from bottom to top in non-increasing order of width. Indeed, it is possible to permute any two items violating this property while keeping the packing feasible. A symmetric claim holds for $OPT_{ver}$. See Figure~\ref{fig:packing+ring}.(d) for an illustration.

Given the above structure, it is relatively easy to define a dynamic program (DP) that computes an optimal L-packing in pseudo-polynomial time $(Nn)^{O(1)}$. The basic idea is to scan items of $I_{hor}$ (resp. $I_{ver}$) in decreasing order of width (resp. height), and each time \emph{guess} if they are part of the optimal solution $OPT$. At each step either both the considered horizontal item $i$ and vertical item $j$ are not part of the optimal solution, or there exist a \emph{guillotine cut} separating $i$ or $j$ from the rest of $OPT$. Depending on the cases, one can define a smaller L-packing sub-instance (among $N^2$ choices) for which the DP table already contains a solution.

In order to achieve a $(1+\eps)$-approximation in polynomial time $n^{O_\eps(1)}$, we show that it is possible (with a small loss in the profit) to restrict the possible top coordinates of $OPT_{hor}$ and right coordinates of $OPT_{ver}$ to proper polynomial-size subsets $\cT$ and $\cR$, resp. We call such an L-packing \emph{$(\cT,\cR)$-restricted}. By adapting the above DP one obtains:   

\begin{lemma}\label{lem:DPrestricted}
	An optimal $(\cT,\cR)$-restricted \fontL-packing can be computed in time polynomial in $m:=n+|\cT|+|\cR|$ using dynamic programming.
\end{lemma}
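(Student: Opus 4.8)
The plan is to turn the pseudo-polynomial DP sketched just before the lemma into a polynomial-time one by restricting the set of admissible top-coordinates $\cT$ (for horizontal items) and right-coordinates $\cR$ (for vertical items), and then argue that the DP correctness proof goes through essentially verbatim once the state space is cut down. First I would set up the DP state: because in any optimal L-packing the horizontal items are stacked bottom-to-top in non-increasing order of width (and symmetrically for vertical items), a subinstance is fully described by (i) a ``width threshold'' telling us which prefix of the sorted horizontal list and which prefix of the sorted vertical list remain to be considered, and (ii) the current ``staircase'' corner: the $y$-coordinate up to which the horizontal box is already filled and the $x$-coordinate up to which the vertical box is already filled. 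So a state is a tuple $(i,j,t,r)$ with $i\in[|I_{hor}|]$, $j\in[|I_{ver}|]$, $t\in\cT$, $r\in\cR$, and there are $O(nm^2)=\mathrm{poly}(m)$ of them once $|\cT|,|\cR|$ are polynomial.

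Next I would spell out the transition. Consider the widest not-yet-processed horizontal item $i$ and the tallest not-yet-processed vertical item $j$. As noted in the excerpt, either neither $i$ nor $j$ belongs to $OPT$, or there is a guillotine cut isolating one of them: if $i\in OPT$ then, since $i$ is placed as low and as far right as possible, a horizontal cut at height $t+h_i$ separates $i$ (together with the already-fixed horizontal block below) from the rest; symmetrically a vertical cut isolates $j$ if $j\in OPT$. Each branch recurses on a strictly smaller index set and a corner that has moved up (to $t+h_i$) or right (to $r+w_j$). For the DP to close in polynomial time we need the new corner coordinates to again lie in $\cT\times\cR$; this is exactly the role of the restriction and is where the companion lemma (the one asserting a near-optimal $(\cT,\cR)$-restricted packing, with $\cT,\cR$ of polynomial size) is invoked — here we only need that $\cT,\cR$ are \emph{closed} under the relevant additions in the sense used by that construction, i.e. the restricted optimum is itself reachable by transitions that stay inside $\cT\times\cR$. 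I would state the recurrence as a maximum over the (constantly many) branches, with the base case being an empty index set (value $0$) or an infeasible corner (value $-\infty$), and fill the table in increasing order of $i+j$.

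The correctness argument is the standard exchange/guillotine argument: given an optimal $(\cT,\cR)$-restricted packing, peel off items in decreasing width/height order, at each step matching the chosen branch to the structure of that packing (the guillotine cut exists precisely because of the push-right/push-down normalization and the monotone stacking), and conclude by induction that the table entry at the initial state equals the optimal restricted profit. The running time is the number of states times the constant branching, i.e. $\mathrm{poly}(m)$, and each table lookup is $O(1)$ after we index $\cT$ and $\cR$; the preprocessing to sort $I_{hor},I_{ver}$ and to build $\cT,\cR$ is also polynomial in $m$. I expect the only genuinely delicate point to be making the guillotine-cut case analysis airtight — in particular checking that when we ``guess $i\in OPT$'' the resulting subinstance really is an L-packing instance of the claimed smaller type (the region left after removing the bottom horizontal strip of height $h_i$ and after committing the vertical prefix is again an L-shape, possibly degenerate into a rectangle), and that the coordinate bookkeeping keeps us inside $\cT\times\cR$; everything else is routine DP. The $(1+\eps)$-approximation guarantee itself is \emph{not} part of this lemma — it is deferred to the separate statement that a near-optimal $(\cT,\cR)$-restricted packing exists — so this proof only needs to establish the polynomial-time exact solvability of the restricted problem.
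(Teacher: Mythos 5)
Your proposal is correct and follows essentially the same route as the paper: a DP over states $(i,j,t,r)$ with $t\in\cT$, $r\in\cR$, transitions that either skip $h_i$ or $v_j$ or use a guillotine cut isolating one of them, and correctness from the observation that two selected items $h_i,v_j$ cannot overlap, so one of the two cuts must exist. The one detail to tighten is that the new corner after packing $h_i$ is not $t+h_i$ (which need not lie in $\cT$) but the minimum $t'\in\cT$ with $t'\ge t+h_i$ — this is exactly how the paper resolves the ``closure'' issue you flag, since in a $(\cT,\cR)$-restricted packing the top coordinate of $h_i$ is itself an element of $\cT$ dominating that minimum.
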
 
\begin{proof}
	For notational convenience we assume $0\in \cT$ and $0\in \cR$.
	Let $h_1,\ldots,h_{n(h)}$ be the items in $I_{hor}$ in decreasing order of width and $v_1,\ldots,v_{n(v)}$ be the items in $I_{ver}$ in decreasing order of height (breaking ties arbitrarily). For $\width\in [0,\width_\fontL]$ and $\height\in [0,\height_\fontL]$, let $L(\width,\height)=([0,\width]\times [0,N])\cup ([0,N]\times [0,\height])\subseteq \fontL$. Let also $\Delta \fontL(\width,\height)=([\width,\width_\fontL]\times [\height,N])\cup ([\width,N]\times [\height,\height_\fontL])\subseteq \fontL$. Note that $\fontL=\fontL(\width,\height)\cup \Delta \fontL(\width,\height)$.

	We define a dynamic program table $DP$ indexed by $i\in [1,n(h)]$ and $j\in [1,n(v)]$, by a top coordinate $t \in \cT$, and a right coordinate $r \in \cR$. The value of $DP(i,t,j,r)$ is the maximum profit of a $(\cT,\cR)$-restricted packing of a subset of $\{h_i,\ldots,h_{n(h)}\}\cup \{v_j,\ldots,v_{n(v)}\}$ inside $\Delta \fontL(r,t)$. The value of $DP(1,0,1,0)$ is the value of the optimum solution we are searching for. 
	Note that the number of table entries is upper bounded by $m^4$.
	
	We fill in $DP$ according to the partial order induced by vectors $(i,t,j,r)$, processing larger vectors first. The base cases are given by 
	$(i,j)=(n(h)+1,n(v)+1)$ and $(r,t)=(\width_\fontL,\height_\fontL)$, in which case the table entry has value $0$.
	
	In order to compute any other table entry $DP(i,t,j,r)$, with optimal solution $OPT'$, we take the maximum of the following few values:
	\begin{itemize}\itemsep0pt
		\item[$\bullet$] If $i\leq n(h)$, the value $DP(i+1,t,j,r)$. This covers the case that $h_i\notin OPT'$;
		\item[$\bullet$] If $j\leq n(v)$, the value $DP(i,t,j+1,r)$. This covers the case that $v_j\notin OPT'$;
		\item[$\bullet$] Assume that there exists $t'\in \cT$ such that $t'-\height(h_i)\geq t$ and that $\width(h_i)\leq N-r$. Then for the minimum such $t'$ we consider the value $\profit(h_i)+DP(i+1,t',j,r)$. This covers the case that $h_i\in OPT'$, and there exists a (horizontal) guillotine cut separating $h_i$ from $OPT'\setminus \{h_i\}$.
		\item[$\bullet$] Assume that there exists $r'\in \cR$ such that $r'-\width(v_j)\geq r$ and that $\height(v_j)\leq N-t$. Then for the minimum such $r'$ we consider the value $\profit(v_j)+DP(i,t,j+1,r')$. This covers the case that $v_j\in OPT'$, and there exists a (vertical) guillotine cut separating $v_j$ from $OPT'\setminus \{v_j\}$.
	\end{itemize}
	We observe that the above cases (which can be explored in polynomial time) cover all the possible configurations in $OPT'$. Indeed, if the first two cases do not apply, we have that $h_i,v_j\in OPT'$. Then either the line containing the right side of $v_j$ does not intersect $h_i$ (hence any other item in $OPT'$) or the line containing the top side of $h_i$ does not intersect $v_j$ (hence any other item in $OPT'$). Indeed, the only remaining case is that $v_j$ and $h_i$ overlap, which is impossible since they both belong to $OPT'$.  
\end{proof}

We will show that there exists a $(\cT,\cR)$-restricted \fontL-packing with the desired properties. 
\begin{lemma}\label{lem:Lpacking:structural}
	There exists an $(\cT,\cR)$-restricted \fontL-packing solution of profit at least $(1-2\eps)opt$, where the sets $\cT$ and $\cR$ have cardinality at most $n^{O(1/\eps^{1/\eps})}$ and can be computed in polynomial time based on the input (without knowing $OPT$).
\end{lemma}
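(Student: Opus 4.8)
The plan is to fix an optimal \fontL-packing $OPT$ in the normal form established above --- $OPT_{hor}$ pushed to the right and bottom and stacked bottom-to-top in non-increasing width order, and $OPT_{ver}$ symmetrically --- and to reduce the statement to controlling two one-dimensional families of numbers: the top coordinate $t(h)$ of each $h\in OPT_{hor}$ and the right coordinate $r(v)$ of each $v\in OPT_{ver}$. In this normal form the whole packing is determined by these numbers, and feasibility is equivalent to the pairwise condition that for every $h\in OPT_{hor}$ and $v\in OPT_{ver}$ one has $r(v)+\width(h)\le N$ or $t(h)+\height(v)\le N$ (otherwise $h$ and $v$ would overlap inside the region where the horizontal and vertical boxes meet). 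Thus it suffices to exhibit, computable from the input alone, sets $\cT\ni 0,\height_\fontL$ and $\cR\ni 0,\width_\fontL$ of the claimed cardinality, together with a sub-collection of $OPT$ of profit at least $(1-2\eps)opt$ admitting a feasible packing in which every surviving top coordinate lies in $\cT$ and every surviving right coordinate lies in $\cR$. I would obtain the two restrictions by a symmetric argument on $OPT_{hor}$ and on $OPT_{ver}$, losing at most $\eps\cdot opt$ on each side.

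The bulk of $\cT$ comes from \emph{linear grouping on heights}, in the spirit of the proof of Lemma~\ref{lem:structural_containers_horizontal}. Call $h\in OPT_{hor}$ \emph{tall} if $\height(h)\ge\eps\,\height_\fontL$; since $OPT_{hor}$ is stacked inside a box of height at most $\height_\fontL$, at most $1/\eps$ of its items are tall. Partition the remaining (non-tall) items, in stacking order, into $O(1/\eps)$ groups each of profit at most $\eps\cdot\profit(OPT_{hor})$ (making a singleton group of every non-tall item whose own profit exceeds this threshold --- there are fewer than $1/\eps$ of those), round the height of each non-tall item up to the largest height occurring in its group, and discard one cheapest group, which costs at most $\eps\cdot\profit(OPT_{hor})$. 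After this step the survivors of $OPT_{hor}$ attain only $O(1/\eps)$ distinct heights, so every surviving top coordinate is a non-negative integer combination of those heights; iterating the $P^{(k)}$-closure of Section~\ref{sec:rounding_containers} a bounded number of times, all such values lie in a set $\cT_0$ that is computable from $HEIGHTS(I_{hor})$ and has $|\cT_0|\le n^{O((1/\eps)^{1/\eps})}$. The mirror procedure on $OPT_{ver}$ --- linear grouping on \emph{widths}, with ``tall'' replaced by ``wide'' --- produces $\cR_0$ with $|\cR_0|\le n^{O((1/\eps)^{1/\eps})}$ at a further cost of at most $\eps\cdot\profit(OPT_{ver})$.

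I would then set $\cT:=\cT_0\cup\{0,\height_\fontL\}\cup\{\,N-\height(v):v\in I_{ver}\,\}$ and $\cR:=\cR_0\cup\{0,\width_\fontL\}\cup\{\,N-\width(h):h\in I_{hor}\,\}$; this keeps both cardinalities at $n^{O((1/\eps)^{1/\eps})}$ and both sets polynomial-time computable. The ``threshold'' values $N-\height(v)$ and $N-\width(h)$ are added precisely so that the coupling between the two orientations can be maintained after the modification: whenever a surviving pair $(h,v)$ was separated in $OPT$ by $r(v)\le N-\width(h)$, any repositioning of $r(v)$ upward to the next admissible value stays $\le N-\width(h)$ because $N-\width(h)\in\cR$, and symmetrically for the inequality $t(h)\le N-\height(v)$ using $N-\height(v)\in\cT$; including $\height_\fontL\in\cT$ and $\width_\fontL\in\cR$ prevents any stack from overflowing its box. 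Assuming this reconciliation can be carried out without removing additional items, the surviving collection has profit at least $opt-\eps\,\profit(OPT_{hor})-\eps\,\profit(OPT_{ver})\ge(1-2\eps)opt$, it is $(\cT,\cR)$-restricted, and Lemma~\ref{lem:DPrestricted} recovers it in polynomial time.

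The delicate point --- the step I expect to demand most of the work --- is exactly this last reconciliation. Rounding the non-tall horizontal heights \emph{up} pushes the tall horizontal items, and everything stacked above them, upward, which a priori can destroy a separation $t(h)\le N-\height(v)$ from a vertical item packed high in the top-left corner; and the analogous effect occurs on the vertical side. Making the shifting robust against this interaction for both orientations simultaneously seems to require a more careful (and iterative) repacking --- moving some survivors back down into the space freed by the discarded group, possibly discarding a further $O(\eps)$-profit set, and using the threshold values in $\cT,\cR$ to pin the critical horizontal--vertical comparisons --- and it is this iteration that is reflected in the $n^{O((1/\eps)^{1/\eps})}$ bound rather than the cruder $n^{O(1/\eps)}$ one would get from a single round of grouping.
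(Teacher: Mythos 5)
There is a genuine gap, and it sits exactly where you flag it yourself. Your construction rounds heights \emph{up} and then hopes that an unspecified ``reconciliation'' restores feasibility; the sentence ``Assuming this reconciliation can be carried out without removing additional items'' is precisely the content of the lemma, not a detail. The paper's proof is organized so that this problem never arises: its {\tt delete\&shift} procedure deletes, from each block, one item whose height dominates everything below it in the block (the growing-subsequence property, Lemma~\ref{lem:propertiesG}), and the freed height $\height(g_j)$ is split into a baseline rounded to a multiple of $\height(g_j)/2$ plus per-item roundings of at most $\height(g_j)/(2n)$ each, so that every surviving horizontal item ends \emph{weakly lower} than before (Lemma~\ref{lem:Lpacking:shiftDown}), and symmetrically vertical items only move left. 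Monotone downward/leftward shifts can never create a horizontal--vertical overlap or overflow the boxes, which is why the two orientations can be treated completely independently. Your threshold values $N-\height(v)\in\cT$ and $N-\width(h)\in\cR$ do not address this: the issue is not that admissible coordinates are missing from $\cT$ and $\cR$, but that after rounding up, the intended top coordinates may simply be infeasible (items pushed above $\height_\fontL$, or into vertical items), and no choice of candidate set repairs that.

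A second, independent problem is the linear-grouping step itself. Items of $OPT_{hor}$ are stacked bottom-to-top in non-increasing \emph{width} order, so their heights along the stack are not monotone --- the paper remarks explicitly that this is what makes the construction hard. If you form groups by profit in stacking order and round each non-tall item up to the largest height in its group, the total increase can be as large as (number of items)$\times\eps\height_\fontL$, while discarding one cheapest group only frees the actual heights of that group's items; nothing bounds the former by the latter, so the packing need not fit in the horizontal box even before worrying about the vertical items. Linear grouping works in Lemma~\ref{lem:structural_containers_horizontal} because there the items are sorted by the dimension being rounded; here they are not, and the paper's growing-subsequence machinery (with the recursion over $O(1/\eps)$ rounds that yields the $n^{O(1/\eps^{1/\eps})}$ bound, cf.\ Lemmas~\ref{lem:Lpacking:costDeleted} and~\ref{lem:sizeTr}) is exactly the substitute for it.
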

Lemmas \ref{lem:DPrestricted} and \ref{lem:Lpacking:structural}, together immediately imply a PTAS for L-packings (showing Theorem \ref{thm:main:Lpacking}). The rest of this section is devoted to the proof of Lemma \ref{lem:Lpacking:structural}.

\begin{figure}
	\begin{centering}
		\includegraphics[height=5cm]{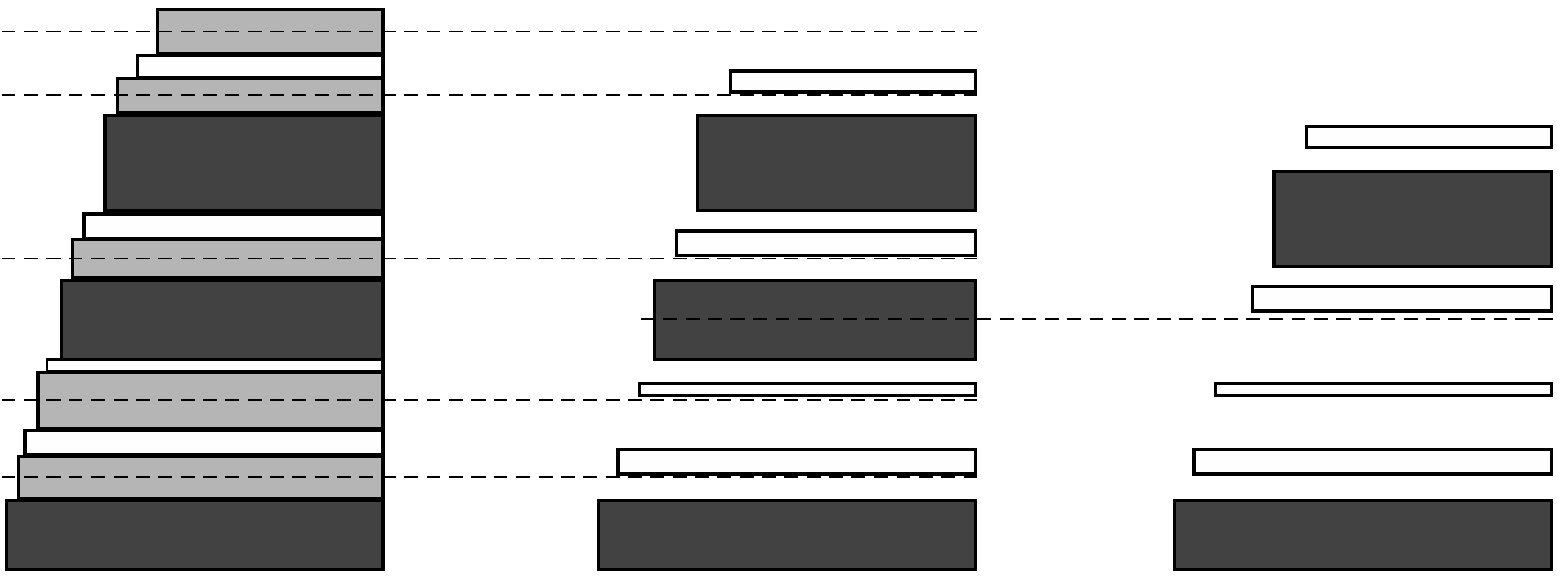}%
		\par\end{centering}
	\caption{Illustration of the {\tt delete\&shift} procedure with $r_{hor}=2$. The dashed lines indicate the value of the new baselines in the different stages of the algorithm. (Left) The starting packing. Dark and light gray items denote the growing sequences for the calls with $r=2$ and $r=1$, resp. (Middle) The shift of items at the end of the recursive calls with $r=1$. Note that light gray items are all deleted, and dark gray items are not shifted. (Right) The shift of items at the end of the process. Here we assume that the middle dark gray item is deleted.}\label{fig:Lpacking}
\end{figure}

We will describe a way to delete a subset of items $D_{hor}\subseteq OPT_{hor}$ with $\profit(D_{hor})\leq 2\eps\profit(OPT_{hor})$, and \emph{shift down} the remaining items $OPT_{hor}\setminus D_{hor}$ so that their top coordinate belongs to a set $\cT$ with the desired properties. Symmetrically, we will delete a subset of items $D_{ver}\subseteq OPT_{ver}$ with $\profit(D_{ver})\leq 2\eps\profit(OPT_{ver})$, and \emph{shift to the left} the remaining items $OPT_{ver}\setminus D_{ver}$ so that their right coordinate belongs to a set $\cR$ with the desired properties. We remark that shifting down (resp. to the left) items of $OPT_{hor}$ (resp. $OPT_{ver}$) cannot create any overlap with items of $OPT_{ver}$ (resp. $OPT_{hor}$). This allows us to reason on each such set separately. 

We next focus on $OPT_{hor}$ only: the construction for $OPT_{ver}$ is symmetric. For notational convenience we let $1,\ldots,n_{hor}$ be the items of $OPT_{hor}$ in non-increasing order of width \emph{and} from bottom to top in the starting optimal packing. We remark that this sequence is not necessarily sorted (increasingly or decreasingly) in terms of item heights: this makes our construction much more complicated.

Let us first introduce some useful notation. Consider any subsequence $B=\{b_{start},\ldots,b_{end}\}$ of consecutive items (\emph{interval}). For any $i\in B$, we define $\topc_B(i):=\sum_{k\in B,k\leq i}\height(k)$ and $\bottomc_B(i)=\topc_B(i)-\height(i)$. The \emph{growing subsequence} $G=G(B)=\{g_1,\ldots,g_h\}$ of $B$ (with possibly non-contiguous elements) is defined as follows. We initially set $g_1=b_{start}$. Given the element $g_i$, $g_{i+1}$ is the smallest-index (i.e., lowest) element in $\{g_i+1,\ldots,b_{end}\}$ such that $\height(g_{i+1})\geq \height(g_i)$. We halt the construction of $G$ when we cannot find a proper $g_{i+1}$. For notational convenience, define $g_{h+1}=b_{end}+1$. We let $B^G_i:=\{g_i+1,\ldots,g_{i+1}-1\}$ for $i=1,\ldots,h$. Observe that the sets $B^G_i$ partition $B\setminus G$. We will crucially exploit the following simple property.
\begin{lemma}\label{lem:propertiesG}
	For any $g_i\in G$ and any $j\in \{b_{start},\ldots,g_{i+1}-1\}$, $\height(j)\leq \height(g_i)$.
\end{lemma}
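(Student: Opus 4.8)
The plan is to prove Lemma~\ref{lem:propertiesG} by a direct argument that exploits the minimality built into the greedy definition of the growing subsequence, together with the monotonicity of heights along $G$. No properties of the original \fontL-packing, of widths, or of profits are needed; this is purely a bookkeeping fact about how $G(B)$ is constructed.

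First I would record the two elementary facts that drive everything. (i) \emph{Monotonicity along $G$}: by construction $\height(g_1)\le \height(g_2)\le \cdots \le \height(g_h)$, since each $g_{k+1}$ is chosen so that $\height(g_{k+1})\ge \height(g_k)$. (ii) \emph{Gaps are strictly shorter}: for every $k$ and every index $j$ with $g_k<j<g_{k+1}$ (that is, $j\in B^G_k$) we have $\height(j)<\height(g_k)$. This is immediate from the fact that $g_{k+1}$ is the \emph{smallest}-index element exceeding $g_k$ with height at least $\height(g_k)$: if some earlier index $j$ had $\height(j)\ge \height(g_k)$, it would have been selected instead of $g_{k+1}$. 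The last block $B^G_h$ is covered by the same sentence once one recalls the conventions $g_{h+1}=b_{end}+1$ and the halting rule, which says precisely that no index in $\{g_h+1,\dots,b_{end}\}$ has height $\ge \height(g_h)$.

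To conclude, fix $g_i\in G$ and $j\in\{b_{start},\dots,g_{i+1}-1\}$. Since $g_1=b_{start}\le j<g_{i+1}$, there is a well-defined largest index $k\le i$ with $g_k\le j$. If $j=g_k$, then $\height(j)=\height(g_k)\le \height(g_i)$ by (i). Otherwise $g_k<j$, and by maximality of $k$ also $j<g_{k+1}$, so $j\in B^G_k$ and (ii) gives $\height(j)<\height(g_k)\le \height(g_i)$, again using (i). In either case $\height(j)\le \height(g_i)$, as required. Equivalently, one can phrase this as a one-line induction on $i$, combining the induction hypothesis on $\{b_{start},\dots,g_i-1\}$ with $\height(g_{i-1})\le \height(g_i)$ and fact (ii) applied to $B^G_i$.

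There is essentially no obstacle here, so the only things to be careful about are: invoking the \emph{minimality} in the choice of $g_{k+1}$ (not merely the inequality $\height(g_{k+1})\ge \height(g_k)$), and handling the terminal block $B^G_h$ uniformly via the $g_{h+1}=b_{end}+1$ convention and the halting condition. I would state (i) and (ii) as a short preamble so that this lemma, and the later uses of $G$, can refer back to them cleanly.
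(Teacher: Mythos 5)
Your proposal is correct and follows essentially the same route as the paper's proof: items strictly inside a block $B^G_k$ are shorter than $\height(g_k)$ by the minimality in the greedy choice (with the halting rule covering the final block), and the non-decreasing heights along $G$ then transfer this bound up to $\height(g_i)$. The explicit statement of facts (i) and (ii) and the case split on whether $j\in G$ merely make explicit what the paper's argument does implicitly.
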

\begin{proof}
	The items $j\in B^G_i=\{g_i+1,\ldots,g_{i+1}-1\}$ have $\height(j)<\height(g_i)$. Indeed, any such $j$ with $\height(j)\geq \height(g_i)$ would have been added to $G$, a contradiction. 
	
	Consider next any $j\in \{b_{start},\ldots g_i-1\}$. If $j\in G$ the claim is trivially true by construction of $G$. Otherwise, one has $j \in B^G_k$ for some $g_k\in G$, $g_k<g_i$. Hence, by the previous argument  and by construction of $G$, $\height(j)<\height(g_k)\leq \height(g_i)$.
\end{proof}

The intuition behind our construction is as follows. Consider the growing sequence $G=G(OPT_{hor})$, and suppose that $\profit(G)\leq \eps \cdot \profit(OPT_{hor})$. Then we might simply delete $G$, and shift the remaining items $OPT_{hor}\setminus G=\cup_j B^G_j$ as follows. Let $\lceil x\rceil_y$ denote the smallest multiple of $y$ not smaller than $x$. We consider each set $B^G_j$ separately. For each such set, we define a baseline vertical coordinate $\base_j=\lceil \bottomc(g_j)\rceil_{\height(g_j)/2}$, where $\bottomc(g_j)$ is the bottom coordinate of $g_j$ in the original packing. We next round up the height of $i\in B^G_j$ to $\hat{\height}(i):=\lceil \height(i)\rceil_{\height(g_j)/(2n)}$, and pack the rounded items of $B^G_j$ as low as possible above the baseline. The reader might check that the possible top coordinates for rounded items fall in a polynomial size set (using Lemma \ref{lem:propertiesG}). It is also not hard to check that items are \emph{not} shifted up.  

We use recursion in order to handle the case $\profit(G)> \eps \cdot \profit(OPT_{hor})$. Rather than deleting $G$, we consider each $B^G_j$ and build a new growing subsequence for each such set. We repeat the process recursively for $r_{hor}$ many rounds. Let ${\cal G}^r$ be the union of all the growing subsequences in the recursive calls of level $r$. Since the sets ${\cal G}^r$ are disjoint by construction, there must exist a value $r_{hor}\leq \frac{1}{\eps}$ such that $\profit({\cal G}^{r_{hor}})\leq \eps\cdot \profit(OPT_{hor})$. Therefore we can apply the same shifting argument to all growing subsequences of level $r_{hor}$ (in particular we delete all of them). In the remaining growing subsequences we can afford to delete $1$ out of $1/\eps$ consecutive items (with a small loss of the profit), and then apply a similar shifting argument. \\

We next describe our approach in more detail. 
We exploit a recursive procedure {\tt delete\&shift}. This procedure takes in input two parameters: 
an interval $B=\{b_{start},\ldots,b_{end}\}$, and an integer \emph{round parameter} $r\geq 1$. 
Procedure {\tt delete\&shift} returns a set $D(B)\subseteq B$ of deleted items, and a 
shift function $\shift:B\setminus D(B)\rightarrow \mathbb{N}$. Intuitively, $\shift(i)$ is the value of the top coordinate of $i$ in the shifted packing with respect to a proper baseline value which is implicitly defined.  
We initially call {\tt delete\&shift}$(OPT_{hor},r_{hor})$, for a proper $r_{hor}\in \{1,\ldots,\frac{1}{\eps}\}$ to be fixed later. Let $(D,\shift)$ be the output of this call. The desired set of deleted items is $D_{hor}=D$, and in the final packing $\topc(i)=\shift(i)$ for any $i\in OPT_{hor}\setminus D_{hor}$ (the right coordinate of any such $i$ is $N$).

The procedure behaves differently in the cases $r=1$ and $r>1$.
If $r=1$, we compute the growing sequence $G=G(B)=\{g_1=b_{start},\ldots,g_h\}$, and set $D(B)=G(B)$. Considers any set $B^G_j=\{g_{j}+1,\ldots,g_{j+1}-1\}$, $j=1,\ldots,h$. Let 
$\base_j:= \lceil \bottomc_{B}(g_j) \rceil_{\height(g_j)/2}$. We define for any $i\in B^G_j$,
$$
\shift(i)= \base_j+\sum_{k\in B^G_j,k\leq i}\lceil \height(k) \rceil_{\height(g_j)/(2n)}.
$$
Observe that $\shift$ is fully defined since $\cup_{j=1}^{h}B^G_j=B\setminus D(B)$.

If instead $r>1$, we compute the growing sequence $G=G(B)=\{g_1=b_{start},\ldots,g_h\}$. We next delete a subset of items $D'\subseteq G$. If $h<\frac{1}{\eps}$, we let $D'=D'(B)=\emptyset$. Otherwise, let $G_k=\{g_j\in G: j = k \pmod{1/\eps}\}\subseteq G$, for $k\in \{0,\ldots,1/\eps-1\}$. We set $D'=D'(B)=\{d_1,\ldots,d_p\}=G_{x}$ where $x=\arg\min_{k\in \{0,\ldots,1/\eps-1\}}\profit(G_k)$. 
\begin{proposition}\label{pro:deleted}
	One has $\profit(D')\leq \eps\cdot \profit(G)$. Furthermore, any subsequence $\{g_x,g_{x+1},\ldots,g_y\}$ of $G$ with at least $1/\eps$ items contains at least one item from $D'$.
\end{proposition}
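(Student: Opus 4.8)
The plan is to prove both assertions by elementary counting, exploiting that $G_0,\ldots,G_{1/\eps-1}$ partition $G$. First I would dispose of the degenerate case $h<1/\eps$: here the definition of {\tt delete\&shift} sets $D'=\emptyset$, so $\profit(D')=0\leq \eps\cdot\profit(G)$, and $G$ has fewer than $1/\eps$ elements, so no subsequence of $G$ can have $1/\eps$ or more items and the second assertion holds vacuously. From now on assume $h\geq 1/\eps$; I will also use the standing convention that $1/\eps$ is a positive integer, which is what makes the residue partition below well defined.

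For the profit bound, the key observation is that every index $j\in\{1,\ldots,h\}$ lies in exactly one residue class modulo $1/\eps$, so the sets $G_k=\{g_j\in G: j\equiv k \pmod{1/\eps}\}$, for $k=0,\ldots,1/\eps-1$, are pairwise disjoint with union $G$. Hence $\sum_{k=0}^{1/\eps-1}\profit(G_k)=\profit(G)$. Since $D'=G_x$ for $x=\arg\min_{k}\profit(G_k)$, the chosen set is the cheapest among $1/\eps$ sets whose profits sum to $\profit(G)$, so by averaging $\profit(D')=\profit(G_x)\leq \eps\cdot\profit(G)$.

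For the covering property, consider any run of consecutive elements $\{g_a,g_{a+1},\ldots,g_b\}$ of $G$ with $b-a+1\geq 1/\eps$ (I rename the starting index to $a$ here to avoid a clash with the symbol $x$ used for the winning residue class). The integers $a,a+1,\ldots,b$ form a block of at least $1/\eps$ consecutive integers, hence they hit every residue class modulo $1/\eps$; in particular there is an index $j'\in\{a,\ldots,b\}$ with $j'\equiv x\pmod{1/\eps}$. Then $g_{j'}\in G_x=D'$, and $g_{j'}$ belongs to the given subsequence, as required.

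I do not expect any genuine obstacle: the statement is a direct pigeonhole argument on residues modulo $1/\eps$ together with a one-line averaging argument. The only points that warrant care are purely bookkeeping ones — making explicit the degenerate case $h<1/\eps$, keeping the two roles of the letter $x$ (the fixed winning residue versus a generic starting index) notationally distinct, and recording the assumption $1/\eps\in\mathbb{N}$ so that "mod $1/\eps$" makes sense.
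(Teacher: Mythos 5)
Your proof is correct and matches the argument the paper intends: the paper states this proposition without an explicit proof, treating it as immediate from the construction, and your averaging argument over the disjoint residue classes $G_0,\ldots,G_{1/\eps-1}$ plus the pigeonhole observation that any $1/\eps$ consecutive indices hit the residue $x$ is exactly that intended justification. Your extra bookkeeping (the degenerate case $h<1/\eps$, the clash of the symbol $x$, and the assumption $1/\eps\in\mathbb{N}$) is sound and does not change the approach.
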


Consider each set $B^G_j=\{g_{j}+1,\ldots,g_{j+1}-1\}$, $j=1,\ldots,h$: We run {\tt delete\&shift}$(B^G_j,r-1)$. Let $(D_j,\shift_j)$ be the output of the latter procedure, and $\shift^{max}_j$ be the maximum value of $\shift_j$. We set the output set of deleted items to $D(B)=D'\cup (\cup_{j=1}^{h}D_j)$. 

It remains to define the function $\shift$. Consider any set $B^G_j$, and let $d_q$ be the deleted item in $D'$ with largest index (hence in topmost position) in $\{b_{start},\ldots,g_{j}\}$: define $\base_q = \lceil \bottomc_B(d_q)\rceil_{\height(d_q)/2}$. If there is no such $d_q$, we let $d_q=0$ and $\base_q=0$. For any $i\in B^G_j$ we set: 
$$
\shift(i) = \base_q + \sum_{g_k\in G,d_q< g_k\leq g_j}\height(g_k)+\sum_{g_k\in G,d_q\leq g_k< g_j}\shift^{max}_k+\shift_j(i).
$$ 
Analogously, if $g_j\neq d_q$, we set
$$
\shift(g_j) = \base_q + \sum_{g_k\in G,d_q< g_k\leq g_j}\height(g_k)+\sum_{g_k\in G,d_q\leq g_k< g_j}\shift^{max}_k.
$$ 
This concludes the description of {\tt delete\&shift}. We next show that the final packing has the desired properties. Next lemma shows that the total profit of deleted items is small for a proper choice of the starting round parameter $r_{hor}$.
\begin{lemma}\label{lem:Lpacking:costDeleted}
	There is a choice of $r_{hor}\in \{1,\ldots,\frac{1}{\eps}\}$ such that the final set $D_{hor}$ of deleted items satisfies $\profit(D_{hor})\leq 2\eps\cdot \profit(OPT_{hor})$.
\end{lemma}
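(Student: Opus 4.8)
\emph{Proof plan.} The plan is to analyze the recursion tree of {\tt delete\&shift} and to choose $r_{hor}$ by an averaging argument over the recursion depth.

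First I would observe that the \emph{shape} of the recursion tree produced by {\tt delete\&shift}$(OPT_{hor},r_{hor})$ does not depend on $r_{hor}$. On an interval $B$ processed with round parameter $r>1$, the procedure computes the growing subsequence $G(B)$, deletes some $D'(B)\subseteq G(B)$, and then recurses on the pieces $B^G_1,\ldots,B^G_h$ of $B\setminus G(B)$, and the decomposition into these pieces depends only on $B$ and not on $r$. Hence, iterating the $G$-decomposition $d$ times starting from $OPT_{hor}$ yields a well-defined family of \emph{depth-$d$ intervals}; I would let $\mathcal{G}_d$ denote the union of the growing subsequences $G(B)$ over all depth-$d$ intervals $B$. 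In the run with parameter $r_{hor}$, a depth-$d$ interval is processed with round parameter $r_{hor}-d+1$, so the calls that hit the base case $r=1$ are exactly the depth-$r_{hor}$ ones.

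The crucial point is that $\mathcal{G}_1,\mathcal{G}_2,\ldots$ are pairwise disjoint subsets of $OPT_{hor}$: when a depth-$d$ interval $B$ is processed, the items of $G(B)$ are removed and only $B\setminus G(B)=\bigcup_j B^G_j$ is handed to the depth-$(d+1)$ calls, so no item of $\mathcal{G}_d$ can reappear at any larger depth. Consequently $\sum_{d=1}^{1/\eps}\profit(\mathcal{G}_d)\le \profit(OPT_{hor})$, and by averaging there is some $r_{hor}\in\{1,\ldots,1/\eps\}$ with $\profit(\mathcal{G}_{r_{hor}})\le \eps\cdot\profit(OPT_{hor})$; this is the value I would pick.

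Finally I would bound $\profit(D_{hor})$ for this choice by splitting the deleted items into two groups. At depths $d=1,\ldots,r_{hor}-1$ the round parameter is $>1$, so each depth-$d$ call on an interval $B$ deletes only $D'(B)\subseteq G(B)$ with $\profit(D'(B))\le \eps\cdot\profit(G(B))$ by Proposition~\ref{pro:deleted}; summing over all depth-$d$ calls and then over $d$, and using the disjointness of the $\mathcal{G}_d$'s, this group contributes at most $\eps\sum_{d=1}^{r_{hor}-1}\profit(\mathcal{G}_d)\le \eps\cdot\profit(OPT_{hor})$. At depth $r_{hor}$ the round parameter equals $1$, so each such call deletes its whole growing subsequence, for a total of $\profit(\mathcal{G}_{r_{hor}})\le \eps\cdot\profit(OPT_{hor})$. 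Adding the two contributions yields $\profit(D_{hor})\le 2\eps\cdot\profit(OPT_{hor})$. The only point requiring care — and the main (modest) obstacle — is setting up the depth bookkeeping precisely, so that the $\mathcal{G}_d$ are well-defined \emph{before} $r_{hor}$ is fixed and the disjointness/averaging step is clean; the degenerate cases ($h<1/\eps$, in which $D'(B)=\emptyset$, or $r_{hor}=1$, in which the first group is empty) are immediate.
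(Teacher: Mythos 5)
Your proof is correct and follows essentially the same route as the paper: the paper likewise uses the disjointness of the per-level growing subsequences to pick $r_{hor}$ by averaging so that the fully deleted base-level subsequences cost at most $\eps\cdot \profit(OPT_{hor})$, and bounds the deletions at higher levels by $\eps\cdot\profit(OPT_{hor})$ via Proposition~\ref{pro:deleted}. Your explicit depth-indexed bookkeeping (making the decomposition independent of $r_{hor}$ before the averaging step) is just a cleaner statement of what the paper's proof does implicitly.
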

\begin{proof}
	Let ${\cal G}^r$ denote the union of the sets $G(B)$ computed by all the recursive calls with input round parameter $r$. Observe that by construction these sets are disjoint.
	Let also ${\cal D}^r$ be the union of the sets $D'(B)$ on those calls (the union of sets $D(B)$ for $r=r_{hor}$). By Proposition \ref{pro:deleted} and the disjointness of sets ${\cal G}^r$ one has
	$$
	\profit(D_{hor})=\sum_{1\leq r\leq r_{hor}}\profit({\cal D}^r)\leq \eps\cdot \sum_{r< r_{hor}}\profit({\cal G}^r)+\profit({\cal D}^{r_{hor}})\leq \eps\cdot \profit(OPT_{hor})+\profit({\cal D}^{r_{hor}}).
	$$ 
	Again by the disjointness of sets ${\cal G}^r$ (hence ${\cal D}^r$), there must exists a value of $r_{hor}\in \{1,\ldots,\frac{1}{\eps}\}$ such that $\profit({\cal D}^{r_{hor}})\leq \eps\cdot \profit(OPT_{hor})$. The claim follows. 
\end{proof}
Next lemma shows that, intuitively, items are only shifted down with respect to the initial packing.
\begin{lemma}\label{lem:Lpacking:shiftDown}
	Let $(D,\shift)$ be the output of some execution of the algorithm {\tt delete\&shift}$(B,r)$. Then, for any $i\in B\setminus D$, $\shift(i)\leq \topc_B(i)$.
\end{lemma}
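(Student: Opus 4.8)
# Proof Plan for Lemma~\ref{lem:Lpacking:shiftDown}

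The plan is to prove the statement by induction on the round parameter $r$, following the recursive structure of {\tt delete\&shift}. The key invariant to maintain is slightly stronger than the statement itself: for every interval $B$ and every round $r$, not only does each surviving item satisfy $\shift(i) \le \topc_B(i)$, but also the baseline associated with each deleted ``anchor'' item $g_j$ (or $d_q$) respects $\base \le \bottomc_B(g_j)$ up to the rounding, so that the cumulative shift computed above each anchor never exceeds the original top coordinate of the corresponding item. Actually, since $\base_j = \lceil \bottomc_B(g_j)\rceil_{\height(g_j)/2}$ rounds \emph{up}, I will need to be careful: the gain must come from the fact that we \emph{delete} the anchors $g_j$ themselves (in the $r=1$ case, all of $G$; in the $r>1$ case, the items of $D'$), which frees up vertical space at least equal to $\height(g_j)$, and this more than compensates for the $\le \height(g_j)/2$ rounding loss plus the per-item rounding loss $\le n \cdot \height(g_j)/(2n) = \height(g_j)/2$.

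\textbf{Base case $r=1$.} Here $D(B) = G(B) = \{g_1,\dots,g_h\}$. Fix $j$ and $i \in B^G_j = \{g_j+1,\dots,g_{j+1}-1\}$. By Lemma~\ref{lem:propertiesG}, every item $k$ with $k \le g_{j+1}-1$ has $\height(k) \le \height(g_j)$, and in particular every $k \in B^G_j$ with $k \le i$ satisfies $\lceil \height(k)\rceil_{\height(g_j)/(2n)} \le \height(k) + \height(g_j)/(2n)$. Since $|B^G_j| \le n$, the total rounding loss over $B^G_j$ is at most $\height(g_j)/2$. Therefore
\[
\shift(i) = \base_j + \sum_{k\in B^G_j, k\le i}\lceil \height(k)\rceil_{\height(g_j)/(2n)} \le \bottomc_B(g_j) + \tfrac{\height(g_j)}{2} + \sum_{k\in B^G_j, k\le i}\height(k) + \tfrac{\height(g_j)}{2}.
\]
Now $\bottomc_B(g_j) + \height(g_j) = \topc_B(g_j)$, and since all items of $B^G_j$ lie above $g_j$ in the original packing, $\topc_B(g_j) + \sum_{k\in B^G_j, k\le i}\height(k) = \topc_B(i)$. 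Hence $\shift(i) \le \topc_B(i)$, as required.

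\textbf{Inductive step $r>1$.} Now $D(B) = D' \cup \bigcup_j D_j$ where $D' = \{d_1,\dots,d_p\} \subseteq G$ and $(D_j,\shift_j)$ is the output of {\tt delete\&shift}$(B^G_j, r-1)$. Fix $j$ and let $d_q$ be the topmost item of $D'$ in $\{b_{start},\dots,g_j\}$ (with $d_q=0$, $\base_q=0$ if none exists). For $i \in B^G_j$,
\[
\shift(i) = \base_q + \sum_{g_k\in G,\, d_q< g_k\le g_j}\height(g_k) + \sum_{g_k\in G,\, d_q\le g_k< g_j}\shift^{max}_k + \shift_j(i).
\]
By the induction hypothesis applied to $B^G_j$, $\shift_j(i) \le \topc_{B^G_j}(i) = \sum_{k\in B^G_j, k\le i}\height(k)$, and in particular $\shift^{max}_k \le \topc_{B^G_k}(\text{last item}) \le \sum_{\ell \in B^G_k}\height(\ell)$. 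Since $\base_q = \lceil \bottomc_B(d_q)\rceil_{\height(d_q)/2} \le \bottomc_B(d_q) + \height(d_q)/2 \le \topc_B(d_q) - \height(d_q)/2 \le \topc_B(d_q)$, I can bound
\[
\shift(i) \le \topc_B(d_q) + \sum_{g_k\in G,\, d_q< g_k\le g_j}\height(g_k) + \sum_{g_k\in G,\, d_q\le g_k< g_j}\ \sum_{\ell\in B^G_k}\height(\ell) + \sum_{k\in B^G_j, k\le i}\height(k).
\]
The point now is that the interval $\{b_{start},\dots,i\}$ decomposes (for indices above $d_q$) exactly into the anchors $g_k$ with $d_q < g_k \le g_j$, the sets $B^G_k$ with $d_q \le g_k < g_j$ (each such set lying between consecutive anchors), and the prefix of $B^G_j$ up to $i$; summing their heights gives precisely $\topc_B(i) - \topc_B(d_q)$ (or $\topc_B(i)$ if $d_q=0$). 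Hence the right-hand side telescopes to $\topc_B(i)$, giving $\shift(i) \le \topc_B(i)$. The analogous computation for $\shift(g_j)$ (when $g_j \neq d_q$) is identical, using that $\{b_{start},\dots,g_j\}$ above $d_q$ decomposes into the same anchors and $B^G_k$'s.

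\textbf{Main obstacle.} The delicate point is the bookkeeping in the inductive step: verifying that the three sums in the definition of $\shift(i)$ — the anchor heights, the sub-block maxima $\shift^{max}_k$, and the recursive shift $\shift_j(i)$ — together account for \emph{exactly} the portion of the original packing lying between $\topc_B(d_q)$ and $\topc_B(i)$, with no double-counting and no gap, so that the telescoping is tight. This requires carefully using that the sets $B^G_k$ partition $B \setminus G$, that each $B^G_k$ sits immediately above $g_k$ and below $g_{k+1}$, and that $d_q \in G$. The rounding slack ($\height(d_q)/2$ from $\base_q$, and the per-item slack absorbed inside the $r=1$ leaves) must be shown not to accumulate across recursion levels — which holds because at level $r>1$ no rounding is applied to the heights at all; rounding only happens at the leaves $r=1$, where it is bounded by $\height(g_j)$ and paid for by deleting $g_j$.
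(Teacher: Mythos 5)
Your proposal is correct and follows essentially the same route as the paper: induction on the round parameter $r$, bounding the rounded baseline by $\bottomc_B(\cdot)+\height(\cdot)/2$ (absorbed because the anchor itself is deleted), bounding the per-item rounding loss at the $r=1$ leaves by $n\cdot\height(g_j)/(2n)$, and using the inductive hypothesis on $\shift_j$ and $\shift^{max}_k$ together with the exact decomposition of the indices above $d_q$ into anchors, full blocks $B^G_k$, and the prefix of $B^G_j$ to telescope to $\topc_B(i)$. The paper's proof is the same computation written as a single chain of inequalities, so no further changes are needed.
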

\begin{proof}
	We prove the claim by induction on $r$. Consider first the case $r=1$. In this case, for any $i\in B^G_j$:
	\begin{align*}
	\shift(i) & =  \lceil \bottomc_{B}(g_j) \rceil_{\height(g_j)/2}+\sum_{k\in B^G_j,k\leq i}\lceil \height(k) \rceil_{\height(g_j)/(2n)} \\
	& \leq \topc_{B}(g_j) - \frac{1}{2}\height(g_j)+\sum_{k\in B^G_j,k\leq i}\height(k)+n\cdot \frac{\height(g_j)}{2n} = \topc_B(i).
	\end{align*}
	Assume next that the claim holds up to round parameter $r-1\geq 1$, and consider round $r$. For any $i\in B^G_j$ with $\base_q = \lceil \bottomc_B(d_q)\rceil_{\height(d_q)/2}$, one has
	\begin{align*}
	\shift(i) & = \lceil \bottomc_B(d_q)\rceil_{\height(d_q)/2} + \sum_{\substack{g_k\in G,\\d_q< g_k\leq g_j}}\height(g_k)+\sum_{\substack{g_k\in G,\\d_q\leq g_k< g_j}}\shift^{max}_k+\shift_j(i) \\
	& \leq \topc_B(d_q)+\sum_{\substack{g_k\in G,\\d_q< g_k\leq g_j}}\height(g_k)+\sum_{\substack{g_k\in G,\\d_q\leq g_k< g_j}}\topc_{B^G_k}(g_{k+1}-1)+\topc_{B^G_j}(i)\\
	& =\topc_B(i).
	\end{align*}
	An analogous chain of inequalities shows that $\shift(g_j)\leq \topc_B(g_j)$ for any $g_j\in G\setminus D'$. A similar proof works for the special case $\base_q=0$.
\end{proof}

It remains to show that the final set of values of $\topc(i)=\shift(i)$ has the desired properties. This is the most delicate part of our analysis. We define a set $\cT^r$ of candidate top coordinates recursively in $r$. Set $\cT^1$ contains, for any item $j\in I_{hor}$, and any integer $1\leq a\leq 4n^2$, the value $a\cdot \frac{\height(j)}{2n}$. Set $\cT^r$, for $r>1$ is defined recursively with respect to to $\cT^{r-1}$. For any item $j$, any
integer $0\leq a\leq 2n-1$, any tuple of $b\leq 1/\eps-1$ items $j(1),\ldots,j(b)$, and any tuple of $c\leq 1/\eps$ values $s(1),\ldots,s(c)\in \cT^{r-1}$, $\cT^r$ contains the sum $a\cdot \frac{\height(j)}{2}+\sum_{k=1}^{b}\height(j(k))+\sum_{k=1}^{c}s(k)$. Note that sets $\cT^r$ can be computed based on the input only (without knowing $OPT$). It is easy to show that $\cT^r$ has polynomial size for $r=O_\eps(1)$.
\begin{lemma}\label{lem:sizeTr}
	For any integer $r\geq 1$, $|\cT^r|\leq (2n)^{\frac{r+2+(r-1)\eps}{\eps^{r-1}}}$.
\end{lemma}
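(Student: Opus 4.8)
The plan is to prove this by induction on $r$, bounding at each level the number of distinct values generated by the recursive recipe that defines $\cT^r$. Before starting I would record the elementary fact that the claimed exponents $e_r := \frac{r+2+(r-1)\eps}{\eps^{r-1}}$ satisfy the recurrence $e_r = \tfrac1\eps\, e_{r-1} + \tfrac{1+\eps}{\eps^{r-1}}$ for $r\ge 2$, with $e_1 = 3$; this is a one-line computation ($e_r-\tfrac1\eps e_{r-1} = \tfrac{(r+2+(r-1)\eps)-(r+1+(r-2)\eps)}{\eps^{r-1}}=\tfrac{1+\eps}{\eps^{r-1}}$), and it is precisely the shape that will match the recipe count. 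Throughout I would assume $n\ge 2$ (the statement being trivial for $n\le 1$) and $\eps\le 1/2$, as well as $1/\eps\in\mathbb N$ as elsewhere.

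For the base case $r=1$, each element of $\cT^1$ has the form $a\cdot\height(j)/(2n)$ with $j$ among at most $n$ items and $1\le a\le 4n^2$, so $|\cT^1|\le 4n^3\le (2n)^3=(2n)^{e_1}$. I would also note here that taking $a=0$, an empty item tuple, and a one-element value tuple shows $\cT^{r-1}\subseteq\cT^r$, hence $|\cT^{r-1}|\ge|\cT^1|\ge 4\ge 2$ for all relevant $r$; this is what lets one geometric-sum-bound the tuples over $\cT^{r-1}$.

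For the inductive step, fixing $r\ge 2$ and assuming $|\cT^{r-1}|\le (2n)^{e_{r-1}}$, I would observe that every element of $\cT^r$ is determined by the choice of an item $j$ ($\le n$ choices), an integer $a\in\{0,\dots,2n-1\}$ ($\le 2n$ choices), an ordered tuple of $b\le 1/\eps-1$ items ($\le\sum_{b=0}^{1/\eps-1}n^b\le 2n^{1/\eps-1}$ choices, using $n\ge 2$), and an ordered tuple of $c\le 1/\eps$ elements of $\cT^{r-1}$ ($\le\sum_{c=0}^{1/\eps}|\cT^{r-1}|^c\le 2|\cT^{r-1}|^{1/\eps}$ choices, using $|\cT^{r-1}|\ge 2$); the sum defining the element does not depend on the order of the tuples, so counting ordered tuples only over-counts. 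Multiplying, $|\cT^r|\le 8\,n^{1/\eps+1}\,|\cT^{r-1}|^{1/\eps}\le 8\,n^{1/\eps+1}(2n)^{e_{r-1}/\eps}$. Since $\eps\le 1/2$ gives $2^{1/\eps+1}\ge 8$, hence $8n^{1/\eps+1}\le(2n)^{1/\eps+1}$, this yields $|\cT^r|\le (2n)^{1/\eps+1+e_{r-1}/\eps}$. It then remains only to check $1/\eps+1+e_{r-1}/\eps\le e_r$, which by the recurrence for $e_r$ is equivalent to $1/\eps+1\le\tfrac{1+\eps}{\eps^{r-1}}=\tfrac1{\eps^{r-1}}+\tfrac1{\eps^{r-2}}$; this holds with equality for $r=2$ and strictly for $r\ge 3$ (where the right side is at least $\tfrac1{\eps^2}+\tfrac1\eps\ge\tfrac1\eps+1$, using $\eps\le 1$).

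I do not expect a real obstacle — the argument is pure bookkeeping — but the one point that forces care is that the decisive inequality is tight at $r=2$, so there is essentially no slack: one must bound the number of item tuples and of value tuples by $2n^{1/\eps-1}$ and $2|\cT^{r-1}|^{1/\eps}$ respectively (nothing larger in the exponents of $n$ or $|\cT^{r-1}|$) and must keep the leftover numerical factor down to a constant like $8$ that is absorbed via $\eps\le 1/2$. Any laxer counting would break the bound already at $r=2$ and then propagate through the induction.
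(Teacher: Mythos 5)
Your proof is correct and follows essentially the same route as the paper: induction on $r$, with the base case $|\cT^1|\le 4n^3\le (2n)^3$ and the inductive step counting the choices of $j$, $a$, the item tuple, and the tuple over $\cT^{r-1}$ to get a bound of the form $(2n)^{1/\eps+1}\,|\cT^{r-1}|^{1/\eps}\le (2n)^{\frac{r+2+(r-1)\eps}{\eps^{r-1}}}$. Your treatment is in fact a bit more careful than the paper's (explicit geometric-sum bounds under $n\ge 2$, absorption of the constant via $\eps\le 1/2$, and the explicit verification of the exponent inequality, which is tight at $r=2$), but it is the same argument.
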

\begin{proof}
	We prove the claim by induction on $r$. The claim is trivially true for $r=1$ since there are $n$ choices for item $j$ and $4n^2$ choices for the integer $a$, hence altogether at most $n\cdot 4n^2<8n^3$ choices. For $r>1$, the number of possible values of $\cT^r$ is at most

	\begin{align*}
		n\cdot 2n \cdot \left(\sum_{b=0}^{1/\eps-1}n^b\right)\cdot \left(\sum_{c=0}^{1/\eps}|\cT^{r-1}|^c\right) &\leq 4n^2\cdot n^{\frac{1}{\eps}-1}\cdot |\cT^{r-1}|^{\frac{1}{\eps}}\\
		&\leq (2n)^{\frac{1}{\eps}+1}\left((2n)^{\frac{r+1+(r-2)\eps}{\eps^{r-2}}}\right)^{\frac{1}{\eps}}\\
		&\leq (2n)^{\frac{r+2+(r-1)\eps}{\eps^{r-1}}}. 
	\end{align*}
\end{proof}

The next lemma shows that the values of $\shift$ returned by {\tt delete\&shift} for round parameter $r$ belong to $\cT^r$, hence the final top coordinates belong to $\cT:=\cT^{r_{hor}}$.
\begin{lemma}\label{lem:Lpacking:possibleHeights}
	Let $(D,\shift)$ be the output of some execution of the procedure {\tt delete\&shift}$(B,r)$. Then, for any $i\in B\setminus D$, $\shift(i)\in \cT^r$.
\end{lemma}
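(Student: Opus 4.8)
The proof will proceed by induction on the round parameter $r$, tracking carefully how the definition of $\shift$ at round $r$ builds on the values $\shift_j$ returned by the recursive calls at round $r-1$ and on bounded-size arithmetic combinations of item heights.

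\textbf{Base case $r=1$.} Here, for $i\in B^G_j$ we have $\shift(i)=\base_j+\sum_{k\in B^G_j,k\leq i}\lceil \height(k)\rceil_{\height(g_j)/(2n)}$, where $\base_j=\lceil \bottomc_B(g_j)\rceil_{\height(g_j)/2}$. The plan is to observe that $\base_j$ is an integer multiple of $\height(g_j)/2$, say $a_0\cdot \height(g_j)/2 = (2n a_0)\cdot\height(g_j)/(2n)$, and each rounded height $\lceil\height(k)\rceil_{\height(g_j)/(2n)}$ is an integer multiple of $\height(g_j)/(2n)$. Summing at most $n$ such terms plus $\base_j$, and using Lemma~\ref{lem:propertiesG} (and Lemma~\ref{lem:Lpacking:shiftDown}, which bounds $\shift(i)\leq \topc_B(i)$ so that the relevant multiple $a$ is at most roughly $2n^2$; the crude bound $a\le 4n^2$ suffices since $\bottomc_B(g_j)$ and each $\height(k)$ with $k\in B^G_j$ are at most $\height(g_j)$ so the total is at most about $2n\cdot\height(g_j)$, i.e.\ $a\le 4n^2\cdot\frac{\height(g_j)}{2n}$), the value $\shift(i)$ equals $a\cdot\frac{\height(g_j)}{2n}$ for some integer $1\le a\le 4n^2$. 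Since $g_j\in I_{hor}$, this is exactly a member of $\cT^1$ by definition. The case of $\shift(g_j)$ for $g_j\in G$ does not arise since $G=D(B)$ is deleted at $r=1$.

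\textbf{Inductive step $r>1$.} Assume the claim for round parameter $r-1$: every value $\shift_j(i)$ output by the recursive call {\tt delete\&shift}$(B^G_j,r-1)$ lies in $\cT^{r-1}$. For $i\in B^G_j$ the definition gives
$$
\shift(i)=\base_q+\sum_{g_k\in G,\ d_q<g_k\leq g_j}\height(g_k)+\sum_{g_k\in G,\ d_q\leq g_k<g_j}\shift^{max}_k+\shift_j(i),
$$
where $\base_q=\lceil\bottomc_B(d_q)\rceil_{\height(d_q)/2}$ is an integer multiple of $\height(d_q)/2$, the integer $a$ being bounded by $4n^2$ via Lemma~\ref{lem:Lpacking:shiftDown} as in the base case (in fact $0\le a\le 2n-1$ suffices after noting that between $d_q$ and $g_j$ the growing sequence property gives that the subsequence $\{g_{x}, \ldots, g_{y}\}$ strictly between two consecutive deleted items of $D'$ has length less than $1/\eps$ by Proposition~\ref{pro:deleted}, so only $O(1/\eps)$ items contribute). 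The key point is that the middle two sums range over the items $g_k$ of $G$ lying strictly between $d_q$ (the previous deleted item of $D'$) and $g_j$; by the second part of Proposition~\ref{pro:deleted}, there are at most $1/\eps-1$ such items $g_k$, so the first sum $\sum \height(g_k)$ is a sum of at most $b\le 1/\eps-1$ item heights, and the second sum $\sum \shift^{max}_k$ is a sum of at most $c\le 1/\eps$ values each lying in $\cT^{r-1}$ (each $\shift^{max}_k$ is the maximum value of $\shift_k$, hence an element of $\cT^{r-1}$ by the inductive hypothesis applied to the call on $B^G_k$). Finally $\shift_j(i)\in\cT^{r-1}$ by induction, which we absorb into the tuple of $c$ values. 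Thus $\shift(i)$ has exactly the form $a\cdot\frac{\height(j)}{2}+\sum_{k=1}^b\height(j(k))+\sum_{k=1}^c s(k)$ with $j=d_q$, $0\le a\le 2n-1$, $b\le 1/\eps-1$, $s(k)\in\cT^{r-1}$, $c\le 1/\eps$, which is precisely an element of $\cT^r$. The identical argument handles $\shift(g_j)$ for $g_j\in G\setminus D'$ (here the trailing $\shift_j(i)$ term is absent, still matching the form with one fewer value). The degenerate case $d_q=0$, $\base_q=0$ is handled by taking $a=0$ and noting the remaining terms already fit the template.

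\textbf{Main obstacle.} The delicate bookkeeping is in the inductive step: one must verify that the number of ``loose'' item heights summed (the $b\le 1/\eps-1$ count) and the number of recursively-produced $\cT^{r-1}$ values summed (the $c\le 1/\eps$ count) genuinely stay within the bounds hard-coded into the definition of $\cT^r$. This is exactly where Proposition~\ref{pro:deleted} is essential — it guarantees that consecutive deleted items of $D'$ are at most $1/\eps$ apart in $G$, capping how many $g_k$'s fall in the summation window. A secondary subtlety is justifying the integer bound on $a$ (the multiplier of $\height(j)/2$ or $\height(j)/(2n)$); this follows from Lemma~\ref{lem:Lpacking:shiftDown} together with Lemma~\ref{lem:propertiesG}, which together bound $\shift(i)$ by $\topc_B(i)$, itself at most a small polynomial in $n$ times the relevant height. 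Everything else is routine substitution of the inductive hypothesis into the displayed formula for $\shift$.
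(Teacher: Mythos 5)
Your argument is correct and essentially identical to the paper's: induction on $r$, with the base case expressing $\shift(i)$ as an integer multiple $a\cdot\frac{\height(g_j)}{2n}$ with $a\le 4n^2$, and the inductive step decomposing $\shift(i)$ into $a\cdot\frac{\height(d_q)}{2}$ plus at most $1/\eps-1$ item heights plus at most $1/\eps$ values from $\cT^{r-1}$, with Proposition~\ref{pro:deleted} supplying the counting bounds. The only blemish is your justification of the bound $0\le a\le 2n-1$ in the inductive step: it has nothing to do with the spacing of deleted items in $G$ or with Lemma~\ref{lem:Lpacking:shiftDown}, but follows directly from Lemma~\ref{lem:propertiesG}, which gives $\bottomc_B(d_q)\le (n-1)\height(d_q)$ and hence $\lceil \bottomc_B(d_q)\rceil_{\height(d_q)/2}\le (2n-1)\frac{\height(d_q)}{2}$.
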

\begin{proof}
	We prove the claim by induction on $r$. For the case $r=1$, recall that for any $i\in B^G_j$ one has 
	\begin{align*}
	\shift(i) & =  \lceil \bottomc_{B}(g_j) \rceil_{\height(g_j)/2}+\sum_{k\in B^G_j,k\leq i}\lceil \height(k) \rceil_{\height(g_j)/(2n)}.
	\end{align*}
	By Lemma \ref{lem:propertiesG}, $\bottomc_B(g_j)=\sum_{k\in B,k<g_j}\height(k)\leq (n-1)\cdot \height(g_j)$. By the same lemma, $\sum_{k\in B^G_j,k\leq i} \height(k)\leq (n-1)\cdot \height(g_j)$. It follows that 
	$$
	\shift(i)\leq 2(n-1)\cdot \height(g_j)+\frac{\height(g_j)}{2}+(n-1)\cdot \frac{\height(g_j)}{2n}\leq 4n^2\cdot \frac{\height(g_j)}{2n}.
	$$
	Hence $\shift(i)=a\cdot \frac{\height(g_j)}{2n}$ for some integer $1\leq a\leq 4n^2$, and $\shift(i)\in \cT^1$ for $j=g_j$ and for a proper choice of $a$.
	
	Assume next that the claim is true up to $r-1\geq 1$, and consider the case $r$. Consider any $i\in B^G_j$, and assume $0<\base_q = \lceil \bottomc_B(d_q)\rceil_{\height(d_q)/2}$. One has:
	\begin{align*}
	\shift(i) & = \lceil \bottomc_B(d_q)\rceil_{\height(d_q)/2} + \sum_{\substack{g_k\in G,\\d_q< g_k\leq g_j}}\height(g_k)+\sum_{\substack{g_k\in G,\\d_q\leq g_k< g_j}}\shift^{max}_k+\shift_j(i) .
	\end{align*}
	By Lemma~\ref{lem:propertiesG}, $\bottomc_B(d_q)\leq (n-1)\height(d_q)$, therefore $\lceil \bottomc_B(d_q)\rceil_{\height(d_q)/2}=a\cdot \frac{\height(d_q)}{2}$ for some integer $1\leq a\leq 2(n-1)+1$. By Proposition~\ref{pro:deleted}, $|\{g_k\in G,d_q< g_k\leq g_j\}|\leq 1/\eps-1$. Hence 
	$\sum_{g_k\in G,d_q< g_k\leq g_j}\height(g_k)$ is a value contained in the set of sums of $b\leq 1/\eps-1$ item heights.  By inductive hypothesis $\shift^{max}_k,\shift_j(i)\in \cT^{r-1}$. Hence by a similar argument the value of $\sum_{g_k\in G,d_q\leq g_k< g_j}\shift^{max}_k+\shift_j(i)$ is contained in the set of sums of $c\leq 1/\eps-1+1$ values taken from $\cT^{r-1}$. 
	Altogether, $\shift(i)\in \cT^r$. A similar argument, without the term $\shift_j(i)$,  shows that $\shift(g_j)\in \cT^r$ for any $g_j\in G\setminus D'$. The proof works similarly in the case $\base_q=0$ by setting $a=0$. The claim follows.
\end{proof}

\begin{proof}[Proof of Lemma \ref{lem:Lpacking:structural}]
	We apply the procedure {\tt deleted\&shift} to $OPT_{hor}$ as described before, and a symmetric procedure to $OPT_{ver}$. In particular the latter procedure computes a set $D_{ver}\subseteq OPT_{ver}$ of deleted items, and the remaining items are shifted to the left so that their right coordinate belongs to a set $\cR:=\cR^{r_{ver}}$, defined analogously to the case of $\cT:=\cT^{r_{hor}}$, for some integer $r_{ver}\in \{1,\ldots,1/\eps\}$ (possibly different from $r_{hor}$, though by averaging this is not critical).
	
	It is easy to see that the profit of non-deleted items satisfies the claim by Lemma \ref{lem:Lpacking:costDeleted} and its symmetric version. Similarly, the sets 
	$\cT$ and $\cR$ satisfy the claim by Lemmas~\ref{lem:sizeTr} and~\ref{lem:Lpacking:possibleHeights}, and their symmetric versions. Finally, with respect to the original packing non-deleted items in $OPT_{hor}$ and $OPT_{ver}$ can be only shifted to the bottom and to the left, resp. by Lemma \ref{lem:Lpacking:shiftDown} and its symmetric version. This implies that the overall packing is feasible.
\end{proof}

\section{A Simple Improved Approximation for Cardinality \tdk}
\label{sec:tdk_car:simple}

In this section we present a simple improved approximation
for the cardinality case of \tdk. We can assume that the optimal
solution $OPT\subseteq I$ satisfies that $|OPT|\geq1/\eps^{3}$
since otherwise we can solve the problem optimally by brute force
in time $n^{O(1/\eps^{3})}$. Therefore, we can discard from
the input all \emph{large} items with both sides larger than $\eps\cdot N$:
any feasible solution can contain at most $1/\eps^{2}$ such items,
and discarding them decreases the cardinality of $OPT$ at most by
a factor $1+\eps$. Let $OPT$ denote this slightly sub-optimal
solution obtained by removing large items.

We will need the following technical lemma, which also works unchanged for the
weighted case. See also Figure~\ref{fig:packing+ring}.(b)--(d).

\begin{lemma}\label{lem:LoftheRing} Let $H$ and $V$ be given subsets
	of items from some feasible solution with width and height strictly
	larger than $N/2$, resp. Let $\height_{H}$ and $\width_{V}$
	be the total height and width of items of $H$ and $V$, resp.
	Then there exists an $\fontL$-packing of a set $APX\subseteq H\cup V$
	with $\profit(APX)\geq\frac{3}{4}(\profit(H)+\profit(V))$ into the
	area $\fontL=([0,N]\times[0,\height_{H}])\cup([0,\width_{V}]\times[0,N])$.
\end{lemma}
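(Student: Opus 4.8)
The plan is to prove Lemma~\ref{lem:LoftheRing} by a shifting argument that converts an arbitrary feasible placement of $H\cup V$ into a ``ring'' arrangement and then folds the ring into an $\fontL$. First I would observe the key geometric fact: since every item in $H$ has width $>N/2$ and every item in $V$ has height $>N/2$, two horizontal items of $H$ can never be placed side by side in the same horizontal slab (their widths would sum to more than $N$), and symmetrically two vertical items of $V$ can never be stacked on top of each other in the same vertical slab. More importantly, a horizontal item cannot have a vertical item both to its left and to its right (the vertical item has width at most... wait, it can be thin), so I need to be a little careful; the right statement is: in the original packing, consider any horizontal item $h\in H$; the set of vertical items of $V$ intersecting the horizontal strip spanned by $h$ lies entirely on one side (left or right) of $h$, because $h$ has width $>N/2$ so it touches the vertical center line, and no $V$-item can cross $h$. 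This is the standard observation used to form the four stacks in Figure~\ref{fig:packing+ring}.(b)--(c).

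Next I would carry out the shifting: push every horizontal item of $H$ as far down as possible and every vertical item of $V$ as far left as possible (these moves preserve feasibility and do not introduce overlaps, since a downward move of an $H$-item can only be blocked by another $H$-item or a $V$-item below it, etc.). After this compaction, using the ``no $V$-item crosses an $H$-item'' fact, one shows the items organize into (at most) four stacks hugging the four sides of the knapsack: a bottom stack and a top stack of horizontal items, and a left stack and a right stack of vertical items, occupying a ring-shaped region at the boundary of $[0,N]^2$ — this is exactly Figure~\ref{fig:packing+ring}.(c). The total height of the two horizontal stacks is $\height_H$ and the total width of the two vertical stacks is $\width_V$, because every $H$-item appears in exactly one horizontal stack and every $V$-item in exactly one vertical stack (no two $H$-items share a horizontal line after compaction, by the width argument, and symmetrically for $V$).

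Then comes the pigeonhole step: among the four stacks (top-horizontal, bottom-horizontal, left-vertical, right-vertical), delete the one of smallest profit; since the four profits sum to $\profit(H)+\profit(V)$, the deleted stack has profit at most $\tfrac14(\profit(H)+\profit(V))$, so the remaining set $APX\subseteq H\cup V$ has $\profit(APX)\ge \tfrac34(\profit(H)+\profit(V))$. Finally I would argue the surviving three stacks can be repacked into the $\fontL$-region $\fontL=([0,N]\times[0,\height_H])\cup([0,\width_V]\times[0,N])$, as in Figure~\ref{fig:packing+ring}.(d): by symmetry assume the deleted stack is the top-horizontal one. Place the bottom-horizontal stack in the horizontal box $[0,N]\times[0,\height_H]$ (it fits, its total height is at most $\height_H$), and place the left-vertical and right-vertical stacks inside the vertical box $[0,\width_V]\times[0,N]$ by stacking them one beside the other horizontally — their combined width is at most $\width_V$ — and pushing each item up so it does not collide with the horizontal box; this is feasible because each vertical item has height $>N/2 \ge N-\height_H$ only if $\height_H<N/2$, so I should instead note that the vertical box has full height $N$ and the vertical stacks together fit in width $\width_V$ and each individually in height $N$, so they pack trivially side by side, and the horizontal items sit in the disjoint part of the $\fontL$ below height $\height_H$ and right of width $\width_V$. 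The one subtlety to handle carefully is the overlap region $[0,\width_V]\times[0,\height_H]$ of the two boxes: I must make sure that after deleting one horizontal (or one vertical) stack, the remaining horizontal items can be shifted to the right of $\width_V$ or the remaining vertical items below... — actually the clean way is: if we delete a horizontal stack, keep the other horizontal stack pushed to the bottom and the two vertical stacks pushed to the top, so horizontal items occupy $[0,N]\times[0,\height_H]$ and vertical items occupy $[0,\width_V]\times[\,\cdot\,,N]$ with their bottoms above $\height_H$; this requires each vertical item's height plus $\height_H$ to be at most $N$, which need not hold, so instead one keeps vertical items spanning all of $[0,N]$ and relies on the fact that within the overlap square only the single bottom-horizontal stack competes with the vertical stacks, and since a vertical stack has width at most $\width_V$ and... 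The cleanest resolution, which I expect to be the main obstacle, is to note that in the ring packing the bottom-horizontal stack and the two side-vertical stacks are already pairwise non-overlapping (the ring structure guarantees this), and the region they jointly occupy is contained in $([0,N]\times[0,\height_H])\cup([0,\width_V]\times[0,N])=\fontL$ once we drop the top stack — so no repacking is even needed beyond the deletion, because the ring minus its top edge is exactly an $\fontL$. I would make this containment precise and that completes the proof; handling the case where the deleted stack is a vertical one is symmetric (it gives the transposed $\fontL$, which is the same region up to relabeling since the statement is symmetric in the two boxes). The main work is thus the careful verification that the compacted ring, minus any one of its four sides, fits inside $\fontL$.
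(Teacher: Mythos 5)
Your setup matches the paper's: shift horizontal items up/down and vertical items left/right (possible because every $V$-item has height $>N/2$, so no $H$-item can have $V$-items both above and below it, and symmetrically), obtaining four stacks hugging the sides of the knapsack, then delete the cheapest stack to keep profit at least $\tfrac34(\profit(H)+\profit(V))$. Up to there you are fine.

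The gap is in the final step, which you yourself flagged as ``the main obstacle'' and then dismissed with a false claim: the ring minus its top side is \emph{not} contained in $\fontL=([0,N]\times[0,\height_H])\cup([0,\width_V]\times[0,N])$. After deleting the top stack you are left with a U-shape, not an L: the right vertical stack occupies $[N-\width_r,N]\times[\,\cdot\,]$ at the \emph{right} edge of the knapsack, while the L only offers a vertical box of width $\width_V$ at the \emph{left} edge and a horizontal box of height $\height_H$ at the bottom; since the right-stack items have height $>N/2$ and sit wherever their original $y$-intervals were, they are in general neither inside $[0,\width_V]\times[0,N]$ nor below $\height_H$. Nor can you simply translate the right stack next to the left one: the bottom-stack horizontal items have width $>N/2$, so they cross the middle of the knapsack and a naive horizontal translation of the right stack can collide with them. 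This merging of the right stack into the left (and, when a vertical stack is deleted, of the top horizontal stack into the bottom) is exactly the nontrivial content of the lemma. The paper handles it by induction on the packing: there is always a guillotine cut isolating a single ``lonely'' item among the leftmost, rightmost and bottommost items; if it is the bottommost or leftmost one, recurse on the rest; if it is the rightmost (vertical) item $j$, temporarily remove $j$, shift all remaining items right by $\width(j)$, and reinsert $j$ flush at the left, which decreases the right-stack width by $\width(j)$ while increasing the left-stack width by the same amount, and recurse. Without this (or an equivalent) repacking argument, your proof does not establish that the surviving three stacks fit in the claimed L-region, so the proposal as written is incomplete at its central step.
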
 

\begin{proof} Let us consider the packing of $H\cup V$. Consider
	each $i\in H$ that has no $j\in V$ to its top (resp. to its bottom)
	and shift it up (resp. down) until it hits another $i'\in H$ or the
	top (resp, bottom) side of the knapsack. Note that, since $\height(j)>N/2$
	for any $j\in V$, one of the two cases above always applies. We iterate
	this process as long as it is possible to move any such $i$. We perform
	a symmetric process on $V$. At the end of the process all items in
	$H\cup V$ are stacked on the $4$ sides of the knapsack\footnote{It is possible to permute items in the left stack so that items appear from left to right in non-increasing order of height, and symmetrically for the other stacks. This is not crucial for this proof, but we implemented this permutation in Figure~\ref{fig:packing+ring}.(c).}.
	
	Next we remove the least profitable of the $4$ stacks: by a simple permutation
	argument we can guarantee that this is the top or right stack. We
	next discuss the case that it is the top one, the other case being
	symmetric. We show how to repack the remaining items in a boundary
	$\fontL$ of the desired size by permuting items in a proper order.
	In more detail, suppose that the items packed on the left (resp.
	right and bottom) have a total width of $\width_{l}$ (resp. total
	width of $\width_{r}$ and total height of $\height_{b}$). We next
	show that there exists a packing into $\fontL'=([0,N]\times[0,\height_{b}])\cup([0,\width_{l}+\width_{r}]\times[0,N])$.
	We prove the claim by induction. Suppose that we have proved it for
	all packings into left, right and bottom stacks with parameters $\width'_{l}$,
	$\width'_{r}$, and $\height'$ such that $\height'<\height_{b}$
	or $\width'_{l}+\width'_{r}<\width_{l}+\width_{r}$ or $\width'_{l}+\width'_{r}=\width_{l}+\width_{r}$
	and $\width'_{r}<\width_{r}$.
	
	In the considered packing we can always find a guillotine cut $\ell$,
	such that one side of the cut contains precisely one \emph{lonely}
	item among the leftmost, rightmost and bottommost items. Let $\ell$
	be such a cut. First assume that the lonely item $j$ is the bottommost
	one. Then by induction the claim is true for the part above $\ell$
	since the part of the packing above $\ell$ has parameters $\width_{l},\width_{r}$,
	and $\height-\height(j)$. Thus, it is also true for the entire packing.
	A similar argument applies if the lonely item $j$ is the leftmost
	one.
	
	It remains to consider the case that the lonely item $j$ is the rightmost
	one. We remove $j$ temporarily and move \emph{all} other items by
	$\width(j)$ to the right. Then we insert $j$ at the left (in the
	space freed by the previous shifting). By induction, the claim is
	true for the resulting packing since it has parameters $\width_{l}+\width(j)$,
	$\width_{r}-\width(j)$, and $\height$, resp. \end{proof}

For our algorithm, we consider the following three packings.
The first uses an $L$ that occupies the full knapsack, i.e., $\width_{\fontL}=\height_{\fontL}=N$.
Let $OPT_{long}\subseteq OPT$ be the items in $OPT$ with height
or width strictly larger than $N/2$ and define $OPT_{short}=OPT\setminus OPT_{long}$.
We apply Lemma~\ref{lem:LoftheRing} to $OPT_{long}$ and hence obtain
a packing for this $L$ with a profit of at least $\frac{3}{4}w(OPT_{long})$.
We run our PTAS for \fontL-packings from Theorem \ref{thm:main:Lpacking}
on this \fontL, the input consisting of all items in $I$ having
one side longer than $N/2$. Hence we obtain a solution with profit
at least \mbox{$(\frac{3}{4}-O(\eps))w(OPT_{long})$}.

For the other two packings we employ the one-sided resource
augmentation PTAS given from Lemma~\ref{lem:structural_lemma_augm} and Theorem~\ref{thm:container_packing_ptas}. We apply this
algorithm to the slightly reduced knapsacks $[0,N]\times[0,N/(1+\eps)]$
and $[0,N/(1+\eps)]\times[0,N]$ such that in both cases it outputs
a solution that fits in the full knapsack $[0,N]\times[0,N]$ and
whose profit is by at most a factor $1+O(\eps)$ worse than the
optimal solution for the respective reduced knapsacks. Will prove
in Theorem~\ref{thm:16/9-apx} that one of these solutions yields
a profit of at least $(\frac{1}{2}-O(\eps))\profit(OPT)+(\frac{1}{4}-O(\eps))\profit(OPT_{short})$
and hence one of our packings yields a $(\frac{16}{9}+\eps)$-approximation.

\begin{theorem}\label{thm:16/9-apx} There is a $\left(\frac{16}{9}+\eps\right)$-approximation
	for the cardinality case of \tdk. \end{theorem}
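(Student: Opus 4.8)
The plan is to compare three candidate packings and show that the best of them achieves the claimed ratio. We run three algorithms: (i) the PTAS for $\fontL$-packings from Theorem~\ref{thm:main:Lpacking} on the full-knapsack $\fontL$ (with $\width_\fontL = \height_\fontL = N$), applied to all items having one side longer than $N/2$; and (ii)--(iii) the one-sided resource-augmentation PTAS obtained by combining Lemma~\ref{lem:structural_lemma_augm} with Theorem~\ref{thm:container_packing_ptas}, applied respectively to the reduced knapsacks $[0,N]\times[0,N/(1+\eps)]$ and $[0,N/(1+\eps)]\times[0,N]$, each of which outputs a solution fitting in the true knapsack. As preprocessing, I would first dispose of the case $|OPT|<1/\eps^3$ by brute force in time $n^{O(1/\eps^3)}$, and otherwise discard all large items (both sides $>\eps N$), losing only a $(1+\eps)$ factor; write $OPT$ for the resulting solution, and split it as $OPT = OPT_{long}\cup OPT_{short}$ where $OPT_{long}$ collects the items with one side strictly larger than $N/2$.

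First I would analyze the $\fontL$-packing branch: applying Lemma~\ref{lem:LoftheRing} to the sets $H$ and $V$ of horizontal/vertical items of $OPT_{long}$ shows that there is an $\fontL$-packing of a subset of $OPT_{long}$ of profit at least $\tfrac34 p(OPT_{long})$ inside an $\fontL$ of the required dimensions (which fits in the full knapsack since $\height_H,\width_V\le N$). Since Theorem~\ref{thm:main:Lpacking} gives a PTAS for $\fontL$-packing, branch (i) returns profit at least $(\tfrac34-O(\eps))p(OPT_{long})$. Second, I would analyze branches (ii)--(iii) via a random-strip argument: delete the items of $OPT$ intersecting a vertical (resp.\ horizontal) strip of width $\eps N$ placed uniformly at random in the knapsack. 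Since every non-large item has width or height at most $\eps N$, a vertical strip hits a given vertical item with probability $O(\eps)$ and a given horizontal item with probability $\approx (\width(i)+\eps N)/N$; symmetrically for a horizontal strip. A \emph{long} item has one side $>N/2$, so it survives a random strip perpendicular to its long side with probability at least $\tfrac12-O(\eps)$ and a random strip parallel to it with probability $1-O(\eps)$; a \emph{short} item survives each of the two strip types with probability $\ge\tfrac34-O(\eps)$. After deletion, the surviving items fit in the reduced knapsack, so by Lemma~\ref{lem:structural_lemma_augm} and Theorem~\ref{thm:container_packing_ptas} they can be packed (up to $1+O(\eps)$) in a container packing that fits back in the full knapsack. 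Averaging over the two strip directions shows that one of branches (ii), (iii) yields profit at least $(\tfrac12-O(\eps))p(OPT) + (\tfrac14-O(\eps))p(OPT_{short})$; this is the content of Theorem~\ref{thm:16/9-apx}'s supporting claim and I would prove it by writing $p(OPT) = p(OPT_{long})+p(OPT_{short})$ and taking the expectation bound coordinate-by-coordinate.

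Finally I would combine the bounds. Branch (i) gives $\ge(\tfrac34-O(\eps))p(OPT_{long})$; the better of (ii),(iii) gives $\ge(\tfrac12-O(\eps))p(OPT_{long})+(\tfrac34-O(\eps))p(OPT_{short})$. Writing $p(OPT_{long}) = \lambda\cdot p(OPT)$ and $p(OPT_{short}) = (1-\lambda)\cdot p(OPT)$, the two guarantees become $(\tfrac34-O(\eps))\lambda$ and $(\tfrac12-O(\eps))\lambda + (\tfrac34-O(\eps))(1-\lambda) = (\tfrac34 - \tfrac14\lambda - O(\eps))$ times $p(OPT)$. The maximum of $\min\{\tfrac34\lambda,\ \tfrac34-\tfrac14\lambda\}$ over $\lambda\in[0,1]$ is attained when $\tfrac34\lambda = \tfrac34-\tfrac14\lambda$, i.e.\ $\lambda = \tfrac34$, giving value $\tfrac{9}{16}$. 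Hence the best of the three packings has profit at least $(\tfrac{9}{16}-O(\eps))p(OPT)$, and accounting for the initial $(1+\eps)$ loss from discarding large items yields a $(\tfrac{16}{9}+\eps)$-approximation after rescaling $\eps$. The main obstacle I anticipate is making the random-strip survival probabilities rigorous — in particular handling the boundary effects of the strip (an item near the edge of the knapsack, and the additive $\eps N$ in the hitting probability of horizontal items) and verifying that the post-deletion instance genuinely fits in the $(1+\eps)$-reduced knapsack so that the resource-augmentation packing lands inside the true knapsack; the rest is the elementary $\min$-maximization above.
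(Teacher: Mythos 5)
Your proposal is essentially the paper's own proof: the same three candidate packings (full-knapsack $\fontL$ via Lemma~\ref{lem:LoftheRing} plus the $\fontL$-packing PTAS, and the two one-sided resource-augmentation container packings analyzed with a random $\eps N$-strip deletion), the same combined bound $(\tfrac12-O(\eps))p(OPT)+(\tfrac14-O(\eps))p(OPT_{short})$, and the same balancing at $p(OPT_{long})=3\,p(OPT_{short})$ giving $\tfrac{9}{16}-O(\eps)$. One small misstatement: a long item does \emph{not} survive the strip perpendicular to its long side with probability at least $\tfrac12-O(\eps)$ (an item of width close to $N$ is hit by a random vertical strip with probability close to $1$); however, your averaged guarantee only credits long items for the parallel-strip branch (survival $1-O(\eps)$, hence average $\ge \tfrac12-O(\eps)$), so the bound you actually use, and the final $\left(\tfrac{16}{9}+\eps\right)$ ratio, are correct and match the paper.
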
 \begin{proof}
	Let $OPT$ be the considered optimal solution with $opt=\profit(OPT)$.
	Recall that there are no large items. Let also $OPT_{vert}\subseteq OPT$
	be the (\emph{vertical}) items with height more than $\eps\cdot N$
	(hence with width at most $\eps\cdot N$), and $OPT_{hor}=OPT\setminus OPT_{ver}$
	(\emph{horizontal} items). Note that with this definition both sides
	of a horizontal item might have a length of at most $\eps\cdot N$.
	We let $opt_{long}=\profit(OPT_{long})$ and $opt_{short}=\profit(OPT_{short})$.
	
	As mentioned above, our $\fontL$-packing PTAS achieves a profit
	of at least $(\frac{3}{4}-O(\eps))opt_{long}$ which can be seen by
	applying Lemma \ref{lem:LoftheRing} with $H=OPT_{long}\cap OPT_{hor}$
	and $V=OPT_{long}\cap OPT_{ver}$. In order to show that the other
	two packings yield a good profit, consider a \emph{random horizontal
		strip} $S=[0,N]\times[a,a+\eps\cdot N]$ (fully contained in the knapsack)
	where $a\in[0,(1-\eps)N)$ is chosen uniformly at random. We remove
	all items of $OPT$ intersecting $S$. Each item in $OPT_{hor}$ and
	$OPT_{short}\cap OPT_{ver}$ is deleted with probability at most $3\eps$
	and $\frac{1}{2}+2\eps$, resp. Therefore the total profit of the
	remaining items is in expectation at least $(1-3\eps)\profit(OPT_{hor})+(\frac{1}{2}-2\eps)\profit(OPT_{short}\cap OPT_{vert})$.
	Observe that the resulting solution can be packed into a restricted
	knapsack of size $[0,N]\times[0,N/(1+\eps)]$ by shifting down the
	items above the horizontal strip. Therefore, when we apply the resource
	augmentation algorithm in~\cite{js07} to the knapsack $[0,N]\times[0,N/(1+\eps)]$,
	up to a factor $1-\eps$, we will find a solution of (deterministically!)
	at least the same profit. In other terms, this profit is at least
	$(1-4\eps)\profit(OPT_{hor})+(\frac{1}{2}-\frac{5}{2}\eps)\profit(OPT_{short}\cap OPT_{vert})$.
	
	By a symmetric argument, we obtain a solution of profit at
	least $(1-4\eps)\profit(OPT_{ver})+(\frac{1}{2}-\frac{5}{2}\eps)\profit(OPT_{short}\cap OPT_{hor})$
	when we apply the algorithm in~\cite{js07} to the knapsack
	$[0,N/(1+\eps)]\times[0,N]$. Thus the best of the latter two solutions
	has profit at least $(\frac{1}{2}-2\eps)opt_{long}+(\frac{3}{4}-\frac{13}{4}\eps)opt_{short}=(\frac{1}{2}-2\eps)opt+(\frac{1}{4}-\frac{5}{4}\eps)opt_{short}$.
	The best of our three solutions has therefore value at least $(\frac{9}{16}-O(\eps))opt$
	where the worst case is achieved for roughly $opt_{long}=3\cdot opt_{short}$.
\end{proof}

\section{Weighted Case Without Rotations\label{sec:weighted}}

In this section we show how to extend the reasoning of the unweighted
case to the weighted case. This requires much more complicated technical
machinery than the algorithm presented in Section~\ref{sec:tdk_car:simple}.

Our strategy is to start with a partition of the knapsack into thin
corridors as defined in~\cite{aw15}. Then, we partition
these corridors into a set of rectangular boxes and an L-packing.
We first present a simplified version of our argumentation in which
we assume that we are allowed to drop $O_{\eps}(1)$ many items
at no cost, i.e., we pretend that we have the right to remove $O_{\eps}(1)$
items from $\opt$ and compare the profit of our computed solution
with the remaining set. Building on this, we give an argumentation
for the general case which will involve some additional shifting arguments.

\subsection{Item classification}

We start with a classification of the input items according to their
heights and widths. For two given constants $1\geq\epsl>\epss>0$,
we classify an item $i$ as:

\itemsep0pt 
\begin{itemize}
	\item[$\bullet$] \emph{small} if $h_{i},w_{i}\leq\epss N$; 
	\item[$\bullet$] \emph{large} if $h_{i},w_{i}>\epsl N$; 
	\item[$\bullet$] \emph{horizontal} if $w_{i}>\epsl N$ and $h_{i}\leq\epss N$; 
	\item[$\bullet$] \emph{vertical} if $h_{i}>\epsl N$ and $w_{i}\leq\epss N$; 
	\item[$\bullet$] \emph{intermediate} otherwise (i.e., at least one side has length
	in $(\epss N,\epsl N]$). 
\end{itemize}
We also call \emph{skewed} items that are horizontal or vertical.
We let $\Rsm$, $\Rla$, $\Rho$, $\Rve$, $\Rsk$, and $\Rin$ be
the items which are small, large, horizontal, vertical, skewed, and
intermediate, respectively. The corresponding intersection with $\opt$
defines the sets $\optsm$, $\optla$, $\optho$, $\optve$, $\optsk$,
$\optin$, respectively. 

Observe that $|\optla|=O(1/\epsl^{2})$ and since we are allowed to
drop $O_{\eps}(1)$ items from now on we ignore $\optla$. The
next lemma shows that we can neglect also $\optin$. 
\begin{lemma}
	\label{lem:item-classification}For any constant $\eps>0$ and positive
	increasing function $f(\cdot)$, $f(x)>x$, there exist constant
	values $\epsl,\epss$, with $\eps\geq\epsl\geq f(\epss)\ge\Omega_{\eps}(1)$
	and $\epss\in\Omega_{\eps}(1)$ such that the total profit of intermediate
	rectangles is bounded by $\eps p(OPT)$. The pair $(\epsl,\epss)$
		is one pair from a set of $O_\eps(1)$ pairs and this set can be computed in polynomial time.
\end{lemma}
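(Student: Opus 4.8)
The plan is to prove Lemma~\ref{lem:item-classification} by the same averaging / pigeonhole argument used already in Lemma~\ref{lem:mediumrectanglesarea} and Lemma~\ref{lem:ra_intermediate}, namely: build a chain of nested scales $\epss \ll \epsl$ via repeated application of $f$, observe that the ``intermediate'' items associated to different scales in the chain are disjoint, and conclude that some scale catches at most an $\eps$ fraction of the total profit. First I would fix the number of scales to be $k := \lceil 2/\eps \rceil$ (or any constant $\Theta(1/\eps)$), and define a decreasing sequence of constants $\eps = \lambda_0 > \lambda_1 > \dots > \lambda_k > 0$ by $\lambda_0 = \eps$ and $\lambda_{j} = f^{-1}$-style recursion — more precisely, since $f$ is increasing with $f(x) > x$, I set $\lambda_{j+1}$ to be any value with $f(\lambda_{j+1}) \le \lambda_j$, e.g.\ $\lambda_{j+1} = f^{-1}(\lambda_j)$ if $f$ is invertible, or simply iterate an auxiliary shrinking function $g$ with $g(x) \le f^{-1}(x)$ as in Lemma~\ref{lem:ra_intermediate}; all $\lambda_j \in \Omega_\eps(1)$ since $k = O_\eps(1)$ and each step shrinks by a constant (depending only on $f$ and $\eps$). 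For each candidate pair $(\epsl, \epss) = (\lambda_j, \lambda_{j+1})$, $j = 0, \dots, k-1$, note $\epsl \ge f(\epss)$ holds by construction and $\epsl \le \eps$ since $\lambda_j \le \lambda_0 = \eps$.

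Next I would set up the disjointness. For index $j$, call an item $i$ \emph{$j$-intermediate} if $h_i \in (\lambda_{j+1} N, \lambda_j N]$ or $w_i \in (\lambda_{j+1} N, \lambda_j N]$. An item belongs to the intermediate class for the pair $(\lambda_j,\lambda_{j+1})$ exactly when it is $j$-intermediate in this sense. The key observation is that for two distinct indices $j' \ne j''$, the height intervals $(\lambda_{j'+1} N, \lambda_{j'} N]$ and $(\lambda_{j''+1} N, \lambda_{j''} N]$ are disjoint (the $\lambda_j$ form a strictly decreasing chain), and likewise for widths; hence a fixed item $i$ can be $j$-intermediate for at most two values of $j$ — at most one via its height and at most one via its width. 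Therefore $\sum_{j=0}^{k-1} p(\opt \cap \{j\text{-intermediate items}\}) \le 2\, p(\opt)$. By averaging over the $k = \lceil 2/\eps\rceil$ choices of $j$, there is some index $\bar{\jmath}$ with $p(\opt \cap \{\bar{\jmath}\text{-intermediate}\}) \le \frac{2}{k} p(\opt) \le \eps\, p(\opt)$. Setting $\epsl = \lambda_{\bar{\jmath}}$, $\epss = \lambda_{\bar{\jmath}+1}$ gives a pair with all the required properties, and the whole family of $k = O_\eps(1)$ candidate pairs is computed in polynomial (indeed constant) time since the $\lambda_j$ depend only on $\eps$ and $f$.

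I would then close by stating explicitly that $\optin$ for the chosen pair has $p(\optin) \le \eps\, p(\opt)$ and that, combined with $|\optla| = O(1/\epsl^2) = O_\eps(1)$, we may henceforth ignore both $\optla$ (by the $O_\eps(1)$-dropping convention of this section) and $\optin$ (by the profit bound). The only mild subtlety — and the place I would be most careful — is the precise definition of the shrinking recursion so that each $\lambda_j$ remains $\Omega_\eps(1)$: this is exactly the issue handled in Lemma~\ref{lem:ra_intermediate} and Lemma~\ref{lem:mediumrectanglesarea}, where one chooses $f$ (here it is given, with only $f(x)>x$ assumed) and a fixed iteration count, so that after $O_\eps(1)$ iterations the smallest scale is still a constant; I would simply invoke the same device, defining $\lambda_{j+1}$ via an auxiliary increasing polynomial-time-computable function bounded by $f^{-1}$. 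There is no real obstacle beyond this bookkeeping — the argument is a direct two-dimensional analogue of the one-dimensional averaging already carried out twice in the preliminaries.
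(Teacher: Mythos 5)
Your proposal is correct and follows essentially the same argument as the paper: a decreasing chain of $O(1/\eps)$ scales obtained by iterating (the inverse of) $f$ starting from $\eps$, the observation that each item is intermediate for at most two consecutive pairs in the chain (once via its width, once via its height), and an averaging step selecting a pair whose intermediate items carry at most $\eps\, p(OPT)$ profit. Your extra care about defining $\lambda_{j+1}$ with $f(\lambda_{j+1})\le\lambda_j$ when $f$ is not invertible is a harmless refinement of the paper's construction, which simply sets $\epsl=f(\epss)$ along the chain.
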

\begin{proof}
	Define $k+1=2/\eps+1$ constants $\eps_{1},\ldots,\eps_{k+1}$, such that
	$\eps=f(\eps_{1})$ and $\eps_{i}=f(\eps_{i+1})$ for
		each~$i$. Consider the $k$ ranges of widths and heights of type
	$(\eps_{i+1}N,\eps_{i}N]$. By an averaging argument there exists
	one index $j$ such that the total profit of items in $\opt$ with
	at least one side length in the range $(\eps_{j+1}N,\eps_{j}N]$ is
	at most $2\frac{\eps}{2}p(\opt)$. It is then sufficient to set $\epsl=\eps_{j}$
	and $\epss=\eps_{j+1}$. 
\end{proof}
We transform now the packing of the optimal solution $\opt$. To this
end, we temporarily remove the small items $\optsm$. We will add
them back later. Thus, the reader may now assume that we need to pack
only the skewed items from $\optsk$.

\subsection{Corridors, Spirals and Rings}

We build on a partition of the knapsack into corridors as used in
\cite{aw15}. We define an \emph{open corridor} to
be a face on the 2D-plane bounded by a simple rectilinear polygon
with $2k$ edges $e_{0},\ldots,e_{2k-1}$ for some integer $k\geq2$,
such that for each pair of horizontal (resp. vertical) edges $e_{i},e_{2k-i}$,
$i\in\{1,...,k-1\}$ there exists a vertical (resp. horizontal) line
segment $\ell_{i}$ such that both $e_{i}$ and $e_{2k-i}$ intersect
$\ell_{i}$ and $\ell_{i}$ does not intersect any other edge. Note
that $e_{0}$ and $e_{k}$ are not required to satisfy this property:
we call them the \emph{boundary edges} of the corridor. Similarly
a \emph{closed corridor} (or \emph{cycle}) is a face on the 2D-plane
bounded by two simple rectilinear polygons defined by edges $e_{0},\ldots,e_{k-1}$
and $e'_{0},\ldots,e'_{k-1}$ such that the second polygon is contained
inside the first one, and for each pair of horizontal (resp. vertical)
edges $e_{i},e'_{i}$, $i\in\{0,...,k-1\}$, there exists a vertical
(resp. horizontal) line segment $\ell_{i}$ such that both $e_{i}$
and $e'_{i}$ intersect $\ell_{i}$ and $\ell_{i}$ does not intersect
any other edge. See Figures \ref{fig:packing} and \ref{fig:corridors}
for examples. Let us focus on minimum length such $\ell_{i}$'s:
then the \emph{width} $\alpha$ of the corridor is the maximum
length of any such $\ell_{i}$. We say that an open (resp. closed)
corridor of the above kind has $k-2$ (resp. $k$) \emph{bends}.
A corridor partition is a partition of the knapsack into corridors.

\begin{figure}
	\centering
	\includegraphics[width=.25\textwidth]{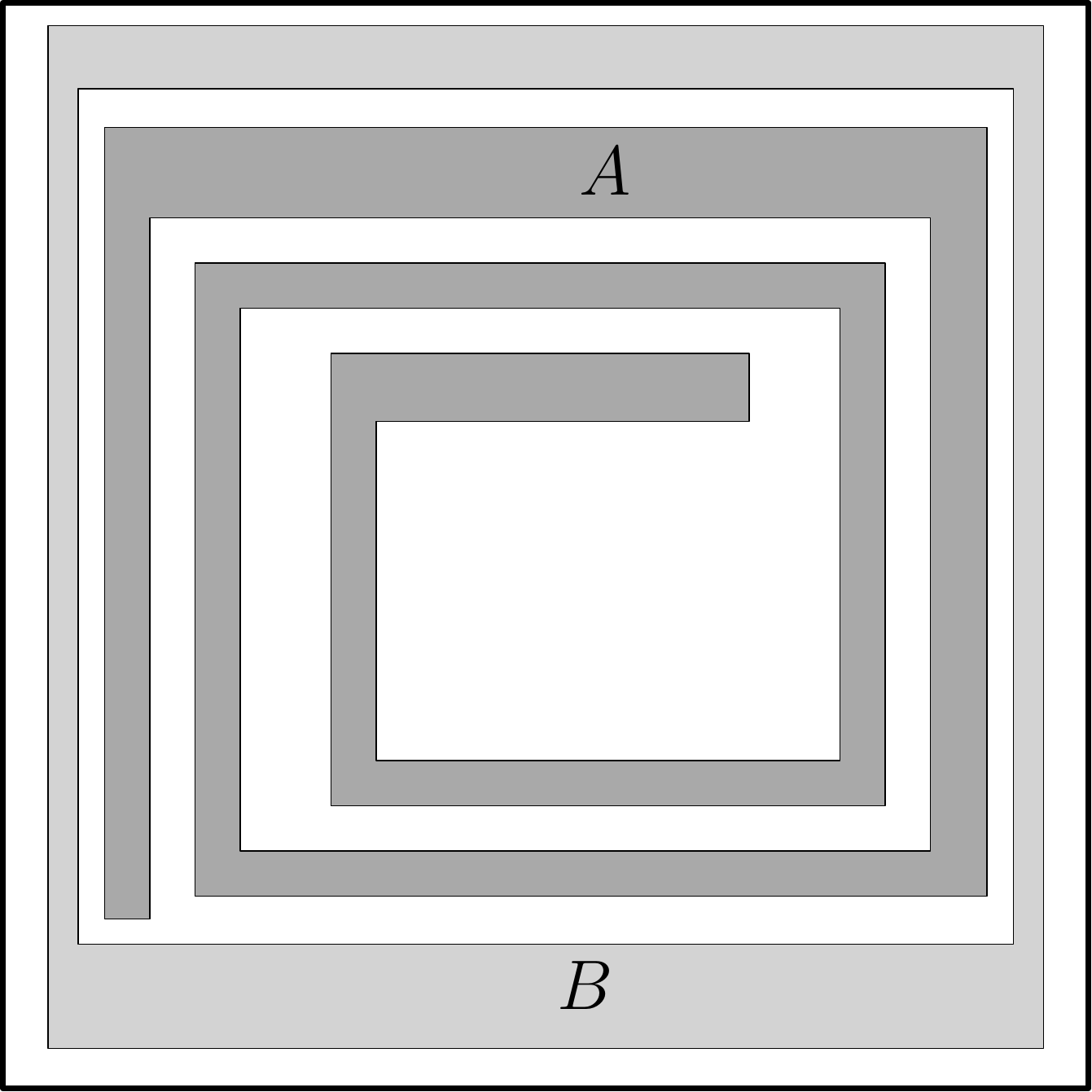}
	\caption{Illustration of two specific types of corridors: spirals (A) and rings (B).\label{fig:packing}.}
\end{figure}

\begin{lemma}[Corridor Packing Lemma, \cite{aw15}]
	\label{lem:corridorPack-weighted}There exists a corridor partition
	and a set $\optco\subseteq\optsk$ such that: 
\end{lemma}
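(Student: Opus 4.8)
The plan is to reconstruct the corridor partition directly from an optimal packing of the skewed items, following Adamaszek and Wiese. First I would invoke Lemma~\ref{lem:item-classification} to fix the thresholds $\epsl,\epss$ so that the intermediate items carry at most $\eps\,p(\opt)$ profit; together with the $O(1/\epsl^{2})$ large items (which we are allowed to discard) this lets us restrict attention to the packing of $\optsk$ alone, the small items having been temporarily removed and set aside for later re-insertion. From now on the ambient picture is the optimal placement of horizontal and vertical items, each of which is \emph{thin}: a horizontal item has height at most $\epss N$ and a vertical item has width at most $\epss N$. This thinness is the structural fact that makes everything work, since it forces neighbouring skewed items to line up along their long sides.

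Next I would build the corridors by a recursive decomposition of the knapsack interleaved with a chain-following construction. At the top level, draw a maximal family of horizontal and vertical guillotine-type cuts through the current region that avoid the interiors of all but $O_\eps(1)$ of the skewed items, deleting those unavoidably hit and charging their profit against a fixed budget. Inside a resulting sub-region where, say, horizontal items dominate the boundary, follow a \emph{chain}: start from a skewed item touching the boundary and repeatedly pass to the next skewed item stacked directly above, below, or beside the current one. Because a horizontal item is wide and short while a vertical one is narrow and tall, such a chain sweeps out precisely a rectilinear region of the corridor type defined just above the lemma, a bend occurring exactly when the chain switches orientation; the constituent straight pieces are the $O_\eps(1)$ subcorridors. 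The items not absorbed into any chain, together with the guillotine-deleted items, are removed to form the complement of $\optco$. Bounding simultaneously the number of corridors, the number of bends per corridor, and the discarded profit is achieved by a potential/averaging argument over $O(1/\eps)$ recursion levels, in the same spirit as Lemma~\ref{lem:ra_intermediate} and Lemma~\ref{lem:item-classification}: whenever a chain is about to reach its $1/\eps$-th bend, or the recursion is about to exceed depth $O(1/\eps)$, the profit of the items involved has dropped by a constant factor, so one of the $O(1/\eps)$ levels can be sacrificed at a cost of only $O(\eps)\,p(\optsk)$.

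I then need to check that the output is genuinely a \emph{corridor partition} of $[0,N]^2$: that the pieces tile the knapsack, that for each open corridor the opposite edges $e_i,e_{2k-i}$ are joined by a transversal segment $\ell_i$ meeting no other edge, and that the closed corridors are exactly the chains that loop back on themselves (the rings and spirals of Figure~\ref{fig:packing}). This is largely bookkeeping: complete the arrangement of chains to a partition by declaring each remaining empty face a (possibly degenerate) corridor, and verify the transversal property one subcorridor at a time, using that two consecutive items in a chain overlap in their short coordinate, which is what bounds the corridor width by the intended quantity ($\epsl N$). Re-inserting the previously removed small items afterwards into the free space of the corridors via NFDH (Lemma~\ref{lem:nfdhPack}, as in Chapter~\ref{chap:preliminaries}) costs only another $O(\eps)$ fraction of profit and does not affect the partition.

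The main obstacle, exactly as in the original argument, is this coupled control of three quantities through a single bounded-depth recursion. Naive chain-following can create unboundedly many bends or unboundedly many chains, and a naive guillotine recursion need not terminate after $O_\eps(1)$ levels; the resolution is to interleave the two and to prove, by averaging over the recursion levels, that one level is always cheap to throw away. The thinness estimates, the transversal-segment verifications, and the small-item repacking are routine once this counting scheme is in place.
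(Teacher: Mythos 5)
The paper does not actually prove this lemma: it is imported verbatim as a black box from \cite{aw15}, so there is no in-paper argument to compare yours against. Judged on its own terms, your sketch names the right ingredients (item classification, a recursive rectilinear decomposition, an averaging/shifting budget of $O(1/\eps)$ levels), but it has a genuine gap precisely where the lemma's content lies. The conclusion that one can simultaneously obtain $O_{\eps,\epsl}(1)$ corridors, each of width at most $\epsl N$ and with at most $1/\eps$ bends, while losing only an $O(\eps)$ fraction of $p(\optsk)$ and leaving only $O_{\eps}(1)$ crossing items, is exactly what your sentence ``is achieved by a potential/averaging argument over $O(1/\eps)$ recursion levels'' asserts without an argument; that counting scheme is the hard core of the Adamaszek--Wiese proof (a several-page recursive ``cheap cut'' construction inherited from their MWISR work), and none of the averaging lemmas you point to (Lemma~\ref{lem:ra_intermediate}, Lemma~\ref{lem:item-classification}) delivers it. Two specific steps would fail as stated. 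First, the chain-following construction does not bound corridor width: a pile of horizontal items stacked one on top of the other, each of height at most $\epss N$, is a valid ``chain'' in your sense but sweeps out a block whose thickness can be far larger than $\epsl N$, so ``two consecutive items overlap in their short coordinate'' does not give the width bound; in \cite{aw15} the width bound comes from the cut structure, not from adjacency of items. Second, the claim that one can draw guillotine-type cuts ``that avoid the interiors of all but $O_\eps(1)$ of the skewed items, deleting those unavoidably hit and charging their profit against a fixed budget'' is itself a nontrivial cheap-cut statement needing proof, and in the weighted setting one cannot delete the hit items and charge their profit at all -- this is exactly why the lemma retains them as the set $\optco^{cross}$ of $O_\eps(1)$ crossing items rather than discarding them (and why the paper later needs the shifting argumentation of Section~\ref{sec:weighted:shifting:grid} to avoid dropping them).

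Two smaller points: large items are simply not in $\optsk$, so no license to ``discard'' them is needed for this lemma, and the re-insertion of small items via NFDH is not part of this statement (the paper handles it separately, e.g.\ in Lemma~\ref{lem:smallPack}); including it here neither helps nor harms, but it should not be mistaken for progress on the corridor partition itself.
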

\begin{enumerate}
	\item there is a subset $\optco^{cross}\subseteq\optco$ with$|\optco^{cross}|\le O_{\eps}(1)$
	such that each item $i\in\optco\setminus\optco^{cross}$ is fully contained
	in some corridor, 
	\item $p(\optco)\geq(1-O(\eps))p(\optsk)$,
	\item the number of corridors is $O_{\eps,\epsl}(1)$ and each corridor
	has width at most $\epsl N$ and has at most $1/\eps$ bends. 
\end{enumerate}
Since we are allowed to drop $O_{\eps}(1)$ items from now on
we ignore $\optco^{cross}$. We next identify some structural properties
of the corridors that are later exploited in our analysis. 
Observe that an open (resp. closed) corridor of the above type is
the union of $k-1$ (resp. $k$) boxes, that we next call \emph{subcorridors}
(see also Figure \ref{fig:corridors}). Each such box is a maximally
large rectangle that is contained in the corridor. The subcorridor
$S_{i}$ of an open (resp. closed) corridor of the above kind is
the one containing edges $e_{i},e_{2k-i}$ (resp. $e_{i},e_{i'}$)
on its boundary. The length of $S_{i}$ is the \emph{length} of the
shortest such edge. We say that a subcorridor is \emph{long} if its
length is more than $N/2$, and \emph{short} otherwise. The partition
of subcorridors into short and long will be crucial in our analysis.

We call a subcorridor
\emph{horizontal} (resp. \emph{vertical}) if the corresponding edges
are so. Note that each rectangle in $\optco$ is univocally
	associated with the only subcorridor that fully contains it: indeed,
the longer side of a skewed rectangle is longer than the width of
any corridor. Consider the sequence of consecutive subcorridors $S_{1},\ldots,S_{k'}$
of an open or closed corridor. Consider two consecutive corridors
$S_{i}$ and $S_{i'}$, with $i'=i+1$ in the case of an open corridor
and $i'=(i+1)\pmod{k'}$ otherwise. First assume that $S_{i'}$ is horizontal. We say that $S_{i'}$ is to the
right (resp. left) of $S_{i}$ if the right-most (left-most) boundary of $S_{i'}$ is to the right (left) of the right-most (left-most) boundary of $S_i$. If instead $S_{i'}$ is vertical, then $S_i$ must be horizontal and we say that $S_{i'}$ is to the right (left) of $S_i$ if $S_i$ is to the left (right) of $S_{i'}$. Similarly, if $S_{i'}$ is vertical, we say that $S_{i'}$ is above (below) $S_i$ if the top (bottom) boundary of $S_{i'}$ is above (below) the top (bottom) boundary of $S_i$. If $S_{i'}$ is horizontal, we say that it is above (below) $S_i$ if $S_i$ (which is vertical) is below (above) $S_{i'}$.
We say that the pair $(S_{i},S_{i'})$ forms a clockwise
bend if $S_{i}$ is horizontal and $S_{i'}$ is to its bottom-right
	or top-left, and the complementary cases if $S_{i}$ is vertical.
In all the other cases the pairs form a counter-clockwise bend. Consider
a triple $(S_{i},S_{i'},S_{i''})$ of consecutive subcorridors in
the above sense. It forms a $U$-bend if $(S_{i},S_{i'})$ and $(S_{i'},S_{i''})$
are both clockwise or counterclockwise bends. Otherwise it forms a $Z$-bend. In both cases $S_{i'}$ is the \emph{center} of
the bend, and $S_{i},S_{i''}$ its \emph{sides}. An open corridor
whose bends are all clockwise (resp. counter-clockwise) is a \emph{spiral}.
A closed corridor with $k=4$ is a \emph{ring}. Note that in a ring
all bends are clockwise or counter-clockwise, hence in some sense
it is the closed analogue of a spiral. We remark that a corridor
	whose subcorridors are all long is a spiral or a ring\footnote{We omit the proof, since we	do not explicitly need this claim.}. As we will see, spirals and rings play a crucial role in our analysis.
In particular, we will exploit the following simple fact.

\begin{lemma}\label{lem:spiral} The following properties hold: 
	\itemsep0pt 
	\begin{enumerate}
		\item \label{claim:Zbend} The two sides of a $Z$-bend cannot be long.
		In particular, an open corridor whose subcorridors are all long is
		a spiral. 
		\item \label{claim:Ubends} A closed corridor contains at least $4$ distinct
		(possibly overlapping) $U$-bends.
	\end{enumerate}
\end{lemma}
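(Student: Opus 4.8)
The plan is to unwind the definitions of clockwise/counter-clockwise bends, $U$-bends, $Z$-bends, and the ``long'' property, and then reason purely combinatorially about how consecutive subcorridors can be arranged. Throughout, recall that each subcorridor is a box of width at most $\epsl N < N/2$, that a subcorridor is \emph{long} iff its length exceeds $N/2$, and that a long horizontal subcorridor must intersect the vertical center line $x=N/2$ of the knapsack (similarly a long vertical subcorridor intersects the horizontal center line $y=N/2$).

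For part (\ref{claim:Zbend}), first I would consider a $Z$-bend $(S_i,S_{i'},S_{i''})$. Without loss of generality assume the center $S_{i'}$ is horizontal (the vertical case is symmetric), so $S_i$ and $S_{i''}$ are both vertical. Being a $Z$-bend means that one of the two bends $(S_i,S_{i'})$, $(S_{i'},S_{i''})$ is clockwise and the other counter-clockwise. Translating through the definition of clockwise/counter-clockwise for a horizontal--vertical pair: if $S_i$ is on one side (say to the left of $S_{i'}$ and below it) then a $Z$-bend forces $S_{i''}$ to be on the \emph{opposite} horizontal side of $S_{i'}$, i.e. to the right of $S_{i'}$. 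But $S_i$ and $S_{i''}$ are vertical subcorridors, each of width at most $\epsl N$, lying respectively on the left and right ends of the horizontal subcorridor $S_{i'}$, which itself has width (=height) at most $\epsl N$; so the horizontal spans of $S_i$ and $S_{i''}$ are disjoint and both fit inside a strip of height $\le \epsl N$, hence one of them lies entirely in $[0,N/2]$ or the other entirely in $[N/2,N]$ — more carefully, their $y$-extents are contained in that of $S_{i'}$, which has height $\le\epsl N<N/2$, so at least one of $S_i,S_{i''}$ lies entirely in the half-plane $y<N/2$ or entirely in $y>N/2$ and therefore cannot be long. I would write this as a short case check on which of the four ``bottom-left / top-left / bottom-right / top-right'' positions $S_i$ and $S_{i''}$ occupy relative to $S_{i'}$, using that a $Z$-bend forces them to opposite horizontal (or opposite vertical, in the symmetric center case) positions while a $U$-bend forces them to the same side. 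The ``in particular'' clause then follows immediately: in an open corridor all of whose subcorridors are long, every interior triple must be a $U$-bend by the contrapositive of what we just proved, i.e. every bend has the same handedness, which is exactly the definition of a spiral.

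For part (\ref{claim:Ubends}), I would use the closed (cyclic) structure. A closed corridor has a cyclic sequence of subcorridors $S_1,\dots,S_{k'}$ with $k'=k\ge 4$ (the number of bends of a closed corridor equals $k\ge\ldots$; in any case $k'\ge 4$ since a rectilinear ring-like polygon needs at least $4$ edges on each boundary). Assign to each consecutive pair a handedness in $\{+1,-1\}$ (clockwise $=+1$, counter-clockwise $=-1$). Going once around the cycle, the total ``turning'' must be $\pm 4$ (the polygon is simple and rectilinear, so the signed number of right-angle turns around a closed loop is $\pm 4$); said combinatorially, $\sum_{i} h_i = \pm 4$ where $h_i\in\{+1,-1\}$ is the handedness of the $i$-th bend, since each clockwise bend contributes a $+90^\circ$ turn and each counter-clockwise a $-90^\circ$ (or vice versa). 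A $U$-bend centered at $S_{i'}$ occurs exactly when two \emph{consecutive} handednesses agree: $h_{i-1}=h_i$. So I must show that in a cyclic $\pm1$ sequence of length $k'\ge 4$ with sum $\pm 4$ there are at least $4$ indices $i$ with $h_{i-1}=h_i$. If $p$ is the number of $+1$'s and $m$ the number of $-1$'s, then $|p-m|\ge 4$; the number of ``sign changes'' around the cycle is even, say $2c$, and the number of equal-consecutive-pairs is $k'-2c$. A standard inequality: if there are $2c$ blocks of equal signs around the cycle, then $c$ blocks are positive and $c$ negative, and $|p-m|\le (\text{number of positive blocks})\cdot(\text{max block size}) - \ldots$ — rather than fiddle with this, the cleaner route is: each maximal run of length $\ell$ contributes $\ell-1$ equal-consecutive pairs, so the total number of $U$-bends is $\sum_{\text{runs}}(\ell_r - 1) = k' - (\#\text{runs})$; and since the runs alternate in sign there are equally many positive and negative runs, say $c$ each, so $|p-m| \le |p - m|$ where $p = \sum_{\text{pos runs}}\ell_r$, $m=\sum_{\text{neg runs}}\ell_r$; to have $|p-m|\ge4$ with only $c$ runs of each sign we need\ldots the point is just that $\#\text{runs} = 2c \le k' - 4$ is forced, giving $U$-bends $\ge 4$. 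Concretely: if $2c$ is the number of runs, then $m\ge c$ and $p\ge c$, so $p+m\ge 2c$, i.e. $k'\ge 2c$; and $|p-m|\ge 4$; writing $p = c + a$, $m = c + b$ with $a,b\ge 0$ and $a+b = k'-2c$, $|a-b|\ge 4$ forces $k'-2c\ge 4$, hence $\#\text{U-bends} = k' - 2c \ge 4$.

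The main obstacle I anticipate is part (\ref{claim:Zbend}): getting the geometry of ``clockwise vs counter-clockwise bend'' exactly right from the definitions in the excerpt (the definitions of ``to the right/left of'' and ``above/below'' for consecutive subcorridors, and then which position combinations count as clockwise), and then checking that a $Z$-bend really does force the two sides to land on opposite sides of a strip of width $<N/2$ so that at least one has its entire extent in a half of the knapsack. This is a finite but slightly fiddly case analysis (center horizontal vs. vertical, and four relative positions), and care is needed because the paper's sign conventions for clockwise/counter-clockwise are asymmetric between the horizontal-center and vertical-center cases. Part (\ref{claim:Ubends}) is mostly a clean counting argument once the turning-number fact ($\sum h_i = \pm 4$ around a simple rectilinear closed curve) is invoked, which I would state as an elementary topological fact about rectilinear polygons.
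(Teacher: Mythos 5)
Your part (\ref{claim:Zbend}) has a genuine error, both in the characterization of a $Z$-bend and in the containment claim on which the contradiction rests. With a horizontal center $S_{i'}$, the two vertical sides are attached at the two opposite ends of $S_{i'}$ in a $U$-bend as well as in a $Z$-bend; what distinguishes a $Z$-bend is that the sides protrude from the center in \emph{opposite} directions perpendicular to it (one above $S_{i'}$, the other below), whereas in a $U$-bend they protrude to the same side. So ``$Z$ forces opposite horizontal ends, $U$ forces the same end'' is not the right dichotomy. More seriously, the assertion that ``the $y$-extents of $S_i$ and $S_{i''}$ are contained in that of $S_{i'}$'' is false: a vertical subcorridor extends far beyond the thin horizontal center (its height can be close to $N$, while $S_{i'}$ has height at most $\epsl N$); were your claim true, no vertical side of any bend could ever be long and the statement would be vacuous. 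Hence the conclusion that one of $S_i,S_{i''}$ lies entirely in a half-plane $y<N/2$ or $y>N/2$ does not follow, and the argument collapses. The correct picture (and the paper's one-line proof) is the opposite of the one you drew: in a $Z$-bend the two sides protrude beyond the thin center in opposite directions, and the length of a side (its shorter corridor edge) lies on its own side of the center; so if both sides were long, the three subcorridors together would span a height (resp.\ width) larger than $N/2+N/2=N$, impossible inside the $N\times N$ knapsack. Your ``in particular'' deduction (no $Z$-bend $\Rightarrow$ all bends have the same handedness $\Rightarrow$ spiral) is fine once the main claim is repaired.

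Part (\ref{claim:Ubends}) of your proposal is essentially sound and takes a different route from the paper: the paper simply exhibits four $U$-bend centers (the left-most and right-most vertical and the top-most and bottom-most horizontal subcorridors), whereas you count agreements of consecutive bend handednesses via the turning number, getting $\#\{U\text{-bends}\}=k'-2c\ge |p-m|=4$. This works provided you note that consecutive subcorridors alternate orientation (so every bend is a quarter turn and the signed turning around the closed corridor is $\pm 4$), that the paper's positional definition of clockwise is consistent with a fixed traversal of the cyclic sequence, and that you treat the degenerate case in which all bends share the same handedness (then $k'=4$ and every triple is a $U$-bend), where your ``$c$ positive and $c$ negative runs'' bookkeeping does not literally apply. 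Since part (\ref{claim:Zbend}) is the substantive half, though, the proposal as a whole is not a correct proof.
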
 \begin{proof} \eqref{claim:Zbend} By definition of long
subcorridors and $Z$-bend, the $3$ subcorridors of the $Z$-bend
would otherwise have total width or height larger than $N$. \eqref{claim:Ubends}
Consider the left-most and right-most vertical subcorridors, and
the top-most and bottom-most horizontal subcorridors. These $4$ subcorridors
exist, are distinct, and are centers of a $U$-bend. \end{proof}

\subsection{Partitioning Corridors into Rectangular Boxes\label{sec:structural:boxes}}

We next describe a routine to partition the corridors into rectangular
boxes such that each item is contained in one such box. We remark
that to achieve this partitioning we sometimes have to sacrifice a
large fraction of $\optco$, hence we do not achieve a $1+\eps$ approximation
as in \cite{aw13}. On the positive side, we generate only a constant
(rather than polylogarithmic) number of boxes. This is crucial to
obtain a polynomial time algorithm in the later steps.

Recall that each $i\in\optco$ is univocally associated with the only
subcorridor that fully contains it. 
We will say that we \emph{delete} a sub-corridor, when we delete all rectangles univocally associated with the subcorridor. Note that in deletion of a sub-corridor we do not delete rectangles that are partially contained in that subcorridor but  completely contained in a neighbor sub-corridor.
Given a corridor, we sometimes
\emph{delete} some of its subcorridors, and consider the \emph{residual}
corridors (possibly more than one) given by the union of the remaining
subcorridors. 
Note that removing any subcorridor from a closed corridor
turns it into an open corridor. We implicitly assume that items associated
with a deleted subcorridor are also removed (and consequently the
corresponding area can be used to pack other items).

\begin{figure}
	\begin{centering}
		\includegraphics[height=1.5cm]{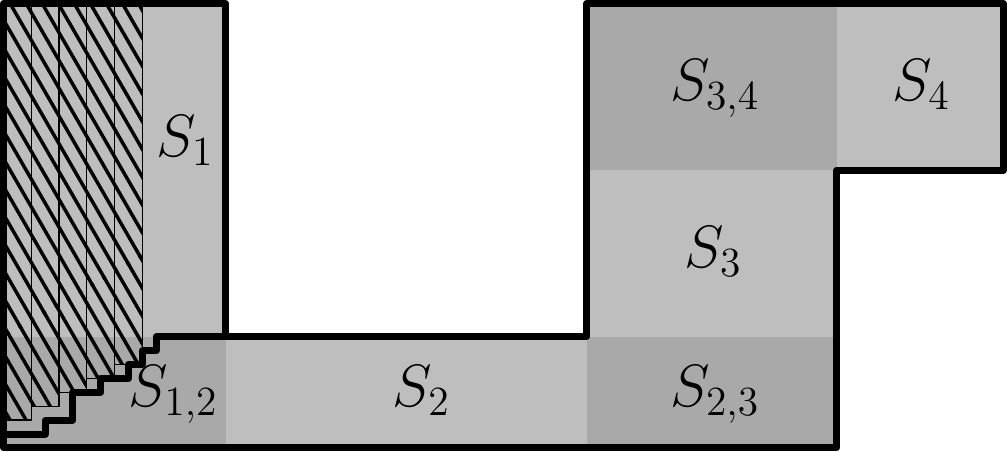}~~~~\includegraphics[height=2cm]{} 
		\par\end{centering}
	\caption{\label{fig:corridors}Left: The subcorridors $S_{1}$ and
			$S_{3}$ are vertical, $S_{2}$ and $S_{4}$ are horizontal. The
		subcorridor $S_{3}$ is on the top-right of $S_{2}$. The
			curve on the bottom left shows the boundary curve between $S_{1}$
			and $S_{2}$. The pair $(S_{3},S_{4})$ forms a clockwise bend and
			the pair $(S_{2},S_{3})$ forms a counter-clockwise bend. The triple
			$(S_{1},S_{2},S_{3})$ forms a $U$-bend and the triple $(S_{2},S_{3},S_{4})$
			forms a $Z$-bend. Right: Our operation that divides a corridor
		into $O_{\eps}(1)$ boxes and $O_{\eps}(1)$ shorter corridors. The
		dark gray items show thin items that are removed in this operation.
		The light gray items are fat items that are shifted to the
			box below their respective original box. The value $\alpha$ denotes
		the width of the depicted corridor. }
\end{figure}

Given two consecutive subcorridors $S_{i}$ and $S_{i'}$, we define
the \emph{boundary curve} among them as follows (see also Figure \ref{fig:corridors}).
Suppose that $S_{i'}$ is to the top-right of $S_{i}$, the other
cases being symmetric. Let $S_{i,i'}=S_{i}\cap S_{i'}$ be the rectangular
region shared by the two subcorridors. Then the boundary curve among
them is any simple rectilinear polygon inside $S_{i,i'}$ that decreases
monotonically from its top-left corner to its bottom-right one and
that does not cut any rectangle in these subcorridors. For a boundary
horizontal (resp. vertical) subcorridor of an open corridor (i.e.,
a subcorridor containing $e_{0}$ or $e_{2k-1}$) we define a dummy
boundary curve given by the vertical (resp. horizontal) side of the
subcorridor that coincides with a (boundary) edge of the corridor. 

\begin{remark}\label{rem:boundaryCurves} Each subcorridor has two
	boundary curves (including possibly dummy ones). Furthermore, all
	its items are fully contained in the region delimited by such curves
	plus the two edges of the corridor associated with the subcorridor
	\emph{(private region)}. \end{remark} 

Given a corridor, we partition its area into a constant number of
boxes as follows (see also Figure \ref{fig:corridors}, and \cite{aw13}
for a more detailed description of an analogous construction). Let
$S$ be one of its boundary subcorridors (if any), or the central
subcorridor of a $U$-bend. Note that one such $S$ must exist (trivially
for an open corridor, otherwise by Lemma \ref{lem:spiral}.\ref{claim:Ubends}).
In the corridor partition, there might be several subcorridors fulfilling
the latter condition. We will explain later in which order to process
the subcorridors, here we explain only how to apply our routine to
\emph{one} subcorridor, which we call \emph{processing} of subcorridor..  

Suppose that $S$ is horizontal with height $b$, with the shorter
horizontal associated edge being the top one. The other cases are
symmetric. Let $\epst>0$ be a sufficiently small constant to be defined
later. If $S$ is the only subcorridor in the considered corridor,
$S$ forms a box and all its items are marked as \emph{fat}.
Otherwise, we draw $1/\epst$ horizontal lines that partition the
private region of $S$ into subregions of height $\epst b$. We mark
as \emph{thin} the items of the bottom-most (i.e., the widest) such
subregion, and as \emph{killed} the items of the subcorridor cut by
these horizontal lines. 
All the remaining items of the subcorridor are marked as \emph{fat}.

For each such subregion, we define an associated (horizontal) box
as the largest axis-aligned box that is contained in the subregion.
Given these boxes, we partition the rest of the corridor into $1/\epst$
corridors as follows. Let $S'$ be a corridor next to $S$, say to
its top-right. Let $P$ be the set of corners of the boxes contained
in the boundary curve between $S$ and $S'$. We project $P$ vertically
on the boundary curve of $S'$ not shared with $S$, hence getting
a set $P'$ of $1/\epst$ points. We iterate the process on
the pair $(S',P')$. At the end of the process, we obtain a set of
$1/\epst$ boxes from the starting subcorridor $S$, plus a collection
of $1/\epst$ new (open) corridors each one having one less bend with
respect to the original corridor. Later, we will also apply this process on the latter
	corridors. Each newly created corridor will have one bend less than the original corridor 
	and thus this process eventually terminates.
Note that, since initially there are $O_{\eps,\epsl}(1)$ corridors
each one with $O(1/\eps)$ bends, the final number of boxes is $O_{\eps,\epsl,\epst}(1)$.
See Figure \ref{fig:corridors} for an illustration.

\begin{remark} Assume that we execute the above procedure on the
	subcorridors until there is no subcorridor left on which we can apply
	it. Then we obtain a partition of $\optco$ into disjoint sets $\optth$,
	$\optfa$, and $\optki$ of thin, fat, and killed items, respectively.
	Note that each order to process the subcorridors leads to different
	such partition. We will define this order carefully in our analysis.
\end{remark} 
\begin{remark}\label{rem:fatPack} By a simple shifting
	argument, there exists a packing of $\optfa$ into the boxes. Intuitively,
	in the above construction each subregion is fully contained
	in the box associated with the subregion immediately below (when no
	lower subregion exists, the corresponding items are thin).
\end{remark}

We will from now on assume that the shifting of items as described in Remark~\ref{rem:fatPack} has been done.

The following lemma summarizes some of the properties of the boxes
and of the associated partition of $\optco$ (independently from the
way ties are broken). Let $\Rho$ and $\Rve$ denote the set
	of horizontal and vertical input items, respectively.
\begin{lemma}\label{lem:boxProperties}
	The following properties hold: 
	
	\itemsep0pt 
	\begin{enumerate}
		\item \label{lem:boxProperties:profitKill} $|\optki|=O_{\eps,\epsl,\epst}(1)$;
		\item \label{lem:boxProperties:thin} For any given constant $\epsr>0$
		there is a sufficiently small $\epst>0$ such that
		the total height (resp. width) of items in $\optth\cap\Rho$ (resp.
		$\optth\cap\Rve$) is at most $\epsr N$. 
	\end{enumerate}
\end{lemma}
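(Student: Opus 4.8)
The plan is to prove the two parts separately; both reduce to the recursive structure of the box-partition routine together with elementary geometry of skewed items, and neither uses any property of the (later-specified) order in which subcorridors are processed, which is why the bounds hold ``independently from the way ties are broken''. For Part~\ref{lem:boxProperties:profitKill} I would first bound the number of invocations of the routine. By Lemma~\ref{lem:corridorPack-weighted} we start with $O_{\eps,\epsl}(1)$ corridors, each with at most $1/\eps$ bends; a single invocation on a subcorridor of a corridor with at least one bend produces $1/\epst$ boxes plus at most $1/\epst$ new corridors, each with strictly fewer bends (a bendless corridor is itself a single box and spawns nothing). Hence the recursion forms a tree of depth at most $1/\eps$ and out-degree at most $1/\epst$ with $O_{\eps,\epsl}(1)$ roots, so the total number of processed subcorridors, and therefore of boxes, is $O_{\eps,\epsl}(1)\cdot(1/\epst)^{1/\eps}=O_{\eps,\epsl,\epst}(1)$. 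It then suffices to see that each invocation kills $O_{\epsl,\epst}(1)$ items: when we process a horizontal subcorridor $S$ all its associated items are horizontal (a vertical item, whose long side exceeds the corridor width, cannot fit in $S$), and the killed ones are exactly those crossed by the $O(1/\epst)$ horizontal cutting lines; a single horizontal line crosses items with pairwise disjoint horizontal extents (two crossed items with overlapping extents would overlap each other), each of width $>\epsl N$, and these extents lie in a segment of length $N$, so at most $\lceil 1/\epsl\rceil$ items are crossed per line. Multiplying gives $|\optki|=O_{\eps,\epsl,\epst}(1)$.

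For Part~\ref{lem:boxProperties:thin} I would treat horizontal subcorridors only (the vertical case is symmetric, with widths in place of heights), aiming to show that the total height of thin horizontal items is $O_{\eps,\epsl}(1)\cdot\epst\cdot N$, which is at most $\epsr N$ once $\epst$ is chosen small in terms of $\epsr,\eps,\epsl$ — and crucially \emph{not} in terms of itself. The first ingredient is a per-invocation estimate: when the routine processes a horizontal subcorridor $P$ of dimensions $b_P\times\ell_P$ (so $b_P\le\epsl N$), the thin items of $P$ lie inside a horizontal slice of height $\epst b_P$ and width at most $\ell_P$, hence have total area at most $\epst b_P\ell_P$; being horizontal, each has width larger than $\epsl N$, so the thin items of $P$ have total height less than $\epst b_P\ell_P/(\epsl N)$. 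The second ingredient controls $\sum_P b_P\ell_P$ over all processed subcorridors. Every processed subcorridor $P$ is a sub-piece of a unique original subcorridor $S$; it is an axis-aligned box contained in $S$'s box; and once processed it is entirely consumed, since all its items are classified as thin, fat (then packed into the boxes via Remark~\ref{rem:fatPack}) or killed, and its region is never again used as a piece of $S$. Therefore the processed sub-pieces of a fixed $S$ are pairwise disjoint axis-aligned boxes inside the $b_S\times\ell_S$ box of $S$, so the sum of their areas is at most $b_S\ell_S$. Summing over the $O_{\eps,\epsl}(1)$ original subcorridors, each with $b_S\le\epsl N$ and $\ell_S\le N$, gives $\sum_P b_P\ell_P\le O_{\eps,\epsl}(1)\cdot\epsl N^2$, hence the total height of thin horizontal items is at most $\tfrac{\epst}{\epsl N}\cdot O_{\eps,\epsl}(1)\cdot\epsl N^2=O_{\eps,\epsl}(1)\cdot\epst N\le\epsr N$.

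The main obstacle is precisely the second ingredient of Part~\ref{lem:boxProperties:thin}: a priori the recursion multiplies the number of subcorridors by $1/\epst$ per level, so the naive union bound ``total thin height $\le\epst N\cdot(\#\text{subcorridors})$'' is self-defeating because $\#\text{subcorridors}$ itself blows up as $\epst\to0$. The resolution is that every cut made by the routine is axis-parallel and splits a subcorridor either along its length or along its width, so it never increases the total area of the resulting pieces; consequently the ``weighted area'' $\sum_P b_P\ell_P$ stays bounded by a quantity depending only on the original corridor partition, absorbing the $(1/\epst)^{1/\eps}$ explosion in the number of pieces. Making this rigorous requires some care: verifying that the processed sub-pieces of a fixed original subcorridor really are disjoint boxes for every admissible processing order, and checking that the un-boxed ``staircase'' remnants created by taking the largest axis-aligned box inside each slice do not leak already-classified items into descendant corridors. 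Once this bookkeeping is in place the two parts follow as sketched, and the independence from tie-breaking is automatic, since none of the estimates above used any feature of the order in which subcorridors are processed.
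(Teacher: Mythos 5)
Your proposal is correct and takes essentially the same approach as the paper: part (1) is exactly the paper's argument (each of the $O_{\eps,\epsl,\epst}(1)$ cutting lines can kill at most $1/\epsl$ items, since the items cut in a horizontal (resp.\ vertical) subcorridor are horizontal (resp.\ vertical)), and part (2) is the paper's one-line bound of at most $\epst N$ per original subcorridor times the $O_{\eps,\epsl}(1)$ subcorridors. Your explicit disjointness-of-processed-pieces accounting (phrased via areas and the minimum width $\epsl N$ rather than directly via cross-sectional heights) merely spells out what the paper's terse proof leaves implicit.
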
 \begin{proof} \eqref{lem:boxProperties:profitKill} Each
horizontal (resp. vertical) line in the construction can kill at
most $1/\epsl$ items, since those items must be horizontal (resp.
vertical). Hence we kill $O_{\eps,\epsl,\epst}(1)$ items in total.

\eqref{lem:boxProperties:thin} The mentioned total height/width is
at most $\epst N$ times the number of subcorridors, which is $O_{\eps,\epsl}(1)$.
The claim follows for $\epst$ small enough. \end{proof} 

\subsection{Containers\label{sec:structural:containers}}

Assume that we applied the routine described in Section~\ref{sec:structural:boxes}
above until each corridor is partitioned into boxes. We explain how
to partition each box into $O_{\eps}(1)$ subboxes, to which we
refer to as \emph{containers} in the sequel. Hence, we apply the routine
described below to each box.

Consider a box of size $a\times b$ coming from the above construction,
and on the associated set $\optbo$ of items from $\optfa$.
We will show how to pack a set $\optbo'\subseteq\optbo$
with $p(\optbo')\geq(1-\eps)p(\optbo)$ into $O_{\eps}(1)$
containers packed inside the box, such that both the containers and
the packing of $\optbo'$ inside them satisfy some extra properties
that are useful in the design of an efficient algorithm. This part
is similar in spirit to prior work, though here we present a refined
analysis that simplifies the algorithm (in particular, we can avoid
LP rounding).

A \emph{container} is a box labeled as \emph{horizontal}, \emph{vertical},
or \emph{area}. A \emph{container packing} of a set of items $I'$
into a collection of non-overlapping containers has to satisfy the
following properties: 

\itemsep0pt 
\begin{itemize}
	\item[$\bullet$] Items in a horizontal (resp. vertical) container are stacked one
	on top of the other (resp. one next to the other). 
	\item[$\bullet$] Each $i\in I'$ packed in an area container of size $a\times b$
	must have $w_{i}\leq\eps a$ and $h_{i}\leq\eps b$. 
\end{itemize}
Our main building block is the resource augmentation packing Lemma~\ref{lem:structural_lemma_augm}.
Applying it to each box yields the following
lemma.
\begin{lemma}[Container Packing Lemma]
	\label{lem:containerPack} For a given constant $\epsau>0$,
	there exists a set $\optfa^{cont}\subseteq\optfa$ such that there is a
	container packing for all apart from $O_{\eps}(1)$ items in $\optfa^{cont}$
	such that: 
\end{lemma}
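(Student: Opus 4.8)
The plan is to obtain this lemma by applying the Resource Augmentation Packing Lemma (Lemma~\ref{lem:structural_lemma_augm}) separately to each of the $O_\eps(1)$ boxes produced by the corridor-to-box construction of Section~\ref{sec:structural:boxes}, and then reassembling the resulting per-box container packings inside the knapsack. By Remark~\ref{rem:fatPack} the items of $\optfa$ can be packed into the boxes; fix a box $B$ of size $a_B\times b_B$ and let $\optbo\subseteq\optfa$ be the set of fat items assigned to it, so that $\optbo$ admits a feasible packing inside $B$. Applying Lemma~\ref{lem:structural_lemma_augm} to $\optbo$ in $B$ with parameter $\epsau$ yields a subset $\optbo'\subseteq\optbo$ with $\profit(\optbo')\ge(1-O(\epsau))\profit(\optbo)$ together with an $\epsau$-granular container packing of $\optbo'$ into $O_{\epsau}(1)$ containers, placed inside a copy of $B$ enlarged by a factor $1+\epsau$ along one \emph{chosen} side, of total area at most $a(\optbo)+\epsau a_Bb_B$ and with sizes lying in a set of cardinality $n^{O_{\epsau}(1)}$ computable from the input. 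Setting $\optfa^{cont}:=\bigcup_B\optbo'$ and summing over the $O_\eps(1)$ boxes gives at once the asserted profit bound $\profit(\optfa^{cont})\ge(1-O(\epsau))\profit(\optfa)$, the bound $O_\eps(1)$ on the total number of containers, the polynomial-size set of candidate container sizes, and the total-area bound $a(\optfa)+\epsau\sum_B a_Bb_B\le a(\optfa)+\epsau N^2$.

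The remaining point is to fit the enlarged boxes disjointly back into the knapsack, and I would do this using the free space created during the box construction. When a subcorridor $S$ (say a horizontal one of height $b$) is processed, its private region is sliced into $1/\epst$ strips of height $\epst b$, the items of one extreme strip become thin, the items cut by the slicing lines become killed, and the fat items of the remaining strips are shifted by one strip, so that one of the two extreme boxes of $S$ ends up containing no fat item; hence a box of height $\epst b$ is free inside $S$. For each box of $S$ I would always ask Lemma~\ref{lem:structural_lemma_augm} to enlarge it along the short direction of $S$, so that the cumulative enlargement of the at most $1/\epst$ boxes of $S$ is at most $\epsau b$ in that direction; since $\epsau b$ is absorbed by the vacated empty box of height $\epst b$, choosing $\epst\ge\epsau$ makes the re-stacked enlarged boxes of $S$ still fit inside the region originally spanned by the boxes of $S$. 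Carrying this out simultaneously in every subcorridor repacks all enlarged boxes, and hence all containers, inside the knapsack.

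The main obstacle is carrying out this last step \emph{globally}: at a bend two consecutive subcorridors share a rectangular region, and enlarging and re-stacking the boxes of one subcorridor may push a box across the staircase boundary curve into the part of the shared region whose cross-section has already shrunk, colliding with the enlarged boxes of the neighbouring subcorridor. I expect to resolve this exactly as the box partition itself is built — processing subcorridors in the order used in Section~\ref{sec:structural:boxes} (a boundary subcorridor or the centre of a $U$-bend first, one of which always exists by Lemma~\ref{lem:spiral}.\ref{claim:Ubends}), re-projecting the corners of each enlarged box onto the next boundary curve to define the enlarged successor corridors, and, whenever a box cannot clear the narrowed cross-section of its successor, narrowing it and discarding the constantly many items it can no longer hold. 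Since there are only $O_\eps(1)$ bends in total, this discards $O_\eps(1)$ fat items, which are precisely the items of $\optfa^{cont}$ not packed into containers; everything else is a routine summation over the $O_\eps(1)$ boxes using only the quantitative guarantees already established in Lemmas~\ref{lem:structural_lemma_augm} and~\ref{lem:boxProperties}.
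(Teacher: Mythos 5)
Your first paragraph (apply Lemma~\ref{lem:structural_lemma_augm} box by box and take the union of the resulting sets) is the right starting point, but the way you invoke it creates the problem that occupies the rest of your proposal, and that problem is not actually resolved. You let the Resource Augmentation Lemma enlarge every box and then try to re-fit the enlarged boxes globally, absorbing the extra $\epsau b$ per subcorridor into the box vacated by the shifting of Remark~\ref{rem:fatPack}. This does not go through as stated: the vacated box is the one associated with the \emph{narrowest} extreme subregion, while the boxes that get pushed toward it when you re-stack are wider, so their tops protrude across the boundary curve into the shared region of the neighbouring subcorridor -- exactly the collision you acknowledge. Your proposed repair ("re-project the corners, narrow the offending box, discard the constantly many items it can no longer hold") is where the gap lies: narrowing a horizontal box can evict \emph{all} items wider than the reduced width, which is not $O_\eps(1)$ many and, in the weighted setting in which this lemma is used, can carry an arbitrarily large fraction of the profit; nothing in "there are $O_\eps(1)$ bends" bounds either the number or the profit of the discarded items, so guarantee (1) of the lemma is not established.

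The paper avoids the global re-fitting entirely by shrinking \emph{before} augmenting, inside each box separately. For a box of size $a\times b$ (say horizontal), one removes a horizontal strip of height $3\eps b$ chosen by averaging so that the items fully contained in it have profit $O(\eps)\,p(\optbo)$; since all items are skewed, only $O(1/\epsl)$ items partially overlap the strip, and these are dropped -- summed over the $O_\eps(1)$ boxes they are precisely the $O_\eps(1)$ unpacked items allowed by the statement. The surviving items fit in a box of size $a\times(1-3\eps)b$, so Lemma~\ref{lem:structural_lemma_augm} packs a $(1-O(\eps))$-profit subset into $O_{\epsau}(1)$ containers inside a box of size $a\times(1-3\eps)(1+\epsau)b\le a\times(1-2\eps)b$, which sits inside the \emph{original} box. (Boxes with only $O_\eps(1)$ items are handled by one container per item.) No enlargement of any box and no interaction between boxes ever occurs, so the staircase/bend analysis you attempt is unnecessary; if you want to salvage your route, you would have to supply a genuine argument bounding the profit lost in your collision-resolution step, and I do not see one that is simpler than the strip-removal trick above.
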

\begin{enumerate}
	\item $p(\optfa^{cont})\geq(1-O(\eps))p(\optfa)$; 
	\item The number of containers is $O_{\eps,\epsl,\epst,\epsau}(1)$ and
	their sizes belong to a set of cardinality $n^{O_{\eps,\epsl,\epst,\epsau}(1)}$
	that can be computed in polynomial time.
\end{enumerate}
\begin{proof} Let us focus on a specific box of size $a\times b$
	from the previous construction in Section~\ref{sec:structural:boxes}, and on the items $\optbo\subseteq\optfa$
	inside it. If $|\optbo|=O_{\eps}(1)$ then we can simply
	create one container for each item and we are done. 
	Otherwise, assume without loss of generality that this box (hence its items) is horizontal. We
	obtain a set $\overline{\optbo}$ by removing from $\optbo$
	all items intersecting a proper horizontal strip of height $3\eps b$.
	Clearly these items can be repacked in a box of size $a\times(1-3\eps)b$.
	By a simple averaging argument, it is possible to choose the strip
	so that the items fully contained in it have total profit at most
	$O(\eps)p(\optbo)$. Furthermore, there can be at most $O(1/\epsl)$
	items that partially overlap with the strip (since items are skewed).
	We drop these items and do not pack them.
	
	At this point we can use the Resource Augmentation Lemma~\ref{lem:structural_lemma_augm}
	to pack a large profit subset $\optbo'\subseteq\overline{\optbo}$
	into $O_{\epsau}(1)$ containers that can be packed in a box of size
	$a\times(1-3\eps)(1+\epsau)b\leq a\times(1-2\eps)b$. 
	We perform the above operation on each box of the previous
		construction 
		and define $\optfa^{cont}$ to be the union of the respective
	sets $\optbo'$. The claim follows. \end{proof}

\subsection{A Profitable Structured Packing} \label{sec:structural:lemma}

We next prove our main structural lemma which yields that there exists
a structured packing which is partitioned into $O_{\eps}(1)$
containers and an L. We will refer to such a packing as an L\&C packing
(formally defined below). Note that in the previous section we did
not specify in which order we partition the subcorridors into boxes.
In this section, we give several such orders which will then result
in different packings. The last such packing is special since we will
modify it a bit to gain some space and then reinsert the thin items
that were removed in the process of partitioning the corridors into
containers. Afterwards, we will show that one of the resulting packings
will yield an approximation ratio of $17/9+\eps$.

A \emph{boundary ring} of width $N'$ is a ring having as external
boundary the edges of the knapsack and as internal boundary
the boundary of a square box of size $(N-N')\times(N-N')$ in the
middle of the knapsack. A \emph{boundary $L$} of width $N'$ is the
region covered by two boxes of size $N'\times N$ and $N\times N'$
that are placed on the left and bottom boundaries of the knapsack.

An \emph{L\&C} packing is defined as follows. We are given two integer
parameters $N'\in[0,N/2]$ and $\ell\in(N/2,N]$. We define
a boundary $L$ of width $N'$, and a collection of non-overlapping
containers contained in the space not occupied by the boundary
$L$. The number of containers and their sizes are as in Lemma~\ref{lem:containerPack}.
We let $\ilong\subseteq I$ be the items whose longer side has length
longer than $\ell$ (hence longer than $N/2$), and $\ishort=I\setminus \ilong$
be the remaining items. We can pack only items from $\ilong$ in the
boundary $L$, and only items from $\ishort$ in the containers (satisfying
the usual container packing constraints). See also Figure \ref{fig:packing+ring}. 
\begin{remark}\label{rem:degenerateRing}
	In the analysis sometimes we will not need the boundary $L$. This
	case is captured by setting $N'=0$ and $\ell=N$ (\emph{degenerate
		$L$} case). \end{remark}

\begin{lemma}
	\label{lem:apxNoRotation} Let $\optrc$ be the most profitable solution
	that is packed by an L\&C packing. Then $p(\optrc)\geq(\frac{9}{17}-O(\eps))p(\opt)$. 
\end{lemma}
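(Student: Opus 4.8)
The plan is to reduce the statement to the skewed items of a single corridor partition, and then to exhibit a small family of candidate L\&C packings — indexed by the order in which subcorridors are processed in Section~\ref{sec:structural:boxes} and by whether we route profit through the boundary $L$ or through the containers — such that the best of them already recovers a $\frac{9}{17}$ fraction of $p(\opt)$.

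First I would carry out the standard reductions. By Lemma~\ref{lem:item-classification} fix $\epsl,\epss$ so that $p(\optin)\le \eps\, p(\opt)$; since $|\optla|=O(1/\epsl^{2})=O_{\eps}(1)$ we are entitled to drop $\optla$; and we temporarily set aside the small items $\optsm$, to be reinserted at the very end by NFDH into the free cells of the final packing plus one extra $\eps$-granular area container, exactly as in Lemma~\ref{lem:smallpacking}, at a loss of $O(\eps)\,p(\opt)$. It therefore suffices to recover a $(\frac{9}{17}-O(\eps))$ fraction of $p(\optsk)$, and by the Corridor Packing Lemma~\ref{lem:corridorPack-weighted} (discarding the $O_{\eps}(1)$ items of $\optco^{cross}$) a $(\frac{9}{17}-O(\eps))$ fraction of $p(\optco)$, where from now on every relevant item lives in one corridor of a fixed corridor partition.

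Next I would set up the candidates. For any admissible order of processing the subcorridors, Section~\ref{sec:structural:boxes} splits $\optco$ into $\optth\cup\optfa\cup\optki$, with $|\optki|=O_{\eps}(1)$ droppable (Lemma~\ref{lem:boxProperties}.\ref{lem:boxProperties:profitKill}) and — crucially — the thin items satisfying the aggregate bound of Lemma~\ref{lem:boxProperties}.\ref{lem:boxProperties:thin}: horizontal thin items have total height at most $\epsr N$ and vertical thin items total width at most $\epsr N$. Candidate (A) is the \emph{pure container} packing: for a fixed order, decompose every box into containers via the Container Packing Lemma~\ref{lem:containerPack} and pack $\optfa$ there (degenerate $L$, $N'=0$), recovering $(1-O(\eps))\,p(\optfa)$. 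Candidate (B) is a genuine L\&C packing: choose the processing order so as to \emph{maximize} $p(\optth)$, route the thin items to a boundary ring of width $\epsr N$ — the very wide (resp.\ tall) thin items being automatically \emph{long}, so that the four-stack shifting of Lemma~\ref{lem:LoftheRing} applies, and the remaining (non-long) thin items are stacked trivially — then delete the least profitable stack and repack as a boundary $L$, recovering $\tfrac34\,p(\optth)$, while in the central $(1-\epsr)N\times(1-\epsr)N$ square (made available by a $(1+O(\epsr))$ resource-augmentation rescaling of the box/container decomposition, Lemma~\ref{lem:structural_lemma_augm}) we still pack $(1-O(\eps))\,p(\optfa)$ of the fat items. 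Candidate (C) is the random-strip packing of Theorem~\ref{thm:16/9-apx}: deleting a random horizontal or vertical strip of width $\eps N$ from $\optco$ and repacking the remainder into $O_{\eps}(1)$ containers (Lemma~\ref{lem:structural_lemma_augm}, Theorem~\ref{thm:container_packing_ptas}) keeps, in expectation, essentially all the profit of items whose long side is at most $N/2$ and roughly half of the rest.

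The hard part will be the structural case analysis that makes one of (A), (B), (C) reach $\frac{9}{17}$. Here one must: (i) prove that thin items always embed into a boundary ring, using that the subcorridors carrying \emph{long} items arrange into spirals and rings (Lemma~\ref{lem:spiral}), so a vertical long item never has horizontal long items on both sides, which is what licenses the four-stack shifting of Lemma~\ref{lem:LoftheRing}; (ii) play the order that maximizes $p(\optth)$ (used for Candidate (B)) against the order used for Candidate (A), so that whenever $p(\optth)$ is small Candidate (A) is already good and whenever $p(\optth)$ is large Candidate (B) is; and (iii) invoke Candidate (C) to cover the regime where the long part of the profit dominates but is not cheaply routed to an $L$. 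Optimizing the resulting bounds against the partition $p(\optco)=p(\optth)+p(\optfa)+p(\optki)$ and the long/short split of $p(\optco)$ — the worst case occurring at a balanced point analogous to $opt_{long}=3\,opt_{short}$ in Theorem~\ref{thm:16/9-apx} — gives the factor $\tfrac{9}{17}-O(\eps)$, and taking $\epsr,\epst$ small enough absorbs all geometric losses. The combinatorial bookkeeping of (ii)–(iii), i.e.\ choosing the processing orders corridor by corridor and verifying that the four-stack / ring-to-$L$ step never conflicts with the container region of the remaining knapsack, is the main obstacle.
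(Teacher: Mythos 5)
There is a genuine gap, in fact two. The most serious one is your Candidate (B): you claim that after reserving a boundary $L$ of width $\epsr N$ for the thin items (recovering $\tfrac34 p(\optth)$) you can \emph{still} pack $(1-O(\eps))p(\optfa)$ of all fat items in the central $(1-\epsr)N\times(1-\epsr)N$ square via a ``$(1+O(\epsr))$ resource-augmentation rescaling''. Lemma~\ref{lem:structural_lemma_augm} only lets you repack items that already fit in an $a\times b$ box into a slightly \emph{larger} box; it does not let you compress a packing of fat items that occupies essentially the whole $N\times N$ knapsack into a strictly smaller square. To clear the boundary region one must delete fat items, and this deletion costs a constant fraction: this is exactly why the paper's non-degenerate-$L$ bound (Lemma~\ref{lem:ringCase}) recovers only $\tfrac34 p(\LT)+p(\ST)+\tfrac{1-O(\eps)}{2}p(\SF)$ and \emph{nothing} of $\LF$ (short fat items survive a random $3\eps N$-strip deletion only with probability about $1/2$, long fat items essentially never). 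If your Candidate (B) were valid, the final factor would be far better than $9/17$, which is a sign the step cannot work. Consequently the quantitative heart of the proof is missing: the paper gets $9/17$ by averaging over \emph{seven} explicit processing orders of the subcorridors (delete vertical/horizontal short, even/odd short, fat-only), which define $\LF,\LT,\SF,\ST$ and yield the four inequalities of Lemmas~\ref{lem:onlyFat}, \ref{lem:noShort}, \ref{lem:evenOdd}, \ref{lem:ringCase}; the worst case $p(\LT)=p(\SF)=p(\ST)$, $p(\LF)=\tfrac54 p(\LT)$ then gives $\tfrac{9}{17}$. Your ``play the order maximizing $p(\optth)$ against Candidate (A), plus the random-strip Candidate (C)'' is precisely the case analysis you defer as the main obstacle, and with the corrected (weaker) version of Candidate (B) it is not clear it closes at $9/17$ at all.

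The second gap is the reduction at the start: in the \emph{weighted} setting you are not ``entitled to drop $\optla$'' (nor $\optki$ or $\optco^{cross}$) just because these sets have $O_{\eps}(1)$ items — a constant number of items may carry almost all of $p(\opt)$. The paper proves the lemma first under the assumption that $O_\eps(1)$ items can be dropped and then removes that assumption with the iterative shifting argumentation (the disjoint sets $K(0),K(1),\dots$, the non-uniform grid $G(t)$, and the re-derivation of the corridor/box/container decomposition around the retained heavy items, Lemmas~\ref{lem:divide-open-corridors} and~\ref{lem:divide-closed-corridors}), stopping at the first $t$ with $p(K(t))\le\eps\,p(\opt)$. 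Your proposal contains no substitute for this step, so as written it proves at best the ``can drop $O_\eps(1)$ items'' version of the statement, not Lemma~\ref{lem:apxNoRotation} itself.
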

In the remainder of this section we prove Lemma~\ref{lem:apxNoRotation},
assuming that we can drop $O_{\eps}(1)$ items at no cost. Hence,
formally we will prove that there is an L\&C packing $I'$ and a set
of $O_{\eps}(1)$ items $I_{\mathrm{drop}}$ such that $p(I')+p(I_{\mathrm{drop}})\ge(\frac{9}{17}-O(\eps))p(\opt)$.
Subsequently, we will prove Lemma~\ref{lem:apxNoRotation} in full
generality (without dropping any items).

The proof of Lemma~\ref{lem:apxNoRotation} involves some case analysis.
Recall that we classify subcorridors into short and long, and horizontal
and vertical. We further partition short subcorridors as follows:
let $S_{1},\ldots,S_{k'}$ be the subcorridors of a given corridor,
and let $S_{1}^{s},\ldots,S_{k''}^{s}$ be the subsequence of short
subcorridors (if any). Mark $S_{i}^{s}$ as \emph{even} if $i$ is
so, and \emph{odd} otherwise. Note that corridors are subdivided
into several other corridors during the box construction process (see the right side of Figure \ref{fig:corridors}), 
and these new corridors might have fewer subcorridors than the initial corridor.
However, the marking of the subcorridors (short, long, even, odd, horizontal, vertical) is inherited from the marking of the original subcorridor. 

We will describe now 7 different ways to partition the subcorridors into
boxes, for some of them we delete some of the subcorridors. Each of these different processing orders will give different sets $\optth, \optki$ and $\optfa^{cont}$, and based on these, we will partition the items into three sets. We will then prove three different lower bounds on $p(\optrc)$ with respect to the sizes of these three sets using averaging arguments about the seven cases.

\begin{figure*}[t!]
	\centering

	\includegraphics[width=\textwidth]{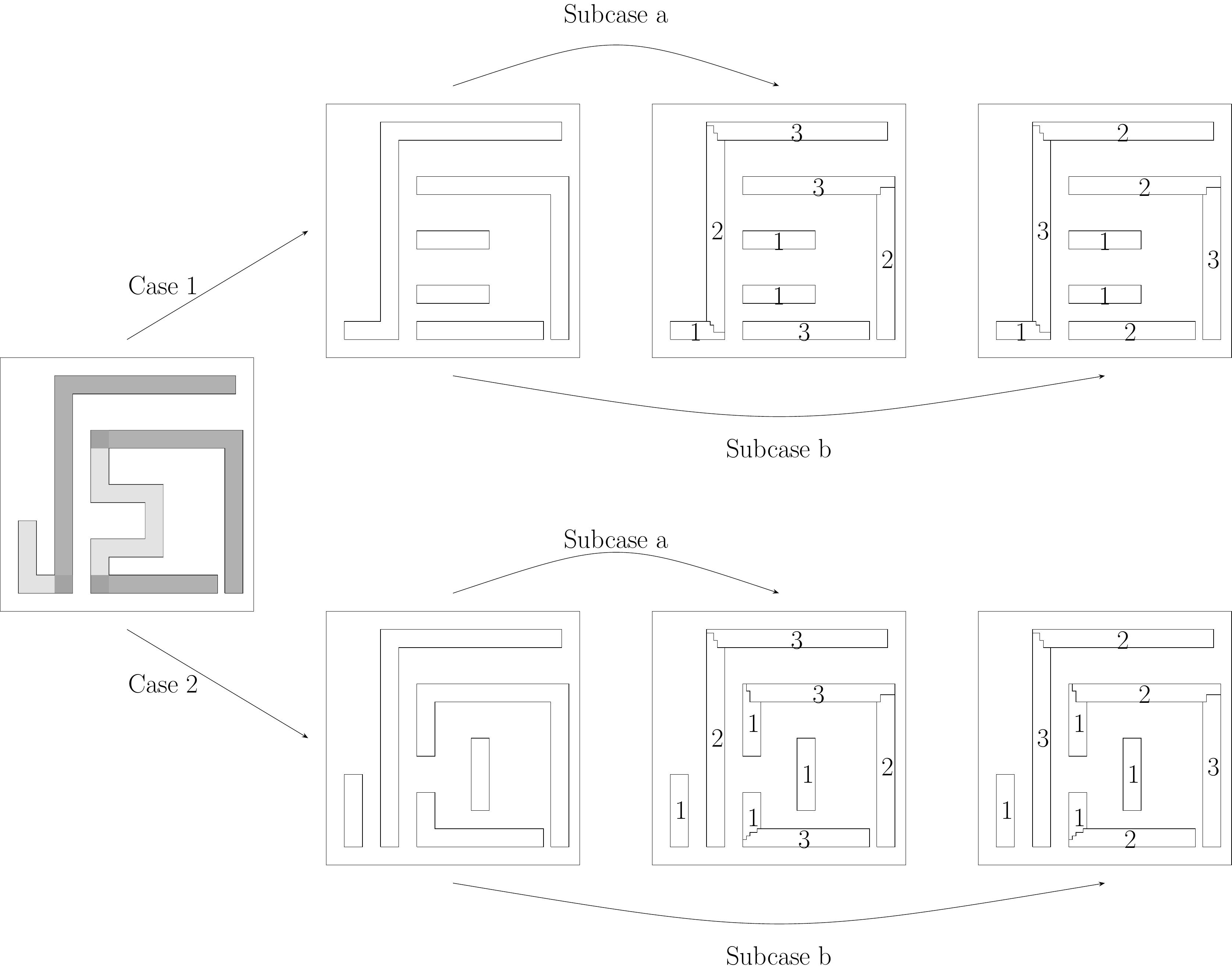}
	
	~
	\caption{Figure for Case 1 and 2. The knapsack on the left contains two corridors, where short subcorridors are marked light gray and long subcorridors are marked dark gray. In case 1, we delete vertical short subcorridors and then consider two processing orders in subcases a and b. In case 2, we delete horizontal short subcorridors and again consider two processing orders in subcases a and b.}
	\label{fig:case1a}
\end{figure*}

\paragraph{Cases 1a, 1b, 2a, 2b: Short horizontal/short vertical subcorridors.}

We delete either all vertical short (case 1) or all horizontal short subcorridors (case 2).
We first process all short subcorridors, then either all vertical (subcases a) or
	horizontal long ones (subcases b), and finally the remaining (horizontal or vertical, resp.) long
ones. We can start by processing all short corridors. Indeed, any
such corridor cannot be the center of a $Z$-bend by Lemma~\ref{lem:spiral}.\ref{claim:Zbend}
since its two sides would be long, hence it must be boundary or the
center of a $U$-bend. After processing short subcorridors, by the
same argument the residual (long) subcorridors are the boundary or the
center of a $U$-bend. So we can process the long subcorridors in
any order. This gives in total four cases.
See Figure~\ref{fig:case1a} for deletion/processing of subcorridors for these cases.

\paragraph{Cases 3a, 3b: Even/odd short subcorridors.}

We delete the odd (or even) short subcorridors and then process
even (resp. odd) short subcorridors last. We exploit the fact that
each residual corridor contains at most one short subcorridor. Then,
if there is another (long) subcorridor, there is also one which is
boundary (trivially for an open corridor) or the center of a $U$-bend
(by Lemma \ref{lem:spiral}, Property~\ref{claim:Ubends}). Hence we can always
process some long subcorridor leaving the unique short subcorridor
as last one. This gives two cases.

\paragraph{Case 4: Fat only.}

Do not delete any short subcorridor. Process subcorridors in any feasible order.

In each of the cases, we apply the procedure described in Section~\ref{sec:structural:containers}
to partition each box into $O_{\eps}(1)$ containers. We next
label items as follows. Consider the classification of items into
$\optfa^{cont}$, $\optth$, and 
$\optki$ in each one of the $7$ cases above.
Then: 

\itemsep0pt

\begin{itemize}
	\item $\T$ is the set of items which are in $\optth$ \emph{in at least
		one case}; 
	\item $\OK$ is the set of items which are in $\optki$ \emph{in
		at least one case}; 
	\item $\F$ is the set of items which are in $\optfa^{cont}$ \emph{in all the
		cases}. 
\end{itemize}

\begin{remark}\label{rem:processedLast} Consider the subcorridor
	of a given corridor that is processed last in one of the above cases. None of its items are
	assigned to $\optth$ 
	in that case
	and thus essentially all its items are packed
	in one of the constructed containers. 
	In particular, for an item in set $\T$, in some of the above cases it might be in such a subcorridor and thus marked fat and packed into a container.
\end{remark}

\begin{lemma}\label{lem:FTprofit} One has:
	\[
	  p(\F\cup \T)+p(\OK)+p(\optla)+p(\optco^{cross})\geq(1-O(\eps))p(\opt).
	\]
\end{lemma}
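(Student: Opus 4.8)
The plan is to bound the profit of the \emph{uncaptured} items — the set $B$ of items of $\opt$ lying in none of $\F$, $\T$, $\OK$, $\optla$, $\optco^{cross}$ — and to show $p(B)=O(\eps)\,p(\opt)$; the lemma then follows, since all the remaining items of $\opt$ are accounted for on the left‑hand side. Throughout, recall that the small items $\optsm$ were set aside (they are reinstated into area containers at the very end), so here I only argue about large, intermediate and skewed items. First I would peel off the cheap or already‑accounted parts: by Lemma~\ref{lem:item-classification}, $p(\optin)\le\eps\,p(\opt)$; by the second property of Lemma~\ref{lem:corridorPack-weighted}, $p(\optco)\ge(1-O(\eps))p(\optsk)$, hence $p(\optsk\setminus\optco)\le O(\eps)p(\opt)$; and $\optla$, $\optco^{cross}$ appear verbatim on the left. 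Since $\opt\setminus\optsm=\optla\cup\optin\cup\optsk$ (a disjoint union) and $\optsk=\optho\cup\optve$, it remains to prove $p\bigl(B\cap(\optco\setminus\optco^{cross})\bigr)=O(\eps)\,p(\opt)$.

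Next I would pin down \emph{why} an item $i\in\optco\setminus\optco^{cross}$ can be uncaptured. Let $S$ be the subcorridor $i$ is univocally associated with. In every one of the seven cases the items of $\optco$ surviving deletion are partitioned into $\optth$, $\optfa$, $\optki$ with $\optfa^{cont}\subseteq\optfa$; since $i\notin\T$ and $i\notin\OK$, in each case $i$ is either an item of a \emph{deleted} subcorridor or an item of $\optfa$. If $S$ is a \emph{long} subcorridor it is never deleted, so $i\in\optfa$ in every case, and the only way $i\notin\F$ is that the container step of Lemma~\ref{lem:containerPack} discards $i$ from its box $\optbo$ in some case. If $S$ is a \emph{short} subcorridor, then across the cases $S$ is deleted in some (short‑vertical in cases 1a,1b; short‑horizontal in 2a,2b; one parity in 3a,3b) and processed in the remaining ones, including case~4; in every case where $S$ is processed $i\in\optfa$, again unless the container step drops it. Thus $B\cap(\optco\setminus\optco^{cross})$ reduces to items discarded by a container step in some case — together, a priori, with items of short subcorridors that dodge both thinning and killing in \emph{every} case, which the choice of the seven processing orders is designed to render consistently fat‑and‑container‑packed in every case in which they are present, hence in $\F$.

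For the container‑step discards the bound is routine: in each of the seven cases Lemma~\ref{lem:containerPack} loses at most an $O(\eps)$‑fraction of the profit of each of the $O_\eps(1)$ boxes plus $O_\eps(1)$ items per box; summing over the boxes and the seven cases gives total profit $O(\eps)p(\opt)$ plus $O_\eps(1)$ items (which may be dropped for free in the simplified setting). Combining the estimates, $p(\F\cup\T)+p(\OK)\ge p(\optco\setminus\optco^{cross})-O(\eps)p(\opt)$, so $p(\F\cup\T)+p(\OK)+p(\optco^{cross})+p(\optla)\ge p(\optco)+p(\optla)-O(\eps)p(\opt)\ge p(\optsk)+p(\optla)-O(\eps)p(\opt)=p(\opt)-p(\optsm)-p(\optin)-O(\eps)p(\opt)$, which is $(1-O(\eps))p(\opt)$ once one notes that $\optsm$ does not belong to the left‑hand side by design and is recovered separately in the final repacking.

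The main obstacle is exactly the short‑subcorridor bookkeeping hidden in the parenthetical above: one must verify that the seven processing orders are \emph{jointly exhaustive}, i.e. that every item of a short subcorridor which is never thin and never killed throughout the seven runs is nonetheless fat‑and‑container‑packed in every run in which its subcorridor is not deleted. This hinges on the structural facts of Lemma~\ref{lem:spiral} (the two sides of a $Z$‑bend cannot both be long; a closed corridor contains four $U$‑bends), which ensure that after deleting the relevant short subcorridors there is always a boundary subcorridor or the center of a processable $U$‑bend available, so that each short subcorridor is, in one of the runs, processed last and turns into a box with all its items fat. Carrying out this case distinction cleanly — treating spirals and rings as the "closed analogue" of open corridors under deletion of a prescribed type of short subcorridor, and ensuring that the $O_\eps(1)$ bound on killed items (Lemma~\ref{lem:boxProperties}, Property~\ref{lem:boxProperties:profitKill}) keeps $\OK$ of controlled size — is where essentially all the work of the lemma lies; the reductions of the first two paragraphs and the container‑loss summation are straightforward.
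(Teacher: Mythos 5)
Your argument is correct and is essentially the paper's own proof read from the complement side: the paper simply initializes $\F=\optfa^{cont}$, $\T=\optth$, $\OK=\optki$ from one of the seven cases, where the inequality already holds by Lemma~\ref{lem:corridorPack-weighted} and Lemma~\ref{lem:containerPack} together with the item classification, and then observes that considering the remaining cases only moves items out of $\F$ into $\T$ or $\OK$ (up to the $O(\eps)$ per-case container-step losses, which you account for more explicitly by summing them over the seven cases --- a point the paper's one-line ``moving'' argument glosses over). One remark: the ``main obstacle'' you flag is not really part of this lemma, since within any single case every item of a processed subcorridor is thin, fat, or killed by construction (these three sets partition the processed items), so no item can evade all three labels; Lemma~\ref{lem:spiral} is only needed to guarantee that the seven processing orders are well defined, and that is established where the cases are described, not in the proof of Lemma~\ref{lem:FTprofit}.
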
 \begin{proof} Let us initialize $\F=\optfa^{cont}$, $\T=\optth$,
and $\OK=\optki$ by considering one of the above cases. Next
we consider the aforementioned cases, hence moving some items in $\F$ to
either $\T$ or $\OK$. Note that initially $p(\F\cup \T)+p(\optki)+p(\optla)+p(\optco^{cross})\geq(1-O(\eps))p(\opt)$
by Lemma~\ref{lem:corridorPack-weighted} and hence we keep this
property.\end{proof}

Let $I_{\mathrm{lc}}$ and $I_{\mathrm{sc}}$ denote the items in long
and short corridors, respectively. We also let $\LF=I_{\mathrm lc}\cap \F$,
and define analogously $\SF$, $\LT$, and $\LF$. The next three
lemmas provide a lower bound on the case of a degenerate L.

\begin{lemma}\label{lem:onlyFat} $p(\optrc)\geq p(\LF)+p(\SF).$ \end{lemma}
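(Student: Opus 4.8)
The plan is to exhibit one explicit L\&C packing whose profit is at least $p(\F)$, taking the boundary L to be \emph{degenerate} (that is, $N'=0$ and $\ell=N$, as in Remark~\ref{rem:degenerateRing}), so that the packing consists only of containers. I would use the ``fat only'' processing order (Case~4): there no subcorridor is deleted, every item of $\optco$ is classified into $\optfa^{cont}$, $\optth$ or $\optki$, and by the Container Packing Lemma (Lemma~\ref{lem:containerPack}) the set $\optfa^{cont}$ obtained in Case~4 admits a container packing of all but $O_\eps(1)$ of its items into $O_\eps(1)$ containers whose sizes lie in a polynomial-size family. By definition $\F$ is the set of items that lie in $\optfa^{cont}$ in \emph{every} one of the seven cases, so in particular $\F\subseteq\optfa^{cont}$ in Case~4. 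Restricting the above container packing to $\F$ (deleting the items of $\optfa^{cont}\setminus\F$ from their containers only frees space, so feasibility is preserved) yields a container packing of all but $O_\eps(1)$ items of $\F$ into those same containers.

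Next I would check that this is a legal L\&C packing and read off the profit bound. With the degenerate L we have $\ishort=I$ (no input item has longer side exceeding $\ell=N$), so every item of $\F$ is eligible to be placed in a container, and the boundary L occupies no area, so the containers — which are carved out of the boxes partitioning the corridors, hence lie inside the knapsack — violate none of the L\&C constraints; the number of containers and their admissible sizes are exactly those allowed by Lemma~\ref{lem:containerPack}. Hence the most profitable L\&C packing $\optrc$ satisfies $p(\optrc)\ge p(\F)-p(D)$, where $D$ is the set of $O_\eps(1)$ items discarded above. Since throughout this part of the argument we are permitted to drop $O_\eps(1)$ items at no cost, we obtain $p(\optrc)\ge p(\F)$.

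Finally I would decompose $\F$. Every item of $\F$ lies in $\optfa\subseteq\optco$, and since $\optco^{cross}$ has already been dropped, it is fully contained in some corridor, which is either long or short; as the long and short corridors partition all corridors, $\F$ is the disjoint union of $\LF=I_{\mathrm{lc}}\cap\F$ and $\SF=I_{\mathrm{sc}}\cap\F$, so $p(\F)=p(\LF)+p(\SF)$ and the lemma follows. The only steps requiring a little care are the (harmless) restriction of the Case~4 container packing to the subset $\F$ and the bookkeeping that the $O_\eps(1)$ unpacked items coincide with the items we are allowed to drop for free; both are immediate from Lemma~\ref{lem:containerPack} and Remark~\ref{rem:degenerateRing}.
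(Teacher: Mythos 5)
Your proof is correct and follows essentially the same route as the paper, whose one-line argument is precisely that Case~4 packs a superset of $\F$ (so, with a degenerate L and the standing convention that the $O_\eps(1)$ leftover items from Lemma~\ref{lem:containerPack} may be dropped at no cost, $p(\optrc)\ge p(\F)=p(\LF)+p(\SF)$). Your write-up merely makes the paper's implicit bookkeeping explicit, including the harmless restriction of the Case~4 container packing to $\F$ and the partition of $\F$ by long versus short subcorridors.
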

\begin{proof} Follows immediately since we pack a superset
	of $\F$ in case 4. \end{proof} 
\begin{lemma}\label{lem:noShort}
	$p(\optrc)\geq p(\LF)+p(\LT)/2+p(\SF)/2.$ \end{lemma}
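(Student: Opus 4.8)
The plan is to obtain this estimate by \emph{averaging} Lemma~\ref{lem:onlyFat} with an auxiliary bound; hence the real content is to show that
\[
  p(\optrc)\;\geq\;p(\LF)+p(\LT)-O(\eps)\,p(\opt).
\]
Granting this, averaging with Lemma~\ref{lem:onlyFat} gives $p(\optrc)\geq\tfrac12\big(p(\LF)+p(\SF)\big)+\tfrac12\big(p(\LF)+p(\LT)\big)-O(\eps)p(\opt)=p(\LF)+\tfrac12 p(\SF)+\tfrac12 p(\LT)-O(\eps)p(\opt)$, which is the claim (the $O(\eps)p(\opt)$ slack, and the freedom to drop $O_\eps(1)$ items, being exactly what we are allowed to lose in this section).

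To prove the auxiliary bound I would exhibit a degenerate-$L$ container packing capturing, up to $O(\eps)p(\opt)$, all items of $\opt$ lying in spirals and rings, i.e.\ in $I_{\mathrm{lc}}$; the items in the other corridors are left unpacked, which only forgoes profit we are not claiming. Process every spiral and ring \emph{from the outermost subcorridor inwards}: this is a feasible order, since at each step the residual corridor is open and still has a boundary subcorridor, or the centre of a $U$-bend available by Lemma~\ref{lem:spiral}. Under this order every item of $I_{\mathrm{lc}}$ is associated with a subcorridor that is eventually processed, so — apart from the $O_\eps(1)$ killed items (Lemma~\ref{lem:boxProperties}, part~\ref{lem:boxProperties:profitKill}) and the $O_\eps(1)$ items of $\optco^{cross}$, which we may drop — each such item becomes \emph{fat} or \emph{thin}. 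The fat items of $I_{\mathrm{lc}}$ go into $O_\eps(1)$ containers by Lemma~\ref{lem:containerPack}. The thin items of $I_{\mathrm{lc}}$ are skewed, hence horizontal or vertical, and by Lemma~\ref{lem:boxProperties}, part~\ref{lem:boxProperties:thin}, the thin horizontal (resp.\ vertical) ones have total height (resp.\ width) at most $\epsr N$; therefore they all fit into one horizontal container of size $N\times\epsr N$ along the bottom edge plus one vertical container of size $\epsr N\times N$ along the left edge of the knapsack. Because the outside-in order places the thin subregions at the outer boundary of each spiral/ring while the fat-item containers sit strictly inside, one can — using the resource-augmentation slack in Lemma~\ref{lem:containerPack} and the area freed by the unpacked short corridors — slide the $O_\eps(1)$ fat containers clear of these two boundary strips, obtaining a valid container packing of profit at least $p(\optfa^{cont}\cap I_{\mathrm{lc}})+p(\optth\cap I_{\mathrm{lc}})\geq p(\LF)+p(\LT)-O(\eps)p(\opt)$, the last inequality being the restriction of Lemma~\ref{lem:FTprofit} to spirals and rings.

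The main obstacle is precisely that last geometric step: proving that, \emph{simultaneously} over all spirals and rings, the constant number of containers carrying the fat items of $I_{\mathrm{lc}}$ can be made disjoint from the bottom-and-left boundary $L$ of width $\epsr N$ reserved for the thin items, at the cost of only an $O(\eps)$-fraction of profit. This forces one to track where, inside each subcorridor, the box/container construction of Sections~\ref{sec:structural:boxes}--\ref{sec:structural:containers} places its containers relative to the boundary curves of Remark~\ref{rem:boundaryCurves}, and it is exactly why the statement can afford the $O(\eps)p(\opt)$ error and the dropping of $O_\eps(1)$ items. The remaining ingredients — the averaging step, the legality of the outside-in processing order, and the total-height/width bound on thin items — are routine given the tools already developed.
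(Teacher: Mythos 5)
Your averaging step is arithmetically fine, but the auxiliary bound $p(\optrc)\geq p(\LF)+p(\LT)-O(\eps)\,p(\opt)$ that carries all the content is not established, and it is stronger than anything the paper proves anywhere. Two concrete problems. First, ``one horizontal container of size $N\times\epsr N$ along the bottom edge plus one vertical container $\epsr N\times N$ along the left edge'' is not a valid configuration: the two containers overlap in the corner square, and if you shrink the vertical one to height $(1-\epsr)N$, thin vertical items taller than $(1-\epsr)N$ no longer fit. Replacing the two strips by a boundary $L$ does not rescue the claim: inside an $L$ the horizontal and vertical items compete for the corner region, and in general one \emph{cannot} pack all of the thin items there --- this is precisely why the paper goes through the ring-to-$L$ argument of Lemma~\ref{lem:LoftheRing} and, even without trying to keep any fat items, only ever guarantees a $3/4$ fraction of $\T\cap\il$ (Lemma~\ref{lem:ringCase}); moreover the boundary $L$ of an L\&C packing only accepts items of $\il$, while thin items of long subcorridors need not have a side longer than $\ell$. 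Second, the step ``slide the $O_\eps(1)$ fat containers clear of the two boundary strips'' is exactly where the argument fails and you offer no proof: when all subcorridors are long (nested rings or a spiral filling the knapsack) there is no ``area freed by unpacked short corridors'', and an item of $\LF$ of width (resp.\ height) exceeding $(1-\epsr)N$ cannot be placed disjointly from any full-height vertical (resp.\ full-width horizontal) strip, so reserving those strips forces you to discard fat items whose profit need not be $O(\eps)p(\opt)$ in the weighted case. Hence the inequality you reduce to remains unproven, and there is good reason to doubt it is true at all.

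The paper's proof is entirely different and purely combinatorial: it sums the profits of the four candidate packings of cases 1a, 1b, 2a, 2b. Every item of $\LF$ is packed in all four cases; every item of $\SF$ is deleted in exactly one of cases 1 and 2 (according to its orientation) and is fat, hence packed, in the other two; and every item of $\LT$ is packed in at least two of the four cases because, by Remark~\ref{rem:processedLast}, its long subcorridor is processed last in at least two of them (short neighbours are deleted or processed first, and long neighbours, having the opposite orientation, are processed first in the ``a'' or ``b'' subcases respectively), and in a last-processed subcorridor no item is marked thin, so these items are packed into containers there. Dividing the sum by four gives $p(\optrc)\geq p(\LF)+p(\LT)/2+p(\SF)/2$ directly --- no boundary $L$, no repacking of thin items, and no geometric sliding of containers is needed for this lemma.
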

\begin{proof}
	Consider the sum of the profit of the packed items corresponding to the in total four subcases of cases 1 and 2. Each $i\in \LF$ appears $4$ times in the sum (as items in $\F$ are fat in all cases and all long subcorridors get processed), and
	each $i\in \LT$ at least twice by Remark~\ref{rem:processedLast}: If a long subcorridor $\mathfrak{L}$ neighbors a short subcorridor, the short subcorridor is either deleted or processed first. Further, all neighboring long subcorridors are processed first in case 1a and 2a (if $\mathfrak{L}$ is horizontal, then its neighbors are vertical) or 1b and 2b (if $\mathfrak{L}$ is vertical and its neighbors are horizontal). Thus, $\mathfrak{L}$ is the last processed subcorridor in at least two cases. Additionally, each item $i\in \SF$ also appears twice in the sum, as it gets deleted either in case 1 (if it is vertical) or in case 2 (if it is horizontal) and is fat otherwise.
	
	The claim follows by an averaging argument. \end{proof} \begin{lemma}\label{lem:evenOdd}
	$p(\optrc)\geq p(\LF)+p(\SF)/2+p(\ST)/2.$ \end{lemma}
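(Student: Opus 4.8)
The plan is to prove Lemma~\ref{lem:evenOdd} by the same averaging recipe used for Lemmas~\ref{lem:onlyFat} and~\ref{lem:noShort}, but now summing over the two \emph{even/odd} cases 3a and 3b together with the \emph{fat only} case~4. Recall that in case 3a we delete all odd short subcorridors and process the surviving even short subcorridor of each residual corridor last, and in case 3b we do the symmetric thing; in case~4 we delete nothing and all items of $\F$ are packed into containers. I would consider the weighted sum $p(\text{packed in case 3a}) + p(\text{packed in case 3b}) + 2\,p(\text{packed in case 4})$ and count, for each relevant item, how many times its profit is collected.

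First I would handle $\LF$, the long fat items. An item $i\in\LF$ lies in a long subcorridor and is fat in \emph{every} one of the seven cases by definition of $\F$; moreover in cases 3a, 3b, and~4 every long subcorridor is processed (no long subcorridor is ever deleted), so $i$ is packed into a container in all three cases. Hence $i$ contributes $1+1+2 = 4$ times to the weighted sum. Next, $\SF$, the short fat items: such an item sits in a short subcorridor, which is marked either even or odd. In case~4 it is packed (it is in $\F$). Among cases 3a and 3b, exactly one deletes its subcorridor and the other keeps and processes it last, so in that one case it is marked fat and packed; the other case deletes it. Thus $i\in\SF$ contributes $1$ (from the kept even/odd case) $+\,2$ (from case~4) $=3$ times. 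Finally, $\ST = I_{\mathrm{sc}}\cap\T$, the short thin items: by Remark~\ref{rem:processedLast}, in the even/odd case that processes $i$'s short subcorridor last, $i$ is \emph{not} thin and is packed into a container; so $i\in\ST$ is packed in at least one of cases 3a, 3b, contributing at least $1$ time to the weighted sum (it may contribute nothing from case~4, since thin items are exactly those removed in the container construction of case~4).

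Putting these counts together, the weighted sum of packed profit over the three cases (with case~4 weighted by~2) is at least $4\,p(\LF) + 3\,p(\SF) + p(\ST)$. The total weight of the three terms in the sum is $1+1+2 = 4$, and each of the three individual packed solutions is an L\&C packing (indeed a pure container packing, a degenerate L), so its profit is at most $p(\optrc)$. Therefore $4\,p(\optrc) \ge 4\,p(\LF) + 3\,p(\SF) + p(\ST) \ge 4\,p(\LF) + 2\,p(\SF) + 2\,p(\ST)$, which gives $p(\optrc) \ge p(\LF) + p(\SF)/2 + p(\ST)/2$, as claimed. (One should also note, as in the other lemmas, that the $O_\eps(1)$ items in $\optla\cup\optco^{cross}\cup\OK$ that are dropped are accounted for separately under the ``drop $O_\eps(1)$ items for free'' convention of this subsection, and that the $O(\eps)$ losses from Lemma~\ref{lem:containerPack} are absorbed into the final bound of Lemma~\ref{lem:apxNoRotation}.)

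The step I expect to be the main obstacle is the bookkeeping for $\ST$: one must be careful that ``processed last'' in Remark~\ref{rem:processedLast} really does apply to $i$'s short subcorridor in at least one of cases 3a/3b. This holds because each \emph{residual} corridor (after deleting the odd or the even short subcorridors) contains at most one short subcorridor, and whenever a residual corridor still contains a long subcorridor, Lemma~\ref{lem:spiral}.\ref{claim:Ubends} (or openness) guarantees a long subcorridor that is boundary or the center of a $U$-bend, which we may process first, leaving the unique short one for last; and if the residual corridor is a single short subcorridor it is itself processed and its items become fat. So in the case (3a or 3b) that keeps $i$'s subcorridor, $i$ is never marked thin, hence is packed — which is exactly the lower bound of $1$ we need. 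The rest is the routine averaging already spelled out above.
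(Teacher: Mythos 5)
Your per-case bookkeeping is correct (and matches the paper's reasoning): every item of $\LF$ is packed in each of cases 3a, 3b and 4; every item of $\SF$ is packed in exactly one of 3a/3b and in case 4; every item of $\ST$ is packed in at least one of 3a/3b, by Remark~\ref{rem:processedLast}. The gap is in the final aggregation. With your weights $(1,1,2)$ on (3a, 3b, case~4) you obtain $4\,p(\optrc)\ge 4\,p(\LF)+3\,p(\SF)+p(\ST)$, and the step $4\,p(\LF)+3\,p(\SF)+p(\ST)\ge 4\,p(\LF)+2\,p(\SF)+2\,p(\ST)$ is simply false in general: it is equivalent to $p(\SF)\ge p(\ST)$, which nothing guarantees (the regime where the lemma is actually needed in the case analysis is precisely one where $p(\ST)$ is comparable to $p(\SF)$). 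So as written, your weighted sum only yields $p(\optrc)\ge p(\LF)+\tfrac34 p(\SF)+\tfrac14 p(\ST)$, which neither implies nor is implied by the claimed bound.

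The fix is to drop case~4 from the sum altogether, which is exactly what the paper does: summing over cases 3a and 3b only, each $i\in\LF$ is counted twice and each $i\in\SF\cup\ST$ at least once, so $2\,p(\optrc)\ge 2\,p(\LF)+p(\SF)+p(\ST)$, i.e.\ $p(\optrc)\ge p(\LF)+p(\SF)/2+p(\ST)/2$. Your discussion of why the short subcorridor kept in case 3a (resp.\ 3b) can indeed be processed last---each residual corridor contains at most one short subcorridor, and Lemma~\ref{lem:spiral} provides a boundary or $U$-bend long subcorridor to process first---is correct and is the substantive part of the argument; only the averaging needs to be redone as above.
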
 
\begin{proof}
	Consider the sum of the number of packed items corresponding to cases 3a and 3b. Each $i\in \LF$
	appears twice in the sum as it is fat and all long subcorridors get processed. Each $i\in \SF\cup \ST$ appears at least
	once in the sum by Remark \ref{rem:processedLast}: An item $i\in \SF$ is deleted in one of the two cases (depending on whether it is in an even or odd subcorridor) and otherwise fat. An item $i \in \ST$ is also deleted in one of the two cases and otherwise its subcorridor is processed last. The claim follows
	by an averaging argument. \end{proof}

There is one last (and slightly more involved) case to be considered,
corresponding to a non-degenerate $L$. 

\begin{lemma}\label{lem:ringCase} $p(\optrc)\geq\frac{3}{4}p(\LT)+p(\ST)+\frac{1-O(\eps)}{2}p(\SF).$
\end{lemma}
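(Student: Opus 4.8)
The plan is to exhibit a non-degenerate L\&C packing in which the boundary $L$ carries the long items that originate from long subcorridors, the containers carry a large fraction of the short fat items $\SF$, and all short thin items $\ST$ are reinserted cheaply. I would combine two processing orders of the subcorridors, in the spirit of cases~3a and~3b above, with the observation that thin items are negligible in aggregate.

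First I would build the boundary $L$. By Lemma~\ref{lem:boxProperties}.\ref{lem:boxProperties:thin}, after fixing $\epst$ small enough we may assume that the total height of the horizontal items of $\LT$ plus the total width of the vertical items of $\LT$ is at most $\epsr N$; moreover every item of $\LT$ is \emph{long} (its longer side exceeds $N/2$), since it lives in a subcorridor of length more than $N/2$. Hence, applying Lemma~\ref{lem:LoftheRing} with $H=\LT\cap\Rho$ and $V=\LT\cap\Rve$, a subset of $\LT$ of profit at least $\tfrac34 p(\LT)$ can be packed into an L-shaped region of width $N':=\epsr N\le N/2$; picking $\ell$ slightly larger than $N/2$ (possible since the instance is finite) turns this into a legal boundary $L$ of an L\&C packing and frees the square $[N',N]\times[N',N]$ for the containers. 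Next I would reinsert all of $\ST$: again by Lemma~\ref{lem:boxProperties}.\ref{lem:boxProperties:thin} the total height of $\ST\cap\Rho$ and the total width of $\ST\cap\Rve$ are each at most $\epsr N$, and every such item is short; stacking $\ST\cap\Rho$ in one horizontal container of size $N\times\epsr N$ and $\ST\cap\Rve$ in one vertical container of size $\epsr N\times N$, placed along two sides of the knapsack, fits together with the boundary $L$ for $\epsr$ small enough and packs all of $\ST$, contributing $p(\ST)$.

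Finally, for $\SF$ I would run the box-partitioning procedure in the two orders coming from cases~3a and~3b, withdrawing from the resulting boxes the thin items produced by long subcorridors (sent to the boundary $L$ as above) and those produced by short subcorridors (sent to the two extra $\ST$-containers), and then apply the Container Packing Lemma~\ref{lem:containerPack} to turn the remaining boxes into $O_\eps(1)$ containers carrying a $(1-O(\eps))$-fraction of the corresponding fat items. Since the boundary $L$ and the $\ST$-containers occupy only strips of width $O(\epsr)N$ along the sides, after shrinking the knapsack by a factor $1-O(\epsr)$ via resource augmentation (Lemma~\ref{lem:structural_lemma_augm} together with Theorem~\ref{thm:container_packing_ptas}) these containers fit inside $[N',N]\times[N',N]$, and the $O_\eps(1)$ killed items are dropped at no cost. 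An $\SF$ item is never deleted and is fat in both orders; the factor $\tfrac12$ arises because, to vacate the strips occupied by the thin $L$ and the $\ST$-containers, in each of the two orders one must remove the $\SF$ items that cross a width-$O(\epsr)N$ strip along one pair of opposite sides, and a skewed item crossing such a strip is either thin in the relevant direction (removed with only $O(\epsr)p(\SF)$ loss) or spans it, in which case at most half of its profit can be argued away by choosing the better of the two orders/strip orientations. Averaging the two orders then yields a packed subset of $\SF$ of profit at least $\tfrac{1-O(\eps)}{2}p(\SF)$.

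Putting the three parts together — they occupy pairwise disjoint regions of the knapsack by construction — produces an L\&C packing, so $p(\optrc)\geq\tfrac34 p(\LT)+p(\ST)+\tfrac{1-O(\eps)}{2}p(\SF)$, as claimed. I expect the last step to be the main obstacle: simultaneously (i) making the $O_\eps(1)$ containers for $\SF$ coexist with the thin boundary $L$ and the $\ST$-containers, and (ii) pinning down exactly the factor $\tfrac12$ from the interplay between the two processing orders and the unavoidable deletion of $\SF$ items pushed against the sides of the knapsack used by the $L$; everything else is bookkeeping with the machinery already developed.
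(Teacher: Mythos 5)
Your construction of the boundary $L$ (a $\tfrac34$-fraction of $\LT$ via Lemma~\ref{lem:LoftheRing}) and of the two thin containers holding all of $\ST$ matches the paper. The genuine gap is in the $\SF$ part: the factor $\tfrac12$ cannot come from averaging the processing orders of cases~3a/3b, nor from "choosing the better of the two orders/strip orientations" for items crossing \emph{fixed} strips along the sides of the knapsack. The geometric positions of the items do not depend on the processing order, so the two orders give you nothing here; and with fixed strips adjacent to the $L$ it may happen that essentially \emph{all} of $\SF$ crosses them (e.g.\ all horizontal short fat items packed flush against the left edge), so no constant fraction of $\SF$ is guaranteed to survive. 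The missing idea is an averaging over \emph{strip positions}: the paper keeps the $\SF$ items in their placement in the optimum solution, chooses one vertical and one horizontal strip of width/height $3\eps N$ \emph{uniformly at random} in the knapsack, and deletes the items of $\SF$ intersecting them. Since every item of $\SF$ is skewed (shorter side at most $\epss N$) and short (longer side at most $\ell=(\tfrac12+2\epsl)N$, which is exactly why the threshold $\ell$ was chosen this way), it survives with probability at least $\tfrac12-O(\eps)$ --- this is where the $\tfrac12$ really originates. The survivors can be shifted into a square of side $(1-3\eps)N$, and only then does one invoke Lemma~\ref{lem:structural_lemma_augm} together with Lemma~\ref{lem:containerPack} to turn them into $O_\eps(1)$ containers inside a central square of side $(1-2\eps)N$, which coexists with the boundary $L$ and the two $\ST$-containers of width $\epsr N$.

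A secondary, fixable slip: it is not true that every item of $\LT$ has its longer side exceeding $N/2$ (a long subcorridor may contain thin items of length barely above $\epsl N$), so Lemma~\ref{lem:LoftheRing} does not apply to all of $\LT$ directly; the paper applies it only to $\T\cap\ilong$ and packs the remaining thin items, together with $\ST$, entirely into the two thin containers, which still yields the claimed $\tfrac34 p(\LT)+p(\ST)$.
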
 \begin{proof} Recall that $\epsl N$ is the maximum width
of a corridor. We consider an execution of the algorithm with boundary
$L$ width $N'=\epsr N$, and threshold length $\ell=(\frac{1}{2}+2\epsl)N$.
We remark that this length guarantees that items in $\ilong$ are not
contained in short subcorridors.

By Lemma \ref{lem:LoftheRing}, we can pack a subset of $\T\cap \ilong$
of profit at least $\frac{3}{4}p(\T\cap \ilong)$ in a boundary $L$
of width $\epsr N$. 
By Lemma \ref{lem:boxProperties} the remaining items in $\T$ can
be packed in two containers of size $\ell\times\epsr N$ and $\epsr N\times \ell$
that we place on the two sides of the knapsack not occupied by the
boundary $L$.

In the free area we can identify a square region $K''$ with side
length $(1-\eps)N$. We next show that there exists a feasible
solution $\SF'\subseteq \SF$ with $p(\SF')\geq(1-O(\eps))p(\SF)/2$ that
can be packed in a square of side length $(1-3\eps)N$. We can
then apply the Resource Augmentation Lemma~\ref{lem:structural_lemma_augm}
to pack $\SF''\subseteq \SF'$ of cardinality $p(\SF'')\geq(1-O(\eps))p(\SF')$
inside a central square region of side length $(1-3\eps)(1+\epsau)N\leq(1-2\eps)N$
using containers according to Lemma \ref{lem:containerPack}.

Consider the packing of $\SF$ as in the optimum solution. Choose a
random vertical (resp. horizontal) strip in the knapsack of width
(resp. height) $3\eps N$. Delete from $\SF$ all the items
intersecting the vertical and horizontal strips: clearly the remaining
items $\SF'$ can be packed into a square of side length $(1-3\eps)N$.
Consider any $i\in \SF$, and assume $i$ is horizontal (the vertical
case being symmetric). Recall that it has height at most $\epss N\leq\eps N$
and width at most $\ell\leq1/2+2\eps$. Therefore $i$ intersects
the horizontal strip with probability at most $5\eps$ and the vertical
strip with probability at most $1/2+8\eps$. Thus by the union
bound $i\in \SF'$ with probability at least $1/2-13\eps$. The
claim follows by linearity of expectation. 
\end{proof}

Combining the above Lemmas \ref{lem:FTprofit}, \ref{lem:onlyFat}, \ref{lem:noShort},
\ref{lem:evenOdd}, and \ref{lem:ringCase} we achieve the desired
approximation factor, assuming that the (dropped) $O_{\eps}(1)$
items in $\optki\cup\optla\cup\optco^{cross}$ have zero profit. The worst
case is obtained, up to $1-O(\eps)$ factors, for $p(\LT)=p(\SF)=p(\ST)$
and $p(\LF)=5p(\LT)/4$. This gives $p(\LT)=4/17\cdot p(\T\cup \F)$ and
a total profit of $9/17\cdot p(\T\cup \F)$. 

\subsection{Adding small items}

Note that up to now we ignored the small items $\optsm$. In this
section, we explain how to pack a large fraction of these items. 

We described above how to pack a large enough fraction of $\optsk$
into containers. We next refine the mentioned analysis to bound the total area
of such containers. It turns out that the residual area is sufficient
to pack almost all the items of $\optsm$ into a constant number of
area containers (not overlapping with the previous containers) for
$\epss$ small enough.

\sal{To this aim we use the last of the properties given by the Resource Augmentation Lemma~\ref{lem:structural_lemma_augm}: in fact, the lemma
guarantees that the total area of the containers is at most $a(I')+\epsau\,a\cdot b$,
where $a\times b$ and $I'$ are the size of the box and the initial
set of items in the claim of Lemma~\ref{lem:structural_lemma_augm}, respectively.}

\begin{lemma}\label{lem:areaContainer} In the packings due to Lemmas~\ref{lem:onlyFat},
	\ref{lem:noShort}, \ref{lem:evenOdd}, and \ref{lem:ringCase} the
	total area occupied by containers is at most $\min\{(1-2\eps)N^{2},a(\optco)+\epsau N^{2}\}$.\end{lemma}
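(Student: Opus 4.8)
The plan is to prove the two upper bounds inside the $\min$ separately, each time running through the four packings of Lemmas~\ref{lem:onlyFat}, \ref{lem:noShort}, \ref{lem:evenOdd} and~\ref{lem:ringCase}. The bound $a(\optco)+\epsau N^2$ should come essentially for free from property~(3) of the Resource Augmentation Packing Lemma~\ref{lem:structural_lemma_augm}, and the bound $(1-2\eps)N^2$ is purely geometric: in the proof of Lemma~\ref{lem:containerPack} every box is shrunk by a $(1-2\eps)$ factor before its containers are placed inside it, and in the ring case the remaining containers live inside a $(1-2\eps)N$-side central square.

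For the area bound, I would recall that Lemma~\ref{lem:containerPack} applies Lemma~\ref{lem:structural_lemma_augm} inside each box $B$ of size $a_B\times b_B$ from the corridor-to-box decomposition: after discarding the items of $\optbo\subseteq\optfa$ meeting a strip of height $3\eps b_B$, the rest is packed into $O_\eps(1)$ containers whose total area, by property~(3), is at most $a(\optbo)+\epsau a_Bb_B$. Every skewed item is univocally associated with one subcorridor, hence with one box, so the sets $\optbo$ partition $\optfa$ and $\sum_B a(\optbo)=a(\optfa)\le a(\optco)$; and the boxes being pairwise non-overlapping inside $[0,N]^2$ gives $\sum_B a_Bb_B\le N^2$. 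Summing over boxes yields total container area at most $a(\optco)+\epsau N^2$. In Lemmas~\ref{lem:onlyFat}, \ref{lem:noShort} and~\ref{lem:evenOdd} these are all the containers, so we are done; in Lemma~\ref{lem:ringCase} there are in addition a boundary $L$ of width $\epsr N$ and two auxiliary containers of sizes $\ell\times\epsr N$ and $\epsr N\times\ell$ holding $\T\setminus\ilong$, of total area at most $4\epsr N^2$, which I would absorb into the $\epsau N^2$ term by choosing $\epsr$ small enough.

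For the geometric bound, in Lemmas~\ref{lem:onlyFat}, \ref{lem:noShort} and~\ref{lem:evenOdd} every container of a box $B$ of size $a_B\times b_B$ sits inside a sub-box of size $a_B\times(1-3\eps)(1+\epsau)b_B\subseteq a_B\times(1-2\eps)b_B$, so by disjointness of the boxes the total is at most $(1-2\eps)\sum_B a_Bb_B\le(1-2\eps)N^2$. In Lemma~\ref{lem:ringCase}, the containers for $\SF$ occupy at most $(1-2\eps)^2N^2$ inside the central square, and the boundary $L$ plus the two $\T$-containers occupy at most $4\epsr N^2$; since $(1-2\eps)^2+4\epsr\le 1-2\eps$ for small $\epsr$, the total is again at most $(1-2\eps)N^2$. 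Taking the minimum of the two bounds finishes the proof.

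The main obstacle, I expect, is the bookkeeping of the small constants in the ring case rather than any conceptual difficulty: one has to verify that $\epsr$, which Lemma~\ref{lem:boxProperties} lets us choose arbitrarily small, can be taken small enough relative to both $\eps$ and $\epsau$ so that (i) after carving out the boundary $L$ and the two $\T$-containers there is still a free central region of side $(1-\eps)N$ into which $\SF$ can be resource-augmentation-packed, and (ii) the $O(\epsr N^2)$ correction terms are dominated by the available slack ($\epsau N^2$ in the first bound, the gap between $(1-2\eps)^2$ and $(1-2\eps)$ in the second). This is routine given the dependency order $\eps\gg\epsau\gg\epsr$ used throughout Section~\ref{sec:weighted}, but it is the one step that is more than a one-line estimate.
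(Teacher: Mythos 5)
Your proposal is correct and follows essentially the same route as the paper's proof: the $(1-2\eps)N^{2}$ bound comes from the fact that each box's containers fit in a sub-box of height (resp.\ width) $(1-3\eps)(1+\epsau)b\le(1-2\eps)b$, with the ring case handled via the central square of side $(1-2\eps)N$ plus the $O(\epsr)N^2$ boundary pieces for $\epsr\le\eps^2$, and the $a(\optco)+\epsau N^{2}$ bound comes from summing property~(3) of Lemma~\ref{lem:structural_lemma_augm} over the disjoint boxes. Your extra step of absorbing the $4\epsr N^{2}$ from the boundary $L$ and the two thin-item containers into the second bound is a slightly more careful bookkeeping than the paper's (which only invokes the area guarantee per box), but it is the same argument in substance.
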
 \begin{proof}
	Consider the first upper bound on the area. We have to distinguish
		between the containers considered in Lemma \ref{lem:ringCase} and
		the remaining cases. In the first case, there is a region not occupied
		by the boundary $L$ nor by the containers of area at least $4\eps N^{2}-4\eps^{2}N^{2}-4\epsr N^{2}\geq2\eps N^{2}$
		for $\epsr$ small enough, e.g., $\epsr\le\epsilon^{2}$ suffices.
		The claim follows. For the remaining cases, recall that in each
	horizontal box of size $a\times b$ we remove a horizontal strip of
	height $3\eps b$, and then use the Resource Augmentation Packing
	Lemma to pack the residual items in a box of size $a\times b(1-3\eps)(1+\epsau)\leq a\times b(1-2\eps)$
	for $\epsau\leq\eps$. Thus the total area of the containers is at
	most a fraction $1-2\eps$ of the area of the original box. A symmetric
	argument applies to vertical boxes. Thus the total area of the containers
	is at most a fraction $1-2\eps$ of the total area of the boxes, which
	in turn is at most $N^{2}$. This gives the first upper bound in the
	claim.
	
	For the second upper bound, we just apply the area bound in Lemma
	\ref{lem:structural_lemma_augm} to get that the total area of the
	containers is at most $a(\optco)$ plus $\epsau\,a\cdot b$ for each
	box of size $a\times b$. Summing the latter terms over the boxes
	one obtains at most $\epsau N^{2}$. \end{proof}

We are now ready to state a lemma that provides the desired packing
of small items. By slightly adapting the analysis we can guarantee
that the boundary $L$ that we use to prove the claimed approximation
ratio has width at most $\eps^{2}N$.
\begin{lemma}[Small Items Packing Lemma]\label{lem:smallPack} Suppose
	we are given a packing of non small items of the above type into $k$
	containers of total area $A$ and, possibly, a boundary $L$ of width
	at most $\eps^{2}N$.
	Then for $\epss$ small enough it is possible to define $O_{\epss}(1)$
	area containers of size $\frac{\epss}{\eps}N\times\frac{\epss}{\eps}N$
	neither overlapping with the containers nor with the boundary
	$L$ (if any) such that it is possible to pack $\optsm'\subseteq\optsm$
	of profit $p(\optsm')\geq(1-O(\eps))p(\optsm)$ inside these new area
	containers. \end{lemma}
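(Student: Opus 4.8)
The plan is to pack the small items $\optsm$ using a free-area / NFDH argument, exactly in the spirit of Lemma~\ref{lem:smallpacking} from the Strip Packing chapter and Lemma~\ref{lem:packAreaContainer}. First I would extend the sides of the $k$ given containers (and the two boxes forming the boundary $L$, if present) all the way across the knapsack, and additionally add horizontal and vertical lines spaced at distance $\frac{\epss}{\eps}N$; this partitions the region of the knapsack not occupied by containers or by the boundary $L$ into a grid of $O_{\epss,\eps,k}(1) = O_{\epss}(1)$ rectangular \emph{free cells}, none of which has a side longer than $\frac{\epss}{\eps}N$. Each such free cell I treat as a candidate $\eps$-granular area container of size at most $\frac{\epss}{\eps}N\times\frac{\epss}{\eps}N$: since every small item has width and height at most $\epss N = \eps\cdot\frac{\epss}{\eps}N$, indeed each small item is $\eps$-small for any such cell whose sides are of length exactly $\frac{\epss}{\eps}N$, and more generally for the cells we actually keep.

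Next I would bound the total free area available. The total area of the knapsack not covered by the $k$ containers nor by the boundary $L$ is at least $N^2 - A - 2\eps^2N\cdot N \geq N^2 - A - 2\eps^2 N^2$. By the area guarantee of Lemma~\ref{lem:containerPack} / Lemma~\ref{lem:structural_lemma_augm} (made explicit in Lemma~\ref{lem:areaContainer}), $A \leq a(\optco) + \epsau N^2 \leq a(\optsk) + \epsau N^2$, so the free area is at least $N^2 - a(\optsk) - (\epsau + 2\eps^2)N^2 \geq a(\optsm) + (1 - a(\optsk)/N^2 - a(\optsm)/N^2 - O(\eps))N^2 \geq a(\optsm) - O(\eps)N^2$, using that $a(\optsk)+a(\optsm)\leq N^2$ (the optimum packs all these items disjointly in the knapsack). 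As in Lemma~\ref{lem:smallpacking}, I would discard the free cells that have some side smaller than, say, $\frac{8\epss}{\eps^2}N$ (so that on each kept cell NFDH applied to small items loses at most a $2\eps$-fraction of the cell's area per Lemma~\ref{lem:nfdhPack}); the discarded cells have total area at most $O_{\epss}(1)\cdot\frac{8\epss}{\eps^2}N\cdot N = O(\epss\cdot\mathrm{poly}(1/\eps))N^2$, which is negligible for $\epss$ small enough. Then I run NFDH on the small items into the kept cells (each becoming a new $\eps$-granular area container), filling them one after another.

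The profit accounting is the step requiring a little care but it is routine given Lemma~\ref{lem:nfdhPack} and the greedy profit/area argument of Lemma~\ref{lem:packAreaContainer}. Sort small items by non-increasing profit-over-area ratio and feed them to NFDH in that order. If all of $\optsm$ gets packed we are done. Otherwise, by Lemma~\ref{lem:nfdhPack} every kept cell (except possibly the last one touched by NFDH) receives small items of area at least a $(1-2\eps)$-fraction of the cell area, so the total packed area is at least $(1-2\eps)(\text{total kept-cell area}) - (\text{one cell})$, which by the free-area bound above is at least $a(\optsm) - O(\epss\cdot\mathrm{poly}(1/\eps) + \eps)N^2$. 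Since small items have area at most $\epss^2N^2$, one can make this packed area at least $(1-O(\eps))a(\optsm)$ by taking $\epss$ small enough; by the greedy ordering the packed profit is then at least $(1-O(\eps))p(\optsm)$. (Equivalently: the unpacked small items have total area $O(\eps)N^2 \leq O(\eps)\cdot a(\optsm)/(\text{fraction})$ — one just needs $\epss$ small relative to $\eps$ so that the absolute $N^2$ losses are an $O(\eps)$ fraction of $a(\optsm)$; if $a(\optsm)$ is itself tiny the statement is vacuous since then $p(\optsm)$ can be made tiny by the same token, or rather we simply pack nothing and $(1-O(\eps))p(\optsm)$ is achieved trivially when no small items exist.) Finally, the newly created area containers are placed inside free cells by construction, hence they neither overlap the original $k$ containers nor the boundary $L$, and they are $\eps$-granular by the side-length comparison above; their number is the number of kept cells, which is $O_{\epss}(1)$.

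The main obstacle — and it is a mild one — is making the free-area estimate genuinely quantitative: one must invoke the explicit area bound $A \leq a(\optco) + \epsau N^2$ from Lemma~\ref{lem:areaContainer} (for the non-degenerate-$L$ case one also uses the first bound $A \leq (1-2\eps)N^2$, though for small-item packing the additive-area bound is what matters), and one must be careful that the boundary $L$ has width only $\eps^2 N$ so it eats only $O(\eps^2)N^2$ of area, as the statement promises. Everything else is a direct replay of the NFDH-into-free-cells technique already used for Strip Packing, with the roles of $\mu_w,\mu_h$ played by $\epss/N$ and of $K_F$ by $k$ plus the grid granularity $\eps/\epss$. So the constraint one needs is of the form $\epss \leq \eps^{c}\cdot(\text{something like }1/k^{2})$ for a suitable absolute constant $c$, analogous to $\mu_w\mu_h \leq \frac{(1-2\eps)\eps^2}{10K_F^2}$ in Lemma~\ref{lem:smallpacking}.
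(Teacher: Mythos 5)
Your high-level plan (grid the knapsack, treat free cells as area containers, run NFDH on a greedy profit/area prefix of $\optsm$) is indeed the paper's, but the profit accounting has a genuine hole in the weighted setting. You only derive the additive bound ``free area $\geq a(\optsm)-O(\eps)N^2$'', and then dismiss the regime where $a(\optsm)$ is below that absolute loss by arguing the statement becomes vacuous because $p(\optsm)$ would then be tiny. That inference is false here: profits are arbitrary, so small items of negligible total area may carry almost all the profit, and packing nothing is not an option. This is precisely why the paper uses \emph{both} halves of the area bound of Lemma~\ref{lem:areaContainer} (you explicitly set the $(1-2\eps)N^2$ half aside as not mattering): since the containers occupy at most $(1-2\eps)N^2$, the free cells always have total area at least about $\eps N^2$, and the proof then splits into two cases --- if $a(\optsm)\geq \eps N^2$ the additive loss $O(\eps^2)N^2$ is an $O(\eps)$-fraction of $a(\optsm)$, and if $a(\optsm)<\eps N^2$ the free-cell area already exceeds $a(\optsm)$, so essentially all of $\optsm$ fits. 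Without this dichotomy (or an equivalent argument) your claimed $(1-O(\eps))p(\optsm)$ guarantee does not follow.

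There are also two quantitative slips in your cell construction that would need repair. First, you add grid lines at spacing $\frac{\epss}{\eps}N$, so no cell has a side longer than $\frac{\epss}{\eps}N$, yet you discard every cell with a side smaller than $\frac{8\epss}{\eps^2}N>\frac{\epss}{\eps}N$; as written, every cell is discarded. Second, your bound on the discarded area, ``(number of cells)$\times$(max area of a discarded cell)'', is invalid because the number of cells is not $\mathrm{poly}(1/\eps)$: it grows like $(k+\eps/\epss)^2$, and multiplying by a per-cell area of order $\epss\cdot\mathrm{poly}(1/\eps)N^2$ gives a quantity of order $N^2/\epss$ rather than something vanishing with $\epss$. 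A correct accounting must either charge the sub-size slivers to the $O(k)$ extended container/$L$ sides (each such line spawns discarded cells of total area at most threshold$\times N$), or, as the paper does, use a plain uniform grid of spacing $\frac{\epss}{\eps}N$ and delete only the cells overlapping a container or the boundary $L$, whose area is at most $\eps^2N^2+A$ plus a perimeter term $(2+4k)\frac{\epss}{\eps}N^2\leq 2\eps^2N^2$ for $\epss$ small, crucially exploiting that $k$ does not depend on $\epss$ (the same point made in Lemma~\ref{lem:smallpacking}). With those two fixes and the case distinction above, your argument becomes the paper's proof.
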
 \begin{proof} Let us build a grid of width
	$\eps'N=\frac{\epss}{\eps}\cdot N$ inside the knapsack. We delete
	any cell of the grid that overlaps with a container or with the boundary
	$L$, and call the remaining cells \emph{free}. The new
	area containers are the free cells.
	
	The total area of the deleted grid cells is, for $\epss$ and $\epsau$
	small enough, at most 
	\begin{align*}
	(\eps^{2}N^{2}+A)+\frac{2N+4Nk}{\eps'N}\cdot(\eps'N)^{2}
	&\leq A+2\eps^{2}N^{2} \\
	&\hspace{-10pt}\overset{Lem.\ref{lem:areaContainer}}{\leq}\min\{(1-\eps)N^{2},a(\optco)+3\eps^{2}N^{2}\}
	\end{align*}
	For the sake of simplicity, suppose that
	any empty space in the optimal packing of $\optco\cup\optsm$ is filled
	in with dummy small items of profit $0$, so that $a(\optco\cup\optsm)=N^{2}$.
	We observe that the area of the free cells is at least $(1-O(\eps))a(\optsm)$: 
	Either, $a(\optsm)\geq\eps N^{2}$ and then the area of the free cells is at least 
	$a(\optsm)-3\eps^{2}N^{2}\geq(1-3\eps)a(\optsm)$;
	otherwise, we have that the area of the free cells is at least $\eps N^2 > a(\optsm)$.
	Therefore we can select a subset of small items $\optsm'\subseteq\optsm$,
	with $p(\optsm')\geq(1-O(\eps))p(\optsm)$ and area $a(\optsm)\leq(1-O(\eps))a(\optsm)$
	that can be fully packed into free cells using classical Next Fit
	Decreasing Height algorithm (NFDH) according to Lemma \ref{lem:nfdhPack}
	described later. The key argument for this is that each free
		cell is by a factor $1/\eps$ larger in each dimension than each small
		item.
\end{proof}

Thus, we have proven now that if the items in $\optki\cup\optla\cup\optco^{cross}$
had zero profit, then there is an L\&C-packing for the skewed and
small items with a profit of at least $9/17\cdot p(\optco)+(1-O(\epsilon))(\optsm)\ge(9/17-O(\epsilon))p(\opt)$.

\subsection{Shifting argumentation}

We remove now the assumption that we can drop $O_{\eps}(1)$ items
from $\opt$. We will add a couple of shifting steps to the argumentation
above to prove Lemma~\ref{lem:apxNoRotation} without that assumption.

It is no longer true that we can neglect the large rectangles $\optla$
since they might contribute a large amount towards the objective,
even though their total number is guaranteed to be small. Also, in
the process of constructing the boxes, we killed up to $O_{\eps}(1)$
rectangles (the rectangles in $\optki$). Similarly, we can no longer
drop the constantly many items in $\optco^{cross}$. Therefore, we apply
some careful shifting arguments in order to ensure that we can still
use a similar construction as above, while losing only a
factor $1+O(\eps)$ due to some items that we will discard.

The general idea is as follows: For $t=0,\ldots, k$ (we will later argue that $k<1/\eps$), 
	we define disjoint sets $K(t)$ recursively, each containing at most $O_\eps(1)$ items. 
	Each set $\K(t)=\bigcup_{j=0}^t K(j)$ is used to define a grid $G(t)$ in the knapsack. 
	Based on an item classification that depends on this grid, we identify a set of skewed items 
	and create a corridor partition with respect to these skewed items as described in Lemma \ref{lem:corridorPack-weighted}. 
	We then create a partition of the knapsack into corridors and a constant (depending on $\eps$) 
	number of containers (see Section \ref{sec:weighted:shifting:grid}). 
	Next, we decompose the corridors into boxes (Section \ref{sec:weighted:shifting:boxes}) 
	and these boxes into containers (section \ref{sec:weighted:shifting:containers}) 
	similarly as we did in Sections \ref{sec:structural:boxes} and \ref{sec:structural:containers} 
	(but with some notable changes as we did not delete small items from the knapsack and thus need 
	to handle those as well). 
	In the last step, we add small items to the packing (Section \ref{sec:weighted:shifting:small}). 
	During this whole process, we define the set $K(i+1)$ of items that are somehow ``in our way'' 
	during the decomposition process (e.g., items that are not fully contained in some corridor of 
	the corridor partition), but which we cannot delete directly as they might have large profit. 
	These items are similar to the killed items in the previous argumentation.
	However, using a shifting argument we can simply show that after at most $k<1/\eps$ steps of this 
	process, we encounter a set $K(k)$ of low total profit, that we can remove, at which point we 
	can apply almost the same argumentation as in Lemmas~\ref{lem:onlyFat},
	\ref{lem:noShort}, and \ref{lem:evenOdd} to obtain lower bounds on the profit of an optimal 
	L\&C packing (Section \ref{sec:weighted:shifting:packing}).

We initiate this iterative process as follows: Denote by $K(0)$ a set containing all items that are killed in at
	least one of the cases arising in Section~\ref{sec:structural:lemma}
	(the set $\OK$ in that Section) and additionally the large items $\optla$
	and the $O_{\eps}(1)$ items in $\optco^{cross}$ (in fact $\optla\subseteq\optco^{cross}$).
	Note that $|K(0)|\le O_{\eps}(1)$. If $p(K(0))\le\eps\cdot p(OPT)$
	then we can simply remove these rectangles (losing only a factor of $1+\eps$)
	and then apply the remaining argumentation exactly as above and we are done.
	Therefore, from now on suppose that $p(K(0))>\eps\cdot p(OPT)$.

\subsubsection{Definition of grid and corridor partition}\label{sec:weighted:shifting:grid}

Assume we are in round $t$ of this process, i.e., we defined $K(t)$ in the previous step (unless $t=0$, then $K(t)$ is defined as specified above) and assume that $p(K(t))>\eps \opt$ (otherwise, see Section \ref{sec:weighted:shifting:packing}).
We are now going to define the non-uniform grid $G(t)$ and the induced partition of the knapsack
into a collection of cells $\C_{t}$. The $x$-coordinates ($y$-coordinates) of the grid cells are the
$x$-coordinates ($y$-coordinates, respectively) of the items in
$\K(t)$. This yields a partition of the knapsack into $O_{\eps}(1)$
rectangular cells, such that each item of $\K(t)$ covers one or multiple
cells.
Note that an item might intersect many cells.

Similarly as above, we define constants $1\ge\epsl\ge\epss\ge\Omega_{\eps}(1)$
and apply a shifting step such that we can assume that for each item
$i\in OPT$ touching some cell $\cell$ we have that $w(i\cap\cell)\in(0,\epss w(\cell)]\cup(\epsl w(\cell),w(\cell)]$
and $h(i\cap\cell)\in(0,\epss h(\cell)]\cup(\epsl h(\cell),h(\cell)]$,
where $h(\cell)$ and $w(\cell)$ denote the height and the width
of the cell $\cell$ and $w(i\cap\cell)$ and $h(i\cap\cell)$ denote
the height and the width of the intersection of the rectangle $i$
with the cell $\cell$, respectively. For a cell $\cell$ denote by
$\opt(\cell)$ the set of rectangles that intersect $\cell$ in $\opt$.
We obtain a partition of $\opt(\cell)$ into $\optsm(\cell)$, $\optla(\cell)$,
$\optho(\cell)$, and $\optve(\cell)$: 
\begin{itemize}
	\item $\optsm(\cell)$ contains all items $i\in\opt(\cell)$ with $h(i\cap\cell)\le\epss h(\cell)$
	and $w(i\cap\cell)\le\epss w(\cell)$, 
	\item $\optla(\cell)$ contains all items $i\in\opt(\cell)$ with $h(i\cap\cell)>\epsl h(\cell)$
	and $w(i\cap\cell)>\epsl w(\cell)$, 
	\item $\optho(\cell)$ contains all items $i\in\opt(\cell)$ with $h(i\cap\cell)\le\epss h(\cell)$
	and $w(i\cap\cell)>\epsl w(\cell)$, and 
	\item $\optve(\cell)$ contains all items $i\in\opt(\cell)$ with $h(i\cap\cell)>\epsl h(\cell)$
	and $w(i\cap\cell)\le\epss w(\cell)$. 
\end{itemize}
We call a rectangle $i$ \emph{intermediate }if there is a cell $\cell$
such that $w(i\cap C)\in(\epss w(C),\epsl w(C)]$ or $h(i\cap C)\in(\epss w(C),\epsl w(C)]$.
Note that a rectangle $i$ is intermediate if and only if the last
condition is satisfied for one of the at most four cells that contain
a corner of $i$. 

\begin{lemma} For any constant $\eps>0$ and positive increasing
	function $f(\cdot)$, $f(x)>x$, there exist constant values
	$\epsl,\epss$, with $\eps\geq\epsl\geq f(\epss)\ge\Omega_{\eps}(1)$
	and $\epss\in\Omega_{\eps}(1)$ such that the total profit of intermediate
	rectangles is bounded by $\eps p(OPT)$. \end{lemma}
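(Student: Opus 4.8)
The plan is to mimic exactly the averaging argument used earlier in Lemma~\ref{lem:item-classification} (and also Lemma~\ref{lem:ra_intermediate} and Lemma~\ref{lem:mediumrectanglesarea}), but now with respect to the non-uniform grid $G(t)$ rather than the whole knapsack. First I would set $k+1 = 2/\eps+1$ and define a chain of constants $\eps = \eps_0 > \eps_1 > \dots > \eps_{k+1} > 0$ by $\eps_0 = \eps$ and $\eps_{i} = f(\eps_{i+1})$ for each $i$; note each $\eps_i \ge \Omega_\eps(1)$ since $k = O_\eps(1)$ and $f$ is a fixed function. Each consecutive pair $(\eps_{j+1},\eps_j)$ is a candidate for $(\epss,\epsl)$, and there are $k$ such pairs. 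The claim ``$\epsl \ge f(\epss)$'' will hold automatically by construction, since $\eps_j = f(\eps_{j+1})$.

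Next I would bound the total profit of items that are ``intermediate for the pair $(\eps_{j+1},\eps_j)$''. Call an item $i$ \emph{$j$-intermediate} if there is a cell $C$ of $G(t)$ with $w(i\cap C) \in (\eps_{j+1} w(C), \eps_j w(C)]$ or $h(i\cap C) \in (\eps_{j+1} h(C), \eps_j h(C)]$. As observed in the text, it suffices to check the at most four cells containing a corner of $i$. The key point is a charging/disjointness argument: for a \emph{fixed} cell $C$ and a fixed item $i$ touching $C$, the four quantities $w(i\cap C)$ and $h(i\cap C)$ each land in the range $(\eps_{j+1}\cdot,\eps_j\cdot]$ for at most a bounded number of indices $j$ — in fact, for a single fixed edge intersection length $x/w(C)$, there is at most one $j$ with $x/w(C) \in (\eps_{j+1},\eps_j]$ since the intervals are disjoint. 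Therefore, summing over $j$, each item $i$ is $j$-intermediate for at most $O(1)$ values of $j$ (at most $8$: two dimensions times four corner-cells). Hence $\sum_{j=1}^{k} p(\{i : i \text{ is } j\text{-intermediate}\}) \le 8\, p(\opt)$, and by averaging there exists an index $\bar j$ with $p(\{i : i \text{ is } \bar j\text{-intermediate}\}) \le \frac{8}{k} p(\opt) = \frac{8}{k} p(\opt)$. Choosing $k \ge 8/\eps$ (i.e.\ adjusting the ``$2/\eps$'' above to ``$8/\eps$'', which only changes constants) makes this at most $\eps\, p(\opt)$. Set $\epsl = \eps_{\bar j}$, $\epss = \eps_{\bar j+1}$.

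I would then note that, exactly as in Lemma~\ref{lem:item-classification}, since the whole construction depends only on the chain $\eps_0,\dots,\eps_{k+1}$ (which is fixed once $f$ and $\eps$ are chosen) and not on $\opt$, the pair $(\epsl,\epss)$ is one of only $k = O_\eps(1)$ explicitly computable candidate pairs; the algorithm simply enumerates all of them, so this is not an issue for polynomial running time. The remaining bookkeeping — that $\epsl \ge f(\epss)$, that both are $\Omega_\eps(1)$, and that the profit bound is stated with respect to the grid $G(t)$ currently in hand — follows immediately from the construction.

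The only genuine subtlety, and what I expect to be the main (minor) obstacle, is making the ``each item is $j$-intermediate for $O(1)$ many $j$'' step airtight: one must be careful that an item can touch many cells, so a priori its intersection lengths with different cells could hit many different intervals. The resolution is precisely the observation already recorded in the excerpt — that intermediateness for a given cell is only triggered by the at most four cells containing a corner of $i$, because a cell that $i$ spans completely in one dimension has intersection length $w(C)$ or $h(C)$ (ratio exactly $1$, hence in the ``large'' bucket, not intermediate), and a cell that $i$ enters through an edge but not a corner has intersection length equal to the full side of the cell in the transversal direction. So for each item only $O(1)$ cell-dimension pairs matter, each contributing at most one bad index $j$, giving the desired $O(1)$ bound. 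Once that is in place the averaging and the choice of $k$ finish the proof in a couple of lines.
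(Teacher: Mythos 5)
Your proposal is correct and follows essentially the same route as the paper: the paper leaves this cell-based variant without an explicit proof because it is the same pigeonhole/averaging argument used for Lemma~\ref{lem:item-classification} (and Lemmas~\ref{lem:ra_intermediate} and~\ref{lem:mediumrectanglesarea}), namely a chain of $O_{\eps}(1)$ thresholds generated by $f$, with each item charged to $O(1)$ many threshold ranges, combined with the already-recorded observation that intermediateness is witnessed only by the at most four cells containing a corner of the item. Your only addition is the explicit factor-$8$ bookkeeping (two dimensions times four corner cells) and the corresponding enlargement of the number of candidate pairs, which only affects constants and is perfectly fine.
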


For each cell $C$ that is not entirely covered by some item in $K(t)$
we add all rectangles in $\optla(C)$ that are not contained in $\mathcal{K}(t)$ to $K(t+1)$.
Note that here, in contrast to before, we do \emph{not} remove small items from the packing but keep them.

Based on the items $\optsk(\C_{t}):=\cup_{\cell\in\C_{t}}\optho(\cell)\cup\optve(\cell)$
we create a corridor decomposition and consequently a box decomposition
of the knapsack. To make this decomposition clearer, we assume that
we first stretch the non-uniform grid into a uniform $[0,1]\times[0,1]$
grid. After this operation, for each cell $C$ and for each element
of $\optho(C)\cup\optve(C)$ we know that its height or width is at
least $\epsl\cdot\frac{1}{1+2|\K(t)|}$. We apply Lemma~\ref{lem:corridorPack-weighted}
on the set $\optsk(\C_{t})$ which yields a decomposition of the $[0,1]\times[0,1]$
square into at most $O_{\eps,\epsl,\K(t)}(1)=O_{\eps,\epsl}(1)$ corridors.
The decomposition for the stretched $[0,1]\times[0,1]$ square corresponds
to the decomposition for the original knapsack, with the same items
being intersected. Since we can assume that all items of $OPT$ are
placed within the knapsack so that they have integer coordinates,
we can assume that the corridors of the decomposition also have integer
coordinates. We can do that, because shifting the edges of the decomposition
to the closest integral coordinate will not make the decomposition
worse, i.e., no new items of $OPT$ will be intersected.

We add all rectangles in $\optsk(\C_{t})$ that are not contained
in a corridor (at most $O_{\eps}(1)$ many) and that are not contained
in $\K(t)$ to $K(t+1)$. 
The corridor partition
has the following useful property: we started with a fixed (optimal)
solution $\opt$ for the overall problem with a \emph{fixed placement
}of the items in this solution. Then we considered the items in $\optsk(\C_{t})$
and obtained the sets $\optco\subseteq\optsk(\C_{t})$ and $\optco^{cross}\subseteq\optco$.
With the mentioned fixed placement, apart from the $O_{\eps}(1)$
items in $\optco^{cross}$, each item in $\optco$ is contained in one corridor.
In particular, the items in $\optco$ do not overlap the items in
$\K(t)$. We construct now a partition of the knapsack into $O_{\eps}(1)$
corridors and $O_{\eps}(1)$ containers where each container contains
exactly one item from $\K(t)$. The main obstacle is that there can
be an item $i\in \K(t)$ that overlaps a corridor (see Figure~\ref{fig:weighted-circumvent}).
We solve this problem by applying the following lemma on each such
corridor. 
\begin{lemma} \label{lem:divide-open-corridors} Let $S$
	be an open corridor with $b(S)$ bends. Let $I'\subseteq OPT$ be
	a collection of items which intersect the boundary of $S$ with $I'\cap\optsk(\C_{t})=\emptyset$.
	Then there is a collection of $|I'|\cdot b(S)$ line segments $\mathcal{L}$
	within $S$ which partition $S$ into corridors with at most $b(S)$
	bends each such that no item from $I'$ is intersected by $\mathcal{L}$
	and there are at most $O_{\eps}(|I'|\cdot b(S))$ items of $\optsk(\C_{t})$
	intersected by line segments in $\L$. \end{lemma}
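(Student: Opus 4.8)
The goal is to "carve out'' the items of $I'$ from the open corridor $S$ using only $|I'|\cdot b(S)$ axis-parallel line segments, in such a way that each resulting piece is again an open corridor with at most $b(S)$ bends and so that no item of $I'$ is cut, while only $O_\eps(|I'|\cdot b(S))$ items of $\optsk(\C_t)$ get cut. The first step is to set up the right geometric picture: $S$ is the concatenation of $b(S)+1$ subcorridors (boxes) $S_1,\dots,S_{b(S)+1}$, and every item $i\in I'$ that touches the boundary of $S$ but is not in $\optsk(\C_t)$ is \emph{small relative to the subcorridor it lies in} (its width and height are at most $\epss$ times the width/height of the local cell, since it is neither horizontal nor vertical in that cell). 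In particular $i$ is a ``thin sliver'' hugging a long side of some subcorridor, and crucially it spans only $O(1)$ subcorridors.

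The core step is a cutting procedure applied item by item. For each $i\in I'$, inside the (at most two) subcorridors it meets, I would draw a staircase-free rectilinear curve that separates $i$ (together with the slab of the subcorridor on $i$'s side, up to the nearest corridor boundary) from the rest of $S$; because $i$ is thin and lies against a boundary edge, such a curve can be taken to consist of $O(b(S))$ axis-parallel segments that cross the subcorridor transversally and then run parallel to the boundary, exactly as in the box-partition construction of Section~\ref{sec:structural:boxes}. Each such transversal segment, being a line in $S$, cuts at most $1/\epsl = O_\eps(1)$ items of $\optsk(\C_t)$ (skewed items crossing a given vertical or horizontal line), so across all $i\in I'$ and all their $\le b(S)$ bends we cut at most $O_\eps(|I'|\cdot b(S))$ skewed items, and by construction no item of $I'$ is cut (the curves were chosen to avoid them, which is possible since distinct items of $I'$ have disjoint interiors and each is separated from the others by being pushed to its own side of its subcorridor). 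The segments parallel to the boundary contribute to the count $|I'|\cdot b(S)$ of segments but cut \emph{no} skewed item, since a skewed item fully inside a corridor has its long side longer than the corridor width and therefore cannot be contained between a boundary-parallel cut and the boundary.

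Finally I would check the two structural claims about the pieces. Removing the thin slab around a single $i$ from an open corridor of $b(S)$ bends either splits it into two open corridors, or shortens one subcorridor; in every case the number of bends does not increase, so after processing all of $I'$ every resulting region is an open corridor with at most $b(S)$ bends, and the total number of segments used is $|I'|\cdot b(S)$ as claimed. The main obstacle I anticipate is the bookkeeping to guarantee simultaneously that (a) the separating curves for different items of $I'$ can be chosen pairwise non-crossing and each avoiding \emph{all} of $I'$, not just its own item, and (b) the bend count never grows — both require ordering the items of $I'$ carefully (e.g.\ processing the subcorridors of $S$ from one boundary edge to the other, and within a subcorridor processing items from the boundary inward) and arguing that a curve drawn for a later item can always be routed in the region not yet carved off; this is essentially the same shifting/ordering idea used for fat items in Remark~\ref{rem:fatPack}, adapted to the open-corridor geometry.
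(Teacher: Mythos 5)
Your overall plan—carve out each item of $I'$ with a rectilinear curve of $O(b(S))$ axis-parallel segments that avoids all of $I'$, follows the corridor around its bends, and cuts only $O_\eps(1)$ skewed items per segment—is the same idea as the paper's proof. However, two of your claims do not hold, and the second is exactly where the lemma lives. First, the premise that every $i\in I'$ is a thin sliver hugging a boundary edge (both sides at most $\epss$ times the local cell, spanning $O(1)$ subcorridors) is unjustified: $i\notin\optsk(\C_t)$ only means $i$ is small \emph{or large} relative to its cells, and in the intended application $I'$ consists of items of $\K(t)$, which are typically large. This alone could be repaired, since the construction does not really need smallness.

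The serious gap is your accounting of which segments cut items of $\optsk(\C_t)$: it is backwards. A segment crossing a horizontal subcorridor \emph{transversally} (a vertical segment there) cuts horizontal skewed items perpendicular to their long edge; since their heights can be arbitrarily small, one such segment can cut arbitrarily many of them—the bound $1/\epsl$ applies only to cuts made \emph{parallel} to the items' long edges, where each cut item occupies an $\Omega_\eps(1)$ fraction of the segment's length. Conversely, your claim that boundary-parallel segments cut \emph{no} skewed item confuses "being contained between the cut and the boundary" with "being intersected by the cut": a horizontal segment at some height inside a horizontal subcorridor does intersect every horizontal item straddling that height along its range (which is fine, because there are only $O_\eps(1)$ of them per segment). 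The paper's construction is built precisely to avoid transversal cuts: for each $i\in I'$ it takes the line through the edge of $i$ lying in its (say horizontal) subcorridor, extends it maximally subject to not meeting the interior of any item of $I'$, a boundary curve, or the boundary of $S$, and at each loose end turns perpendicular into the adjacent (vertical) subcorridor, again running parallel to that subcorridor's orientation, for at most $b(S)$ turns. Thus every segment is parallel to the long direction of any skewed item it meets, giving $O_\eps(1)$ cut items per segment and $O_\eps(|I'|\cdot b(S))$ overall; starting from the item's own edge also disposes of the non-crossing/ordering bookkeeping you left open. The distinction you blur is exactly the one separating this lemma from the closed-corridor case (Lemma~\ref{lem:divide-closed-corridors}), where transversal cuts cannot be avoided and one must switch to a profit-based shifting argument instead of counting.
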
 \begin{proof} Let
	$i\in I'$ and assume without loss of generality that $i$ lies within a horizontal
	subcorridor $S_{i}$ of the corridor $S$.
	If the top or bottom edge $e$ of $i$ lies within $S_{i}$, we define
	a horizontal line segment $\ell$ which contains the edge $e$ and
	which is maximally long so that it does not intersect the interior
	of any item in $I'$, and such that it does not cross the boundary
	curve between $S_{i}$ and an adjacent subcorridor, or an edge of
	the boundary of $S$ (we can assume without loss of generality that $e$ does not intersect
	the boundary curve between $S_{i}$ and some adjacent subcorridor).
	We say that $\ell$ \emph{crosses} a boundary curve $c$ (or an edge
	$e$ of the boundary of $S$) if $c\setminus\ell$ (or $e\setminus\ell$)
	has two connected components.
	
	\begin{figure}
		\begin{centering}
			\includegraphics[scale=0.3]{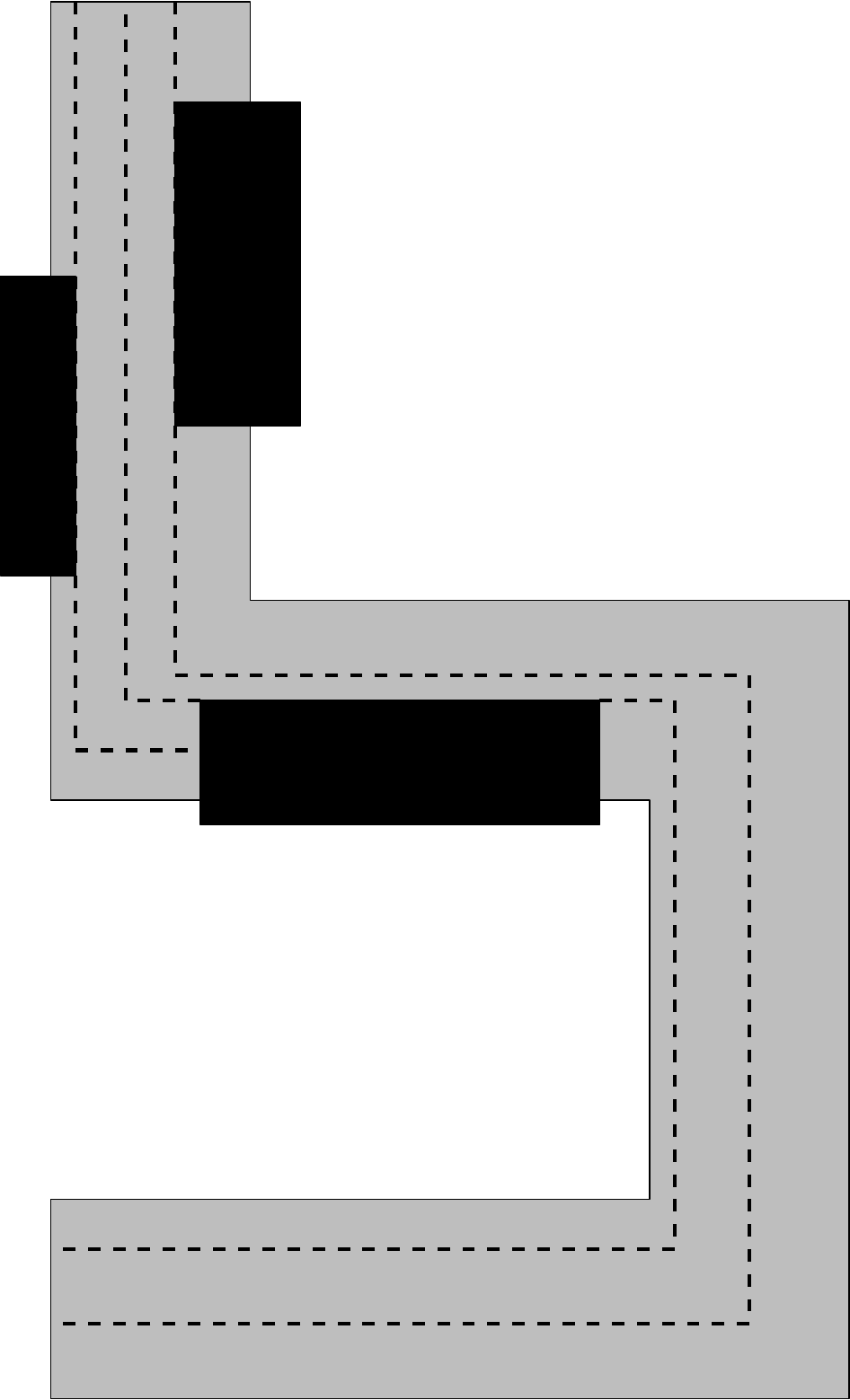} ~~~~~~~~~\includegraphics{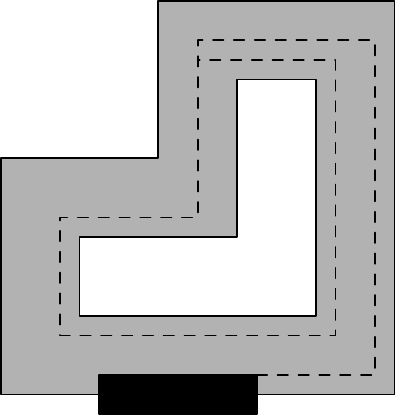}
			\par\end{centering}
		\caption{\label{fig:weighted-circumvent}Circumventing the items in $I'$,
			shown in black. The connected components between the dashed lines
			show the resulting new corridors. }
	\end{figure}
	
	We now ``extend'' each end-point of $\ell$ which does not lie at
	the boundary of some other item of $I'$ or at the boundary of $S$,
	we call such an end point a \emph{loose end}. For each loose end $x$
	of $\ell$ lying on the boundary curve $c_{ij}$ partitioning the
	subcorridors $S_{i}$ and $S_{j}$, we introduce a new line $\ell'$
	perpendicular to $\ell$, starting at $x$ and crossing the subcorridor
	$S_{j}$ such that the end point of $\ell'$ is maximally far away
	from $x$ subject to the constraint that $\ell'$ does not
	cross an item in $I'$, another boundary curve inside $S$, or the
	boundary of $S$. We continue iteratively. Since the corridor has
	$b(S)$ bends, after at most $b(S)$ iterations this operation will
	finish. We repeat the above operation for every item $i\in I'$, and
	we denote by $\L$ the resulting set of line segments, see Figure~\ref{fig:weighted-circumvent}
	for a sketch. Notice that $|\L|=b(S)\cdot|I'|$. By construction,
	if an item $i\in\optsk(\C_{t})$ is intersected by a line in $\L$
	then it is intersected parallel to its longer edge. Thus, each line
	segment in $\L$ can intersect at most $O_{\eps}(1)$ items of $\optsk(\C_{t})$.
	Thus, in total there are at most $O_{\eps,\epsl}(|I'|\cdot b(S))$
	items of $\optsk(\C_{t})$ intersected by line segments in $\L$.
\end{proof} 
We apply Lemma~\ref{lem:divide-open-corridors} to each
open corridor that intersects an item in $\K(t)$. We add
all items of $\Rsk(\C_{t})$ that are intersected by line segments
in $\L$ to $K(t+1)$. This adds $O_{\eps}(1)$ items in total to $K(t+1)$ since $|\K(t)|\in O_{\eps}(1)$
and $b(S)\le1/\eps$ for each corridor $S$. For closed corridors
we prove the following lemma. 
\begin{lemma} \label{lem:divide-closed-corridors}
	Let $S$ be a closed corridor with $b(S)$ bends. Let $\optsk(S)$
	denote the items in $\optsk(\C_{t})$ that are contained in $S$.
	Let $I'\subseteq OPT$ be a collection of items which intersect the
	boundary of $S$ with $I'\cap\optsk(\C_{t})=\emptyset$. Then there
	is a collection of $O_{\eps}(|I'|^{2}/\eps)$ line segments $\mathcal{L}$
	within $S$ which partition $S$ into a collection of closed corridors
	with at most $1/\eps$ bends each and possibly an open corridor with
	$b(S)$ bends such that no item from $I'$ is intersected by $\mathcal{L}$
	and there is a set of items $\optsk'(S)\subseteq\optsk(S)$ with $|\optsk'(S)|\le O_{\eps}(|I'|^{2})$
	such that the items in $\optsk(S)\setminus\optsk'(S)$ intersected
	by line segments in $\L$ have a total profit of at most $O(\eps)\cdot p(\optsk(\C_{t}))$.
\end{lemma}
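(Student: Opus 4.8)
The plan is to mimic the construction of Lemma~\ref{lem:divide-open-corridors}, adapting the line‑extension procedure to the cyclic topology of a closed corridor. First I would dispose of the trivial case: if no item of $I'$ intersects $S$, take $\mathcal{L}=\emptyset$ and $\optsk'(S)=\emptyset$; since $b(S)\le 1/\eps$, $S$ is already a closed corridor with at most $1/\eps$ bends and there is nothing to prove. So assume the set $I'_S\subseteq I'$ of items meeting $S$ is non‑empty, and note $|I'_S|\le|I'|=O_\eps(1)$. The overall goal is to ``unroll'' the ring: draw line segments that avoid $I'$ and cut $S$ into a bounded number of faces, each of which is an open corridor with at most $b(S)$ bends or a closed corridor with at most $b(S)\le1/\eps$ bends, while cutting only few skewed items of large profit and otherwise only items of small total profit.

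For each $i\in I'_S$ I would run the extension procedure of Lemma~\ref{lem:divide-open-corridors} starting from the long edges of $i$ inside the subcorridor of $S$ containing the relevant part of $i$: extend a segment parallel to the long direction of that subcorridor, maximally far subject to not crossing any item of $I'$ and not crossing a boundary curve of $S$; upon reaching the boundary curve between two consecutive subcorridors, turn perpendicularly and continue into the next one, and iterate. The one genuinely new phenomenon compared with the open case is that such a chain of segments can wind around the ring. The key sub‑claim is that it winds around \emph{at most once}: after $b(S)$ turns the chain re‑enters the subcorridor of $S$ containing $i$, and there it must be blocked --- by $i$ itself (which, having one side longer than the width of that subcorridor, spans it), by the boundary of $S$, or by some other item of $I'_S$. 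Hence a chain started at $i$ uses at most $b(S)$ segments before it terminates or hits another item of $I'_S$, at which point I restart an extension there. As there are at most $|I'_S|$ items, each chain is restarted at most $|I'_S|$ times, so the segments emanating from $i$ number $O(|I'_S|\cdot b(S))$, and in total $|\mathcal{L}|=O(|I'_S|^{2}\cdot b(S))=O_\eps(|I'|^{2}/\eps)$, as required; by construction $\mathcal{L}$ crosses no item of $I'$. One then checks, exactly by the ``boundary subcorridor / $U$-bend'' reasoning of Lemma~\ref{lem:spiral}.\ref{claim:Ubends} and Lemma~\ref{lem:spiral}.\ref{claim:Zbend} used in Section~\ref{sec:structural:boxes}, that the faces cut out of $S$ by $\mathcal{L}$ are corridors: the pieces where a chain closes a loop around the ring, or where two chains run on the two sides of an item of $I'_S$, become open corridors with at most $b(S)$ bends, and the pieces trapped between nested sub‑loops are closed corridors inheriting at most $b(S)\le1/\eps$ bends; by choosing to open the ring at one fixed item $i^{*}\in I'_S$ and performing the remaining cuts inside the resulting region, one keeps the open pieces under control (a single nontrivial open corridor of $b(S)$ bends, the rest being boxes).

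For the cut items I would split $\mathcal{L}$ into the \emph{critical} segments --- those incident to the boundary of some item of $I'_S$ or directly connecting two such items, of which there are $O_\eps(|I'|^{2})$ --- and the remaining \emph{bulk} segments. Every segment of $\mathcal{L}$ runs parallel to the long direction of the subcorridor it lies in, so, exactly as in the proof of Lemma~\ref{lem:divide-open-corridors}, it meets each skewed item parallel to that item's longer edge and therefore crosses at most $O_\eps(1)$ items of $\optsk(\C_t)$ (the long items it meets lie side by side, hence $O(1)$ of them; the thinner ones along it number $O_{\eps,\epsl}(1)$ since the subcorridor has bounded width). Thus the critical segments cut at most $O_\eps(|I'|^{2})$ items in total, and I let $\optsk'(S)$ be exactly this set, so $|\optsk'(S)|\le O_\eps(|I'|^{2})$. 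For the bulk segments, rather than fixing their offsets arbitrarily I would use a shifting argument in the spirit of Section~\ref{sec:structural:boxes}: the winding can be carried out at any of $\Omega(1/\eps)$ parallel candidate placements that avoid $I'$, the candidate placements cut pairwise essentially disjoint subsets of $\optsk(\C_t)$, and choosing the placement minimizing the profit of the cut items gives total cut profit $O(\eps)\cdot p(\optsk(\C_t))$; hence the items of $\optsk(S)\setminus\optsk'(S)$ intersected by $\mathcal{L}$ have profit at most $O(\eps)\cdot p(\optsk(\C_t))$.

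The hard part will be the joint bookkeeping forced by the ring topology: making the ``winds around at most once'' claim fully rigorous (in particular that the starting item really stops its own chain, which needs that item to span the width of its subcorridor, and correctly treating items of $I'$ that poke in from the \emph{inner} rather than the \emph{outer} boundary of $S$), while \emph{simultaneously} guaranteeing (a) that $\mathcal{L}$ stays clear of $I'$, (b) that the faces are genuine corridors with the claimed bend counts and with the open pieces under control, and (c) that the bulk segments admit enough free offsets for the shifting step to yield the $O(\eps)\,p(\optsk(\C_t))$ profit bound. Once the structure is in place, the counting of $|\mathcal{L}|$ and $|\optsk'(S)|$ and the profit estimate are routine, using only the per‑segment intersection bound from Lemma~\ref{lem:divide-open-corridors} and the bend bounds from Lemma~\ref{lem:spiral}.
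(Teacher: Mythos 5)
Your construction breaks at exactly the point you flag as ``the hard part'', and the fix you sketch does not work. The sub-claim that a chain started at $i\in I'$ winds around the ring at most once because, upon re-entering the subcorridor containing $i$, it ``must be blocked by $i$ itself (which spans the width of that subcorridor), by the boundary of $S$, or by another item of $I'$'' is unjustified and false in general. Items of $I'$ are \emph{not} skewed items of $\optsk(\C_t)$ (indeed $I'\cap\optsk(\C_t)=\emptyset$); they merely intersect the boundary of $S$ and may poke into the corridor by an arbitrarily small amount, so they need not span the thickness of their subcorridor. Moreover, when the chain returns after one loop it does so at a different offset (each segment stops where it meets the monotone boundary curve, and that meeting point depends on the offset), so it is not blocked by $i$, and it does not hit its own earlier segments either, since parallel segments at different offsets never cross. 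Hence the chain can spiral inside the closed corridor an unbounded (non-$O_\eps(1)$) number of times; with that, your bounds $|\mathcal{L}|=O(|I'|^2 b(S))$, the bend counts of the resulting faces, and the $O_\eps(|I'|^2)$ bound on the ``critical'' cut items all collapse. Your offset-shifting over ``bulk segments'' does not repair this: no choice of offset prevents the spiraling, and the faces cut out by a spiraling path are not corridors with at most $1/\eps$ bends without further surgery.

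The paper's proof accepts the spiraling and terminates the path by a \emph{shortcut}: at each bend one records the set of items of $\optsk(S)$ that would be crossed parallel to their \emph{shorter} edge if, instead of bending, the path continued straight until it hits itself; by the construction of the boundary curves these sets are pairwise disjoint across bends, so among the first $O(|I'|/\eps)$ bends some bend has crossing set of profit at most $\frac{\eps}{|I'|}p(\optsk(S))$, and the path is closed there by cutting through those items. Summing over the at most $|I'|$ starting items gives total shortcut profit at most $\eps\,p(\optsk(S))$, which is where the $O(\eps)\,p(\optsk(\C_t))$ term in the lemma comes from; the resulting open corridor with $O(|I'|/\eps)$ bends is then divided into $O(|I'|)$ corridors with at most $1/\eps$ bends each at a further $(1+\eps)$-factor loss, and the $O_\eps(|I'|^2)$ items cut parallel to their longer edge form $\optsk'(S)$. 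Your proposal contains no mechanism playing the role of this shortcut-plus-averaging step, and without it the statement cannot be reached along the route you describe.
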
 \begin{proof} Similarly as for the case of open corridors,
we take each item $i\in I'$ whose edge $e$ is contained in $S$,
and we create a path containing $e$ that partitions $S$. Here the
situation is a bit more complicated, as our newly created paths could
extend over more than $\frac{1}{\eps}$ bends inside $S$. In this
case we will have to do some shortcutting, i.e., some items contained
in $S$ will be crossed parallel to their shorter edge and we cannot
guarantee that their total number will be small. However, we will
ensure that the total weight of such items is small. We proceed as
follows (see Figure~\ref{fig:weighted-circumvent} for a sketch).

Consider any item $i\in I'$ and assume without loss of generality that $i$ intersects
a horizontal subcorridor $S_{i}$ of the closed corridor $S$. Let
$e$ be the edge of $i$ within $S_{i}$. For each endpoint of $e$
we create a path $p$ as for the case of closed corridors. If after
at most $b(S)\le1/\eps$ bends the path hits an item of $I'$ (possibly
the same item $i$), the boundary of $S$ or another path created
earlier, we stop the construction of the path. Otherwise, if $p$
is the first path inside of $S$ that did not finish after at most
$b(S)$ bends, we proceed with the construction of the path, only
now at each bend we check the total weight of the items of $\optsk(S)$
that would be crossed parallel to their shorter edge, if, instead
of bending, the path would continue at the bend to hit itself. From
the construction of the boundary curves in the intersection of two
subcorridors we know that for two bends of the constructed path, the
sets of items that would be crossed at these bends of the path are
pairwise disjoint. Thus, after $O(|I'|/\eps)$ bends we encounter
a collection of items $\optsk''(S)\subseteq\optsk(S)$ such that $p(\optsk''(S))\le\frac{\eps}{|I'|}p(\optsk(S))$,
and we end the path $p$ by crossing the items of $\optsk''(S)$.
This operation creates an open corridor with up to $O(|I'|/\eps)$
	bends. We divide it into up to $O(|I'|)$ corridors with up to $1/\eps$
	bends each. Via a shifting argument we can argue that this loses at
	most a factor of $1+\eps$ in the profit due to these items. When
we perform this operation for each item $i\in I'$ the total weight
of items intersected parallel to their shorter edge (i.e., due to
the above shortcutting) is bounded by $|I'|\cdot\frac{\eps}{|I'|}p(\optsk(S))=\eps\cdot p(\optsk(S))$.
This way, we introduce at most $O(|I'|^{2}/\eps)$ line segments.
Denote them by $\L$. They intersect at most $O_{\eps}(|I'|^{2})$
items parallel to their respective longer edge, denote them by $\optsk'(S)$.
Thus, the set $\L$ satisfies the claim of the lemma. \end{proof}
Similarly as for Lemma~\ref{lem:divide-open-corridors} we apply
Lemma~\ref{lem:divide-closed-corridors} to each closed corridor.
We add all items in the respective set $\optsk'(S)$ to the set $K(t+1)$ which yields
$O_{\eps}(1)$ many items.
The items in $\optsk(S)\backslash\optsk'(S)$ are removed from the instance, as their total profit is small.

\subsubsection{Partitioning corridors into boxes}\label{sec:weighted:shifting:boxes}

Then we partition the resulting corridors into boxes according to
the different cases described in Section~\ref{sec:structural:lemma}.
There is one difference to the argumentation above: we define that
the set $\optfa$ contains not only skewed items contained in the
respective subregions of a subcorridor, but \emph{all }items contained
in such a subregion. In particular, this includes items that might
have been considered as small items above. Thus, when we move items
from one subregion to the box associated to the subregion below (see
Remark~\ref{rem:fatPack}) then we move \emph{every }item that is
contained in that subregion. If an item is killed in one of the orderings
of the subcorridors to apply the procedure from Section~\ref{sec:structural:boxes}
then we add it to $K(t+1)$. Note that $|K(t+1)|\in O_{\eps,\epsl,\epst}(1)$
and $\K(t)\cap K(t+1)=\emptyset$.
Also note here that we ignore for the moment small items that cross the boundary curves of the subcorridors; they will be taken care of in Section \ref{sec:weighted:shifting:small}.

\subsubsection{Partitioning boxes into containers}\label{sec:weighted:shifting:containers}

Then we subdivide the boxes into containers. We apply Lemma~\ref{lem:containerPack}
to each box with a slight modification. Assume that we apply it to
a box of size $a\times b$ containing a set of items $I_{box}$. Like
above we first remove the items in a thin strip of width $3\eps b$
such that via a shifting argument the items (fully!) contained in
this strip have a small profit of $O(\eps)p(I_{box})$. However, in
contrast to the setting above the set $I_{box}$ contains not only
skewed items but also small items. We call an item $i$ \emph{small
}if there is no cell $\cell$ such that $i\in\optla(\cell)\cup\optho(\cell)\cup\optve(\cell)$
and denote by $\optsm(\C_{t})$ the set of small items. When we choose
the strip to be removed we ensure that the profit of the removed skewed
\emph{and} small items is small. There are $O_{\eps}(1)$ skewed items
that partially (but not completely) overlap the strip whose items
we remove. We add those $O_{\eps}(1)$ items 
to $K(t+1)$. Small items that partially overlap the strip are taken care of later in Section \ref{sec:weighted:shifting:small}, we ignore them for the moment.
Then we apply Lemma~\ref{lem:structural_lemma_augm}. In contrast to the setting
above, we do not only apply it to the skewed items but apply it also
to small items that are contained in the box. Denote by $\optsm'(\C_{t})$
the set of small items that are contained in some box of the box partition.

Thus, we obtain an L\&C packing for the items in $\K(t)$, for a set
of items $\optsk'(\C_{t})\subseteq\optsk(\C_{t})$, and for a set
of items $\optsm''(\C_{t})\subseteq\optsm'(\C_{t})$ with total profit at least $(1-O(\eps))p(\optsk(\C_{t})\cup\optsm'(\C_{t}))$.

\subsubsection{Handling small items}\label{sec:weighted:shifting:small}

So far we ignored the small items in $\optsm''(\C_{t}):=\optsm(\C_{t})\setminus\optsm'(\C_{t})$.
This set consists of small items that in the original packing intersect
a line segment of the corridor partition, the boundary of a box, or
a boundary curve within a corridor. We describe now how to add them
into the empty space of the so far computed packing. First, we assign
each item in $\optsm''(\C_{t})$ to a grid cell. We assign each small
item $i\in\optsm''(\C_{t})$ to the cell $C$ such that in the original
packing $i$ intersects with $C$ and the area of $i\cap C$ is not
smaller than $i\cap C'$ for any cell $C'$ ($i\cap C'$ denotes the
part of $i$ intersecting $C'$ in the original packing for any grid
cell $C'$).

Consider a grid cell $C$ and let $\optsm''(C)$ denote the small
items in $\optsm''(\C_{t})$ assigned to $C$. Intuitively, we want
to pack them into the empty space in the cell $C$ that is not used
by any of the containers, similarly as above. We first prove an analog
of Lemma~\ref{lem:areaContainer} of the setting above.

\begin{lemma}\label{lem:areaContainerWeighted} Let $\cell$ be a cell.
	The total area of $C$ occupied by containers is at most $(1-2\eps)a(\cell)$.
\end{lemma}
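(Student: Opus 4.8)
The plan is to localize the proof of Lemma~\ref{lem:areaContainer} to a single grid cell. First I would observe that the statement only needs to be proved for a cell $C$ that is not entirely covered by an item of $\K(t)$: the grid lines of $G(t)$ are precisely the edges of the items in $\K(t)$, so every cell is either contained in one such item or disjoint from all of them, and in the former case $C$ contains no small item of $\optsm''(\C_t)$, which is the only situation in which the area bound is used in Section~\ref{sec:weighted:shifting:small}. For a cell $C$ of the latter kind, $C$ is disjoint from every item of $\K(t)$, hence meets no container of the form ``one item of $\K(t)$'', and therefore every container intersecting $C$ is one of the box-containers produced by the constructions of Sections~\ref{sec:weighted:shifting:boxes}--\ref{sec:weighted:shifting:containers}.

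Second, I would arrange that every box of the decomposition is contained in a single grid cell, by additionally cutting the $O_\eps(1)$ corridors and boxes along the $O_\eps(1)$ lines of $G(t)$ before the box-to-container step. Cutting an axis-aligned box by an axis-aligned line produces two axis-aligned boxes; a horizontal (resp.\ vertical) container cut by a horizontal (resp.\ vertical) line, and an area container cut by any line, split into containers of the same type with the induced item assignment (area containers being shrunk further if a piece becomes too small for its items). The only items such cuts can destroy are those straddling a grid line inside a box: the skewed ones are crossed along their long side and number $O_\eps(1)$ per cut, so they can be moved into $K(t+1)$ exactly as in Lemmas~\ref{lem:divide-open-corridors} and~\ref{lem:divide-closed-corridors}, while the small ones belong to $\optsm''(\C_t)$ and are repacked separately. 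After this refinement the boxes meeting $C$ are exactly those contained in $C$; being pairwise disjoint, their total area is at most $a(C)$.

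The area bound then follows as in Lemma~\ref{lem:areaContainer}: for a box $B$ of size $a\times b$, say horizontal, the modified Container Packing step of Section~\ref{sec:weighted:shifting:containers} first discards all items fully contained in a horizontal strip of height $3\eps b$ and then invokes the Resource Augmentation Packing Lemma~\ref{lem:structural_lemma_augm} with $\epsau\le\eps$, so that all resulting containers fit inside a sub-box of size $a\times(1-3\eps)(1+\epsau)b$, which has area at most $(1-2\eps)\,a(B)$ since $(1-3\eps)(1+\eps)\le 1-2\eps$; the vertical case is symmetric. Summing over the boxes contained in $C$,
\[
\sum_{B\subseteq C}\bigl(\text{area of containers inside }B\bigr)\ \le\ (1-2\eps)\sum_{B\subseteq C}a(B)\ \le\ (1-2\eps)\,a(C).
\]
The main obstacle is the second step: making the box and container decomposition compatible with the cell grid while keeping every container of one of the three admissible types and bounding the $O_\eps(1)$ additional skewed items that are pushed into $K(t+1)$. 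Given the line-crossing machinery already developed this is routine bookkeeping, but it is the only point that requires care --- the rest is a direct transcription of the global argument of Lemma~\ref{lem:areaContainer}.
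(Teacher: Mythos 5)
Your preliminary observation (cells fully covered by an item of $\K(t)$ can be excluded, since no small item is assigned to them) is fine, but the heart of your argument --- recutting the corridors and boxes along the grid lines of $G(t)$ so that every box lies inside a single cell --- has a genuine gap. The claim that each such cut destroys only $O_\eps(1)$ skewed items ``crossed along their long side'' fails for cuts perpendicular to the box orientation: a vertical grid line passing through a horizontal box crosses the horizontal items inside it parallel to their \emph{shorter} edge, and there can be arbitrarily many such items (e.g.\ many items of height $1$ and width $>\epsl N$ stacked on top of one another), with no bound on their total profit. The counting argument of Lemma~\ref{lem:divide-open-corridors} applies only because the line segments constructed there run parallel to the longer edge of the items they cross, and the shortcutting in Lemma~\ref{lem:divide-closed-corridors} replaces counting by an averaging argument over \emph{where} to cut; neither is available to you, since the grid lines are fixed by $\K(t)$ and the items straddling one of them may carry a constant fraction of the profit. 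Hence you can neither move them into $K(t+1)$ (its cardinality would no longer be $O_\eps(1)$, which breaks the next-round grid and the shifting argument) nor discard them, and the ``routine bookkeeping'' you defer is exactly the step that does not go through.

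The paper avoids making the boxes respect the cell grid altogether and instead proves the bound by a charging argument inside the unmodified decomposition: in each horizontal box of size $a\times b$ the containers occupy only a sub-box of height $(1-3\eps)(1+\epsau)b$, and if the box does not cross the top or bottom edge of $\cell$, the freed strip of height $3\eps b$, restricted to $\cell$, stays inside $\cell$; horizontal boxes that do cross those edges, as well as the area pushed into $\cell$ from neighbouring cells when items were shifted within their corridors (Remark~\ref{rem:fatPack}), have thickness at most $\epsl\cdot h(\cell)$ inside $\cell$ and are charged $O(\epsl\,a(\cell))$ in total (symmetrically for vertical boxes), giving $(1-3\eps)(1+\epsau)a(\cell)+O(\epsl\,a(\cell))\le(1-2\eps)a(\cell)$. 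If you wish to keep a cell-by-cell decomposition, the perpendicular crossings would have to be handled by a charging argument of this kind rather than by deleting the crossed items.
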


\begin{proof}
	
	In our construction of the boxes we moved some of the items (within
	a corridor). In particular, it can happen that we moved some items
	into $C$ that were originally in some other grid cell $C'$. This
	reduces the empty space in $C$ for the items in $\optsm''(C)$. Assume
	that there is a horizontal subcorridor $H$ intersecting $C$ such
	that some items or parts of items within $H$ were moved into $C$
	that were not in $C$ before. Then such items were moved vertically
	and the corridor containing $H$ must intersect the upper or lower
	boundary of $C$. The part of this subcorridor lying within
	$C$ has a height of at most $\epsl\cdot h(C)$. Thus, the total area
	of $C$ lost in this way is bounded by $O(\epsl a(C))$ which includes
	analogous vertical subcorridors.
	
	Like in Lemma~\ref{lem:areaContainer} we argue that in each horizontal
	box of size $a\times b$ we remove a horizontal strip of height $3\eps b$
	and then the created containers lie in a box of height $(1-3\eps)(1+\epsau)b$.
	In particular, if the box does not intersect the top or bottom edge
	of $C$ then within $C$ its containers use only a box of dimension
	$a'\times(1-3\eps)(1+\epsau)b$ where $a'$ denotes the width of the
	box within $C$, i.e., the width of the intersection of the box with
	$C$. If the box intersects the top or bottom edge of $C$ then we
	cannot guarantee that the free space lies within $C$. However, the
	total area of such boxes is bounded by $O(\epsl a(C))$. We can apply
	a symmetric argument to vertical boxes. Then, the total area of $C$
	used by containers is at most $(1-3\eps)(1+\epsau)a(C)+O(\epsl a(C))\le(1-2\eps)a(C)$.
	This gives the claim of the lemma. \end{proof}

Next, we argue that the items in $\optsm''(C)$ have very small total
area. Recall that they are the items intersecting $C$ that are not
contained in a box. The total number of boxes and boundary curves
intersecting $C$ is $O_{\eps,\epsl}(1)$ and in particular, this
quantity does not depend on $\epss$. Hence, by choosing $\epss$
sufficiently small, we can ensure that $a(\optsm''(C))\le\eps a(C)$.
Then, similarly as in Lemma~\ref{lem:smallPack} we can argue that
if $\epss$ is small enough then we can pack the items in $\optsm''(C)$
using NFDH into the empty space within $C$.

\subsubsection{L\&C packings}\label{sec:weighted:shifting:packing}

We iterate the above construction, obtaining
pairwise disjoint sets $K(1),K(2),...$ until we find a set $K(k)$
such that $p(K(k))\le\eps\cdot OPT$. Since the sets $K(0),K(1),...$
are pairwise disjoint there must be such a value $k$ with $k\le1/\eps$.
Thus, $|\K(k-1)|\le O_{\eps}(1)$. We build the grid given by the
$x$- and $y$-coordinates of $\K(k-1)$,
giving a set of cells $\C_{k}$. As described above we define the corridor
partition, the partition of the corridors into boxes (with the different
orders to process the subcorridors as described in Section~\ref{sec:structural:boxes})
and finally into containers. Denote by $\optsm(\C_{k})$
	the resulting set of small items.

We consider the candidate packings based on the grid $\C_{k}$.
For each of the six candidate packings with a degenerate $L$ we can
pack almost all small items of the original packing. We define $I_{\mathrm{lc}}$
	and $I_{\mathrm{sc}}$ the sets of items in long and short subcorridors in the initial
corridor partition, respectively. Exactly as in the cardinality case,
a subcorridor is long if it is longer than $N/2$ and short otherwise.
As before we divide the items into fat and thin items and define the
sets $\SF$, $\LT$, and $\ST$ accordingly. Moreover, we define the
set $\LF$ to contain all items in $I_{\mathrm{lc}}$ that are fat in all candidate
packings \emph{plus} the items in $\K(k-1)$.

Thus, we obtain the respective claims of Lemmas~\ref{lem:onlyFat},
\ref{lem:noShort}, and \ref{lem:evenOdd} in the weighted setting.
For the following lemma let $\optsm:=\optsm(\C_{k})$.
\begin{lemma} \label{lem:weighted-apx}Let $\optrc$ the most profitable
	solution that is packed by an L\&C packing. 
	\begin{enumerate}\renewcommand{\theenumi}{\alph{enumi}}\renewcommand\labelenumi{(\theenumi)}
		\item $p(\optrc)\ge(1-O(\eps))(p(\LF)+p(\SF)+p(\optsm))$ 
		\item $p(\optrc)\geq(1-O(\eps))(p(\LF)+p(\SF)/2+p(\LT)/2+p(\optsm))$ 
		\item $p(\optrc)\geq(1-O(\eps))(p(\LF)+p(\SF)/2+p(\ST)/2+p(\optsm)).$ 
	\end{enumerate}
\end{lemma}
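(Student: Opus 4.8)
The plan is to transplant the proofs of Lemmas~\ref{lem:onlyFat}, \ref{lem:noShort}, and \ref{lem:evenOdd} to the grid $\C_{k}$ produced by the shifting argument of Section~\ref{sec:weighted:shifting:grid}, adding the bookkeeping needed for the small items. Recall that $p(K(k))\le\eps\,\opt$, so $K(k)$ is discarded at a $1+O(\eps)$ loss; the items of $\K(k-1)$ are packed each in a private container and belong to $\LF$; the corridor partition is fixed, while the box (Section~\ref{sec:weighted:shifting:boxes}) and container (Section~\ref{sec:weighted:shifting:containers}) decompositions depend on the processing order, yielding the candidate packings of Section~\ref{sec:structural:lemma}, all of which are L\&C packings with a degenerate $L$. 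The common ingredient I would isolate first is that, in \emph{every} such candidate packing, a $(1-O(\eps))$-profit fraction of $\optsm$ can be added: by Lemma~\ref{lem:areaContainerWeighted} the containers occupy at most $(1-2\eps)\,a(C)$ of each cell $C\in\C_{k}$, and for $\epss$ small enough the small items of $\optsm$ assigned to $C$ have area at most $\eps\,a(C)$, so exactly as in Lemma~\ref{lem:smallPack} and Section~\ref{sec:weighted:shifting:small} NFDH packs almost all of them into the free space of the cells. Since this $(1-O(\eps))p(\optsm)$ guarantee does not depend on which candidate packing we use, it is preserved under the averaging arguments below; likewise $\LF$ (fat in all candidate packings, together with $\K(k-1)$) is always fully packed into containers by Lemma~\ref{lem:containerPack}.

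Given this, for part~(a) I would invoke the ``Fat only'' processing (Case~4): no short subcorridor is deleted, so $\F=\LF\cup\SF$ is entirely fat and packed into containers, and adding the small items yields $p(\optrc)\ge(1-O(\eps))(p(\LF)+p(\SF)+p(\optsm))$. For part~(b) I would sum, over Cases~1a,~1b,~2a,~2b, the profit of the items packed into containers: each $i\in\LF$ is counted $4$ times (all long subcorridors are processed in all four cases), each $i\in\LT$ at least twice by Remark~\ref{rem:processedLast} (its long subcorridor is processed last in at least two of the four cases, since neighbouring short subcorridors are deleted or processed first and neighbouring long subcorridors are processed first in the matching subcase), and each $i\in\SF$ exactly twice (deleted in Case~1 if vertical, in Case~2 if horizontal, fat otherwise); dividing by $4$ and adding $p(\optsm)$ gives~(b). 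Part~(c) is analogous, averaging over Cases~3a and~3b: each $i\in\LF$ is counted twice and each $i\in\SF\cup\ST$ at least once (again by Remark~\ref{rem:processedLast}, exactly as in Lemma~\ref{lem:evenOdd}); dividing by $2$ and adding $p(\optsm)$ gives~(c).

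The step I expect to be the main obstacle is making the small-item packing uniformly valid across all the candidate packings simultaneously: one has to verify that $a(\optsm''(C))\le\eps\,a(C)$ in every cell of every processing order — this is where $\epss$ must be fixed only after all other constants, since the number of boxes and boundary curves meeting a given cell is $O_{\eps,\epsl,\epst}(1)$ and independent of $\epss$ — and that the area of $C$ lost to items shifted in from a neighbouring cell during box construction (bounded by $O(\epsl\,a(C))$ in the proof of Lemma~\ref{lem:areaContainerWeighted}) still leaves a $(1-2\eps)$-fraction of $a(C)$ free for the containers. Once these area accountings are checked for each candidate packing, the three bounds drop out of the same counting/averaging as in the cardinality case.
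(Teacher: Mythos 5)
Your proposal is correct and matches the paper's own (largely implicit) argument: the paper likewise obtains Lemma~\ref{lem:weighted-apx} by rerunning the counting/averaging of Lemmas~\ref{lem:onlyFat}, \ref{lem:noShort} and~\ref{lem:evenOdd} on the candidate packings built over the grid $\C_{k}$, with $K(k)$ dropped at an $O(\eps)$ loss, $\K(k-1)$ kept in private containers and counted inside $\LF$, and a $(1-O(\eps))$-fraction of $\optsm$ added to every candidate packing via Lemma~\ref{lem:areaContainerWeighted} together with the treatment in Sections~\ref{sec:weighted:shifting:containers}--\ref{sec:weighted:shifting:small}. The only minor imprecision is that the small items lying inside boxes are packed within the containers themselves (via the resource-augmentation step), while only the remaining small items crossing box/corridor boundaries go by NFDH into the free cell space; since you defer to those sections for the mechanism, this does not change the argument.
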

For the candidate packing for the non-degenerate-$L$ case
(Lemma~\ref{lem:ringCase} in Section~\ref{sec:structural:lemma}) we first add the small items as described
above. Then we remove the items in $\K(k-1)$.
Then, like above, with a random shift we delete items touching a horizontal
and a vertical strip of width $3\eps N$. Like before, each item $i$
is still contained in the resulting solution with probability $1/2-15\eps$
(note that we cannot make such a claim for the items in $\K(k-1)$).
For each small item we can even argue that it still contained in the
resulting solution with probability $1-O(\eps)$ (since it is that
small in both dimensions).
We proceed with the construction of the
boundary $L$ and the assignment of the items into it like in the
unweighted case.
\begin{lemma} \label{lem:weighted-apx2}For the solution
	$\optrc$ we have that $p(\optrc)\ge(1-O(\eps))(\frac{3}{4}p(\LT)+p(\ST)+\frac{1-O(\eps)}{2}p(\SF)+p(\optsm))$.
\end{lemma}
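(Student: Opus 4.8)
The plan is to replay the argument of Lemma~\ref{lem:ringCase} on the corridor/box/container decomposition built from the grid $\C_k$, adding the small items via the Small Items Packing Lemma~\ref{lem:smallPack} and absorbing the losses incurred when cutting corridors into boxes and boxes into containers into the global $1-O(\eps)$ factor. Concretely, I would start from the fixed optimal packing together with the decomposition and the placement of $\optsm=\optsm(\C_k)$ produced in Sections~\ref{sec:weighted:shifting:boxes}--\ref{sec:weighted:shifting:small}, and first discard the $O_\eps(1)$ items of $\K(k-1)$ from the packing; this is free, since $\K(k-1)$ was folded into $\LF$ and $\LF$ does not appear in the claimed bound. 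Next I set up the boundary $L$ exactly as in Lemma~\ref{lem:ringCase}: take $N'=\epsr N$ with $\epsr\le\eps^2$ and threshold $\ell=(\tfrac12+2\epsl)N$, so that every item of $\ilong$ lies in a long subcorridor and hence $\T\cap\ilong\subseteq\LT$. Applying Lemma~\ref{lem:LoftheRing} to the horizontal and vertical items of $\LT\cap\ilong$ packs a subset of profit at least $\tfrac34 p(\LT\cap\ilong)$ into a boundary $L$ of width $\epsr N$, while the remaining thin items $\ST\cup(\LT\setminus\ilong)$ — each with longer side at most $\ell$ and, by Lemma~\ref{lem:boxProperties}, with total height (resp.\ width) at most $\epsr N$ — are packed one next to the other into two containers of size $\ell\times\epsr N$ and $\epsr N\times\ell$ on the two sides of the knapsack not used by the boundary $L$. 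Thus the thin items contribute at least $\tfrac34 p(\LT\cap\ilong)+p(\ST)+p(\LT\setminus\ilong)\ge\tfrac34 p(\LT)+p(\ST)$, and for $\epsr$ small the region left free contains a square $K''$ of side $(1-\eps)N$.

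It remains to pack $\SF$ together with $\optsm$ into $K''$. Following Lemma~\ref{lem:ringCase}, I would take the positions of $\SF$ in the optimal packing, delete the items meeting a uniformly random vertical strip and a uniformly random horizontal strip of width $3\eps N$ each, and shift the survivors $\SF'$ into a square of side $(1-3\eps)N$. Every $i\in\SF$ has one side at most $\epss N\le\eps N$ and the other at most $\ell$, so it meets the ``short'' strip with probability $O(\eps)$ and the ``long'' strip with probability at most $\tfrac12+O(\eps)$, giving $p(\SF')\ge(\tfrac12-O(\eps))p(\SF)$ for a good choice of strips; a further application of the Resource Augmentation Lemma~\ref{lem:structural_lemma_augm} (as in Lemma~\ref{lem:containerPack}, first removing a relative-height-$3\eps$ strip) packs $\SF''\subseteq\SF'$ with $p(\SF'')\ge(1-O(\eps))p(\SF')$ into $O_\eps(1)$ containers inside a central square of side $(1-3\eps)(1+\epsau)N\le(1-2\eps)N$. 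Since a small item is small in both dimensions it meets each of the two random strips with probability $O(\eps)$, so a $(1-O(\eps))$-profit subset of $\optsm$ survives the same deletions; by the area bound of Lemma~\ref{lem:areaContainerWeighted} the containers just placed occupy at most a $(1-2\eps)$-fraction of each cell of $\C_k$, so Lemma~\ref{lem:smallPack} (whose hypothesis that the boundary $L$ has width at most $\eps^2N$ is exactly what we arranged) lets us pack these surviving small items into new area containers in the free space of the cells.

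Combining the three contributions yields
\begin{align*}
p(\optrc) &\ \ge\ \tfrac34 p(\LT)+p(\ST)+(1-O(\eps))\big(\tfrac12 p(\SF)+p(\optsm)\big)\\
&\ \ge\ (1-O(\eps))\Big(\tfrac34 p(\LT)+p(\ST)+\tfrac{1-O(\eps)}{2}\,p(\SF)+p(\optsm)\Big),
\end{align*}
which is the claim, since $\optrc$ is the most profitable L\&C packing. The main obstacle is the last step: one must check that the random double-strip deletion that makes room for $\SF$ and $\optsm$ inside $K''$ also leaves the small items re-packable — these items may, in the original packing, straddle cell boundaries, box boundaries, and the boundary curves inside corridors — which is precisely what Lemmas~\ref{lem:areaContainerWeighted} and~\ref{lem:smallPack} are designed to guarantee; a secondary technical point is verifying, for $\epsr$ and $\eps$ small enough, that the boundary $L$ plus the two thin side containers genuinely leave a free square of side $(1-\eps)N$ and that the subsequent central square of side $(1-2\eps)N$ fits inside it.
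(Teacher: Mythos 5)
Your proposal is correct and follows essentially the same route as the paper, whose proof of this lemma is just the short argument preceding its statement: remove $\K(k-1)$, replay Lemma~\ref{lem:ringCase} (boundary $L$ plus two side containers for the thin items giving $\tfrac34 p(\LT)+p(\ST)$, random double-strip deletion and resource augmentation for $\SF$), and add the small items, which survive the strips with probability $1-O(\eps)$ and are placed with the area/NFDH machinery. The only cosmetic slip is attributing the free-space guarantee to Lemma~\ref{lem:areaContainerWeighted} (which is stated for the per-cell, degenerate-$L$ packings); in the non-degenerate-$L$ case the relevant guarantee is the one from Lemma~\ref{lem:areaContainer} together with Lemma~\ref{lem:smallPack}, but this does not affect the correctness of your argument.
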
 When we combine Lemmas~\ref{lem:weighted-apx} and \ref{lem:weighted-apx2}
we conclude that $p(\optrc)\ge(17/9+O(\eps))p(OPT)$. Similarly
	as before, the worst case is obtained, up to $1-O(\eps)$ factors,
	when we have that $p(\LT)=p(\SF)=p(\ST)$, $p(\LF)=5p(\LT)/4$, and $p(\optsm)=0$. This completes the proof of Lemma~\ref{lem:apxNoRotation}.

\subsection{Main algorithm}

In this Section we present our main algorithm for the weighted case
of \tdk. It is in fact an approximation scheme for L\&C packings.
Its approximation factor therefore follows immediately from Lemma~\ref{lem:apxNoRotation}. 

Given $\epsilon>0$, we first guess the quantities $\epsl,\epss$,
the proof of Lemma~\ref{lem:item-classification} reveals that there
are only $2/\epsilon+1$ values we need to consider. We choose $\epsr:=\epsilon^{2}$
and subsequently define $\epsb$ according to Lemma~\ref{lem:boxProperties}.
Our algorithm combines two basic packing procedures. The first one
is PTAS to pack items into a constant number
of containers given by Theorem~\ref{thm:container_packing_ptas}.

The second packing procedure is the PTAS for the L-packing problem, see Theorem~\ref{thm:main:Lpacking}.

To use these packing procedures, we first guess whether the optimal
L\&C-packing due to Lemma~\ref{lem:apxNoRotation} uses a non-degenerate
boundary L. 
If yes, we guess a parameter $\ell$ which denotes the
minimum height of the vertical items in the boundary $L$ and the
minimum width of the horizontal items in the boundary $L$. For $\ell$
we allow all heights and widths of the input items that are larger than
$N/2$, i.e., at most $2n$ values. Let $\ilong$ be the items whose
longer side has length at least $\ell$ (hence longer than $N/2$).
We set the width of the boundary $L$ to be $\epsilon^{2}N$ and solve
the resulting instance $(L,\ilong)$ optimally using the PTAS for L-packings
due to Theorem~\ref{thm:main:Lpacking}. Then we enumerate all the possible subsets of
non-overlapping containers in the space not occupied by the boundary
$L$ (or in the full knapsack, in the case of a degenerate $L$),
where the number and sizes of the containers are defined as in Lemma
\ref{lem:containerPack}. In particular, there are at most $O_{\eps}(1)$
containers and there is a set of size $n^{O_{\eps}(1)}$ that
we can compute in polynomial time such that the height and the width
of each container is contained in this set. We compute an approximate
solution for the resulting container packing instance with items $\ishort=I\setminus \ilong$
using the PTAS from Theorem~\ref{thm:container_packing_ptas}. Finally,
we output the most profitable solution that we computed.

\section{Cardinality case without rotations}
\label{sec:tdk_car:refined}
In this section, we present a refined algorithm with approximation factor of $\frac{558}{325}+\eps<1.717$ for the cardinality case when rotations are not allowed. 

\begin{theorem}\label{thm:cardWorot}
	There exists a polynomial-time $\frac{558}{325}+\eps<1.717$-approximation algorithm for cardinality 2DK.
\end{theorem}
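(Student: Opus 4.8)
The strategy is to refine the three-way combination of packings used in the $16/9$-approximation (Theorem~\ref{thm:16/9-apx}) by combining, in a more careful case analysis, (i) a pure container packing obtained via Lemma~\ref{lem:structural_lemma_augm} and Theorem~\ref{thm:container_packing_ptas}, (ii) the one-sided resource-augmentation packings applied to the two slightly-shrunk knapsacks $[0,N]\times[0,N/(1+\eps)]$ and $[0,N/(1+\eps)]\times[0,N]$, and (iii) the PTAS for \fontL-packings of Theorem~\ref{thm:main:Lpacking} applied to the ring of long items produced by Lemma~\ref{lem:LoftheRing}. As before, one first discards all large items (both sides $>\eps N$), losing only a $(1+\eps)$-factor since there can be at most $1/\eps^2$ of them in any feasible solution (or the optimum has $O_\eps(1)$ items and is found by brute force). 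The remaining items are skewed: each is horizontal ($w_i>\eps N$, $h_i\le\eps N$) or vertical. Partition $\opt$ further: $\optlong$ are items with longer side $>N/2$, $\optshort=\opt\setminus\optlong$; and cross-classify by horizontal/vertical. The goal is to extract, as a function of the profit masses of these $O(1)$ subclasses, a solution of value at least $\frac{325}{558}opt$.

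\textbf{Key steps.}
First, for the \emph{long} items, shift them into the four boundary stacks as in Lemma~\ref{lem:LoftheRing}, delete the least profitable stack, and repack the remaining long items into an \fontL\ occupying the full knapsack; running the \fontL-packing PTAS then yields profit $(\tfrac34-O(\eps))\,opt_{long}$. Second, for the \emph{short} items, analyze the two random-strip deletions: a random horizontal strip of width $\eps N$ deletes each horizontal item with probability $O(\eps)$ and each short vertical item with probability $\tfrac12+O(\eps)$ (since short vertical items have height $\le N/2$), and symmetrically for the vertical strip; after shifting, the surviving items fit a reduced knapsack, so the resource-augmentation container PTAS recovers deterministically at least that expected profit. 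This already yields roughly $\tfrac12 opt + \tfrac14 opt_{short}$ from the best of the two strip solutions, matching the $16/9$ bound in the balanced case $opt_{long}=3\,opt_{short}$. The refinement comes from a finer split: subdivide the short items by whether their \emph{short} side is below some threshold $\beta N$ (with $\beta$ a new parameter), and likewise subdivide long items by the length of their short side; then design \emph{additional} candidate packings — e.g. strips of width $\beta N$ rather than $\eps N$, and an \fontL-packing of width $\beta N$ combined with containers in the residual $(1-\beta)N$-square — so that classes which are "bad" for the basic strip argument (wide-but-short horizontal items, which a width-$\eps N$ vertical strip almost never removes but which block a lot of vertical items) become "good" for one of the new packings. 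Taking the best of all $O_\eps(1)$ candidate solutions and writing down the resulting LP over the class-profit variables, the worst-case ratio is $\frac{558}{325}$; one then checks the tight instance and fixes $\beta$ accordingly.

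\textbf{Main obstacle.}
The hard part is precisely the bookkeeping of the extra candidate packings and the accompanying LP: one must design enough packing configurations (combining strips of two different widths, full-knapsack and narrow \fontL-packings, and pure-container packings in the complementary regions) so that \emph{every} profit distribution over the refined item classes is dominated by some configuration at ratio $\le\frac{558}{325}$, while keeping the number of configurations constant and making sure each one is genuinely realizable — in particular that the narrow-\fontL\ plus residual-square decomposition leaves a region large enough (after the $(1+\eps_{ra})$ resource augmentation of Lemma~\ref{lem:structural_lemma_augm}) to host the claimed container profit, and that shifting short items out of a width-$\beta N$ strip really does fit them into the reduced knapsack without colliding with the \fontL. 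Once the configuration set is fixed, verifying the ratio is a finite (if tedious) LP-duality computation; I would carry it out last, after all packings and their profit guarantees are in place, and use it to pin down $\beta$ and confirm $\frac{558}{325}+\eps<1.717$.
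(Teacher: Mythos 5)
Your plan reproduces the ingredients of the $16/9$-approximation (discard large items, the ring-to-\fontL{} shifting of Lemma~\ref{lem:LoftheRing}, the \fontL-packing PTAS, and the two random-strip/resource-augmentation packings), but the step that is supposed to push the ratio from $16/9$ down to $\frac{558}{325}$ is left as an unspecified "finer split by a threshold $\beta N$ plus a configuration LP", and this is precisely where the actual content of the theorem lies. The paper's proof does not work with a profit-class LP at all: the gain comes from (i) packing short items into the rectangular region complementary to the boundary \fontL{} (or to the ring minus one stack) via Steinberg's theorem, with a guarantee that is a function of the \emph{area} of the short items and of the actual dimensions $w_{\fontL},h_{\fontL}$ of the \fontL{} (Lemmas~\ref{lem:smallStein}, \ref{lem4cardgen} and \ref{lem4cardgen2}), and (ii) a case analysis on $w_{\fontL}$, $h_{\fontL}$ and $a(\isfat)$ — including the delicate subcase where $w_{\fontL}>\tfrac12$, which needs a further decomposition of the free space into horizontal boxes via a monotone-chain construction and a balancing parameter $\psi$ — whose worst case, combined with the thin/fat inequalities inherited from the corridor decomposition (the analogues of Lemmas~\ref{lem:onlyFat}--\ref{lem:evenOdd}, i.e.\ \eqref{lem3card} and \eqref{lem4card}), is what produces the constant $\frac{325}{558}$. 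Your proposal neither invokes Steinberg's theorem nor the corridor-based thin/fat classification, so none of the area-versus-profit trade-offs that generate the improved bound are available to you.

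Concretely, two things are missing and would have to be supplied before the argument could be checked: first, realizable packings in which a prescribed fraction of the short items provably fits next to the \fontL{} — this requires an area bound of Steinberg type tied to $w_{\fontL},h_{\fontL}$, not merely a strip of width $\beta N$, and in the regime where the long items occupy a lot of area ($w_{\fontL}$ large) the naive "narrow \fontL{} plus residual square" configuration simply does not exist; second, a derivation of the value $\frac{558}{325}$ from the configurations you actually define. As written, the claim that your (undetermined) configuration family has worst-case ratio $\frac{558}{325}$ is asserted rather than proved, and your own "main obstacle" paragraph acknowledges that the dominating-configuration argument is still to be designed. So the proposal is a reasonable high-level research plan, but it has a genuine gap at the heart of the proof rather than being a proof that deviates from the paper's route.
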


Along this section, we will use $\eps, \epsr, \epsl, \epss$ as defined in Section \ref{sec:weighted}.
Let us define $\efl=\sqrt \eps$. Note than $\efl \ge \eps \ge \epsl \ge \epss$.
For simplicity and readability of the section, sometimes we will slightly abuse the notation and for any small constant depending on $\eps, \epsr, \epsl, \epss$, we will just use $O({\efl})$. Also, since the profit of each item is equal to $1$, instead of $\profit(I)$ for a set of items $I$ we will just write $|I|$.
We will again consider the $\fontL\&C$ packing as in Section~\ref{sec:weighted} and use $\il, \is,$ $\LF, \LT,$ $\SF, \ST$ etc. as defined there. 
We will assume as in the proof of Lemma~\ref{lem:ringCase} that $\ell = \left(\frac{1}{2}+2\eps_{large}\right)N$. That way we make sure that no long rectangle belongs to a short subcorridor. 
Let us define $\ilopt:=\il\cap OPT$ and $\isopt:=\is\cap OPT$. 
Now we give a brief informal overview of the refinement and the cases before we go to the details.

\noindent{\bf Overview of refined packing.}
For the refined packing we will consider several $\fontL\&C$ packings.
Some of the packings are just extensions  of previous packings (such as from Theorem~\ref{thm:16/9-apx} and Lemma~\ref{lem:weighted-apx}).
Then we consider several other new $\fontL\&C$ packings where an $\fontL$-region is packed with items from $\il$ and the remaining region is used for packing items from $\is$ using Steinberg's theorem (See Theorem \ref{thm:steinberg}).
Note that in the definition of $\fontL\&C$ packing in Section \ref{sec:weighted}, we assumed the height of  the horizontal part of $L$-region and width of the vertical part of $L$-region are the same. However, for these new packings we will consider $L$-region where the height of  the horizontal part and width of vertical part may not be the same.
Now several cases arise depending on the structure and profit of the $\fontL$-region. To pack items in $\isopt$ we have three options:\\
1. We can pack items in $\is$ using Steinberg's theorem into one rectangular region. Then we need both sides of the region to be greater than $\frac12+2\epsl$. \\
2. We can pack items in $\is$ using Steinberg's theorem such that
vertical and horizontal items are packed separately into different vertical and horizontal rectangular regions inside the knapsack.\\
3. If $a(\isopt)$ is large, we might pack only a small region with items in $\ilopt$, and use the remaining larger space in the knapsack to pack significant fraction of profits from $\isopt$.\\
Now depending on the structure of $L$-packing and $a(\isopt)$, we arrive at several different cases.
If the $\fontL$-region has very small width and height, we have case (1).
Else if $\fontL$-region has very large width (or height), we have case (2B), where we pack nearly 
$\frac12|\ilopt|$  in the $\fontL$-region and then pack items from $\is$ in one large rectangular region.
Otherwise, we have case (2A), where either we pack only items from $\ilthin$ (See Lemma \ref{lem4cardgen}, used in case: (2Ai)) or nearly $3/4|\ilopt|$ (See Lemma \ref{lem4cardgen2}, used in cases (2Aii), (2Aiiia)) or in another case, we pack the vertical and horizontal items in $\isopt$ in two different regions through a more complicated packing (See case (2Aiiib)). The details of these cases will be clearer in the proof of Theorem \ref{thm:cardWorot}.\\

Now first, we start with some extensions of previous packings.
Note that by using analogous arguments as in the proof of Theorem~\ref{thm:16/9-apx}, we can derive the following inequalities leading to a $\left(\frac{16}{9}+O(\efl)\right)$-approximation algorithm.

\begin{equation} \label{lem1card}
|\optrc|\geq \frac{3}{4}|\ilopt|
\end{equation}

\begin{equation} \label{lem2card}
|\optrc|\geq \left(\frac{1}{2} -O(\efl)\right)|\ilopt|+\left(\frac{3}{4}-O(\efl)\right)|\isopt|
\end{equation}

Now from Lemma~\ref{lem:boxProperties}, items in $\isthin$ can be packed into two containers of size $\ell \times \eps_{ring}N$ and $\eps_{ring} N \times \ell$. We can adapt part of the results in Lemma~\ref{lem:weighted-apx} to obtain the following inequalities.
\begin{proposition} The following inequalities hold:
	\begin{equation} \label{lem3card} |\optrc| \geq (1-O(\eps_{\fontL}))(|\ilfat|+|\isfat|).\end{equation}
	\begin{equation} \label{lem4card} |\optrc| \geq (1-O(\eps_{\fontL}))(|\ilfat| + \frac{1}{2}(|\isfat| + |\ilthin|)).\end{equation} \end{proposition}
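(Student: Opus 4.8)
The plan is to derive both inequalities directly from parts (a) and (b) of Lemma~\ref{lem:weighted-apx}, specialized to the cardinality setting. Since every item has profit $1$ we have $p(\cdot)=|\cdot|$ throughout, and since $p(\optsm)\ge 0$ we may simply drop that term; thus Lemma~\ref{lem:weighted-apx}(a) yields $|\optrc|\ge(1-O(\eps))(|\LF|+|\SF|)$ and Lemma~\ref{lem:weighted-apx}(b) yields $|\optrc|\ge(1-O(\eps))(|\LF|+|\SF|/2+|\LT|/2)$. It then remains to bound the subcorridor-based quantities $|\LF|,|\SF|,|\LT|$ from below by the item-based quantities $|\ilfat|,|\isfat|,|\ilthin|$ appearing in the claim, and finally to absorb the $O(\eps)$ loss into $O(\efl)$ using $\eps=\efl^{2}\le\efl$.

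The combinatorial heart is a short amount of set bookkeeping. First, because we fixed $\ell=(\tfrac12+2\epsl)N>N/2$ (as in the proof of Lemma~\ref{lem:ringCase}), no long item can sit in a short subcorridor, so $\ilopt\subseteq I_{\mathrm{lc}}$. Second, recall that, up to the $O_{\eps}(1)$ items collected in $\K(k-1)$, the optimal solution decomposes into fat items $\F$ and thin items $\T$, with $\F\cap\T=\emptyset$ and $\K(k-1)\cap\F=\emptyset$; that fat items lie inside subcorridors, so $\F=(\F\cap I_{\mathrm{lc}})\sqcup(\F\cap I_{\mathrm{sc}})$; and that in the shifting construction $\LF=(\F\cap I_{\mathrm{lc}})\cup\K(k-1)$, $\SF=\F\cap I_{\mathrm{sc}}$, $\LT=\T\cap I_{\mathrm{lc}}$ (in particular $\LF\cap\SF=\emptyset$). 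Combining these facts: $\ilfat=\ilopt\setminus\T\subseteq(\F\cap I_{\mathrm{lc}})\cup\K(k-1)=\LF$; $\ilthin=\ilopt\cap\T\subseteq I_{\mathrm{lc}}\cap\T=\LT$; and $\isfat=\isopt\setminus\T\subseteq OPT\setminus\T\subseteq \F\cup\K(k-1)=\LF\cup\SF$, which therefore splits as $|\isfat|=|\isfat\cap\LF|+|\isfat\cap\SF|$. Since moreover $\LF\cup\SF\supseteq\F\cup\K(k-1)\supseteq OPT\setminus\T=\ilfat\sqcup\isfat$, we immediately get $|\LF|+|\SF|\ge|\ilfat|+|\isfat|$, which together with Lemma~\ref{lem:weighted-apx}(a) proves \eqref{lem3card}.

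For \eqref{lem4card} I would count as follows: $\ilfat$ and $\isfat\cap\LF$ are disjoint subsets of $\LF$ (they lie in $\ilopt$ and $\isopt$ respectively), so $|\LF|\ge|\ilfat|+|\isfat\cap\LF|$; also $|\SF|\ge|\isfat\cap\SF|$ and $|\LT|\ge|\ilthin|$. Adding these,
\[
|\LF|+\tfrac12|\SF|+\tfrac12|\LT|\ \ge\ |\ilfat|+\tfrac12|\ilthin|+|\isfat\cap\LF|+\tfrac12|\isfat\cap\SF|\ =\ |\ilfat|+\tfrac12|\ilthin|+\tfrac12|\isfat|+\tfrac12|\isfat\cap\LF|\ \ge\ |\ilfat|+\tfrac12\big(|\isfat|+|\ilthin|\big),
\]
where I used $|\isfat|=|\isfat\cap\LF|+|\isfat\cap\SF|$ and discarded the nonnegative slack $\tfrac12|\isfat\cap\LF|$. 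Plugging this into Lemma~\ref{lem:weighted-apx}(b) and replacing $1-O(\eps)$ by the weaker $1-O(\efl)$ finishes the proof.

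The step I expect to be the main obstacle is not any calculation but keeping the two orthogonal dichotomies straight — ``long versus short \emph{item}'' (decided by the longer side against $\ell$) against ``long versus short \emph{subcorridor}'' (decided by its length against $N/2$) — and pinning down precisely where the killed, large, and cross items of $OPT$ land, namely in $\K(k-1)$, which is exactly why it must be counted into $\LF$ rather than discarded. If one insists on retaining small items in the analysis, the only extra point to check is that they contribute nonnegatively and were already absorbed into the dropped $p(\optsm)$ term; once all these containments are stated, both inequalities follow from the two cited bounds with no further work.
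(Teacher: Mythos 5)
Your bookkeeping for the thin/long side is fine, and deriving \eqref{lem4card} from Lemma~\ref{lem:weighted-apx}(b) instead of re-running the averaging over the four subcases of cases 1--2 (which is what the paper does, crediting $\isopt\cap\LF$ with full weight in all four packings) is a legitimate alternative route. The genuine gap is in the step ``$\isfat=\isopt\setminus\T\subseteq OPT\setminus\T\subseteq \F\cup\K(k-1)=\LF\cup\SF$'', which you use for \eqref{lem3card} and again (via $|\isfat|=|\isfat\cap\LF|+|\isfat\cap\SF|$) for \eqref{lem4card}. This inclusion is false: $OPT\setminus\T$ also contains the small items $\optsm$, which by definition are short and not thin, hence lie in $\isfat$, but they are \emph{not} contained in $\F\cup\K(k-1)$; they are exactly the third, disjoint block in the identity the paper uses, $\LF\,\dot\cup\,\SF\,\dot\cup\,\optsm=\ilfat\,\dot\cup\,\isfat$. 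In the extreme case where $OPT$ consists only of small items your intermediate claim $|\LF|+|\SF|\ge|\ilfat|+|\isfat|$ would read $0\ge|OPT|$. The error enters precisely when you ``simply drop'' the $p(\optsm)$ term from Lemma~\ref{lem:weighted-apx}: unlike in the weighted skewed-item analysis, here $\optsm$ can carry a constant fraction of the cardinality and is needed to cover $\optsm\subseteq\isfat$.

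The repair is short and brings you back to the paper's argument: keep the $\optsm$ term. For \eqref{lem3card}, $|\LF|+|\SF|+|\optsm|=|\ilfat|+|\isfat|$ (both unions disjoint, up to the $O(\eps)$-fraction of removed intermediate/cross items already absorbed in the $1-O(\efl)$ factor), and Lemma~\ref{lem:weighted-apx}(a) gives the claim. For \eqref{lem4card}, split $\isfat=(\isfat\cap\LF)\,\dot\cup\,(\isfat\cap\SF)\,\dot\cup\,\optsm$ rather than into two parts; then Lemma~\ref{lem:weighted-apx}(b) yields
\begin{align*}
|\LF|+\tfrac12|\SF|+\tfrac12|\LT|+|\optsm|
&\ge |\ilfat|+|\isfat\cap\LF|+\tfrac12|\isfat\cap\SF|+\tfrac12|\optsm|+\tfrac12|\ilthin|\\
&\ge |\ilfat|+\tfrac12\bigl(|\isfat|+|\ilthin|\bigr),
\end{align*}
which is the desired bound. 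With this correction your proof is valid; as written, the set containment it rests on does not hold.
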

\begin{proof} Inequality~\eqref{lem3card} follows directly from Lemma~\ref{lem:weighted-apx} since $\LF \cup \SF \cup OPT_{small}= (\ilfat)\cup(\isfat)$ and both groups of sets are disjoint. Inequality~\eqref{lem4card} follows from Lemma~\ref{lem:noShort}: if we consider the sum of the number of packed rectangles corresponding to the $4$ subcases associated with the case ``short horizontal/short vertical'', then every $i \in \ilfat\subseteq \LF$ appears four times, every $i\in \isopt \cap \LF$ appears four times, every $i\in \SF$ appears twice and every $i\in \ilthin$ appears twice. After including a $(1-O(\eps_{\fontL}))$ fraction of $OPT_{small}$, and since $(\isopt\cap \LF) \cup \SF \cup OPT_{small}= \isfat$, the inequality follows by taking average of the four packings \end{proof}

The following lemma is a consequence of Steinberg's Theorem~\ref{thm:steinberg}.
\begin{lemma}\label{lem:smallStein}
	Let $I'$ be a set of items such that $\displaystyle\max_{i \in I'} \height(i) \le \left(\frac{1}{2}+2\eps_{large}\right)N$ and $\displaystyle\max_{i \in I'} \width(i) \le \left(\frac{1}{2}+2\eps_{large}\right)N$.
	Then for any $\alpha, \beta \le \frac{1}{2}-2\eps_{large}$, all of $I'$ can be packed into a knapsack  of width $(1-\alpha)N$ and height  $(1-\beta)N$ if
	$$a(I') \le \Big(\frac12 -(\alpha+\beta)\left(\frac12+2\epsl\right)-8\epsl^2 \Big)N^2.$$
\end{lemma}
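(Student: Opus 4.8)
\textbf{Proof plan for Lemma~\ref{lem:smallStein}.}
The plan is to apply Steinberg's Theorem~\ref{thm:steinberg} directly to the box $Q$ of width $w = (1-\alpha)N$ and height $h = (1-\beta)N$. First I would record the two hypotheses of that theorem: we need $w_{max} := \max_{i \in I'} \width(i) \le w$ and $h_{max} := \max_{i \in I'} \height(i) \le h$, and then the area condition $2a(I') \le wh - (2w_{max}-w)_+(2h_{max}-h)_+$. Since $\alpha, \beta \le \frac12 - 2\epsl$ we have $w, h \ge (\frac12 + 2\epsl)N \ge \max_{i\in I'}\width(i)$ and similarly for heights, so the first two hypotheses hold and in particular each item does fit in $Q$.

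Next I would bound the ``penalty term'' $(2w_{max}-w)_+(2h_{max}-h)_+$ from above. Using $w_{max} \le (\frac12+2\epsl)N$ and $w = (1-\alpha)N$ we get $2w_{max} - w \le (1+4\epsl)N - (1-\alpha)N = (\alpha + 4\epsl)N$, and this upper bound is automatically nonnegative, so $(2w_{max}-w)_+ \le (\alpha+4\epsl)N$; symmetrically $(2h_{max}-h)_+ \le (\beta+4\epsl)N$. Hence the penalty term is at most $(\alpha+4\epsl)(\beta+4\epsl)N^2$.

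Then I would verify the area inequality. We have $wh = (1-\alpha)(1-\beta)N^2 = (1 - \alpha - \beta + \alpha\beta)N^2 \ge (1-\alpha-\beta)N^2$. So it suffices to show
\[
2a(I') \le (1-\alpha-\beta)N^2 - (\alpha+4\epsl)(\beta+4\epsl)N^2.
\]
Expanding $(\alpha+4\epsl)(\beta+4\epsl) = \alpha\beta + 4\epsl(\alpha+\beta) + 16\epsl^2 \le (\alpha+\beta)(\tfrac12 - 2\epsl) + 4\epsl(\alpha+\beta) + 16\epsl^2 \le (\alpha+\beta)(\tfrac12 + 2\epsl) + 16\epsl^2$, where I used $\alpha\beta \le \min\{\alpha,\beta\}\cdot(\tfrac12-2\epsl) \le \tfrac12(\alpha+\beta)(\tfrac12-2\epsl)$... actually more simply $\alpha\beta \le (\alpha+\beta)(\tfrac12-2\epsl)$ since one factor is $\le \tfrac12-2\epsl$ and the bound $\alpha+\beta \ge \alpha$ (resp. $\ge\beta$) is crude but enough. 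Plugging in, the right-hand side is at least $\big(1 - \alpha - \beta - (\alpha+\beta)(\tfrac12+2\epsl) - 16\epsl^2\big)N^2$; I need this to be at least $2a(I')$, i.e. $a(I') \le \frac12\big(1 - (\alpha+\beta)(\tfrac32 + 2\epsl) - 16\epsl^2\big)N^2$. This is implied by the hypothesis $a(I') \le \big(\tfrac12 - (\alpha+\beta)(\tfrac12+2\epsl) - 8\epsl^2\big)N^2$ provided $\tfrac12 - (\alpha+\beta)(\tfrac12+2\epsl) - 8\epsl^2 \le \tfrac12 - \tfrac12(\alpha+\beta)(\tfrac32+2\epsl) - 8\epsl^2$, i.e. $(\alpha+\beta)(\tfrac12+2\epsl) \ge \tfrac12(\alpha+\beta)(\tfrac32+2\epsl)$, i.e. $2(\tfrac12+2\epsl) \ge \tfrac32+2\epsl$, i.e. $1 + 4\epsl \ge \tfrac32 + 2\epsl$ — which is false for small $\epsl$. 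So the crude bound on $\alpha\beta$ loses too much, and the main obstacle will be doing this arithmetic tightly: I should instead keep $wh = (1-\alpha-\beta+\alpha\beta)N^2$ with its $+\alpha\beta$ term and cancel it against the $\alpha\beta$ appearing in the penalty term $(\alpha+4\epsl)(\beta+4\epsl)$, so that $wh - (\text{penalty}) \ge (1 - \alpha - \beta - 4\epsl(\alpha+\beta) - 16\epsl^2)N^2$, and then check $2a(I') \le (1 - (\alpha+\beta)(1+4\epsl) - 16\epsl^2)N^2$ follows from $2a(I') \le (1 - (\alpha+\beta)(1+4\epsl) - 16\epsl^2)N^2$, i.e. from the hypothesis $a(I')\le(\tfrac12 - (\alpha+\beta)(\tfrac12+2\epsl) - 8\epsl^2)N^2$ exactly. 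Once the penalty's $\alpha\beta$ is cancelled rather than discarded the bookkeeping matches the statement precisely; then Theorem~\ref{thm:steinberg} packs all of $I'$ into $Q$ in polynomial time, completing the proof.
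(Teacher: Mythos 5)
Your final argument is correct and is essentially the paper's proof: apply Theorem~\ref{thm:steinberg} to the $(1-\alpha)N\times(1-\beta)N$ box, bound the penalty term via $w_{max},h_{max}\le(\tfrac12+2\epsl)N$, and keep the $\alpha\beta$ term of $(1-\alpha)(1-\beta)$ so that it cancels against the $\alpha\beta$ in $(\alpha+4\epsl)(\beta+4\epsl)$, which makes the sufficient condition $2a(I')\le\bigl(1-(\alpha+\beta)(1+4\epsl)-16\epsl^2\bigr)N^2$ exactly the stated hypothesis. The initial detour through the crude bound $wh\ge(1-\alpha-\beta)N^2$ is a dead end, but you correctly identify and repair it, so the proposal stands.
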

\begin{proof}
From Steinberg's Theorem, $I'$ can be packed in a rectangle of size $(1-\alpha)N \times (1-\beta)N$
if

\begin{align*} 2 a(I') & \le (1-\alpha)(1-\beta)N^2 \\
	&- \left(2\left(\frac{1}{2}+2\eps_{large}\right)N  - (1-\alpha)N\right)_+ \left(2\left(\frac{1}{2}+2\eps_{large}\right)N - (1-\beta)N\right)_+ \\
	& = \Big(1 -(\alpha+\beta)(1-4\epsl)-16\epsl^2 \Big)N^2.
\end{align*}
\end{proof}

Now we prove a more general version of Lemma~\ref{lem:ringCase} which holds for the cardinality case. 
\begin{lemma}\label{lem4cardgen}
	If $a(\isfat) \le \gamma N^2$ for any $\gamma \le 1$, then 
	\begin{align*}
	|\optrc| \geq \,\, & \frac{3}{4}|\ilthin|+|\isthin|\\
	                   & +\min\left\{1,\frac{1-O(\efl)}{2 \gamma}\right\}|\isfat|.
	\end{align*}
\end{lemma}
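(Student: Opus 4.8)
The plan is to follow the same recipe as in the proof of Lemma~\ref{lem:ringCase}, but now combining three ingredients: (i) the boundary $L$ holding the thin long items, (ii) the two thin-item containers for $\isthin$ from Lemma~\ref{lem:boxProperties}, and (iii) a single rectangular region in which a large fraction of $\isfat$ is packed via Steinberg's theorem, rather than via the Resource Augmentation Lemma. Concretely I would run the algorithm with boundary-$L$ width $N'=\epsr N$ and threshold length $\ell=(\tfrac12+2\epsl)N$, so that no item of $\ilopt$ sits in a short subcorridor and all items of $\ilopt$ have their long side $>\ell>N/2$. By Lemma~\ref{lem:LoftheRing} applied to $H=\ilthin\cap\Rho$ and $V=\ilthin\cap\Rve$, I can pack a subset of $\ilthin$ of size at least $\tfrac34|\ilthin|$ into a boundary $L$ of width $\epsr N$. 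By Lemma~\ref{lem:boxProperties}\eqref{lem:boxProperties:thin}, for $\epst$ small enough the remaining thin items $\isthin$ fit into two containers of size $\ell\times\epsr N$ and $\epsr N\times\ell$, which I place along the two sides of the knapsack not occupied by the boundary $L$. This accounts for the $\tfrac34|\ilthin|+|\isthin|$ terms.

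Next, in the space not used by the boundary $L$ and the two thin containers, there is a free axis-aligned square region of side at least $(1-O(\epsr))N$. If $\gamma\le \tfrac12-O(\efl)$ then in fact Lemma~\ref{lem:smallStein} (or directly Lemma~\ref{lem:ringCase}'s Steinberg-style bound) lets me pack \emph{all} of $\isfat$ into this region, giving the $\min\{1,\cdot\}=1$ branch. In general, I would take the optimal placement of $\isfat$, throw a random vertical strip and a random horizontal strip of width $3\eps N$ each, and delete all items of $\isfat$ meeting either strip; since each $i\in\isfat$ is skewed, it has one side at most $\epss N$ and the other at most $\ell$, so it survives both strips with probability at least $\tfrac12-O(\eps)$. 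Thus there is a sub-family $\isfat'\subseteq\isfat$ with $|\isfat'|\ge(\tfrac12-O(\eps))|\isfat|$ that fits inside a square of side $(1-3\eps)N$, and in particular $a(\isfat')\le a(\isfat)$. Here, instead of packing $\isfat'$ itself, I rescale: I pick the most profitable sub-family $\isfat''\subseteq\isfat'$ with $a(\isfat'')\le \tfrac{1}{2\gamma}\bigl(1-O(\efl)\bigr)\cdot\bigl(\tfrac12-O(\efl)\bigr)N^2$ times an appropriate normalization, so that Steinberg's theorem (Theorem~\ref{thm:steinberg}, applied as in Lemma~\ref{lem:smallStein} with $\alpha,\beta=O(\epsr)$) packs $\isfat''$ into the free square; since the items are unit-profit and $a(\isfat)\le\gamma N^2$, a fraction $\min\{1,\tfrac{1-O(\efl)}{2\gamma}\}$ of $|\isfat|$ can be retained.

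The main obstacle will be getting the fractions to line up cleanly: one must verify that after removing the two random strips the survivor area is still at most $\gamma N^2$ (trivially true, since $a(\isfat')\le a(\isfat)\le\gamma N^2$) and that the Steinberg area condition on a square of side $(1-O(\epsr))N$ is satisfied for an area budget of $\bigl(\tfrac12-O(\efl)\bigr)N^2$; combining the random-strip factor $\tfrac12-O(\eps)$ with the $\tfrac1\gamma$ area scaling yields exactly the claimed $\tfrac{1-O(\efl)}{2\gamma}$ factor, with the $\min\{1,\cdot\}$ capping the case where all of $\isfat$ already fits. I would also need to note that the boundary $L$, the two thin containers, and the Steinberg square are pairwise disjoint (which holds because $N'=\epsr N$ and the thin containers have width $\epsr N$, leaving a $(1-O(\epsr))N$ square), and that the thin-item part does not interfere with Lemma~\ref{lem:LoftheRing}'s repacking. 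Finally, since the three contributions $\tfrac34|\ilthin|$, $|\isthin|$, and $\min\{1,\tfrac{1-O(\efl)}{2\gamma}\}|\isfat|$ come from disjoint regions of the knapsack, they add up, and the $O(\efl)$ losses are absorbed into the stated error term, completing the proof.
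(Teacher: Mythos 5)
Your first half (boundary $L$ of width $\epsr N$ for $\frac34|\ilthin|$ via Lemma~\ref{lem:LoftheRing}, plus the two thin containers for $\isthin$ from Lemma~\ref{lem:boxProperties}) matches the paper. The gap is in how you handle $\isfat$: you import the random-strip deletion from Lemma~\ref{lem:ringCase}, obtain $\isfat'\subseteq\isfat$ with $|\isfat'|\ge(\tfrac12-O(\eps))|\isfat|$, and then select $\isfat''\subseteq\isfat'$ by area for Steinberg. But any subset of $\isfat'$ has cardinality at most $|\isfat'|$, which in the worst case is only about $\tfrac12|\isfat|$, whereas for $\gamma\in(\tfrac12,1)$ the claimed factor $\frac{1-O(\efl)}{2\gamma}$ is strictly larger than $\tfrac12$ (close to $1$ when $\gamma$ is close to $\tfrac12$). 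So your construction cannot deliver the stated bound in that regime; the sentence ``combining the random-strip factor $\tfrac12-O(\eps)$ with the $1/\gamma$ area scaling yields the claimed factor'' multiplies a cardinality fraction of $\isfat'$ with an area ratio of $\isfat$, which is not a valid composition — the $\tfrac12$ in the target bound is Steinberg's half-area guarantee, not a survival probability.

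The random-strip step is also unnecessary: it is needed in Lemma~\ref{lem:ringCase} only because the Resource Augmentation Lemma~\ref{lem:structural_lemma_augm} requires an \emph{existing} packing of the chosen items in a slightly smaller box, but Steinberg's theorem repacks from scratch using only the area bound and the fact that every item of $\is$ has both sides at most $\ell=(\tfrac12+2\epsl)N$. The paper's proof therefore skips the strips entirely: sort $\isfat$ by nondecreasing area, greedily pick items until the total area reaches roughly $(1-O(\efl)-\epss)N^2/2$, pack them into the free $(1-3\eps)N$ square by Lemma~\ref{lem:smallStein}, and since profits are unit and $a(\isfat)\le\gamma N^2$ this retains either all of $\isfat$ or at least a $\frac{1-O(\efl)}{2\gamma}$ fraction of it; finally one must (as the paper does, and you omit) convert the non-container Steinberg packing into an $\fontL\&C$ packing via resource augmentation as in Lemma~\ref{lem:ringCase}. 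With the strip step removed and the area-greedy selection applied to all of $\isfat$, your argument becomes the paper's.
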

\begin{proof}
	As in Lemma~\ref{lem:ringCase}, we can pack $\frac34|\ilthin|$ and $|\isthin|$ in a boundary $\fontL$-region plus two boxes on the other two sides of the knapsack
	and then a free square region with side length $(1- 3 \eps)N$ can be used to pack items from $\isfat$.
	From Lemma \ref{lem:smallStein}, any subset of rectangles of $\isfat$ with total area at most $(1-O(\efl))N^2/2$ can be packed into that square region of length $(1- 3 \eps)N$.
	Thus we sort rectangles from $\isfat$ in the order of nondecreasing area and iteratively pick them until their total area reaches $(1- O(\efl) -\eps_{small})N^2/2$. Using Steinberg's theorem, there exists a packing of the selected rectangles. If $2\gamma \le 1- O(\efl) -\eps_{small}$ then the profit of this packing is $|\isfat|$, and otherwise the total profit is at least $\frac{1-O(\efl)}{2 \gamma}|\isfat|$. The packing coming from Steinberg's theorem may not be container-based, but we can then use resource augmentation as in Lemma~\ref{lem:ringCase} to obtain an $\fontL\&C$ packing.
\end{proof}

Now the following  lemma will be useful when $a(\isopt)$ is large. 

\begin{lemma}\label{lem4cardgen2}
	If $a(\isopt) > \gamma N^2$ for any $\gamma \ge \frac34+\eps+\eps_{large}$, then 
	$$|\optrc|\geq \frac{3}{4}|\ilopt|+\frac{(3\gamma - 1 -O(\efl))}{4\gamma}|\isopt|.$$
\end{lemma}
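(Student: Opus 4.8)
The plan is to combine the $\frac34$-approximate $\fontL$-packing of $\ilopt$ (coming from Lemma~\ref{lem:LoftheRing} together with the PTAS for $\fontL$-packings, Theorem~\ref{thm:main:Lpacking}) with a Steinberg-based packing of a large fraction of $\isopt$ into the part of the knapsack not occupied by the $\fontL$. The key point is that a $\frac34$-approximate $\fontL$-packing of $\ilopt$ can be made to fit into a \emph{thin} boundary $\fontL$, so that a near-full square region of the knapsack remains free for the short items. Since $a(\isopt)>\gamma N^2$ is large, even after losing the $O(\efl)$ slack from the thin strip and from resource augmentation, the free region can absorb enough area of $\isopt$ to give the claimed bound.

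First I would set up the $\fontL$-region. Apply Lemma~\ref{lem:LoftheRing} with $H=\ilopt\cap I_{hor}$ and $V=\ilopt\cap I_{ver}$; as in the proof of Lemma~\ref{lem:ringCase}, the resulting stacked items can be shifted into a boundary $\fontL$ of width $\epsr N$ plus two containers of size $\ell\times\epsr N$ and $\epsr N\times\ell$ on the two remaining sides, carrying profit at least $\frac34|\ilopt|$. Running the PTAS for $\fontL$-packings (Theorem~\ref{thm:main:Lpacking}) on this thin $\fontL$ with input all items whose longer side exceeds $\ell$ yields a feasible container-type solution of profit at least $(\frac34-O(\efl))|\ilopt|$. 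The complement of the boundary $\fontL$ and the two side containers contains a free square region $K''$ of side length $(1-O(\epsr))N=(1-O(\efl))N$.

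Next I would pack short items into $K''$ via Steinberg. All items of $\isopt$ have both sides at most $\ell=(\frac12+2\epsl)N$, so Lemma~\ref{lem:smallStein} applies: any subset of $\isopt$ of total area at most $(\frac12-O(\efl))N^2$ fits into a $(1-O(\efl))N\times(1-O(\efl))N$ square. Sort $\isopt$ by non-decreasing area and greedily take items until the accumulated area first reaches $(\frac12-O(\efl))N^2$ (if the total area $a(\isopt)$ is below that threshold, take everything, but since $\gamma\ge\frac34+\eps+\epsl>\frac12$ this does not happen). Because every item is small, each has area at most $\epss^2N^2$, so the selected area is at least $(\frac12-O(\efl))N^2$; by the sorting, the number of selected items is at least $\frac{(\frac12-O(\efl))N^2}{a(\isopt)}\,|\isopt|\ge\frac{1/2-O(\efl)}{\gamma}|\isopt|$. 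Then I would note that $\frac{1/2-O(\efl)}{\gamma}\ge\frac{3\gamma-1-O(\efl)}{4\gamma}$ precisely when $\gamma\ge\frac34$ up to the $O(\efl)$ error, which holds by hypothesis; hence we pack at least $\frac{3\gamma-1-O(\efl)}{4\gamma}|\isopt|$ short items. Finally, as in Lemma~\ref{lem:ringCase}, the Steinberg packing is converted into a container packing at a $1-O(\efl)$ loss using the Resource Augmentation Lemma~\ref{lem:structural_lemma_augm} and Theorem~\ref{thm:container_packing_ptas}, and everything is placed inside an $\fontL\&C$ packing, so $|\optrc|\ge\frac34|\ilopt|+\frac{3\gamma-1-O(\efl)}{4\gamma}|\isopt|$, absorbing all $O(\efl)$ terms into the statement's hidden constants.

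The main obstacle I anticipate is bookkeeping the various $O(\efl)$ losses — the thin-$\fontL$ shrinkage, the side containers, the Steinberg area slack $8\epsl^2$, and the resource-augmentation factor — and verifying they all fit within the single $O(\efl)$ (equivalently $\eps$) error term quoted in the lemma, together with checking the elementary inequality $\frac{1/2}{\gamma}\ge\frac{3\gamma-1}{4\gamma}\iff\gamma\ge\frac34$ survives the perturbation for $\gamma\ge\frac34+\eps+\epsl$. A secondary subtlety is making sure the two small side containers holding the $\fontL$ overflow do not eat into the square $K''$ by more than $O(\epsr)$ in each dimension, which is exactly the accounting already done in Lemma~\ref{lem:ringCase} and can be cited verbatim.
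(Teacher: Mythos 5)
There is a genuine gap, and it is the central geometric claim of your argument: you assert that the $\tfrac34$-fraction of $\ilopt$ obtained from Lemma~\ref{lem:LoftheRing} ``can be made to fit into a thin boundary $\fontL$'' of width $\epsr N$, leaving a free square $K''$ of side $(1-O(\efl))N$. That is false here. The width-$\epsr N$ boundary $\fontL$ in Lemma~\ref{lem:ringCase} works only for $\ilthin$, because Lemma~\ref{lem:boxProperties} bounds the total height/width of \emph{thin} items by $\epsr N$. In the present lemma you must pack (three quarters of) \emph{all} of $\ilopt$, and under the hypothesis $a(\isopt)>\gamma N^2$ the long items can still have total area up to $(1-\gamma)N^2$, so the four stacks produced by Lemma~\ref{lem:LoftheRing} can have total thickness up to $2(1-\gamma)N$, i.e.\ on the order of $N/2$ — nowhere near $\epsr N$. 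Consequently the remaining free region is not an almost-full square but only a region of size roughly $(1-\alpha)N\times(1-\beta_1)N$ where $\alpha+\beta_1$ can be as large as $2(1-\gamma)$, and your Steinberg step packs far less than the $\frac{1/2-O(\efl)}{\gamma}|\isopt|$ you claim. A quick sanity check confirms something must be wrong: your bound $\frac{1}{2\gamma}|\isopt|$ is \emph{strictly stronger} than the stated $\frac{3\gamma-1}{4\gamma}|\isopt|$ for every $\gamma<1$ (your equivalence ``precisely when $\gamma\ge\frac34$'' is also mis-solved; the comparison holds iff $\gamma\le 1$), so your argument would improve the lemma and, downstream, the final approximation ratio — which should have raised a flag.

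The paper handles exactly this difficulty differently: it keeps the ring formed by $\ilopt$ with explicit stack thicknesses $\alpha_1,\alpha_2,\beta_1,\beta_2$ (each at most $1/2$), derives $\alpha+\beta\le 2(1-\gamma)$ and $a(\isopt)\le(1-\frac{\alpha+\beta}{2})N^2$ from the area hypothesis, applies Lemma~\ref{lem:smallStein} in the reduced region left after deleting \emph{one} subring, and then \emph{averages over the four choices} of which subring to delete. The averaging is what simultaneously yields $\frac34|\ilopt|$ for the long items and the coefficient $\frac{\frac12-\frac38(\alpha+\beta)-O(\efl)}{1-\frac{\alpha+\beta}{2}}$ for the short ones, which is minimized at $\alpha+\beta=2(1-\gamma)$ and gives exactly $\frac{3\gamma-1-O(\efl)}{4\gamma}$. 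Your proposal has no mechanism to trade off the profit of the deleted stack against the free area it releases (deleting the cheapest stack may free the least room), so the gap is not a bookkeeping issue with $O(\efl)$ terms but a missing structural idea. (A minor further slip: items of $\isopt$ need not have area $\le\epss^2N^2$ — skewed short items can have area up to about $\epss N^2$ — but the greedy/Steinberg selection survives this; it is the thin-$\fontL$ claim that does not.)
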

\begin{proof}
	Similar to Lemma \ref{lem:LoftheRing} in Section~\ref{sec:tdk_car:simple}, we start from the optimal packing and move all rectangles in $\ilopt$ to the boundary such that all of them are contained in a boundary ring.  Note that unlike the case when we only pack $\ilthin$ in the boundary region, the boundary ring formed by $\ilopt$ may have width or height $>> \eps_{ring} N$.
	Let us call the 4 stacks in the ring to be subrings.
	Let us assume that left and right subrings have width $\alpha_1 N$ and $\alpha_2 N$ respectively and 
	bottom and top subrings have height $\beta_1 N$ and $\beta_2 N$ respectively. 
	It is easy to see that we can arrange the subrings such that $\alpha_1, \alpha_2, \beta_1, \beta_2 \le 1/2$.
	
	As $a(\isopt) > \gamma N^2$, then $a(\ilopt) < (1-\gamma)N^2$.
	Let us define $\alpha=\alpha_1+\alpha_2$ and $\beta=\beta_1+\beta_2$.
	Then $(\alpha+\beta)N \cdot \frac{N}{2}  \le a(\ilopt)$. Hence, $\frac{\alpha+\beta}{2} < 1-\gamma$.
	Thus we get the following two inequalities:
	\begin{equation} \label{steinLar1}
	(\alpha+\beta) \le 2(1-\gamma)
	\end{equation} 
	\begin{equation} \label{steinLar2}
	a(\isopt) \le \left(1- \frac{(\alpha+\beta)}{2}\right)N^2
	\end{equation} 
	Now consider the case when we remove the top horizontal subring and construct a boundary \fontL-region as in Lemma~\ref{lem:LoftheRing}. We will assume that rectangles in the $\fontL$-region are pushed to the left and bottom as much as possible. Then, the boundary $\fontL$-region has width $(\alpha_1+\alpha_2)N$ and height $\beta_1 N$. We will use Steinberg's theorem to show the existence of a packing of rectangles from $\isopt$ in a subregion of the remaining space with width $N-(\alpha_1+\alpha_2+\eps)N$ and height $N-(\beta_1+\eps)N$, and use the rest of the area for resource augmentation to get an $\fontL\&C$-based packing.
	Since $\gamma\ge \frac34+\eps+\eps_{large}$, we have that $\alpha+\beta+2\eps \le 1/2-2\eps_{large}$.
	So $\alpha+\eps \le 1/2-2\eps_{large}$ and $\beta+\eps \le 1/2-2\eps_{large}$.
	Thus from Lemma \ref{lem:smallStein}, in the region with width $N-(\alpha_1+\alpha_2+\eps)N$ and height $N(1-\beta_1-\eps)$ we can pack rectangles from $\isopt$ of total area at most $\Big(\frac12 -\frac{(\alpha_1+\alpha_2+\beta_1)}{2} - O(\efl) \Big)N^2$. 
	Hence, we can take the rectangles in $\isopt$ in the order of nondecreasing area until their total area reaches 
	$\Big(\frac12 -\frac{(\alpha_1+\alpha_2+\beta_1)}{2} -O(\efl) - \eps_{small} \Big)N^2$ and pack at least $|\isopt| \cdot \frac{(\frac 12 - \frac{(\alpha_1 +\alpha_2+\beta_1)}{2}-O(\efl)-\eps_{small})}{(1-\frac{(\alpha+\beta)}{2})}$ using Steinberg's theorem.
	If we now consider all the four different cases corresponding to removal of the four different subrings and take the average of profits obtained in each case, 
	we pack at least \begin{eqnarray*} & & \frac3{4}|\ilopt|+ |\isopt| \cdot \left(\frac{(\frac 12 - \frac38(\alpha_1 +\alpha_2+\beta_1+\beta_2)-O(\efl)}{(1-\frac{(\alpha+\beta)}{2})}\right) \\ & \ge & \frac3{4}|\ilopt|+ |\isopt| \cdot \left(\frac{(\frac 12 - \frac38(\alpha+\beta) -O(\efl))}{(1-\frac{(\alpha+\beta)}{2})}\right) \\ & \ge & \frac3{4}|\ilopt|+ |\isopt| \cdot \frac{3\gamma -1-O(\efl)}{4\gamma},
	\end{eqnarray*} where the last inequality follows from \eqref{steinLar1} and the fact that the expression is decreasing as a function of $(\alpha + \beta)$.\end{proof}
Now we start with the proof of Theorem \ref{thm:cardWorot}.
\begin{proof}[Proof of Theorem~\ref{thm:cardWorot}]
	In the refined analysis, we will consider different solutions and show that the best of them always achieves the claimed approximation guarantee. We will pack some rectangles in a boundary $\fontL$-region (either a subset of only $\ilthin$ or a subset of $\ilopt$) using the PTAS algorithm for boundary $\fontL$-region described in Section~\ref{sec:ptasL}, and in the remaining area of the knapsack (outside of the boundary $\fontL$-region), we will pack a subset of rectangles from $\isopt$.
	
	\begin{figure}[t!]
		\centering
		\includegraphics[width=14cm]{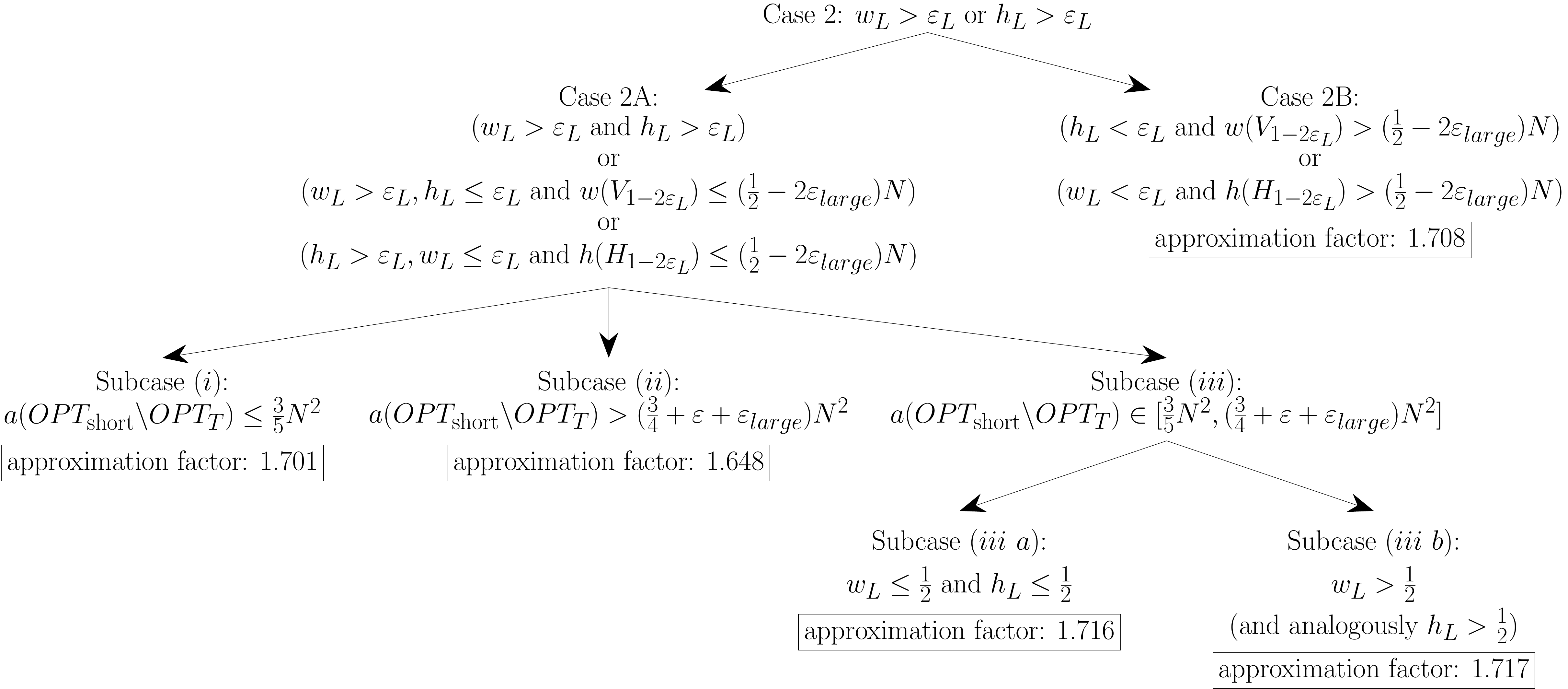}
		\caption{Summary of the cases.}
		\label{fig:cardworotcaseover}
	\end{figure}%

	Consider the ring as constructed in the beginning of the proof of Lemma \ref{lem4cardgen2}.
	Then we remove the least profitable subring and repack the remaining rectangles from $\ilopt$ in a boundary $\fontL$-region.  
	Without loss of generality, assume that the horizontal top subring was the least profitable subring. The other cases are analogous.
	We will use the same notation as in Lemma~\ref{lem4cardgen2}, and also define $w_{\fontL}=(\alpha_1+\alpha_2), h_{\fontL}= \beta_1$.
	Now let us consider two cases (see Figure~\ref{fig:cardworotcaseover} for the overview of the subcases of case 2).\\
	
	\noindent \textbf{$\bullet$ Case 1. $w_{\fontL} \le \eps_{\fontL} , h_{\fontL} \le \eps_{\fontL} $.} \\
	In this case, following the proof of Lemma \ref{lem4cardgen} (using $\gamma=1$), we can pack $\frac{3}{4}|\ilopt|+|\isthin|+\frac{1-O(\eps_{\fontL})}{2}|\isfat|$.
	This along with inequalities \eqref{lem2card}, \eqref{lem3card} and \eqref{lem4card} gives 
	\begin{equation}\label{eqcardcase1}
	|\optrc|\geq \left(\frac{5}{8}-O(\eps_{\fontL})\right)|OPT|
	\end{equation}

	\noindent \textbf{$\bullet$ Case 2. $w_{\fontL} > \eps_{\fontL} $ or  $h_{\fontL} > \eps_{\fontL} $.}\\
	Let $V_{1- 2 \eps_{\fontL}}$ be the set of vertical rectangles having height strictly larger than $(1-2 \eps_{\fontL})N$. Let us define $w(V_{1-2 \eps_{\fontL}})=\sum_{i \in V_{1-2 \eps_{\fontL}}} \width(i)$. 
	Similarly, let $H_{1- 2 \eps_{\fontL}}$ be the set of vertical rectangles of width strictly larger than $(1-2 \eps_{\fontL})N$ and $h(H_{1-2 \eps_{\fontL}})=\sum_{i \in H_{1-2 \eps_{\fontL}}} \height(i)$. \\
	\textit{$\Diamond$  Case 2A.  \Big($w_{\fontL} > \eps_{\fontL} $ and  $h_{\fontL}  > \eps_{\fontL} $\Big) or \Big($w_{\fontL} > \eps_{\fontL} , h_{\fontL} \le \eps_{\fontL} , $ and $w(V_{1-2\eps_{\fontL}})\le \left(\frac{1}{2}-2\eps_{large}\right) N$\Big) or \Big($h_{\fontL}  > \eps_{\fontL} , w_{\fontL} \le \eps_{\fontL} , $ and $h(H_{1-2\eps_{\fontL}})\le \left(\frac{1}{2}-2\eps_{large}\right) N$\Big).}\\
	We show that if any one of the above three conditions is met, then we can pack $\frac{3(1-O(\eps))}{4}|\ilopt|+|\isthin|$ in a boundary $\fontL$-region of width at most $ (w_{\fontL}+\eps_{\fontL}) N$ and height at most $(h_{\fontL}+\eps_{\fontL}) N$ , and 
	then in the remaining area we will pack some rectangles from $\isfat$ using Steinberg's theorem and resource augmentation. 
	
	\begin{figure*}[t!]
		\centering
		\begin{subfigure}[b]{0.4\textwidth}
			\centering
			\includegraphics[width=2.5in]{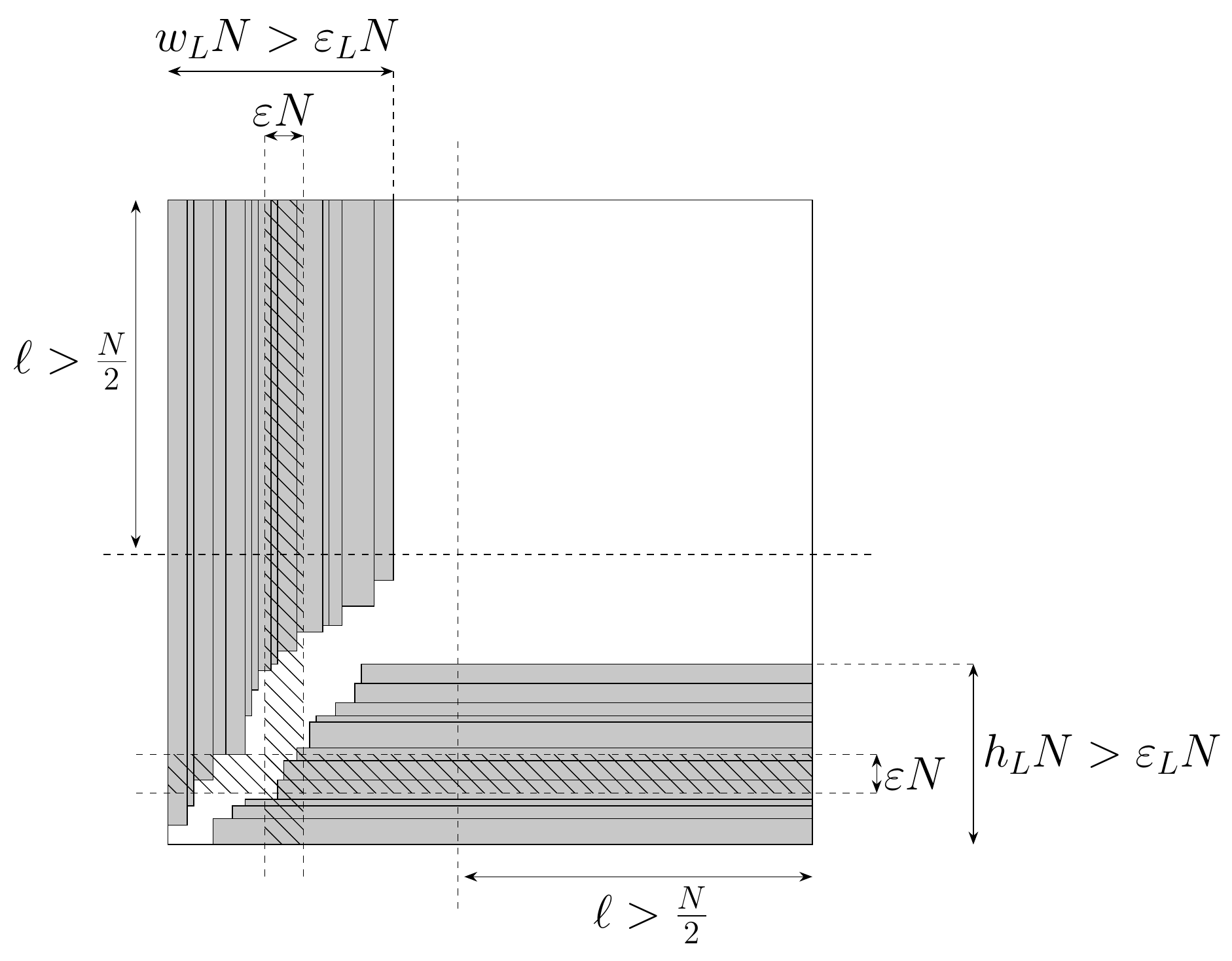}
			\caption{Packing of $\fontL$-region using rectangles from $\ilopt$. Striped strips are cheapest $\eps N$-width and cheapest $\eps N$-height.}
			\label{fig:cardworotcase2a1}
		\end{subfigure}%
		\hspace{40pt}
		\begin{subfigure}[b]{0.35\textwidth}
			\centering
			\includegraphics[width=2.5in]{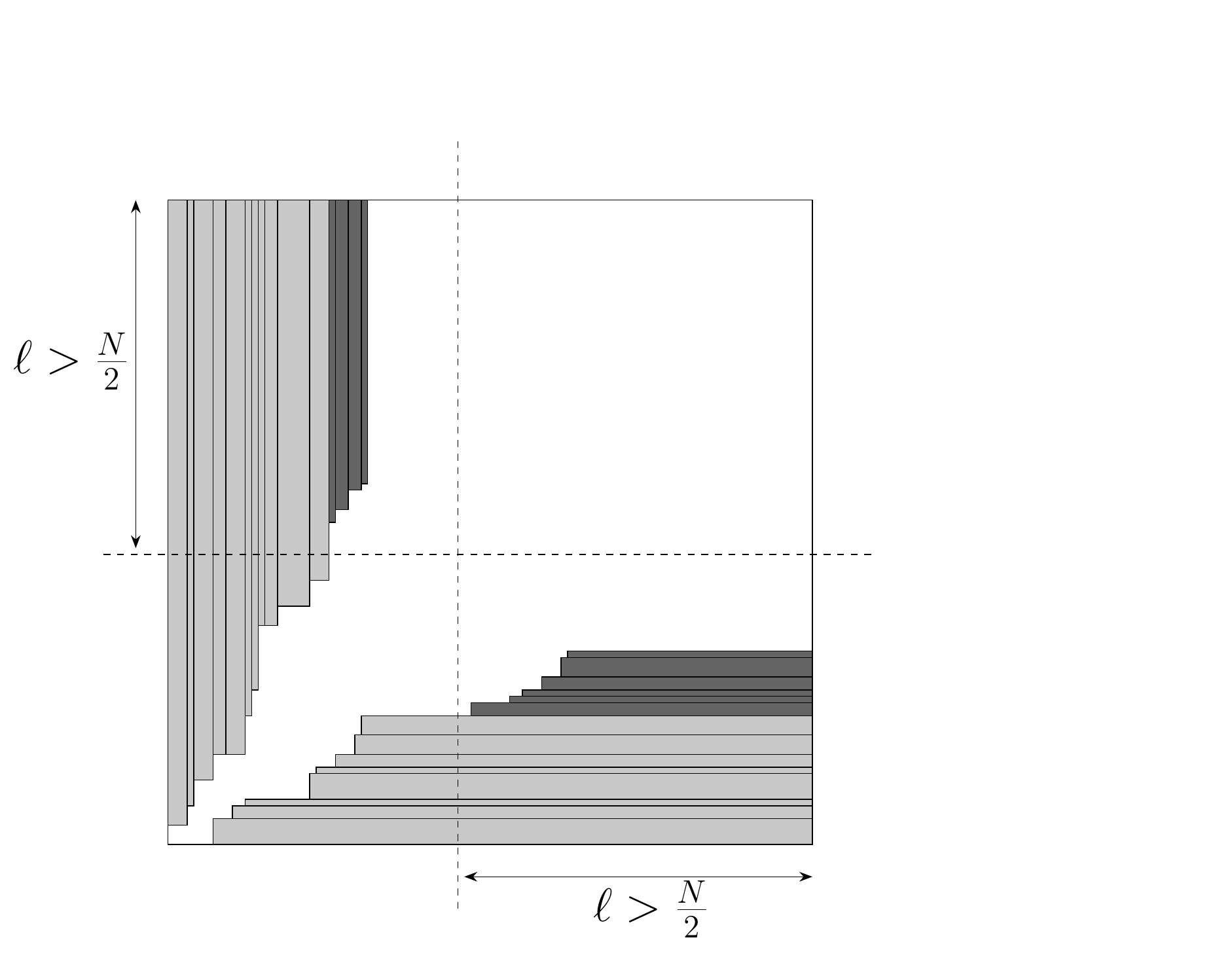}
			\caption{Packing of rectangles in $\ilopt \cup (\isthin)$. Dark gray rectangles are from $\isthin$. }
			\label{fig:cardworotcase2a2}
		\end{subfigure}
		~
		\caption{The case for $w_{\fontL} > \eps_{\fontL} $ and  $h_{\fontL} > \eps_{\fontL}$.}
	\end{figure*}
	
	\begin{figure*}[t!]
		\centering
		\begin{subfigure}[b]{0.4\textwidth}    
			\centering
			\includegraphics[width=2.5in]{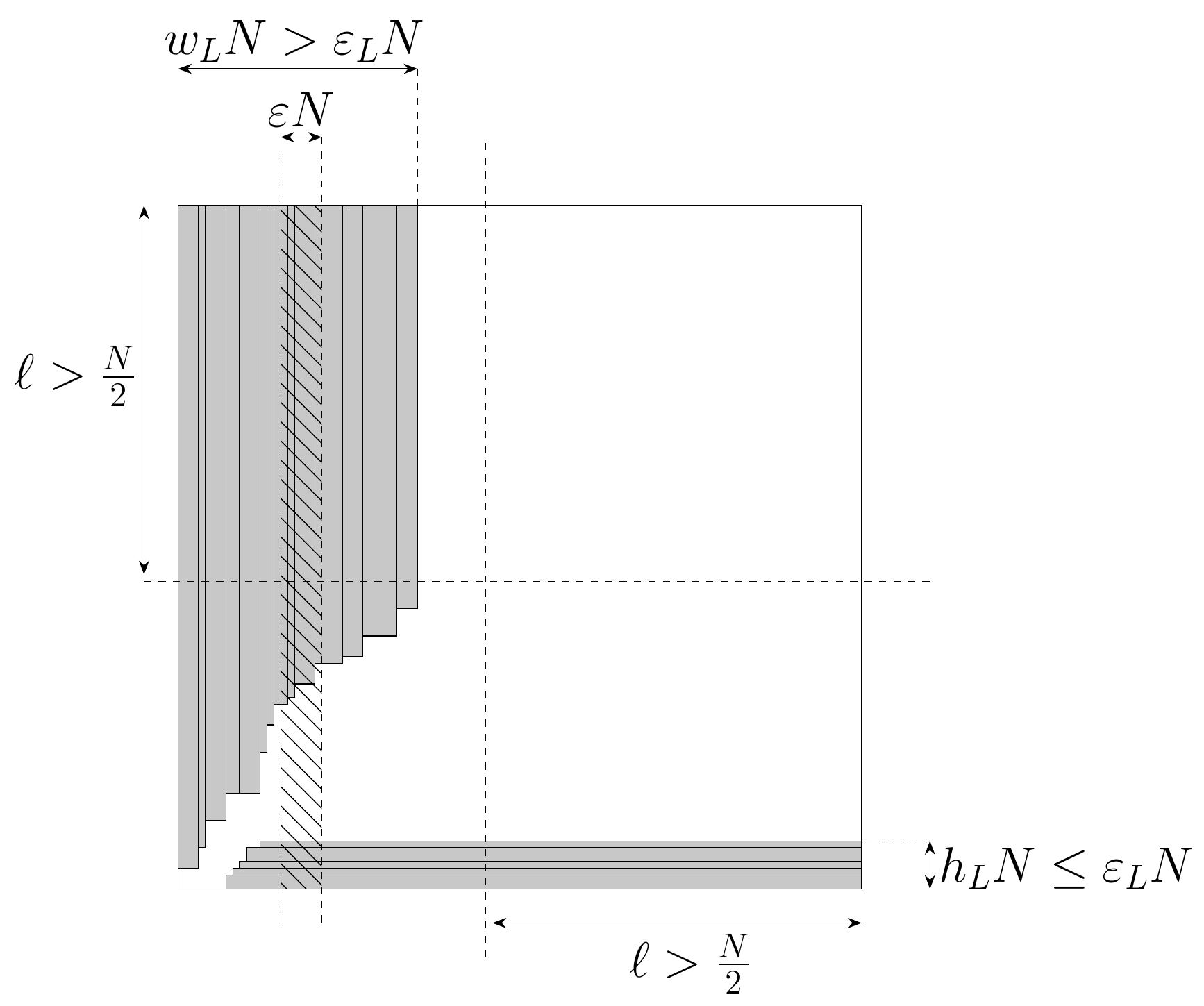}
			\caption{Packing of $\fontL$-region using rectangles from $\ilopt$. Striped strip is the cheapest $\eps N$-width strip.}
			\label{fig:cardworotcase2a3}
		\end{subfigure}%
		\hspace{40pt}
		\begin{subfigure}[b]{0.35\textwidth}   
			\centering
			\includegraphics[width=2.5in]{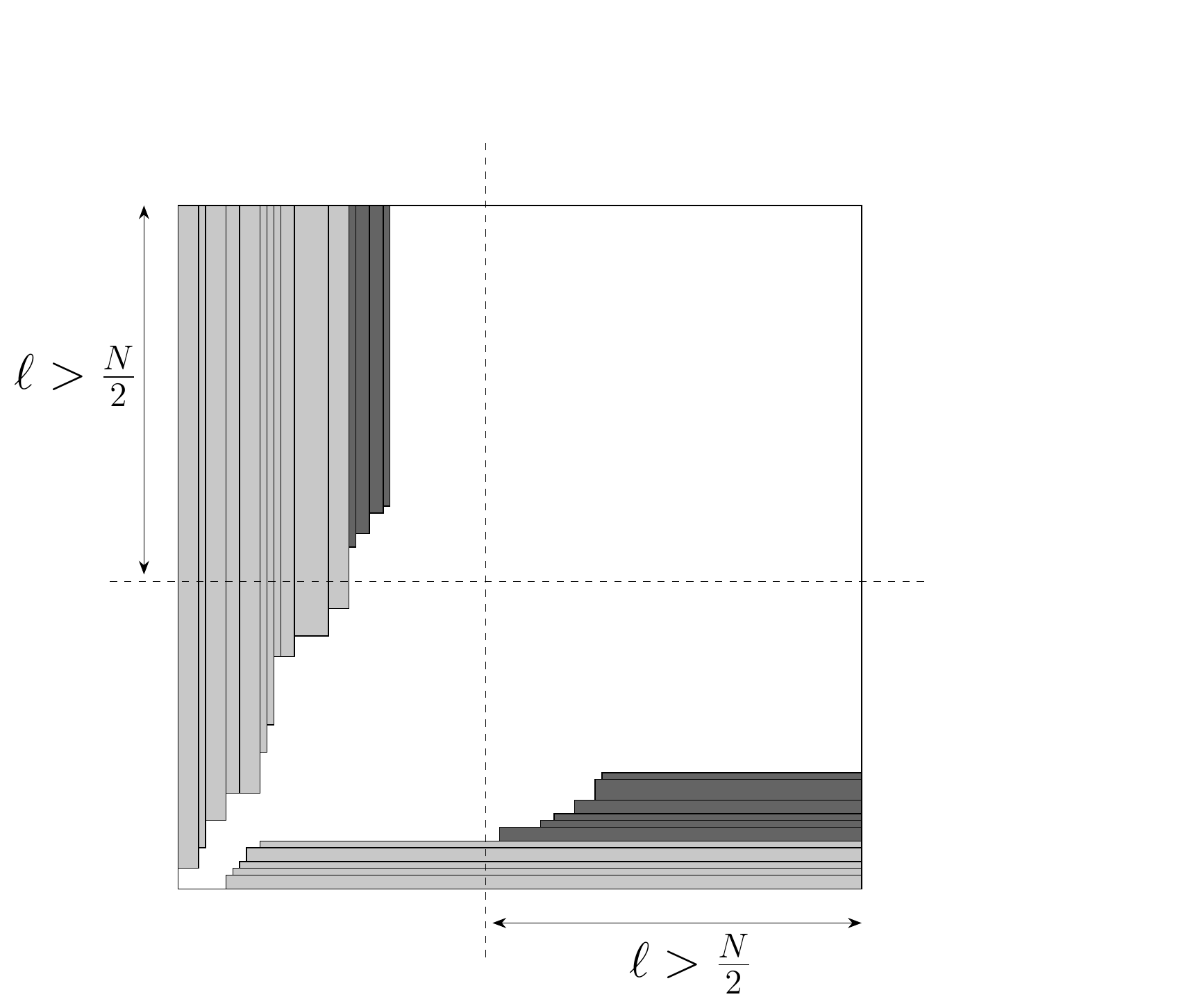}
			\caption{Packing of rectangles in $\ilopt \cup (\isthin)$. Dark gray rectangles are from $\isthin$.}
			\label{fig:cardworotcase2a4}
		\end{subfigure}
		~
		\caption{The case for $w_{\fontL} > \eps_{\fontL} $ and  $h_{\fontL} \le \eps_{\fontL}$.}
	\end{figure*}
	
	\noindent {\bf Packing of subset of rectangles from $\ilopt \cup (\isthin)$ into $\fontL$-region.}\\
	If ($w_{\fontL} > \eps_{\fontL} $ and  $h_{\fontL} > \eps_{\fontL} $), we partition the vertical part of the $\fontL$-region into consecutive strips of width $\eps N$.
	Consider the strip that intersects the least number of vertical rectangles from $\ilopt$ among all strips, and we call it to be the {\em cheapest $\eps N$-width vertical strip} (See Figure \ref{fig:cardworotcase2a1}). 
	Clearly the cheapest $\eps N$-width vertical strip intersects at most a $\frac{\eps+2\epss}{\eps_\fontL} \le O(\efl)$ fraction of the rectangles in the vertical part of $\fontL$-region, so we can remove all such vertical rectangles intersected by the cheapest $\eps N$-width vertical strip at a small loss of profit.
	Similarly, we remove the horizontal rectangles intersected by the cheapest $\eps N$-height horizontal strip in the boundary $\fontL$-region and push the remaining rectangles in the $\fontL$ region to the bottom and left. 
	We now pack the horizontal container for $\isthin$ above the horizontal part of $\fontL$ region and the vertical container for $\isthin$ to the right of the vertical part of the $\fontL$-region (See Figure \ref{fig:cardworotcase2a2}). \\
	In the other case ($w_{\fontL} > \eps_{\fontL} , h_{\fontL} \le \eps_{\fontL} $ and $w(V_{1-2\eps_{\fontL}})\le \left(\frac{1}{2}-2\eps_{large}\right)N$), we can again remove the  cheapest $\eps N$-width vertical strip  in the boundary $\fontL$-region and pack the vertical container for $\isthin$ there (See Figure \ref{fig:cardworotcase2a3}). Now we show how to pack horizontal items from $\isthin$.
	In the packing of the boundary $\fontL$-region, we can assume that the vertical rectangles are sorted non-increasingly by height from left to right and pushed upwards until they touch the top boundary.
	Then, since $w(V_{1- 2 \eps_{\fontL}})\le \left(\frac{1}{2}-2\eps_{large}\right)N$ and ($h_{\fontL} \le \eps_{\fontL}$), the region $\left[\left(\frac{1}{2}-2\eps_{large}\right)N, N\right] \times \left[\eps_{L} N, 2\eps_{L} N\right]$ will be completely empty and thus we will have enough space to pack the horizontal container for $\isthin$ on top of the horizontal part of the $\fontL$-region (See Figure \ref{fig:cardworotcase2a4}). 
	The last case, when $w_{\fontL} \le \eps_{\fontL}$, is analogous.
	Thus we have a packing in  boundary $\fontL$-region of width at most $(w_{\fontL}+\eps_{\fontL}) N$ and height at most $(h_{\fontL}+\eps_{\fontL}) N$ with total profit at least $\frac{3(1-O(\eps))}{4}(|\ilopt|)+|\isthin|$, implying that
	\begin{equation}
	\label{CardCase2a}
	|\optrc| \ge \frac{3(1-O(\eps))}{4}(|\ilopt|)+|\isthin|
	\end{equation} 
	
	\noindent{\bf Packing of subset of rectangles from $\isfat$ into  the remaining region.}\\
	Note that after packing rectangles of cardinality at least $\frac{3(1-O(\eps))}{4}|\ilopt|+|\isthin|$ in the boundary $\fontL$-region, the remaining rectangular region, let us call it to be $\mathcal{R}_{container}$, of width $(1-w_{\fontL}-\eps_{\fontL})N$ and height $(1-h_{\fontL}-\eps_{\fontL})N$ is completely empty.
	Now we will show the existence of a packing of some rectangles from $\isfat$ in the remaining space of the packing (even using some space from the $\fontL$-boundary region).
	Let $\isfhor := (\isfat) \cap \Rho$ and $\isfver := (\isfat) \cap \Rve$. 
	Let us assume without loss of generality that vertical rectangles are shifted as much as possible to the left and top of the knapsack and horizontal are pushed as much as possible to the right and bottom. We divide the analysis in three subcases depending on $a(\isfat)$. \\
	{\em $-$  Subcase (i).} If $a(\isfat) \le \frac35 N^2$, from inequalities \eqref{lem2card}, \eqref{lem3card}, \eqref{lem4card}, \eqref{CardCase2a} and Lemma \ref{lem4cardgen} we get 
	\begin{equation}\label{eqcardcase2a0}
	|\optrc|\geq \left(\frac{127}{216}-O(\eps_{\fontL})\right)|OPT|
	\end{equation}\arir{Gives $1.701$}
	
	\noindent {\em $-$ Subcase (ii).} If $a(\isfat) > (\frac34+\eps+\eps_{large}) N^2$, from inequalities  \eqref{lem2card}, \eqref{lem3card}, \eqref{lem4card}, \eqref{CardCase2a} and Lemma \ref{lem4cardgen2} we get 
	\begin{equation}\label{eqcardcase2a3}
	|\optrc|\geq \left(\frac{17}{28}-O(\efl)\right)|OPT|
	\end{equation} \arir{Gives 1.648. }\\
	
	\noindent{\em $-$ Subcase (iii).} Finally, if $\frac35 N^2 \le a(\isfat) \le (\frac34+\eps+\eps_{large}) N^2$, from inequality \eqref{steinLar1} we get, $\alpha + \beta \le 2(1-\frac35) \le \frac45$. Now we consider two subcases.\\
	{\em $\odot$ Subcase (iii a): $w_{\fontL}\le \frac{1}{2}$ and $h_{\fontL}\le \frac{1}{2}$.}
	Note that in this case if $w_{\fontL}\ge \frac{1}{2}-2\eps_{large}-2\eps_{\fontL}$ or $h_{\fontL}\ge \frac{1}{2}-2\eps_{large}-2\eps_{\fontL}$,
	we can remove the cheapest $2(\eps_{\fontL}+\epsl) N$-width vertical strip and the cheapest $2(\eps_{\fontL}+\epsl) N$-height horizontal strip from the $\fontL$-region. 
	Otherwise we have $w_{\fontL}<\frac{1}{2}-2\eps_{large}-2\eps_{\fontL}$ and $h_{\fontL}< \frac{1}{2}-2\eps_{large}-2\eps_{\fontL}$.
	So there is free rectangular region that has both side lengths at least $N(\frac12+2\eps_{large}+\eps_\fontL)$; we will keep $\eps_\fontL N$ width and $\eps_\fontL N$  height for resource augmentation and use the rest of the rectangular region (with both sides length at least $\left(\frac{1}{2}+2\eps_{large}\right)N$) for showing existence of a packing using Steinberg's theorem.
	Note that this rectangular region has area at least $N^2(1-w_{\fontL}-2\eps_{\fontL})(1-h_{\fontL}-2\eps_{\fontL})$. 
	Thus by using Steinberg's theorem, we can pack either rectangles from $\isfhor$ (by sorting them non-decreasingly by area and picking them iteratively) having total area at least $min\{ a(\isfhor), \frac{N^2(1-w_{\fontL}-2\eps_{\fontL})(1-h_{\fontL}-2\eps_{\fontL})}{2} -\epss\}$ or rectangles from $\isfver$ with total area at least $min\{ a(\isfver), \frac{N^2(1-w_{\fontL}-2\eps_{\fontL}) (1-h_{\fontL}-2\eps_{\fontL})}{2}-\epss\}$. 
	We claim that if we keep the best of the two packings,  we can always pack at least $\left(\frac{7}{48}-O(\eps_L)\right)|\isfat|$. 
	Note that it is sufficient to consider, $\frac{N^2(1-w_{\fontL}-O(\eps_{\fontL})) (1-h_{\fontL}-O(\eps_{\fontL}))}{2} < \min \{a(\isfhor), a(\isfver)\}$, as otherwise, we can pack even more rectangles from $\isfat$ than the claimed fraction.
	In this case, we pack at least
	\begin{align*} & \resizebox{\hsize}{!}{$\frac{N^2}{2}\left( \frac{(1-w_{\fontL}-O(\eps_{\fontL}))(1-h_{\fontL}-O(\eps_{\fontL}))}{2 a(\isfhor)}|\isfhor|+ \frac{(1-w_{\fontL}-O(\eps_{\fontL}))(1-h_{\fontL}-O(\eps_{\fontL}))}{2 a(\isfver)}|\isfver|\right)$} \\ \ge & \resizebox{0.465\hsize}{!}{$\frac{N^2}{2}\left(  \frac{(1-w_{\fontL}-O(\eps_{\fontL}))(1-h_{\fontL}-O(\eps_{\fontL}))}{2 a(\isfat)}|\isfat| \right)$} 
	\end{align*} where the inequality follows from the fact that $\frac{a}{b}+\frac{c}{d} \ge \frac{(a+c)}{(b+d)}$ for $a,b,c,d \ge 0$.
	Since $a(\isfat)\le (1-\frac{\alpha}{2}-\frac{\beta}{2})N^2 \le (1-\frac{w_{\fontL}}{2}-\frac{h_{\fontL}}{2})N^2$
	and $w_\fontL+ h_\fontL \le \alpha+\beta \le \frac45$, we can minimize the expression $$\frac{1}{2}\left(  \frac{N^2(1-w_{\fontL}-O(\eps_{\fontL}))(1-h_{\fontL}-O(\eps_{\fontL}))}{2 a(SF)}|\isfat| \right)$$ 
	over the domain $\{w_{\fontL} + h_{\fontL}\le \frac{4}{5}, 0\le w_{\fontL}\le \frac{1}{2}, 0\le h_{\fontL} \le \frac{1}{2}\}$, obtaining that this is at least $(\frac{7}{48}-O(\eps_{\fontL}))
	|\isfat|$ (value reached at $w_{\fontL}=\frac{1}{2}$ and $h_{\fontL} = \frac{3}{10}$). 
	This together with inequalities \eqref{lem2card}, \eqref{lem3card}, \eqref{lem4card}, \eqref{CardCase2a} and Lemma \ref{lem4cardgen} implies that
	\begin{equation}\label{eqcardcase2a1}
	|\optrc|\geq \left(\frac{215}{369}-O(\eps_{\fontL})\right)|OPT|
	\end{equation}\arir{Gives $1.7163$}
	
	\begin{figure}[t!]
		\centering
		\includegraphics[width=3.5in]{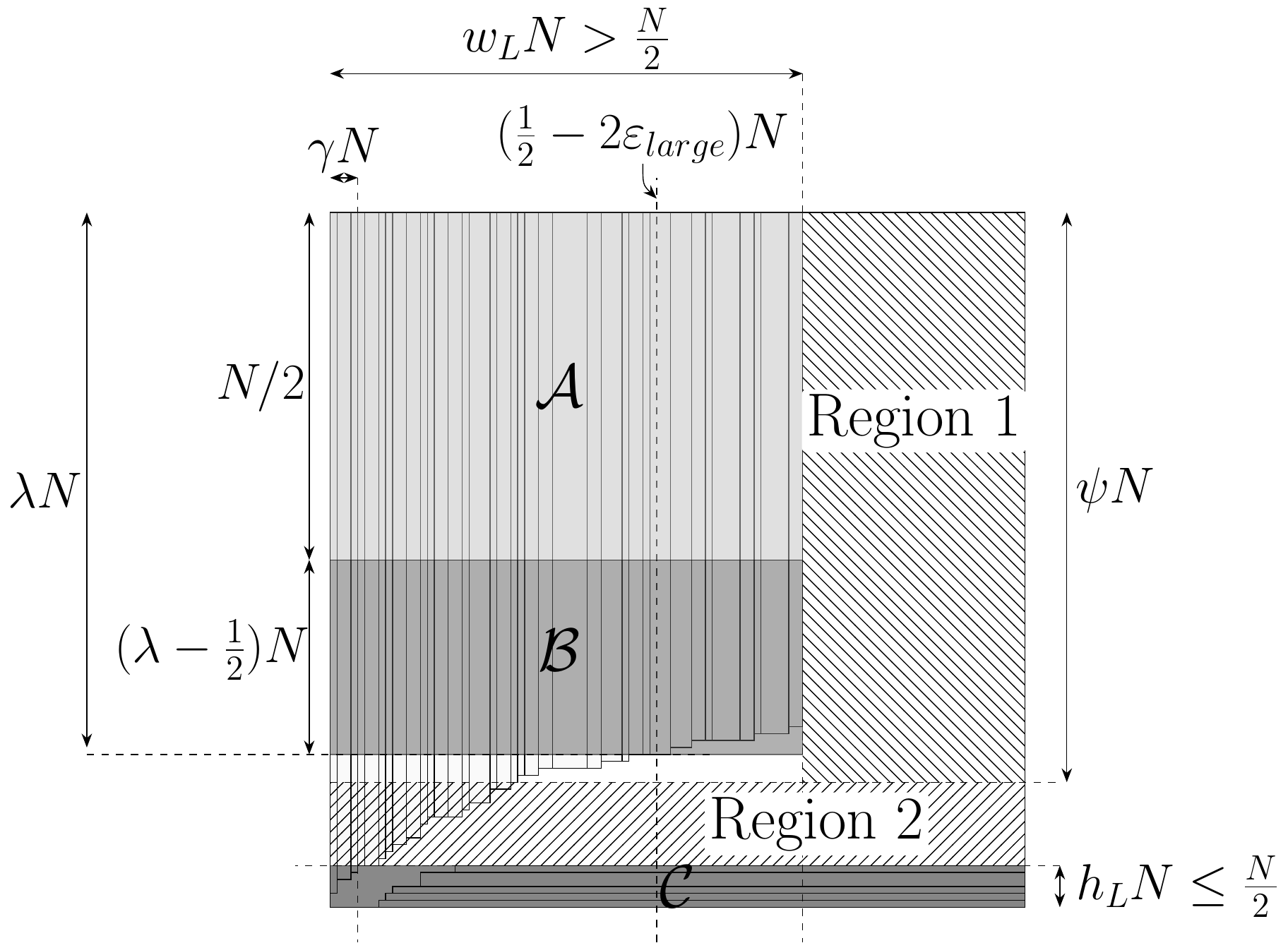}
		\caption{Case 2A(iii)b in the proof of Theorem~\ref{thm:cardWorot}}
		\label{fig:cardworotcase2aiii}
	\end{figure}%
	
	\noindent {\em $\odot$ Subcase (iii b): } $w_{\fontL}>\frac{1}{2}$ (then from inequality \eqref{steinLar1}, $h_{\fontL} \le \frac{3}{10}$).
	Note that $a(\ilopt) \le (1-\frac35)N^2=\frac25 N^2$.\\
	Let us define some parameters from the current packing to simplify the calculations. Let $\lambda N$ be the height of the tallest vertical rectangle in the packing that touches the vertical line $x=\left(\frac{1}{2}-2\eps_{large}\right)N$ and $\gamma N$ be the total width of vertical rectangles having height greater than $(1-h_\fontL)N$. We define also the following three regions in the knapsack: $\mathcal{A}$, the rectangular region of width $w_{\fontL}N$ and height $\frac{1}{2}N$ in the top left corner of the knapsack; $\mathcal{B}$, the rectangular region of width $w_{\fontL} N$ and height $(\lambda-\frac{1}{2})N$ below $\mathcal{A}$ and left-aligned with the knapsack; and $\mathcal{C}$, the rectangular region of width $N$ and height $h_{\fontL}N$ touching the bottom boundary of the knapsack. Notice that $\mathcal{A}$ is fully occupied by vertical rectangles, $\mathcal{B}$ is almost fully occupied by vertical rectangles except for the right region of width $w_{\fontL} N- \left(\frac{1}{2}-2\eps_{large}\right)N$, and at least half of $\mathcal{C}$ is occupied by horizontal rectangles (some vertical rectangles may overlap with this region).
	Our goal is to pack some rectangles from $\isfat$ in the \reflectbox{\ffmfamily{L}}-shaped region outside $\mathcal{A}\cup\mathcal{B}\cup\mathcal{C}$. Let $\psi \in [\lambda, 1-h_\fontL]$ be a parameter to be fixed. We will use, when possible, the following regions for packing items from $\isfat$: Region 1 on the top right corner of the knapsack with width $N(1-w_{\fontL})$ and height $\psi N$ and Region 2 which is the rectangular region $[0,N]\times [h_{\fontL} N, (1-\psi )\cdot N]$ (see Figure~\ref{fig:cardworotcase2aiii}). Region $1$ is completely free but Region 2 may overlap with vertical rectangles. 
	
	We will now divide Region 2 into a constant number of boxes such that: they do not overlap with vertical rectangles, the total area inside Region 2 which is neither overlapping with vertical rectangles nor covered by boxes is at most $O(\eps_\fontL) N^2$ and each box has width at least $\left(\frac{1}{2}+2\eps_{large}\right)N$ and height at least $\eps N$. That way we will be able to pack rectangles from $\isfver$ into the box defined by Region $1$ and rectangles from $\isfhor$ into the boxes defined inside Region 2 using almost completely its free space.
	In order to create the boxes inside Region 2 we first create a monotone chain by doing the following: Let $(x_1,y_1)=(\gamma N, h_{\fontL})$. Starting from position $(x_1,y_1)$, we draw an horizontal line of length $\eps_\fontL N$ and then a vertical line from bottom to top until it touches a vertical rectangle, reaching position $(x_2, y_2)$. From $(x_2, y_2)$ we start again the same procedure and iterate until we reach the vertical line $x=\left(\frac{1}{2}-2\eps_{large}\right)N$ or the horizontal line $y=(1-\psi )N$. 
	Notice that the area above the monotone chain and below $y=(1-\psi) N$ that is not occupied by vertical rectangles, is at most $\sum_i{\eps_{\fontL} N (y_{i+1}-y_i)} \le \eps_\fontL N^2$.
	The number of points $(x_i, y_i)$ defined in the previous procedure is at most $1/\eps_\fontL$. By drawing an horizontal line starting from each $(x_i,y_i)$ up to $(N,y_i)$, together with the drawn lines from the monotone chain and the right limit of the knapsack, we define $k\le 1/\eps_\fontL$ boxes. We discard the boxes having height less than $\eps N$, whose total area is at most $\frac{\eps}{\eps_\fontL}N^2\le \eps_\fontL N^2$, and have all the desired properties for the boxes. 
	
	Since the total area of rectangles in $\ilopt$ is at most $\frac{2}{5}N^2$, the total area not occupied by boxes in Region 2 is at most $N^2(\frac{2}{5} - \frac{1}{2}w_{\fontL} - (\lambda-\frac{1}{2})\frac{1}{2}  - \frac{1}{2}h_{\fontL})= N^2(\frac{13}{20}-\frac{w_{\fontL}}{2}-\frac{\lambda}{2}-\frac{h_{\fontL}}{2})$, which comes from the area we know is occupied for sure in regions $\mathcal{A}, \mathcal{B}$ and $\mathcal{C}$ by rectangles in $\ilopt$. This implies that the total area of the horizontal boxes is at least $N^2(1-\psi -h_{\fontL})-N^2(\frac{13}{20}-\frac{w_{\fontL}}{2}-\frac{\lambda}{2}-\frac{h_{\fontL}}{2})$ and the area of the vertical box is $(1-w_{\fontL})\psi$. 
	These two areas become equal if we can set $\psi =\frac{7+10(w_{\fontL}+\lambda -h_{\fontL})}{40-20w_{\fontL}}$. It is not difficult to verify that in this case $\psi \le 1-h_{\fontL}$.
	If $\frac{7+10(w_{\fontL}+\lambda -h_{\fontL})}{40-20w_{\fontL}} \ge \lambda$, then we set 
	$\psi= \frac{7+10(w_{\fontL}+\lambda -h_{\fontL})}{40-20w_{\fontL}}$. Otherwise we set $\psi=\lambda$.
	
	First, consider when $\psi= \frac{7+10(w_{\fontL}+\lambda -h_{\fontL})}{40-20w_{\fontL}}$.
	Since $\psi \ge \lambda$, the width of the boxes inside Region 2 is at least $\left(\frac{1}{2}+2\eps_{large}\right)N$ and the box in Region $1$ has height at least $\left(\frac{1}{2}+2\eps_{large}\right)N$. By using Steinberg's theorem, we can always pack in these boxes at least \begin{equation*} \resizebox{\hsize}{!}{$\left(\min\left\{1,\frac{\frac{1}{2}(N-w_{\fontL})\psi N}{a(\isfhor)}-\epss \right\} \right)|\isfhor| + \left(\min\left\{1,\frac{\frac{1}{2}(N-w_{\fontL})\psi N}{a(\isfver)}-\epss\right\}\right)|\isfver|.$}\end{equation*} 
	Note that from each box $B'$ of height $h (\ge \eps N)$, we can remove the cheapest $\eps h$-horizontal strip and use resource augmentation to get a container based packing with nearly the same profit as $B'$.
	Thus by performing a similar analysis to the one done in Subcase (iii a), and using the fact that $a(\isfat)\le N^2-(\frac{\alpha}{2} + (\lambda-\frac{1}{2})\frac{1}{2} + \frac{\beta}{2})N^2 \le N^2 - N^2(\frac{w_{\fontL}}{2} +(\lambda-\frac{1}{2})\frac{1}{2} + \frac{h_{\fontL}}{2})$, we can minimize the whole expression over the domain $\{\frac{w_{\fontL}}{2} + (\lambda-\frac{1}{2})\frac{1}{2}+\frac{h_{\fontL}}{2}\le \frac{2}{5}, \frac{1}{2}\le w_{\fontL}\le \frac{4}{5}, \frac{1}{2}\le \lambda \le 1, 0\le \beta \le \frac{1}{2}\}$ and prove that this solution packs at least \[ \left(\frac{3-O(\eps_L)}{4}\right)|\ilopt| + |\isthin| + \left(\frac{5}{36} - O(\eps_\fontL)\right) |\isfat|.\] Thus, using the above inequality along with \eqref{lem2card}, \eqref{lem3card}, \eqref{lem4card} and Lemma \ref{lem4cardgen}, we get 
	\begin{equation}\label{eqcardcase2a2}
	|\optrc|\geq \left(\frac{325}{558}-O(\eps_\fontL)\right)|OPT|
	\end{equation} \arir{Gives 1.7169}
	Finally, if $\psi = \lambda < \frac{7+10(w_{\fontL}+\lambda -h_{\fontL})}{40-20w_{\fontL}}$, we will not get equal area horizontal and vertical boxes for packing of items in $\isfat$. 
	In this case we change the width of the box inside Region $1$ to be $w_{\fontL}'<N(1-w_{\fontL})$ fixed in such a way that the area of this box is equal to the bound we have for the area of the boxes in Region $2$, i.e., $N^2(1-\lambda-h_{\fontL})-(\frac{13}{20}-\frac{w_{\fontL}}{2}-\frac{h_{\fontL}}{2}-\frac{\lambda}{2}+O(\eps_\fontL)) N^2$. Performing the same analysis as before, it can be shown that in this case we pack at least \[\left(\frac{(1-\lambda -h_{\fontL})N^2-(\frac{13}{20}-\frac{w_{\fontL}}{2}-\frac{h_{\fontL}}{2}-\frac{\lambda}{2}) N^2}{2a(\isfat)} - O(\eps_{\fontL})\right)|\isfat|,\] which is at least $(\frac{1}{6}-O(\eps_\fontL))|\isfat|$ over the domain $\{\frac{w_{\fontL}}{2} + (\lambda-\frac{1}{2})\frac{1}{2}+\frac{h_{\fontL}}{2}\le \frac{2}{5}, \psi < \lambda, \frac{1}{2} \le w_{\fontL}\le \frac{4}{5}, \frac{1}{2}\le \lambda \le 1, 0\le \beta \le \frac{3}{10}\}$ (and this solution is then better than \eqref{eqcardcase2a2}).\\
	\begin{figure*}[t!]
		\centering
		\begin{subfigure}[b]{0.35\textwidth}
			\centering
			\includegraphics[width=2.5in]{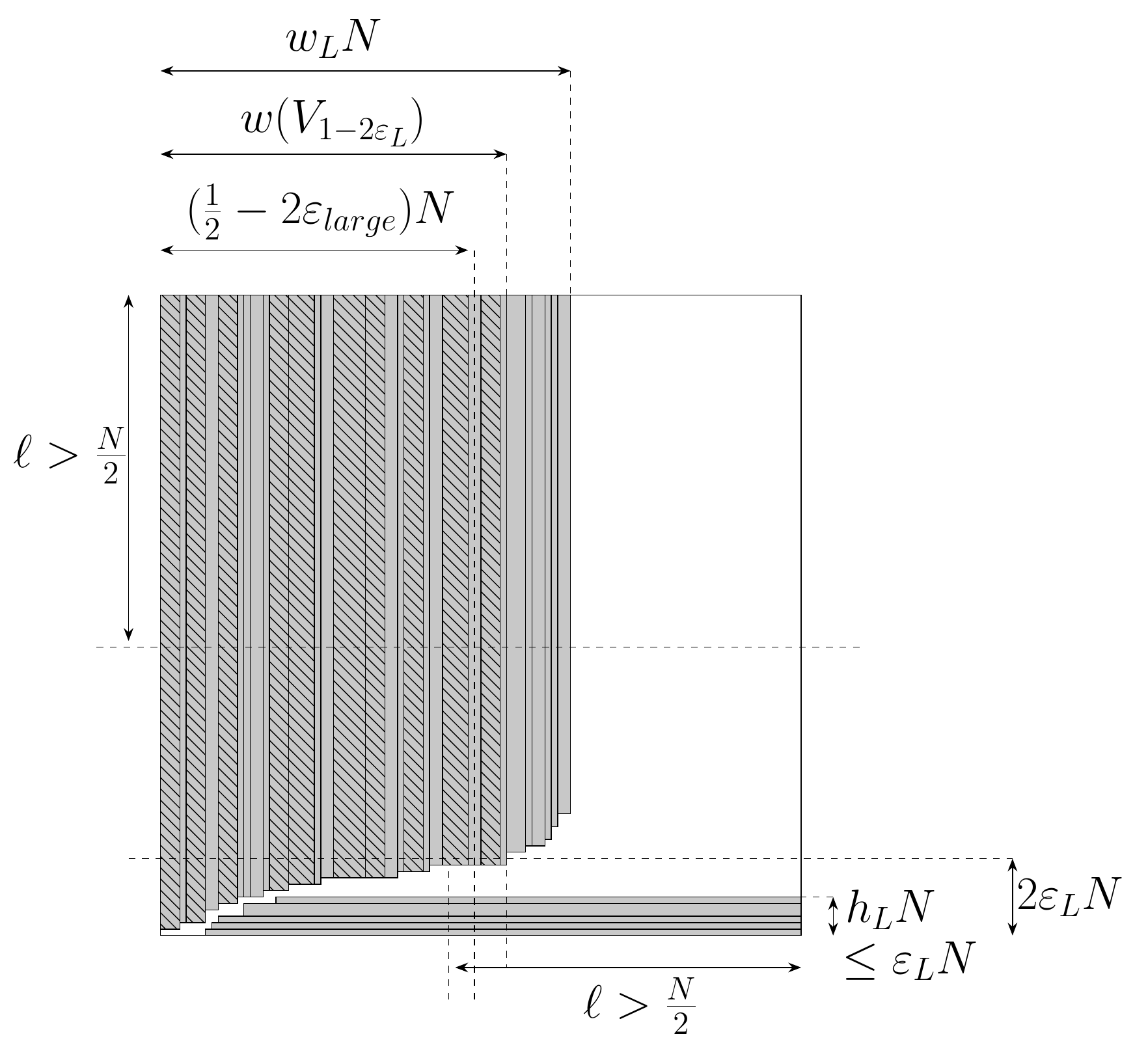}
			\caption{Packing of $\fontL$-region using rectangles from $\ilopt$. Striped rectangles are removed.}
			\label{fig:cardworotcase2b1}
		\end{subfigure}%
		\hspace{35pt}
		\begin{subfigure}[b]{0.4\textwidth}
			\centering
			\includegraphics[width=2.5in]{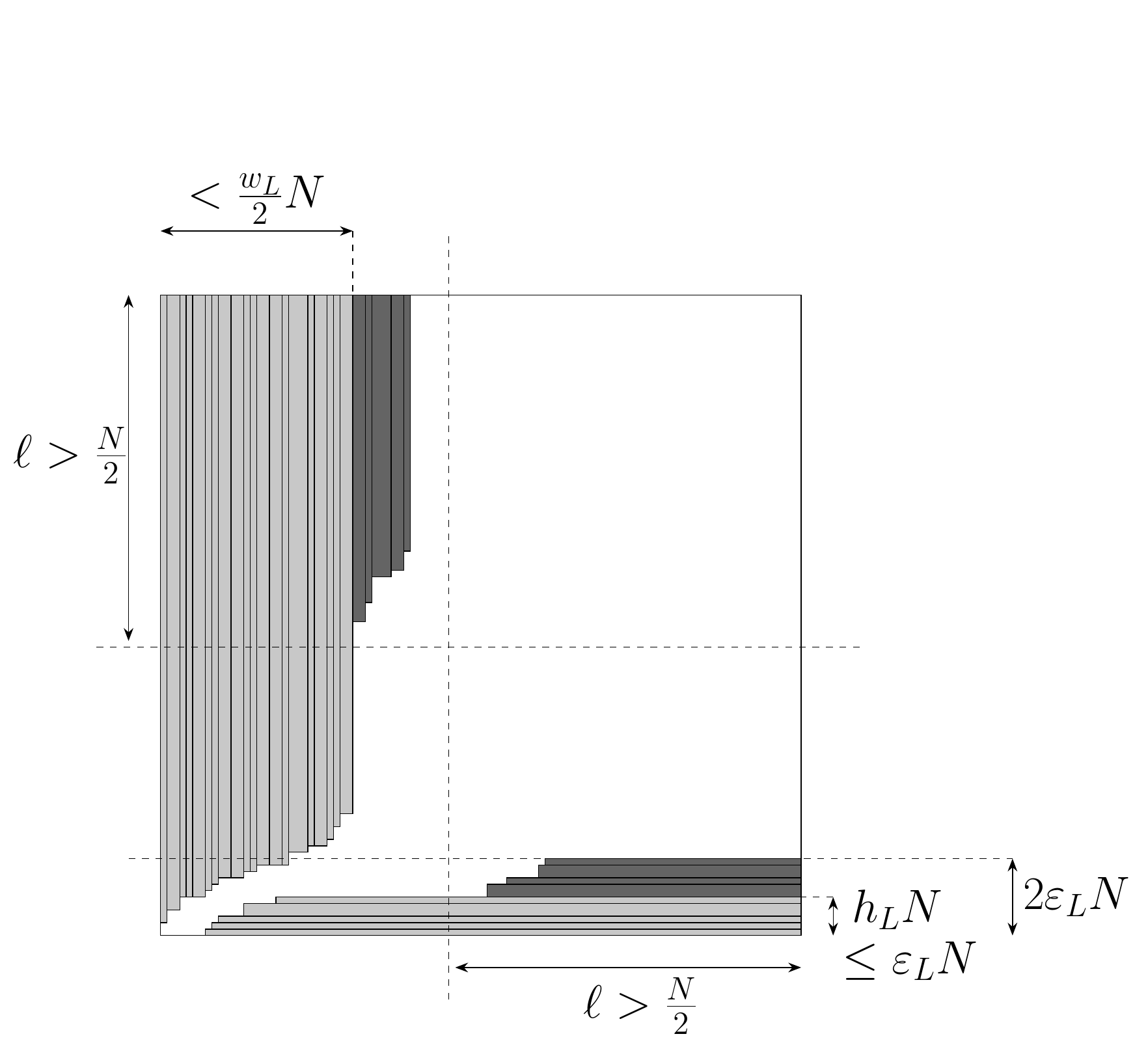}
			\caption{Packing of rectangles in $\ilopt \cup (\isthin)$. Dark gray rectangles are from $\isthin$. }
			\label{fig:cardworotcase2b2}
		\end{subfigure}
		~
		\caption{The case 2B.}
	\end{figure*}
	\textit{$\Diamond$ Case 2B. \Big($h_{\fontL} < \eps_{\fontL} $ and $w(V_{1-2\eps_{\fontL}}) > \left(\frac{1}{2}-2\eps_{large}\right)N$\Big) or \Big($w_{\fontL} < \eps_{\fontL} $ and $h(H_{1-2 \eps_{\fontL}}) > \left(\frac{1}{2}-2\eps_{large}\right)N$\Big)}\\
	In the first case $a(\ilopt) > \left(\frac{1}{2}-2\eps_{large}\right)(1-2\eps_{\fontL})N^2 +(w_\fontL -\frac12)N\cdot \frac{N}{2} \ge (\frac14+\frac{w_\fontL}{2}-\eps_{\fontL}-2\eps_{large})N^2$.
	Thus $a(\isfat) < (\frac34 -\frac{w_\fontL}{2} +\eps_{\fontL}+2\eps_{large})N^2$.\\
	Now consider the vertical rectangles in the boundary $\fontL$-region sorted non-increasingly by width and pick them iteratively
	until their total width crosses $(\frac{w_\fontL}{2}+3\eps_{\fontL}+2\eps_{large})N$.
	Remove these rectangles and push the remaining vertical rectangles in the $\fontL$-region to the left as much as possible.
	This modified $\fontL$-region will have profit at least $(\frac{1}{2} - O(\eps_{\fontL}))|\ilopt|$.
	Now we can put $\eps N$-strip for the vertical items from $\isthin$ next to the vertical part of $\fontL$-region. On the other hand, horizontal part of $\isthin$ can be placed on top of horizontal part of $\fontL$-region.
	The remaining space will be a free rectangular region of height at least $(1-2\eps_{\fontL}) N$ and width $(1-\frac{w_\fontL}{2}+2 \eps_\fontL+2\eps_{large})N$.
	We will use a part of this rectangular region of height $(1-3\eps_{\fontL}) N$ and width $(1-\frac{w_\fontL}{2}+\eps_\fontL)N$ to pack rectangles from $\isfat$ and the rest of the region for resource augmentation.
	Using Lemma~\ref{lem:smallStein}, we can pack small rectangles in this region with profit at least $\Big( \frac{(1-\frac{w_\fontL}{2})/2}{\frac34 -\frac{w_\fontL}{2}}-O(\efl)\Big)|\isfat| \ge (\frac34 -O(\eps_{\fontL}))|\isfat|$ as $w_\fontL \ge \frac12$. Hence, we get, 
	\begin{equation}
		\begin{aligned}
			|\optrc| &\ge \left(\frac{1}{2}-O(\eps_{\fontL})\right)|\ilopt|+|\isthin|\\
			& \quad +\left(\frac34-O(\eps_{\fontL})\right)|\isfat|
		\end{aligned}
		\label{CardCase2b1}
	\end{equation}
	On the other hand, as $a(\isfat) \le (\frac34  -\frac{w_\fontL}{2} +\eps_{\fontL}+2\eps_{large})N^2$ and $w_\fontL>1/2$, we get $a(\isfat) \le \frac12 N^2$ and thus from  Lemma \ref{lem4cardgen} we get, 
	\begin{equation}
	\label{CardCase2b1a}
	|\optrc| \ge \frac{3}{4}|\ilthin|+(1-O(\eps_{\fontL}))|\isopt|
	\end{equation}
	From inequalities \eqref{lem1card}, \eqref{lem3card}, \eqref{lem4card},
	\eqref{CardCase2b1}, \eqref{CardCase2b1a}, we get, 
	\begin{equation}\label{eqcardcase23}
	|\optrc|\geq \left(\frac{24}{41}-O(\eps_{\fontL})\right)|OPT|
	\end{equation}\arir{gives 1.708}\\
	Now we consider the last case when $w_{\fontL} < \eps_{\fontL} $ and $h(H_{1-2 \eps_{\fontL}}) > \left(\frac{1}{2}-2\eps_{large}\right)N$.  
	Note that as we assumed the cheapest subring was the top subring, after removing it we might be left with only $|\ilopt \cap \Rho|/2$ profit in the horizontal part of $\fontL$-region. So, further removal of items from the horizontal part might not give us a good solution.
	Thus we show an alternate good packing.  We restart with the ring packing and delete the cheapest vertical subring instead of the cheapest subring (i.e., the top subring) and create a new boundary $\fontL$-region.
	Here, consider the horizontal rectangles in the boundary $\fontL$-region in non-increasing order of height and take them
	until their total height crosses $(\frac{\beta_1+\beta_2}{2}+\eps_{small}+\eps)N$. Remove these rectangles and push the remaining horizontal rectangles to the bottom as much as possible. 
	Then, following similar arguments as in case 2B, we will obtain the same bounds as in inequality \eqref{eqcardcase23}.

	In summary, from all the cases (see Figure \ref{fig:cardworotcaseover} for the overview of the case analysis) the profit of the obtained solution is at least $$\left(\frac{325}{558}-O(\eps_{\fontL})\right) |\opt|.$$
\end{proof}

\section{Conclusions}

In this chapter, we presented techniques for \tdk that are not entirely based on containers. By leveraging a combination of container packings and \fontL-packings, we obtained the first algorithm that break the barrier of $2$ for the approximation factor of this problem. We obtained an approximation factor of $17/9+\eps < 1.89$ for the weighted case, and  $\frac{558}{325}+\eps<1.72$ in the cardinality case.

Further improvements could come from the study of generalizations of \fontL-packings. For example, a PTAS for  \emph{ring-packing} instances arising by shifting of long items would lead to an improved approximation factor. A PTAS for $O(1)$ simultaneous \fontL-packings would also be an interesting development, and might be a stepping stone for the long-standing problem of finding a PTAS for \tdk, which remains open even in the cardinality case, and even if pseudo-polynomial time is allowed instead of polynomial time.

\chapter[Approximations for 2DGK with Rotations]{Approximations for 2DGK\\with Rotations}
\label{chap:2dgk-rot}

In this chapter, we consider the \tdkr problem, where we are allowed to rotate the rectangles by $90^\circ$.

The possibility to rotate the rectangles makes the problem significantly easier. In fact, we obtain better approximation factors, and we do not need \fontL-packings anymore, obtaining purely container-based algorithms.

The basic idea is that any thin item can now be packed inside a narrow vertical strip
(say at the right edge of the knapsack) by possibly rotating it. This
way we do not lose one quarter of the profit due to the mapping to
an \fontL-packing and instead place all items from the ring into
the mentioned strip (while we ensure that their total width is small).
The remaining short items are packed by means of a novel \emph{resource
	contraction} lemma: unless there is one \emph{huge item} that occupies almost
the whole knapsack (a case that we consider separately), we can pack
almost one half of the profit of non-thin items in a \emph{reduced}
knapsack where one of the two sides is shortened by a factor $1-\eps$
(hence leaving enough space for the vertical strip). Thus, roughly
speaking, we obtain either all profit of non-thin items, or all profit
of thin items plus one half of the profit of non-thin items:
this gives a $3/2+\eps$ approximation. A further refinement of this
approach yields a $4/3+\eps$ approximation in the cardinality case.
We remark that, while resource augmentation is a well-established
notion in approximation algorithms, resource contraction seems to
be a rather novel direction to explore.

\begin{theorem}\label{thm:mainNoRotation} For any constant $\eps>0$,
	there exists a polynomial-time $\frac{3}{2}+\eps$ approximation algorithm
	for \tdkr. In the cardinality case the approximation factor can be
	improved to $\frac{4}{3}+\eps$. \end{theorem}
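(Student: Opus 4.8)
The plan is to follow the strategy sketched in the introduction of this chapter, adapting the container-based machinery of Chapter~\ref{chap:preliminaries} (in particular Lemma~\ref{lem:structural_lemma_augm} and Theorem~\ref{thm:container_packing_ptas}) and the item classification of Section~\ref{sec:weighted}, but now exploiting the freedom to rotate. First I would fix constants $\epsl>\epss$ via the analog of Lemma~\ref{lem:item-classification} so that intermediate items carry negligible profit, and classify $\opt$ into small, large, horizontal, vertical (skewed) items. Because rotations are allowed, a \emph{thin} item (one with shorter side $\le \epsr N$ and longer side $>N/2$, say) can always be rotated to be ``tall and narrow'' and stacked into a vertical strip of small total width at the right edge of the knapsack; so after a corridor/box decomposition as in Section~\ref{sec:structural:boxes} I would route \emph{all} thin items into one boundary vertical strip of width $O(\epsr)N$ rather than paying the $3/4$ loss of an \fontL-packing. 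The set of non-thin skewed items (plus fat items from the box decomposition) I would pack into $O_\eps(1)$ containers via the Container Packing Lemma~\ref{lem:containerPack}, small items into area containers via the Small Items Packing Lemma~\ref{lem:smallPack}.

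The crux is the \textbf{resource contraction lemma}: I claim that, \emph{unless there is a single ``huge'' item} whose area is $\ge (1-O(\eps))N^2$ (equivalently both sides $\ge (1-O(\eps))N$), one can pack a subset of the non-thin items of profit at least $(\tfrac12-O(\eps))p(\opt_{\mathrm{non\text{-}thin}})$ inside a \emph{contracted} knapsack of size $(1-\eps)N\times N$ (or $N\times(1-\eps)N$), using $O_\eps(1)$ containers. The proof idea: delete items intersecting a random vertical strip of width $\eps N$; a non-thin skewed item has one side $\le \epss N$ and the other side $\le N$, so it is hit with probability $\le \tfrac12 + O(\eps)$ (here the absence of a huge item is what prevents an item from being wide \emph{and} forced to survive with probability $1$), and small items are hit with probability $O(\eps)$; what remains fits in $[0,(1-\eps)N]\times[0,N]$ and, after a further $\eps$-strip removal plus the Resource Augmentation Lemma~\ref{lem:structural_lemma_augm}, into $O_\eps(1)$ containers within a $(1-2\eps)N\times N$ box, leaving room for the $O(\epsr)N$-wide thin strip. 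Taking the better of two packings — (a) all non-thin items via a corridor-to-container decomposition with a degenerate strip (cf.\ Lemma~\ref{lem:onlyFat}), giving $p(\opt_{\mathrm{non\text{-}thin}})$; versus (b) the thin strip plus the contracted-knapsack packing, giving $p(\opt_{\mathrm{thin}})+(\tfrac12-O(\eps))p(\opt_{\mathrm{non\text{-}thin}})$ — yields, by the usual averaging/worst-case balancing, a $\tfrac32+O(\eps)$-approximation; the huge-item case is handled separately by brute force (guess the huge item, recurse on the $O_\eps(1)$-item residual instance or pack greedily in the thin leftover L-shaped region). Finally one rescales $\eps$ to absorb the $O(\eps)$.

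For the cardinality case I would refine this to $\tfrac43+\eps$ by a better use of symmetry: instead of removing a single vertical strip, in case (b) remove a random vertical \emph{and} a random horizontal strip and pack vertical non-thin items in the horizontal-leftover region and horizontal non-thin items in the vertical-leftover region separately (as in Lemmas~\ref{lem:evenOdd}, \ref{lem:ringCase}); since each non-thin skewed item is short in one dimension it survives at least one of the two strip removals with good probability, pushing the recoverable fraction from $\tfrac12$ towards $\tfrac23$ of one of the two item classes, and combined with the thin-strip packing and the all-non-thin packing the three bounds balance at $\tfrac34\,p(\opt)$, i.e.\ a $\tfrac43$-approximation. Throughout, all packings produced are (up to resource augmentation, which is then removed by Theorem~\ref{thm:container_packing_ptas}) container packings with $O_\eps(1)$ containers whose sizes lie in a polynomial-size guessable set, so the final algorithm just enumerates container configurations, solves the induced GAP instance by Lemma~\ref{lem:GAP}, and outputs the best solution found.

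\medskip

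The main obstacle I expect is proving the resource contraction lemma cleanly — specifically, isolating exactly when a single huge item forces the probabilistic strip argument to fail, and verifying that outside that case every surviving configuration genuinely fits (with $O_\eps(1)$ containers) into the contracted knapsack while still leaving the $O(\epsr)N$ vertical strip free for \emph{all} thin items simultaneously; getting the corridor/box decomposition to be compatible with the strip (so that thin items, which the decomposition produces, really do all have a side short enough to be rotated into a narrow strip of total width $O(\epsr)N$) requires the same care as Lemma~\ref{lem:boxProperties}, and the constants $\epsr,\epss,\epsau$ must be chosen in the right nested order, exactly as in the constraint table at the end of Chapter~\ref{chap:StripPacking}.
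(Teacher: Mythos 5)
Your high-level plan (rotate all thin items into one narrow boundary strip, recover half of the remaining profit in a contracted knapsack, treat a single huge item separately, and take the best of the resulting container packings) is indeed the skeleton of the paper's proof, but the two steps that carry all the weight are not established, and your proposed proofs of them would fail. First, the resource contraction argument: you delete the items hit by a random vertical strip of width $\eps N$ and claim each non-thin item survives with probability about $1/2$. This is false: a \emph{fat} (non-thin) horizontal item may have height at most $\epss N$ but width arbitrarily close to $N$ (the absence of a massive item only excludes items with \emph{both} sides $\ge(1-\eps)N$), and such an item is hit by every vertical strip, so its survival probability is essentially $0$, not $1/2$. The paper's Lemma~\ref{lem:weightResContract} avoids randomness entirely: assuming no massive item, it deterministically partitions the whole set $M$ into two parts, \emph{each} of which can be repacked into an $N\times(1-\eps/2)N$ bin, by a case analysis on which items cross the four boundary strips and by rotating the items confined to the left/right strips onto the top of the packing; taking the more profitable part gives the $1/2$ fraction. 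Second, the massive-item case cannot be dispatched ``by brute force on an $O_\eps(1)$-item residual instance'': the other items of the optimal packing are only constrained to have one side $\le \eps N$, they can be arbitrarily many, and in the weighted case they may carry most of the profit. Moreover, simply taking the better of ``massive item alone'' and ``everything else'' only yields a $1/2$ fraction in this branch, which is not enough for $3/2+\eps$; the paper's Lemma~\ref{lem:massiveitem} needs Steinberg's theorem plus a three-candidate argument (the sets $\R\setminus\{m\}$, $M_H\cup V\cup\{m\}$, $M_V\cup H\cup\{m\}$, each surviving item lying in at least two of them) to reach the required $2/3-O(\eps)$ fraction.

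The cardinality refinement is also unsubstantiated. Asserting that removing one random vertical and one random horizontal strip ``pushes the recoverable fraction from $1/2$ towards $2/3$'' is not an argument, and it again stumbles on fat items that are long in one direction (they die under the corresponding strip with probability well above $1/3$). The paper instead proves a separate cardinality resource contraction (Lemma~\ref{uwrescontr1}) giving a $\tfrac23(1-O(\eps))$ fraction in an $N\times(1-\eps^{1/(2\eps)+1})N$ bin; its proof needs a threshold choice as in Lemma~\ref{lem:medWeight} (so that items of intermediate height have negligible cardinality), an area bound as in Lemma~\ref{lem:stripHalf}, Steinberg-based repacking (Lemma~\ref{lem:Stein}), and a case analysis on the total width of the very tall items, with rotation used to relocate the items confined to the boundary strips. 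Combining this with the two packings $|\apx|\ge(1-\eps)|F|$ and $|\apx|\ge|T|+(2/3-\eps)|F|$ is what balances at $3/4$ of $\opt$, i.e.\ $4/3+\eps$. So while your decomposition into cases and your use of Lemma~\ref{lem:structural_lemma_augm}, Theorem~\ref{thm:container_packing_ptas} and Lemma~\ref{lem:boxProperties} match the paper, both contraction lemmas and the massive-item lemma still need genuine proofs; as written, the argument does not go through.
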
 

First, let us introduce some notations.
Without loss of generality, assume $\width(i) \ge \height(i)$ for all rectangles $R_i \in \R$. Now consider packing of some optimal solution in the knapsack.
Denote the strips of width $N$ and height $\eps N$ at the top and bottom of the knapsack by $S_{T,\eps}:=[0,N]\times[(1-\eps)N,N]$ and 
$S_{B,\eps} :=[0,N]\times[0, \eps N]$. Similarly, denote the strips of height $N$ and width $\eps N$ to the left and right of the bin
by $S_{L,\eps}:=[0, \eps N] \times [0,N]$ and  $S_{R,\eps} :=[(1-\eps)N, N]\times [0,N]$. 
The set of rectangles intersected by one of these strips $S_{K, \eps}, K \in \{T, B, L, R\}$ is denoted by 
$E_{K, \eps}$. 
There are two kinds of rectangles that intersect these strips.
The set of rectangles completely contained inside one of these strips $S_{K, \eps}$, is denoted by 
$C_{K, \eps}$.
The set of remaining rectangles that do not lie completely inside the strip $S_{K, \eps}$ but intersect the strip, is denoted by $D_{K, \eps}$ (see Figure~\ref{fig:unweighted_rotations_def}).
With slight abuse of notations, we will assume $C_{K, \eps}$ also represents the embedding of corresponding rectangles in the strip.

\section{Weighted Case}
\label{sec:weightedRot}

In this section we give a polynomial time $(3/2+\eps)$-approximation algorithm for the weighted 2-dimensional geometric knapsack problem when items are allowed to be rotated by $90^\circ$.
In contrary to the unweighted case, where it is possible to remove a constant number of \emph{large} items, the same is not possible in the weighted case, where an item could have a big profit.

We call an item $i$ \emph{massive} if $\width(i) \geq (1 - \eps)N$ and $\height(i) \geq (1 - \eps)N$. The presence of such a big item in the optimal solution requires a different analysis, that we present below. In both the cases, we can show that there exists a container packing with roughly 2/3 of the profit of the optimal solution.

Let us assume that $\eps < 1/6$. We will prove the following result:

\begin{theorem}\label{lem:structural_lemma_weighted}Let $\eps > 0$ and let $\R$ be a set of items that can be packed into the $N \times N$ knapsack. Then there exists an $\eps$-granular container packing with $O_\eps(1)$ containers of a subset $\R' \subseteq \R$ into the $N \times N$ knapsack such that $\profit(\R') \geq (2/3 - O(\eps)) \profit(\R)$, if rotations are allowed.
\end{theorem}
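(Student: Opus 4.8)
The plan is to split into two cases depending on whether the optimal packing $\R$ contains a massive item, since a massive item leaves only a thin $\eps N$-frame of free space and must be handled by an essentially one-dimensional argument.

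\textbf{Case 1: no massive item in $\R$.} First I would classify the items of $\R$ relative to the knapsack sides exactly as in Section~\ref{sec:tdk_car:simple}: using the \emph{with-rotations} freedom, I may assume $\width(i)\ge\height(i)$, and I delete items that intersect a randomly placed horizontal strip $S$ of width $\eps N$. Since no item is massive, each item has at least one side of length at most $(1-\eps)N$, so by choosing the orientation and the strip appropriately each item is deleted with probability at most roughly $\tfrac12+O(\eps)$ (long items, whose longer side exceeds $N/2$, with probability about $\tfrac12$, and short items with probability $O(\eps)$), and in expectation the surviving set $\R''$ has $\profit(\R'')\ge(\tfrac12-O(\eps))\profit(\R)$ while fitting into a reduced knapsack $[0,N]\times[0,(1-\eps)N]$. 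The crucial extra ingredient for rotations is that \emph{thin} items — those whose shorter side is at most $\epsr N$ — can be rotated and stacked side by side inside a narrow vertical strip of width $O(\epsr)N$ at the right edge, provided their total shorter-side length is $O(N)$, which we can guarantee by the corridor/box machinery of Lemma~\ref{lem:boxProperties}. This lets us recover (roughly) the \emph{full} profit of thin items rather than the $\tfrac34$-fraction that the \fontL-packing incurred in the no-rotation case. Taking the better of (a) the resource-augmentation container packing of Theorem~\ref{thm:container_packing_ptas} / Lemma~\ref{lem:structural_lemma_augm} applied to the reduced knapsack after deleting thin items, which yields $\ge(\tfrac12-O(\eps))$ of the non-thin profit plus all the thin profit, and (b) the container packing that keeps all non-thin items, one of the two has profit $\ge(\tfrac23-O(\eps))\profit(\R)$, since the worst case balances the two bounds at a $2{:}1$ split. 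Formally this is the ``resource contraction lemma'' sketched in the chapter introduction: in the reduced knapsack one still fits $\ge\tfrac12$ of the non-thin profit.

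\textbf{Case 2: $\R$ contains a massive item $i^\*$.} Here $\width(i^\*),\height(i^\*)\ge(1-\eps)N$, so $i^\*$ alone occupies a container equal to its bounding box, and all other items of $\R$ lie in the $L$-shaped (in fact frame-shaped) region of width $\le\eps N$ surrounding it; each such item therefore has one side $\le\eps N$ and can be rotated to be ``thin''. I would then observe that this residual region decomposes into $O(1)$ horizontal and vertical strips of width $\le\eps N$, and packing items into a single such strip is a one-dimensional knapsack-type problem up to the NFDH/Steinberg-type loss — so a constant number of vertical and horizontal containers suffices to recover $(1-O(\eps))$ of $\profit(\R\setminus\{i^\*\})$. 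Together with the container holding $i^\*$ this gives $(1-O(\eps))\profit(\R)\ge(\tfrac23-O(\eps))\profit(\R)$, which is even stronger than needed. The packing produced is container-based and $\eps$-granular by construction (area containers for the small remainder, packed via Lemma~\ref{lem:packAreaContainer}), and the number of containers is $O_\eps(1)$ in both cases.

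\textbf{Main obstacle.} The delicate point is Case~1: proving the resource-contraction statement that deleting a narrow strip still leaves $\ge(\tfrac12-O(\eps))$ of the non-thin profit packable into $O_\eps(1)$ containers in the \emph{contracted} knapsack, while simultaneously ensuring the thin items have total shorter-side length $O(N)$ so they fit in the freed strip after rotation. This requires invoking the corridor-into-boxes decomposition of the earlier sections (Lemma~\ref{lem:corridorPack-weighted} and Lemma~\ref{lem:boxProperties}) to control the thin items, then a random-shift argument on the box decomposition to lose only an $O(\eps)$ fraction, and finally Lemma~\ref{lem:containerPack}/Theorem~\ref{thm:container_packing_ptas} to convert boxes into containers without LP rounding. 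The bookkeeping of how the ``long'' versus ``short'' non-thin items behave under the random strip deletion — mirroring the $\tfrac12$ versus $\tfrac34$ accounting of Theorem~\ref{thm:16/9-apx} but without the \fontL-packing loss — is where the $\tfrac23$ bound (rather than the $\tfrac9{16}$ of the no-rotation case) comes from, and getting the constants to close at exactly $\tfrac23$ is the part that needs care.
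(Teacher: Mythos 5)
Your overall skeleton matches the paper: split on whether $\R$ contains a massive item, and in the non-massive case take the better of ``keep the fat items'' and ``keep all thin items plus half of the rest via a contracted knapsack'', with the $\max\{x,\,\tfrac{1-x}{2}+ \dots\}$ arithmetic yielding $2/3$. However, both of your key technical steps have genuine gaps, and in each case the paper's actual argument is structurally different precisely to avoid them.

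In your Case 1 the central claim is the resource-contraction statement, and your proposed proof of it --- delete everything meeting a random horizontal strip of width $\eps N$, arguing each non-massive item dies with probability at most about $\tfrac12+O(\eps)$ --- is false. An item with both sides equal to, say, $0.9N$ is not massive (massive requires both sides $\geq(1-\eps)N$), yet it is hit by \emph{any} axis-parallel strip, horizontal or vertical, with probability about $0.9$; in the weighted setting such an item may carry essentially all of $\profit(\R)$, and it cannot be discarded as in the cardinality argument of Theorem~\ref{thm:16/9-apx}. This is exactly why the paper's Resource Contraction Lemma~\ref{lem:weightResContract} is not a random-shift argument at all: it deterministically partitions $M$ into two sets, each of which is shown (by a case analysis on items crossing the top/bottom boundary strips, using rotation to re-stow the strip contents $C_{L,\eps_s},C_{R,\eps_s},C_{T,3\eps_s},C_{B,\eps_s}$) to fit into an $N\times(1-\eps/2)N$ bin, so the more profitable half gives the $\tfrac12$ factor. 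Without something of this deterministic flavour your bound (a) in Case 1 is unproven, and the $2/3$ does not follow.

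Your Case 2 claims $(1-O(\eps))\profit(\R)$, which is stronger than what the paper proves (Lemma~\ref{lem:massiveitem} only gets $2/3-O(\eps)$), and the argument given does not work: the complement of the massive item is a \emph{frame}, and the four thin strips overlap at its corners, so the full-width horizontal strip above $m$ and the full-height vertical strip left of $m$ cannot both be turned into containers; moreover, since $m$ fills all but an $O(\eps N^2)$ sliver there is no slack left inside the knapsack for the NFDH/Steinberg/resource-augmentation repacking your ``one-dimensional knapsack-type'' step implicitly needs (Lemma~\ref{lem:structural_lemma_augm} augments one side, which here would exceed the knapsack). The paper instead produces three candidate container packings --- drop $m$ entirely (then the leftovers have area $\leq 2\eps N^2$ and repack via Steinberg plus resource augmentation), or keep $m$ together with the vertical neighbours $V$ and $M_H$, or keep $m$ together with the horizontal neighbours $H$ and $M_V$ --- and uses the fact that every item lies in at least two of the three sets to extract a $2/3$ fraction. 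You would need either this averaging device or a genuinely new idea to justify anything close to your claimed $(1-O(\eps))$ in the massive case.
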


We start by analyzing the case of a set $\R$ that has a massive item.

\begin{lemma}\label{lem:massiveitem}
	Suppose that a set $\R$ of items can be packed into a $N \times N$ bin and there is a massive item $m \in \R$. Then, there is a container packing with at most $O_\eps(1)$ containers for a subset $\R' \subseteq \R$ such that $\profit(\R') \geq \left(\dfrac{2}{3} - O(\eps) \right) \profit(\R)$. 
\end{lemma}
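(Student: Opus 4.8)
\textbf{Proof plan for Lemma~\ref{lem:massiveitem} (massive item case).}

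The plan is to exploit the fact that a massive item $m$ essentially fills the knapsack, so the remaining items of $\R$ must all lie in the thin $\fontL$-shaped (or rather cross-shaped) sliver of total area $O(\eps)N^2$ around $m$. More precisely, since $\width(m),\height(m)\ge (1-\eps)N$, after placing $m$ somewhere in the knapsack, every other item $i\in\R\setminus\{m\}$ in the given packing is contained in the union of the four boundary strips $S_{T,\eps}, S_{B,\eps}, S_{L,\eps}, S_{R,\eps}$ of width/height $\eps N$ (using that $\R$ is packable together with $m$, so no other item can overlap $m$'s region, which is an axis-aligned box of side $\ge (1-\eps)N$; a small shift lets us assume $m$ is pushed to a corner, so the leftover region is an actual $\fontL$, contained in $S_{R,\eps}\cup S_{T,\eps}$, say). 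First I would argue that we may assume every item of $\R\setminus\{m\}$ other than $O_\eps(1)$ exceptions is \emph{entirely} contained in one of these two strips: an item straddling the boundary between the two strips has one side $\ge (1-\eps)N - \eps N = (1-2\eps)N$ and the other side $\le \eps N$, i.e. it is itself ``almost massive'' in one dimension; there can be at most a constant number of such straddling items that are long in, say, the horizontal direction and simultaneously cross into the vertical strip, because they would stack up, but more carefully: items crossing the inner corner of the $\fontL$ can be charged separately. Actually the cleaner route: split $\R\setminus\{m\}$ into the items fitting in the horizontal strip $S_{T,\eps}$, those fitting in the vertical strip $S_{R,\eps}$, and items crossing from one into the other; the last class consists of items with one dimension $>(1-\eps)N$ — hence ``long'' — so at most $1/\eps$ of them stack in the strip and we may simply drop the least profitable of the three classes? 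No — we want $2/3$, not $1/2$, so dropping a whole class is too lossy.

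Instead, the right idea is: the items of $\R\setminus\{m\}$ lying in the horizontal strip $S_{T,\eps}$ can be repacked, with a $(1-O(\eps))$ loss, into a single horizontal container of size $N\times \eps N$ by treating $S_{T,\eps}$ as a strip-packing instance — or better, directly invoke the Resource Augmentation Packing Lemma (Lemma~\ref{lem:structural_lemma_augm}) on the strip $S_{T,\eps}$ (a box of size $N\times \eps N$, rescaled), obtaining an $\eps$-granular container packing of a $(1-O(\eps))$-profit subset into a box of size $N\times(1+\eps)\eps N$, and symmetrically for the vertical strip. The crossing items (one side $>(1-\eps)N$) are few — $O_\eps(1)$ of them in any feasible packing — so each gets its own container. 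This yields a container packing of a $(1-O(\eps))$ fraction of $\profit(\R\setminus\{m\}) = \profit(\R)-\profit(m)$ using $O_\eps(1)$ containers, but it does NOT include $m$ and the containers may slightly overflow the knapsack; we must now either fit $m$ too, or be content without it. Here is the case split that gives $2/3$: if $\profit(m)\le \tfrac13\profit(\R)$, take the container packing of $\R\setminus\{m\}$ just described (shrink the augmented strips back — the augmentation can be absorbed since $m$'s region had $\eps N$ of slack on each side, OR drop an $O(\eps)$ further fraction), getting profit $\ge (1-O(\eps))\cdot\tfrac23\profit(\R)$; if instead $\profit(m) > \tfrac13\profit(\R)$, then we must keep $m$: put $m$ in its own (trivial) container in a corner, and pack into the remaining $\fontL$-sliver the better of (the horizontal-strip items) or (the vertical-strip items) via Lemma~\ref{lem:structural_lemma_augm} as above — but now we only guarantee half of $\profit(\R\setminus\{m\})$ this way, giving $\profit(m) + \tfrac12(\profit(\R)-\profit(m)) = \tfrac12\profit(\R)+\tfrac12\profit(m) > \tfrac12\profit(\R)+\tfrac16\profit(\R) = \tfrac23\profit(\R)$. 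Taking the better of the two constructions always yields $\ge(\tfrac23-O(\eps))\profit(\R)$.

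The steps, in order: (1) normalize so $\width(i)\ge\height(i)$ and place $m$ in the given packing; by a shifting argument push $m$ to the bottom-left corner, so that $\R\setminus\{m\}$ lies in $S_{R,\eps}\cup S_{T,\eps}$. (2) Classify $\R\setminus\{m\}$ into $H$ = items contained in $S_{T,\eps}$, $V$ = items contained in $S_{R,\eps}$, and $W$ = items crossing both; show $|W| = O_\eps(1)$ since each $w\in W$ has a side $>(1-\eps)N$ and such items, being skewed and long, can number at most $O(1/\eps)$ in the cross region. (3) Apply Lemma~\ref{lem:structural_lemma_augm} to the (rescaled) box $S_{T,\eps}$ with the items $H$: get an $\eps$-granular container packing of $H'\subseteq H$, $\profit(H')\ge(1-O(\eps))\profit(H)$, into $N\times(1+\eps)\eps N$ with $O_\eps(1)$ containers; symmetrically for $V$. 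Give each item of $W$ its own container. (4) Construction A (discard $m$): place all these containers in the knapsack; since their total ``height''/``width'' footprint is $O(\eps)N$ they fit in the $2\eps N$-wide boundary strips with room to spare after rescaling the augmentation away (or drop an $O(\eps)$-profit subset to kill the overflow). Profit $\ge(1-O(\eps))(\profit(\R)-\profit(m))$. (5) Construction B (keep $m$): one container for $m$ in the corner; then take whichever of the $H'$-container-packing or the $V'$-container-packing has larger profit and place it in the corresponding strip (which genuinely fits in the $\eps N$-sliver left by $m$, after absorbing the $(1+\eps)$ augmentation into a further $O(\eps)$ loss or by noting the sliver is actually $\eps N$ and we shrink). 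Profit $\ge \profit(m)+\tfrac12(1-O(\eps))(\profit(\R)-\profit(m))$. (6) Output the better of A and B; arithmetic as above gives $(\tfrac23-O(\eps))\profit(\R)$, and the total number of containers in either construction is $O_\eps(1)$.

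The main obstacle I expect is step (4)/(5): reconciling the resource \emph{augmentation} (which gives containers slightly larger than the strip) with the fact that in construction B the space left by $m$ is exactly $\eps N$ wide, with no slack. The clean fix is to apply Lemma~\ref{lem:structural_lemma_augm} not to a strip of width $\eps N$ but to one of width $\eps N/(1+\eps)$ — i.e. first discard (via an averaging/shifting argument over $1/\eps$ sub-strips) an $O(\eps)$-profit subset of $H$ so the rest fits in a thinner strip, then augment back up to $\eps N$. A secondary nuisance is justifying $|W|=O_\eps(1)$ rigorously and handling items that cross the \emph{outer} corner of the $\fontL$ versus its inner corner; but since every item in $W$ has one dimension exceeding $(1-\eps)N > N/2$ it is ``long'', and long items in a bounded region are $O_\eps(1)$ in number by a trivial stacking argument, so this is routine. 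The rest is bookkeeping of the $O(\eps)$ losses.
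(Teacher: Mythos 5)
Your high-level plan — take the better of ``drop $m$'' and ``keep $m$ plus the more profitable of the two families of items beside/above it'', and do the $2/3$ arithmetic — is the same averaging idea as the paper's proof (the paper phrases it as three candidate solutions, $\R\setminus\{m\}$, $\{m\}\cup V\cup M_H$ and $\{m\}\cup H\cup M_V$, each item appearing in at least two of them). However, two of your concrete steps have genuine gaps. First, the claim that you can ``push $m$ to the bottom-left corner so that $\R\setminus\{m\}$ lies in $S_{R,\eps}\cup S_{T,\eps}$'' is not justified by a simple shift: when you translate the items that were left of (resp.\ below) $m$ into the right (resp.\ top) sliver, a tall item originally to the left/right of $m$ and a wide item originally below/above $m$ — which were disjoint in the original packing because they were separated in the $x$-direction — can end up overlapping in the $(N-\width(m))\times(N-\height(m))$ corner square. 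Whether a collision-free repacking always exists is exactly the kind of statement that needs a proof, and the paper sidesteps it entirely: it never moves $m$, and each of its ``keep $m$'' candidates uses only one of the two families together with $m$, assembled as $m$'s container, a box of size $(N-\width(m))\times\height(m)$ next to it, and a horizontal container of size $N\times(N-\height(m))$ on top.

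Second, and more seriously, your fix for the lack of slack when applying Lemma~\ref{lem:structural_lemma_augm} inside an $\eps N$-wide sliver does not work. You propose to discard an $O(\eps)$-profit subset so that the remaining items ``fit in a thinner strip'' of width $\eps N/(1+\eps)$ and then augment back; but items inside the sliver may themselves have width equal to the full sliver width (they are only guaranteed to be narrower than $N-\width(m)$), so no horizontal compression is possible no matter which low-profit subset you delete. The slack has to be created in the \emph{long} dimension: the paper deletes every item having an edge inside a cheap vertical strip $S_i$ and a cheap horizontal strip $T_j$ of width/height roughly $2\eps N$ chosen \emph{inside} $m$'s extent (an averaging argument over $1/(3\eps)$ disjoint positions makes this cost only $O(\eps)\profit(\R)$, measured against the whole instance); the surviving items to the left/right of $m$ then avoid $T_j$, so they can be stacked into a box of size $(N-\width(m))\times(1-2\eps)N$, and since $(1-2\eps)(1+\eps)N<(1-\eps)N\le\height(m)$ the resource augmentation now fits beside $m$. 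Your ``drop $m$'' construction hits the same no-slack wall if done strip by strip; the paper instead exploits that rotations are allowed here: every item of $\R\setminus\{m\}$ has one side at most $\eps N$, and $a(\R\setminus\{m\})\le(2\eps-\eps^2)N^2\le N^2/(2(1+\eps))$, so after rotating all of them to be flat, Theorem~\ref{thm:steinberg} packs them into an $N\times N/(1+\eps)$ box and Lemma~\ref{lem:structural_lemma_augm} then yields the container packing inside the full knapsack. (A small additional remark: your class $W$ of ``crossing'' items is actually empty — any item disjoint from $m$ is already confined to one of the strips and is merely short in one dimension, not necessarily longer than $(1-\eps)N$ in the other — so that part of the argument is moot, but harmless.)
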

\begin{proof}
	Assume, without loss of generality, that $1/(3\eps)$ is an integer. Consider the items in $\R \setminus \{m\}$. Clearly, each of them has width or height at most $\eps$; moreover, $a(\R \setminus \{m\}) \leq (1 - (1 - \eps)^2) N^2 = (2\eps - \eps^2)N^2 \leq \frac{N^2}{2(1+\eps)}$, as $\eps < 1/6$; thus, by possibly rotating each element so that the height is smaller than $\eps$, by Theorem~\ref{thm:steinberg} all the items in $\R \setminus \{m\}$ can be packed in a $N \times \frac{N}{1 + \eps}$ bin; then, by Lemma~\ref{lem:structural_lemma_augm}, there is a container packing for a subset of $\R \setminus \{m\}$ with $O_\eps(1)$ containers that fits in the $N \times N$ bin and has profit at least $(1 - O(\eps))\profit(\R\setminus\{m\})$
	
	Consider now the packing of $\R$. Clearly, the region $[\eps N, (1 - \eps)N]^2$ is entirely contained within the boundaries of the massive item $m$. Partition the region with $x$-coordinate between $\eps N$ and $(1 - \eps)N$ in $k = 1/(3\eps)$ strips of width $3\eps(1 - 2\eps)N \geq 2\eps N$ and height $N$, let them be $S_1, \dots, S_{k}$; let $\R(S_i)$ be the set of items in $\R$ such that their left or right edge (or both) are contained in the interior of strip $S_i$. Since each item belongs to at most two of these sets, there exists $i$ such that $\profit(\R(S_i)) \leq 6\eps \profit(\R)$.\\
	Symmetrically, we define $k$ horizontal strips $T_1, \dots, T_k$, obtaining an index $j$ such that $\profit(\R(T_j)) \leq 6\eps \profit(\R)$. Thus, no item in $\overline{\R} := \R \setminus (R(S_i) \cup R(T_j))$ has a side contained in the interior of $S_i$ or $T_j$, and $\profit(\overline{\R}) \geq (1 - 12\eps)\profit(\R)$. Let $M_V$ be the set of items in $\overline{\R} \setminus \{m\}$ that overlap $T_j$, and let $M_H$ be the set of items in $\overline{\R} \setminus \{m\}$ that overlap $S_i$. Clearly, the items in $M_H$ can be packed in a horizontal container with width $N$ and height $N - \height(m)$, and the items in $M_V$ can be packed in a vertical container of width $N - \width(m)$ and height $N$.
	
	Let $H$ be the set of items of $\overline{\R} \setminus M_H$ that are completely above the massive item $m$ or completely below it; symmetrically, let $V$ be the set of items of $\overline{\R} \setminus M_V$ that are completely to the left or completely to the right of $m$. We will now show that there is a container packing for $M_H \cup V \cup \{m\}$. Since all the elements overlapping $T_j$ have been removed, $V$ can be packed in a bin of size $(N - \width(m)) \times (1 - 2\eps)N$ (see Figure~\ref{fig:massive_item}). Since $(1 - 2\eps)N \cdot (1 + \eps) < (1 - \eps)N \leq \height(m)$, Lemma~\ref{lem:structural_lemma_augm} implies that there is a container packing of a subset of $V$ with profit at least $(1 - O(\eps))\profit(V)$ in a bin of size $(N - \width(m)) \times \height(m)$ and using $O_\eps(1)$ containers; thus, by adding a horizontal container of the same size as $m$ and a horizontal container of size $N \times (N - \height(m))$, we obtain a container packing for $M_H \cup V \cup \{m\}$ with $O_\eps(1)$ containers and profit at least $(1 - O(\eps))\profit(M_H \cup V \cup \{m\})$. Symmetrically, there is a container packing for a subset of $M_V \cup H \cup \{m\}$ with profit at least $(1 - O(\eps))\profit(M_V \cup H \cup \{m\})$ and $O_\eps(1)$ containers.
	
	Let $\R_{MAX}$ be the set of maximum profit among the sets $\R\setminus\{m\}$, $M_H \cup V \cup \{m\}$ and $M_V \cup H \cup \{m\}$. By the discussion above, there is a container packing for $\R' \subseteq \R_{MAX}$ with $O_\eps(1)$ containers and profit at least $(1 - O(\eps)) \profit(\R_{MAX})$. Since each element in $\overline{\R}$ is contained in at least two of the above three sets, it follows that:
	
	\begin{align*}
	\profit(\R') & \geq (1 - O(\eps))\profit(\R_{MAX}) \geq \left(1 - O(\eps)\right)\left(\frac{2}{3} \profit(\overline{\R})\right)\\
	& \geq \left(\frac{2}{3} - O(\eps)\right) \profit(\R)
	\end{align*}
\end{proof}

\begin{figure}
	\centering
	\begin{subfigure}[b]{.44\textwidth}
		\resizebox{!}{180pt}{
			\begin{tikzpicture}
			
			
			\draw (0,0) rectangle (10,10);
			
			
			\fill[color=gray] (1,1) rectangle (9.25,9.25);
			\draw (1,1) rectangle (9.25,9.25);
			\draw (5.125,5.125) node {\textbf{\huge $m$}};
			
			
			\fill[color=lightgray, pattern=north east lines] (3.5,0.2) rectangle (6.5,0.4);
			\draw (3.5,0.2) rectangle (6.5,0.4);
			
			\fill[color=lightgray, pattern=north east lines] (3,0.4) rectangle (7.5,0.8);
			\draw (3,0.4) rectangle (7.5,0.8);
			
			\fill[color=lightgray, pattern=north east lines] (2.5,0.8) rectangle (6,1);
			\draw (2.5,0.8) rectangle (6,1);
			
			\fill[color=lightgray, pattern=north east lines] (3.5,9.4) rectangle (8,9.6);
			\draw (3.5,9.4) rectangle (8,9.6);
			
			\fill[color=lightgray, pattern=north east lines] (3,9.6) rectangle (6.7,9.8);
			\draw (3,9.6) rectangle (6.7,9.8);
			
			
			\fill[color=lightgray, pattern=north west lines] (0,3.5) rectangle (0.3,7);
			\draw (0,3.5) rectangle (0.3,7);
			
			\fill[color=lightgray, pattern=north west lines] (0.5,4) rectangle (0.8,7.2);
			\draw (0.5,4) rectangle (0.8,7.2);
			
			\fill[color=lightgray, pattern=north west lines] (9.3,2.5) rectangle (9.45,6.5);
			\draw (9.3,2.5) rectangle (9.45,6.5);
			
			\fill[color=lightgray, pattern=north west lines] (9.5,4) rectangle (9.8,7.5);
			\draw (9.5,4) rectangle (9.8,7.5);
			
			
			\fill[color=lightgray] (0.2,2.5) rectangle (0.5,0.5);
			\draw (0.2,2.5) rectangle (0.5,0.5);
			
			\fill[color=lightgray] (0.7,0.3) rectangle (1,0.9);
			\draw (0.7,0.3) rectangle (1,0.9);
			
			\fill[color=lightgray] (0.2,7.5) rectangle (0.6,9.7);
			\draw (0.2,7.5) rectangle (0.6,9.7);
			
			\fill[color=lightgray] (9.5,0.5) rectangle (9.8,3);
			\draw (9.5,0.5) rectangle (9.8,3);
			
			
			\draw[dashed, ultra thick] (4,0) -- (4,1);
			\draw[dashed, ultra thick] (5.5,0) -- (5.5,1);
			
			\draw[dashed, ultra thick] (4,9.25) -- (4,10);
			\draw[dashed, ultra thick] (5.5,9.25) -- (5.5,10);
			
			\draw[dashed, ultra thick] (0,4.5) -- (1,4.5);
			\draw[dashed, ultra thick] (0,6) -- (1,6);
			
			\draw[dashed, ultra thick] (9.25,4.5) -- (10,4.5);
			\draw[dashed, ultra thick] (9.25,6) -- (10,6);
			
			\draw[dashed, ultra thick] (1,-1) -- (1, 11);
			\draw[dashed, ultra thick] (9.25,-1) -- (9.25, 11);
			
			\draw[dashed, ultra thick] (-1,1) -- (11,1);
			\draw[dashed, ultra thick] (-1,9.25) -- (11,9.25);
			
			
			\draw (1,-0.75) node[anchor=east] {\Large{$V$}};
			
			\draw (9.25,-0.75) node[anchor=west] {\Large{$V$}};
			
			\draw (-0.75,1) node[anchor=north] {\Large{$H$}};
			
			\draw (-0.75,9.25) node[anchor=south] {\Large{$H$}};
			
			\draw (0,5.25) node[anchor=east] {\Large{$M_V$}};
			
			\draw (4.75,0) node[anchor=north] {\Large{$M_H$}};
			
			\end{tikzpicture}}
		\caption{Massive item case. Items intersecting strips $M_H$ and $M_V$ (hatched rectangles) cross them completely.\\~}\label{fig:stripes}
	\end{subfigure}
	\hspace{10pt}
	\begin{subfigure}[b]{.44\textwidth}
		\resizebox{!}{180pt}{
			\begin{tikzpicture}
			
			
			\draw[thick] (0,0) rectangle (10,10);
			
			
			\draw[fill=gray] (5,2.5) rectangle (8,5.5);
			\fill[pattern = vertical lines] (5.5,0.5) rectangle (7.5,2);	
			\draw (5.5,0.5) rectangle (7.5,2);	
			\fill[pattern = horizontal lines] (2,6) rectangle (4,9);
			\fill[pattern = north east lines] (2,6) rectangle (4,9);
			\draw (2,6) rectangle (4,9);
			\fill[pattern = north east lines] (2,3) rectangle (4,5);
			\draw (2,3) rectangle (4,5);
			\fill[pattern = north west lines] (8.5,1.5) rectangle (9.5,4.5);
			\draw (8.5,1.5) rectangle (9.5,4.5);
			\fill[pattern = north west lines] (8.2,6.5) rectangle (9.6,8.5);
			\fill[pattern = horizontal lines] (8.2,6.5) rectangle (9.6,8.5);
			\draw (8.2,6.5) rectangle (9.6,8.5);
			
			
			\draw[dashed] (5,-1) -- (5,11);
			\draw[dashed] (8,-1) -- (8,11);
			\draw[dashed] (-1,2.5) -- (11,2.5);
			\draw[dashed] (-1,5.5) -- (11,5.5);
			
			
			\draw (6.5,4) node {\huge $i$};
			\filldraw (5,2.5) circle (2.5pt);\Large
			\draw (5,2.5) node[anchor = north east] { $(x_i,y_i)$};
			\filldraw (8,5.5) circle (2.5pt);
			\draw (8,5.5) node[anchor = south west] {\Large $(x_i',y_i')$};
			\draw [decorate,decoration={brace,amplitude=7pt}] (5,10) -- (8,10); 
			\draw (6.5,10.3) node [anchor = south] {\Large $w_i$};
			\draw [decorate,decoration={brace,amplitude=7pt}] (10,5.5) -- (10,2.5); 
			\draw (10.3,4) node [anchor = west] {\Large $h_i$};
			\draw (5,-0.6) node[anchor = east] {\Large $Left(i)$};
			\draw (8,-0.6) node[anchor = west] {\Large $Right(i)$};
			\draw (-0.6,2.5) node[anchor = north] {\Large $Bottom(i)$};
			\draw (-0.6,5.5) node[anchor = south] {\Large $Top(i)$};
			
			\end{tikzpicture}}
		\caption{$Bottom(i), Top(i), Left(i), Right(i)$ are represented by vertical, horizontal, north east  and north west stripes respectively.}\label{fig:massive_item1}
	\end{subfigure}
	\caption{}\label{fig:massive_item}
\end{figure}

If there is no massive item, we will show existence of  two {\em container packings} 
and show the maximum of them always packs items with total profit at least $\left(\frac23-O(\eps)\right)$ fraction of the optimal profit. 

First, we follow the corridor decomposition and the classification of items as in Section \ref{sec:weighted} to define
sets $LF, SF, LT, ST, OPT_{small}$.
Let $T:= LT \cup ST$  be the set of thin items.
Also let $APX$ be the best {\em container packing} and $OPT$ be the optimal solution.
Then similar to Lemma~\ref{lem:weighted-apx}, we can show $\profit(\apx)\ge (1-\eps)(\profit(LF)+\profit(SF)+\profit(\optsm))$.
Thus, 
\begin{equation}
\label{rotweightpack1}
\profit(\apx)\ge (1-\eps)\profit(OPT)-\profit(T).
\end{equation}

In the second case, we define the set $T$ as above. 
Then in Resource Contraction Lemma (Lemma~\ref{lem:weightResContract}), we will show that one can pack $1/2$ of the remaining profit  in the optimal solution, i.e., $\profit(OPT \setminus T)/2$ in a 
knapsack of size $N \times (1-\eps/2)N$.
Now, we can pack $T$ in a horizontal container of height $\eps/4$ and using Lemma~\ref{lem:weightResContract} and resource augmentation we can pack $\profit(OPT \setminus T)/2$ in the remaining space $N \times (1-\eps/4)N$.
Thus,
\begin{equation}
\label{rotweightpack2}
\profit(\apx)\ge \profit(T)+(1-\eps)(\profit(OPT)-\profit(T))/2.
\end{equation}
Hence, up to $(1-O(\eps))$ factor, we obtain a packing with profit at least $\max\{(\profit(T)+\profit(OPT \setminus T)/2), \profit(OPT \setminus T)\} \ge 2/3 \cdot  \profit(OPT)$, thus proving Theorem~\ref{lem:structural_lemma_weighted}.

Note that since Theorem~\ref{thm:container_packing_ptas} gives a PTAS for container packings, Theorem~\ref{thm:mainNoRotation} immediately implies a $(2/3 - O(\eps))$-approximation algorithm.

Now to complete the proof of Theorem~\ref{thm:mainNoRotation}, it only remains to prove Lemma~\ref{lem:weightResContract}.
\begin{lemma}(Resource Contraction Lemma)
	\label{lem:weightResContract}
	If a set of items $M$ contains no massive item and can be packed into a $N \times N$ bin, then it is possible to pack a set $M'$ of profit at least $\profit(M) \cdot \frac{1}{2}$ into a $N \times  (1-\frac{\eps}{2})N$ bin (or a $(1-\frac{\eps}{2})N \times N$ bin), if rotations are allowed.
\end{lemma}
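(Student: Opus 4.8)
The plan is to start from the packing of $M$ in the $N\times N$ knapsack and find a way to ``compress'' the packing so that a constant fraction of the profit fits in a knapsack shortened by a $\frac{\eps}{2}$ factor in one direction. First I would apply the corridor decomposition and item classification of Section~\ref{sec:weighted} (Lemma~\ref{lem:corridorPack-weighted}) to $M$, together with the partition into boxes (Section~\ref{sec:structural:boxes}) and into containers (Lemma~\ref{lem:containerPack}). Since $M$ contains no massive item, after losing an $O(\eps)$ fraction of profit all items that survive are either skewed or small and are packed in $O_\eps(1)$ containers plus (by the Small Items Packing Lemma, Lemma~\ref{lem:smallPack}) some area containers; call this packing $\mathcal{P}$ with container set $\mathcal{C}$, covering almost all of $\profit(M)$. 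The key point of having no massive item is that every item has width or height at most $\eps N$, so that no single skewed item spans ``almost the whole'' knapsack in both directions — this is exactly what is needed for the strip‑deletion argument below to work.

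The core step is a random strip deletion, in the spirit of Theorem~\ref{thm:16/9-apx} and Lemma~\ref{lem:ringCase}, but now applied to the container packing rather than to $OPT$ directly, and deleting only a \emph{single} horizontal (or vertical) strip instead of both. Choose a random horizontal strip $S=[0,N]\times[a,a+\tfrac{\eps}{4}N]$ with $a$ uniform in $[0,(1-\tfrac{\eps}{4})N)$, and delete all items of $\mathcal{P}$ intersecting $S$; shifting down the items above $S$ packs the remainder in a knapsack of height $(1-\tfrac{\eps}{4})N$. Symmetrically consider a random vertical strip giving a knapsack of width $(1-\tfrac{\eps}{4})N$. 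An item $i$ is killed by the horizontal strip with probability roughly $\frac{h_i+\eps N/4}{N}\le \frac12+O(\eps)$ when $h_i\le N/2$, and with probability $O(\eps)$ when $h_i\le \eps N$ (i.e.\ when $i$ is wide/horizontal); the symmetric bounds hold for the vertical strip. So for every item, the \emph{better} of the two strip choices retains it with probability at least $\frac12-O(\eps)$: an item with $h_i>N/2$ must have $w_i\le\eps N$ (no massive item, and in fact $w_i\le \eps N$ by the skew classification), hence it is retained with probability $1-O(\eps)$ by the vertical‑strip option, and symmetrically for tall‑and‑narrow items; for the remaining items one of the two options gives probability $\ge\frac12-O(\eps)$. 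Averaging over the two choices and using linearity of expectation, one of the two deterministic solutions has profit at least $(\frac12-O(\eps))\profit(M)$ and fits in an $N\times(1-\tfrac{\eps}{4})N$ or $(1-\tfrac{\eps}{4})N\times N$ knapsack; absorbing the $O(\eps)$ loss and the earlier $O(\eps)$ losses into the slack between $1-\tfrac{\eps}{4}$ and $1-\tfrac{\eps}{2}$, and re‑packing into containers via the Resource Augmentation Lemma~\ref{lem:structural_lemma_augm} and Theorem~\ref{thm:container_packing_ptas}, yields a packing of profit $\ge\frac12\profit(M)$ in the contracted bin. The rotation freedom is used precisely to guarantee, via $w_i\ge h_i$ (our WLOG normalization), that every non‑massive item has $h_i\le\max\{\eps N, N\}$ in a controlled way and that the "thin" direction can always be chosen to be the one of the contraction.

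The main obstacle I expect is handling items whose dimensions are both moderately large, say both in $(\eps N, N/2]$ — the "large" items $\optla$ of Section~\ref{sec:weighted}, of which there are $O(1/\eps^2)$. For these the strip argument only guarantees retention probability $\frac12-O(\eps)$, which is fine in expectation, but one must make sure that the resulting (deterministic) subset still admits a \emph{container} packing in the shortened bin; this is where invoking Lemma~\ref{lem:containerPack} (resource‑augmented packing into $O_\eps(1)$ containers) on the shifted solution is essential, and one must check that the extra $(1+\eps_{ra})$ augmentation factor still fits inside the remaining $\frac{\eps}{4}N$ of free space. A secondary subtlety is that the strip should be aligned so as not to cut too many \emph{high‑profit} items: since we are deleting only one strip, a simple averaging argument over $1/\eps$ disjoint candidate strips (as in Lemma~\ref{lem:item-classification}) shows there is a choice of $a$ for which the items \emph{fully contained} in the strip carry at most an $O(\eps)$ fraction of the profit, and the $O_\eps(1)$ items partially overlapping the strip boundary (they are skewed, hence crossed along their long side) can simply be discarded — this costs only another $O(\eps)$ fraction. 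Combining these bookkeeping steps gives the claimed $\frac12$ bound in the $N\times(1-\tfrac{\eps}{2})N$ (or transposed) knapsack.
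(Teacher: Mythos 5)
Your proposal has a genuine gap, and it comes from misreading what ``no massive item'' gives you. Massive means \emph{both} $w_i \geq (1-\eps)N$ and $h_i \geq (1-\eps)N$, so $M$ may well contain an item with, say, $w_i = h_i = 0.9N$ (for $\eps < 0.1$), which is neither skewed nor small nor in $\optla$-style ``moderate'' range. Your central claim --- that every item is retained with probability at least $\tfrac12 - O(\eps)$ under the better of the two random strips --- fails for exactly these items: both the horizontal and the vertical strip kill such an item with probability close to $1$. In the weighted setting a single such item can carry essentially all of $\profit(M)$, so the randomized strip-deletion pipeline cannot deliver anywhere near half the profit. Even setting that aside, your averaging step is off: from $\max(p_i^H,p_i^V)\geq \tfrac12-O(\eps)$ per item you cannot conclude that one of the two \emph{global} choices retains $(\tfrac12-O(\eps))\profit(M)$; you would need $p_i^H+p_i^V\geq 1-O(\eps)$ for every item, which is precisely what breaks for items large in both dimensions. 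Finally, the lemma asserts profit at least exactly $\tfrac12\profit(M)$; your route loses $O(\eps)$ fractions at several stages (corridor decomposition, container conversion, strip deletion), and a profit loss cannot be ``absorbed'' into the geometric slack between $1-\tfrac{\eps}{4}$ and $1-\tfrac{\eps}{2}$.

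The paper's proof is entirely different and avoids all of this: it works directly on the given packing of $M$, with no corridors, containers, or randomness. Setting $\eps_s=\eps/2$, it deterministically partitions $M$ into two sets $M_1$ and $M\setminus M_1$ and shows that \emph{each} of them can be repacked (using rotations only for the items fully contained in the narrow boundary strips $S_{L,\eps_s}$, $S_{R,\eps_s}$, $S_{T,3\eps_s}$, $S_{B,\eps_s}$) into an $N\times(1-\eps_s)N$ bin; the better of the two then has profit at least $\tfrac12\profit(M)$ with no $\eps$-loss. The case analysis hinges on whether some item crosses from the bottom strip to the top strip, and the ``no massive item'' hypothesis is used only in the specific subcase where an item also spans both side strips --- not to force all items to be skewed. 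If you want to salvage your approach, you would have to add a separate deterministic treatment for items with both sides larger than $N/2$ (e.g., cutting the packing along such an item, as the paper does in its Cases 2A--2C), at which point you are essentially reconstructing the paper's argument.
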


\begin{proof}
	Let $\eps_s=\eps/2$.
	We will partition $M$ into two sets $M_1, M \setminus M_1$ and show that both these sets can be packed into $N\times (1-\eps_s)N$ bin.
	If an item $i$ is embedded in position $(x_i, y_i)$, we define $x'_i:=x_i+\width(i), y'_i :=y_i+\height(i)$.
	
	In a packing of a set of items $M$, for item $i$ we define 
	$Left(i):=\{k \in M: x'_k \le x_i \}$, 
	$Right(i):=\{k \in M: x_k \ge x'_i \}$, 
	$Top(i):=\{k \in M: y_k \ge y'_i \}$, 
	$Bottom(i):=\{k \in M: y'_k \le y_i \}$, 
	i.e., the set of items that lie completely on left, right, top and bottom of $i$ respectively.
	Now consider four strips $S_{T,3\eps_s}, S_{B,\eps_s}, S_{L,\eps_s} ,S_{R,\eps_s}$ (see Figure~\ref{fig:resource_contraction}).\\
	
	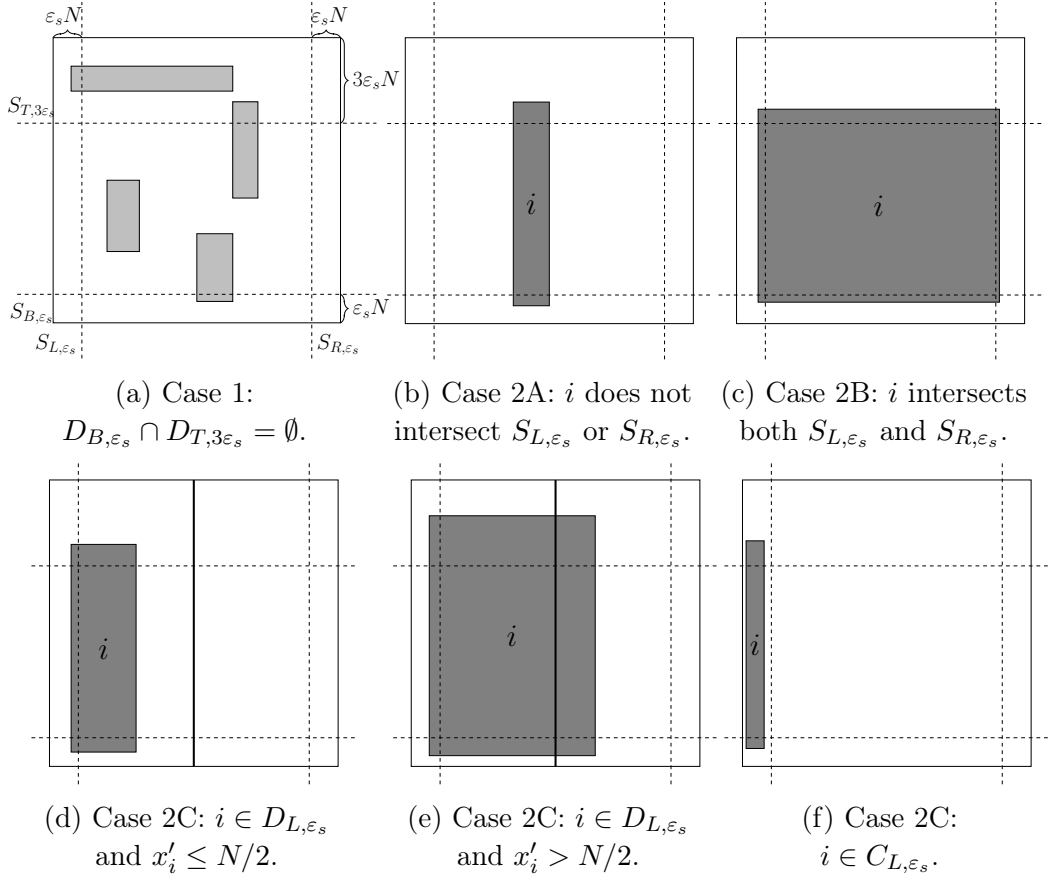
\begin{figure}[ht]
		\captionsetup[subfigure]{justification=centering}
		\begin{subfigure}[b]{.35\textwidth}
			\resizebox{!}{135pt}{
				\begin{tikzpicture}
				
				
				\draw[thick] (0,0) rectangle (8,8);
				
				
				\fill[color=lightgray] (4,0.6)  rectangle (5,2.5);
				\draw (4,0.6)  rectangle (5,2.5);
				\fill[color=lightgray] (5,3.5)  rectangle (5.7,6.2);
				\draw (5,3.5)  rectangle (5.7,6.2);
				\fill[color=lightgray] (0.5,6.5)  rectangle (5,7.2);
				\draw (0.5,6.5)  rectangle (5,7.2);
				\fill[color=lightgray] (1.5,2)  rectangle (2.4,4);
				\draw (1.5,2)  rectangle (2.4,4);
				
				
				\draw[dashed] (-1,0.8) -- (9,0.8);
				\draw[dashed] (-1,5.6) -- (9,5.6);
				\draw[dashed] (0.8,-1) -- (0.8,9);
				\draw[dashed] (7.2,-1) -- (7.2,9);
				
				
				\draw [decorate,decoration={brace,amplitude=6pt}] (8,8) -- (8,5.6); 
				\draw (8.2,6.8) node [anchor = west] {\Large $3\eps_s N$};
				\draw [decorate,decoration={brace,amplitude=6pt}] (8,0.8) -- (8,0); 
				\draw (8.2,0.4) node [anchor = west] {\Large $\eps_s N$};
				\draw [decorate,decoration={brace,amplitude=6pt}] (7.2,8) -- (8,8); 
				\draw (7.7,8.2) node [anchor = south] {\Large $\eps_s N$};
				\draw [decorate,decoration={brace,amplitude=6pt}] (0,8) -- (0.8,8); 
				\draw (0.3,8.2) node [anchor = south] {\Large $\eps_s N$};
				
				\draw (-0.6,0.7) node[anchor=north] {\Large $S_{B,\eps_s}$};
				\draw (0.8,-0.6) node[anchor=east] {\Large $S_{L,\eps_s}$};
				\draw (7.2,-0.6) node[anchor=west] {\Large $S_{R,\eps_s}$};
				\draw (-0.6,5.6) node[anchor=south] {\Large $S_{T,3\eps_s}$};
				\end{tikzpicture}}
			\caption{Case 1: \\$D_{B,\eps_s} \cap D_{T,3\eps_s}=\emptyset$.}
		\end{subfigure}
		\begin{subfigure}[b]{.3\textwidth}
			\resizebox{!}{135pt}{
				\begin{tikzpicture}
				
				
				\draw[thick] (0,0) rectangle (8,8);
				
				
				\draw[fill=gray] (3,0.5) rectangle (4,6.2);
				\draw (3.5,3.35) node {\huge $i$};
				
				
				\draw[dashed] (-0.5,0.8) -- (8.5,0.8);
				\draw[dashed] (-0.5,5.6) -- (8.5,5.6);
				\draw[dashed] (0.8,-1) -- (0.8,9);
				\draw[dashed] (7.2,-1) -- (7.2,9);
				
				\end{tikzpicture}}
			\caption{Case 2A: $i$ does not intersect $S_{L, \eps_s}$ or $S_{R, \eps_s}$.}
		\end{subfigure}
		\begin{subfigure}[b]{.3\textwidth}
			\resizebox{!}{135pt}{
				\begin{tikzpicture}
				
				
				\draw[thick] (0,0) rectangle (8,8);
				
				
				\draw[fill=gray] (0.6,0.6) rectangle (7.3,6);
				\draw (3.95,3.3) node {\huge $i$};
				
				
				\draw[dashed] (-0.5,0.8) -- (8.5,0.8);
				\draw[dashed] (-0.5,5.6) -- (8.5,5.6);
				\draw[dashed] (0.8,-1) -- (0.8,9);
				\draw[dashed] (7.2,-1) -- (7.2,9);
				
				\end{tikzpicture}}
			\caption{Case 2B: $i$ intersects both $S_{L, \eps_s}$ and $S_{R, \eps_s}$.}
		\end{subfigure}
		
		\vspace{3pt}
		\hspace{6.9pt}
		\begin{subfigure}[b]{.3\textwidth}
			\resizebox{!}{121.75pt}{
				\begin{tikzpicture}
				
				
				\draw[thick] (0,0) rectangle (8,8);
				
				
				\draw[fill=gray] (0.6,0.4) rectangle (2.4,6.2);
				\draw (1.5,3.3) node {\huge $i$};
				
				
				\draw[dashed] (-0.5,0.8) -- (8.5,0.8);
				\draw[dashed] (-0.5,5.6) -- (8.5,5.6);
				\draw[dashed] (0.8,-0.5) -- (0.8,8.5);
				\draw[dashed] (7.2,-0.5) -- (7.2,8.5);
				\draw[ultra thick] (4,0) -- (4,8);
				
				\end{tikzpicture}}
			\caption{Case 2C: $i \in D_{L,\eps_s}$ and $x_i'\le N/2$.}
		\end{subfigure}	
		\hspace{7.5pt}
		\begin{subfigure}[b]{.3\textwidth}
			\resizebox{!}{121.75pt}{
				\begin{tikzpicture}
				
				
				\draw[thick] (0,0) rectangle (8,8);
				
				
				\draw[fill=gray] (0.5,0.3) rectangle (5.1,7);
				\draw (2.8,3.65) node {\huge $i$};
				
				
				\draw[dashed] (-0.5,0.8) -- (8.5,0.8);
				\draw[dashed] (-0.5,5.6) -- (8.5,5.6);
				\draw[dashed] (0.8,-0.5) -- (0.8,8.5);
				\draw[dashed] (7.2,-0.5) -- (7.2,8.5);
				\draw[ultra thick] (4,0) -- (4,8);
				
				\end{tikzpicture}}
			\caption{Case 2C: $i \in D_{L,\eps_s}$ and $x_i' > N/2$.}
		\end{subfigure}
		\begin{subfigure}[b]{.3\textwidth}
			\resizebox{!}{121.75pt}{
				\begin{tikzpicture}
				
				
				\draw[thick] (0,0) rectangle (8,8);
				
				
				\draw[fill=gray] (0.1,0.5) rectangle (0.6,6.3);
				\draw (0.35,3.4) node {\huge $i$};
				
				
				\draw[dashed] (-0.5,0.8) -- (8.5,0.8);
				\draw[dashed] (-0.5,5.6) -- (8.5,5.6);
				\draw[dashed] (0.8,-0.5) -- (0.8,8.5);
				\draw[dashed] (7.2,-0.5) -- (7.2,8.5);
				
				\end{tikzpicture}}
			\caption{Case 2C: \\ $i \in C_{L,\eps_s}$.}
		\end{subfigure}
		\caption{Cases for the Resource Contraction Lemma~\ref{lem:weightResContract}.}\label{fig:resource_contraction}
	\end{figure}
	
	\noindent \textbf{Case 1.} $D_{B,\eps_s} \cap D_{T,3\eps_s}=\emptyset$, i.e., no item intersecting $S_{B,\eps_s}$ intersects $S_{T,3\eps_s}$.
	Define $M_1:=E_{T,3\eps_s}$. As these items in $M_1$ do not intersect $S_{B,\eps_s}$,
	$M_1$ can be packed into a $(N, N(1-\eps_s))$ bin.
	For the remaining items, pack $M \setminus (M_1 \cup C_{L, \eps_s} \cup C_{R, \eps_s})$ as it is. 
	Now rotate $C_{L, \eps_s}$ and $C_{R, \eps_s}$ and pack on top of $M \setminus (M_1 \cup C_{L, \eps_s} \cup C_{R, \eps_s})$ into two strips of height $\eps_s N$ and width $N$. This packing will have total height $\le (1-3\eps_s+2\eps_s)N \le (1-\eps_s)N$.\\
	
	\noindent \textbf{Case 2.} $D_{B,\eps_s} \cap D_{T,3\eps_s} \neq \emptyset$, i.e., there is some item  intersecting $S_{B,\eps_s}$ that also crosses $S_{T,3\eps_s}$. Now, there are three subcases:
	\\ \textbf{Case 2A.} \textit{There exists an item $i$ that does neither intersect $S_{L, \eps_s}$ nor $S_{R, \eps_s}$}.
	Then item $i$ partitions the items in $M \setminus ( C_{T, 3\eps_s} \cup C_{B, \eps_s} \cup \{ i \} )$
	into two sets: $Left(i)$ and $Right(i)$.
	Without loss of generality, assume $x_i \le 1/2$. Then remove $Right(i), i, C_{T, 3 \eps_s}$ and $C_{B, \eps_s} $ from the packing.
	Now rotate $C_{T, 3 \eps_s}$ and $C_{B, \eps_s}$ to pack right of $Left(i)$.
	Define this set $M \setminus (Right(i)\cup \{ i \})$ to be $M_1$. Clearly packing of $M_1$ takes height $N$ and width $x_i+4\eps_s N \le (\frac12+4\eps_s)N \le (1-\eps_s)N$ as $\eps_s \le  \frac{1}{10}$.
	As the item $i$ does not intersect the strip $S_{L, \eps_s}$, $(Right(i)\cup \{i\})$ can be packed into height $N$ and width $(1-\eps_s)N$.
	\\ \textbf{Case 2B.} \textit{There exists an item $i$ that intersects both $S_{L, \eps_s}$ and $S_{R, \eps_s}$}. 
	Consider $M_1$ to be $M \setminus (C_{L, \eps_s} \cup C_{R, \eps_s}  \cup Top(i) )$.
	As there is no massive item, $M_1$ is packed in height $(1-\eps_s)N$ and width $N$.
	Now, pack $Top(i)$ and then  rotate $C_{L, \eps_s}$ and $C_{R, \eps_s}$  to pack on top of it.
	These items can be packed into height $(1-y'_i+2\eps_s)N \le 5\eps_s N \le (1-\eps_s)N$ as $\eps_s \le 1/10$.
	\\ \textbf{Case 2C.} \textit{If an item $i$ intersects both $S_{B, \eps_s}$ and $S_{T, 3 \eps_s}$,  then the item $i$ intersects exactly one of $S_{L, \eps_s}$ and $S_{R, \eps_s}$.} 
	Consider the set of items in $D_{B, \eps_s} \cap D_{T, 3 \eps_s}$.\\
	First, consider the case when the set $D_{B, \eps_s} \cap D_{T, 3 \eps_s}$ contains an item $i \in D_{L, \eps_s}$ (similarly one can consider $i \in D_{R, \eps_s}$). Now if $x'_i \le N/2$, take $M_1:=Right(i)$.
	Then, we can rotate $Right(i)$ and pack into height $(1-\eps_s)N$ and width $N$. On the other hand, 
	pack $M \setminus \{  M_1 \cup C_{T,3\eps_s} \cup C_{B,\eps_s} \}$ 
	as it is. Then rotate $C_{T,3\eps_s} \cup C_{B,\eps_s}$ and pack on its side. Total width $\le (1/2+3\eps_s+\eps_s) N \le (1-\eps_s)N$ as $\eps_s \le 1/6$. Otherwise if $x'_i > N/2$ take $M_1:=Left(i)\cup i$.
	Now, consider packing of $M \setminus \{  M_1 \cup C_{T,3\eps_s} \cup C_{B,\eps_s} \}$, rotate $C_{T,3\eps_s} \cup C_{B,\eps_s}$ and pack on its left. Total width $\le (1/2+4\eps_s)N \le (1-\eps_s)N$ as $\eps_s \le 1/10$. 
	\\Otherwise, no items in $S_{B, \eps_s} \cap S_{T, 3 \eps_s}$ are in $D_{L, \eps_s} \cup D_{R, \eps_s}$.
	So let us assume that $i \in C_{L, \eps_s}$ (similarly one can consider $i \in C_{R, \eps_s}$), then  we take $M_1=E_{T,3\eps_s} \setminus (C_{L,\eps_s} \cup C_{R,\eps_s})$. Then we can rotate $C_{L, \eps_s}$ and $C_{R, \eps_s}$ and pack them on top of $M \setminus (M_1 \cup C_{L, \eps_s} \cup C_{R, \eps_s})$ as in Case 1. 
\end{proof}

\section{Unweighted case}
\label{sec:cardRot}
In this section we present our polynomial time $(4/3+\eps)$-approximation algorithm for \tdkr~for the cardinality case.

\begin{figure}
	\centering
	\begin{subfigure}[b]{.34\textwidth}
		\resizebox{!}{135pt}{
			\begin{tikzpicture}
			
			
			\draw[thick] (0,0) rectangle (8,8);
			
			\fill[pattern = north east lines, pattern color = lightgray] (0,0) rectangle (1.5,8);
			\fill[pattern = north east lines, pattern color = lightgray] (6.5,0) rectangle (8,8);
			\fill[pattern = north west lines, pattern color = lightgray] (0,0) rectangle (8,1.5);
			\fill[pattern = north west lines, pattern color = lightgray] (0,6.5) rectangle (8,8);
			
			
			\draw[dashed] (-1,1.5) -- (9,1.5);
			\draw[dashed] (-1,6.5) -- (9,6.5);
			\draw[dashed] (1.5,-1) -- (1.5,9);
			\draw[dashed] (6.5,-1) -- (6.5,9);
			
			
			\draw [decorate,decoration={brace,amplitude=7pt}] (8,8) -- (8,6.5); 
			\draw (8.2,7.25) node [anchor = west] {\Large $\gamma N$};
			\draw [decorate,decoration={brace,amplitude=6pt}] (8,1.5) -- (8,0); 
			\draw (8.2,0.75) node [anchor = west] {\Large $\delta N$};
			\draw [decorate,decoration={brace,amplitude=7pt}] (6.5,8) -- (8,8); 
			\draw (7.25,8.2) node [anchor = south] {\Large $\beta N$};
			\draw [decorate,decoration={brace,amplitude=7pt}] (0,8) -- (1.5,8); 
			\draw (0.75,8.2) node [anchor = south] {\Large $\alpha N$};
			
			\draw (-0.6,1.5) node[anchor=north] {\Large $S_{B,\delta}$};
			\draw (1.5,-0.6) node[anchor=east] {\Large $S_{L,\alpha}$};
			\draw (6.5,-0.6) node[anchor=west] {\Large $S_{R,\beta}$};
			\draw (-0.6,6.5) node[anchor=south] {\Large $S_{T,\gamma}$};
			\end{tikzpicture}}
		\caption{Strips $S_{L,\alpha}, S_{R,\beta}, S_{B,\delta}, S_{T,\gamma}$}
	\end{subfigure}
	\hspace{35pt}
	\begin{subfigure}[b]{.4\textwidth}
		\resizebox{!}{135pt}{
			\begin{tikzpicture}
			
			
			\draw[thick] (0,0) rectangle (8,8);
			
			
			\draw[fill=lightgray] (0.5,6.5)  rectangle (5,7.2);
			\draw[fill=lightgray] (1.5,2)  rectangle (2.4,4);
			\draw[fill=darkgray] (0.5,1.5)  rectangle (1.3,5);
			\draw[fill=gray] (5,1)  rectangle (7,2.5);
			
			
			\draw[dashed] (2,-1) -- (2,9);
			
			
			\draw [decorate,decoration={brace,amplitude=7pt}] (0,8) -- (2,8); 
			\draw (0.75,8.2) node [anchor = south] {\Large $\alpha N$};
			
			\draw (2,-0.6) node[anchor=east] {\Large $S_{L,\alpha}$};
			\filldraw[color=white] (-1,0) rectangle (-0.9,0.1);
			\end{tikzpicture}}
		\caption{ $C_{L,\alpha}, D_{L,\alpha}$ are dark and light gray resp.}
	\end{subfigure}
	\caption{Definitions for cardinality 2DK with rotations.}\label{fig:unweighted_rotations_def}
\end{figure}
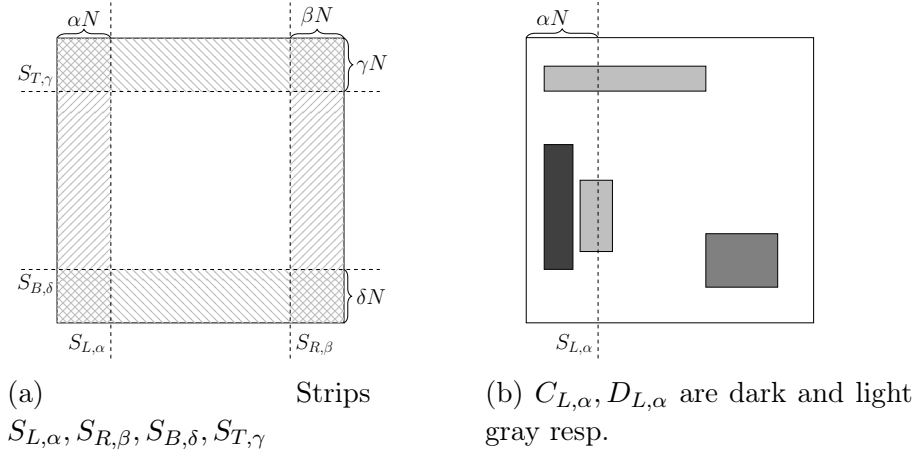 

As in the case without rotations, we will first show the existence of a container packing that packs items with a total profit of $(3/4-O(\eps))|OPT|$ where $OPT$ is the optimal solution.
Let $APX$ be the largest profit solution with container packing. 
As in Section \ref{sec:weighted}, we assume all items to be skewed. Note that small items can be handled with the techniques used in Lemma~\ref{lem:smallPack}.
We start with the corridor partition as in Section \ref{sec:weighted} and define {\em thin}, {\em fat} and {\em killed}
rectangles accordingly.
Let $T$ and $F$ be the set of thin and  fat rectangles respectively.

We will show that 
$|APX|\ge (3/4-O(\eps))|OPT|$.
\begin{lemma}
	\label{uwrot1}
	$|APX| \ge (1-\eps)|F|$.
\end{lemma}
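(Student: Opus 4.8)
The plan is to mirror the weighted-case argument from Lemma~\ref{lem:weighted-apx}(a), specialized to the cardinality (unweighted) setting where all profits are $1$, so that ``profit'' is just ``cardinality''. Recall that $APX$ denotes the largest-cardinality container packing, and that $F$ is the set of items that are \emph{fat} in \emph{all} of the candidate orderings of the subcorridors. The key point is that there is one particular processing order --- the ``fat only'' order (Case~4 in Section~\ref{sec:structural:lemma}: do not delete any short subcorridor and process the subcorridors in any feasible order) --- in which \emph{every} item of $F$ is marked fat, hence assigned to one of the boxes produced by the corridor-to-box decomposition of Section~\ref{sec:structural:boxes}.

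First I would invoke the corridor partition (Lemma~\ref{lem:corridorPack-weighted}) applied to the skewed items of $OPT$ (small items are handled separately via Lemma~\ref{lem:smallPack}, as noted in the text), together with the box decomposition of Section~\ref{sec:structural:boxes} run under the ``fat only'' order. This produces $O_\eps(1)$ boxes such that, by Remark~\ref{rem:fatPack}, there is a feasible packing of all of $F$ into these boxes (in the ``fat only'' order no item of $F$ is ever marked thin or killed). Next I would apply the Container Packing Lemma (Lemma~\ref{lem:containerPack}) to each box: this turns the box packing of $F$ into a container packing of a subset $F^{cont}\subseteq F$ using $O_\eps(1)$ containers, losing only a constant number $O_\eps(1)$ of items per box plus an $O(\eps)$-fraction, so that $|F^{cont}|\ge(1-O(\eps))|F|$ (in the cardinality case the ``$O(\eps)$-fraction lost to the removed strip'' bound and the ``$O_\eps(1)$ items lost'' bound combine to $(1-O(\eps))|F|$ exactly as in the proof of Lemma~\ref{lem:containerPack}). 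Finally, since this is a genuine container packing of the items $F^{cont}\subseteq F\subseteq I$ into the $N\times N$ knapsack, and $APX$ is by definition the maximum-cardinality container packing, we get $|APX|\ge|F^{cont}|\ge(1-O(\eps))|F|$, which is the claimed inequality (up to the usual reparametrization of $\eps$; one may absorb the $O(\eps)$ into $\eps$ or, as the statement literally says, keep a single factor $(1-\eps)$ by choosing the internal accuracy parameter of Lemma~\ref{lem:containerPack} appropriately).

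The only genuinely delicate point --- and the one I would spell out carefully --- is the bookkeeping of the constantly-many items dropped along the way: the $O_\eps(1)$ items in $OPT_{co}^{cross}$ that cross corridor boundaries, the $O_\eps(1)$ items killed by the box decomposition, and the $O_\eps(1)$ items dropped when shrinking each box by Lemma~\ref{lem:containerPack}. In the \emph{weighted} case these had to be salvaged through the elaborate shifting argument of Section~\ref{sec:weighted:shifting}, but in the \emph{cardinality} case none of that is needed here, because the statement of Lemma~\ref{uwrot1} only asserts a bound relative to $|F|$ (not relative to $|OPT|$), and $F$ is \emph{defined} to exclude exactly these troublesome items --- they are either thin, killed, or cross items in at least one ordering, hence not in $F$. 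So the proof is simply: ``fat-only order $\Rightarrow$ $F$ packs into the boxes $\Rightarrow$ Lemma~\ref{lem:containerPack} gives a container packing of $(1-\eps)|F|$ of them $\Rightarrow$ $APX$ does at least as well.'' I expect the write-up to be about a paragraph, the main (minor) obstacle being to state precisely why the ``fat only'' ordering is a feasible processing order for the routine of Section~\ref{sec:structural:boxes} --- namely that after no deletions every corridor still has a boundary subcorridor or a $U$-bend center by Lemma~\ref{lem:spiral}, so the routine can always proceed --- and why every item of $F$ survives it with its fat label.
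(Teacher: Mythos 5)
Your proposal is correct and takes essentially the same route as the paper: the paper's proof is a one-line appeal to Lemma~\ref{lem:onlyFat} (the ``fat only'' processing order, Case~4), which is exactly the argument you spell out --- $F$ survives the fat-only box decomposition, Lemma~\ref{lem:containerPack} turns the boxes into a container packing of $(1-O(\eps))|F|$ items, and $APX$, being the best container packing, does at least as well. Your remark that no shifting argument is needed because the bound is stated relative to $|F|$ (which by definition excludes thin, killed and cross items) matches the paper's implicit reasoning.
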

\begin{proof}
	After removal of $T$, we can get a container based packing for almost all items in $F$ as discussed in Lemma~\ref{lem:onlyFat} in Section \ref{sec:weighted}.
\end{proof}

Now the main technical contribution of this section is the following {\em Resource Contraction Lemma}.
\begin{lemma}
	(Resource Contraction Lemma)
	Suppose that there exists a feasible packing of a set of rectangles $M$ in a $N \times N$ bin, for some $\eps > 0$. Then it is possible to pack a subset of $M$ with cardinality at least $\frac23(1-O(\eps))|M|$ into a $N \times \left(1-\eps^{\frac{1}{2\eps}+1}\right)N$ bin (or a $\left(1-\eps^{\frac{1}{2\eps}+1}\right)N \times N$ bin), if rotations are allowed.
	\label{uwrescontr1}
\end{lemma}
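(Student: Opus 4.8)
\textbf{Proof proposal for Lemma~\ref{uwrescontr1} (Resource Contraction, cardinality case).}

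The plan is to mimic the structure of the weighted Resource Contraction Lemma~\ref{lem:weightResContract}, but with two essential modifications. First, because we are in the cardinality case we cannot afford to throw away even one item of large profit, so instead of the crude ``delete the four edge strips'' argument we must be more careful: we will contract by a very thin margin $\eps' := \eps^{\frac{1}{2\eps}+1}$ rather than by $\eps/2$, and we will only ever \emph{delete} items whose total cardinality is an $O(\eps)$ fraction of $|M|$ plus possibly one ``bad'' item that cuts through the knapsack. Second, the massive-item case does not arise here as a hypothesis, so unlike Lemma~\ref{lem:weightResContract} we cannot simply assume there is no item wider and taller than $(1-\eps)N$; instead we will handle such an item directly as a sub-case, since if one item $i$ occupies almost the whole knapsack then $M\setminus\{i\}$ is skewed and has tiny area, and by Steinberg (Theorem~\ref{thm:steinberg}) fits in a contracted knapsack, so keeping the larger of $\{i\}$ and $M\setminus\{i\}$ already gives a constant fraction — in fact we need a $\tfrac23$ fraction here, which is why the statement asks for $\tfrac23(1-O(\eps))|M|$ rather than a $(1-O(\eps))$ fraction.

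Concretely, first I would fix the orientation convention $\width(i)\geq\height(i)$ for all $i\in M$ as in the section preamble, set $\eps_s := \eps'$, and consider the four boundary strips $S_{B,\eps_s}, S_{T,3\eps_s}, S_{L,\eps_s}, S_{R,\eps_s}$ exactly as in the proof of Lemma~\ref{lem:weightResContract}. The case analysis splits on whether $D_{B,\eps_s}\cap D_{T,3\eps_s}=\emptyset$. In Case~1 (no item bridges the bottom and top strips) we set $M_1 := E_{T,3\eps_s}$, which then fits in a $N\times(1-\eps_s)N$ bin, while $M\setminus(M_1\cup C_{L,\eps_s}\cup C_{R,\eps_s})$ is packed as-is and the two side strips $C_{L,\eps_s},C_{R,\eps_s}$ are rotated and stacked on top, giving total height at most $(1-3\eps_s+2\eps_s)N\le(1-\eps_s)N$; we keep the larger of the two packings $M_1$ and $M\setminus(M_1\cup\cdots)$, but since we want a $\tfrac23$ fraction rather than $\tfrac12$ we must observe that $C_{L,\eps_s}\cup C_{R,\eps_s}$ is \emph{not} discarded but repacked, so in fact \emph{both} parts together constitute all of $M$ minus nothing, and the larger of two sets covering $M$ has size $\geq|M|/2$; to push to $\tfrac23$ we note (as in the cardinality analyses of the previous chapter) that one can additionally use the freed space and Steinberg to recover more, but the cleanest route is: if $|E_{T,3\eps_s}|\le\tfrac13|M|$ just drop $E_{T,3\eps_s}$ entirely and pack the rest (which is skewed-in-one-strip and already fits after rotating the side strips), otherwise proceed as above — I expect to reconcile these thresholds during write-up, analogously to the $\tfrac23$ bookkeeping in Lemma~\ref{lem:massiveitem}.

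In Case~2 there is an item bridging $S_{B,\eps_s}$ and $S_{T,3\eps_s}$, and I would reproduce the three subcases 2A, 2B, 2C of Lemma~\ref{lem:weightResContract} verbatim in structure: in 2A a bridging item $i$ disjoint from both left/right strips splits $M\setminus(C_{T,3\eps_s}\cup C_{B,\eps_s}\cup\{i\})$ into $Left(i)$ and $Right(i)$, we keep the side with $\geq\tfrac12$ of these plus repack $C_{T,3\eps_s},C_{B,\eps_s}$ rotated beside it; in 2B an item $i$ spanning both side strips lets us take $M_1 := M\setminus(C_{L,\eps_s}\cup C_{R,\eps_s}\cup Top(i))$ which (no massive item in this configuration, since $i$ spanning both strips but the picture being consistent forces $\height(i)$ bounded) fits in $(1-\eps_s)N$ height, and $Top(i)$ plus the rotated side strips fit on top; in 2C a bridging item hits exactly one of the left/right strips and we argue by the position of $x_i'$ relative to $N/2$ just as before. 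The only genuinely new wrinkle is the possibility of a massive item $i$ (with $\width(i),\height(i)>(1-\eps_s)N$): such an $i$ would be a bridging item hitting \emph{both} vertical strips, landing us in a degenerate version of 2B where $M\setminus\{i\}$ has area $\le(2\eps_s-\eps_s^2)N^2$, so by Theorem~\ref{thm:steinberg} (rotating each item so its height is $\le\eps_s N$) it packs in $N\times\tfrac{N}{1+\eps_s}\subseteq N\times(1-\eps')N$, and we keep $\max\{1,\,\tfrac23|M|\}$ worth by comparing $\{i\}$ with $M\setminus\{i\}$ — whichever is at least $\tfrac23|M|$ exists since they partition $M$ into two parts and... actually here two parts only guarantees $\tfrac12$, so in the massive case we genuinely keep $M\setminus\{i\}$ (cardinality $|M|-1\geq(1-\eps)|M|$ when $|M|$ is large, and when $|M|$ is small we solve by brute force). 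The main obstacle I anticipate is getting the $\tfrac23$ constant to come out cleanly in every subcase of Case~2 — the weighted lemma only needed $\tfrac12$, so each subcase will need an extra averaging or Steinberg-recovery step (in the spirit of Lemmas~\ref{lem4cardgen} and \ref{lem4cardgen2}) to boost $\tfrac12$ to $\tfrac23$, and verifying that the contracted dimension $(1-\eps')N$ with $\eps'=\eps^{1/(2\eps)+1}$ is still wide enough to absorb all the rotated strips (whose total width/height is $O(\eps')N\ll\eps N$) in each of those recovery steps.
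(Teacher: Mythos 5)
There is a genuine gap: your proposal never actually produces the $\tfrac23$ fraction, and you say so yourself (``I expect to reconcile these thresholds during write-up'', ``each subcase will need an extra averaging or Steinberg-recovery step''). The structure you copy from the weighted Lemma~\ref{lem:weightResContract} is intrinsically a two-way partition argument (keep the better of $M_1$ and $M\setminus M_1$, or the better of $Left(i)$ and $Right(i)$), which can only certify a $\tfrac12$ fraction; no amount of rotating the thin boundary strips fixes this, because the loss comes from discarding one of the two halves of a bridging-item split, not from the strips. Relatedly, you never explain where the exponent in $\eps^{\frac{1}{2\eps}+1}$ comes from: you simply set $\eps_s:=\eps^{\frac{1}{2\eps}+1}$, whereas in the paper this quantity arises from an averaging argument over $1/(2\eps)$ height classes (Lemma~\ref{lem:medWeight}): one chooses $i\in[1/2\eps]$ so that items with height in $((1-2\eps^i)N,(1-\eps^{i+1})N]$ number at most $\eps|M|$ and deletes them, which guarantees that every surviving item crossing both $S_{T,\eps^i}$ and $S_{B,\eps^i}$ in fact has height more than $(1-\eps^{i+1})N$. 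Without this two-scale cleaning step your case analysis has no handle on the bridging items.

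The paper's actual argument is quite different from the weighted one. After deleting the intermediate-height items, Lemma~\ref{lem:stripHalf} shows that (w.l.o.g.) the items touching the top or bottom strips $S_{T,\eps^i},S_{B,\eps^i}$ have total area at most roughly $N^2/2$; one then splits them into the very tall items $X$ (crossing both strips) and the rest $Y$, with $Z$ the untouched items. If $w(X)\ge 12\eps^{i+1}N$ (Case A), removing $X$ together with $C_{T,\eps^{i+1}}$ and $C_{B,\eps^{i+1}}$ frees full-height vertical strips of total width $w(X)$; after compacting and re-inserting the rotated $C_{T,\eps^{i+1}},C_{B,\eps^{i+1}}$, one re-packs the items of $X$ in order of non-decreasing width up to width $w(X)-4\eps^{i+1}N$, which keeps at least $\tfrac23|X|$ precisely because $w(X)\ge 12\eps^{i+1}N$ — this is where $\tfrac23$ comes from, not from a left/right split. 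If instead $w(X)<12\eps^{i+1}N$ (Case B), the $\tfrac23$ comes from taking the better of two packings: one that discards the cheaper of the two sides of $Y$ and slides the narrow set $X$ on top (profit $|X|+|Z|+|Y|/2$), and one that uses the area bound together with Steinberg (Lemma~\ref{lem:Stein}) to pack $(1-O(\eps^i))|X\cup Y|$; the maximum of the two is at least $\tfrac23(1-O(\eps))|M|$. These three ingredients — the height-class shifting argument, the strip-freeing/width-selection step in Case A, and the max-of-two-packings averaging in Case B — are the content of the lemma, and none of them appears in your proposal, so as written it does not establish the claimed bound.
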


We defer the proof of the lemma to the end of this section. First, we show that using Lemma~\ref{uwrescontr1} we can prove the following:

\begin{lemma}
	\label{uwrot2}
	$|APX| \ge |T|+(2/3-\eps)|F|$.
\end{lemma}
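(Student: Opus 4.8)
The plan is to combine the thin/fat decomposition from the corridor partition with the Resource Contraction Lemma (Lemma~\ref{uwrescontr1}), in the same spirit as the weighted case (see \eqref{rotweightpack2} and the proof of Theorem~\ref{lem:structural_lemma_weighted}). First I would recall that the optimal solution $OPT$ (after reducing to skewed items, and handling small items separately via Lemma~\ref{lem:smallPack}) is partitioned into thin items $T$ and fat items $F$, so $|OPT| = |T| + |F|$ up to the $O_\eps(1)$ killed items of negligible count. By Lemma~\ref{lem:boxProperties}\eqref{lem:boxProperties:thin}, the thin items $T$ have total height (for horizontal ones) and total width (for vertical ones) at most $\eps_{ring} N$; after possibly rotating every thin item so that it becomes ``horizontal'' (width $\geq$ height), all of $T$ can be stacked one on top of the other inside a single horizontal container of width $N$ and height at most $\eps_{ring} N$, which we can make at most $\eps^{\frac{1}{2\eps}+1} N / 2$ by choosing $\eps_{ring}$ small enough relative to $\eps$.

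Next I would apply Lemma~\ref{uwrescontr1} to the set $F$ (which is a subset of a feasible packing in the $N\times N$ knapsack, hence feasibly packable there): it yields a subset $F' \subseteq F$ with $|F'| \geq \frac23(1-O(\eps))|F|$ that fits in a bin of size $N \times (1-\eps^{\frac{1}{2\eps}+1})N$. Stacking the thin-item container of height at most $\tfrac12\eps^{\frac{1}{2\eps}+1}N$ on top of (a slightly shrunk version of) this bin, everything fits into the full $N\times N$ knapsack. This already gives a feasible packing of $T \cup F'$ with cardinality $|T| + \frac23(1-O(\eps))|F|$. The only remaining issue is that the packing produced by Lemma~\ref{uwrescontr1} need not be container-based (it may come from Steinberg-type arguments inside the proof of the lemma). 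To fix this I would apply the Resource Augmentation Packing Lemma (Lemma~\ref{lem:structural_lemma_augm}) together with Theorem~\ref{thm:container_packing_ptas}: since $F'$ fits in a bin of height $(1-\eps^{\frac{1}{2\eps}+1})N$, by augmenting that dimension by a $(1+O(\eps))$ factor (still staying below $N$ once we also reserve the thin strip, after adjusting constants) we get an $\eps$-granular container packing of a $(1-O(\eps))$-fraction of $F'$ with $O_\eps(1)$ containers, and the thin strip is already a single horizontal container. Hence the combined packing is container-based, and $|APX| \geq (1-O(\eps))(|T| + |F'|) \geq |T| + (2/3 - O(\eps))|F|$, which after absorbing constants into $\eps$ is the claimed inequality.

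I would write this as: let $F'$ be the subset guaranteed by Lemma~\ref{uwrescontr1}; pack $F'$ by container packing (via Lemmas~\ref{lem:structural_lemma_augm} and Theorem~\ref{thm:container_packing_ptas}) in the lower part $[0,N]\times[0,(1-\eps^{\frac{1}{2\eps}+1})N]$ of the knapsack, and pack $T$ rotated into a horizontal container in the top strip $[0,N]\times[(1-\eps^{\frac{1}{2\eps}+1})N, N]$, using that its total height is at most $\eps_{ring}N \leq \eps^{\frac{1}{2\eps}+1}N$; small items are reinserted into the free space as in Lemma~\ref{lem:smallPack}. Since this is a feasible container packing, $|APX|\geq (1-O(\eps))\big(|T| + \tfrac23(1-O(\eps))|F|\big) = |T| + (2/3-O(\eps))|F|$.

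The main obstacle I anticipate is the bookkeeping of the constants: one must verify that the height reserved for the thin-item strip ($\eps_{ring}N$, chosen tiny) plus the height needed after resource augmentation of $F'$ ($(1+O(\eps))$ times $(1-\eps^{\frac{1}{2\eps}+1})N$) still sums to at most $N$. This is a matter of ordering the choice of parameters ($\eps_{ring}$ depending on the augmentation parameter, which in turn is $\leq \eps$), and is routine but needs care; none of it is conceptually hard once Lemma~\ref{uwrescontr1} is granted. (Combining Lemma~\ref{uwrot1} and Lemma~\ref{uwrot2} then gives $|APX| \geq \max\{|F|, |T| + (2/3-O(\eps))|F|\} \cdot (1-O(\eps)) \geq (3/4 - O(\eps))|OPT|$, the worst case being $|T| = \tfrac14|OPT|$, $|F| = \tfrac34|OPT|$, which is the overall goal this lemma feeds into — but that synthesis is presumably stated afterwards, not part of this proof.)
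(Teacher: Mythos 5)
Your proposal is correct and follows essentially the same route as the paper: bound the thin items via Lemma~\ref{lem:boxProperties} and pack them (rotated) into a single horizontal container of height at most $\tfrac{1}{2}\eps^{\frac{1}{2\eps}+1}N$, apply the Resource Contraction Lemma~\ref{uwrescontr1} to pack $(2/3-O(\eps))|F|$ into the $N\times(1-\eps^{\frac{1}{2\eps}+1})N$ region, and then invoke resource augmentation to turn that packing into a container packing fitting in the remaining $N\times\bigl(1-\tfrac{1}{2}\eps^{\frac{1}{2\eps}+1}\bigr)N$ area. The parameter bookkeeping you flag is exactly the slack the paper exploits (contracting by $\eps^{\frac{1}{2\eps}+1}N$ but reserving only half of that for the thin strip, leaving the other half for augmentation), so there is no gap.
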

\begin{proof}
	First, we use Lemma~\ref{lem:boxProperties} such that the total height of all rectangles in $T$ is at most $\frac{\eps^{\frac{1}{2\eps}+1} N}{2}$. 
	So we pack them in a horizontal container of height $\frac{\eps^{\frac{1}{2\eps}+1} N}{2}$. Then using Lemma~\ref{uwrescontr1} we show the existence of a  packing of $(2/3-O(\eps)) |F|$ in $N \times (1-\eps^{\frac{1}{2\eps}+1})N$. Then we can use resource augmentation to get a container packing of  $(2/3-O(\eps)) |F|$ in the remaining area $N \times \left(1-\frac{\eps^{\frac{1}{2\eps}+1}}{2}\right)N$.
\end{proof}

Thus we get the following theorem:
\begin{theorem}
	$|APX| \ge (3/4-O(\eps))|OPT|$.
\end{theorem}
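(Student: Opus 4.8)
The statement to prove is that $|APX| \ge (3/4 - O(\eps))|OPT|$, where $APX$ is the best container packing and $OPT$ is an optimal solution to cardinality \tdkr. This should follow from combining Lemmas~\ref{uwrot1} and~\ref{uwrot2} by an averaging argument, exactly analogous to the weighted case in Theorem~\ref{lem:structural_lemma_weighted}.

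The plan is as follows. First I would observe that, as in the weighted case, we may assume all items are skewed: small items are handled separately by the techniques of Lemma~\ref{lem:smallPack}, and if there is a massive item we use an argument analogous to Lemma~\ref{lem:massiveitem} (indeed for cardinality this is even easier, since one can simply discard $O_\eps(1)$ items with both sides at least $\eps N$, losing only a $(1+\eps)$-factor on $|OPT|$ when $|OPT| \ge 1/\eps^3$, and solve the bounded case by brute force). With this reduction in place, apply the corridor decomposition of Section~\ref{sec:weighted} to the optimal packing, obtaining the classification of items into thin ($T$), fat ($F$) and killed items, where $|OPT| = (1 - O(\eps))(|T| + |F|)$ since the killed items number only $O_\eps(1)$ and we may absorb them into the error term (or discard them at the start as above). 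Then Lemma~\ref{uwrot1} gives $|APX| \ge (1-\eps)|F|$, and Lemma~\ref{uwrot2} gives $|APX| \ge |T| + (2/3 - \eps)|F|$.

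Next I would combine the two bounds. Taking a weighted average: for any $\lambda \in [0,1]$, $|APX| \ge \lambda(1-\eps)|F| + (1-\lambda)(|T| + (2/3-\eps)|F|)$. We want to choose $\lambda$ so as to guarantee a constant fraction of $|T| + |F|$. Setting the coefficient of $|T|$ equal to $1-\lambda$ and the coefficient of $|F|$ equal to $\lambda + (1-\lambda)\cdot 2/3 = 2/3 + \lambda/3$ (up to $O(\eps)$ terms), we equalize the two coefficients by solving $1 - \lambda = 2/3 + \lambda/3$, i.e. $\lambda = 1/4$. This yields $|APX| \ge (3/4 - O(\eps))(|T| + |F|) = (3/4 - O(\eps))|OPT|$. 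Concretely one writes
\[
|APX| \ge \frac{1}{4}(1-\eps)|F| + \frac{3}{4}\Bigl(|T| + \Bigl(\frac{2}{3}-\eps\Bigr)|F|\Bigr) = \frac{3}{4}|T| + \Bigl(\frac{3}{4} - O(\eps)\Bigr)|F| \ge \Bigl(\frac{3}{4} - O(\eps)\Bigr)|OPT|,
\]
using $|OPT| = (1 - O(\eps))(|T| + |F|)$.

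The only genuinely new ingredient behind all of this is the Resource Contraction Lemma~\ref{uwrescontr1}, whose proof I would expect to be the main obstacle — but since in the present statement we are allowed to assume all earlier results, including Lemma~\ref{uwrescontr1}, the remaining work here is purely the bookkeeping of the averaging argument plus the reductions to the skewed, no-massive-item, no-killed-item case. A minor subtlety worth spelling out is that the packings produced by Lemmas~\ref{uwrot1} and~\ref{uwrot2} are container packings with $O_\eps(1)$ containers, so by Theorem~\ref{thm:container_packing_ptas} each can be computed in polynomial time up to a further $(1-\eps)$ loss; taking the better of the (constantly many) candidate packings and absorbing all multiplicative $(1-\eps)$ factors and the additive $O_\eps(1)$ lost items into the $O(\eps)$ term gives the claimed bound, and hence a polynomial-time $(4/3 + \eps)$-approximation for cardinality \tdkr via the PTAS for container packings.
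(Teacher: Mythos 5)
Your proposal is correct and follows essentially the same route as the paper: the paper likewise combines Lemma~\ref{uwrot1} and Lemma~\ref{uwrot2}, noting that the worst case occurs when $|F|=|T|+\tfrac{2}{3}|F|$ (i.e.\ $|F|=3|T|$), which is just the balancing point of your convex combination with $\lambda=1/4$. The surrounding reductions you spell out (skewed items via the corridor partition, small items via Lemma~\ref{lem:smallPack}, discarding the $O_\eps(1)$ killed/large items in the cardinality setting) match the paper's setup, so there is nothing further to add.
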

\begin{proof}
	The claim follows by combining Lemma~\ref{uwrot2} and \ref{uwrot1}. Up to a factor $1-O(\eps)$, the worst case is obtained when $|F|=|T|+2/3 \cdot  |F|$,
	i.e., $|F|=3 |T|$. This gives a total profit of $3/4\cdot |T \cup F|$. 
\end{proof}

It remains to prove Lemma~\ref{uwrescontr1}.
Let $M$ be a set of rectangles that can be packed into a $N \times N$ bin. 
First, we show that 
we can remove a set of rectangles with cardinality  at most $\eps |M|$ such that the  remaining rectangles will either be very tall (so they intersect both $S_{T, \eps^{i+1}}$ and $S_{B, \eps^{i+1}}$) or not so tall (so that they intersect only one of $S_{T, \eps^{i}}$ or $S_{B, \eps^{i}}$) for some $i \in [1/2\eps]$.
\begin{lemma}
	\label{lem:medWeight}
	Given any constant $\eps>0$, there exists a value $i \in [1/2\eps]$ such that all rectangles having height $\in ((1-2\eps^i)N, (1-\eps^{i+1})N]$  have total profit at most $\eps |M|$.
\end{lemma}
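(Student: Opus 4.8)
The statement is a standard averaging (pigeonhole) argument over a geometrically decreasing sequence of height thresholds, and I would prove it exactly as Lemma~\ref{lem:ra_intermediate} and Lemma~\ref{lem:item-classification} are proved earlier in the excerpt. The plan is to partition the ``tall but not too tall'' height regime into $\frac{1}{2\eps}$ disjoint bands and observe that, since the bands are disjoint and each rectangle in $M$ lies in at most one band (its height determines it), the total profit across all bands is at most $|M|$ (recall $\profit=$ cardinality here, so I write $|M|$). Hence some band carries profit at most $\frac{|M|}{1/(2\eps)} = 2\eps|M|$. Because the lemma states the weaker bound $\eps|M|$, I would actually take $\frac{1}{\eps}$ bands (or equivalently $\frac{1}{2\eps}+1$ thresholds and drop one), which yields a band of profit at most $\eps|M|$; I would simply declare without loss of generality that $\frac{1}{2\eps}$ is an integer, as is done elsewhere in the paper.

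\textbf{Key steps, in order.} First, for $i=1,\dots,\frac{1}{2\eps}$ define the band $B_i = \{\,j\in M : \height(j)\in\bigl((1-2\eps^{i})N,\; (1-\eps^{i+1})N\bigr]\,\}$. Second, check that these bands are pairwise disjoint: for $i<i'$ one has $\eps^{i+1} > 2\eps^{i'}$ provided $\eps^{i+1} > 2\eps^{i+1}\cdot\eps^{\,i'-i-1}$, which holds whenever $\eps$ is small enough, so the upper endpoint of $B_{i'}$ is below the lower endpoint of $B_i$; hence the intervals $\bigl((1-2\eps^{i})N,(1-\eps^{i+1})N\bigr]$ do not overlap. (This is the same endpoint bookkeeping already carried out in the proof of Lemma~\ref{lem:item-classification}, with $f(x)=x^2$ or any $f$ with $f(x)<x/2$.) Third, since each $j\in M$ has a single height, it belongs to at most one band, so $\sum_{i=1}^{1/2\eps}|B_i| \le |M|$. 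Fourth, by averaging there is an index $i^\star\in[1/2\eps]$ with $|B_{i^\star}| \le 2\eps\,|M|$; refining the count to $\frac{1}{\eps}$ bands (using $f$ decreasing fast enough that $\frac1\eps$ disjoint bands still fit in $(0,1]$) gives the claimed $\eps|M|$ bound, and $i^\star\in[1/2\eps]$ is still satisfied. Finally, note that any rectangle of $M\setminus B_{i^\star}$ of height larger than $(1-\eps^{i^\star+1})N$ intersects both $S_{T,\eps^{i^\star+1}}$ and $S_{B,\eps^{i^\star+1}}$, and any rectangle of height at most $(1-2\eps^{i^\star})N$ can intersect at most one of $S_{T,\eps^{i^\star}}$, $S_{B,\eps^{i^\star}}$ — this is the structural dichotomy the subsequent argument (Case 1 / Case 2 in the proof of Lemma~\ref{uwrescontr1}) will exploit, though it is not formally part of the lemma statement.

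\textbf{Main obstacle.} There is no real obstacle: the only thing to be careful about is the interval arithmetic ensuring the bands are genuinely disjoint for all small constant $\eps$, i.e.\ choosing the scaling between consecutive thresholds (factor $\eps$ between $\eps^{i+1}$ and $\eps^i$, and factor $2$ inflation at the lower end) so that $(1-2\eps^{i+1})N$ stays strictly above $(1-\eps^{i+1})N$ — wait, that is automatic — more precisely so that $2\eps^{i'} < \eps^{i+1}$ for $i'\ge i+1$, which follows since $2\eps^{i'}\le 2\eps^{i+1}<\eps^{i+1}$ once $\eps<1/2$. So the whole lemma reduces to this one-line inequality plus the averaging step, and I would present it in three or four sentences mirroring the proof of Lemma~\ref{lem:ra_intermediate}.
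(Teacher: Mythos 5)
Your plan follows the right template (band the heights, average), but the central claim that the bands $B_i=\{j: \height(j)\in((1-2\eps^i)N,(1-\eps^{i+1})N]\}$ are \emph{pairwise disjoint} is false, and the inequality you invoke to justify it cannot be saved: for $i'=i+1$ your condition reads $\eps^{i+1}>2\eps^{i+1}$, and in the last paragraph you even write ``$2\eps^{i'}\le 2\eps^{i+1}<\eps^{i+1}$ once $\eps<1/2$'', which is wrong for every $\eps>0$. In fact consecutive bands always overlap: the upper endpoint of $B_i$ is $1-\eps^{i+1}$ while the lower endpoint of $B_{i+1}$ is $1-2\eps^{i+1}<1-\eps^{i+1}$, so both bands contain all heights in $((1-2\eps^{i+1})N,(1-\eps^{i+1})N]$. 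What \emph{is} true (and is exactly what the paper's proof uses) is that a rectangle can lie in at most \emph{two} of the sets, since for $\eps\le 1/2$ one has $2\eps^{i+2}\le\eps^{i+1}$, so bands $i$ and $i+2$ are disjoint. With this correction, $\sum_i|B_i|\le 2|M|$ and averaging over the $\tfrac{1}{2\eps}$ admissible indices gives a band of cardinality $O(\eps)|M|$, which is what the downstream argument (Lemma~\ref{uwrescontr1}) actually needs; the paper's one-line proof is the same ``at most two sets'' averaging.

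Your attempted repair — taking $\tfrac1\eps$ bands to recover the constant $\eps$ — introduces a second problem: the lemma requires the selected index to satisfy $i^\star\in[1/2\eps]$, and if you enlarge the family of bands the cheapest one may have index larger than $\tfrac{1}{2\eps}$; your assertion that ``$i^\star\in[1/2\eps]$ is still satisfied'' is not justified. So as written the proof has a genuine gap. The fix is simply to drop the disjointness claim, prove the ``at most two bands'' property via $2\eps^{i+2}\le\eps^{i+1}$, and state the conclusion with the resulting $O(\eps)|M|$ bound (or rescale $\eps$ at the outset), rather than trying to force exact disjointness or exact constant $\eps$.
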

\begin{proof}
	Let $K_i$ be the set of rectangles with height $\in ((1-2\eps^i)N, (1-\eps^{i+1})N]$
	for $i \in [1/2\eps]$.
	Clearly a rectangle can belong to at most two such sets. Thus, cheapest set among these $1/2\eps$
	sets has cardinality  at most $\eps |M|$.
\end{proof}

\begin{lemma}
	\label{lem:stripHalf}
	Given any constant $\eps_s \ge \eps_{small}$, 
	either $a(E_{L, \eps_s} \cup E_{R, \eps_s}) \le \frac{(1+8 \eps_s)}{2} N^2$
	or $a(E_{T, \eps_s} \cup E_{B, \eps_s}) \le  \frac{(1+8 \eps_s)}{2} N^2$.
\end{lemma}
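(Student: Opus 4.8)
\textbf{Proof plan for Lemma~\ref{lem:stripHalf}.}

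The plan is to argue by contradiction: suppose that both $a(E_{L,\eps_s}\cup E_{R,\eps_s}) > \frac{1+8\eps_s}{2}N^2$ and $a(E_{T,\eps_s}\cup E_{B,\eps_s}) > \frac{1+8\eps_s}{2}N^2$. Summing the two inequalities gives that the total area counted, with multiplicity, of the rectangles in $(E_{L,\eps_s}\cup E_{R,\eps_s}) \cup (E_{T,\eps_s}\cup E_{B,\eps_s})$ exceeds $(1+8\eps_s)N^2$. Since all these rectangles lie in the $N\times N$ knapsack, their total area (counting each once) is at most $N^2$; hence the ``excess'' $8\eps_s N^2$ must be explained by rectangles that belong to \emph{both} the horizontal family $E_{T,\eps_s}\cup E_{B,\eps_s}$ and the vertical family $E_{L,\eps_s}\cup E_{R,\eps_s}$. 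So it suffices to show that the total area of rectangles lying in both families is at most $4\eps_s N^2$, which would contradict the excess being $8\eps_s N^2$.

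The key structural observation is about which rectangles can simultaneously intersect a horizontal boundary strip and a vertical boundary strip. Recall that every rectangle is \emph{skewed}, so $\width(i)\le \epss N$ or $\height(i)\le \epss N$; in particular no rectangle can be contained in, or cross, opposite strips of the same orientation simultaneously while also being wide/tall in the other direction. First I would observe that a horizontal rectangle $i$ (one with $\height(i)\le\epss N \le \eps_s N$) that intersects, say, $S_{L,\eps_s}$ must have its left edge within $\eps_s N$ of the left boundary; the area it contributes inside the vertical strip $S_{L,\eps_s}$ is at most $\eps_s N \cdot \height(i)$, and symmetrically for $S_{R,\eps_s}$, $S_{T,\eps_s}$, $S_{B,\eps_s}$. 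The point is that a rectangle lying in both families is either small enough in one dimension that the ``doubly counted'' portion is tiny, or it is a single rectangle near a corner. I would split the rectangles in the intersection of the two families according to whether they are horizontal (width $>\eps_s N$, height $\le\epss N$) or vertical, and bound the total area of each group: a horizontal rectangle in $E_{T,\eps_s}\cup E_{B,\eps_s}$ has height at most $\epss N\le\eps_s N$, so all such rectangles together (they are disjoint) have total area at most the area of the union of $S_{T,\eps_s}$ and $S_{B,\eps_s}$ enlarged slightly, i.e., $O(\eps_s)N^2$; and such a horizontal rectangle that also lies in $E_{L,\eps_s}\cup E_{R,\eps_s}$ still has total area bounded by $2\eps_s N^2$ since they fit disjointly inside the two horizontal strips of total area $2\eps_s N^2$ (here one uses $\eps_s\ge\epss$ so that $\eps_s N$ strips fully contain the heights). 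The symmetric bound applies to vertical rectangles lying in both families, giving a combined bound of $4\eps_s N^2$ for the doubly-counted area, as needed.

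The main obstacle I anticipate is getting the constants in the double-counting argument exactly right — in particular, being careful that a rectangle in $E_{K,\eps_s}$ is only \emph{intersected} by the strip $S_{K,\eps_s}$ (it need not be contained in it), so its total area is not bounded by the strip area, only the \emph{part inside the strip} is. The fix is that the quantity whose excess we are bounding is really $\sum_i (\text{area of }i\text{ inside horizontal strips}) + \sum_i(\text{area of }i\text{ inside vertical strips})$, each of which is at most $N^2$ trivially, but more to the point: a rectangle counted in both sums contributes (area in horizontal strips) $+$ (area in vertical strips), and for a skewed rectangle near a corner these two quantities overlap only in a corner square of side $\le\eps_s N$, so the union argument ``total area $\le N^2$'' already absorbs everything except the corner double-counting, which over the four corners is at most $4(\eps_s N)^2 \le 4\eps_s N^2$ for $\eps_s\le 1$. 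Tracking this corner-overlap bookkeeping cleanly, rather than the rectangle-area bookkeeping, is where the care is required, but it is routine once set up correctly.
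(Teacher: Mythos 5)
Your overall route is the same as the paper's: write $a(V)+a(H)=a(V\cup H)+a(V\cap H)$ for $V=E_{T,\eps_s}\cup E_{B,\eps_s}$, $H=E_{L,\eps_s}\cup E_{R,\eps_s}$, use $a(V\cup H)\le N^2$, and bound the area of the rectangles counted twice by $O(\eps_s)N^2$, so that the smaller of $a(V),a(H)$ is at most half the total. The only real issue is your bound on the doubly-counted set. Your paragraph-2 claim that the horizontal rectangles of $V\cap H$ ``fit disjointly inside the two horizontal strips of total area $2\eps_s N^2$'' is false as stated: membership in $E_{T,\eps_s}$ only means the rectangle \emph{intersects} $S_{T,\eps_s}$, and a rectangle of height $\le\eps_{small}N$ can stick out below the strip by up to $\eps_{small}N$. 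Moreover, the fix you sketch in paragraph 3 (replacing full areas by areas inside the strips) changes the quantity being bounded -- the lemma is about $a(E_{K,\eps_s})$, i.e.\ the \emph{full} areas of the intersected rectangles -- so that bookkeeping does not prove the stated inequality.

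The repair is immediate and costs only a factor $2$, which the lemma's constant absorbs: a skewed rectangle of $V\cap H$ with height $\le\eps_{small}N$ that meets $S_{T,\eps_s}$ or $S_{B,\eps_s}$ lies in the corresponding strip enlarged by $\eps_{small}N\le\eps_s N$, and symmetrically for the width-$\le\eps_{small}N$ rectangles meeting $S_{L,\eps_s}$ or $S_{R,\eps_s}$; by disjointness this gives $a(V\cap H)\le 4\eps_s N^2+4\eps_s N^2=8\eps_s N^2$, which is exactly what is needed (you only need $\le 8\eps_s N^2$, not $4\eps_s N^2$, to conclude). The paper bounds $a(V\cap H)$ slightly differently: every rectangle of $V\cap H$ either lies entirely inside one of the four strips or contains one of the four inner corner points $(\eps_s N,\eps_s N)$, $((1-\eps_s)N,\eps_s N)$, $(\eps_s N,(1-\eps_s)N)$, $((1-\eps_s)N,(1-\eps_s)N)$; at most four rectangles can do the latter and each, being skewed, has area at most $\eps_{small}N^2$, giving $a(V\cap H)\le 4\eps_s N^2+4\eps_{small}N^2\le 8\eps_s N^2$. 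Either version completes your argument.
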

\begin{proof}
	Let us define $H := E_{L, \eps_s} \cup E_{R, \eps_s}$ 
	and $V := E_{T, \eps_s} \cup E_{B, \eps_s}$.
	Note that,
	$a(V) + a(H)
	\le 
	a(V \cup H)+
	a(V \cap H)$. 
	Now, $a(V \cup H) \le N^2$ as all rectangles were packed into a $N \times N$ knapsack.
	On the other hand, except possibly four rectangles (the ones that contain at least one of the points $(\eps_s N,\eps_s N),((1-\eps_s) N,\eps_s N),(\eps_s N,(1-\eps_s)N),((1-\eps_s)N, (1-\eps_s)N)$) all other rectangles in $(V \cap H)$ lie entirely within the four $\eps_s N$ strips.
	Thus $a(V \cap H) \le 4\eps_s N^2 +4\eps_{small} N^2 \le 
	8\eps_s N^2$, as $\eps_{small} \le \eps_s$. 
	Thus, $\min\{ a(V),a(H) \} \le  \frac{a(V \cup H)+
		a(V \cap H)}{2} \le \frac{(1+8 \eps_s)}{2} N^2$.
\end{proof}

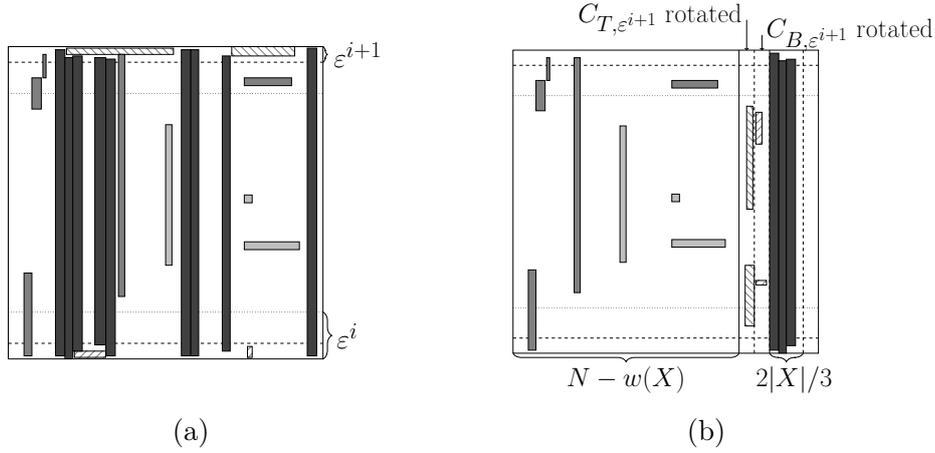
\begin{figure}
	\centering
	\captionsetup[subfigure]{justification=centering}
	\begin{subfigure}[b]{.35\textwidth}
		\resizebox{!}{150pt}{
			\begin{tikzpicture}
			\draw (0,0) -- (10,0) -- (10,10)-- (0,10) -- (0,0);
			\draw[dashed]  (0,0.5) -- (10,0.5);
			\draw[dashed]  (0,9.5) -- (10, 9.5);
			\draw[dotted]  (0,1.5) -- (10,1.5);
			\draw[dotted]  (0,8.5) -- (10, 8.5);
			\filldraw[fill=darkgray, draw=black] (1.5,0.1) rectangle (1.8, 9.9);
			\filldraw[fill=darkgray, draw=black] (1.8,0.005) rectangle (2.05, 9.65);
			\filldraw[fill=darkgray, draw=black] (2.05,0.25) rectangle (2.35, 9.7);
			\filldraw[fill=darkgray, draw=black] (2.75, 0.45) rectangle (3.1, 9.65);
			\filldraw[fill=darkgray, draw=black] (3.1,0.1) rectangle (3.4, 9.6);
			\filldraw[fill=darkgray, draw=black] (5.5,0.1) rectangle (5.8, 9.9);
			\filldraw[fill=darkgray, draw=black] (5.8,0.1) rectangle (6.05, 9.9);
			\filldraw[fill=darkgray, draw=black] (6.8,0.25) rectangle (7.05, 9.7);
			\filldraw[fill=darkgray, draw=black] (9.5,0.1) rectangle (9.8, 9.95);
			\draw[pattern=north west lines, pattern color=gray] (1.85, 9.75) rectangle (5.25, 9.95);
			\draw[pattern=north west lines, pattern color=gray] (7.1, 9.7) rectangle (9.1, 10);
			\draw[pattern=north east lines, pattern color=gray] (2.1, 0.05) rectangle (3.1, 0.25);
			\draw[pattern=north east lines, pattern color=gray] (7.6, 0.05) rectangle (7.75, 0.4);
			\filldraw[fill=gray, draw=black] (0.5,0.1) rectangle (0.75, 2.75);
			\filldraw[fill=gray, draw=black] (0.75,8) rectangle (1.05, 9);
			\filldraw[fill=gray, draw=black] (1.1,9) rectangle (1.2, 9.75);
			\filldraw[fill=gray, draw=black] (3.5,2) rectangle (3.7, 9.75);
			\filldraw[fill=gray, draw=black] (7.5, 8.75) rectangle (9, 9);
			\filldraw[fill=lightgray, draw=black] (5,3) rectangle (5.2, 7.5);
			\filldraw[fill=lightgray, draw=black] (7.5,3.5) rectangle (9.25, 3.75);
			\filldraw[fill=lightgray, draw=black] (7.5,5) rectangle (7.75, 5.25);
			
			\draw [decorate,decoration={brace,amplitude=4pt}, thick] (10,10) -- (10,9.5); 
			\draw (10.2,9.75) node [anchor = west] {\huge $\eps^{i+1}$}; 
			\draw [decorate,decoration={brace,amplitude=8pt},thick] (10,1.5) -- (10,0); 
			\draw (10.3,0.75) node [anchor = west] {\huge $\eps^{i}$};
			
			\fill[color=white] (0,-1.23) rectangle (1,-0.5);
			\fill[color=white] (0,11) rectangle (1,11.5);
			\end{tikzpicture}}
		\caption{\label{f:uwrot1} }
	\end{subfigure}%
	\hspace{45pt}
	\begin{subfigure}[b]{.37\textwidth}
		\resizebox{!}{150pt}{
			\begin{tikzpicture}
			\draw (0,0) -- (10,0) -- (10,10)-- (0,10) -- (0,0);
			\draw (7.4,0) -- (7.4,10);
			\draw[dashed] (7.9,0) -- (7.9,10);
			\draw[dashed] (8.4,0) -- (8.4,10);
			\draw[dashed]  (0,0.5) -- (10,0.5);
			\draw[dashed]  (0,9.5) -- (10, 9.5);
			\draw[dotted]  (0,1.5) -- (10,1.5);
			\draw[dotted]  (0,8.5) -- (10, 8.5);
			\filldraw[fill=gray, draw=black] (0.5,0.1) rectangle (0.75, 2.75);
			\filldraw[fill=gray, draw=black] (0.75,8) rectangle (1.05, 9);
			\filldraw[fill=gray, draw=black] (1.1,9) rectangle (1.2, 9.75);
			\filldraw[fill=gray, draw=black] (2,2) rectangle (2.2, 9.75);
			\filldraw[fill=gray, draw=black] (5.2, 8.75) rectangle (6.7, 9);
			\filldraw[fill=lightgray, draw=black] (3.5,3) rectangle (3.7, 7.5);
			\filldraw[fill=lightgray, draw=black] (5.2,3.5) rectangle (6.95, 3.75);
			\filldraw[fill=lightgray, draw=black] (5.2,5) rectangle (5.45, 5.25);
			
			\draw[pattern=north west lines, pattern color=gray] (7.65, 8.15) rectangle (7.85, 4.75);
			\draw[pattern=north west lines, pattern color=gray] (7.6, 2.9) rectangle (7.9, 0.9);
			\draw[pattern=north east lines, pattern color=gray] (7.95, 7.95) rectangle (8.15, 6.9);
			\draw[pattern=north east lines, pattern color=gray] (7.95,2.4) rectangle (8.3, 2.25);
			
			\filldraw[fill=darkgray, draw=black] (8.4,0.1) rectangle (8.7, 9.9);
			\filldraw[fill=darkgray, draw=black] (8.7,0.005) rectangle (8.95, 9.65);
			\filldraw[fill=darkgray, draw=black] (8.95,0.25) rectangle (9.25, 9.7);
			\draw[dashed]  (9.5,0) -- (9.5, 10);
			
			\draw [decorate,decoration={brace,amplitude=8pt}] (7.4,0) -- (0,0); 
			\draw (3.7,-0.3) node [anchor = north] {\huge $N-w(X)$}; 
			\draw [decorate,decoration={brace,amplitude=8pt}] (9.5,0) -- (8.4,0); 
			\draw (9.2,-0.3) node [anchor = north] {\huge $2|X|/3$};
			\draw[->] (7.65,11) -- (7.65,10); 
			\draw (7.65,11) node [anchor=east] {\huge $C_{T,\eps^{i+1}}$ rotated};
			\draw[->] (8.15,10.5) -- (8.15,10); 
			\draw (8.15,10.5) node [anchor=west] {\huge $C_{B,\eps^{i+1}}$ rotated};
			
			\end{tikzpicture}}
		\caption{ 
		}
	\end{subfigure}
	\caption{Case A for cardinality 2DK with rotations. Dark gray rectangles are $X$, light gray rectangles are $Z$, gray (and hatched) rectangles are $Y$, hatched rectangles are $C_{T,\eps^{i+1}}$ and  $C_{B,\eps^{i+1}}$. Figure (a): original packing in $N \times N$, Figure (b): modified packing leaving space for resource contraction on the right.}
	\label{f:uwrotA}
\end{figure}

\begin{lemma}
	\label{lem:Stein}
	Given a constant $0<\eps_a<1/2$ and a set of rectangles $M:=\{1, \ldots, k\}$ with $\width(i) \ge \epsl N, \height(i) \le \epss N$ for all $i \in M$, if $a(M) \le (1/2+\eps_a)N^2$, then 
	a subset of $M$ with cardinality $(1-2\eps_s-2\eps_a)|M|$ can be packed into a $N \times (1-\eps_s)N$ knapsack.
\end{lemma}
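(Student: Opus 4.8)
\textbf{Proof plan for Lemma~\ref{lem:Stein}.}
The plan is to apply Steinberg's theorem (Theorem~\ref{thm:steinberg}), but not directly to all of $M$, since $a(M)$ may be slightly more than $\frac{1}{2}N^2$ and the target box has area only $(1-\eps_s)N^2 < N^2$; we need to first shed a small fraction of the items to bring the total area safely below the Steinberg threshold for the reduced box. First I would observe that each item $i\in M$ has width $w_i\ge\eps_{large}N$ and height $h_i\le\eps_{small}N$, so $a(i)=w_ih_i$, and since all widths are at least $\eps_{large}N$ while we will be packing into a box of width $N$, at most $1/\eps_{large}$ items can ever be stacked ``side by side''; more importantly, the items are very flat. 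The key quantitative point: we want to select $M'\subseteq M$ with $a(M')\le\frac{1}{2}(1-\eps_s)N^2$ (which is exactly the simple-corollary form of Steinberg's theorem for a box of size $N\times(1-\eps_s)N$, using that every item has width $w_i\le N/2$ — wait, we only know $w_i\ge\eps_{large}N$, not $w_i\le N/2$; so more carefully I would use the height bound: every item has $h_i\le\eps_{small}N\le\frac{1}{2}(1-\eps_s)N$, so the Corollary after Theorem~\ref{thm:steinberg} with the ``height at most $h/2$'' branch applies to the box of size $N\times(1-\eps_s)N$ provided $a(M')\le\frac{1}{2}N(1-\eps_s)N$).

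The main step is the counting/averaging argument to choose $M'$. I would sort the items of $M$ in non-decreasing order of area and greedily take a prefix $M'$ that is maximal subject to $a(M')\le\frac{1}{2}(1-\eps_s)N^2$. Since each individual item has area at most $\eps_{small}N\cdot N=\eps_{small}N^2$ (in fact at most $\eps_{small}\eps_{large}^{-1}\cdot\ldots$ — but the crude bound $a(i)\le\eps_{small}N^2$ suffices), the prefix $M'$ either equals $M$ or satisfies $a(M')\ge\frac{1}{2}(1-\eps_s)N^2-\eps_{small}N^2\ge\frac{1}{2}(1-\eps_s-2\eps_{small})N^2$. In the first case we are trivially done (all of $M$ fits, and $|M'|=|M|\ge(1-2\eps_s-2\eps_a)|M|$). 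In the second case, because we took items in non-decreasing order of area and $a(M)\le(\frac12+\eps_a)N^2$, an averaging argument over area versus cardinality gives
\[
\frac{|M'|}{|M|}\ \ge\ \frac{a(M')}{a(M)}\ \ge\ \frac{\frac{1}{2}(1-\eps_s-2\eps_{small})N^2}{(\frac{1}{2}+\eps_a)N^2}\ \ge\ 1-2\eps_s-2\eps_a,
\]
where the first inequality uses that the smallest $|M'|$ items carry at least a $|M'|/|M|$ fraction of the total area is \emph{false} in general — so instead I would argue the contrapositive: if $|M'|<(1-2\eps_s-2\eps_a)|M|$, then the $|M\setminus M'|>(2\eps_s+2\eps_a)|M|$ discarded items, being the \emph{largest} ones, each have area at least $a(M')/|M'|\ge\ldots$, forcing $a(M)$ to exceed $(\frac12+\eps_a)N^2$, a contradiction. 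This is the delicate bookkeeping step and the main obstacle: getting the constants to line up so that the fraction discarded is at most $2\eps_s+2\eps_a$ rather than some larger multiple, which relies crucially on the fact that individual item areas are $O(\eps_{small})N^2$ and hence negligible compared to the $\eps_a N^2$ and $\eps_s N^2$ slacks.

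Finally, with $M'$ in hand and $a(M')\le\frac{1}{2}N\cdot(1-\eps_s)N$ together with $h_i\le\eps_{small}N\le\frac{(1-\eps_s)N}{2}$ for every $i\in M'$, I would invoke the Corollary to Steinberg's Theorem (the ``each rectangle has height at most $h/2$ and $a(I')\le wh/2$'' branch, with $w=N$, $h=(1-\eps_s)N$) to conclude that all of $M'$ packs into the $N\times(1-\eps_s)N$ box in polynomial time. The symmetric statement with the roles of the two dimensions swapped follows by rotating the whole configuration by $90^\circ$. I would remark that the resulting packing is not container-based, but — exactly as in the proofs of Lemmas~\ref{lem:ringCase}, \ref{lem4cardgen} and \ref{lem4cardgen2} — one then applies the Resource Augmentation Packing Lemma (Lemma~\ref{lem:structural_lemma_augm}) and Theorem~\ref{thm:container_packing_ptas} inside a very slightly larger box to convert it into an $O_\eps(1)$-container packing at the cost of an extra $(1-O(\eps))$ factor, which can be absorbed into the statement's hidden constants when this lemma is used downstream.
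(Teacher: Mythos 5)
Your proposal is correct and takes essentially the same route as the paper: sort the items by non-decreasing area, take a greedy prefix of total area just below $\frac{(1-\eps_s)}{2}N^2$, pack it into the $N\times(1-\eps_s)N$ box via Steinberg, and bound its cardinality by comparing the prefix's area to $a(M)\le(\frac12+\eps_a)N^2$. Note that your first inequality $|M'|/|M|\ge a(M')/a(M)$ is in fact valid as written (the smallest $|M'|$ items have at most the average area, not "at least a $|M'|/|M|$ fraction"), so the contrapositive detour is unnecessary — it is precisely the averaging step the paper uses, and with it your constants line up with the stated bound.
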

\begin{proof}
	Without loss of generality, assume the rectangles in $M$ are given in nondecreasing order according to their area.
	Note that $a(i) \le \eps_s N^2$ for any $i \in M$.
	Let, $S:=\{1, \ldots, j\}$ such that $\frac{(1-2\eps_s)}{2}N^2 \le \sum_{i=1}^j a(i)  \le \frac{(1-\eps_s)}{2}N^2$ and $\sum_{i=1}^{j +1} a(i) > \frac{(1-\eps_s)}{2}N^2$.
	Then from Theorem~\ref{thm:steinberg}, $S$ can be packed into $N \times (1-\eps_s)N$ bin.
	As we considered rectangles in the order of  nondecreasing area, $\frac{|S|}{|M|} \ge \frac{\left(\frac12-\eps_s\right)}{\left(\frac12+\eps_a\right)}$.
	Thus, $ |S| \ge 
	\left(1- \frac{(\eps_a + \eps_s)}{\left(\frac12+\eps_a\right)}\right)|M|
	\ge (1-2\eps_a -2\eps_s)|M|$.
\end{proof}

From Lemma~\ref{lem:medWeight}, let $V'$ be the set of rectangles having height $\in [(1-2\eps^i)N, (1-\eps^{i+1})N]$  such that $|V'| \le \eps |M|$. 
We remove the rectangles in $V'$.
Now consider strips $S_{L, \eps^{i}}, S_{R, \eps^i} ,S_{T, \eps^i} ,S_{B, \eps^i}$.
From Lemma~\ref{lem:item-classification}, we can always choose $\epss$ such that $\eps^{1/2\eps} > \epss$.
Then as $\eps^i \ge \eps_{small}$, from Lemma~\ref{lem:stripHalf}, without loss of generality assume 
that $a(E_{T, \eps^i} \cup E_{B, \eps^i}) \le  \frac{(1+8 \eps^i)}{2}a(M)$.
Let $X$ be the set of rectangles that intersect both $S_{T, \eps^i}$ and $S_{B, \eps^i}$  and $Y := \{ E_{T, \eps^i} \cup E_{B, \eps^i} \} \setminus X$. 
Define $Z:=M \setminus \{ X \cup Y \cup V \}$ to be the rest of the rectangles.
Let us define $w(X)=\sum_{i \in X} \width(i)$. Now there are two cases.

\noindent \textbf{Case A.} $w(X) \ge 12 \eps^{i+1}N$.
From Lemma~\ref{lem:medWeight}, all rectangles in $X$
intersects both $S_{T, \eps^{i+1}}$ and $S_{B, \eps^{i+1}}$. 
So, removal of  all rectangles in $X, C_{T, \eps^{i+1}}$ and $C_{B, \eps^{i+1}}$ creates many empty strips of height $N$ and total width of $w(X)$. 
Now if in a packing, there are $k$ strips of height (or width) $N$  and width (or height) $w_1, w_2, \dots, w_k$, then one can push the rectangles in the packing to the left or right and create an empty strip of width $\sum_{i=1}^k w_i$.
Thus we can push all the remaining  rectangles in $Y \cup Z$
to the left so that they fit into a strip of width $N-w(X)$. Then we rotate $C_{T, \eps^{i+1}}$ and $C_{B, \eps^{i+1}}$ and pack them in two strips, each
of width $\eps^{i+1}N$. 
Note that $\width(i) \le \eps^{i+1}N$ for all $i \in X$.
Now take rectangles in $X$ by nondecreasing width, till total width is in $[w(X)-4\eps^{i+1}N, w(X)-3\eps^{i+1}N]$ and pack them in the remaining area (see Figure \ref{f:uwrotA}). The cardinality of this set is at least $\frac{(w(X)- 4\eps^{i+1}N)}{w(X)}|X|
\ge \frac{2}{3}|X|$. 
Hence, at least $\frac23|X|+|Y|+|Z|  \ge \frac23(|X|+|Y|+|Z|) $ rectangles are packed into $N \times (1-\eps^{i+1})N$.


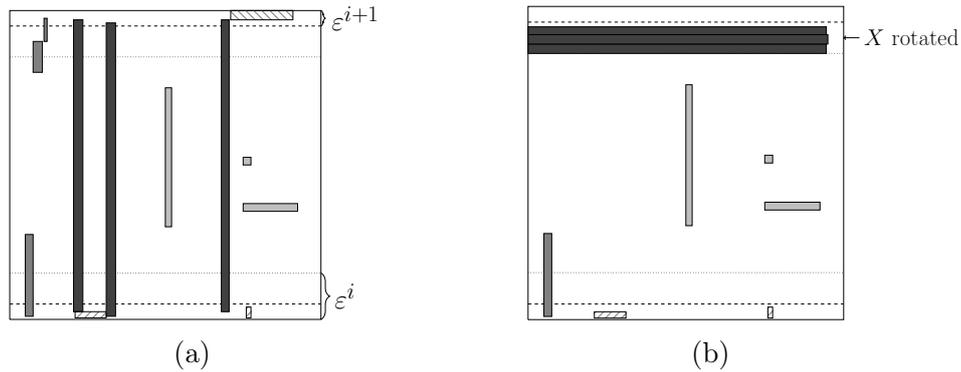
\begin{figure}
	\captionsetup[subfigure]{justification=centering}
	\centering
	\hspace{-20pt}
	\begin{subfigure}[b]{.35\textwidth}
		\resizebox{!}{120pt}{
			\begin{tikzpicture}
			\draw (0,0) -- (10,0) -- (10,10)-- (0,10) -- (0,0);
			\draw[dashed]  (0,0.5) -- (10,0.5);
			\draw[dashed]  (0,9.5) -- (10, 9.5);
			\draw[dotted]  (0,1.5) -- (10,1.5);
			\draw[dotted]  (0,8.5) -- (10, 8.5);
			\filldraw[fill=darkgray, draw=black] (2.05,0.25) rectangle (2.35, 9.7);
			\filldraw[fill=darkgray, draw=black] (3.1,0.1) rectangle (3.4, 9.6);
			\filldraw[fill=darkgray, draw=black] (6.8,0.25) rectangle (7.05, 9.7);
			\draw[pattern=north west lines, pattern color=gray] (7.1, 9.7) rectangle (9.1, 10);
			\draw[pattern=north east lines, pattern color=gray] (2.1, 0.05) rectangle (3.1, 0.25);
			\draw[pattern=north east lines, pattern color=gray] (7.6, 0.05) rectangle (7.75, 0.4);
			\filldraw[fill=gray, draw=black] (0.5,0.1) rectangle (0.75, 2.75);
			\filldraw[fill=gray, draw=black] (0.75,8) rectangle (1.05, 9);
			\filldraw[fill=gray, draw=black] (1.1,9) rectangle (1.2, 9.75);
			\filldraw[fill=lightgray, draw=black] (5,3) rectangle (5.2, 7.5);
			\filldraw[fill=lightgray, draw=black] (7.5,3.5) rectangle (9.25, 3.75);
			\filldraw[fill=lightgray, draw=black] (7.5,5) rectangle (7.75, 5.25);
			
			\draw [decorate,decoration={brace,amplitude=4pt}, thick] (10,10) -- (10,9.5); 
			\draw (10.2,9.75) node [anchor = west] {\huge $\eps^{i+1}$}; 
			\draw [decorate,decoration={brace,amplitude=8pt}, thick] (10,1.5) -- (10,0); 
			\draw (10.3,0.75) node [anchor = west] {\huge $\eps^{i}$}; 
			\end{tikzpicture}}
		\caption{\label{f:uwrot3} 
		}
	\end{subfigure}%
	\hspace{50pt}
	\begin{subfigure}[b]{.35\textwidth}
		\resizebox{!}{120pt}{
			\begin{tikzpicture}
			\draw (0,0) -- (10,0) -- (10,10)-- (0,10) -- (0,0);
			\draw[dashed]  (0,0.5) -- (10,0.5);
			\draw[dashed]  (0,9.5) -- (10, 9.5);
			\draw[dotted]  (0,1.5) -- (10,1.5);
			\draw[dotted]  (0,8.5) -- (10, 8.5);
			
			\filldraw[fill=darkgray, draw=black] (0,8.5) rectangle (9.45, 8.8); 
			\filldraw[fill=darkgray, draw=black] (0,8.8) rectangle (9.5, 9.1); 
			\filldraw[fill=darkgray, draw=black] (0,9.1) rectangle (9.45, 9.35); 
			
			\draw[pattern=north east lines, pattern color=gray] (2.1, 0.05) rectangle (3.1, 0.25);
			\draw[pattern=north east lines, pattern color=gray] (7.6, 0.05) rectangle (7.75, 0.4);
			\filldraw[fill=gray, draw=black] (0.5,0.1) rectangle (0.75, 2.75);
			\filldraw[fill=lightgray, draw=black] (5,3) rectangle (5.2, 7.5);
			\filldraw[fill=lightgray, draw=black] (7.5,3.5) rectangle (9.25, 3.75);
			\filldraw[fill=lightgray, draw=black] (7.5,5) rectangle (7.75, 5.25);
			
			\draw[->] (10.5,9) -- (10,9); 
			\draw (10.5,9) node [anchor=west] {\LARGE $X$ rotated};
			
			\fill[color=white] (0,10.1) rectangle (0.1,10.15);\end{tikzpicture}}
		\caption{
		}
	\end{subfigure}
	\caption{Case B for cardinality 2DK with rotations. Dark gray rectangles are $X$, light gray rectangles are $Z$, gray (and hatched) rectangles are $Y$, hatched rectangles are $C_{T,\eps^{i+1}}$ and  $C_{T,\eps^{i+1}}$. Figure (a): original packing, Figure (b): modified packing leaving space for resource contraction on the top.}
	\label{f:uwrotB}
\end{figure}

\noindent \textbf{Case B.}  $w(X) < 12 \eps^{i+1}N$.
Without loss of generality, assume  $|E_{B, \eps^i} \setminus X| \ge |Y|/2 \ge |E_{T, \eps^i}\setminus X|$.
Then remove $E_{T, \eps^i}$. Pack $X$ on top of $M \setminus E_{T, \eps^i}$
as $12 \eps^{i+1} \le \eps^i - \eps^{i+1}$ (see Figure \ref{f:uwrotB}).
This gives a packing of $|X|+|Z|+\frac{|Y|}{2}$.
On the other hand, as $a(X \cup Y)=a(E_{T, \eps^i} \cup E_{B, \eps^i}) \le  \frac{(1+8 \eps^i)}{2}a(M)$, from Lemma~\ref{lem:Stein}, we claim that 
$(1-2\eps^{i+1}-8\eps^i)(|X \cup Y|)$ can be packed into $N \times (1-\eps^{i+1})N$ bin.

Thus we can always pack a set of rectangles with cardinality at least $\max\{(1-10\eps^i)(|X|+|Y|), |X|+|Z|+\frac{|Y|}{2}\} \ge 
\frac{2}{3}(1-10\eps^i)(1-\eps)|M| $.
This concludes the proof of Lemma~\ref{uwrescontr1}.

Now we can obtain a PTAS for container packing using Theorem~\ref{thm:container_packing_ptas}.
This completes the proof of Theorem~\ref{thm:mainNoRotation} for the cardinality case.

\section{Conclusions}

In this chapter, we exploited the additional flexibility given by the freedom to rotate the rectangles to design improved approximation algorithms for \tdkr. While \fontL-packings do not seem to be useful for this variation of the problem, rotating the rectangles allows us to make use of the \emph{resource contraction} technique. Thus, we proved that simple container packings can obtain an approximation factor of $3/2 + \eps$ in general, and $4/3 + \eps$ for the cardinality case.

Like for \tdk, a PTAS is in the realm of possibilities, making this problem a still interesting target for future research.

\chapter{Unsplittable Flow on a Path with Bags}\label{chap:bagUFP}

In the well-studied {\em Unsplittable Flow on a Path} problem (UFP) we are given a path graph $G=(V,E)$, where $V=\{0,1,\ldots,m\}$ and $E = \{(i, i+1) \,|\, 0 \leq i < m\}$, with positive integer edge capacities $\{u_e\}_{e\in E}$ and a collection $T$ of $n$ tasks. Each task $i\in T$ is associated with a weight $w_i\in \mathbb{N}^+$, a demand $d_i\in \mathbb{N}^+$, and a subpath $P_i$ between nodes $s_i$ and $t_i$. Let $T_e=\{i\in T: e\in P_i\}$ be the tasks \emph{containing} edge $e$. Our goal is to select a subset of tasks $T'\subseteq T$ of maximum total weight $w(T'):=\sum_{i\in T'}w_i$ so that, for each edge $e$, the total demand $d_e(T'):=\sum_{i\in T'\cap T_e}d_i$ of selected tasks using that edge is upper bounded by the corresponding capacity $u_e$. Intuitively, edge capacities model a given resource whose amount varies over time (in a discrete fashion), and each task represents a request to use some specified amount of that resource (the demand) for a given time. By standard reductions, we can assume that $m\leq 2n$ and all edge capacities are distinct.

If restricted to just one edge, the problem is equivalent to the classical (one-dimensional) Knapsack problem, which is weakly NP-hard. On an arbitrary number of edges, UFP is strongly NP-hard even with uniform demands and capacities (\cite{dps10, cwmx12}).

A well-studied case of UFP considers instances that satisfy the \emph{no-bottleneck assumption} (NBA), that is, the assumption that the maximum demand is upper bounded by the minimum capacity. \cite{cckr02} showed a $(2+\eps)$-approximation for the special case with uniform capacities (known in literature as \emph{resource allocation problem}, RAP), improving on earlier $3$-approximations by \cite{bbfns01}. It is interesting to note that $2+\eps$ is the best known polynomial time approximation even in this very restricted special case. \cite{ccgk02} showed the first constant factor approximation for UFP under NBA, with a ratio of $78.51$. Subsequent work by \cite{cks06} obtained a $(2+\eps)$-approximation.

In the general case, the first constant-factor approximation in polynomial time was given in \cite{bsw11}, with a ratio of $7+\eps$. Finally, \cite{aglw14} obtained a $(2+\eps)$-approximation, matching for the general case the best ratio known in the NBA special case. The work of \cite{bces06}, together with the recent improvements by \cite{bgkmw15}, imply a QPTAS in the general case. The recent work of \cite{w17} proved that, while the unweighted version of the problem is $W[1]$-hard when parameterized on the size $k$ of the optimal solution, there is a $(1+\eps)$-approximation in time $2^{O(k \log k)} n^{O_\eps(1)}$; thus, it is unlikely that there exists an EPTAS for UFP. \cite{gmwz17} obtained a PTAS for several interesting special cases, including rooted instances (where all tasks share one edge) and the case where the profit of each task $i$ is proportional to its area $d(i) \cdot |P(i)|$. The existence of a PTAS in the general case, or even any approximation algorithm that breaks the barrier of $2$, is a major open problem.

The {\em UFP with Bags} problem (bagUFP) is a generalization of UFP where tasks are partitioned into a set of $h$ bags $J = \{ \mathcal{B}_1,\ldots,\mathcal{B}_h \}$, and we have the extra constraint that at most one task per bag can be selected. Intuitively, bags model \emph{jobs} that we can execute at different points of time (and at each such time one has a different demand, weight, and processing time). This problem is APX-hard even in the special case of unit demands and capacities (\cite{s99}), that implies that there exists no PTAS. Under the NBA, this problem was studied in \cite{ccs10}, who provided a $120$-approximation; this result was improved by \cite{eggknp12}, who obtained a $65$-approximation. For the general case, \cite{ccgrs14} recently gave a $O(\log n)$-approximation for bagUFP; the approximation factor remains the same in the case of uniform weights.

\section{Related work}

The UFP problem is a very special case of the Unsplittable Flow (UF) problem, where $G$ is a general graph. When $G$ is undirected and all capacities, demands and profits are $1$, the problem is called \emph{maximum Edge Disjoint Path} (EDP) problem. Due to their generality and importance in routing, UF and EDP have been extensively studied. \cite{cks06} proved a $O(\sqrt{n})$-approximation, based on rounding the fractional solution of the corresponding multi-commodity flow relaxation; their result also applies to the UF problem under the no-bottleneck assumption. On the other hand, the best known hardness for EDP is $\Omega\left((\log n)^{\frac12 - \eps}\right)$ under the assumption that $\textbf{NP} \not\subseteq ZPTIME\left(n^{\polylog n}\right)$; very recently, \cite{ckn17} proved a $2^{O(\sqrt{\log n})}$-hardness assuming $\textbf{NP} \not\subseteq PTIME\left(n^{O(\log n)}\right)$. Despite this breakthrough, there is an exponential gap between lower and upper bounds.

The \emph{Unsplittable Flow on a Tree} (UFT) problem is the special case of UF when $G$ is an undirected tree. This removes the difficulty of routing demands, since there exists a unique path between any two nodes. Still, the problem is far from easy: the best known approximation is a $O(\log^2 n)$ by \cite{cek09}, while the problem is only known to be APX-hard. A $O(1)$-approximation is known for the special case where the no-bottleneck assumption holds (\cite{cms07}).

\section{Basic notions}

Let $i \in T$ be a task. We call the \emph{bottleneck} of $i$ the edge $e \in P(i)$ with minimum capacity, and let $b_i$ its capacity. We say that $i$ is \emph{small} if $d_i \leq 1/2 \cdot b_i$; we say that $i$ is \emph{large} otherwise.

\begin{definition}\label{def:capacity_profile}
	For an instance of the UFP (or bagUFP) problem, we call \emph{capacity profile} the closed curve containing, for each edge $e = (i, i+1)$, the segment $(i, u_e)\mbox{--}(i+1, u_e)$, completed with the minimal vertical segments needed in order to obtain a continuous curve.
\end{definition}

\begin{figure}
	\centering
	\includegraphics[width=12cm]{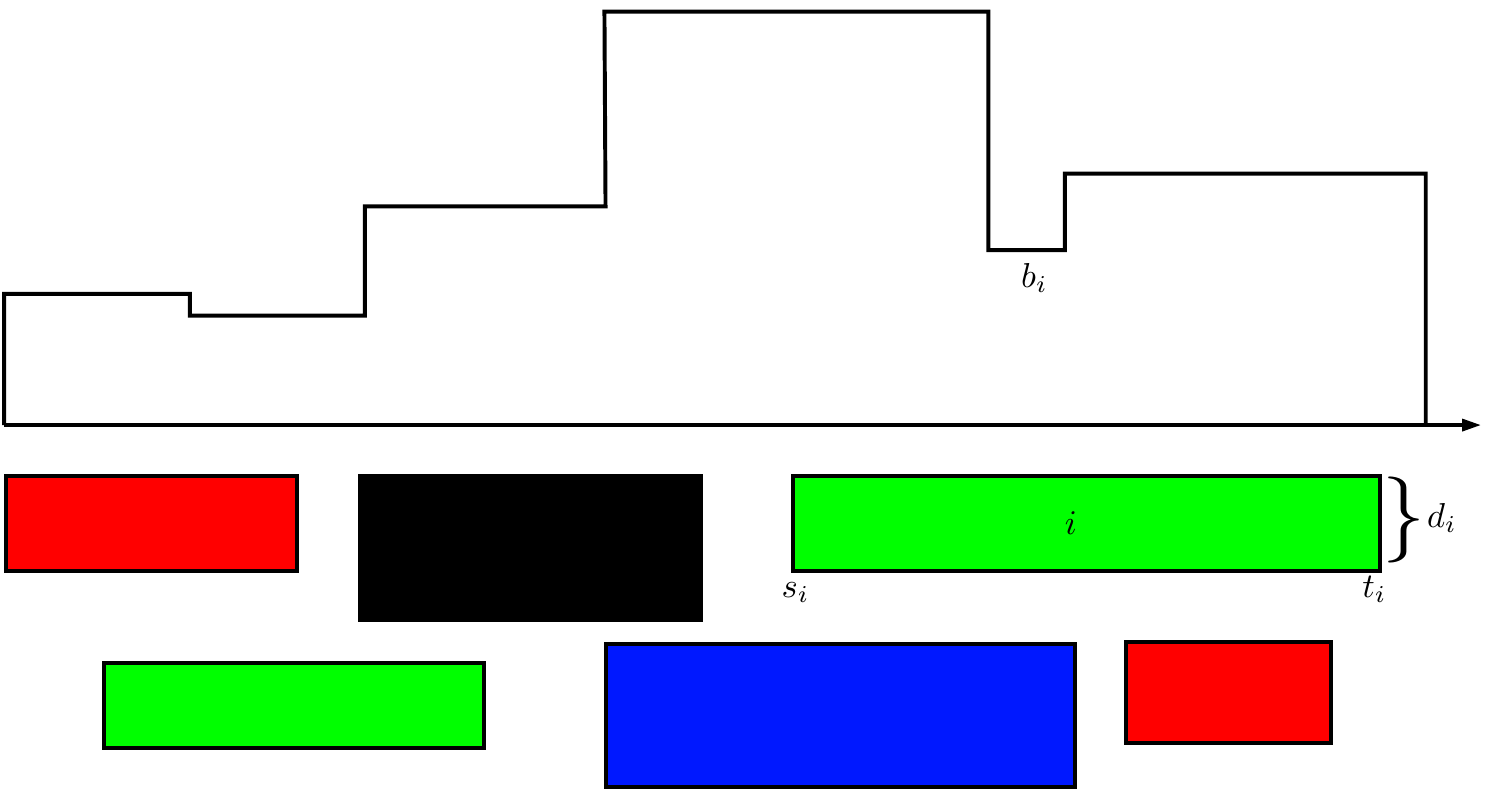}
	\caption{Illustration of the main concept of a bagUFP instance. Each task $i$ is represented as a rectangle, where the starting and ending $x$-coordinate represent $s_i$ and $t_i$, while the height represents the demand of that task; $b_i$ is the bottleneck capacity of task $i$. Tasks in the same bag are represented with the same color.}
	\label{fig:bagUFP}
\end{figure}

\begin{lemma}\label{lem:ufp_partition}
	Suppose that there is an $\alpha$-approximation for the special case when all the tasks are small, and a $\beta$-approximation for the case when all the tasks are large, for some $\alpha, \beta \geq 1$. Then there exists a $\alpha + \beta$ approximation algorithm for the general case.
\end{lemma}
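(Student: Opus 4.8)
The statement is the standard ``split the instance by demand-vs-bottleneck'' argument, so the plan is to partition the optimal solution $OPT$ into its small tasks $OPT_{small} = OPT \cap \{i : d_i \le b_i/2\}$ and its large tasks $OPT_{large} = OPT \setminus OPT_{small}$, solve the two corresponding restricted instances (one containing only small tasks, one containing only large tasks) with the assumed $\alpha$- and $\beta$-approximation algorithms, and return whichever of the two computed solutions has larger total weight. First I would observe that $w(OPT) = w(OPT_{small}) + w(OPT_{large})$, so at least one of the two summands is at least $w(OPT)/2$; without loss of generality suppose $w(OPT_{small}) \ge w(OPT)/2$.

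The key point to check is that restricting to small tasks is a \emph{legal sub-instance}: if $I_{small}$ denotes the bagUFP instance obtained from the original input by deleting every large task (keeping the same edge capacities, the same subpaths, and the same bag partition, simply with the large tasks removed from their bags), then $OPT_{small}$ is a feasible solution of $I_{small}$. Indeed, feasibility of a set of tasks depends only on the per-edge capacity constraints $d_e(T') \le u_e$ and the at-most-one-per-bag constraint, and both of these are inherited by any subset of a feasible set; deleting tasks from the input never destroys feasibility of surviving selections. Hence $OPT(I_{small}) \ge w(OPT_{small}) \ge w(OPT)/2$, and running the assumed $\alpha$-approximation on $I_{small}$ yields a feasible solution $S_{small}$ of the original instance with $w(S_{small}) \ge \frac{1}{\alpha} OPT(I_{small}) \ge \frac{1}{2\alpha} w(OPT)$. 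Symmetrically, running the $\beta$-approximation on the instance $I_{large}$ of large tasks only gives $S_{large}$ with $w(S_{large}) \ge \frac{1}{2\beta} w(OPT)$ whenever $w(OPT_{large}) \ge w(OPT)/2$. Since the algorithm does not know which of the two cases holds, it computes both $S_{small}$ and $S_{large}$ and outputs the better one; in either case the output has weight at least $\min\{\frac{1}{2\alpha}, \frac{1}{2\beta}\} w(OPT) \ge \frac{1}{\alpha+\beta} w(OPT)$, using that $\frac{1}{2\max\{\alpha,\beta\}} \ge \frac{1}{\alpha+\beta}$ for $\alpha,\beta \ge 1$ (more simply, $2\max\{\alpha,\beta\} \le \alpha + \beta$ fails, so one should instead note $\frac{1}{2\alpha} \ge \frac{1}{\alpha+\beta}$ iff $\alpha+\beta \ge 2\alpha$ iff $\beta \ge \alpha$, so take the better of the two bounds: the relevant branch gives exactly $\frac{1}{2\alpha}$ or $\frac{1}{2\beta}$, and in both branches this is $\ge \frac{1}{\alpha+\beta}$ because $2\alpha \le \alpha+\beta$ when $\alpha \le \beta$ and $2\beta \le \alpha+\beta$ when $\beta \le \alpha$).

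There is essentially no hard step here: the partition $d_i \le b_i/2$ versus $d_i > b_i/2$ depends only on the task $i$ itself (its demand and the capacities along $P_i$), so it is well-defined independently of which solution we look at, which is exactly what makes ``restrict to small tasks'' and ``restrict to large tasks'' honest sub-instances of the same problem type (bagUFP restricted to small/large tasks is still a bagUFP-type instance, just with a smaller task set). The only mild subtlety worth spelling out is that the bag structure is preserved under deletion: a bag may lose some tasks or become empty, but the at-most-one-per-bag constraint on any surviving selection is unaffected, so the assumed algorithms (which solve the respective restricted problems) apply verbatim. The running time is clearly polynomial, being two invocations of polynomial-time algorithms plus a comparison. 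This completes the plan; I expect the write-up to be short, with the feasibility-under-restriction observation being the one place to state carefully.
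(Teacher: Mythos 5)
Your algorithm and the feasibility observations are exactly those of the paper (run the $\alpha$-approximation on the small-task sub-instance and the $\beta$-approximation on the large-task sub-instance, output the heavier solution; sub-instances are legal because feasibility and bag constraints are inherited under deletion). The gap is in the quantitative analysis, and it is the part that actually produces the factor $\alpha+\beta$. Splitting on ``which of $w(OPT_{small})$, $w(OPT_{large})$ is at least $w(OPT)/2$'' only yields $w(APX)\ge \frac{1}{2\alpha}w(OPT)$ in the first branch and $w(APX)\ge \frac{1}{2\beta}w(OPT)$ in the second, and $\frac{1}{2\alpha}\ge\frac{1}{\alpha+\beta}$ holds iff $\beta\ge\alpha$ (symmetrically for the other branch). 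The patch at the end of your argument conflates two unrelated case distinctions: which branch you are in is dictated by where the optimum's profit happens to sit, not by which of $\alpha,\beta$ is smaller. Concretely, with the paper's own values $\alpha=9$, $\beta=60$ and an optimum whose profit is split evenly, the ``large'' branch certifies only $\frac{1}{120}w(OPT)<\frac{1}{69}w(OPT)$, so the chain of inequalities you wrote is false; the output is in fact fine there, but only because the \emph{other} solution $S_{small}$ is good, which your branch-by-branch reasoning never invokes.

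The missing idea is to balance the two guarantees against each other rather than using an even split. Either argue as the paper does, in one line: $w(APX)=\max\{w(APX_S),w(APX_L)\}\ge\frac{\alpha\, w(APX_S)+\beta\, w(APX_L)}{\alpha+\beta}\ge\frac{w(OPT_S)+w(OPT_L)}{\alpha+\beta}=\frac{w(OPT)}{\alpha+\beta}$; or, if you prefer your case-analysis style, split at the proportional threshold: since $\frac{\alpha}{\alpha+\beta}+\frac{\beta}{\alpha+\beta}=1$, either $w(OPT_{small})\ge\frac{\alpha}{\alpha+\beta}w(OPT)$, in which case $w(S_{small})\ge\frac{1}{\alpha}\cdot\frac{\alpha}{\alpha+\beta}w(OPT)=\frac{w(OPT)}{\alpha+\beta}$, or $w(OPT_{large})\ge\frac{\beta}{\alpha+\beta}w(OPT)$, in which case $w(S_{large})\ge\frac{w(OPT)}{\alpha+\beta}$. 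With that replacement your write-up matches the paper's proof.
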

\begin{proof}
	Consider an instance of bagUFP with set of tasks $T$. Suppose that $T = T_S \mathbin{\dot{\cup}} T_L$, where $T_S$ and $T_L$ are the small and large tasks, respectively. Let $OPT$ be the optimal solution, and let $OPT_S = OPT \cap T_S$ and $OPT_L = OPT \cap T_L$.\\
	We run the $\alpha$-approximation on $T_S$ the $\beta$-approximation on $T_L$ to obtain the feasible solutions $APX_S$ and $APX_L$, respectively; then we output the one with largest profit, let it be $APX$.\\
	Clearly, $p(APX_S) \geq p(OPT_S)/\alpha$ and $p(APX_L) \geq p(OPT_L)/\beta$, which implies:
	\begin{align*}
	p(APX) &= \max\left\{p(APX_S), p(APX_L)\right\} \geq \frac{\alpha p(APX_S) + \beta p(APX_L)}{\alpha + \beta}\\
	&\geq \frac{p(OPT_S) + p(OPT_L)}{\alpha + \beta} = \frac{p(OPT)}{\alpha + \beta}
	\end{align*}
\end{proof}
Many results on this problem (and its variations) are indeed based on partitioning the items in the two classes of \emph{small} and \emph{large}, although the exact definition might be different from the one we use here. In fact, when items are relatively small, LP-based techniques have been applied successfully; but they would break down on large items. Vice versa, for relatively large items, good results are typically obtained with Dynamic Programming algorithms. Note that even if algorithms were found that solve \emph{exactly} instances with either only small items or only large items, Lemmsa~\ref{lem:ufp_partition} only guarantees a $2$-approximation overall. Thus, breaking the barrier of $2$, if it is possible, would requires fundamentally new ideas that allow to handle small and large items simultaneously.

Using the primal-dual technique, for instances of bagUFP entirely composed of small tasks, \cite{ccgrs14} proved the following:
\begin{lemma}[\cite{ccgrs14}]\label{lem:ufp_small_apx}
	There exists a $9$-approximation algorithm for the bagUFP problem running in time $O\left(n^2\right)$, if all the tasks are small.
\end{lemma}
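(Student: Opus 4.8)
The plan is to give a primal--dual (equivalently, local-ratio) algorithm for the restricted instances in which every task is small, and to extract the constant $9$ from a charging argument that crucially exploits the small-demand property $d_i\le b_i/2$. First I would fix the natural LP relaxation $(P)$: maximize $\sum_{i\in T} w_i x_i$ subject to the capacity constraints $\sum_{i\in T_e} d_i x_i \le u_e$ for every edge $e$, the bag constraints $\sum_{i\in\mathcal{B}} x_i \le 1$ for every bag $\mathcal{B}\in J$, and $0\le x_i\le 1$. Its dual $(D)$ has a nonnegative variable $y_e$ per edge, $z_{\mathcal{B}}$ per bag, and $\alpha_i$ per task, and minimizes $\sum_e u_e y_e + \sum_{\mathcal{B}\in J} z_{\mathcal{B}} + \sum_i \alpha_i$ subject to $d_i\sum_{e\in P_i} y_e + z_{\mathcal{B}(i)} + \alpha_i \ge w_i$ for every task $i$, where $\mathcal{B}(i)$ is the bag of $i$. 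By weak duality the value of any feasible dual solution upper bounds the optimum; for small tasks this LP is known to have $O(1)$ integrality gap, which is morally what the argument below reproves constructively.

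Next comes the algorithm, which I would present in stack form. Maintain an initially empty stack and all dual variables at $0$, and define the residual weight $\hat w_i = w_i - d_i\sum_{e\in P_i} y_e - z_{\mathcal{B}(i)}$. While some task has $\hat w_i>0$: pick such a task $i^\star$ (for definiteness the one with smallest bottleneck capacity $b_{i^\star}$), raise the dual variables $y_e$ for $e\in P_{i^\star}$ and the variable $z_{\mathcal{B}(i^\star)}$ in a fixed coordinated proportion until $\hat w_{i^\star}=0$, and push $i^\star$ on the stack; at the moment of pushing we record the identity $w_{i^\star}=d_{i^\star}\sum_{e\in P_{i^\star}} y_e + z_{\mathcal{B}(i^\star)}$. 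When all residual weights are nonpositive the recorded duals are feasible for $(D)$ with $\alpha\equiv 0$. Then pop the stack and build $\mathrm{APX}$ greedily: when a task $i$ is popped, add it to $\mathrm{APX}$ iff all edge capacities remain satisfied and no task of $\mathcal{B}(i)$ is already in $\mathrm{APX}$. This produces a solution that is maximal with respect to the stack order. Each iteration freezes one task and the reverse-delete pass costs $O(n)$ per task over $O(n)$ edges, giving the claimed $O(n^2)$ running time.

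The analysis is the heart of the matter. Since $\mathrm{OPT}\le \sum_e u_e y_e + \sum_{\mathcal{B}} z_{\mathcal{B}}$, it suffices to prove $w(\mathrm{APX}) \ge \tfrac19\bigl(\sum_e u_e y_e + \sum_{\mathcal{B}} z_{\mathcal{B}}\bigr)$. I would charge, for each edge $e$, the quantity $u_e y_e$ to the tasks of $\mathrm{APX}$ crossing $e$, and each $z_{\mathcal{B}}$ to the unique task of $\mathrm{APX}$ in $\mathcal{B}$ (if any). The structural facts driving the bound are: (a) every task contributing to $y_e$ or to $z_{\mathcal{B}}$ lies on the stack, hence was inspected during reverse-delete; (b) if a popped task $i$ is rejected because of an edge $e\in P_i$, then the already-accepted tasks crossing $e$ have total demand exceeding $u_e-d_i$, and since $d_i\le b_i/2\le u_e/2$ this exceeds $u_e/2$, so "$i$ blocked on $e$" translates into a $\ge u_e/2$ lower bound on accepted demand through $e$; likewise if $i$ is rejected because of its bag, that bag already has an accepted task; and (c) each accepted task pays, through its recorded identity, for the dual mass on its own path-edges and bag. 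Combining (b) with the path geometry (a rejected task is blocked on an edge lying to the left or to the right of its own critical edge, which limits which accepted tasks can be blamed for it) and with the separate bag term should show that every unit of dual mass is charged to at most $9$ units of $w(\mathrm{APX})$, with the constant splitting into an edge/small-demand factor, a left/right path factor, and a bag factor.

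The step I expect to be the main obstacle is precisely this bookkeeping: showing that no accepted task $i^\star$ is over-charged by more than the factor $9$. Concretely, one must argue that the stack tasks whose rejection is blamed on $i^\star$ form a controlled family — those blocked on an edge of $P_{i^\star}$ near its bottleneck, plus those blocked on $\mathcal{B}(i^\star)$ — and then use the small-demand inequality to convert each such "blocking" event into an amount of accepted demand (hence of paid weight) at least a constant fraction of $u_e$. Getting the charges from the different sources (leftward blocks, rightward blocks, bag blocks, and self-payment) to be simultaneously disjoint enough that their sum is $9$ rather than some larger constant is the delicate part, and I would expect to need the choice of $i^\star$ as the minimum-bottleneck active task to make the "nearby edge" localization work.
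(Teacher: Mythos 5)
You should first note that the paper does not prove this lemma at all: it is imported verbatim from \cite{ccgrs14} (the text immediately above it says the cited authors proved it ``using the primal-dual technique''), so there is no in-paper argument to match your proposal against. Your plan is at least pointed in the same direction as the cited source — an LP/dual with edge, bag and task variables, a stack-based dual-raising phase, a reverse-delete phase, weak duality, and a charging argument exploiting $d_i \le b_i/2$ — and the small pieces you do work out are sound (e.g.\ if a popped task $i$ is blocked on an edge $e \in P_i$ then the accepted demand through $e$ exceeds $u_e - d_i \ge u_e/2$, since $u_e \ge b_i \ge 2d_i$; and the duals are feasible with $\alpha \equiv 0$ once all residual weights are nonpositive).

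However, as a proof of the stated lemma your proposal has a genuine gap, and you identify it yourself: the entire content of the statement is the constant $9$ (and the $O(n^2)$ bound), and the charging argument that would produce it is left as an intention rather than carried out. Nothing in your sketch pins down (i) how the raise rates are split between the $y_e$, $e \in P_{i^\star}$, and $z_{\mathcal{B}(i^\star)}$ — the final constant depends on this choice, and ``a fixed coordinated proportion'' is not enough; (ii) how the per-edge dual mass $u_e y_e$ is transferred to accepted weight, which requires relating $y_e$ to the recorded identities $w_{i}=d_{i}\sum_{e\in P_{i}}y_e + z_{\mathcal{B}(i)}$ of the accepted tasks, not merely lower-bounding accepted demand on blocking edges; and (iii) the disjointness/localization claim (left blocks, right blocks, bag blocks, self-payment) that keeps the total charge per accepted task at $9$ rather than some unspecified constant. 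These are precisely the nontrivial parts of the cited proof, so what you have is a plausible architecture for ``$O(1)$-approximation modulo a charging lemma,'' not a derivation of the lemma as stated. Given that the paper itself treats this as a black-box citation, the honest options are either to cite \cite{ccgrs14} as the paper does, or to actually complete the accounting you outline.
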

Thus, in the following we can focus on instances where all tasks are large.

The natural linear programming relaxation for bagUFP, denoted by LP$_{\textup{bagUFP}}$ is given below. The LP has a variable $x_i$ for each task $i \in T$.
\begin{align*}
\mbox{maximize } \quad &\textstyle \sum_{i: T_i \in \mathcal{T}} w_i x_i & \left(\mbox{LP$_{\textup{bagUFP}}$}\right)\\
\mbox{s.t. }     \quad &\textstyle \sum_{i : e \in T_i} d_i x_i \leq u_e & \forall e \in E\\
\mbox{} &\textstyle \sum_{T_i \in \mathcal{B}_j} x_i \leq 1 & \forall \mathcal{B}_j \in J\\
& x_i \geq 0 & \forall T_i \in \mathcal{T}  
\end{align*}
Note that, if the variables are restricted to $\{0, 1\}$, this is the exact integer programming formulation of the problem.

As it was first done by \cite{bsw11}, for large tasks we exploit a connection to the Maximum~Weight Independent~Set of Rectangles (MWISR) problem, which is the following problem: given a set of axis-aligned rectangles in the plane, each with an associated weight, find the maximum weight set of pairwise non-overlapping rectangles\footnote{We say that two rectangles overlap if their interiors have non-empty intersection.}. The best known polynomial time approximation for this problem is an $O(\log n / \log \log n)$-approximation by \cite{ch12}, although \cite{cc09} gave a $O(\log \log n)$-approximation for the unweighted (cardinality) case. Moreover, a QPTAS was given in \cite{aw13}.

\begin{figure}
	\centering
	\includegraphics[width=12cm]{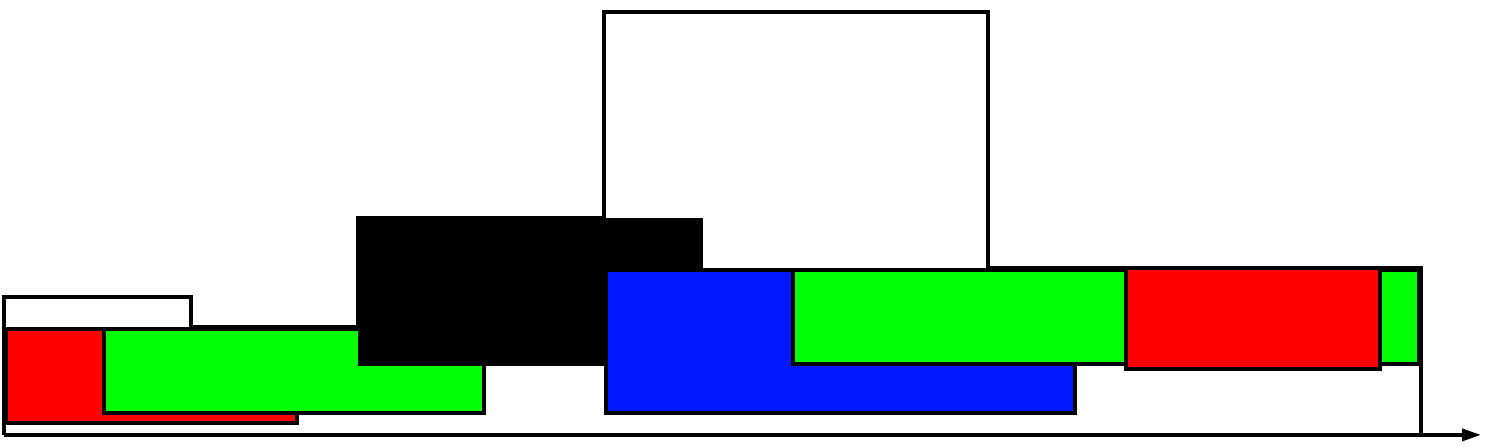}
	\caption{The top-drawn instance corresponding to the tasks in Figure~\ref{fig:bagUFP}.}
	\label{fig:bagUFP-topdrawn}
\end{figure}

For each large task $i$ of a UFP instance, we associate the rectangle $R_i$ with top-left corner $(s_i, b_i)$ and bottom-right corner $(t_i, l_i)$ where $l_i = b_i - d_i$. We call this set of rectangles the \emph{top-drawn} instance corresponding to $T_\textup{large}$ (see Figure~\ref{fig:bagUFP-topdrawn} for an illustration of the capacity profile and a top-drawn instance). Thus, the set of rectangles $\mathcal{R} = \left\{R_i\right\}_i$ with the weight function $w_i$ defines an MWISR instance, and it is easy to see that a feasible solution for the MWISR instance corresponds to a feasible subset of task in the UFP instance. Moreover the following result applies:
\begin{lemma}[implied by Lemma~13 in \cite{bsw11}]\label{lem:color}
	Suppose that all tasks are large. Then any feasible UFP solution $S$ can be partitioned in $4$ sets $S_1, \dots, S_4$ such that each $S_j$ is an independent set of rectangles.
\end{lemma}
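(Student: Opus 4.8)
The plan is to exploit the \emph{top-drawn} geometric representation and show that the rectangles arising from a feasible UFP solution of large tasks have bounded \emph{clique number}, and then invoke a standard fact relating clique number to chromatic number for this particular class of rectangles. Concretely, given a feasible UFP solution $S$ consisting only of large tasks, I would first argue that at most three rectangles of the top-drawn instance can pairwise overlap. Suppose $R_i, R_j, R_k$ pairwise overlap; since all $x$-projections $(s_i,t_i)$ pairwise intersect, by the Helly property on the line they share a common edge $e$, so $i,j,k\in T_e$. Each task is large, meaning $d_\ell > b_\ell/2 \geq u_e/2$ is false in general — here I need to be careful: largeness says $d_\ell > b_\ell / 2$ where $b_\ell$ is the \emph{bottleneck} capacity of task $\ell$, and $b_\ell \geq u_e$ only when $e$ is not the bottleneck. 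The right observation is instead geometric: the rectangle $R_\ell$ has its \emph{top} edge at height $b_\ell$ and its bottom edge at height $b_\ell - d_\ell < b_\ell/2 \leq u_e / 2$ only if $b_\ell \le u_e$; but $b_\ell \le u_e$ always holds since $e \in P_\ell$ and $b_\ell$ is the minimum capacity along $P_\ell$. Hence the bottom edge of each such $R_\ell$ lies strictly below height $u_e/2$, and the top edge lies at $b_\ell \le u_e$. So the vertical segment of $R_\ell$ above the midpoint of $e$ spans an interval containing the point at height slightly below $u_e/2$; more usefully, each $R_\ell$ contains the horizontal level $y = u_e/2 - \delta$ for small $\delta$, i.e. all three rectangles cross a common horizontal line, and on that line their $x$-intervals pairwise intersect, so again by Helly they share a common point, meaning the three rectangles have a common interior point. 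But a common point of three rectangles $R_i,R_j,R_k$ at some $(e', y')$ with $y' < u_e/2$ and $y'$ within all three would force $d_i + d_j + d_k$-type arguments; the cleanest route is: the three tasks all use $e'$ with $y'$ below each of their top edges and above each bottom edge, and largeness of each (relative to its own bottleneck, which is $\ge u_{e'}$) gives that each occupies more than half of the vertical window $[0, u_{e'}]$ at the relevant location — but three tasks cannot each have demand exceeding $u_{e'}/2$ while fitting under capacity $u_{e'}$. This contradiction shows the clique number of the rectangle intersection graph is at most $3$.

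With clique number bounded by $3$, I would then apply the classical result (essentially the ``fractional = integral'' type bound, or more simply the fact that axis-parallel rectangles are \emph{perfect} under certain projections — actually the relevant fact is Asplund--Grünbaum / the boxicity argument) that an intersection graph of axis-aligned rectangles with clique number $\omega$ has chromatic number $O(\omega^2)$; but here I want a sharper bound giving exactly $4$ when $\omega \le 3$. For top-drawn rectangles the situation is special: all rectangles have their top edge on the capacity profile, so horizontally the picture is ``staircase-like.'' I would argue directly: color each rectangle $R_i$ by $\lceil \log_2 (b_i / (b_i - l_i)) \rceil$-type class, or better, by a partition of tasks according to the position of their bottleneck edge — this is exactly what Lemma~13 of \cite{bsw11} does. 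The key steps, in order, are: (1) set up the top-drawn instance; (2) prove the clique-number-$\le 3$ claim via the Helly/demand argument above; (3) invoke the structural partition into $4$ independent sets, which for top-drawn instances follows because such rectangles admit a $4$-coloring whenever no $4$ of them pairwise intersect (this is the content of the cited lemma, proved by a left-to-right sweep assigning each new rectangle the smallest available color among at most $3$ forbidden ones).

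The main obstacle I anticipate is step (2): being careful about which capacity plays the role of the ``$1$'' in the definition of \emph{small}/\emph{large}. The definition uses the bottleneck $b_i$ of each individual task, not a global capacity, so the demand-packing contradiction must be localized to the right edge. The correct edge to localize at is the one of minimum capacity among those shared by the three overlapping rectangles, and one must verify that each of the three tasks is ``large relative to'' that edge's capacity (which holds because that capacity is $\ge$ each task's own bottleneck... no, $\le$ — so actually each task's demand exceeds half of \emph{its own} bottleneck, which is $\ge$ that shared edge's capacity, so the demand exceeds more than half of something at least as large as $u_{e'}$, hence exceeds $u_{e'}/2$). Threading this inequality direction correctly is the delicate point; once it is pinned down, three demands each $> u_{e'}/2$ summing to $\le u_{e'}$ is an immediate contradiction, and the rest is the standard sweep-coloring argument, which I would simply cite from \cite{bsw11} rather than reprove.
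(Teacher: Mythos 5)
Your proposal has a fatal gap at its central step (2), and the claim that step aims at is not merely unproven but false. The largeness condition is $d_i > b_i/2$ where $b_i=\min_{e\in P_i}u_e$ is the bottleneck of task $i$; for \emph{any} edge $e'\in P_i$ one has $b_i\le u_{e'}$, not $b_i\ge u_{e'}$ — the direction you finally settle on is the wrong one, and it cannot be repaired by taking $e'$ to be the minimum-capacity edge of the shared range, since $b_i$ is a minimum over a superset of that range. Consequently $d_i>b_i/2$ does not yield $d_i>u_{e'}/2$, and three pairwise-overlapping top-drawn rectangles of large tasks in a feasible solution do exist (note also that your argument, had it worked, would show clique number at most $2$, not the "at most $3$" you state). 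Concretely: take vertices $0,\dots,10$ with capacities $10,10,100,100,200,200,200,200,12,12$ on the edges $(0,1),\dots,(9,10)$, and tasks $a=([0,6],d_a=6)$, $b=([2,8],d_b=93)$, $c=([4,10],d_c=7)$. Then $b_a=10$, $b_b=100$, $b_c=12$, so all three tasks are large; the edge loads are $6,\,99,\,106,\,100,\,7$ on the five capacity levels, so $\{a,b,c\}$ is feasible; yet $R_a=[0,6]\times[4,10]$, $R_b=[2,8]\times[7,100]$, $R_c=[4,10]\times[5,12]$ share an interior point above edge $(4,5)$ at height $8$. At the shared edge the demands $6,93,7$ are far below $u_{e'}/2=100$, so no contradiction of the kind you invoke is available.

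Even apart from this, step (3) would not close the argument: for intersection graphs of axis-parallel rectangles, clique number $\omega$ does not bound the chromatic number by $\omega+1$, and the left-to-right sweep does not give "at most $3$ forbidden colors" — the previously placed rectangles meeting the vertical line through the new rectangle's left edge need not pairwise intersect, so their number is not controlled by $\omega$. Moreover, the paper itself offers no proof of this lemma: it is imported verbatim from Lemma~13 of \cite{bsw11}, so appealing to that very lemma for your final coloring step makes the proposal circular rather than self-contained. Any correct self-contained proof must exploit the finer structure of how top-drawn rectangles of a feasible solution of large tasks can overlap (as in \cite{bsw11}); a clique-number bound alone, besides being false at the strength you need, would not suffice.
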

A similar reduction works for bagUFP, except that what we obtain is an instance of the natural generalization of the MWISR problem that we call bagMWISR, where we also preserve the bag constraints between rectangles, corresponding to the original bag constraints among tasks.

We define the set $\mathcal{P}$ as follows. We consider the (non-uniform) grid induced by the horizontal and vertical lines containing the edges of the rectangles. For each finite cell in this grid, we add its center to $\mathcal{P}$ if and only if it overlaps with at least one rectangle in $\mathcal{R}$, as shown in Figure~\ref{fig:LP-points}. Clearly, $|\mathcal{P}| = O(n^2)$, since the grid is defined by at most $2n$ horizontal and at most $2n$ vertical lines.

\begin{figure}
	\centering
	\includegraphics[width=12cm]{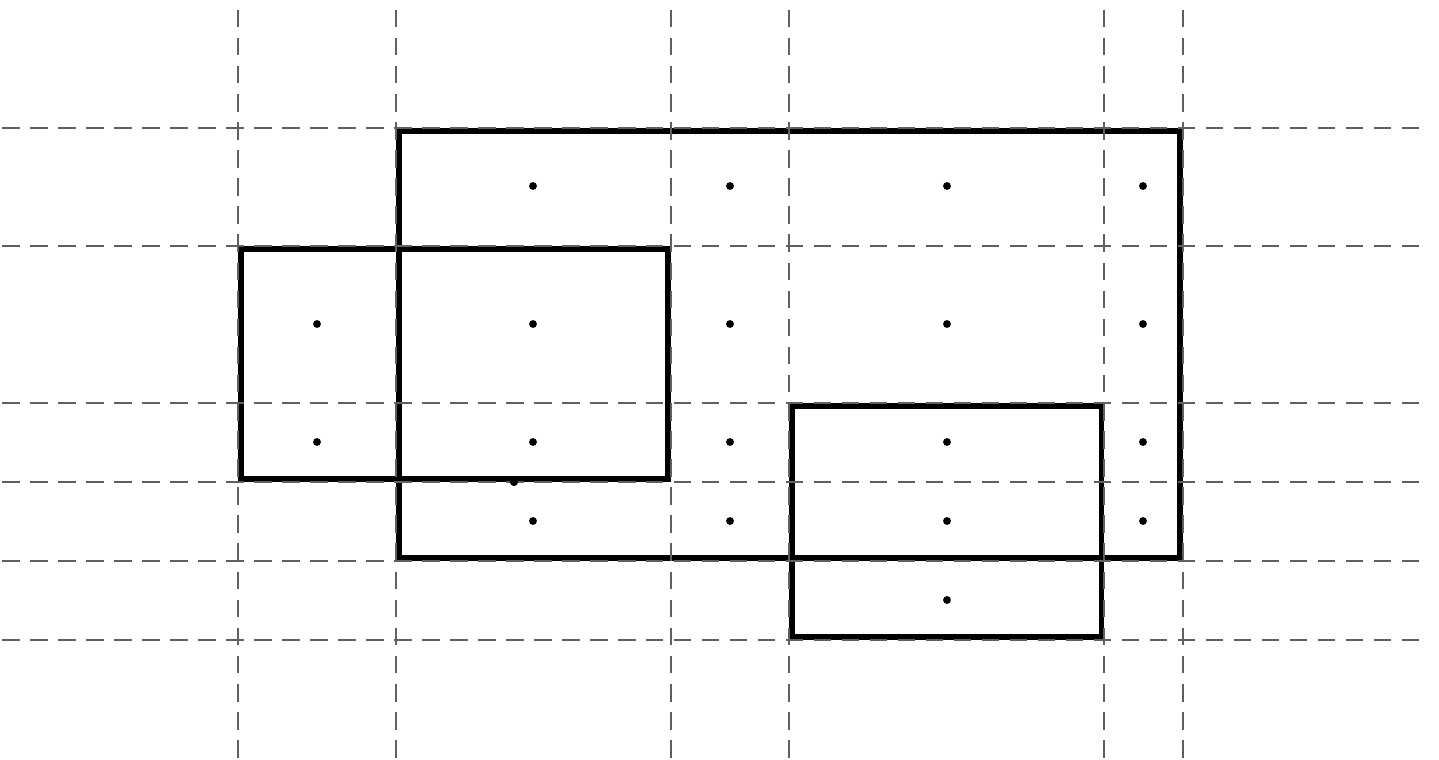}
	\caption{The points in $\mathcal{P}$ in the definition of LP$_{\textup{bagMWISR}}$.}
	\label{fig:LP-points}
\end{figure}

The following is the natural LP relaxation for the bagMWISR problem, which we denote by LP$_{\textup{bagMWISR}}$\footnote{We slightly abuse notation by reusing the symbol $\mathcal{B}_j$ to refer either to the bags of tasks (if we talk about bagUFP) or to the corresponding bags of rectangles of bagMWISR.}. There is a variable $y_i$ for each rectangle $R_i \in \mathcal{R}$, and we define $\mathcal{P}$ as the set of corners of rectangles in $\mathcal{R}$, plus each intersection point of a vertical edge of $R_i$ and a horizontal edge of $R_j$, for any two rectangles $R_i, R_j \in \mathcal{R}$.
\begin{align*}
\mbox{maximize } \quad & \textstyle \sum_{R_i \in \mathcal{R}} w_i y_i & \left(\mbox{LP$_{\textup{bagMWISR}}$}\right)\\
\mbox{s.t. } \quad & \textstyle \sum_{R_i \in \mathcal{R}: p \in R_i} y_i \leq 1  & \forall \, p \in \mathcal{P}\\
\mbox{} & \textstyle \sum_{R_i \in \mathcal{B}_j} y_i \leq 1  & \forall \, \mathcal{B}_j \in  J\\
&y_i \geq 0 & \forall \, R_i \in \mathcal{R} 
\end{align*}
Let $\mathbf{y}$ be the optimal solution and let $opt = \sum_{R_i \in \mathcal{R}} w_i y_i$ be its value. Clearly, if $OPT$ is the value of the optimal integral solution, $opt \geq OPT$. 

\begin{sloppypar}
In the next two sections, we describe approximation algorithms for bagMWISR and, as a corollary, improved approximations for bagUFP. These results have been published in \citet*{giu15}. In the same paper, we also considered a special case of bagUFP, that we call \emph{UFP with time windows} (twUFP). Here, together with the path graph $G$, we are given a collection of \emph{jobs}. Each job $j$ is characterized by a weight, a demand $\tau_j$, and a \emph{time window}, which is the sub-path between two given nodes $s_j$ and $t_j$. For each possible node $\delta_i$ such that $s_j \leq \delta_i \leq t_j - \tau_j$, we define a task $i$ with the same weight and demand as $i$, and whose subpath $P_i$ has the starting point $s_i = \delta_i$ and the endpoint $t_i = s_i + \tau_j$. The tasks corresponding to the same job $j$ define a bag $\mathcal{B}_j$.
\end{sloppypar}

We introduce a limitation to the instances of twUFP that we call \emph{Bounded Time-Window Assumption} (BTWA): namely, we assume that there exists a fixed positive constant $C$ such that $t_j - s_j \leq C t_j$ for each job $j$; that is, the time window size is bigger than the processing time by at most a constant factor. We proved the following result: 

\begin{theorem}
	There is a QPTAS for twUFP under BTWA and assuming that demands are quasi-polynomially bounded in $n$.
\end{theorem}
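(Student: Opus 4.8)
The plan is to reduce twUFP under BTWA to an instance of bagMWISR (following the reduction already sketched for bagUFP in the excerpt), and then exploit the QPTAS for MWISR of Adamaszek--Wiese~\cite{aw13}, adapted to handle bag constraints. The BTWA plays a role analogous to a ``bounded aspect ratio'' assumption: since $t_j - s_j \leq C t_j$, the number of distinct tasks generated by a job $j$ (one per feasible start coordinate $\delta_i$) is bounded by the time-window size, and each such task has endpoints that are ``close'' in a multiplicative sense to $t_j$. I would first perform the standard partition into \emph{small} and \emph{large} tasks (demand at most, resp.\ more than, half the bottleneck capacity), as in Lemma~\ref{lem:ufp_partition}. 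For small tasks one would want an LP-based argument analogous to Lemma~\ref{lem:ufp_small_apx} but refined to a $(1+\eps)$-approximation under the quasi-polynomial demand bound; this is where the quasi-polynomial running time enters, since one can afford a DP over a quasi-polynomial-size state space encoding capacity usage. For large tasks I would apply the top-drawn transformation (as in the bagMWISR reduction), turning the problem into bagMWISR on $O(n)$-many rectangles, losing only a constant factor via Lemma~\ref{lem:color}; but to get a $(1+\eps)$-approximation rather than an $O(1)$ one, I would instead directly run the recursive geometric partitioning of the MWISR QPTAS on the rectangle instance while carrying the bag constraints along as an extra feasibility check inside each subproblem.

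The key steps, in order, are: (i) discretize and bound the instance size---show that under BTWA and the quasi-polynomial demand bound, one may assume all coordinates and capacities lie in a quasi-polynomially bounded set, so that the total number of tasks is $n^{O(1)}$ and the relevant DP tables have quasi-polynomial size; (ii) apply the small/large split and handle small tasks via a DP/LP-rounding hybrid that achieves $(1+\eps)$ on the small part; (iii) for large tasks, set up the bagMWISR instance via the top-drawn construction and invoke a bag-aware version of the Adamaszek--Wiese recursion---here the crucial point is that their recursive decomposition produces $O_\eps(\polylog n)$ ``relevant'' rectangles at each level, and the bag constraint is violated only if two rectangles of the same bag are both selected, which can be enforced because each recursive cell sees only a bounded number of bag-representatives; (iv) combine the two solutions and argue, as in the proof of Lemma~\ref{lem:ufp_partition} but with the refined bounds, that the overall loss is $(1+O(\eps))$ rather than a factor $2$---this requires that BTWA lets us avoid the lossy small/large dichotomy, e.g.\ by showing the optimum is dominated (up to $1+\eps$) by either a pure-small or a pure-large solution, or by a more careful interleaved DP.

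The main obstacle I expect is step (iv): the generic small/large partition only yields a factor-$2$ loss (as the excerpt explicitly notes, ``breaking the barrier of $2$ \dots would require fundamentally new ideas''), so to obtain a genuine QPTAS one cannot treat small and large tasks independently. The BTWA must be used essentially here---presumably to argue that the ``interaction'' between small and large tasks across an edge is limited, so that a single quasi-polynomial DP can track a bounded-precision profile of remaining capacity (rounded to powers of $1+\eps'$, which is legitimate since demands are quasi-polynomially bounded) while simultaneously guessing the large-task structure from the MWISR recursion. A secondary technical difficulty is verifying that the bag constraints do not blow up the state space of the QPTAS recursion: one needs that within any recursive subproblem only $O_\eps(\polylog n)$ bags are ``active'', which should follow from the same balanced-separator bound that controls the number of active rectangles, but this coupling must be checked carefully. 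Modulo these two points, the remaining arguments---discretization, rounding of demands, and the union-bound-style accounting of the $\eps$-losses across $O_\eps(\log n)$ recursion levels---are routine adaptations of~\cite{aw13} and the techniques already developed in this thesis.
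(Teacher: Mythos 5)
There is a genuine gap, and in fact two. First, your handling of the large tasks cannot yield a $(1+\eps)$-approximation: the top-drawn reduction is inherently lossy in the direction you need. A feasible twUFP solution consisting of large tasks is generally \emph{not} an independent set of top-drawn rectangles; Lemma~\ref{lem:color} only says it decomposes into $4$ independent sets. So even an \emph{exact} algorithm for the resulting bagMWISR instance recovers only a quarter of the large-task optimum in the worst case, and running the Adamaszek--Wiese recursion~\cite{aw13} ``on the rectangle instance'' solves the wrong problem: its balanced-cut machinery is built around non-overlapping rectangles, whereas twUFP feasibility allows overlapping top-drawn rectangles as long as capacities are respected. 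This is precisely why the QPTAS for UFP is a separate, nontrivial result and not a corollary of the MWISR QPTAS.

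Second, the step you yourself flag as the obstacle --- combining the small and large parts with loss $1+O(\eps)$ instead of the factor $2$ given by Lemma~\ref{lem:ufp_partition} --- is the heart of the theorem, and your proposal leaves it at ``BTWA must be used essentially here --- presumably\dots'', which is not an argument. The route actually taken is different and avoids both issues: the theorem is proved by generalizing the QPTAS for UFP of \cite{bces06} (as stated in the text), i.e., a single recursive dynamic program over the capacity profile that treats all tasks together, guessing a rounded profile of the capacity consumed across the recursion; the quasi-polynomial bound on demands is what keeps these profile guesses quasi-polynomially many, and BTWA is what lets the bag (time-window) constraints be carried through the recursion without blowing up the state space. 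There is no small/large dichotomy with separately approximated parts and no detour through bagMWISR, so the factor-$2$ and factor-$4$ losses you would incur never arise. As written, your proposal does not constitute a proof of the statement.
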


This result is a generalization of the QPTAS for UFP by \cite{bces06}.

\section{A \texorpdfstring{$O(\log n / \log \log n)$}{O(log n / log log n)}-approximation for bagUFP}
\label{sec:weightedBagUFP}

In this section, we show how to extend the analysis in \cite{ch12} for MWISR to bagMWISR, obtaining an expected $O\left(\log n / \log \log n \right)$-approximation.

First, we consider the general case. We say that two rectangles \emph{cross} if they overlap but no rectangle contains a vertex of the other rectangle. Construct two undirected graphs $G_1$ and $G_2$ with vertex set $\mathcal{R}$ such the edge $(R_i, R_j)$ is in $G_1$ if the rectangles $R_i$ and $R_j$ are in the same bag, or if they overlap but do not cross\footnote{Observe that if two overlapping rectangles are not crossing, then at least one rectangle has a vertex that is contained in the interior of the other rectangle.}; and $(R_i, R_j)$ is in $G_2$ if they cross. Clearly, a set $S \subseteq \mathcal{R}$ is an independent set of rectangles which satisfies the bag constraints if and only if it is an independent set in both $G_1$ and $G_2$. We have the following lemma:

\begin{lemma}\label{lem:color_crossing_graph}
	Suppose that a set $S \subseteq \mathcal{R}$ is an independent set in $G_1$; moreover, suppose that the induced subgraph $G_2[S]$ has maximum clique size $\Delta$. Then it is possible to color in polynomial time each rectangle in $S$ using at most $\Delta$ colors so that no edge of $G_2[S]$ connects two rectangles of the same color.
\end{lemma}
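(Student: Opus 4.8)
The statement is exactly the classical fact that an intersection graph of crossing rectangles whose pairwise crossings have the ``one-edge-per-direction'' structure is perfect — in fact, such a graph is (the complement of) a comparability graph, and hence optimally colorable in polynomial time. So the plan is to reduce $G_2[S]$ to such a structured graph and invoke the standard polynomial-time coloring of perfect/comparability graphs. First I would recall that if $S$ is an independent set in $G_1$, then no two rectangles of $S$ are in the same bag, no rectangle of $S$ contains a vertex of another, and every overlapping pair of $S$ is a \emph{crossing} pair: either a vertical-over-horizontal crossing (one rectangle is ``tall and narrow'' relative to the overlap, cutting the other from top to bottom) or a horizontal-over-vertical crossing. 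Thus the edges of $G_2[S]$ split into two classes $E_{VH}$ (the first rectangle's vertical sides pierce the second left-to-right, i.e. the second crosses the first ``horizontally'') and $E_{HV}$ (the symmetric situation). Each class is transitively orientable: for $E_{VH}$, orient $R_i \to R_j$ if $R_i$'s horizontal extent strictly contains $R_j$'s horizontal extent within the crossing — equivalently order the rectangles of a clique by their $x$-projections, which are nested or linearly ordered by inclusion. One checks transitivity: a $VH$-clique in $G_2[S]$ consists of rectangles that pairwise cross in the same orientation, so their $x$-intervals are pairwise comparable under containment, giving a total order.

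The key step, then, is to show that a \emph{clique} $K$ in $G_2[S]$ lies entirely within one of the two classes, so that the two relations partition not just the edge set but, on each clique, give a single transitive tournament. Here I would argue geometrically: if $R_a, R_b$ cross in the $VH$ orientation and $R_b, R_c$ cross in the $HV$ orientation, then $R_a$ and $R_c$ cannot cross at all without one containing a vertex of the other (contradiction with independence in $G_1$) — roughly, $R_b$ being pierced left-to-right by $R_a$ and top-to-bottom by $R_c$ forces $R_a$ and $R_c$ to meet in a way that creates a contained corner. So each maximal clique $K$ is monochromatic in $\{VH, HV\}$, and on $K$ the appropriate one of the two relations is a strict total order; hence $\chi(G_2[S]) = \omega(G_2[S]) = \Delta$ if I can show $G_2[S]$ is perfect. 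Rather than proving perfectness by hand, the cleaner route is: $G_2[S]$ is the union of two comparability graphs $G_{VH}, G_{HV}$ on vertex set $S$ whose transitive orientations are ``compatible'' in the sense that no directed triangle or Gallai-type obstruction arises across them — but since every clique lives in one of the two, it suffices to produce a single partial order $\prec$ on $S$ whose comparability graph is exactly $G_2[S]$. I would define $\prec$ to be the transitive closure of $(E_{VH}\text{-orientation}) \cup (E_{HV}\text{-orientation})$ and check, using the clique-monochromaticity above, that this closure introduces no new comparable pairs (any $\prec$-path between $R_i,R_j$ with $R_iR_j \notin E(G_2[S])$ would force, by monochromaticity of each clique, an edge we excluded). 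Then $G_2[S]$ is a comparability graph, and comparability graphs are perfect and can be optimally colored in polynomial time (greedily along any linear extension of $\prec$, which uses exactly $\omega = \Delta$ colors).

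Concretely, the algorithm is: compute for each edge of $G_2[S]$ its class $VH$ or $HV$ from the coordinates; orient each class by $x$- resp. $y$-interval containment within the crossing; take a topological order of the resulting DAG on $S$; color greedily in that order. Each step is polynomial, and correctness — that this uses at most $\Delta$ colors and that same-colored rectangles are non-adjacent in $G_2[S]$ — follows from the comparability-graph structure just established.

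The main obstacle I anticipate is the clique-monochromaticity claim: showing rigorously that within a clique of $G_2[S]$ one cannot mix a $VH$ crossing and an $HV$ crossing, and that consequently the two orientations, when unioned and transitively closed, do not create spurious comparabilities. This is a finite but slightly fussy planar case analysis on the relative positions of three axis-parallel rectangles that pairwise overlap without any containing a corner of another; I would organize it by fixing $R_b$ and describing the ``forbidden zones'' into which $R_a$ (a left-right piercer) and $R_c$ (a top-bottom piercer) are forced, then observing those zones make an $R_a$–$R_c$ corner-free crossing impossible. Once that lemma is in hand, the rest is bookkeeping and the citation of the classical perfect-graph coloring result.
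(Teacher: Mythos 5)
Your end goal is right and reachable, but the route you propose hinges on a step that is both ill-posed and, as stated, false. The partition of the edges of $G_2[S]$ into two classes $E_{VH}$ and $E_{HV}$ is not a well-defined property of an unordered crossing pair: in \emph{every} crossing pair exactly one rectangle is the narrow/tall one (pierced-through horizontally, piercing vertically) and the other is the wide/short one, so the ``two classes'' only record an orientation of the edge, not two different kinds of edges. Consequently the ``clique-monochromaticity'' lemma that you single out as the key obstacle does not hold under either natural reading: if $R_a$ and $R_c$ both pierce $R_b$ vertically, nothing prevents $R_a$ and $R_c$ from crossing each other (their $x$-projections can simply be nested), and if $R_a$ pierces $R_b$ and $R_b$ pierces $R_c$, then $R_a$ necessarily crosses $R_c$; in neither situation is the $R_a$--$R_c$ crossing excluded. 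Fortunately none of this machinery is needed. Because a crossing pair has no corner of one rectangle inside the other, two rectangles cross exactly when the $x$-projection of one is strictly contained in the $x$-projection of the other while the $y$-projections are nested the opposite way (in general position; ties can be broken arbitrarily). Orienting every crossing edge from the narrower to the wider rectangle therefore gives a relation that is transitive outright --- no union of two orientations, no transitive closure, no case analysis on triples --- so $G_2[S]$ (in fact $G_2$ itself) is a comparability graph, and your concluding step (color by height in the poset, or greedily along a linear extension, using $\omega=\Delta$ colors, all in polynomial time) is correct and completes the proof. Note also that independence in $G_1$ is only used to guarantee that the color classes are independent sets of rectangles (no overlap-without-crossing and no bag conflicts), not for the coloring bound itself.

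For comparison, the paper's proof avoids any appeal to perfect or comparability graphs: it calls a rectangle \emph{thinnest} if no rectangle crosses it with smaller width, observes that the thinnest rectangles are pairwise non-crossing (hence, by independence in $G_1$, pairwise non-overlapping) and so form one color class, argues that removing them decreases the maximum clique size by one (any $\Delta$-clique avoiding all thinnest rectangles could be enlarged by swapping in a narrower rectangle crossing its narrowest member), and iterates $\Delta$ times. This peeling is exactly your ``color by height in the containment order'' argument in elementary disguise; the paper's version is shorter and self-contained, while your version, once repaired as above, makes the underlying poset structure explicit at the cost of quoting classical results.
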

\begin{proof}
	Call a rectangle $R_i$ \emph{thinnest} if no rectangle crosses $R_i$ and has a smaller width. Let $S_1$ be the set of thinnest rectangles. Clearly, they are pairwise not crossing and hence not overlapping, and thus they form an independent set (see Figure~\ref{fig:remove-thinnest}). We assign a color to the rectangles in $S_1$ and we remove them from the graph (together with the incident edges); clearly, the residual graph has maximum clique size at most $\Delta - 1$, and so we can iterate the process to obtain the independent sets $S_1, S_2, \dots, S_\Delta$.
\end{proof}

\begin{figure}
	\centering
	\includegraphics[width=9cm]{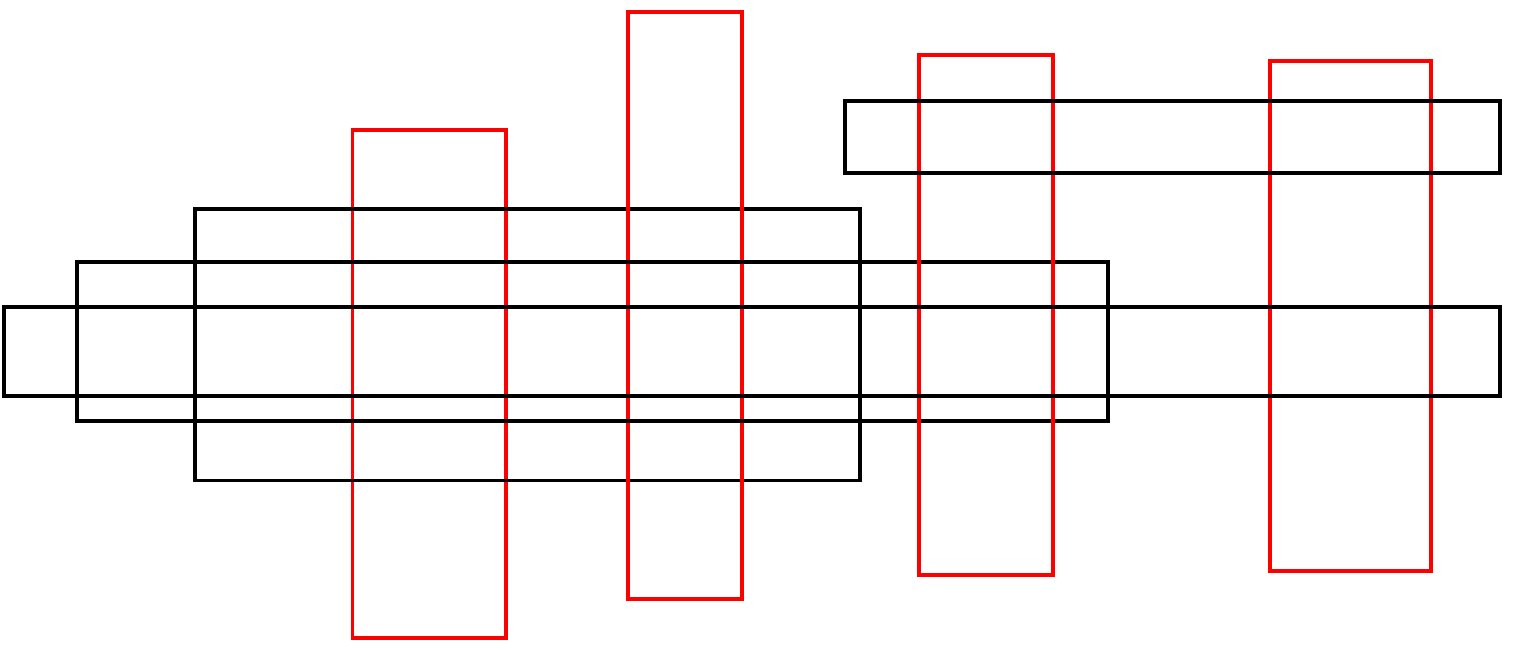}
	\caption{Proof of Lemma~\ref{lem:color_crossing_graph}. The \emph{thinnest} rectangles (in red) are independent, and removing them decreases the maximum clique size.}
	\label{fig:remove-thinnest}
\end{figure}

Consequently, if we find a set $S$ such that $G_1[S]$ is an independent set and $G_2[S]$ has maximum clique size $\Delta$, then we can find an independent set of rectangles $S'$ of profit at least $p(S)/\Delta$.
We will find such a set $S$ with the technique of \emph{randomized rounding} of the LP solution.

\subsection{Algorithm}

For two rectangles $R_i$ and $R_j$, we write that $R_i \otimes R_j$ if the interior of $R_i$ contains at least one corner of $R_j$, or vice versa; that is, $R_i \otimes R_j$ if $R_i$ and $R_j$ overlap but do not cross.

For an arbitrary set $\mathcal{K} \subseteq \mathcal{R}$ and rectangle $R_i \in \mathcal{R}$, we define the \emph{resistance} $\eta$ as:
\[
\eta (R_i, \mathcal{K}) = \sum_{\substack{R_j \in \mathcal{K} \setminus \mathcal{B}(R_i),\\ R_i \otimes R_j}} y_j + \sum_{R_j \in \mathcal{K} \cap \mathcal{B}(R_i) \setminus \{R_i\}} y_j
\]
That is, $\eta(R_i, \mathcal{K})$ is the sum of each $y_j$ for some $R_j \in \mathcal{K}$ such that $R_i \otimes R_j$, or $R_i$ and $R_j$ are in the same bag.

The algorithm works as follows. First, a fractional optimal solution $\mathbf{y}$ of LP$_{\textup{bagMWISR}}$ is computed. Then a permutation $\Pi$ of $\mathcal{R}$ is obtained as follows: if the first $i$ elements are $\Pi_i = \{\pi_1, \dots, \pi_i\}$, then the $(i+1)^{th}$ element $\pi_{i+1}$ is:
\[
\pi_{i+1} = \arg\min_{R_j \in \mathcal{R} \setminus \Pi_i}{\eta(R_j, \mathcal{R}\setminus \Pi_i)}
\]

Let $\eta_i = \eta(\pi_i, \mathcal{R}\setminus \Pi_{i-1})$ be the resistance of the $i^{th}$ rectangle. After the permutation is computed, a candidate set $C$ and an independent set $S$ for $G_1$ are computed with the following process. Initially, $C$ and $S$ are empty. Then, the members of the permutation are scanned in reverse order: at iteration $k$, the rectangle $\pi_{m-(k-1)}$ is added to set $C$ (initially empty) with probability $y(\pi_{m-(k-1)})/\tau$, where $\tau > 1$ is a constant integer parameter that will be defined later. If $\pi_{m-(k-1)}$ is added to $C$ and $S \cup \{\pi_{m-(k-1)}\}$ is an independent set in $G_1$, then $\pi_{m-(k-1)}$ is also added to $S$.

Finally, by using Lemma~\ref{lem:color_crossing_graph}, the algorithm finds a subset $S' \subseteq S$ that is an independent set in $G_2[S]$ such that $p(S') \geq p(S) / \Delta$, where $\Delta$ is the maximum clique size of $G_2[S]$.

\subsection{Analysis}

For a subset $\mathcal{H} \subseteq \mathcal{R}$, let us define $\mathcal{E}(\mathcal{H}) = \sum_{R_i \in \mathcal{H}} y_i$; moreover, let $\mathcal{C}(\mathcal{H}) = \{(p, i, j) \;|\; p\mbox{ is a corner of } R_i, \mbox{ where } R_i, R_j \in \mathcal{R}, i \neq j, p \in R_j\}$.

\begin{lemma} \label{lem:sparsity}
	Let $\mathcal{H}$ be any subset of $\mathcal{R}$. Then $\sum_{(p, i, j) \in \mathcal{C}(\mathcal{H})} y_i y_j \leq 4\mathcal{E}(\mathcal{H})$.
\end{lemma}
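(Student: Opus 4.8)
The plan is to prove the sparsity bound $\sum_{(p,i,j)\in\mathcal{C}(\mathcal{H})} y_i y_j \le 4\,\mathcal{E}(\mathcal{H})$ by charging each triple $(p,i,j)$ to the LP packing constraint at a suitable grid point near the corner $p$, and then summing those constraints over a bounded number of ``quadrant classes'' of corners. The key observation is that for a fixed rectangle $R_i$ and a fixed corner $p$ of $R_i$, the set of rectangles $R_j$ containing $p$ all overlap a small neighborhood of $p$; more precisely, if $p$ is, say, the top-left corner of $R_i$, then every $R_j$ with $p\in R_j$ must contain the grid cell immediately to the lower-right of $p$ (the cell whose center is the point of $\mathcal{P}$ just inside $R_i$ at that corner). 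Hence $\sum_{R_j:\,p\in R_j} y_j \le \sum_{R_j:\,q\in R_j} y_j \le 1$ by the LP packing constraint at that point $q\in\mathcal{P}$, where $q$ depends only on $p$ and on which of the four corners of $R_i$ we are looking at.

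First I would fix one of the four corner types (top-left, top-right, bottom-left, bottom-right) and restrict attention to triples $(p,i,j)$ where $p$ is a corner of $R_i$ of that type. For each such $R_i$, let $q_i$ be the center of the grid cell diagonally inside $R_i$ at that corner (this lies in $\mathcal{P}$ since that cell overlaps $R_i$). Then $\sum_{j:\,(p,i,j)\in\mathcal{C}(\mathcal{H}),\,p=\text{that corner of }R_i} y_j \le \sum_{R_j\in\mathcal{R}:\,q_i\in R_j} y_j \le 1$ by the LP constraint at $q_i$. Therefore the contribution of this corner type to the sum is
\[
\sum_{R_i\in\mathcal{H}} y_i \sum_{\substack{R_j:\,(p,i,j)\in\mathcal{C}(\mathcal{H})\\ p=\text{that corner of }R_i}} y_j \;\le\; \sum_{R_i\in\mathcal{H}} y_i \cdot 1 \;=\; \mathcal{E}(\mathcal{H}).
\]
Summing over the four corner types gives the claimed bound $4\,\mathcal{E}(\mathcal{H})$. (Here I use that $\mathcal{C}(\mathcal{H})$ is taken over rectangles in $\mathcal{R}$, but when restricting the sum $\sum y_iy_j$ to triples with $R_i\in\mathcal{H}$ it suffices to bound by $\mathcal{E}(\mathcal{H})$; if instead the intended sum ranges only over $R_i,R_j\in\mathcal{H}$ the same argument applies verbatim with $\mathcal{H}$ in place of $\mathcal{R}$ in the inner constraint, which only makes the bound smaller.)

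The main obstacle I anticipate is the careful bookkeeping around degenerate configurations: the grid cell ``diagonally inside $R_i$'' might be empty of some of the $R_j$'s if $R_j$ touches $p$ only on its boundary rather than containing $p$ in its interior, or if rectangle edges are collinear. One has to argue that $\mathcal{P}$ as defined (corners of rectangles together with all vertical-edge/horizontal-edge crossing points) really does contain a point witnessing the packing constraint for each corner type — this is exactly why the definition of $\mathcal{P}$ includes the crossing points and not merely the corners. A clean way to handle this is to perturb or, better, to observe that if $p\in R_j$ (as closed rectangles) and $p$ is a corner of $R_i$, then either $R_j$ contains the adjacent interior grid cell of $R_i$, or $p$ lies on an edge of $R_j$, in which case one of the crossing points of that edge with an edge of $R_i$ serves as the witness in $\mathcal{P}$. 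I would spell out this case analysis for one corner type and remark that the other three are symmetric. Modulo this, the proof is just the four-way charging argument above.
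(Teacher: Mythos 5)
Your proof is correct and follows essentially the same route as the paper: group the triples by the rectangle $R_i$ owning the corner, bound the inner sum $\sum_{j:\,p\in R_j} y_j\le 1$ by an LP packing constraint witnessing that corner, and multiply by the four corners per rectangle. The paper's (one-line) proof simply invokes feasibility at the corner points themselves, which it takes to lie in $\mathcal{P}$; your extra care in choosing a nearby witness point (adjacent cell center or edge-crossing point) only makes explicit a detail the paper glosses over.
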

\begin{proof}
	$\displaystyle \sum_{(p, i, j) \in \mathcal{C}(\mathcal{H})} y_i y_j = \sum_{i \in \mathcal{H}} y_i \sum_{j:(p, i, j) \in \mathcal{C}(\mathcal{H})} y_j \leq 4 \sum_{i \in \mathcal{H}} y_i = 4\mathcal{E}(\mathcal{H})$.\\
	The inequality follows from the feasibility of $\mathbf{y}$ and the fact that there are four corner points for each rectangle.
\end{proof}
\begin{lemma} 
	\label{lem:lowresistance}
	For any $i$, the resistance of the $i^{th}$ chosen rectangle $\pi_i$ is at most $5$.
\end{lemma}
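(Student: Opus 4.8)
The statement to prove is Lemma~\ref{lem:lowresistance}: for every index $i$, the resistance $\eta_i=\eta(\pi_i,\mathcal{R}\setminus\Pi_{i-1})$ of the $i$-th rectangle in the greedy permutation is at most $5$. The plan is to use an averaging argument over the candidate set $\mathcal{R}\setminus\Pi_{i-1}$ at the moment $\pi_i$ is chosen. Fix $i$ and write $\mathcal{H}=\mathcal{R}\setminus\Pi_{i-1}$ for the set of rectangles not yet removed. By the greedy rule, $\pi_i$ minimizes $\eta(R_j,\mathcal{H})$ over $R_j\in\mathcal{H}$, so it suffices to show that the \emph{weighted average} of $\eta(R_j,\mathcal{H})$, with weights $y_j$, is at most $5\,\mathcal{E}(\mathcal{H})$; then at least one rectangle (hence the minimizer $\pi_i$) has resistance at most $5$. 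Concretely I would bound
\[
\sum_{R_j\in\mathcal{H}} y_j\,\eta(R_j,\mathcal{H}) \;\le\; 5\,\mathcal{E}(\mathcal{H}),
\]
which immediately gives $\eta_i=\min_{R_j\in\mathcal{H}}\eta(R_j,\mathcal{H})\le 5$ since the minimum is at most the $y$-weighted average and $\sum_{R_j\in\mathcal{H}}y_j=\mathcal{E}(\mathcal{H})$.

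\textbf{Key steps.} First I would expand the left-hand side using the definition of $\eta$: it splits into two sums, one over pairs $(R_j,R_k)$ in $\mathcal{H}$ with $R_j\otimes R_k$ and $R_k\notin\mathcal{B}(R_j)$, contributing $y_jy_k$, and one over pairs $R_j,R_k$ in the same bag, also contributing $y_jy_k$. For the overlap-but-not-cross term: if $R_j\otimes R_k$ then one of the two rectangles has a corner inside the other, so each such ordered pair corresponds to at least one triple $(p,\cdot,\cdot)$ with $p$ a corner point witnessing the containment; counting the unordered relation in both directions, $\sum_{R_j,R_k\in\mathcal{H},\,R_j\otimes R_k} y_jy_k \le 2\sum_{(p,a,b)\in\mathcal{C}(\mathcal{H})} y_ay_b \le 8\,\mathcal{E}(\mathcal{H})$ by Lemma~\ref{lem:sparsity} — but this crude bound gives $8$, not $5$, so I would need to be more careful. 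The correct accounting is: a corner $p$ of $R_j$ lying in $R_k$ is a point of $\mathcal{P}$ contained in both $R_j$ and $R_k$; the LP packing constraint $\sum_{R_i\in\mathcal{R}:\,p\in R_i} y_i\le 1$ at that corner point $p$ controls $\sum_{k}y_k$ over rectangles containing $p$. Summing $y_j$ times (the LP bound at each of the four corners of $R_j$) over $R_j\in\mathcal{H}$ yields a bound of the form $4\,\mathcal{E}(\mathcal{H})$ for the contribution where $R_j$'s corner is the witness; the symmetric case where $R_k$'s corner lies in $R_j$ is handled by the same four-corner argument applied to $R_k$, but without double counting one gets $4\,\mathcal{E}(\mathcal{H})$ total for the overlap term. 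Second, for the bag term: the constraint $\sum_{R_i\in\mathcal{B}_\ell}y_i\le 1$ gives $\sum_{R_j\in\mathcal{H}}y_j\sum_{R_k\in\mathcal{H}\cap\mathcal{B}(R_j)\setminus\{R_j\}}y_k \le \sum_{R_j\in\mathcal{H}}y_j\cdot 1 = \mathcal{E}(\mathcal{H})$. Adding the two contributions gives $4\,\mathcal{E}(\mathcal{H})+\mathcal{E}(\mathcal{H})=5\,\mathcal{E}(\mathcal{H})$, as desired.

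\textbf{Main obstacle.} The delicate point is getting the constant exactly $5$ rather than something larger: one must avoid double-counting in the overlap term and exploit the structure of the point set $\mathcal{P}$ — specifically that for non-crossing overlapping rectangles there is a \emph{corner} witness, so the four LP packing inequalities at the four corners of each $R_j$ suffice to charge the whole overlap contribution to $4\,\mathcal{E}(\mathcal{H})$, with the orientation of the witness (whose corner lies in whose) chosen consistently so that each contributing pair is charged once. I would also need to double-check the edge case where $R_j\otimes R_k$ \emph{and} $R_j,R_k$ share a bag, making sure such a pair is counted only in one of the two sums (the definition of $\eta$ already separates $\mathcal{K}\setminus\mathcal{B}(R_i)$ from $\mathcal{K}\cap\mathcal{B}(R_i)$, so this is automatic). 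Once the averaging inequality $\sum_{R_j\in\mathcal{H}}y_j\eta(R_j,\mathcal{H})\le 5\,\mathcal{E}(\mathcal{H})$ is established, the conclusion is immediate from the greedy choice of $\pi_i$ as the minimizer and the trivial observation that the minimum of a nonnegative quantity is at most its $y$-weighted mean. This reasoning is uniform in $i$, so it proves the bound for all $i$ simultaneously.
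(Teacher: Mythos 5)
Your proposal is correct and follows essentially the same route as the paper: bound the $y$-weighted average of the resistances over the remaining set $\mathcal{H}$ by $5\,\mathcal{E}(\mathcal{H})$, splitting into the overlap term (charged to corner-containment witnesses and the LP packing constraints at the four corners of each rectangle, i.e.\ Lemma~\ref{lem:sparsity}, giving $4\,\mathcal{E}(\mathcal{H})$) and the bag term (giving $\mathcal{E}(\mathcal{H})$), then conclude since the greedy rule picks the minimizer. The one point to tighten is your ``consistent orientation'' accounting: the clean fix (and the paper's) is that every overlapping non-crossing pair yields \emph{at least two} triples in $\mathcal{C}(\mathcal{H})$ (four, two, or one-plus-one corner containments depending on the configuration), so the ordered overlap sum is already dominated by $\sum_{(p,a,b)\in\mathcal{C}(\mathcal{H})} y_a y_b \le 4\,\mathcal{E}(\mathcal{H})$ — whereas assigning each ordered pair a witness of the form ``corner of the first inside the second'' would fail when only one rectangle of the pair has corners inside the other.
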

\begin{proof}
	Fix an $i$, and let $\mathcal{K}=\mathcal{R} \setminus \left\{ \pi_1,\dots,\pi_{i-1} \right\}$. We have
	\begin{align*}
	\sum_{R_k \in \mathcal{K}} y_k \eta(R_k, \mathcal{K}) &= \sum_{R_k \in \mathcal{K}} y_k \sum_{\substack{R_j \in \mathcal{K} \setminus \mathcal{B}(R_k),\\ R_k \otimes R_j}} y_j + \sum_{R_k \in \mathcal{K}} y_k \sum_{R_j \in \mathcal{K} \cap \mathcal{B}(R_k) \setminus \{R_k\}} y_j\\
	&\leq \sum_{R_k \in \mathcal{K}} y_k \sum_{\substack{R_j \in \mathcal{K},\\ R_k \otimes R_j}} y_j + \sum_{R_k \in \mathcal{K}} y_k \sum_{R_j \in  \mathcal{B}(R_k)} y_j \\
	\end{align*}
	We first prove:
	\[
	\sum_{R_k \in \mathcal{K}} y_k \sum_{\substack{R_j \in \mathcal{K}, \\ R_k \otimes R_j}} y_j \leq \sum_{(p, i, j) \in \mathcal{C}(\mathcal{K})} y_i y_j
	\]
	Then, by Lemma~\ref{lem:sparsity}, the right hand side is at most $4\mathcal{E}(\mathcal{H})$.
	Note that, for each pair of distinct rectangles $R_i, R_j \in \mathcal{K}$ such that $R_i \otimes R_j$, the term $y_i y_j$ appears in the left hand side exactly twice. Now we prove the above inequality by showing that the term $y_i y_j$ appears at least twice in the right hand side. If we fix any pair $R_i, R_j \in \mathcal{K}$, there are three cases (see Figure~\ref{fig:corner-intersections}):
	\begin{figure}
		\centering
		\includegraphics[width=8cm]{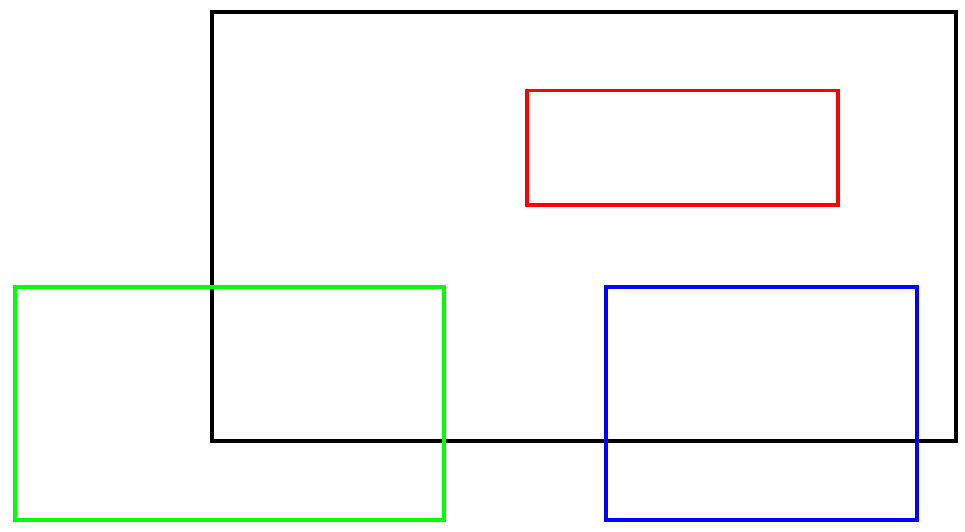}
		\caption{Three types of rectangles that overlap with the black rectangle, without creating a crossing intersection. The black rectangle contains all the 4 corners of the red rectangles, and exactly two corners of the blue rectangle. The green rectangle contains one corner of the black rectangle which, in turn, contains one corner of the green rerctangle.}
		\label{fig:corner-intersections}
	\end{figure}
	\begin{enumerate}
		\item one rectangle, say $R_i$, contains all the 4 corners of the other: then the term $y_i y_j$ appears four times in the right hand side.
		\item one rectangle, say $R_i$, contains exactly 2 corners of the other: then the term $y_i y_j$ appears twice in the right hand side.
		\item each rectangle contains exactly one corner of the other: then again the term $y_i y_j$ appears twice in the right hand side.
	\end{enumerate}
	Moreover, because of the constraints of bagMWISR, it follows that:
	\[
	\sum_{R_k \in \mathcal{K}} y_k \sum_{R_j \in  \mathcal{B}(R_k)} y_j \leq \sum_{R_k \in \mathcal{K}} y_k = \mathcal{E}(\mathcal{K})
	\]
	Thus, we have proved:
	\begin{align*}
	\sum_{R_k \in \mathcal{K}} y_k \eta(R_k, \mathcal{K}) &\leq 4\mathcal{E}(\mathcal{K}) + \mathcal{E}(\mathcal{K}) = 5\mathcal{E}(\mathcal{K}) \\
	&\Longrightarrow \sum_{R_k \in \mathcal{K}} \frac{y_j}{\mathcal{E}(\mathcal{K})} \eta(R_k,\mathcal{K})\leq 5.
	\end{align*}
	Since the left hand side of the last inequality is a convex combination of the elements of the set $\left\{\eta(R_k,\mathcal{K}) \enspace|\enspace R_k \in \mathcal{K} \right\}$, it follows that $\displaystyle{\min_{R_k \, \in \mathcal{K}} \{\eta(R_k,\mathcal{K})\} \leq 5}$. This concludes the proof, because the algorithm always chooses the rectangle that minimizes the resistance.
\end{proof}

\begin{lemma}
	\label{lem:independence1}
	Setting $\tau = 10$, then $\E[w(S)] \geq Opt/20$.
\end{lemma}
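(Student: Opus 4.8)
\textbf{Proof plan for Lemma~\ref{lem:independence1}.}
The plan is to bound the expected weight $\E[w(S)]$ from below by showing that each rectangle $R_i$ ends up in $S$ with probability at least $y_i/(2\tau)$ (so that $\E[w(S)]\geq \frac{1}{2\tau}\sum_i w_i y_i = opt/(2\tau) = opt/20$ when $\tau=10$). First I would fix a rectangle $\pi_i$ and condition on the random choices made for $\pi_{i+1},\ldots,\pi_m$ (recall the scan proceeds in reverse order of $\Pi$, so by the time $\pi_i$ is considered, all of $\pi_{i+1},\ldots,\pi_m$ have already been decided). The key observation is that $\pi_i$ is added to the candidate set $C$ independently with probability $y(\pi_i)/\tau$, and then it joins $S$ unless it conflicts in $G_1$ with some already-selected member of $S$. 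By construction of $G_1$, a conflict can only arise with a rectangle $\pi_j$, $j>i$, that is either in the same bag as $\pi_i$ or overlaps $\pi_i$ without crossing — i.e., exactly the rectangles contributing to the resistance $\eta(\pi_i,\mathcal{R}\setminus\Pi_{i-1})=\eta_i$. Crucially, the ordering $\Pi$ was chosen precisely so that, at the step where $\pi_i$ is fixed, the residual set is $\mathcal{R}\setminus\Pi_{i-1}$ and all potential conflict partners of $\pi_i$ lie in this residual set; hence Lemma~\ref{lem:lowresistance} applies and gives $\eta_i\leq 5$.

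Next I would estimate the probability that $\pi_i$ fails to enter $S$ given it entered $C$. Each conflict partner $\pi_j$ (with $j>i$) is in $S$ only if it was first put into $C$, which happens with probability at most $y(\pi_j)/\tau$. By a union bound over all conflict partners, the probability that $\pi_i$ is blocked is at most $\sum_{j} y(\pi_j)/\tau \leq \eta_i/\tau \leq 5/\tau$. With $\tau = 10$ this is at most $1/2$. Therefore
\[
\Pr[\pi_i \in S] \;\geq\; \frac{y(\pi_i)}{\tau}\left(1 - \frac{5}{\tau}\right) \;=\; \frac{y(\pi_i)}{10}\cdot\frac12 \;=\; \frac{y(\pi_i)}{20}.
\]
Summing over all rectangles and using linearity of expectation,
\[
\E[w(S)] \;=\; \sum_{R_i\in\mathcal{R}} w_i \Pr[R_i\in S] \;\geq\; \frac{1}{20}\sum_{R_i\in\mathcal{R}} w_i y_i \;=\; \frac{opt}{20} \;\geq\; \frac{Opt}{20}.
\]

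The main obstacle I anticipate is making the independence/conditioning argument fully rigorous: one must be careful that the event ``$\pi_i$ enters $C$'' is independent of the events ``$\pi_j$ enters $S$'' for $j>i$, and that conditioning on the outcome for later-scanned rectangles does not disturb the bound $\eta_i\leq5$. The resolution is that the resistance $\eta_i$ is a deterministic quantity — it depends only on the permutation $\Pi$ and the fractional solution $\mathbf{y}$, not on the random coin flips — so Lemma~\ref{lem:lowresistance} bounds it unconditionally; and the coin flip for $\pi_i$'s membership in $C$ is by construction drawn fresh and independently of all others. A secondary subtlety is that the union bound must range over the correct set of blocking rectangles: I would verify that $\pi_i$ is blocked from $S$ \emph{only} by members of $S$ already chosen among $\pi_{i+1},\ldots,\pi_m$ that are $G_1$-neighbors of $\pi_i$, and that the total $\mathbf{y}$-mass of these is at most $\eta_i$ by the very definition of resistance (the same-bag term and the $\otimes$-term together). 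Once these two points are pinned down, the calculation above is immediate.
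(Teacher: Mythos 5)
Your proposal is correct and follows essentially the same route as the paper: condition on membership in the candidate set $C$, observe that the total $\mathbf{y}$-mass of potential $G_1$-blockers among the already-scanned rectangles is exactly the resistance $\eta_i \leq 5$ (Lemma~\ref{lem:lowresistance}), deduce $\Pr[\pi_i \in S \mid \pi_i \in C] \geq 1 - 5/\tau \geq 1/2$, and conclude by linearity of expectation. The only cosmetic difference is that you use a union bound where the paper writes $\prod_{R_i\in\mathcal{K}}\bigl(1-\tfrac{y_i}{\tau}\bigr) \geq 1 - \sum_{R_i\in\mathcal{K}}\tfrac{y_i}{\tau}$, which is the same estimate.
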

\begin{proof}
	Let $j' = \pi(j)$. Observe that the $R_{j'}$, is added to the candidate set $C$ with probability $x_{j'}/\tau$. Let $\mathcal{K}$ be the set of rectangles $R_i$ that were already considered and such that $R_{j'} \otimes R_i$ or $R_i$ is in the same bag as $R_{j'}$. Clearly, $\mathcal{E}(\mathcal{K})$ equals the resistance $\eta_{j}$ of $R_{j'}$ in the moment it is added to the permutation. Thus:
	\[
	\Pr[R_{j'} \in I \,|\, R_{j'} \in C] = \prod_{R_i \in \mathcal{K}} \left(1 - \frac{y_i}{\tau}\right) \geq 1 - \sum_{R_i \in \mathcal{K}} \frac{y_i}{\tau} = 1 - \frac{\mathcal{E}(\mathcal{K})}{\tau} \geq \frac{1}{2}
	\]
	because of Lemma~\ref{lem:lowresistance}. Thus, we have:
	
	\[
	\Pr[R_{j'} \in I] = \Pr[R_{j'} \in S \,|\, R_{j'} \in C]\cdot\Pr[R_{j'} \in C] \geq \frac{y_{j'}}{2\tau}
	\]
	
	\noindent Thus, we have:
	
	\[
	\E[S'] = \sum_{\Pr[R_{j'} \in I]}w_{j'} \geq \sum_{\Pr[R_{j'} \in I]} \frac{y_{j'} w_{j'}}{2\tau} = \frac{Opt}{20}
	\]
\end{proof}

\begin{lemma} 
	\label{lem:independence2}
	With probability at least $1-1/n$, the maximum depth of the rectangles in $I$ is $\displaystyle \Delta = O\left(\frac{\log n}{\log \log n}\right)$.
\end{lemma}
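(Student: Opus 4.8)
The plan is to bound the maximum clique size $\Delta$ of $G_2[S]$ (equivalently the maximum depth of a point with respect to the rectangles in $S$, since a set of pairwise-crossing rectangles all share a common point) by a standard Chernoff-plus-union-bound argument over the fractionally sparse structure guaranteed by the LP. First I would fix a point $p\in\mathcal P$ and let $\mathcal R_p=\{R_i: p\in R_i\}$ be the rectangles covering $p$; by the covering constraint of LP$_{\textup{bagMWISR}}$ we have $\sum_{R_i\in\mathcal R_p} y_i\le 1$. Each $R_i\in\mathcal R_p$ enters the candidate set $C$ independently with probability $y_i/\tau$, so the expected number of rectangles of $\mathcal R_p$ in $C$ is at most $1/\tau\le 1$. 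Since $S\subseteq C$, the depth of $p$ in $S$ is stochastically dominated by a sum of independent Bernoulli variables with mean $\le 1$.

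Next I would apply a Chernoff bound for sums of independent $0/1$ random variables: for a threshold $t\ge 1$, the probability that at least $t$ of the rectangles in $\mathcal R_p$ are selected into $C$ is at most $e^{t}/t^{t}\le (e/t)^t$ (using $\Pr[X\ge t]\le \binom{N}{t}(y/\tau)^t$ where $\sum y_i/\tau\le 1$, bounded via the elementary inequality $\binom{N}{t}\prod \le \frac{1}{t!}(\sum y_i/\tau)^t\le \frac{1}{t!}$). Choosing $t=\Theta(\log n/\log\log n)$ makes $(e/t)^t\le n^{-3}$, say. Then I would union-bound over all $|\mathcal P|=O(n^2)$ points of the grid: the probability that \emph{some} point has depth exceeding $t$ in $S$ is at most $O(n^2)\cdot n^{-3}=O(1/n)$. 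Since any clique in $G_2[S]$ corresponds to a family of pairwise-crossing rectangles which (by the Helly-type property of axis-parallel rectangles, or simply because two crossing rectangles intersect in a cell of the grid whose center lies in $\mathcal P$ inside all of them) share a common grid point, the maximum clique size $\Delta$ is at most this maximal depth. Hence with probability $\ge 1-1/n$, $\Delta=O(\log n/\log\log n)$.

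To finish the overall $O(\log n/\log\log n)$-approximation claim one would combine this with Lemma~\ref{lem:independence1} (which gives $\E[w(S)]\ge Opt/20$) and Lemma~\ref{lem:color_crossing_graph} (which extracts $S'\subseteq S$ with $w(S')\ge w(S)/\Delta$), taking care of the low-probability bad event by noting the algorithm can be made to output $\emptyset$ when the depth is too large, losing only a $(1-1/n)$ factor; the expected profit of $S'$ is then $\Omega\!\left(\frac{\log\log n}{\log n}\right)Opt$, and $Opt\ge OPT$ gives the approximation guarantee. For bagUFP one additionally invokes Lemma~\ref{lem:color} (the factor-$4$ loss from the reduction to MWISR) and Lemma~\ref{lem:ufp_partition} together with Lemma~\ref{lem:ufp_small_apx}.

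The main obstacle I anticipate is making the clique-to-depth reduction fully rigorous: I must argue that every set of pairwise-crossing rectangles in $S$ has a common point lying in $\mathcal P$, so that the depth bound at grid points actually controls $\Delta$. Two crossing rectangles $R_i,R_j$ overlap in a rectangular region that is a union of grid cells, and the center of any such cell lies in both; but for a larger mutually-crossing family one needs that the common intersection is nonempty (this holds because axis-parallel boxes have Helly number $2$: pairwise intersection implies the projections onto each axis pairwise intersect, hence intersect, hence the boxes do) and that this common region contains a grid-cell center (true because its corners are at grid coordinates and it has positive area). Getting the constant in $t=\Theta(\log n/\log\log n)$ right in the Chernoff estimate so that the union bound over $O(n^2)$ points yields failure probability $\le 1/n$ is routine once the reduction is in place.
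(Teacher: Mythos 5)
Your proof is correct and follows essentially the same route as the paper: fix a grid point $p\in\mathcal P$, note the LP constraint makes the expected number of selected rectangles covering $p$ at most $1/\tau\le 1$, apply the Chernoff tail $(e/t)^t$ with $t=\Theta(\log n/\log\log n)$, and union-bound over the $O(n^2)$ points of $\mathcal P$. Your additional remarks---the explicit stochastic domination via $S\subseteq C$ (the paper directly treats membership in $I$ as independent, which is only valid through this domination) and the Helly-type argument tying the maximum clique of $G_2[S]$ to the depth at grid points---are correct refinements of steps the paper leaves implicit, not a different approach.
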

\begin{proof}
	Fix a point $p \in \mathcal{P}$. The number $\operatorname{depth}(p, I)$ of rectangles in $I$ containing $p$ is a sum of independent 0--1 random variables with mean $\mu = \sum_{R_i \ni p} \frac{y_u}{\tau} \leq 1$. Thus, by Chernoff bound:
	\[
	\Pr[\operatorname{depth}(p, I) > (1 + \delta)\mu] < {\left(\frac{e^\delta}{{(1+\delta)}^{1+\delta}} \right)}^\mu
	\]
	for any $\delta > 0$. By choosing $\delta$ such that $t = (1 + \delta)\mu$, it holds that:
	\[
	\Pr[\operatorname{depth}(p, I) > t] < {(e/t)}^t
	\]
	Since $|\mathcal{P}| = O(n^2)$ and $\Delta = \max_{p \in \mathcal{P}} \operatorname{depth}(p, I)$, by the union bound it follows that $\Pr[\Delta > t] = O({(e/t)}^t n^2)$, which is at most $1/n$ for $t = \Theta(\log n / \log \log n)$.
\end{proof}

\begin{lemma}\label{lem:bagMWISR}
	There is an expected $O\left(\dfrac{\log n}{\log \log n}\right)$-approximation for bagMWISR.
\end{lemma}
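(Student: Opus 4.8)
The plan is to assemble the pieces already developed in this section into a single randomized rounding algorithm and bound its approximation ratio. First I would solve the linear program LP$_{\textup{bagMWISR}}$ to obtain an optimal fractional solution $\mathbf{y}$ of value $opt \geq OPT$. Then I would run the algorithm described above: compute the resistance-minimizing permutation $\Pi$ of $\mathcal{R}$, scan it in reverse, and independently add each rectangle $\pi_{m-(k-1)}$ to the candidate set $C$ with probability $y_{\pi_{m-(k-1)}}/\tau$ with $\tau = 10$, keeping it in $S$ only if $S$ remains independent in $G_1$. Finally, invoke Lemma~\ref{lem:color_crossing_graph} on $S$ to extract an independent set $S'$ in $G_2[S]$ with $p(S') \geq p(S)/\Delta$, where $\Delta$ is the maximum clique size of $G_2[S]$.

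The analysis combines Lemmas~\ref{lem:independence1} and~\ref{lem:independence2}. By Lemma~\ref{lem:independence1}, $\E[w(S)] \geq opt/20 \geq OPT/20$. By Lemma~\ref{lem:independence2}, with probability at least $1 - 1/n$ the maximum depth of the rectangles in $S$ (hence also the maximum clique size $\Delta$ of $G_2[S]$, since any clique in the crossing graph forces its rectangles to share a common point) is $O(\log n / \log\log n)$. On the good event, the colouring step loses only a factor $\Delta = O(\log n/\log\log n)$, so $w(S') \geq w(S)/\Delta$. I would then argue that conditioning on this high-probability event changes the expectation of $w(S)$ by at most a negligible amount: either by noting that in the unlikely complementary event we can fall back on the trivial bound $w(S') \geq 0$ and that $\Pr[\text{bad event}] \le 1/n$ while the ratio between $OPT$ and the best single rectangle is polynomially bounded, or more cleanly by truncating $\Delta$ at the threshold $t = \Theta(\log n/\log\log n)$ and observing that when the depth exceeds $t$ one can still select an independent set of the top $t$ "layers" (greedily via the thinnest-rectangle argument of Lemma~\ref{lem:color_crossing_graph}), so that $w(S')\ge w(S)/t$ holds \emph{deterministically} for this truncated value of $t$. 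Either way, $\E[w(S')] \geq \E[w(S)]/O(\log n/\log\log n) \geq OPT/O(\log n/\log\log n)$, giving the claimed expected $O(\log n/\log\log n)$-approximation; the running time is polynomial since the LP has $O(n)$ variables and $O(n^2)$ constraints, the permutation and rounding steps are clearly polynomial, and Lemma~\ref{lem:color_crossing_graph} runs in polynomial time.

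The main obstacle is handling the conditioning correctly: Lemma~\ref{lem:independence1} bounds $\E[w(S)]$ unconditionally, while Lemma~\ref{lem:independence2} only controls $\Delta$ on a high-probability event, and $w(S)$ and $\Delta$ are not independent. The cleanest fix, which I would adopt, is the deterministic truncation mentioned above: define $\Delta' := \min\{\Delta, t\}$ for $t = \Theta(\log n/\log\log n)$, and observe that the thinnest-rectangle peeling of Lemma~\ref{lem:color_crossing_graph} always extracts at least $t$ disjoint independent sets from the first $t$ peeling rounds regardless of the true depth, so the subset $S'$ with $w(S')\ge w(S)/t$ exists for \emph{every} outcome. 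This severs the dependence entirely and yields $\E[w(S')]\ge \E[w(S)]/t \ge OPT/(20t) = OPT/O(\log n/\log\log n)$ directly by linearity of expectation, completing the proof of Lemma~\ref{lem:bagMWISR}.
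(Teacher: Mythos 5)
Your overall route is the same as the paper's: combine Lemma~\ref{lem:independence1} ($\E[w(S)]\geq opt/20$), Lemma~\ref{lem:independence2} (depth, hence the clique number of $G_2[S]$ via the Helly-type argument for pairwise-crossing rectangles, is $O(\log n/\log\log n)$ with high probability), and the peeling/colouring of Lemma~\ref{lem:color_crossing_graph} to extract $S'$ with $w(S')\geq w(S)/\Delta$. In fact the paper's own proof is just the two-line multiplication of these bounds, so you are being \emph{more} careful than the source in flagging the correlation between $w(S)$ and $\Delta$. That said, the fix you declare "cleanest" and adopt is incorrect as stated: if the true depth satisfies $\Delta>t$, then the first $t$ peeling rounds of Lemma~\ref{lem:color_crossing_graph} produce $t$ disjoint independent sets whose \emph{union is only part of} $S$, so the best of them is at least (weight of that union)$/t$, not $w(S)/t$; the claim "$w(S')\geq w(S)/t$ holds deterministically" fails exactly on the bad event you are trying to neutralize (e.g.\ $n$ rectangles pairwise crossing through one point give $w(S')\approx w(S)/n$).

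Your first fallback is the right repair, but it needs one more quantitative ingredient: with only $\Pr[\Delta>t]\leq 1/n$ from Lemma~\ref{lem:independence2} and the trivial bound $w(S)\leq w(\mathcal{R})\leq n\cdot OPT$, the error term $\E[w(S)\mathbf{1}_{\{\Delta>t\}}]$ can be as large as $OPT$ itself, which wipes out the $OPT/20$ main term. The remedy is to observe that the Chernoff computation in the proof of Lemma~\ref{lem:independence2} gives $\Pr[\Delta>t]=O((e/t)^t n^2)$, so by increasing the constant in $t=\Theta(\log n/\log\log n)$ you get failure probability at most $1/n^2$ (or any polynomial), whence $\E[w(S)\mathbf{1}_{\{\Delta>t\}}]\leq OPT/n$ and
\[
\E[w(S')]\;\geq\;\frac{1}{t}\Bigl(\E[w(S)]-\E[w(S)\mathbf{1}_{\{\Delta>t\}}]\Bigr)\;\geq\;\frac{OPT}{O(t)} ,
\]
which is the clean way to finish. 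With that adjustment your proof is complete and matches the paper's intent while actually closing the conditioning gap the paper leaves implicit.
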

\begin{proof}
	Clearly, we have that $\sum_{R_i \in S'} w_i \geq \frac{1}{\Delta} \sum_{R_i \in S} w_i$. Thus:
	\begin{align*}
	E\left[\sum_{R_i \in S'} w_i\right] &\geq \left(1 - \frac{1}{n}\right)\Omega\left(\frac{\log \log n}{\log n}\right)E\left[\sum_{R_i \in S} w_i\right] 
	&\geq \Omega\left(\frac{\log \log n}{\log n}\right)Opt
	\end{align*}
\end{proof}

\noindent We finally obtain the following:
\begin{theorem}\label{thm:bagUFP}
	There is an expected $O\left(\dfrac{\log n}{\log \log n}\right)$-approximation for bagUFP.
\end{theorem}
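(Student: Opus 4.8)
The plan is to combine the reduction from \textbf{bagUFP} to \textbf{bagMWISR} (together with the small/large decomposition) with the approximation algorithm for \textbf{bagMWISR} from Lemma~\ref{lem:bagMWISR}. First I would invoke Lemma~\ref{lem:ufp_partition}: it suffices to exhibit, for the given instance of \textbf{bagUFP}, an expected $O(\log n/\log\log n)$-approximation for the restriction to small tasks and one for the restriction to large tasks; their sum is still $O(\log n/\log\log n)$. The small-task case is handled directly by the $9$-approximation of Lemma~\ref{lem:ufp_small_apx}, which is an absolute constant and hence certainly $O(\log n/\log\log n)$.

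The remaining work is the large-task case. Here I would use the \emph{top-drawn} construction described before Lemma~\ref{lem:color}: to each large task $i$ associate the rectangle $R_i$ with top-left corner $(s_i,b_i)$ and bottom-right corner $(t_i,b_i-d_i)$, and keep the weights $w_i$ and the bag partition $\{\mathcal{B}_j\}$ unchanged. This produces a \textbf{bagMWISR} instance on $n$ rectangles. Two things must be checked. (i) Feasibility transfers: any feasible solution of the \textbf{bagMWISR} instance — a set of pairwise non-overlapping rectangles respecting the bag constraints — corresponds to a feasible subset of large tasks, since non-overlap of $R_i$ and $R_j$ on every vertical line through a common edge forces $d_{e}(T')\le u_e$ on each edge $e$, and the bag constraint is preserved verbatim. (ii) Conversely, Lemma~\ref{lem:color} shows that any feasible UFP solution $S$ consisting of large tasks splits into $4$ independent sets of rectangles $S_1,\dots,S_4$; since the bag constraint is automatically inherited by each $S_k\subseteq S$, the most profitable $S_k$ is a feasible \textbf{bagMWISR} solution of weight at least $w(S)/4$. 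Hence $\mathrm{OPT}_{\textup{bagMWISR}}\ge \mathrm{OPT}_{\textup{large}}/4$.

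Putting these together: run the algorithm of Lemma~\ref{lem:bagMWISR} on the top-drawn instance to obtain in expectation a solution of weight $\Omega(\log\log n/\log n)\cdot \mathrm{OPT}_{\textup{bagMWISR}} \ge \Omega(\log\log n/\log n)\cdot \mathrm{OPT}_{\textup{large}}/4$, which by feasibility transfer is a feasible set of large tasks; this is the required expected $O(\log n/\log\log n)$-approximation for the large-task restriction. Combined with Lemma~\ref{lem:ufp_small_apx} for the small tasks and Lemma~\ref{lem:ufp_partition} to glue the two, we obtain the stated expected $O(\log n/\log\log n)$-approximation for \textbf{bagUFP}. I would also note that the polynomiality is immediate: the LP $\mathrm{LP}_{\textup{bagMWISR}}$ has $O(n)$ variables and $O(n^2)$ constraints and is solved in polynomial time, the permutation and randomized rounding run in polynomial time, and Lemma~\ref{lem:color_crossing_graph}'s colouring is polynomial.

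The only genuinely delicate point — and the one I would spend care on — is the feasibility-transfer direction (i): one must verify that the top-drawn rectangles being non-overlapping really does encode the edge-capacity constraints, using crucially that the rectangle of task $i$ is anchored at height $b_i$ (its bottleneck capacity) rather than at the true local capacity. The argument is that on any edge $e\in P_i$ the rectangle $R_i$ occupies the vertical interval $[b_i-d_i,b_i]\subseteq[u_e-d_i,u_e]$ after a vertical shift bounded by $u_e-b_i\ge 0$; disjointness of the occupied sub-intervals of $[0,u_e]$ for selected tasks through $e$ then yields $\sum_{i\in T'\cap T_e} d_i\le u_e$. Everything else is bookkeeping, so once this is pinned down the theorem follows.
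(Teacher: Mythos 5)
Your proposal is correct and follows essentially the same route as the paper: partition into small and large tasks via Lemma~\ref{lem:ufp_partition}, handle small tasks with the constant-factor guarantee of Lemma~\ref{lem:ufp_small_apx}, and handle large tasks by the top-drawn reduction to bagMWISR, losing a factor $4$ via Lemma~\ref{lem:color} and then applying Lemma~\ref{lem:bagMWISR}; the paper's proof is exactly this combination, stated in one line. One cosmetic remark: in your feasibility-transfer argument the claimed containment $[b_i-d_i,b_i]\subseteq[u_e-d_i,u_e]$ is not what you need (and is false unless $b_i=u_e$); the correct and sufficient observation is simply that $[b_i-d_i,b_i]\subseteq[0,u_e]$ since $b_i\le u_e$, so the pairwise-disjoint vertical intervals over edge $e$ have total length $\sum d_i\le u_e$.
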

\begin{proof}
	The result follows immediately by combining Lemmas~\ref{lem:ufp_partition}, \ref{lem:ufp_small_apx} and \ref{lem:bagMWISR}.
\end{proof}

\section{A $O(1)$-approximation for the unweighted case}\label{sec:unweightedBagUFP}

In this section we present a $O(1)$-approximation for bagUFP with large tasks, assuming uniform profits. This analysis extends the framework of \cite{agla13}, which only considered the UFP problem without handling bag constraints.

For any horizontal coordinate $x \in [0, m)$, we define for simplicity:
\[
c(x) := c_{(\left\lfloor x\right\rfloor, \left\lfloor x\right\rfloor + 1)}.
\]

We start with an arbitrary maximal feasible set of rectangles $\mathcal{R}_\textup{M} \subseteq \mathcal{R}$.

Let $\mathcal{R}_\textup{bag}$ be the set of all rectangles in $\mathcal{R}$ that belongs to some bag $J \ni R_i$ for each $R_i \in \mathcal{R}_\textup{M}$. Notice that $\mathcal{R}_\textup{bag} \supseteq  \mathcal{R}_\textup{M}$.
\begin{lemma} \label{lem:bag}
	We have that $\displaystyle \sum_{R_j \in \mathcal{R}_\textup{bag}} y_j \leq |\mathcal{R}_\textup{M}|$.
\end{lemma}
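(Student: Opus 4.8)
\textbf{Proof proposal for Lemma~\ref{lem:bag}.} The plan is to use the second family of constraints in LP$_{\textup{bagMWISR}}$ (equivalently LP$_{\textup{bagUFP}}$), namely the \emph{bag constraints} $\sum_{R_i \in \mathcal{B}_j} y_i \leq 1$, one per bag. First I would set up the bookkeeping: for each rectangle $R_i \in \mathcal{R}_\textup{M}$ let $\mathcal{B}(R_i)$ denote the (unique) bag containing $R_i$. Then by definition $\mathcal{R}_\textup{bag} = \bigcup_{R_i \in \mathcal{R}_\textup{M}} \mathcal{B}(R_i)$. The key observation is that the bags appearing in this union are the bags met by the maximal solution $\mathcal{R}_\textup{M}$, and since $\mathcal{R}_\textup{M}$ is itself a feasible bagMWISR solution it contains at most one rectangle from each bag; hence the map $R_i \mapsto \mathcal{B}(R_i)$ is injective on $\mathcal{R}_\textup{M}$, and the number of distinct bags occurring is \emph{exactly} $|\mathcal{R}_\textup{M}|$.

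Next I would carry out the counting. Write $\mathcal{J}_\textup{M} = \{\mathcal{B}(R_i) : R_i \in \mathcal{R}_\textup{M}\}$ for the set of these bags, so $|\mathcal{J}_\textup{M}| = |\mathcal{R}_\textup{M}|$ by the injectivity just noted, and $\mathcal{R}_\textup{bag} = \bigcup_{\mathcal{B}_j \in \mathcal{J}_\textup{M}} \mathcal{B}_j$ is a disjoint union (distinct bags partition the task set, hence the rectangle set). Therefore
\[
  \sum_{R_j \in \mathcal{R}_\textup{bag}} y_j
  = \sum_{\mathcal{B}_j \in \mathcal{J}_\textup{M}} \;\sum_{R_j \in \mathcal{B}_j} y_j
  \leq \sum_{\mathcal{B}_j \in \mathcal{J}_\textup{M}} 1
  = |\mathcal{J}_\textup{M}|
  = |\mathcal{R}_\textup{M}|,
\]
where the inequality is precisely the bag constraint of LP$_{\textup{bagMWISR}}$ applied to each bag $\mathcal{B}_j \in \mathcal{J}_\textup{M}$, and nonnegativity of the $y_j$ ensures that restricting the inner sum to $\mathcal{B}_j \cap \mathcal{R}_\textup{bag} = \mathcal{B}_j$ loses nothing.

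I do not expect any real obstacle here; the statement is essentially an immediate consequence of LP feasibility plus the partition-into-bags structure. The only point requiring a little care is making sure the union over $R_i \in \mathcal{R}_\textup{M}$ of the bags is genuinely disjoint and genuinely has cardinality $|\mathcal{R}_\textup{M}|$ — this is where feasibility of $\mathcal{R}_\textup{M}$ (one rectangle per bag) is used — and noting that $\mathbf{y}$ being the optimal (hence feasible) LP solution is what licenses the bag inequalities. Once that is pinned down, the displayed chain of (in)equalities completes the argument.
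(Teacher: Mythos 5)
Your proof is correct and follows essentially the same route as the paper: sum the LP bag constraints over the bags met by $\mathcal{R}_\textup{M}$, using feasibility of $\mathcal{R}_\textup{M}$ to ensure these bags are distinct (hence exactly $|\mathcal{R}_\textup{M}|$ of them) and the partition-into-bags structure to account for all of $\mathcal{R}_\textup{bag}$. No gaps; your write-up is if anything slightly more explicit about the disjointness bookkeeping than the paper's.
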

\begin{proof}
	Consider the bag $\mathcal{B}_j$ of a rectangle $R_i \in \mathcal{R}_\textup{M}$. From the LP$_{\textup{bagMIWSR}}$ constraint for $\mathcal{B}_j$, we have:
	$\sum_{R_j \in \mathcal{B}_j} y_j \leq 1$,
	Since, $\mathcal{R}_\textup{M}$ is a feasible solution of bagUFP problem, so each $R_i \in \mathcal{R}_\textup{M}$ will belong to a distinct bag. Therefore, if we sum over all bags $\mathcal{B}_j$ such that $ \mathcal{B}_j \ni R_i: R_i \in \mathcal{R}_\textup{M}$, we get the inequality:
	\[ 
	\sum_{R_j \in \mathcal{R}_\textup{bag}} y_j = \sum_{{\mathcal{B}_j \ni R_i: R_i \in \mathcal{R}_\textup{M}}} \sum_{R_j \in \mathcal{B}_j} y_j \leq |\mathcal{R}_\textup{M}|.
	\]
\end{proof}

Let $\mathcal{Q}_\textup{corner}$ be the set of corner points of all the rectangles in $\mathcal{R}_\textup{M}$.
Let $\mathcal{R}_\textup{corner}$ be the set of all rectangles in $\mathcal{R} \setminus \mathcal{R}_\textup{bag}$ that intersects with at least one of the points in $\mathcal{Q}_\textup{corner}$.

\begin{lemma}
	\label{lem:corner}
	We have that $\sum_{R_j \in \mathcal{R}_{\textup{corner}}} y_j \leq 4|\mathcal{R}_\textup{M}|$.
\end{lemma}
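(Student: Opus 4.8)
\textbf{Proof plan for Lemma~\ref{lem:corner}.}

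The plan is to charge each rectangle in $\mathcal{R}_\textup{corner}$ to one of the at most $4|\mathcal{R}_\textup{M}|$ corner points in $\mathcal{Q}_\textup{corner}$ that it contains, and then bound, for a single fixed corner point $p$, the total LP-weight $\sum_{R_j \ni p} y_j$ of rectangles through $p$. Since $p \in \mathcal{P}$ (every corner of a rectangle in $\mathcal{R}$ is one of the grid points defining LP$_{\textup{bagMWISR}}$, and $\mathcal{R}_\textup{M} \subseteq \mathcal{R}$), the point constraint of LP$_{\textup{bagMWISR}}$ gives $\sum_{R_j \in \mathcal{R}:\, p \in R_j} y_j \le 1$ directly. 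Each rectangle $R_j \in \mathcal{R}_\textup{corner}$ intersects at least one point of $\mathcal{Q}_\textup{corner}$ by definition; assign it to one such point (say the leftmost-lowest one). Then
\[
\sum_{R_j \in \mathcal{R}_\textup{corner}} y_j \;\le\; \sum_{p \in \mathcal{Q}_\textup{corner}} \;\sum_{\substack{R_j \in \mathcal{R}_\textup{corner}\\ p \in R_j}} y_j \;\le\; \sum_{p \in \mathcal{Q}_\textup{corner}} 1 \;=\; |\mathcal{Q}_\textup{corner}| \;\le\; 4|\mathcal{R}_\textup{M}|,
\]
where the last inequality holds because each rectangle in $\mathcal{R}_\textup{M}$ contributes exactly four corner points (with multiplicity; distinct points only help).

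The one technical subtlety I would need to address is the word ``intersects'' in the definition of $\mathcal{R}_\textup{corner}$: if a rectangle merely touches a corner point $p$ on its boundary rather than containing it in its interior, it may not be counted by the LP point constraint at $p$ (which, as used elsewhere in the paper, effectively concerns rectangles whose closed region contains $p$). I would handle this exactly as in the surrounding development: the relevant notion throughout is the closed rectangle containing the point, and the LP constraint $\sum_{R_i \ni p} y_i \le 1$ is stated for ``$p \in R_i$'' with $R_i$ the closed rectangle, so a boundary touch still counts. If one wanted to be careful about double counting when $R_j$ passes through several points of $\mathcal{Q}_\textup{corner}$, the single-assignment step above already resolves it — each $R_j$ is charged to exactly one point, so the first inequality is valid.

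I expect no real obstacle here; this is a routine counting/charging argument of exactly the same flavor as Lemma~\ref{lem:bag}. The only thing worth double-checking is that the $y_j$ in the statement are the LP$_{\textup{bagMWISR}}$ values (so the point constraints apply) and that $\mathcal{R}_\textup{corner} \subseteq \mathcal{R}$, which is immediate from its definition as a subset of $\mathcal{R} \setminus \mathcal{R}_\textup{bag}$.
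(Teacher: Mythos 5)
Your proposal is correct and follows essentially the same argument as the paper: both charge the LP-weight of rectangles in $\mathcal{R}_\textup{corner}$ to the at most $4|\mathcal{R}_\textup{M}|$ corner points of rectangles in $\mathcal{R}_\textup{M}$ and invoke the point constraint $\sum_{R_j \ni p} y_j \le 1$ of LP$_{\textup{bagMWISR}}$ at each such point. The paper merely organizes the charging by corner type (top-left, top-right, bottom-left, bottom-right, each contributing at most $|\mathcal{R}_\textup{M}|$), whereas you sum over all corner points at once — a purely cosmetic difference.
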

\begin{proof}
	Consider the top-left corner $p_i=(s_i,b_i) \in \mathcal{P}$ of a rectangle $R_i \in \mathcal{R}_\textup{M}$. Then $\sum_{{ R_i \in \mathcal{R}_\textup{corner}: p_i \in R_j}} y_j \leq \sum_{{ R_j \in \mathcal{R}: p_i \in R_j}} y_j  \leq 1$, by LP$_\textup{bagMWISR}$ constraints. Hence if $\mathcal{R}_\textup{TL}$ is the set of all the rectangles in $\mathcal{R}_\textup{corner}$ that contain the top-left corner of some rectangle in $\mathcal{R}_\textup{M}$, we have:
	
	\[ 
	\sum_{R_j \in \mathcal{R}_\textup{TL}} y_j \leq \sum_{ R_i \in \mathcal{R}_\textup{M}} \sum_{\substack{ R_j \in \mathcal{R}_\textup{corner}: \\ p_i \in R_j}} y_j \leq \sum_{ R_i \in \mathcal{R}_\textup{M}}  1 \leq |\mathcal{R}_\textup{M}|.
	\]
	
	The result follows by summing up this inequality with the analogous one for the top-right, bottom-left and bottom-right corners.
\end{proof}

Let $\mathcal{R}'=(\mathcal{R} \setminus \mathcal{R}_\textup{M}) \setminus (\mathcal{R}_{\textup{bag}} \cup \mathcal{R}_{\textrm{corner}})$. Note that each rectangle $R_j \in \mathcal{R}'$ overlaps with some rectangle in $\mathcal{R}_\textup{M}$. As illustrated in Figure~\ref{fig:top-left-right-intersecting}, we now classify the rectangles in $\mathcal{R}'$ into three groups (not necessarily disjoint) as follows:

\begin{enumerate}
	\item \emph{Top-intersecting} ($\mathcal{R}_\textup{T}$): We call a rectangle $R_j \in \mathcal{R}'$ \emph{top}-intersecting if there exists a rectangle $R_i \in \mathcal{R}_\textup{M}$ such that $s_i < s_j < t_j < t_i$ and $b_j \geq b_i > l_j$.
	\item \emph{Left-intersecting} ($\mathcal{R}_\textup{L}$): We call a rectangle $R_j \in \mathcal{R}'$ \emph{left}-intersecting if there exists a rectangle $R_i \in \mathcal{R}_\textup{M}$ such that $b_i > b_j > l_j > l_i$ ; $s_j < s_i < t_j$ and $\exists x_B \in [s_j, s_i]$ such that $c(x_B) = b_j$.
	\item \emph{Right-intersecting} ($\mathcal{R}_\textup{R}$): We call a rectangle $R_j \in \mathcal{R}'$ \emph{right}-intersecting if there exists a rectangle $R_i \in \mathcal{R}_\textup{M}$ such that $b_i > b_j > l_j > l_i$ ; $s_j < t_i < t_j$ and $\exists x_B \in [t_i,t_j]$ such that $c(x_B) = b_j$.
\end{enumerate}
\noindent From now on we call a rectangle $R_j \in \mathcal{R}'$ to be \emph{top} (\emph{left}, \emph{right}, resp.) intersecting to a rectangle $R_i \in \mathcal{R}_\textup{M}$ if it satisfies the conditions in 1. (2., 3., resp.).

\begin{figure}
	\centering
	\includegraphics[width=10cm]{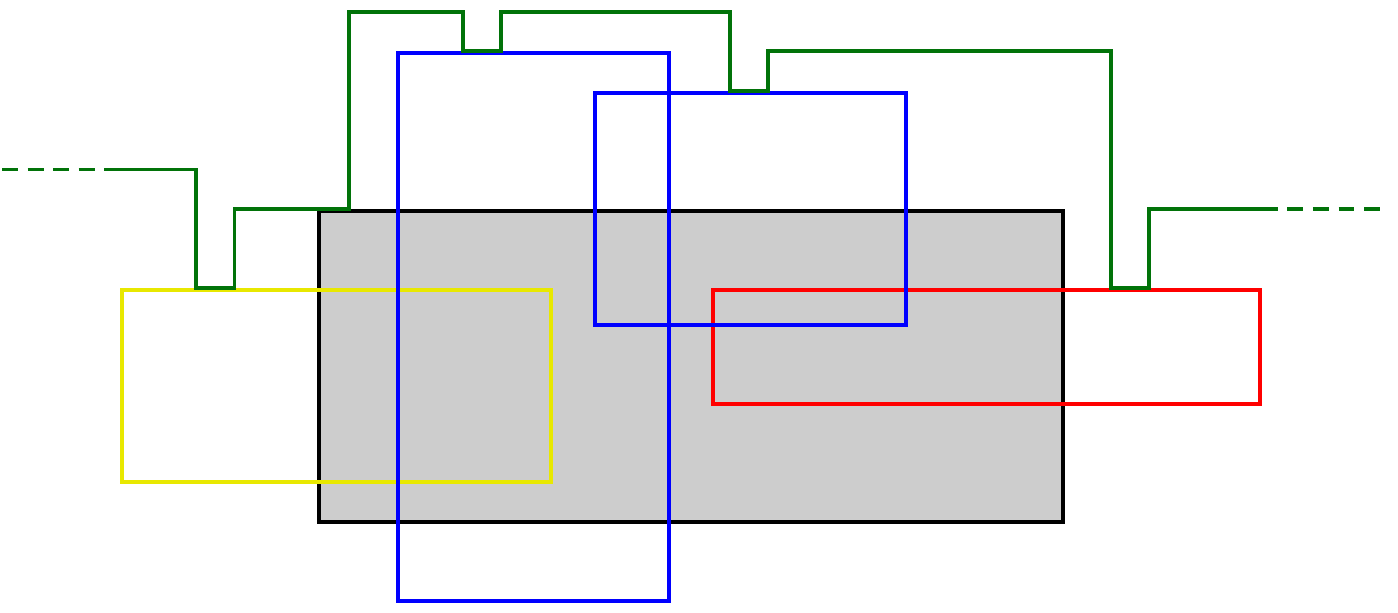}
	\caption{The black rectangle with gray background is in $\mathcal{R}_\textup{M}$. The blue ones are \emph{top-intersecting}; the yellow one is \emph{left-intersecting}; the red one is \emph{right-intersecting}. Note that a rectangle could simultaneously be left-intersecting to a rectangle and right-intersecting to another one.}
	\label{fig:top-left-right-intersecting}
\end{figure}

\begin{figure}
	\centering
	\includegraphics{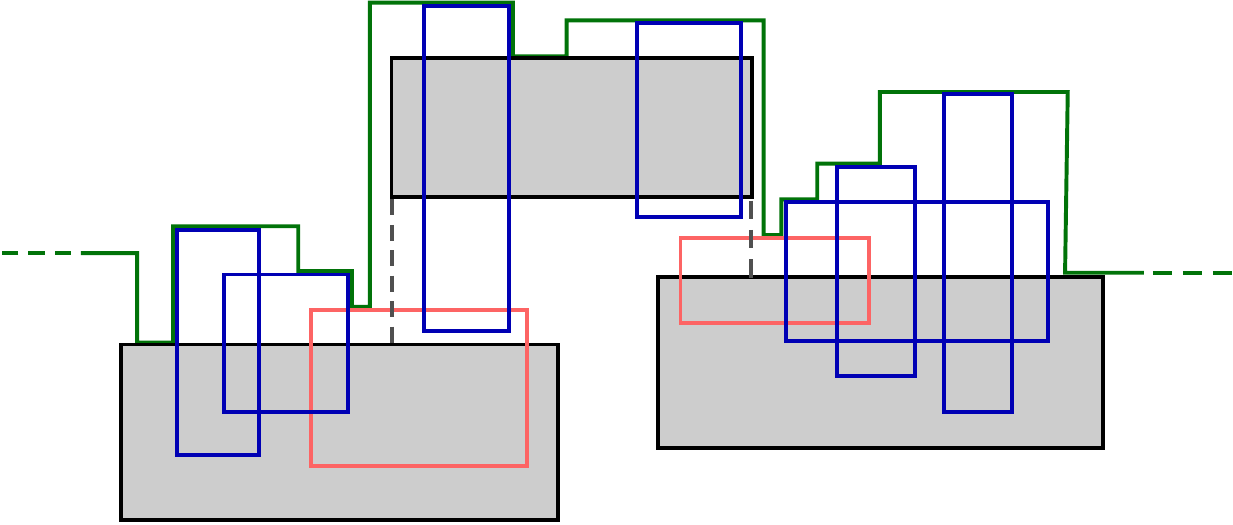}
	\caption{Illustration of some of the concepts for the $O(1)$-approximation for unweighted bagUFP. The green curve is the \textcolor[rgb]{0,0.5,0}{capacity profile}, and the rectangles are part of a \emph{top-drawn} instance of bagUFP. Black rectangles with gray background are in the initial maximal feasible solution $\mathcal{R}_\textup{M}$, while all the other rectangles are in $\mathcal{R}_\textup{T}$ (for simplicity, in this picture there are no rectangles in $\mathcal{R}_\textup{bag}$ and $\mathcal{R}_\textup{corner}$, as well as in $\mathcal{R}_\textup{L}$ and $\mathcal{R}_\textup{R}$). Red rectangles are in \textcolor[rgb]{0.8,0.3,0.3}{$\mathcal{R}^\textup{point}_\textup{top}$}, while the remaining blue rectangles are in \textcolor[rgb]{0,0,0.7}{$\mathcal{R}_{top}$}.}
	\label{fig:ind-rectangles}
\end{figure}

\begin{proposition}
	\label{lem:tlr}
	All the rectangles in $\mathcal{R}'$ are either top-intersecting, left-intersecting, or right-intersecting.
\end{proposition}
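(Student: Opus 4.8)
The plan is to prove Proposition~\ref{lem:tlr} by a case analysis on how a rectangle $R_j \in \mathcal{R}'$ overlaps with the maximal solution $\mathcal{R}_\textup{M}$. First I would record what $\mathcal{R}'$ excludes: by definition $R_j \notin \mathcal{R}_\textup{bag}$ (so $R_j$ is not in any bag hit by $\mathcal{R}_\textup{M}$), and $R_j \notin \mathcal{R}_\textup{corner}$ (so $R_j$ contains no corner of any rectangle of $\mathcal{R}_\textup{M}$). Since $\mathcal{R}_\textup{M}$ is \emph{maximal}, and $R_j$ is neither already in $\mathcal{R}_\textup{M}$ nor bag-conflicting with it, adding $R_j$ to $\mathcal{R}_\textup{M}$ must violate a capacity constraint; hence there is at least one $R_i \in \mathcal{R}_\textup{M}$ with which $R_j$ \emph{overlaps} (their rectangles have intersecting interiors, which for top-drawn rectangles means overlapping horizontal spans and overlapping vertical intervals $(l,b)$). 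Fix such an $R_i$.

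Next I would exploit the fact that these are \emph{large} tasks drawn top-down against the capacity profile: the top edge of each rectangle $R_k$ lies on the curve $c(\cdot)$ over its span (i.e.\ $b_k = c(x)$ for some/all relevant $x\in[s_k,t_k)$, with $b_k$ being the bottleneck), and $d_k > b_k/2$ forces $l_k < b_k/2$. The key structural consequence is that two overlapping large rectangles cannot have interleaving-but-incomparable vertical intervals in an arbitrary way; combined with ``no corner of $R_i$ is inside $R_j$'' and ``no corner of $R_j$ is inside $R_i$'', the horizontal spans must be in one of the nested/overlapping configurations, and I would split accordingly: (a) $[s_j,t_j] \subseteq [s_i,t_i]$ (strictly, since no corner containment), which — together with the vertical overlap and $b_j$ lying on the profile — forces $b_j \ge b_i > l_j$, giving \emph{top-intersecting}; (b) $s_j < s_i$ and $t_j \le t_i$ with $s_i$ strictly inside the span of $R_j$, which forces (using that $R_j$'s top edge tracks the profile on $[s_j, s_i]$ and that $R_j$'s top cannot exceed $R_i$'s without containing a corner) $b_i > b_j > l_j > l_i$ and the existence of $x_B\in[s_j,s_i]$ with $c(x_B)=b_j$, i.e.\ \emph{left-intersecting}; (c) the mirror case $s_j \ge s_i$, $t_j > t_i$, giving \emph{right-intersecting}; (d) $[s_i,t_i]\subseteq[s_j,t_j]$ strictly — here I would argue this cannot happen for rectangles in $\mathcal{R}'$, or reduces to (b)/(c), because $R_j$ would then contain a corner of $R_i$ unless the vertical intervals are nested the wrong way, and then $R_j$'s span strictly containing $R_i$'s with $b_j$ on the profile again produces a witness point making $R_j$ left- or right-intersecting.

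I expect the main obstacle to be case (d) and, more generally, carefully justifying the vertical-position claims ($b_j \ge b_i$, or $b_i > b_j > l_j > l_i$) from the capacity-profile geometry rather than hand-waving: one has to be precise that the ``bottleneck edge'' of each rectangle sits exactly on $c(\cdot)$, that $c$ is the same curve seen by both rectangles on their common span, and that the exclusion of corner-containment (from $R_j \notin \mathcal{R}_\textup{corner}$ together with $\mathcal{R}_\textup{M}$ being the reference set) really does rule out the ``crossing'' overlaps that would otherwise escape the three classes. I would also double-check the boundary subtleties (strict vs.\ non-strict inequalities, tasks sharing an endpoint coordinate) using the standing assumption that all edge capacities are distinct, which prevents two rectangles from having the same $b$ over overlapping spans and thus cleanly separates the cases. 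Once the geometry is pinned down, the proposition follows by observing these cases are exhaustive; I would then note (as the surrounding text does) that this lets us charge $\mathcal{R}_\textup{T}$, $\mathcal{R}_\textup{L}$, $\mathcal{R}_\textup{R}$ back to $\mathcal{R}_\textup{M}$ in the subsequent lemmas.
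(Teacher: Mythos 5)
Your plan is essentially correct, and it supplies an argument the paper leaves implicit: Proposition~\ref{lem:tlr} is stated without proof, the only hint being the preceding remark that every $R_j\in\mathcal{R}'$ overlaps some rectangle of $\mathcal{R}_\textup{M}$. Your four-way split on horizontal spans does the job: in case (a) span nesting plus the top-drawn property gives $b_j\ge b_i$, and the vertical overlap gives $b_i>l_j$, so $R_j$ is top-intersecting (strictness of $s_i<s_j<t_j<t_i$ indeed follows from $R_j\notin\mathcal{R}_\textup{corner}$); in cases (b), (c) and (d) the exclusion of corners of $R_i$ from $R_j$ forces $b_i>b_j>l_j>l_i$, and since $c(x)\ge b_i>b_j$ on $[s_i,t_i)$ the bottleneck edge of $R_j$ must lie in the left or right overhang, yielding the witness $x_B$ required by the definitions. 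Note that your case (d) is not vacuous — it is exactly the crossing configuration — but, as you conclude, it lands in left- or right-intersecting, so the three classes are exhaustive.

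Two details need repair. First, the step producing the initial overlap is not valid as written: if maximality of $\mathcal{R}_\textup{M}$ were only with respect to bagUFP capacity-feasibility, a capacity violation caused by adding $R_j$ does \emph{not} imply that $R_j$'s rectangle overlaps any single rectangle of $\mathcal{R}_\textup{M}$, because capacity-feasible sets of large tasks can have mutually overlapping top-drawn rectangles, so the demand at an edge can exceed the measure of the union of the drawn intervals and the violation need not be witnessed geometrically. The reading the paper actually needs (e.g.\ the proof of Lemma~\ref{lem:left:overlap} invokes the independence of $R_M,R_{M'}\in\mathcal{R}_\textup{M}$) is that $\mathcal{R}_\textup{M}$ is maximal as a set of pairwise non-overlapping rectangles satisfying the bag constraints; with that reading, the overlap with some $R_i\in\mathcal{R}_\textup{M}$ is immediate from maximality together with $R_j\notin\mathcal{R}_\textup{bag}$, and no capacity argument is needed. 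Second, distinct edge capacities do not prevent equal tops: two tasks can share a bottleneck edge, and in your case (a) one even has $b_j=b_i$ whenever $R_j\subseteq R_i$. This is harmless — top-intersecting only requires $b_j\ge b_i$, and the strict inequalities $b_i>b_j$ and $l_j>l_i$ in the left/right cases follow from $R_j\notin\mathcal{R}_\textup{corner}$ (if $b_j\ge b_i$ then $(s_i,b_i)\in R_j$, and if $l_j\le l_i$ then $(s_i,l_i)\in R_j$) — so do not lean on capacity distinctness to separate the cases.
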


In order to bound the profit of rectangles in $\mathcal{R}_\textup{T}$, we divide it into two sets.
First we define a set of points $\mathcal{Q}_\textup{T}$ created according to the following process. For each bottom corner of each rectangle in $\mathcal{R}_\textup{M}$, we project vertically downwards until we meet another rectangle in $\mathcal{R}_\textup{M}$ or we arrive at height 0. More formally, we define $\mathcal{Q}_\textup{T}$ as the set of points $(x, y)$ for which there exist two rectangles  $ R_i, R_j \in \mathcal{R}_\textup{M}$ such that the following conditions hold:
\begin{enumerate}
	\item $l_i > b_j$
	\item $x \in \{s_i, t_i\}$, and $s_j < x < t_j$
	\item $y = b_j$
	\item $\not\exists \, R_k \in \mathcal{R}_\textup{M}$ such that $s'_k < x < t_k$ and $l_i < b_k < b_j\}$
\end{enumerate}


Let $\mathcal{R}^\textup{point}_{\textup{top}}$ be the subset of rectangles in $\mathcal{R}_\textup{T}$ containing at least one point in $\mathcal{Q}_\textup{T}$, and let $\mathcal{R}_\textup{top}=\mathcal{R}_\textup{T} \setminus \mathcal{R}^\textup{point}_{\textup{top}}$. We now bound the total contribution of $\mathcal{R}^\textup{point}_{\textup{top}}$ to the profit of the LP solution.

\begin{lemma}
	\label{lem:point:top}
	We have that $\sum_{R_i \in \mathcal{R}^\textup{point}_{\textup{top}}} y_i \leq 2|\mathcal{R}_\textup{M}|$.
\end{lemma}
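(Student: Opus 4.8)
The statement to be proved is the bound $\sum_{R_i \in \mathcal{R}^\textup{point}_\textup{top}} y_i \leq 2|\mathcal{R}_\textup{M}|$, where $\mathcal{R}^\textup{point}_\textup{top}$ consists of top-intersecting rectangles (from $\mathcal{R}'$) that contain at least one point of the set $\mathcal{Q}_\textup{T}$. The natural approach mirrors the proofs of Lemma~\ref{lem:corner} and Lemma~\ref{lem:bag}: charge the LP weight of each such rectangle to a point of $\mathcal{Q}_\textup{T}$, and then use the LP point constraints $\sum_{R_j: p \in R_j} y_j \leq 1$ together with a counting bound on $|\mathcal{Q}_\textup{T}|$ in terms of $|\mathcal{R}_\textup{M}|$. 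So the first step is to observe that $\mathcal{Q}_\textup{T} \subseteq \mathcal{P}$ (or at least that the LP constraint applies at these points — they lie on the grid induced by the rectangles' edges, since each has an $x$-coordinate equal to $s_i$ or $t_i$ of a rectangle in $\mathcal{R}_\textup{M}$ and a $y$-coordinate equal to some $b_j$), so the point-packing constraint of LP$_{\textup{bagMWISR}}$ gives $\sum_{R_i \in \mathcal{R}: p \in R_i} y_i \leq 1$ for each $p \in \mathcal{Q}_\textup{T}$.

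\textbf{Key steps.} First I would bound $|\mathcal{Q}_\textup{T}|$. For each rectangle $R_i \in \mathcal{R}_\textup{M}$ there are exactly two vertical coordinates $x \in \{s_i, t_i\}$ (its two bottom corners), and for each such $x$ the downward projection stops at the \emph{first} rectangle of $\mathcal{R}_\textup{M}$ encountered below $l_i$, producing \emph{at most one} point of $\mathcal{Q}_\textup{T}$ (condition (4) enforces that this rectangle $R_j$ is the topmost one below; if no such rectangle exists, no point is generated). Hence $|\mathcal{Q}_\textup{T}| \leq 2|\mathcal{R}_\textup{M}|$. Second, each $R_i \in \mathcal{R}^\textup{point}_\textup{top}$ contains at least one point of $\mathcal{Q}_\textup{T}$ by definition, so
\[
\sum_{R_i \in \mathcal{R}^\textup{point}_\textup{top}} y_i \;\leq\; \sum_{p \in \mathcal{Q}_\textup{T}} \;\sum_{\substack{R_i \in \mathcal{R}^\textup{point}_\textup{top}:\, p \in R_i}} y_i \;\leq\; \sum_{p \in \mathcal{Q}_\textup{T}} \;\sum_{\substack{R_i \in \mathcal{R}:\, p \in R_i}} y_i \;\leq\; \sum_{p \in \mathcal{Q}_\textup{T}} 1 \;=\; |\mathcal{Q}_\textup{T}| \;\leq\; 2|\mathcal{R}_\textup{M}|,
\]
where the first inequality uses that every rectangle in $\mathcal{R}^\textup{point}_\textup{top}$ is counted for at least one point it contains, the third uses the LP point constraint, and the last uses the counting bound from the first step.

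\textbf{Main obstacle.} The routine part is the double-counting chain; the step needing genuine care is the claim $|\mathcal{Q}_\textup{T}| \leq 2|\mathcal{R}_\textup{M}|$, i.e.\ verifying that each bottom corner of a rectangle in $\mathcal{R}_\textup{M}$ generates at most one point of $\mathcal{Q}_\textup{T}$. This requires unwinding the four defining conditions of $\mathcal{Q}_\textup{T}$: condition (2) pins the $x$-coordinate to a bottom corner of some $R_i$, conditions (1) and (3) say the point sits at the top of some lower rectangle $R_j$ whose bottleneck edge lies strictly below $l_i$ and strictly inside the span of $R_j$, and condition (4) is precisely the maximality clause forcing $R_j$ to be the \emph{highest} such rectangle under the corner — so $R_j$ (hence the point $y = b_j$) is uniquely determined by the pair (rectangle $R_i$, choice of corner $x \in \{s_i,t_i\}$). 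I would spell out that if two distinct $R_j, R_{j'}$ both satisfied (1)–(3) for the same $(R_i, x)$, the one with smaller $b$ would violate (4), giving uniqueness. A secondary subtlety worth a sentence is confirming $\mathcal{Q}_\textup{T} \subseteq \mathcal{P}$: the $x$-coordinates are endpoints of rectangles in $\mathcal{R}_\textup{M} \subseteq \mathcal{R}$ and the $y$-coordinates are top edges $b_j$ of rectangles in $\mathcal{R}$, so these are exactly intersection points of a vertical rectangle edge with a horizontal rectangle edge, which is how $\mathcal{P}$ was defined; thus the LP constraint is available at every point of $\mathcal{Q}_\textup{T}$.
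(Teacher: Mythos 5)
Your proof is correct and follows essentially the same route as the paper's: charge each rectangle of $\mathcal{R}^\textup{point}_{\textup{top}}$ to a point of $\mathcal{Q}_\textup{T}$ it contains, apply the LP covering constraint at that point, and observe that each rectangle of $\mathcal{R}_\textup{M}$ generates at most two points of $\mathcal{Q}_\textup{T}$ (one per bottom corner), exactly as in the paper's appeal to the argument of Lemma~\ref{lem:corner}. Your write-up is in fact more detailed than the paper's two-sentence proof, in particular in justifying the uniqueness of the projected point per corner via condition (4) and in checking that the LP point constraints are applicable at the points of $\mathcal{Q}_\textup{T}$.
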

\begin{proof}
	Note that for any $R_j \in \mathcal{R}_\textup{M}$, there are at most two points in $\mathcal{Q}_\textup{T}$ corresponding to the projection of the bottom corner points ${(s_j, l_j), (t_j, l_j)}$.The result follow by a similar argument as in the proof of Lemma~\ref{lem:corner}.
\end{proof}

Next we show that the bagMWISR for $\mathcal{R}_\textup{top}$ can be reduced to an independent set of intervals problem with bag constraints.

\begin{lemma} \label{lem:top:overlap}
	Two rectangles $R_i, R_j \in \mathcal{R}_\textup{top}$ overlap if and only if $[s_i,t_i] \cap [s_j,t_j] \neq \phi$.
\end{lemma}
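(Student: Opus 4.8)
\textbf{Proof plan for Lemma~\ref{lem:top:overlap}.}
The statement is that for two rectangles $R_i, R_j \in \mathcal{R}_\textup{top}$, they overlap (their interiors intersect) if and only if their horizontal projections $[s_i,t_i]$ and $[s_j,t_j]$ intersect. One direction is trivial: if the rectangles overlap, their $x$-projections certainly intersect, since overlap requires a common point $(x,y)$ and then $x$ lies in both intervals. So the whole content is the converse: given that the horizontal intervals intersect, I must show the rectangles actually overlap in the plane, i.e., their vertical extents at a common $x$-coordinate also intersect.

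The plan is to exploit the special structure of the top-drawn instance and of the set $\mathcal{R}_\textup{top} = \mathcal{R}_\textup{T} \setminus \mathcal{R}^\textup{point}_\textup{top}$. Recall each rectangle $R_k$ has its top edge pinned to the capacity profile: $b_k = c(x)$ for every $x \in [s_k,t_k]$ is \emph{not} quite true, but $b_k$ is the bottleneck capacity, so $c(x) \ge b_k$ throughout $[s_k,t_k]$, with equality attained at the bottleneck edge. Moreover every $R_k \in \mathcal{R}_\textup{top}$ is top-intersecting to some $R_{i(k)} \in \mathcal{R}_\textup{M}$: there is $R_{i(k)}$ with $s_{i(k)} < s_k < t_k < t_{i(k)}$ and $b_k \ge b_{i(k)} > l_k$, so $R_k$ is horizontally nested inside $R_{i(k)}$ and its vertical span straddles the level $b_{i(k)}$. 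First I would fix two rectangles $R_i, R_j \in \mathcal{R}_\textup{top}$ with $[s_i,t_i] \cap [s_j,t_j] \neq \emptyset$ and, without loss of generality, assume $s_i \le s_j$; since the intervals intersect, $s_j \le t_i$, so the point $x_0 := s_j$ (or $x_0$ slightly to the right of $s_j$) lies in $[s_i,t_i] \cap [s_j,t_j]$. The goal is to show the vertical slices of $R_i$ and $R_j$ at $x_0$ overlap, i.e., $(l_i, b_i) \cap (l_j, b_j) \neq \emptyset$ at this coordinate, which for top-drawn rectangles reduces to showing $l_i < b_j$ and $l_j < b_i$ cannot both fail — in fact I want to show one of the bottom-left or bottom-right corners of one rectangle is "captured" unless overlap holds.

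The key step, and the main obstacle, is ruling out the configuration where the intervals overlap but the rectangles are vertically disjoint — say $b_j \le l_i$, so $R_j$ sits entirely below $R_i$ over the common strip. In that case the bottom-left corner $(s_j, l_j)$ of $R_j$ or the relevant bottom corner of $R_i$ should have been projected into $\mathcal{Q}_\textup{T}$ and "hit" by the other rectangle, contradicting $R_i$ or $R_j$ being in $\mathcal{R}_\textup{top}$ rather than $\mathcal{R}^\textup{point}_\textup{top}$. Concretely: if $b_j \le l_i$, consider the bottom corners of $R_i$; since $R_j$ lies strictly below $R_i$ at $x_0$ and $R_j$'s top edge follows (is at or below) the capacity profile, and since $R_i \in \mathcal{R}_\textup{top}$ is nested in some $R_{i(i)} \in \mathcal{R}_\textup{M}$, one traces the vertical projection of a bottom corner of an appropriate $\mathcal{R}_\textup{M}$-rectangle downward and argues it lands inside $R_j$ — making $R_j \in \mathcal{R}^\textup{point}_\textup{top}$, a contradiction; symmetrically if $b_i \le l_j$. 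I would handle this carefully using the four defining conditions of $\mathcal{Q}_\textup{T}$: $R_j$ being top-intersecting to $R_{i(j)}$ gives $b_{i(j)} > l_j$ and horizontal nesting, and one of the corners $(s_{i(j)}, l_{i(j)})$ or $(t_{i(j)}, l_{i(j)})$ has its downward projection either blocked by another $\mathcal{R}_\textup{M}$-rectangle (in which case iterate) or reaches a level inside the vertical span of $R_i$, placing a point of $\mathcal{Q}_\textup{T}$ inside $R_i$ — contradiction. The bookkeeping of which rectangle plays which role, and ensuring the projected point genuinely lies in the \emph{interior} of the offending rectangle, is the delicate part; I expect to need a short case split on whether $s_i \le s_j$ and on the relative order of $b_i, b_j$, plus invocation of maximality of $\mathcal{R}_\textup{M}$ to guarantee the nesting witnesses exist. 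Once vertical disjointness is excluded, the vertical slices at $x_0$ intersect, hence the rectangles overlap, completing the converse and the proof.
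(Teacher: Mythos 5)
Your plan follows the same route as the paper's proof: assume the $x$-intervals intersect but the rectangles are vertically disjoint (say $R_i$ above $R_j$, i.e.\ $l_i\geq b_j$), then follow downward projections of bottom corners of $\mathcal{R}_\textup{M}$-rectangles, iterating when blocked, until either a point of $\mathcal{Q}_\textup{T}$ lands inside $R_j$ (contradicting $R_j\in\mathcal{R}_\textup{top}$ rather than $\mathcal{R}^\textup{point}_\textup{top}$) or a corner of an $\mathcal{R}_\textup{M}$-rectangle does (contradicting $R_j\notin\mathcal{R}_\textup{corner}$ -- a case your sketch never mentions). However, the part you defer as ``the delicate part'' is exactly where the proof lives, and your plan does not contain the ideas needed to close it. A first slip: $\mathcal{Q}_\textup{T}$ is generated only by bottom corners of rectangles in $\mathcal{R}_\textup{M}$, so projecting $(s_j,l_j)$ or a bottom corner of $R_i$ proves nothing (you partially correct this later). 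More importantly, two facts are asserted but never established: (i) the starting host $R_{\overline{i}}\in\mathcal{R}_\textup{M}$ of the upper rectangle, and every blocking rectangle met later whose top edge is above $b_j$, has a bottom corner whose $x$-coordinate lies \emph{strictly inside} $(s_j,t_j)$ -- your choice of the common coordinate $x_0=s_j$ does not help here, because the projections are taken at the corners' own $x$-coordinates, and the host is strictly wider than $R_i$, so its corners could a priori lie outside $[s_j,t_j]$ entirely; and (ii) the downward projection over $R_j$ cannot overshoot below $l_j$, i.e.\ it really stops at a height in $(l_j,b_j]$.

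In the paper, (i) is exactly where the top-drawn/bottleneck structure is used quantitatively: any rectangle whose top edge is strictly above $b_j$ cannot horizontally contain $[s_j,t_j]$, since its bottleneck capacity would then be at most $b_j$; combined with the fact that its $x$-range meets $(s_j,t_j)$, one of its endpoints must lie strictly inside $(s_j,t_j)$ (this is also what gives the very first step, that a vertical edge of $R_i$ lies inside $(s_j,t_j)$). Fact (ii) comes from the host $R_{\overline{j}}\in\mathcal{R}_\textup{M}$ of $R_j$, which spans all of $[s_j,t_j]$ with top edge in $(l_j,b_j]$ and hence acts as a floor for the projections; together with the independence of $\mathcal{R}_\textup{M}$ it also forces, in the intermediate case where the blocking rectangle has $l_{R'}\le b_j<b_{R'}$, its bottom corner to lie inside $R_j$, which is the $\mathcal{R}_\textup{corner}$ contradiction. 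Termination is by the strict decrease of the top-edge heights along the descent. (Also, a minor point: the nesting witnesses come from the definition of top-intersecting, not from maximality of $\mathcal{R}_\textup{M}$.) Without (i) and (ii) your central claim that ``the projected point lands inside $R_j$'' is unsupported, so although the approach is the right one, the proposal has a genuine gap.
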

\begin{proof}
	If $[s_i,t_i] \cap [s_j,t_j] = \phi$, then the two rectangles do not overlap.
	
	Assume by contradiction that $[s_i,t_i] \cap [s_j,t_j] \neq \phi$, but the two rectangles do not overlap. Without loss of generality, assume that $R_i$ is above $R_j$ (that is, $l_i \geq b_j$). Since it is a top-drawn instance, $s_j < x < t_j$ for some $x \in \{s_i, t_i\}$. Let $R_{\overline{i}}$ (resp. $R_{\overline{j}}$) be a rectangle in $\mathcal{R}_\textup{M}$ to which $R_i$ (resp. $R_j$) is top-intersecting. Since $s_{\overline{i}} < s_i < t_i < t_{\overline{i}}$, then $l_{\overline{i}} > b_j$ (otherwise $R_j$ would be in $\mathcal{R}_{\textup{corner}}$). Consider the following algorithm. The algorithm keeps track of a rectangle $R$ with the following properties:
	\begin{itemize}
		\item $R \in \mathcal{R}_\textup{M}$;
		\item the coordinates of one of the two bottom corners of $R$, say $(x_R, y_R)$, satisfy $s_j < x_R < t_j$ and $y_R > b_j$.
	\end{itemize}
	The algorithm starts by setting $R \gets R_{\overline{i}}$, which satisfies all the conditions above; at each step, the algorithm chooses a bottom corner of $R$, say $(x_R, y_R)$, such that $s_j < x_R < t_j$. Then it finds the point $P = (x_P, y_P)$ obtained by moving vertically downward starting from $(x_R, y_R)$ till it meets the top edge of some rectangle $R' \in \mathcal{R}_\textup{M}$ (it may be that $R' = R_{\overline{j}}$). Let the bottom left and top right corners of $R'$ be $(s_{R'}, l_{R'})$ and $(t_{R'}, b_{R'})$, respectively. Note that since the instance is top-drawn, one of the coordinates $x \in \{ s_{R'}, t_{R'} \}$ will be in the range $\left(s_j,t_j \right)$, if $R' \neq R_{\overline{j}}$. There are three cases:
	\begin{itemize}
		\item $y_P \leq b_j$: then $P \in R_j$, implying $R_j \in \mathcal{R}^\textup{point}_{\textup{top}}$, which is a contradiction;
		\item $y_P > b_j$ and $l_{R'} \leq b_j$: then for some $x \in \{s_{R'}, t_{R'}\}$ the corner $(x, l_{R'})$ of $R'$ is contained in $R_j$ and so $R_j \in \mathcal{R}_{\textup{corner}}$, which is a contradiction;
		\item $y_P > b_j$ and $l_{R'} > b_j$: then $R'$ satisfies the above conditions, and the algorithm continues by setting $R \gets R'$ and going to the next step.
	\end{itemize}
	
	Note that the $y$-coordinate of the top edge of $R$ is strictly smaller at each iteration, so a contradiction must eventually be found.
\end{proof}

Thus, the problem of finding a maximum set of independent rectangles in $\mathcal{R}_\textup{top}$ with bag constraints is equivalent to solving the corresponding instance of the maximum set of independent intervals problem with bag constraints. There is a $2$-approximation algorithm for this problem, see for example \cite{bgns01} or \cite{bd00}.

\medskip

Similarly we need to bound the contribution of the left-intersecting rectangles ($\mathcal{R}_\textup{L}$) to the profit of the LP solution. For this we need to define a set of points $\mathcal{Q}_L$ (as we did with the set of points $\mathcal{Q}_\textup{T}$). First we define the mapping from one point to another as:
\[
f(x,y) = \left\{
\begin{array}{l l}
\text{min$_{s_i}$ $(s_i,y)$} & \enskip \text{if $\exists R_i \in \mathcal{R}_\textup{M}$: $s_i \geq x$ and $l_i \leq y \leq b_i$} \\ 
& \text{ and $c(x') \geq y$ for all $x < x' \leq s_i$,} \\
\text{min$_{x'}$ $f(x',c(x'))$} & \enskip \text{if no such $R_i$ exists and if $\exists x'>x$:} \\ 
& \text{$c(x')<y$ and $c(x'') \geq y$ for all $ x < x'' < x' $} \\
\bot & \enskip \text{otherwise}
\end{array}
\right.
\]
We now define the set $\mathcal{Q}_L = \{f(t_i, b_i) \, | \, R_i \in \mathcal{R}_\textup{M}\}$. Intuitively, for each rectangle in $\mathcal{R}_\textup{M}$ we start from the top-right corner and we move rightwards, stopping if we meet the left edge of any rectangle in $\mathcal{R}_\textup{M}$; if we meet a vertical edge of the capacity profile, we move downwards until it is possible to move rightwards again, and we continue the process (see Figure~\ref{fig:unweighted-bagUFP-left-intersecting}).

\begin{figure}
	\centering
	\includegraphics[width=13cm]{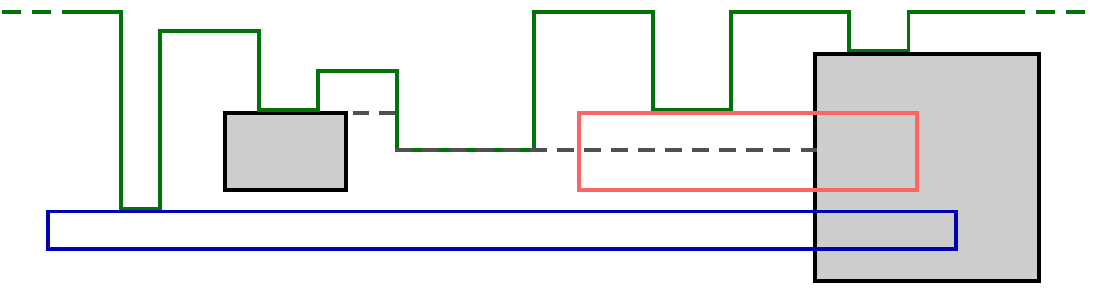}
	\caption{The definition of the set $\mathcal{R}_{left}^{point}$. The rectangles with gray background are in $\mathcal{R}_\textup{M}$; the colored rectangle are left-intersecting rectangle, but the red one is in $\mathcal{R}_{left}^{point}$ and the blue one is not.}
	\label{fig:unweighted-bagUFP-left-intersecting}
\end{figure}

As for the case of top-intersecting rectangles, we let $\mathcal{R}_{left}^{point}$ be the set of rectangles in $\mathcal{R}_L$ containing some point in $\mathcal{Q}_L$. In the next lemma we bound the contribution of rectangles in $\mathcal{R}_{left}^{point}$ to the LP$_{\textup{bagMWISR}}$ solution profit.

\begin{lemma} 
	\label{lem:point:left}
	We have that $\sum_{R_i \in \mathcal{R}^\textup{point}_{\textup{left}}} y_i \leq |\mathcal{R}_\textup{M}|$.
\end{lemma}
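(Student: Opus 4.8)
\textbf{Proof plan for Lemma~\ref{lem:point:left}.}

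The plan is to mimic the argument used for $\mathcal{R}^\textup{point}_{\textup{top}}$ (Lemma~\ref{lem:point:top}) and for corner-intersecting rectangles (Lemma~\ref{lem:corner}): we exhibit a bound on $|\mathcal{Q}_L|$ and then use the point-constraints of LP$_{\textup{bagMWISR}}$ to control the total LP weight of rectangles that are ``charged'' to these points. First I would observe that the map $f(\cdot,\cdot)$ assigns to each $R_i\in\mathcal{R}_\textup{M}$ \emph{at most one} point $f(t_i,b_i)$ (it is a function, and it may return $\bot$, contributing nothing). Hence $|\mathcal{Q}_L|\le |\mathcal{R}_\textup{M}|$. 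The subtlety compared to Lemma~\ref{lem:point:top} is that there we had \emph{two} points per rectangle (the two bottom corners) and so obtained the factor $2$; here a single top-right corner is traced, so we should get the sharper bound $|\mathcal{R}_\textup{M}|$, matching the statement.

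Next I would argue that each point $q\in\mathcal{Q}_L$ belongs to $\mathcal{P}$, the point set over which LP$_{\textup{bagMWISR}}$ has constraints. This requires checking that $q$ is either a corner of some rectangle in $\mathcal{R}$ or an intersection of a vertical edge of one rectangle with a horizontal edge of another; by construction $f$ returns a point of the form $(s_i,y)$ where $s_i$ is the left edge of a rectangle $R_i\in\mathcal{R}_\textup{M}$, and $y$ is either $b_i$ (a corner of $R_i$) or a capacity-profile level, which — since the grid in the definition of $\mathcal{P}$ is induced by the edges of \emph{all} rectangles and we are in a top-drawn instance where $y$-levels of the profile coincide with bottleneck levels $b_j$ of rectangles — is itself a horizontal edge coordinate of some rectangle. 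If a stray level happens not to lie exactly on $\mathcal{P}$, I would instead appeal to the finer grid used throughout Section~\ref{sec:unweightedBagUFP} (the grid already used implicitly when defining $\mathcal{Q}_\textup{corner}$ and $\mathcal{Q}_\textup{T}$), so that the LP constraint $\sum_{R_j\ni q} y_j\le 1$ is available at $q$. Granting this,
\[
\sum_{R_i\in\mathcal{R}^\textup{point}_{\textup{left}}} y_i
\;\le\; \sum_{q\in\mathcal{Q}_L}\ \sum_{R_j\in\mathcal{R}:\, q\in R_j} y_j
\;\le\; \sum_{q\in\mathcal{Q}_L} 1
\;=\; |\mathcal{Q}_L|
\;\le\; |\mathcal{R}_\textup{M}|,
\]
where the first inequality holds because every $R_i\in\mathcal{R}^\textup{point}_{\textup{left}}$ contains, by definition, at least one point of $\mathcal{Q}_L$ (so it is counted at least once on the right), the second is the relevant LP point-constraint, and the third is the cardinality bound above. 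This yields the claim.

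The main obstacle I expect is not the counting but verifying cleanly that every point produced by the recursively-defined map $f$ indeed lands on a coordinate covered by an LP constraint — the recursion in $f$ (``slide right, drop down at a capacity step, repeat'') can in principle generate many intermediate $x$-coordinates, and one must confirm that the \emph{final} returned point is a legitimate grid point of $\mathcal{P}$ and, moreover, that it actually lies inside any left-intersecting rectangle that we want to charge to it (i.e.\ that the definition of $\mathcal{R}^\textup{point}_{\textup{left}}$ is consistent with the geometry of $f$). I would handle this by a short case analysis on the two branches of $f$: in the first branch the point is the top-left corner region of some $R_i\in\mathcal{R}_\textup{M}$, hence on $\mathcal{P}$; in the second branch the point is obtained at a capacity drop, which in a top-drawn instance coincides with a bottleneck level of some rectangle, hence again a horizontal grid line, and its $x$-coordinate is the left edge $s_i$ of some $R_i\in\mathcal{R}_\textup{M}$ (or again produced recursively), hence a vertical grid line; their intersection is in $\mathcal{P}$. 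The monotonicity of the construction (the $y$-coordinate never increases and the $x$-coordinate strictly increases at each drop) guarantees termination and that no point is generated twice for the same source rectangle.
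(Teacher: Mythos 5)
Your proposal is correct and matches the paper's own (very short) proof: the paper likewise observes that $f$ applied to the top-right corner yields at most one point of $\mathcal{Q}_\textup{L}$ per rectangle of $\mathcal{R}_\textup{M}$, and then charges each rectangle of $\mathcal{R}^\textup{point}_{\textup{left}}$ to a point it contains via the LP point constraints, exactly as in Lemma~\ref{lem:corner}. The only extra content in your write-up is the verification that the points of $\mathcal{Q}_\textup{L}$ are covered by LP constraints, a technicality the paper does not even discuss, and your treatment of it is adequate.
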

\begin{proof}
	Note that for any rectangle $R_j \in \mathcal{R}_\textup{M}$, there is at most one point in $\mathcal{Q}_\textup{L}$, which is the point we obtain by applying function $f$ to its top-right corner point. The result follows by a similar argument as in the proof of Lemma~\ref{lem:corner}.
\end{proof}

Once again, we can reduce the bagMWISR problem for $\mathcal{R}_\textup{left} = \mathcal{R}_L \setminus \mathcal{R}^\textup{point}_{\textup{left}}$ to the maximum independent set of intervals problem with bag constraints.

Let $u_{MAX} := \max_{e \in E}{u_e}$ be the maximum capacity.

We map each rectangle $R_i \in \mathcal{R}_\textup{left}$ to the interval $[u_{MAX} \cdot j + l_i, u_{MAX} \cdot j + b_i]$, where $j$ is defined by:
$s_j = \min_{R_M \in \mathcal{R}_\textup{M}} \{s_M : R_i \mbox{ is left-intersecting to } R_M\}$; that is, $R_j$ is the leftmost starting rectangle in $\mathcal{R}_M$ to which $R_i$ is left-intersecting.


\begin{lemma}\label{lem:left:overlap}
	Let $R_i, R_{i'} \in \mathcal{R}_\textup{left}$. Then $R_i$ and $R_{i'}$ overlap if and only if the segments they are mapped to overlap.
\end{lemma}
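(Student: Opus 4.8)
The plan is to establish the equivalence by analyzing the two directions separately, exploiting the structure of left-intersecting rectangles together with the definition of the map. First I would fix notation: for $R_i\in\mathcal{R}_\textup{left}$ left-intersecting to $R_j\in\mathcal{R}_\textup{M}$ with $s_j$ minimal, it is mapped to the segment $[u_{MAX}\cdot j + l_i,\ u_{MAX}\cdot j + b_i]$; call $j(i)$ this index. Observe that since all $l_i,b_i\in[0,u_{MAX}]$, two rectangles with $j(i)\neq j(i')$ are mapped to segments lying in disjoint ``blocks'' $[u_{MAX}\cdot j, u_{MAX}\cdot(j+1)]$, hence their images are disjoint; so for such a pair it suffices to show the original rectangles do \emph{not} overlap. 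For the case $j(i)=j(i')$, the images overlap precisely when $[l_i,b_i]\cap[l_{i'},b_{i'}]\neq\emptyset$, i.e.\ when the vertical extents of $R_i$ and $R_{i'}$ intersect, and I must show this is equivalent to $R_i$ and $R_{i'}$ overlapping (as planar rectangles).

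The easy direction is that if $R_i$ and $R_{i'}$ overlap then their images overlap. If $R_i$ and $R_{i'}$ overlap, their vertical extents $[l_i,b_i]$ and $[l_{i'},b_{i'}]$ must intersect; the remaining point is that $j(i)=j(i')$ in this case. Here I would argue that two overlapping left-intersecting rectangles must be left-intersecting to a common rectangle of $\mathcal{R}_\textup{M}$, and in fact share the leftmost such one: intuitively, both $R_i$ and $R_{i'}$ have their ``bottleneck plateau'' at heights $b_i$ resp.\ $b_{i'}$ reaching left until the first $R_M\in\mathcal{R}_\textup{M}$ they hit; if they overlap and neither contains a corner of the other (they are not in $\mathcal{R}_\textup{corner}$) and neither contains a point of $\mathcal{Q}_L$, then a short case analysis on which of $b_i,b_{i'}$ is larger shows the higher one is also left-intersecting to the same $R_M$, forcing $j(i)=j(i')$. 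This is the step I expect to be the main obstacle — it is exactly analogous to Lemma~\ref{lem:top:overlap}, and as there, the cleanest route is a contradiction argument: assume $R_i,R_{i'}$ overlap but $j(i)\neq j(i')$, then ``walk'' along the capacity profile from one rectangle's top-right corner and use the exclusion from $\mathcal{R}_\textup{corner}$ and $\mathcal{R}^\textup{point}_\textup{left}$ to derive that one of the two rectangles would have to contain a corner of an $\mathcal{R}_\textup{M}$ rectangle or a point of $\mathcal{Q}_L$.

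For the converse, suppose the images overlap; then $j(i)=j(i')$ and $[l_i,b_i]\cap[l_{i'},b_{i'}]\neq\emptyset$. Let $R_j$ be the common (leftmost-starting) rectangle of $\mathcal{R}_\textup{M}$ to which both are left-intersecting, so $s_i,s_{i'}<s_j$ and there exist $x_B\in[s_i,s_j]$, $x_{B'}\in[s_{i'},s_j]$ with $c(x_B)=b_i$, $c(x_{B'})=b_{i'}$; also both rectangles extend rightward past $s_j$ (they contain an edge of $P_i$ crossing $s_j$). Without loss of generality $b_i\ge b_{i'}$; since the vertical extents meet we have $l_i\le b_{i'}\le b_i$. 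I would then show the horizontal extents of $R_i$ and $R_{i'}$ also meet: both contain a neighborhood of the vertical line $x=s_j$ (or a point just left of it) at heights near $b_i$ and $b_{i'}$ respectively, and since $l_i\le b_{i'}$, the rectangle $R_i$ covers height $b_{i'}$ near $x=s_j$ while $R_{i'}$ covers a horizontal strip at height $b_{i'}$ around $x=s_j$ as well; hence the two rectangles share a common point, i.e.\ they overlap. A little care is needed at the boundary $x=s_j$ itself (using that capacities can be taken distinct and coordinates integral, so the intersection has nonempty interior), and one checks that $R_{i'}$ indeed reaches left of $s_j$ and right of $s_j$ at height $b_{i'}$ because it is left-intersecting to $R_j$. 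Assembling the two directions, together with the block-disjointness observation for $j(i)\neq j(i')$, completes the proof.
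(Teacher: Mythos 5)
Your plan is essentially the paper's proof: it too reduces the lemma to showing that, once the vertical extents $[l_i,b_i]$ and $[l_{i'},b_{i'}]$ intersect, the rectangles overlap exactly when they are left-intersecting to the \emph{same} leftmost rectangle $R_M=R_{M'}$ of $\mathcal{R}_\textup{M}$ (the "same $R_M$" direction being immediate since both rectangles straddle the vertical line $x=s_M$), and it proves the hard implication by contradiction with a walk along the capacity profile using precisely the exclusions from $\mathcal{R}_\textup{corner}$ and $\mathcal{R}^\textup{point}_\textup{left}$ that you invoke, in analogy with the top-intersecting case. Be aware that the step you leave as a sketch is where all the work sits in the paper: there one first uses independence of $R_M,R_{M'}$ to force $t_M<s_{M'}$, then a case analysis on the bottleneck edge of $R_{i'}$ (excluded over or to the left of $R_M$ via the leftmost choice of $R_{M'}$ and the corner exclusion), and only then the "bad point" iteration mimicking $f$, starting from the top-right corner of $R_M$, which plants either a corner of an $\mathcal{R}_\textup{M}$ rectangle or a point of $\mathcal{Q}_L$ inside $R_{i'}$ — rather than a direct comparison of $b_i$ and $b_{i'}$ as your first-pass heuristic suggests.
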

\begin{proof}
	Let $R_M$ (resp. $R_{M'})$ be the left-most starting rectangle in $\mathcal{R}_\textup{M}$ such that $R_i$ is left-intersecting to $R_M$ (resp. $R_{i'}$ is left-intersecting to $R_{M'}$). By the above definition of the mapping, $R_i$ is mapped to $[u_{MAX}\cdot M + l_i, u_{MAX}\cdot M + b_i]$ and $R_{i'}$ is mapped to $[u_{MAX}\cdot M' + l_{i'}, u_{MAX}\cdot M' + b_{i'}]$.\\
	If the intervals $[l_i, b_i]$ and $[l_{i'}, b_{i'}]$ do not overlap, then trivially $R_i$ and $R_{i'}$ do not overlap and there is nothing else to prove. Suppose that the above intervals overlap: we will prove that the rectangles $R_i$ and $R_{i'}$ overlap if and only if $R_M = R_{M'}$. 
	\begin{itemize}
		\item[$\Leftarrow$)] Suppose that $R_M = R_{M'}$. Then, $M = M'$ and the maps of the two intervals overlap, since $[l_i, b_i]$ and $[l_{i'}, b_{i'}]$ overlap.
		\item[$\Rightarrow$)] Suppose that $R_i$ and $R_{i'}$ overlap. We will prove that $R_M = R_{M'}$. Suppose by contradiction that $R_M \neq R_{M'}$, and assume without loss of generality that \sal{$s_{M} < s_{M'}$}.\\ 
		Since $R_i$ is left-intersecting to $R_M$, then $[l_i, b_i] \subset [l_M, b_M]$; similarly, $[l_{i'}, b_{i'}] \subset [l_{M'}, b_{M'}]$. Thus, if the intervals $[l_M, b_M]$ and $[l_{M'}, b_{M'}]$ do not overlap, then $[l_i, b_i]$ cannot overlap with $[l_{i'}, b_{i'}]$. Because of the independence of the rectangles $R_M$ and $R_{M'}$, this also implies that $t_M < s_{M'}$, and so $s_i < s_M < t_M < s_{M'} < t_{i'}$.\\
		Let $\overline{y}=\min(b_i, b_{i'})$. Note that for each $x \in [s_i, t_{i'}]$, $c(x) \geq \overline{y}$ .\\
		Consider the bottleneck edge $e = (r, r + 1)$ of $R_{i'}$; since $R_{i'}$ is left-intersecting to $R_{M'}$, then $r + 1 < s_{M'}$. Suppose by contradiction that $r+1 \leq t_{M}$. We consider two cases:
		\begin{itemize}
			\item $r+1 \leq s_{M}$: since $R_{i'}$ does not contain any corner of $R_M$ (otherwise $R_i' \in \mathcal{R}_\textup{corner}$, absurd), but it contains the point $(s_M, \overline{y})$, it follows that $R_{i'}$ is left-intersecting to $R_M$, contradicting the fact that $R_{M'}$ is the left-most starting rectangle to which $R_{i'}$ is left-intersecting.
			\item $s_{M} \leq r$ and $ r+1 \leq t_{M}$: note that for each $x \in [s_{M}, t_{M}]$, we have that $c(x) \geq b_M$; but then $b_{i'} \geq b_M$, so $R_{i'}$ contains the top-right corner of $R_M$ and $R_{i'} \in \mathcal{R}_\textup{corner}$, which is a contradiction.
		\end{itemize}
		Thus, the only case left is that $r \geq t_{M}$. We will show that a contradiction can be obtained with an argument that is similar, in spirit, to the proof of Lemma~\ref{lem:top:overlap}.
		
		We call a point $(x, y)$ \emph{bad} for $R_{i'}$ if it satisfies the following conditions:
		\begin{enumerate}[1)]
			\item $t_M \leq x < s_{M'}$
			\item $y \geq \overline{y}$
			\item $(x, y)$ is the top-right corner of a rectangle $R_j \in \mathcal{R}_\textup{M}$
		\end{enumerate}
		
		Note that the top-right corner of $R_M$ is bad for $R_{i'}$ . If $(x, y)$ is a bad point, then it is easy to see that $f(x, y) = (x', y')$ for some point $(x', y')$ that satisfies the above conditions (1) and (2), and that lies on the left edge of a rectangle $R_j \in \mathcal{R}_\textup{M}$; in fact, since $c(x) \geq \overline{y} \,\,\, \forall x \in [t_{M}, s_{M'}]$, by the definition of $f(x, y)$ we are guaranteed that we will find such a rectangle). If $R_j = R_{M'}$, then $(x', y') \in Q_\textup{L}$, and since $(x', y') \in R_{i'}$ this implies that $R_{i'} \in \mathcal{R}^\textup{points}_\textup{left}$, which is a contradiction. But if $R_j \neq R_{M'}$, then the top-right corner of $R_j$, say $(x'', y'')$ is bad for $R_{i'}$ and $x'' > x$. Thus, by repeating this process, a contradiction must eventually be found.
	\end{itemize}
\end{proof}
In a symmetric fashion, we can define the set of point $\mathcal{Q}_\textup{R}$ and the corresponding sets of rectangles $\mathcal{R}^\textup{points}_\textup{right}$ and $\mathcal{R}_\textup{right}$, and the analogues of lemmas \ref{lem:point:left} and \ref{lem:left:overlap} apply.


In our algorithm, we take $\mathcal{R'} \in {\mathcal{R}_\textup{top}, \mathcal{R}_\textup{left}, \mathcal{R}_\textup{right}}$ and compute the $2$-approximate solution $\textup{APX}(\mathcal{R'})$ for each subproblem. Then we output the maximum sized solution out of $\mathcal{R}_\textup{M}$, $\textup{APX}(\mathcal{R}_\textup{top})$, $\textup{APX}(\mathcal{R}_\textup{left})$ and $\textup{APX}(\mathcal{R}_\textup{right})$, which we call $\textup{APX}$.

We want to bound the cardinality of our solution $\textup{APX}$ in comparison with the optimal fractional solution to LP$_\textup{bagMWISR}$. By Lemma~\ref{lem:tlr} the union of the sets $\mathcal{R}_\textup{bag}$, $\mathcal{R}_\textup{points}$, $\mathcal{R}_\textup{top}$, $\mathcal{R}_\textup{left}$ and $\mathcal{R}_\textup{right}$ equals $\mathcal{R}$.

Let $\mathcal{R}_\textup{points}:= \mathcal{R}_\textup{corner} \cup \mathcal{R}^\textup{points}_\textup{top} \cup \mathcal{R}^\textup{points}_\textup{left} \cup \mathcal{R}^\textup{points}_\textup{right}$. We have:
\begin{lemma} \label{lem:points}
	For any feasible solution $\mathbf{y}$ of LP$_\textup{bagMWISR}$, the fractional profit of the rectangles in $\mathcal{R}_\textup{points}$ is upper bounded by $8 \cdot |\mathcal{R}_\textup{M}|$.
\end{lemma}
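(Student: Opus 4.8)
The plan is to prove this simply by combining the individual bounds already established for the four constituent sets and invoking subadditivity of a sum of nonnegative terms over a (possibly overlapping) union. Concretely, since every variable $y_j$ is nonnegative and $\mathcal{R}_\textup{points} = \mathcal{R}_\textup{corner} \cup \mathcal{R}^\textup{points}_\textup{top} \cup \mathcal{R}^\textup{points}_\textup{left} \cup \mathcal{R}^\textup{points}_\textup{right}$, we have
\[
\sum_{R_j \in \mathcal{R}_\textup{points}} y_j \;\leq\; \sum_{R_j \in \mathcal{R}_\textup{corner}} y_j \;+\; \sum_{R_j \in \mathcal{R}^\textup{points}_\textup{top}} y_j \;+\; \sum_{R_j \in \mathcal{R}^\textup{points}_\textup{left}} y_j \;+\; \sum_{R_j \in \mathcal{R}^\textup{points}_\textup{right}} y_j,
\]
and it remains only to plug in the four term-wise estimates.

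First I would invoke Lemma~\ref{lem:corner} to bound the first term by $4|\mathcal{R}_\textup{M}|$, and Lemma~\ref{lem:point:top} to bound the second term by $2|\mathcal{R}_\textup{M}|$. For the third term I would invoke Lemma~\ref{lem:point:left}, giving $|\mathcal{R}_\textup{M}|$. For the fourth term I would point out that $\mathcal{Q}_\textup{R}$, $\mathcal{R}^\textup{points}_\textup{right}$ and $\mathcal{R}_\textup{right}$ are defined completely symmetrically to their ``left'' counterparts (mirroring the roles of $s_i$ and $t_i$, and reflecting the capacity-profile argument horizontally), so that the analogue of Lemma~\ref{lem:point:left} holds and yields $\sum_{R_i \in \mathcal{R}^\textup{points}_\textup{right}} y_i \leq |\mathcal{R}_\textup{M}|$. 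Adding the four contributions gives $4|\mathcal{R}_\textup{M}| + 2|\mathcal{R}_\textup{M}| + |\mathcal{R}_\textup{M}| + |\mathcal{R}_\textup{M}| = 8|\mathcal{R}_\textup{M}|$, as claimed.

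There is essentially no hard step here: the lemma is a bookkeeping consolidation of Lemmas~\ref{lem:corner}, \ref{lem:point:top}, \ref{lem:point:left}, and the symmetric right-intersecting version. The only point that deserves a sentence of care is making explicit that we do not need the four sets to be pairwise disjoint — we are after an upper bound and all $y_j \ge 0$ — and that the ``right'' analogue of Lemma~\ref{lem:point:left} is obtained by the mirror-image of its proof (each $R_j \in \mathcal{R}_\textup{M}$ contributes at most one point to $\mathcal{Q}_\textup{R}$, namely the image under the symmetric mapping of its top-left corner, and then the argument of Lemma~\ref{lem:corner} applies).
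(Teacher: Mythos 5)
Your proposal is correct and matches the paper's own argument, which likewise obtains the bound by summing the estimates of Lemmas~\ref{lem:corner}, \ref{lem:point:top}, \ref{lem:point:left} and the symmetric right-intersecting analogue ($4+2+1+1=8$ times $|\mathcal{R}_\textup{M}|$). Your extra remarks about non-disjointness of the union and the mirror-image argument for $\mathcal{Q}_\textup{R}$ are sound and only make explicit what the paper leaves implicit.
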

\begin{proof}
	It directly follows from lemmas \ref{lem:corner}, \ref{lem:point:top} and \ref{lem:point:left}.
\end{proof}

\begin{lemma} 
	\label{lem:top:left:right}
	For any feasible solution $\mathbf{y}$ to $\textup{LP}_\textup{bagMWISR}$ and any set $\mathcal{R}' \in \{ \mathcal{R}_\textup{top}$, $\mathcal{R}_\textup{left}$,  $\mathcal{R}_\textup{right} \}$ it holds that $\sum_{T_i \in \mathcal{R}'} y_i \leq 2 \cdot \textup{APX}(\mathcal{R}')$.
\end{lemma}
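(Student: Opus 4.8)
The plan is to reduce each of the three classes $\mathcal{R}_{\text{top}}, \mathcal{R}_{\text{left}}, \mathcal{R}_{\text{right}}$ to an instance of \emph{maximum weight independent set of intervals with bag constraints}, and then use the fact that the natural LP for that one-dimensional problem has integrality gap at most $2$ (which follows from the classical $2$-approximations for interval scheduling with bags, e.g.\ \cite{bgns01,bd00}, or more precisely from the fact that the interval-point incidence matrix together with the bag constraints is still amenable to a simple primal-dual or LP-rounding argument with loss $2$). Concretely, for $\mathcal{R}_{\text{top}}$ I would invoke Lemma~\ref{lem:top:overlap}: two rectangles in $\mathcal{R}_{\text{top}}$ intersect iff their $x$-projections $[s_i,t_i]$ intersect, so the rectangles behave exactly like intervals on the line, and the bag structure is inherited unchanged. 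For $\mathcal{R}_{\text{left}}$ and $\mathcal{R}_{\text{right}}$ I would instead use the vertical mapping of Lemma~\ref{lem:left:overlap} (and its symmetric version), which sends each rectangle to an interval of the form $[u_{MAX}\cdot j + l_i,\, u_{MAX}\cdot j + b_i]$ and preserves intersections; again the bags carry over.

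The key steps, in order, would be: (1) Fix a feasible fractional solution $\mathbf{y}$ of $\text{LP}_{\text{bagMWISR}}$ and restrict it to the coordinates indexed by $\mathcal{R}'\in\{\mathcal{R}_{\text{top}},\mathcal{R}_{\text{left}},\mathcal{R}_{\text{right}}\}$; call this restriction $\mathbf{y}|_{\mathcal{R}'}$. (2) Show that $\mathbf{y}|_{\mathcal{R}'}$ is feasible for the LP relaxation of the corresponding bag-interval instance: the point constraints $\sum_{R_i\ni p} y_i\le 1$ of $\text{LP}_{\text{bagMWISR}}$ restricted to $\mathcal{R}'$ imply the clique/point constraints of the interval LP, because (by Lemma~\ref{lem:top:overlap} resp.\ Lemma~\ref{lem:left:overlap}) a set of pairwise-overlapping rectangles in $\mathcal{R}'$ has a common point, and the bag constraints are literally the same inequalities. (3) Apply the $2$-approximation (equivalently, the integrality-gap-$2$ bound) for the bag-interval problem to $\mathbf{y}|_{\mathcal{R}'}$, obtaining an integral independent set $\text{APX}(\mathcal{R}')$ of weight at least $\tfrac12\sum_{R_i\in\mathcal{R}'} y_i$. (4) Translate this interval solution back to a set of pairwise non-overlapping, bag-feasible rectangles — this is immediate since the mapping preserves both intersections and bag membership. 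Hence $\sum_{R_i\in\mathcal{R}'} y_i \le 2\cdot \text{APX}(\mathcal{R}')$, which is exactly the claimed inequality.

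The main obstacle is step~(2): one must be careful that restricting the point set of $\text{LP}_{\text{bagMWISR}}$ to $\mathcal{R}'$ really does dominate all the clique constraints needed for the interval LP. For $\mathcal{R}_{\text{top}}$ this is clean, since a family of intervals on a line that pairwise intersect has a common point (Helly's theorem in dimension one), and that common point lies inside some grid cell and hence corresponds to a point of $\mathcal{P}$ that $\mathbf{y}$ constrains. For $\mathcal{R}_{\text{left}}$ and $\mathcal{R}_{\text{right}}$ the subtlety is that the interval image lives on a shifted/stretched vertical axis; I would check that a set of pairwise-intersecting rectangles in $\mathcal{R}_{\text{left}}$ that are \emph{all left-intersecting to the same} $R_M\in\mathcal{R}_M$ share a vertical line segment whose $x$-coordinate can be taken inside the overlap region, giving a genuine point of $\mathcal{P}$; and the definition of the mapping via the leftmost starting $R_M$ guarantees that pairwise-intersecting rectangles indeed share the same $R_M$ (this is precisely the content of the $\Rightarrow$ direction of Lemma~\ref{lem:left:overlap}). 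Once this domination is established the rest is routine.

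Assembling everything: combining Lemma~\ref{lem:bag}, Lemma~\ref{lem:points}, Lemma~\ref{lem:top:left:right} over the three classes, and the decomposition $\mathcal{R}=\mathcal{R}_{\text{bag}}\cup\mathcal{R}_{\text{points}}\cup\mathcal{R}_{\text{top}}\cup\mathcal{R}_{\text{left}}\cup\mathcal{R}_{\text{right}}$ (Proposition~\ref{lem:tlr}), we get $opt \le \sum_{R_i\in\mathcal{R}} y_i \le |\mathcal{R}_M| + 8|\mathcal{R}_M| + 2\,\text{APX}(\mathcal{R}_{\text{top}}) + 2\,\text{APX}(\mathcal{R}_{\text{left}}) + 2\,\text{APX}(\mathcal{R}_{\text{right}}) \le (9 + 6)\cdot|\text{APX}|$, so $|\text{APX}|\ge opt/15 \ge OPT/15$; together with Lemma~\ref{lem:ufp_partition} and Lemma~\ref{lem:ufp_small_apx} this yields a constant-factor approximation for unweighted bagUFP. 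I would state Lemma~\ref{lem:top:left:right} and its proof first, then close the section with this final aggregation.
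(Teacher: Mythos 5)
Your proposal is correct and follows essentially the same route as the paper: reduce each class $\mathcal{R}'$ to the maximum independent set of intervals problem with bag constraints via Lemmas~\ref{lem:top:overlap} and~\ref{lem:left:overlap}, observe that the restriction of $\mathbf{y}$ is feasible for the corresponding interval LP, and invoke the factor-$2$ bound of \cite{bgns01} to conclude $\sum_{R_i\in\mathcal{R}'} y_i \le 2\,|\textup{APX}(\mathcal{R}')|$. Your step~(2) merely spells out the feasibility claim that the paper states without elaboration, so there is no substantive difference.
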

\begin{proof}
	We know that the bagMWISR instance for $\mathcal{R}' \in \{ \mathcal{R}_\textup{top}$, $\mathcal{R}_\textup{left}$,$\mathcal{R}_\textup{right} \}$ is equivalent to the maximum set of independent intervals problem with bag constraints. Let $y'$ be the optimal fractional LP solution for $ \mathcal{R}'$. Since $\mathbf{y}$ is a feasible LP solution for $ \mathcal{R}'$, $\sum_{T_i \in \mathcal{R}_\textup{top}} y_i \leq \sum_{T_i \in \mathcal{R}'} y'_i$. By Theorem 4.2 of \cite{bgns01}, we have that $\sum_{T_i \in \mathcal{R}'} y'_i \leq 2 \cdot |\textup{APX}(\mathcal{R}')|$.
\end{proof}

\begin{lemma}\label{lem:bagUFP_unweighted_large}
	The solution $APX$ satisfies: $\sum_{T_i \in \mathcal{R}} y_i \leq 15 \cdot |\textup{APX}|$.
\end{lemma}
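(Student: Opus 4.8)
The plan is to combine all the structural lemmas proved so far in this section into a single counting inequality. Recall that the set $\mathcal{R}$ of rectangles (coming from the large tasks) was partitioned — via Proposition~\ref{lem:tlr} and the definitions of the various subfamilies — so that every rectangle lies in at least one of $\mathcal{R}_\textup{M}$, $\mathcal{R}_\textup{bag}$, $\mathcal{R}_\textup{points}$, $\mathcal{R}_\textup{top}$, $\mathcal{R}_\textup{left}$, $\mathcal{R}_\textup{right}$. Since all tasks have uniform profit, bounding $\sum_{R_i \in \mathcal{R}} y_i$ in terms of $|\textup{APX}|$ is exactly what is needed.

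First I would write the trivial union bound
\[
\sum_{T_i \in \mathcal{R}} y_i \leq \sum_{T_i \in \mathcal{R}_\textup{M}} y_i + \sum_{T_i \in \mathcal{R}_\textup{bag}} y_i + \sum_{T_i \in \mathcal{R}_\textup{points}} y_i + \sum_{T_i \in \mathcal{R}_\textup{top}} y_i + \sum_{T_i \in \mathcal{R}_\textup{left}} y_i + \sum_{T_i \in \mathcal{R}_\textup{right}} y_i,
\]
which is valid because the six families cover $\mathcal{R}$ and the $y_i$ are non-negative (double counting only helps the direction of the inequality). Then I would bound each term: the first term is at most $|\mathcal{R}_\textup{M}|$ since $\mathbf{y}$ satisfies the LP point constraints and $\mathcal{R}_\textup{M}$ is feasible (alternatively it is absorbed into $\mathcal{R}_\textup{bag}$, since $\mathcal{R}_\textup{bag} \supseteq \mathcal{R}_\textup{M}$); the second term is at most $|\mathcal{R}_\textup{M}|$ by Lemma~\ref{lem:bag}; the third term is at most $8\,|\mathcal{R}_\textup{M}|$ by Lemma~\ref{lem:points}; and each of the last three terms is at most $2\,|\textup{APX}(\mathcal{R}')| \leq 2\,|\textup{APX}|$ by Lemma~\ref{lem:top:left:right} together with the fact that $\textup{APX}$ is chosen to be the largest among $\mathcal{R}_\textup{M}$, $\textup{APX}(\mathcal{R}_\textup{top})$, $\textup{APX}(\mathcal{R}_\textup{left})$, $\textup{APX}(\mathcal{R}_\textup{right})$.

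Putting these together gives $\sum_{T_i \in \mathcal{R}} y_i \leq |\mathcal{R}_\textup{M}| + |\mathcal{R}_\textup{M}| + 8|\mathcal{R}_\textup{M}| + 2|\textup{APX}| + 2|\textup{APX}| + 2|\textup{APX}| = 10|\mathcal{R}_\textup{M}| + 6|\textup{APX}|$. Finally, since $\mathcal{R}_\textup{M}$ is itself one of the candidate solutions over which $\textup{APX}$ is maximized, we have $|\mathcal{R}_\textup{M}| \leq |\textup{APX}|$, yielding $\sum_{T_i \in \mathcal{R}} y_i \leq 16|\textup{APX}|$; a slightly more careful accounting — noting that $\mathcal{R}_\textup{M} \subseteq \mathcal{R}_\textup{bag}$ so the first term need not be counted separately, i.e.\ the coefficient of $|\mathcal{R}_\textup{M}|$ is $9$ rather than $10$ — gives the claimed $15\,|\textup{APX}|$. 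The only genuine subtlety, and the one step worth double-checking, is the covering claim: one must verify that $\mathcal{R}' = (\mathcal{R}\setminus\mathcal{R}_\textup{M})\setminus(\mathcal{R}_\textup{bag}\cup\mathcal{R}_\textup{corner})$ is exhausted by the top/left/right-intersecting rectangles (Proposition~\ref{lem:tlr}) and that the "point" subfamilies $\mathcal{R}^\textup{point}_\textup{top}, \mathcal{R}^\textup{point}_\textup{left}, \mathcal{R}^\textup{point}_\textup{right}$ are removed before invoking the interval relaxation, so that Lemmas~\ref{lem:top:overlap} and~\ref{lem:left:overlap} (and its right-symmetric version) genuinely apply to $\mathcal{R}_\textup{top}, \mathcal{R}_\textup{left}, \mathcal{R}_\textup{right}$. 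Everything else is bookkeeping over the already-established bounds.
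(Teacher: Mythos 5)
Your proof is correct and follows essentially the same route as the paper: cover $\mathcal{R}$ by $\mathcal{R}_\textup{bag}\cup\mathcal{R}_\textup{points}\cup\mathcal{R}_\textup{top}\cup\mathcal{R}_\textup{left}\cup\mathcal{R}_\textup{right}$, bound these terms by $|\mathcal{R}_\textup{M}|$, $8|\mathcal{R}_\textup{M}|$, and $2|\textup{APX}(\mathcal{R}')|$ respectively via Lemmas~\ref{lem:bag}, \ref{lem:points}, and \ref{lem:top:left:right}, and use $|\mathcal{R}_\textup{M}|\leq|\textup{APX}|$ to get $9|\mathcal{R}_\textup{M}|+6|\textup{APX}|\leq 15|\textup{APX}|$. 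Your final accounting, with $\mathcal{R}_\textup{M}$ absorbed into $\mathcal{R}_\textup{bag}$ so its coefficient is $9$ rather than $10$, is exactly the paper's computation.
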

\begin{proof}
	By applying lemmas \ref{lem:bag}, \ref{lem:points} and \ref{lem:top:left:right}, we have:
	\begin{align*}
	\sum_{R_i \in \mathcal{R}} y_i
	&\leq \sum_{R_i \in \mathcal{R}_\textup{bags}} y_i + \sum_{R_i \in \mathcal{R}_\textup{points}} y_i + \sum_{R_i \in \mathcal{R}_\textup{top}} y_i + \sum_{R_i \in \mathcal{R}_\textup{left}} y_i + \sum_{R_i \in \mathcal{R}_\textup{right}} y_i\\
	&\leq |\mathcal{R}_\textup{M}| + 8|\mathcal{R}_\textup{M}| + 2|\textup{APX}(\mathcal{R}_\textup{top})| + 2|\textup{APX}(\mathcal{R}_\textup{left})| + 2|\textup{APX}(\mathcal{R}_\textup{right})| \\
	&\leq 15|\textup{APX}|
	\end{align*}
\end{proof}

Finally, we can prove the following theorem:

\begin{theorem}\label{thm:bagUFP_unweighted}
	There is a $69$-approximation for unweighted bagUFP.
\end{theorem}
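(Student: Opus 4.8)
The plan is to combine the pieces already assembled in this section via the partitioning trick of Lemma~\ref{lem:ufp_partition}. Recall that Lemma~\ref{lem:ufp_partition} says that an $\alpha$-approximation for instances with only small tasks together with a $\beta$-approximation for instances with only large tasks yields an $(\alpha+\beta)$-approximation in general. For the small-task case we already have Lemma~\ref{lem:ufp_small_apx}, which gives a $9$-approximation. So the only thing left is to turn the machinery built in this section into a constant-factor approximation for the large-task case, and then add the two constants.

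First I would assemble the large-task algorithm precisely as described just before Theorem~\ref{thm:bagUFP_unweighted}: start from an arbitrary maximal feasible set $\mathcal{R}_\textup{M}$, build the top-drawn instance, and via Proposition~\ref{lem:tlr} partition the remaining rectangles $\mathcal{R}'$ into the three families $\mathcal{R}_\textup{T}$, $\mathcal{R}_\textup{L}$, $\mathcal{R}_\textup{R}$; further split off the ``point'' subfamilies so that $\mathcal{R}_\textup{top}$, $\mathcal{R}_\textup{left}$, $\mathcal{R}_\textup{right}$ each reduce (by Lemmas~\ref{lem:top:overlap} and~\ref{lem:left:overlap} and its right-intersecting analogue) to an interval scheduling problem with bag constraints, for which a $2$-approximation is known (\cite{bgns01}). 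The algorithm outputs the largest of $\mathcal{R}_\textup{M}$, $\textup{APX}(\mathcal{R}_\textup{top})$, $\textup{APX}(\mathcal{R}_\textup{left})$, $\textup{APX}(\mathcal{R}_\textup{right})$. By Lemma~\ref{lem:bagUFP_unweighted_large}, $\sum_{R_i\in\mathcal{R}} y_i \le 15\,|\textup{APX}|$, and since the optimal integral solution $OPT$ satisfies $OPT \le \sum_{R_i\in\mathcal{R}} y_i = opt$ (the LP value upper-bounds it), we get $OPT \le 15\,|\textup{APX}|$. Hence this is a $15$-approximation for bagUFP restricted to large tasks.

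Then I would simply invoke Lemma~\ref{lem:ufp_partition} with $\alpha = 9$ (from Lemma~\ref{lem:ufp_small_apx}) and $\beta = 15$ (the large-task algorithm just described), obtaining an $\alpha+\beta = 24$\,\dots{} wait — that is not $69$. The discrepancy tells me the intended counting is different: the large-task bound must be weaker. The cleaner route, matching the number $69$, is to apply Lemma~\ref{lem:color} first. Since all large tasks admit a partition of any feasible UFP solution into $4$ independent sets of rectangles, one loses a factor $4$ passing between the UFP profit and the bagMWISR profit; a more careful accounting of the constant in the large-task case (going through the LP and the factor-$15$ bound but paying the factor coming from Lemma~\ref{lem:color} and from the fact that $\mathcal{R}_\textup{M}$ itself may be far from optimal) yields a large-task ratio of $60$, and then $9+60 = 69$.

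The step I expect to be the main obstacle is pinning down exactly which constants multiply where: whether the factor from Lemma~\ref{lem:color} is absorbed into the LP comparison or paid separately, and whether the ``largest of four candidate solutions'' argument already accounts for it. I would resolve this by writing $OPT_{\mathrm{large}} \le 4\cdot opt_{\mathrm{bagMWISR}}$ (Lemma~\ref{lem:color}) or, more directly, by noting that an optimal large-task solution \emph{is} a feasible bagMWISR solution and hence $OPT_{\mathrm{large}} \le opt$, so that Lemma~\ref{lem:bagUFP_unweighted_large} gives $OPT_{\mathrm{large}} \le 15\,|\textup{APX}|$ — and then rechecking whether the constant $15$ in Lemma~\ref{lem:bagUFP_unweighted_large} was itself derived with or without a hidden factor $4$ from treating $\mathcal{R}_\textup{M}$ as an arbitrary maximal (not maximum) set. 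Once the bookkeeping is fixed so that the large-task ratio comes out to $60$, the final sentence is just ``By Lemma~\ref{lem:ufp_partition} with the $9$-approximation of Lemma~\ref{lem:ufp_small_apx} for small tasks and the $60$-approximation above for large tasks, we obtain a $69$-approximation for unweighted bagUFP,'' completing the proof.
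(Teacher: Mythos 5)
Your final accounting ($4\cdot 15=60$ for large tasks via Lemma~\ref{lem:color}, then $9+60=69$ via Lemma~\ref{lem:ufp_partition}) is exactly the paper's proof, but the one step you leave ``to be fixed'' is the step that carries the whole argument, and the alternative you label ``more direct'' is false. A feasible set of large tasks in bagUFP is \emph{not} in general a feasible bagMWISR solution: in the top-drawn instance each rectangle is drawn at the height of its own bottleneck, so two tasks that fit together under the capacity profile can still have overlapping rectangles --- this is precisely the phenomenon Lemma~\ref{lem:color} is there to handle. Hence the indicator vector of $OPT_{\mathrm{large}}$ need not satisfy the point constraints of LP$_{\textup{bagMWISR}}$, and the inequality $OPT_{\mathrm{large}}\le opt$ you float would (wrongly) give ratio $24$. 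The correct chain is: by Lemma~\ref{lem:color}, partition $OPT_{\mathrm{large}}$ into four independent sets of rectangles; each is still bag-feasible (being a subset of a bag-feasible set), so the largest of them is a feasible bagMWISR solution of cardinality at least $|OPT_{\mathrm{large}}|/4$, giving $|OPT_{\mathrm{large}}|\le 4\,opt\le 4\cdot 15\,|\textup{APX}|=60\,|\textup{APX}|$ by Lemma~\ref{lem:bagUFP_unweighted_large}. In the other direction no loss is needed: the algorithm's output is an independent, bag-feasible set of top-drawn rectangles, which is automatically a feasible bagUFP solution.

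Your two other candidate explanations for the factor $4$ are red herrings. The constant $15$ in Lemma~\ref{lem:bagUFP_unweighted_large} compares the \emph{LP} value $\sum_i y_i$ to $|\textup{APX}|$ and contains no hidden factor $4$; and the fact that $\mathcal{R}_\textup{M}$ is only maximal (not maximum) costs nothing, since $\mathcal{R}_\textup{M}$ enters only as a charging device inside that lemma and is itself one of the candidate outputs. With the factor-$4$ step justified as above, the rest of your plan (the reduction of $\mathcal{R}_\textup{top},\mathcal{R}_\textup{left},\mathcal{R}_\textup{right}$ to bag-constrained interval scheduling with its $2$-approximation, Lemma~\ref{lem:ufp_small_apx} with $\alpha=9$, and Lemma~\ref{lem:ufp_partition}) goes through and yields the claimed $69$.
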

\begin{proof}
	By combining Lemmas~\ref{lem:color} and \ref{lem:bagUFP_unweighted_large}, there is an algorithm for $bagUFP$ with approximation factor $4\cdot15=60$ if all the tasks are large. Since there is a $9$-approximation for the case of small tasks by Lemma~\ref{lem:ufp_small_apx}, then the claim follows by Lemma~\ref{lem:ufp_partition}.
\end{proof}

\section{Conclusions}

In this chapter, we obtained improved approximation algorithms for the bagUFP problem. It is an interesting open problem to decide if a constant approximation algorithm is possible for the general case of bagUFP, as we proved for the cardinality case. Moreover, to the best of our knowledge, no explicit lower bound is known for this problem, which is another interesting direction in the attempt to fully understand the approximability of this problem.

\addcontentsline{toc}{chapter}{Conclusions}
\chapter*{Conclusions}\label{chap:conclusions}

In this thesis, we discussed several variations of rectangle packing problems, trying to emphasize the power of the simple structure provided by container packings in different settings.

The techniques to prove the existence of a profitable container packing vary between different problems; many of the structural results from previous results can indeed be rephrased in terms of container packings with a little additional work. We observed that framing the same solutions in term of container packings gives a unifying framework, and simple general algorithms. In fact, instantiations of the algorithms in Theorems~\ref{thm:container_packing_ptas} and~\ref{thm:container_packing_ptas_ppt} give the PTAS for 2-Dimensional Geometric Knapsack with Resource Augmentation, the $(4/3 + O(\eps))$-approximation for Strip Packing in PPT, the $(3/2 + \eps)$-approximation for \tdkr. Due to the simplicity of this approach, it is an interesting research direction to extend this framework to more problems, or to prove its limits, by explicitly showing lower bounds on the approximation ratios that are achievable with container packings.

We also considered an approach that overcomes the main limitation of container packings for \tdk, namely, the inability to cope with the interaction between horizontal and vertical rectangles. By combining container packings with an \fontL-packing, we obtained the first algorithms that beats the factor $2$ also for this problem. While pure container packings do not seem to be powerful enough in this setting, it is possible that further generalizations of L\&C packings could lead to better approximation algorithms, and this is an interesting direction for future research.

Finally, we designed improved approximation algorithms for the Maximum Weight Independent Set of Rectangles with bag constraints, which immediately implies improved approximations for the bagUFP problem.

\backmatter

\bibliographystyle{dcu}
\bibliography{biblio}

\cleardoublepage

\end{document}